\newcommand{\krversion}[1]{}
\newcommand{\longversion}[1]{#1}
\newcommand{\krlongversion}[1]{#1}
\newcommand{\satversion}[1]{}
\newcommand{\squeezing}[1]{}
\newcommand{\nosqueezing}[1]{}
  \setlist{nosep}
  \renewcommand{\paragraph}[1]{\vspace{1pt}\noindent\textbf{#1}~}
  \newcommand{\href}[2]{#1}
  \renewenvironment{proof}{\vspace{-2mm}\begin{pf}}{\qed\end{pf}}
\newcommand{\tagarray}{%
\mbox{}\refstepcounter{equation}%
$(\theequation)$%
}
  \newcommand{\citep}[1]{\cite{#1}}
  \DeclareRobustCommand{\DE}[3]{#2}
  \renewenvironment{proof}{\vspace{-3mm}\begin{pf}}{\qed\end{pf}}
  \renewcommand{\theequation}{\Roman{equation}}
  \newcommand{\inlineeqref}[1]{\refstepcounter{equation}(\theequation)\label{#1}}
  \newtheorem{theorem}{Theorem}
  \newtheorem{lemma}[theorem]{Lemma}
  \newtheorem{corollary}[theorem]{Corollary}
  \newtheorem{proposition}[theorem]{Proposition}
  \DeclareRobustCommand{\DE}[3]{#2}
  \newcommand{\citep}[1]{(\citeauthor{#1}, \citeyear{#1})}
  \newcommand{\citet}[1]{\citeauthor{#1} (\citeyear{#1})}
\DeclareRobustCommand{\rvdots}{%
  \vbox{
    \baselineskip4\p@\lineskiplimit\z@
    \kern-\p@
    \hbox{.}\hbox{.}\hbox{.}
  }}
\newcommand{\SigmaP}[1]{\ensuremath{\Sigma^{\mtext{P}}_{#1}}}
\newcommand{\PiP}[1]{\ensuremath{\Pi^{\mtext{P}}_{#1}}}
\newcommand{\DeltaP}[1]{\ensuremath{\Delta^{\mtext{P}}_{#1}}}
\newcommand{\PTIME}{\ensuremath{\mtext{P}}}
\newcommand{\NP}{\ensuremath{\mtext{NP}}}
\newcommand{\DP}{\ensuremath{\mtext{DP}}}
\newcommand{\XP}{\ensuremath{\mtext{XP}}}
\newcommand{\X}{\ensuremath{\mtext{X}}}
\newcommand{\XNP}{\ensuremath{\mtext{XNP}}}
\newcommand{\XcoNP}{\ensuremath{\mtext{Xco-NP}}}
\newcommand{\PSPACE}{\ensuremath{\mtext{PSPACE}}}
\newcommand{\para}[1]{\ensuremath{\mtext{para-#1}}}
\newcommand{\co}[1]{\ensuremath{\mtext{co-#1}}}
\newcommand{\W}[1]{\ensuremath{\mtext{W[#1]}}}
\newcommand{\EkA}{\ensuremath{\exists^k\forall^{*}}}
\newcommand{\AkE}{\ensuremath{\forall^k\exists^{*}}}
\newcommand{\EkAkW}[1]{\ensuremath{\exists^{k}\forall^k\mtext{-W[#1]}}}
\newcommand{\EkAW}[1]{\ensuremath{\exists^{k}\forall^{*}\mtext{-W[#1]}}}
\newcommand{\EAkW}[1]{\ensuremath{\exists^{*}\forall^k\mtext{-W[#1]}}}
\newcommand{\AEkW}[1]{\ensuremath{\forall^{*}\exists^k\mtext{-W[#1]}}}
\newcommand{\FPT}{\ensuremath{\mtext{FPT}}}
\newcommand{\SAT}{\ensuremath{\mtext{SAT}}}
\newcommand{\UNSAT}{\ensuremath{\mtext{UNSAT}}}
\newcommand{\thSAT}{\ensuremath{\mtext{\sc Sat}}}
\newcommand{\thUNSAT}{\ensuremath{\mtext{\sc Unsat}}}
\newcommand{\CNF}{\ensuremath{\mtext{CNF}}}
\newcommand{\DNF}{\ensuremath{\mtext{DNF}}}
\newcommand{\size}[1]{\ensuremath{|\!|#1|\!|}}
\newcommand{\Card}[1]{\ensuremath{|#1|}}
\newcommand{\Var}[1]{\ensuremath{\mtext{Var}(#1)}}
\newcommand{\Lit}[1]{\ensuremath{\mtext{Lit}(#1)}}
\newcommand{\Sub}[1]{\ensuremath{\mtext{Sub}(#1)}}
\newcommand{\Dom}[1]{\ensuremath{\mtext{Dom}(#1)}}
\newcommand{\Rng}[1]{\ensuremath{\mtext{Rng}(#1)}}
\newcommand{\NF}[1]{\ensuremath{\mtext{NF}(#1)}}
\newcommand{\Atoms}[1]{\ensuremath{\mtext{At}(#1)}}
\newcommand{\Free}[1]{\ensuremath{\mtext{Free}(#1)}}
\newcommand{\Nodes}[1]{\ensuremath{\mtext{Nodes}(#1)}}
\newcommand{\Repr}[1]{\ensuremath{\mtext{Repr}(#1)}}
\newcommand{\arity}[1]{\ensuremath{\mtext{arity}(#1)}}
\newcommand{\WSat}[1]{\ensuremath{p\mtext{-}\textsc{WSat}[#1]}}
\newcommand{\QSat}[1]{\ensuremath{\textsc{QSat}_{#1}}}
\newcommand{\fptclosure}[1]{\ensuremath{[\ #1\ ]^{\mtext{fpt}}}}
\newcommand{\fptred}[0]{\ensuremath{\leq_{\mtext{fpt}}}}
\newcommand{\fptequiv}[0]{\ensuremath{\equiv^{\mtext{fpt}}}}
\newcommand{\Comp}[1]{\ensuremath{\mtext{Comp}(#1)}}
\newcommand{\CompA}[2]{\ensuremath{\mtext{Comp}_{\!#1}(#2)}}
\newcommand{\Cont}[1]{\ensuremath{\mtext{Cont}(#1)}}
\newcommand{\kstar}[0]{\ensuremath{k\mtext{-}*}}
\newcommand{\stark}[0]{\ensuremath{*\mtext{-}k}}
\newcommand{\EkAWSat}{\ensuremath{\exists^k \forall^{*} {\mtext{-{\sc WSat}}}}}
\newcommand{\AkEWSat}{\ensuremath{\forall^k \exists^{*} {\mtext{-{\sc WSat}}}}}
\newcommand{\EkAMC}{\ensuremath{\exists^k \forall^{*} {\mtext{-{\sc MC}}}}}
\newcommand{\EAkWSat}{\ensuremath{\exists^{*} \forall^{k} {\mtext{-{\sc WSat}}}}}
\newcommand{\EkAkWSat}{\ensuremath{\exists^{k} \forall^{k} {\mtext{-{\sc WSat}}}}}
\newcommand{\AEkWSat}{\ensuremath{\forall^{*} \exists^{k} {\mtext{-{\sc WSat}}}}}
\newcommand{\AkEMC}{\ensuremath{\forall^k \exists^{*} {\mtext{-{\sc MC}}}}}
\newcommand{\EleqkAWSat}{\ensuremath{\exists^{\leq k} \forall^{*} {\mtext{-{\sc WSat}}}}}
\newcommand{\EgeqkAWSat}{\ensuremath{\exists^{\geq k} \forall^{*} {\mtext{-{\sc WSat}}}}}
\newcommand{\EnminkAWSat}{\ensuremath{\exists^{n-k} \forall^{*} {\mtext{-{\sc WSat}}}}}
\newcommand{\AleqkEWSat}{\ensuremath{\forall^{\leq k} \exists^{*} {\mtext{-{\sc WSat}}}}}
\newcommand{\SSIC}{\ensuremath{\mtext{{\sc Shortest}-}\allowbreak{}\mtext{{\sc Implicant}-}\allowbreak{}\mtext{{\sc Core}}(\mtext{core size})}}
\newcommand{\SIC}{\ensuremath{\mtext{{\sc Shortest}-}\allowbreak{}\mtext{{\sc Implicant}-}\allowbreak{}\mtext{{\sc Core}}}}
\newcommand{\LSIC}{\ensuremath{\mtext{{\sc Shortest}-}\allowbreak{}\mtext{{\sc Implicant}-}\allowbreak{}\mtext{{\sc Core}}(\mtext{reduction size})}}
\newcommand{\LargeMinDNF}{\ensuremath{\mtext{{\sc DNF}-}\allowbreak{}\mtext{{\sc Minimization}}(\mtext{reduction size})}}
\newcommand{\SmallMinDNF}{\ensuremath{\mtext{{\sc DNF}-}\allowbreak{}\mtext{{\sc Minimization}}(\mtext{core size})}}
\newcommand{\MinDNF}{\ensuremath{\mtext{{\sc DNF}-}\allowbreak{}\mtext{{\sc Minimization}}}}
\newcommand{\EASat}{\ensuremath{\exists \forall {\mtext{-{\sc Sat}}}}}
\newcommand{\EAtwSat}{\ensuremath{{\mtext{{\sc QSat}}}_{2}(\mtext{incid.tw})}}
\newcommand{\EAtwESat}{\ensuremath{{\mtext{{\sc QSat}}}_{2}(\mtext{$\exists$-incid.tw})}}
\newcommand{\EAtwASat}{\ensuremath{{\mtext{{\sc QSat}}}_{2}(\mtext{$\forall$-incid.tw})}}
\newcommand{\FewAQBFSat}{\ensuremath{\mtext{{\sc QBF}}{\mtext{-{\sc Sat}}(\mtext{\# $\forall$-vars})}}}
\newcommand{\SmallCliqueExt}{\ensuremath{\mtext{{\sc Small}-}\allowbreak{}\mtext{{\sc Clique}-}\allowbreak{}\mtext{{\sc Extension}}}}
\newcommand{\ASPcons}{\ensuremath{\mtext{{\sc ASP}-}\allowbreak{}\mtext{\sc consistency}}}
\newcommand{\DisjRules}{\ensuremath{(\#\mtext{disj.rules})}}
\newcommand{\ContRules}{\ensuremath{(\#\mtext{cont.rules})}}
\newcommand{\ContAtoms}{\ensuremath{(\#\mtext{cont.atoms})}}
\newcommand{\SNbd}{\ensuremath{(\mtext{str.norm.bd-size})}}
\newcommand{\AtomOcc}{\ensuremath{(\mtext{max.atom.occ.})}}
\newcommand{\NonDualNormalRules}{\ensuremath{(\#\mtext{non-dual-normal.rules})}}
\newcommand{\aspnot}[0]{\ensuremath{\mathit{not}\mtext{ }}}
\newcommand{\RobustCSPSat}{\ensuremath{\mtext{{\sc Robust}-}\allowbreak{}\mtext{{\sc CSP}-}\allowbreak{}\mtext{{\sc SAT}}}}
\newcommand{\EkATMhalt}[1]{\ensuremath{\mtext{\EkA{}-}\allowbreak{}\mtext{{\sc TM}-}\allowbreak{}\mtext{{\sc halt}}^{#1}}}
\newcommand{\AkEFDform}[2]{\ensuremath \forall^{k}\exists^{*} {\mtext{-{\sc FD}}^{#1}_{#2}}}
\newcommand{\AEkFDform}[2]{\ensuremath \forall^{*}\exists^{k} {\mtext{-{\sc FD}}^{#1}_{#2}}}
\newcommand{\BMCwitness}{\ensuremath{\mtext{{\sc Bounded}-}\allowbreak{}\mtext{{\sc Model}-}\allowbreak{}\mtext{{\sc Checking-}}\allowbreak{}\mtext{{\sc Witness}}}}
\newcommand{\ThreeColExt}{\ensuremath{3\mtext{-{\sc Coloring}-}\allowbreak{}\mtext{{\sc Extension}}\allowbreak{}}}
\newcommand{\Degree}{\ensuremath{(\mtext{degree})}}
\newcommand{\NumLeaves}{\ensuremath{(\# \mtext{leaves})}}
\newcommand{\NumColLeaves}{\ensuremath{(\# \mtext{col.leaves})}}
\newcommand{\NumUncolLeaves}{\ensuremath{(\# \mtext{uncol.leaves})}}
\newcommand{\SB}{\{\,}%
\newcommand{\SM}{\;{:}\;}%
\newcommand{\SE}{\,\}}%
\newcommand{\SBs}{\{}%
\newcommand{\SEs}{\}}%
\newcommand{\AAA}{\mathcal{A}}
\newcommand{\CCC}{\mathcal{C}}
 \newcommand{\TTT}{\mathcal{T}}
\newcommand{\mtext}[1]{\text{\normalfont #1}}
\newcommand{\hy}{\hbox{-}\nobreak\hskip0pt}
\newcommand{\EA}[0]{\ensuremath{\exists\forall}}
\newcommand{\itw}{\mtext{incid.tw}}
\newcommand{\Eitw}{\mtext{$\exists$-incid.tw}}
\newcommand{\Aitw}{\mtext{$\forall$-incid.tw}}
  \newenvironment{myquote}{\list{}{\leftmargin=5pt\rightmargin=-10pt\topsep=3pt}\item[]}{\endlist}
  \newenvironment{myquote}{\list{}{\leftmargin=\parindent\rightmargin=0in\topsep=3pt}\item[]}{\endlist}
  \newenvironment{myquote}{\list{}{\leftmargin=\parindent\rightmargin=0in\topsep=3pt}\item[]}{\endlist}
  \newcommand{\probdef}[1]{
    \begin{myquote}
    \framebox[\linewidth-15pt][l]{\parbox{\linewidth-25pt}{
    #1
   }}\end{myquote}
  }
  \newcommand{\probdef}[1]{
    \begin{myquote}
    \framebox[\textwidth-35pt][l]{\parbox{\textwidth-45pt}{
    #1
    }}\end{myquote}\vspace{5pt}
  }
  \newcommand{\probdef}[1]{
    \begin{myquote}
    \framebox[\linewidth-15pt][l]{\parbox{\linewidth-25pt}{
    #1
   }}\end{myquote}
  }
  \newcommand{\boxedprob}[2]{
    \probdef{#1}
  }
  \newcommand{\boxedprob}[2]{
    \begin{myquote}
    {\framebox[\textwidth-35pt][l]{\parbox{\textwidth-45pt}{
    #1
    }}\\[-1.2pt]\framebox[\textwidth-35pt][l]{\parbox{\textwidth-45pt}{
    #2
    }}}\end{myquote}\vspace{5pt}
  }
  \title{The Parameterized Complexity of Reasoning Problems Beyond NP}
  \author{Ronald de Haan\thanks{Supported by the European Research Council (ERC), project 239962 (COMPLEX REASON),
    and the Austrian Science Fund (FWF), project P26200 (Parameterized Compilation).} \and Stefan Szeider$^{*}$\\
  Institute of Information Systems\\
  Vienna University of Technology\\
  dehaan@kr.tuwien.ac.at\ \ \ stefan@szeider.net\\}
  \title{The Parameterized Complexity of Reasoning Problems Beyond
  NP\footnote{This paper contains results that have
  appeared in shortened and preliminary form in the proceedings of
  SAT 2014~\cite{DeHaanSzeider14b}, the proceedings of
  KR 2014~\cite{DeHaanSzeider14}, and the proceedings
  of SOFSEM 2015~\cite{DeHaanSzeider15}.}}
  \author{Ronald de Haan\thanks{Supported by the European Research Council (ERC), project 239962 (COMPLEX REASON),
    and the Austrian Science Fund (FWF), project P26200 (Parameterized Compilation).}, Stefan Szeider$^{\dagger}$\\[.5em]
    \emph{Algorithms \& Complexity Group}\\
    \emph{Technische Universit\"{a}t Wien}\\[.5em]
    dehaan@ac.tuwien.ac.at\ \ \ stefan@szeider.net\\}
  \date{}
  \title{Fixed-Parameter Tractable Reductions to SAT}
  \author{
    Ronald de Haan\thanks{Supported by the European Research Council (ERC), project 239962 (COMPLEX REASON),
    and the Austrian Science Fund (FWF), project P26200 (Parameterized Compilation).} \and
    Stefan Szeider$^{\star}$
  }
  \institute{
    Institute of Information Systems,\\
    Vienna University of Technology,\\
    Vienna, Austria
  }
\newcommand{\ProofSSICCompleteness}[0]{
\begin{proof}[sketch]
To show hardness, we give
a many-one fpt-reduction from $\EkAWSat(\DNF)$
to \SSIC{}.
Intuitively, the choice for some~$C' \subseteq C$ with~$\Card{C'} = k$
corresponds directly to the choice of some
assignment~$\alpha : X \rightarrow \SBs 0,1 \SEs$ of weight~$k$.
Both involve a choice between~$\binom{n}{k}$ many candidates,
and in both cases verifying whether the chosen candidate witnesses
that the instance is a yes-instance involves solving a \co{}\NP{}-complete
problem.
Any implicant~$C'$ forces those variables~$x$ that are included in~$C'$
to be set to true (and the other variables are not forced to take any truth value).
However, by Lemma~\ref{lem:exactly-k},
any assignment that sets more than~$k$ variables~$x$ to true
will trivially satisfy~$\psi$. Therefore, the only relevant assignment
is the assignment that sets only those~$x$ to true that are forced
to be true by some~$C'$ of length~$k$,
and hence the choice for such a~$C'$
corresponds exactly to the choice for some
assignment~$\alpha$ of weight~$k$.
To verify whether some~$C'$ of length~$k$ is an implicant of the
formula~$\varphi$ is equivalent to checking
whether the formula~$\bigwedge_{c \in C'}c \wedge \varphi$ is valid,
which in turn is equivalent to checking whether
a formula~$\forall Y. \psi[\alpha]$
is true, for some assignment~$\alpha$.

Let~$(\varphi,k)$ be an instance of~$\EkAWSat(\DNF)$,
with~$\varphi = \exists X. \forall Y. \psi$.
By Lemma~\ref{lem:exactly-k},
we may assume without loss of generality
that for any assignment~$\alpha : X \rightarrow \SBs 0,1 \SEs$ of
weight~$m \neq k$,~$\forall Y. \psi[\alpha]$ is true.
We may also assume without loss of generality that~$\Card{X} > k$;
if this were not the case,~$(\varphi,k)$ would trivially
be a no-instance.
We construct an instance~$(\varphi',C,k)$ of \SSIC{}
by letting~$\Var{\varphi'} = X \cup Y$,~$C =
\bigwedge_{x \in X} x$, and~$\varphi' = \psi$.
Clearly,~$\varphi'$ is a Boolean formula in \DNF{}.
Also, consider the assignment~$\alpha : X \rightarrow \SBs 0,1 \SEs$
where~$\alpha(x) = 1$ for all~$x \in X$.
We know that~$\forall Y. \psi[\alpha]$ is true, since~$\alpha$ has weight more than~$k$.
Therefore~$C$ is an implicant of~$\varphi'$.
We omit a detailed proof of correctness for this reduction.

To show membership in \EkA{},
we give a many-one fpt-reduction from \SSIC{}
to \EkAWSat{}.
This reduction uses exactly the same similarity between
the two problems, i.e., the fact that assignments of weight~$k$
correspond exactly to implicants of length~$k$,
and that verifying whether this choice witnesses that the instance
is a yes-instance in both cases involves checking validity of a
propositional formula.
We describe the reduction, and omit a detailed proof of correctness.
Let~$(\varphi,C,k)$ be an instance of \SSIC{},
where~$C = \SBs c_1,\dotsc,c_n \SEs$.
We construct an instance~$(\varphi',k)$ of \EkAWSat{},
where~$\varphi' = \exists X. \forall Y. \psi$,
by defining~$X = \SBs x_1,\dotsc,x_n \SEs$,~$Y =
\Var{\varphi}$,~$\psi = \psi^{X,Y}_{\mtext{corr}} \rightarrow \varphi$,
and~$\psi^{X,Y}_{\mtext{corr}} =
\bigwedge_{1 \leq i \leq n} (x_i \rightarrow c_i)$.
\end{proof}
}
\newcommand{\ProofSSICHardness}[0]{
\longversion{\begin{proof}}
We give \satversion{a many-one}\longversion{an} fpt-reduction
from~$\EkAWSat(\DNF)$
to \SSIC{}.
Let~$(\varphi,k)$ be an instance of~$\EkAWSat(\DNF)$,
with~$\varphi = \exists X. \forall Y. \psi$.
By Lemma~\ref{lem:exactly-k},
we may assume without loss of generality
that for any assignment~$\alpha : X \rightarrow \SBs 0,1 \SEs$ of
weight~$m \neq k$,~$\forall Y. \psi[\alpha]$ is true.
We may assume without loss of generality that~$\Card{X} > k$;
if this were not the case,~$(\varphi,k)$ would trivially
be a no-instance.

We construct an instance~$(\varphi',C,k)$ of \SSIC{}
by letting~$\Var{\varphi'} = X \cup Y$,~$C =
\bigwedge_{x \in X} x$, and~$\varphi' = \psi$.
Clearly,~$\varphi'$ is a Boolean formula in \DNF{}.
Also, consider the assignment~$\alpha : X \rightarrow \SBs 0,1 \SEs$
where~$\alpha(x) = 1$ for all~$x \in X$.
We know that~$\forall Y. \psi[\alpha]$ is true,
since~$\alpha$ has weight more than~$k$.
Therefore~$C$ is an implicant of~$\varphi'$.

\satversion{A full correctness proof of this reduction
can be found in the appendix. }%
Intuitively, the choice for some~$C' \subseteq C$ with~$\Card{C'} = k$
corresponds directly to the choice of some
assignment~$\alpha : X \rightarrow \SBs 0,1 \SEs$.
For any variable~$x$ included in the implicant~$C'$, the value of~$x$ will be
set to true.
Any variable~$x$ that is not in~$C'$ can be set either
to true or to false.
However, any assignment that sets more than~$k$ variables~$x$ to true
will trivially satisfy~$\psi$. Therefore, the only relevant assignment
is that assignment that sets only those~$x$ to true that are forced to
be true by~$C'$.
\longversion{

\SecondProofSSICHardness}
\end{proof}
}
\newcommand{\SecondProofSSICHardness}[0]{
We show that~$(\varphi,k) \in \EkAWSat(\DNF)$ if and only
if~$(\varphi',C,k) \in \SSIC{}$.

$(\Rightarrow)$
Let~$\alpha : X \rightarrow \SBs 0,1 \SEs$ be an assignment of weight~$k$
such that~$\forall Y. \psi[\alpha]$ is true.
Then construct~$C' = \SB x \in X \SM \alpha(x) = 1 \SE$.
Clearly,~$\Card{C'} = k$, and~$C' \subseteq C$.
We show that~$C'$ is an implicant of~$\varphi'$.
Let~$\gamma : \Var{\varphi'} \rightarrow \SBs 0,1 \SEs$ be an arbitrary assignment
that satisfies~$\bigwedge C'$.
We distinguish two cases:
either (i)~$\gamma$ satisfies exactly~$k$ different
variables~$x \in X$, or (ii) this is not the case.

In case (i), we know that~$\gamma(x) = \alpha(x)$ for all~$x \in X$.
Then, since~$\forall Y. \psi[\alpha]$ is true, we know that~$\psi[\gamma]$ is true,
and therefore~$\gamma$ satisfies~$\varphi'$.
In case (ii), we know that the restriction~$\gamma|_{X}$ of~$\gamma$ to~$X$
(i.e.,~$\gamma|_{X} : X \rightarrow \SBs 0,1 \SEs$ such
that~$\gamma|_{X}(x) = \gamma(x)$ for all~$x \in X$) has weight~$m \neq k$.
Therefore, we know that~$\forall Y. \psi[\gamma|_{X}]$ is true,
and thus~$\gamma$ satisfies~$\psi$.
Thus,~$\gamma$ satisfies~$\varphi'$.
This concludes our proof that~$C'$ is an implicant of~$\varphi'$.

$(\Leftarrow)$
Let~$C' \subseteq C$ be an implicant of~$\varphi'$ such that~$\Card{C'} = k$.
Define~$\alpha : X \rightarrow \SBs 0,1 \SEs$ by
letting~$\alpha(x) = 1$ if and only if~$x \in C'$, for all~$x \in X$.
Clearly,~$\alpha$ has weight~$k$.
We show that~$\forall Y. \psi[\alpha]$ is true.
Let~$\beta : Y \rightarrow \SBs 0,1 \SEs$ be an arbitrary assignment.
Now, clearly the assignment~$\alpha \cup \beta$
satisfies~$C'$. Since~$C'$ is an implicant of~$\varphi'$,
we know that~$\alpha \cup \beta$ satisfies~$\varphi' = \psi$.
This concludes our proof that~$\forall Y. \psi[\alpha]$ is true.
}
\newcommand{\ProofSSICMembership}[0]{
\begin{proof}
We give \satversion{a many-one}\longversion{an} fpt-reduction from \SSIC{}
to \EkAWSat{}.
Let~$(\varphi,C,k)$ be an instance of \SSIC{},
where~$C = \SBs c_1,\dotsc,c_n \SEs$.
We construct an instance~$(\varphi,k)$ of \EkAWSat{},
where~$\varphi = \exists X. \forall Y. \psi$.
We define~$X = \SBs x_1,\dotsc,x_n \SEs$,~$Y =
\Var{\varphi}$,~$\psi = \psi^{X,Y}_{\mtext{corr}} \rightarrow \varphi$,
and~$\psi^{X,Y}_{\mtext{corr}} =
\bigwedge_{1 \leq i \leq n} (x_i \rightarrow c_i)$.
\satversion{A detailed proof that~$(\varphi,C,k) \in \SSIC{}$ if and only
if~$(\varphi',k) \in \EkAWSat{}$ can be found in the appendix. }%
Intuitively, the choice for~$\alpha : X \rightarrow \SBs 0,1 \SEs$ of weight~$k$
corresponds to the choice for~$C' \subseteq C$ with~$\Card{C'} = k$.
\longversion{
\SecondProofSSICMembership}
\end{proof}
}
\newcommand{\SecondProofSSICMembership}[0]{
We show that~$(\varphi,C,k) \in \SSIC{}$
if and only if~$(\varphi',k) \in \EkAWSat{}$.

$(\Rightarrow)$
Assume that~$C' \subseteq C$ with~$\Card{C'} = k$ is an implicant of~$\varphi$.
Define the assignment~$\alpha : X \rightarrow \SBs 0,1 \SEs$ by
letting~$\alpha(x_i) = 1$ if and only if~$c_i \in C'$.
Clearly,~$\alpha$ has weight~$k$.
We show that~$\forall Y. \psi[\alpha]$ evaluates to true.
Let~$\beta : Y \rightarrow \SBs 0,1 \SEs$ be an arbitrary assignment.
We distinguish two cases:
either (i) there is some~$1 \leq i \leq n$ such that~$\alpha(x_i) = 1$
but~$\beta$ does not satisfy~$c_i$,
or (ii) this is not the case.
In case (i),~$\alpha \cup \beta$ does not satisfy~$\psi^{X,Y}_{\mtext{corr}}$,
and therefore~$\psi[\alpha \cup \beta]$ is true.
In case (ii),~$\beta$ satisfies~$C'$. Since~$C'$ is an
implicant of~$\varphi$,~$\beta$ also satisfies~$\varphi$.
Therefore~$\psi[\alpha \cup \beta]$ is true.

$(\Leftarrow)$
Assume that there is some assignment~$\alpha : X \rightarrow \SBs 0,1 \SEs$
of weight~$k$ such that~$\forall Y. \psi[\alpha]$ is true.
Construct~$C' = \SB c_i \SM 1 \leq i \leq n, \alpha(x_i) = 1 \SE$.
Clearly,~$C' \subseteq C$ and~$\Card{C'} = k$.
We show that~$C'$ is an implicant of~$\varphi$.
Let~$\beta : \Var{\varphi} \rightarrow \SBs 0,1 \SEs$ be any assignment
that satisfies~$C'$.
We show that~$\beta$ satisfies~$\varphi$.
Since~$Y = \Var{\varphi}$ and~$\forall Y. \psi[\alpha]$ is true,
we know that~$\psi[\alpha \cup \beta]$ is true.
Clearly,~$\alpha \cup \beta$ satisfies~$\psi^{X,Y}_{\mtext{corr}}$.
Therefore,~$\alpha \cup \beta$ satisfies~$\varphi$.
Since~$\Var{\varphi} \cap \Dom{\alpha} = \emptyset$,
we know that~$\beta$ satisfies~$\varphi$.
}
\newcommand{\ProofLargeMinDNFMembership}[0]{
We give \satversion{a many-one}\longversion{an}
fpt-reduction from \LargeMinDNF{}
to \EkAWSat{}.
Let~$(\varphi,k)$ be an instance of \LargeMinDNF{},
where~$\varphi = t_1 \vee \dotsm \vee t_m$,
and~$t_i = l^{1}_{i} \wedge \dotsm \wedge l^{\ell_i}_{i}$
for~$1 \leq i \leq m$.
We construct an instance~$(\varphi',k)$ of~\EkAWSat{}
as follows.
We let~$X = \SB x^j_i \SM 1 \leq i \leq m, 1 \leq j \leq \ell_{i} \SE$,
$Y = \Var{\varphi}$, and we define~$\varphi' = \exists X. \forall Y. \psi$,
where
$\psi = \bigvee_{1 \leq i \leq m}
[ (l^1_i \vee x^1_i) \wedge \dotsm \wedge
(l^{\ell_i}_{i} \vee x^{\ell_i}_{i}) ] \rightarrow
\bigvee_{1 \leq j \leq m} t_j$.
Intuitively, the variable~$x^j_i$ represents whether or not
the literal occurrence~$l^{j}_{i}$ has been removed from~$\varphi$
to form a suitable DNF formula~$\varphi''$.
%
\SecondProofLargeMinDNFMembership{}
}
\newcommand{\SecondProofLargeMinDNFMembership}[0]{
We show that~$(\varphi,k) \in \LargeMinDNF{}$ if and only
if~$(\varphi',k) \in \EkAWSat{}$.

$(\Rightarrow)$
Assume that there exists a DNF formula~$\varphi'' =
t'_1 \vee \dotsm \vee t'_m$ with~$n-k$ occurrences of literals
such that~$t'_i \subseteq t_i$ for all~$1 \leq i \leq m$,
and such that~$\varphi \equiv \varphi''$.
Then there are exactly~$k$
literals~$l_{a_1}^{b_1},\dotsc,l_{a_k}^{b_k}$
such that~$l_{a_u}^{b_u}$ does not occur in~$t_{a_u}$
for all~$1 \leq u \leq k$.
Let~$\alpha : X \rightarrow \SBs 0,1 \SEs$ be the assignment
defined by letting~$\alpha(x^j_i) = 1$ if and only
if~$i = a_u$ and~$j = b_u$ for some~$1 \leq u \leq k$.
Clearly,~$\alpha$ has weight~$k$.
We show that~$\forall Y. \psi[\alpha]$ is true.
Let~$\beta : Y \rightarrow \SBs 0,1 \SEs$ be an arbitrary assignment.
We consider two cases: either~$\alpha \cup \beta$ does not satisfy~$\varphi''$,
or~$\alpha \cup \beta$ does satisfy~$\varphi''$.

Consider the first case.
Then we know that for no~$1 \leq i \leq m$,
the assignment~$\alpha \cup \beta$ satisfies
the subformula~$(l^1_i \vee x^1_i) \wedge \dotsm \wedge
(l^{\ell_i}_i \vee x^{\ell_i}_i)$.
Therefore,~$\alpha \cup \beta$ satisfies~$\psi$.
Next, consider the second case.
Then, since~$\varphi \equiv \varphi''$, we know
that~$\alpha \cup \beta$ satisfies~$\varphi$,
and thus satisfies~$t_j$ for some~$1 \leq j \leq m$.
Therefore,~$\alpha \cup \beta$ satisfies~$\psi$.

$(\Leftarrow)$
Assume that there exists an assignment~$\alpha : X \rightarrow \SBs 0,1 \SEs$
of weight~$k$ such that~$\forall Y. \psi[\alpha]$ is true.
Construct the DNF formula~$\varphi''$ by removing the~$k$
many occurrences of literals~$l^j_i$ such that~$\alpha(x^j_i) = 1$.
We show that~$\varphi'' \equiv \varphi$.
Clearly,~$\varphi \models \varphi''$.
We show that~$\varphi'' \models \varphi$.
Let~$\beta : Y \rightarrow \SBs 0,1 \SEs$ be an arbitrary assignment,
and assume that~$\beta$ satisfies~$\varphi''$.
Then there exists some~$1 \leq i \leq m$ such that~$\beta$
satisfies~$t'_i$.
We need to show that there exists some~$1 \leq j \leq m$ such
that~$\beta$ satisfies~$t_j$.
We know that~$\alpha \cup \beta$ satisfies~$\varphi'$.
It is now straightforward to verify
that~$\alpha \cup \beta$ satisfies the
subformula~$(l^1_i \vee x^1_i) \wedge \dotsm \wedge
(l^{\ell_i}_{i} \vee x^{\ell_i}_{i})$.
Therefore,~$\alpha \cup \beta$ must satisfy~$t_j$
for some~$1 \leq j \leq m$.
Since~$t_j$ contains only variables in~$Y$,
we know that~$\beta$ satisfies~$t_j$.
This completes our proof that~$\varphi'' \equiv \varphi$.
}
\newcommand{\CompleteSmallMinDNF}[0]{
\longversion{
The following propositions give us
some first lower and upper bounds on
the complexity of \SmallMinDNF{}.
\begin{proposition}
\label{prop:smallmindnf-hardness}
\SmallMinDNF{} is \para{\co\NP}-hard.
\end{proposition}
\begin{proof}
\ProofSmallMinDNFHardness{}
\end{proof}
}
\longversion{\begin{proposition}
\label{prop:smallmindnf-membership1}
\SmallMinDNF{} is in \EkA{}.
\end{proposition}
\begin{proof}
\ProofSmallMinDNF{}
\end{proof}
}
\satversion{%
\smallskip\noindent The following result, which we give without proof,
gives some first upper and lower bounds on the complexity
of \SmallMinDNF{}.
\begin{proposition}
\label{prop:smallmindnf-results1}
\SmallMinDNF{} is \para{\co\NP}-hard
and is in~\EkA{}.
\end{proposition}
}


%
\noindent Next, we turn our attention to an fpt-algorithm
that solves \SmallMinDNF{} by using~$f(k)$ many SAT calls,
for some computable function~$f$.
In order to so, we will define the notion of relevant variables,
and establish several lemmas that help us to describe and analyze
the algorithm
(the first of which we state without proof).

Let~$\varphi$ be a DNF formula and let~$x \in \Var{\varphi}$
be a variable occurring in~$\varphi$.
We call~$x$ \emph{relevant in~$\varphi$} if
there exists some
assignment~$\alpha : \Var{\varphi} \backslash \SBs x \SEs
\rightarrow \SBs 0,1 \SEs$
such that~$\varphi[\alpha \cup \SBs x \mapsto 0 \SEs] \neq
\varphi[\alpha \cup \SBs x \mapsto 1 \SEs]$.

\begin{lemma}
\label{lem:relevant}
Let~$\varphi$ be a DNF formula and let~$\varphi'$ be a
DNF formula of minimal size that
it is equivalent to~$\varphi$.
Then for every variable~$x \in \Var{\varphi}$ it holds
that~$x \in \Var{\varphi'}$ if and only if~$x$ is
relevant in~$\varphi$.
\end{lemma}
\longversion{%
\begin{proof}
\ProofRelevantLemma{}
\end{proof}
}


\begin{lemma}
\label{lem:relevant-vars-sat}
Given a DNF formula~$\varphi$ and a positive integer~$m$ (given in unary),
deciding whether there are at least~$m$ variables
that are relevant in~$\varphi$ is in NP.
\end{lemma}
\begin{proof}
We describe a guess-and-check algorithm that decides the problem.
The algorithm first guesses~$m$ distinct variables occurring in~$\varphi$,
and for each guessed variable~$x$ the algorithm guesses an
assignment~$\alpha_x$ to the remaining
variables~$\Var{\varphi} \backslash \SBs x \SEs$.
Then, the algorithm verifies whether the guessed variables are really relevant
by checking that, under~$\alpha_x$, assigning different values to~$x$
changes the outcome of the Boolean function represented by~$\varphi$,
i.e.,~$\varphi[\alpha_x \cup \SBs x \mapsto 0 \SEs] \neq 
\varphi[\alpha_x \cup \SBs x \mapsto 1 \SEs]$.
It is straightforward to construct a SAT instance~$\psi$ that
implements this guess-and-check procedure.
Moreover, from any assignment that satisfies~$\psi$
it is easy to extract the relevant variables.
\end{proof}

\begin{lemma}
\label{lem:combinatorics-dnf-formulas}
Let~$x_1,\dotsc,x_k$ be propositional variables.
There are~$2^{O(k \log k)}$ many different DNF
formulas~$\psi$ over the variables~$x_1,\dotsc,x_k$
that are of size~$k$.
\end{lemma}
\begin{proof}
Each suitable DNF formula~$\psi = t_1 \vee \dotsm \vee t_{\ell}$
can be formed by writing down a
sequence~$\sigma = (l_1,\dotsc,l_k)$
of literals~$l_i$ over~$x_1,\dotsc,x_k$,
and splitting this sequence into terms,
i.e., choosing integers~$1 = d_1 < \dotsm < d_{\ell+1} = k+1$
such that~$t_i = \SBs l_{d_i}, \dotsc, l_{d_{i+1}-1} \SEs$
for each~$1 \leq i \leq \ell$.
To see that there are~$2^{O(k \log k)}$ many
formulas~$\psi$, it suffices to see
that there are~$O(k^k)$ many sequences~$\sigma$,
and~$O(2^k)$ many choices for the integers~$d_i$.
\end{proof}

\begin{figure}[h]
  \begin{algorithm}[H]
    \SetAlgoLined
    \SetKwInOut{Input}{input}\SetKwInOut{Output}{output}
    \SetKwData{RVars}{rvars}
    \SetKw{Return}{return}
    \SetKw{Break}{break}
    \Input{an instance $(\varphi,k)$ of \SmallMinDNF{}}
    \Output{YES iff $(\varphi,k) \in \SmallMinDNF{}$}
    \BlankLine
    \RVars $\leftarrow \emptyset$ \tcp*[r]{variables relevant in~$\varphi$}
    $i \leftarrow 0$; $j \leftarrow k+2$ \tcp*[r]{bounds on \# of rvars}
    \While(\hfill \CommentSty{// logarithmic search for the \# of rvars}){$i+1 < j$}{
      $\ell \leftarrow \lceil (i+j)/2 \rceil$ \;
      query the SAT solver whether there exist at least~$\ell$
      variables \\ \qquad that are relevant in~$\varphi$
      \tcp*[r]{for idea behind encoding, see Lemma~\ref{lem:relevant-vars-sat}}
      \If{the SAT solver returns a model~$M$}{
        $\RVars \leftarrow$ the~$\ell$ many relevant variables encoded by the model~$M$ \;
      }\lElse{
        \Break
      }
    }
    \eIf{$\Card{\RVars} > k$}{
      \Return NO \tcp*[r]{too many rvars for any DNF of size~$\leq k$}
    }{
      \ForEach(\hfill \CommentSty{//~$2^{O(k \log k)}$ many})
      {DNF formula~$\psi$ of size~$k$ over \satversion{var's}\longversion{variables} in \RVars}{  
        construct a formula~$\varphi_\psi$ that is unsatisfiable iff~$\psi \equiv \varphi$\;
        \tcp*[f]{the formulas~$\varphi_\psi$ must be variable disjoint}
      }
      query the SAT solver whether~$\bigwedge_{\psi} \varphi_{\psi}$ is satisfiable \;
      \eIf{the SAT solver returns YES}{
        \Return NO \tcp*[r]{no candidate~$\psi$ is equivalent to~$\varphi$}
      }{
        \Return YES \tcp*[r]{some candidate~$\psi$ is equivalent to~$\varphi$}
      }
    }
    \caption{Solving \SmallMinDNF{} in fpt-time using~$(\lceil \log_2 k \rceil +1)$ many SAT calls.}
    \label{alg:smallmindnf}
  \end{algorithm}
  \vspace{-15pt}
\end{figure}

\longversion{\begin{proposition}\label{prop:smallmindnf-membership2}}%
\satversion{\begin{theorem}\label{thm:smallmindnf-membership2}}%
\SmallMinDNF{} can be solved by an fpt-algorithm
that uses~${(\lceil \log_2 k \rceil +1)}$ many SAT calls,
where the SAT solver returns a model for
satisfiable formulas.
Moreover, the first~$\lceil \log_2 k \rceil$ many calls to the solver
use SAT instances of size~$O(k^2n^2)$,
whereas the last call uses a SAT instance of size~$2^{O(k \log k)} \cdot n$,
where~$n$ is the input size.
\longversion{\end{proposition}}%
\satversion{\end{theorem}}%
\begin{proof}
The algorithm given in pseudo-code in Algorithm~\ref{alg:smallmindnf}
solves the problem \SmallMinDNF{}
in the required time bounds.
To obtain the required running time, we assume that each call to a SAT
solver takes only a single time step.
By Lemma~\ref{lem:relevant}, we know that any minimal equivalent
formula of~$\varphi$ must contain all and only the variables
that are relevant in~$\varphi$.
The algorithm firstly determines
how many variables are relevant in~$\varphi$.
By Lemma~\ref{lem:relevant-vars-sat},
we know that this can be done with a binary search
using~$\lceil \log_2 k \rceil$ SAT calls.
If there are more than~$k$ relevant variables,
the algorithm rejects.
Otherwise, the algorithm will have computed
the set~$\mtext{\sf rvars}$ of
relevant variables.
Next, with a single SAT call, it checks whether there exists
some DNF formula~$\psi$ of size~$k$ over the
variables in~$\mtext{\sf rvars}$.
By Lemma~\ref{lem:combinatorics-dnf-formulas},
we know that there are~$2^{O(k \log k)}$ many different DNF
formulas~$\psi$ of size~$k$ over the variables in~$\mtext{\sf rvars}$.
Verifying whether a particular DNF formula~$\psi$ is equivalent to
the original formula~$\varphi$ can be done by checking whether
the formula~$\varphi_{\psi} =
(\psi \wedge \neg \varphi) \vee (\neg \psi \wedge \varphi)$
is unsatisfiable.
Verifying whether there exists some suitable DNF formula~$\psi$
that is equivalent to~$\varphi$ can be done by making
variable-disjoint copies of all~$\varphi_{\psi}$ and checking
whether the conjunction of these copies is unsatisfiable.
%
\end{proof}

Note that the algorithm
requires that the SAT solver returns
a model if the query is satisfiable.
%
Also, the algorithm
can be modified straightforwardly to
return a DNF formula~$\psi$ of size at most~$k$
that is equivalent to an input~$\varphi$ if such
a formula~$\psi$ exists.
It would need to search for this~$\psi$ that is equivalent
to~$\varphi$, for which it would need an
additional~${O(k \log k)}$ many SAT calls
(with instances of size~$2^{O(k \log k)} \cdot \size{\varphi}$).

An interesting topic for further research is to investigate
how many SAT calls are needed for an fpt-algorithm to produce
an equivalent DNF when the SAT solver only returns
whether or not the input is satisfiable,
and does not return a satisfying assignment
in case the input is satisfiable.
For decision problems, the difference between these two interfaces
to SAT solvers is (theoretically) not relevant, when allowing
more than a constant number of calls to the
solver~\cite[Lemma~6.3.4]{Krajicek95}.
For instance, when allowing a logarithmic number of calls,
using witnessed and non-witnessed SAT calls yields the same
computational power~\cite[Corollary~6.3.5]{Krajicek95}.
For function problems, on the other hand,
the difference does seem to be relevant,
in cases where the number of calls is bounded
to logarithmically many in the input size
(cf.~\cite[Theorem~5.4]{GottlobFermueller93})
or bounded by a function of the parameter.

From a practical point of view, the algorithm given in
Algorithm~\ref{alg:smallmindnf} might not be the best approach
to solve the problem.
The (single) instance produced for the last SAT call in the
algorithm is rather large (exponential in~$k$).
However, this instance is equivalent to the conjunction
of~$2^{O(k \log k)}$ many instances of linear size,
and these instances can be solved in parallel.
Such a parallel approach involves more (yet easier) SAT calls,
but might be more efficient in practice.

}
\newcommand{\ProofSmallMinDNFHardness}[0]{
We show that the problem~$P = \SB (x,2) \SM (x,2) \in \SmallMinDNF{} \SE$
is \co\NP{}-hard, by giving a polynomial time reduction
from \thUNSAT{} to~$P$.
Let~$\psi$ be an instance of UNSAT.
Assume without loss of generality that~$\psi$ is in CNF.
We construct an instance~$(\varphi,2)$ of~\SmallMinDNF{}
by letting~$\varphi = \neg \psi \vee (x_1 \wedge x_2 \wedge x_3)$,
for fresh variables~$x_1,x_2,x_3 \not\in \Var{\psi}$.
We show that~$\psi$ is unsatisfiable if and only
there exists a DNF formula~$\varphi'$ of size~2
that is equivalent to~$\varphi$.

$(\Rightarrow)$
Assume that~$\psi$ is unsatisfiable.
Then~$\neg\psi$ is valid, and therefore
so is~$\varphi$.
Let~$y \in \Var{\psi}$ be some variable that occurs
both positively and negatively in~$\psi$.
Then~$\varphi' = y \vee \neg y$ is a DNF formula
of size~$2$ that is equivalent to~$\varphi$.

$(\Leftarrow)$
Assume that~$\psi$ is satisfiable.
Then there exists an assignment~$\alpha : \Var{\psi}
\rightarrow \SBs 0,1 \SEs$ such that~$\psi[\alpha] = 1$.
Then the variables~$x_1,x_2,x_3$
are all relevant in~$\varphi$.
The assignment~$\alpha_1 = \alpha \cup \SBs x_2 \mapsto 1,
x_3 \mapsto 1 \SEs$ witnesses that~$x_1$ is relevant in~$\varphi$,
for instance.
Therefore, by Lemma~\ref{lem:relevant},
any DNF formula equivalent to~$\varphi$ must contain
all variables~$x_1,x_2,x_3$, and thus must be of size at least~3.
}
\newcommand{\ProofRelevantLemma}[0]{
Assume that~$x \in \Var{\varphi}$ is relevant in~$\varphi$.
We show that~$x \in \Var{\varphi'}$.
Let~$X = \Var{\varphi} \cup \Var{\varphi'}$.
Then there exists some assignment~$\alpha : X
\backslash \SBs x \SEs \rightarrow \SBs 0,1 \SEs$
such that~$\varphi[\alpha_1] \neq \varphi[\alpha_2]$,
where~$\alpha_1 = \alpha \cup \SBs x \mapsto 0 \SEs$
and~$\alpha_2 = \alpha \cup \SBs x \mapsto 1 \SEs$.
Assume that~$x \not\in \Var{\varphi'}$.
Then~$\varphi'[\alpha_1] = \varphi'[\alpha_2]$
because~$\varphi_1$ and~$\varphi_2$ coincide on~$\Var{\varphi'}$.
This is a contradiction with the assumption that~$\varphi$
and~$\varphi'$ are equivalent.
Therefore,~$x \in \Var{\varphi'}$.

Conversely, assume that~$x \in \Var{\varphi}$ is not relevant
in~$\varphi$.
We show that~$x \not\in \Var{\varphi'}$.
By definition we know that for each assignment~$\alpha : X
\backslash \SBs x \SEs\rightarrow \SBs 0,1 \SEs$ it holds
that~$\varphi[\alpha_1] = \varphi[\alpha_2]$,
where~$\alpha_1 = \alpha \cup \SBs x \mapsto 0 \SEs$
and~$\alpha_2 = \alpha \cup \SBs x \mapsto 1 \SEs$.
Assume that~$x \in \Var{\varphi'}$.
Then~$\varphi'$ is equivalent to the DNF
formula~$\varphi'[x \mapsto 0]$, which is strictly smaller
than~$\varphi'$.
This contradicts minimality of~$\varphi'$.
Therefore,~$x \not\in \Var{\varphi'}$.
}
\newcommand{\ProofSmallMinDNF}[0]{
We give \satversion{a many-one}\longversion{an}
fpt-reduction to \EkAWSat{}.
Let~$(\varphi,k)$ be an instance of \SmallMinDNF{},
where~$\varphi$ is a DNF formula.
Let~$X = \Var{\varphi}$.
We construct an instance~$(\varphi',k')$ of \EkAWSat{}
as follows.
We introduce the set~$Y = \SBs y_{i,x} \SM 1 \leq i \leq k, x \in X \SE$
of variables.
Intuitively, the variables~$y_{i,x}$ represent a choice of (at most)~$k$
many variables~$x \in X$ to be used in the minimized DNF formula.
Let~$\psi_1,\dotsc,\psi_b$ be an enumeration of all possible DNF
formulas of size~$\leq k$ on the (fresh) variables~$z_1,\dotsc,z_k$.
By straightforward counting, we know that~$b \leq f(k)$,
for some function~$f = 2^{O(k \log k)}$.
Moreover, for each~$1 \leq j \leq b$, let~$\psi_j =
t_{j,1} \vee \dotsm \vee t_{j,w_j}$, and for
each~$1 \leq \ell \leq w_j$, let~$t_{j,\ell} =
l_{j,\ell,1} \wedge \dotsm \wedge l_{j,\ell,v_{j,\ell}}$.
We introduce another set~$U = \SB u_j \SM 1 \leq j \leq b \SE$
of variables.
Intuitively, these variables will be used to select the shape~$\psi_j$
of the minimized DNF formula.
We then perform our construction by
letting~$\varphi' = \exists Y. \exists U. \forall X. \varphi''$,
where~$\varphi'' =
\varphi^{Y}_{\mtext{proper}} \wedge
\varphi^{U}_{\mtext{one}} \wedge
\bigwedge_{1 \leq j \leq b} (u_j \rightarrow
(\varphi \leftrightarrow \bigvee_{1 \leq \ell \leq w_j}
\smash{\varphi^{j,\ell}_{\mtext{sat}}}))$.
Here, the formula~$\varphi^{Y}_{\mtext{proper}}$ ensures
that for each~$1 \leq i \leq k$ there is exactly one~$x_i \in X$
such that~$y_{i,x_i}$ is true, and that the~$x_i$ are all distinct.
It consists of clauses~$\bigvee_{x \in X} y_{i,x}$,
$(\neg y_{i,x} \vee \neg y_{i',x})$ and
$(\neg y_{i,x} \vee \neg y_{i,x'})$,
for each~$1 \leq i < i' \leq k$
and each~$x,x' \in X$ such that~$x \neq x'$.
The formula~$\varphi^{U}_{\mtext{one}}$ ensures
there is exactly one~$1 \leq j \leq b$ such that~$u_j$ is true,
and consists of the clause~$\bigvee_{1 \leq j \leq b} u_j$
and clauses~$(\neg u_j \vee \neg u_{j'})$ for each~$1 \leq j < j' \leq u$.
For each~$1 \leq j \leq b$, the assignment to the variables in~$Y$ represents a
DNF formula~$\chi_j$ that is
obtained by taking the~$\psi_j$
and replacing each~$z_i$ in~$\psi_j$ by the unique~$x_i$ for
which~$y_{i,x_i}$ is true.
Next, the formulas~$\varphi^{j,\ell}_{\mtext{sat}} =
\bigwedge_{1 \leq t \leq v_{j,\ell}} \varphi^{j,\ell,t}_{\mtext{sat}}$ encode
whether the~$\ell$-th term of~$\chi_j$ is satisfied by the assignment
to the variables in~$X$.
For this, we let~$\varphi^{j,\ell,t} = \bigwedge_{x \in X} (y_{m,x} \rightarrow x)$
if~$l_{j,\ell,t} = z_m$ for some~$1 \leq m \leq k$,
and we let~$\varphi^{j,\ell,t} = \bigwedge_{x \in X} (y_{m,x} \rightarrow \neg x)$
if~$l_{j,\ell,t} = \neg z_m$ for some~$1 \leq m \leq k$.
Finally, we let~$k' = k+1$.
We show that~$(\varphi,k) \in \SmallMinDNF{}$ if and only
if~$(\varphi',k') \in \EkAWSat{}$.

$(\Rightarrow)$ 
Assume that there exists a DNF~$\chi$ of size~$\leq k$
such that~$\varphi \equiv \chi$.
By Lemma~\ref{lem:relevant}, we may assume without
loss of generality that~$\Var{\chi} \subseteq \Var{\varphi}$.
Then clearly there exists some DNF formula~$\psi_j$ over
the variables~$z_1,\dotsc,z_k$ and some distinct
variables~$x_1,\dotsc,x_k \in X$ such
that~$\chi = \psi_j[z_1 \mapsto x_1,\dotsc,z_k \mapsto x_k]$.
We construct the assignment~$\alpha : Y \cup U \rightarrow \SBs 0,1 \SEs$
of weight~$k'$ as follows.
We let~$\alpha(y_{i,x}) = 1$ if and only if~$x = x_i$,
and we let~$\alpha(u_{j'}) = 1$ if and only if~$j' = j$.

It is straightforward to verify that~$\alpha$ satisfies the
formulas~$\varphi^{Y}_{\mtext{proper}}$ and~$\varphi^{U}_{\mtext{one}}$.
We show that~$\forall X. (\bigwedge_{1 \leq j \leq b} (u_j \rightarrow
(\varphi \leftrightarrow \bigvee_{1 \leq \ell \leq w_j}
\varphi^{j,\ell}_{\mtext{sat}})))[\alpha]$ is true.
Let~$\beta : X \rightarrow \SBs 0,1 \SEs$ be an arbitrary truth assignment.
Clearly, for each~$1 \leq j' \leq u$ such that~$j' \neq j$,
the implication is satisfied, because~$\alpha(u_{j'}) = 0$.
We show that~$(\varphi \leftrightarrow \bigvee_{1 \leq \ell \leq w_j}
\varphi^{j,\ell}_{\mtext{sat}})[\alpha \cup \beta]$ is true.
If~$\varphi[\beta]$ is true, then since~$\chi \equiv \varphi$,
we know that some term~$t_{j,\ell}[z_1 \mapsto x_1,\dotsc,z_k \mapsto x_k]$
of~$\chi$ is satisfied. It is straightforward to verify
that~$\varphi^{j,\ell}_{\mtext{sat}}$ is satisfied then as well.
Conversely, if~$\varphi[\beta]$ is not true, then an analogous argument
shows that for no~$1 \leq \ell \leq w_j$ the formula~$\varphi^{j,\ell}_{\mtext{sat}}$
is satisfied.
This concludes our proof that~$(\varphi',k') \in \EkAWSat{}$.

$(\Leftarrow)$
Assume that there exists some assignment~$\alpha : Y \cup U \rightarrow
\SBs 0,1 \SEs$ of weight~$k$ such that~$\forall X. \varphi''$ is true.
Clearly, for each~$1 \leq i \leq k$,~$\alpha$ there must be exactly
one~$x_i \in X$ such that~$\alpha$ sets~$y_{i,x_i}$ to true,
and there must be exactly one~$1 \leq j \leq b$ such
that~$\alpha$ sets~$u_j$ to true,
since otherwise the formulas~$\varphi^{Y}_{\mtext{proper}}$
and~$\varphi^{U}_{\mtext{one}}$ would not be satisfied.
Consider the formula~$\chi$ that is obtained from~$\psi_j$
by replacing the variables~$z_1,\dotsc,z_k$ by the
variables~$x_1,\dotsc,x_k$, respectively.
We know that the size of~$\chi$ is at most~$k$.
We show that~$\varphi \equiv \chi$.

Let~$\beta : X \rightarrow \SBs 0,1 \SEs$ be an arbitrary truth assignment.
We show that~$\varphi[\beta] = \chi[\beta]$.
Since~$\alpha(u_j) = 1$, we know that~$\alpha \cup \beta$ must satisfy
the formula~$(\varphi \leftrightarrow \bigvee_{1 \leq \ell \leq w_j}
\varphi^{j,\ell}_{\mtext{sat}})$.
Assume that~$\beta$ satisfies~$\chi$, i.e.,~$\beta$ satisfies some
term~$t_{j,\ell}[z_1 \mapsto x_1,\dotsc,z_k \mapsto x_k]$ of~$\chi$.
It is straightforward to verify that
then~$\varphi^{j,\ell}_{\mtext{sat}}[\alpha \cup \beta]$ is true.
Then also~$\varphi[\alpha \cup \beta] = \varphi[\beta]$ is true.
Conversely, assume that~$\beta$ satisfies~$\varphi$.
Then we know that~$\bigvee_{1 \leq \ell \leq w_j}
\varphi^{j,\ell}_{\mtext{sat}}[\alpha \cup \beta]$ is true,
i.e., there exists some~$1 \leq \ell \leq w_j$ such
that~$\varphi^{j,\ell}_{\mtext{sat}}[\alpha \cup \beta]$ is true.
It is then straightforward to verify that the
term~$t_{j,\ell}[z_1 \mapsto x_1,\dotsc,z_k \mapsto x_k]$
is satisfied by~$\beta$.
This concludes our proof that~$(\varphi,k) \in \SmallMinDNF{}$.
}
\newcommand{\ProofEAtwECompleteness}[0]{
\begin{proof}
\sloppypar
Membership in $\para{\SigmaP{2}}$ is obvious.
To show $\para{\SigmaP{2}}$-hardness, it suffices to show that the
problem is already \SigmaP{2}-hard when the parameter value
is restricted to~$1$~\cite{FlumGrohe03}.
We show this by means of a reduction from \EASat{}.
The idea of this reduction is to introduce for each existentially
quantified variable~$x$
a corresponding universally quantified variable~$z_x$
that is used to represent the truth value assigned to~$x$.
Each of the existentially quantified variables then only directly interacts
with universally quantified variables.

Take an arbitrary instance of \EASat{},
specified by~$\varphi = \exists X. \forall Y. \psi(X,Y)$,
where~$\psi(X,Y)$ is in \DNF{}.
We introduce a new set~$Z = \SB z_x \SM x \in X \SE$ of variables.
It is straightforward to verify that~$\varphi = \exists X. \forall Y. \psi(X,Y)$
is equivalent to the formula $\exists Z. \forall X. \forall Y. \chi$,
where~$\chi = \bigvee_{x \in X}
[ (x \wedge \neg z_x) \vee (\neg x \wedge z_x) ]
\vee \psi(X,Y)$.
\longversion{Also, clearly,~$\mtext{IG}(Z,\chi')$ consists only of isolated
paths of length~$2$,
and thus has treewidth~$1$.
Thus, the unparameterized problem consisting of all yes-instances of \EAtwESat{},
where the input formula is in \DNF{} and the parameter value
is~$1$, is \SigmaP{2}-hard.
This proves that \EAtwESat{} is \para{\SigmaP{2}}-hard.
}%
\satversion{
Clearly, if we now delete all universally quantified variables, the incidence graph
of~$\chi$ consists only of isolated paths of length~$2$,
and therefore the treewidth is~$1$.
This proves that \EAtwESat{} is \para{\SigmaP{2}}-hard.
}%
\end{proof}
}
\newcommand{\ProofEAtwACompleteness}[0]{
\begin{proof}
\satversion{
Hardness for \para{\NP} can be proven by showing that the problem
is already \NP{}-hard when restricted to instances
where the parameter value is~$1$~\cite{FlumGrohe03}.
In order to do this, one can reduce an instance of \thSAT{}
to an instance of \EASat{}
whose matrix is in DNF by using the standard Tseitin
transformation, resulting in tree-like interactions
between the universally quantified variables.
Therefore, the resulting formula has universal incidence treewidth~$1$.
}

\longversion{
\sloppypar We show \para{\NP}-hardness by showing that the problem
is already \NP{}-hard when restricted to instances
where the parameter value is~$1$~\cite{FlumGrohe03}.
We reduce from \thSAT{}, and the idea behind this reduction
is to reduce the instance of \thSAT{} to an instance of \EASat{}
whose matrix is in DNF by using the standard Tseitin
transformation, resulting in tree-like interactions
between the universally quantified variables.

Let~$\varphi$ be a propositional formula whose satisfiability
we want to decide, with~$\Var{\varphi} = X$.
Assume without loss of generality that~$\varphi$ contains only the
connectives~$\neg$ and~$\wedge$.
Equivalently, we want to determine whether the
QBF~$\psi = \exists X. \forall \emptyset. \varphi$ is true.
We can transform~$\psi$ into an equivalent
QBF~$\psi' = \exists X. \forall Y. \chi$ whose matrix is
in DNF by using the standard Tseitin transformation~\cite{Tseitin}
as follows.%
\longversion{\SecondProofEAtwAMembership}
}

%
%
We now show \para{\NP}-membership of \EAtwASat{}.
\longversion{Let~$\varphi = \exists X. \forall Y. \psi$ be a quantified Boolean formula
where~$\psi = \delta_1 \vee \dotsm \vee \delta_u$,
and let~$(\mathcal{T},(B_t)_{t \in T})$ be a tree decomposition
of~$\mtext{IG}(Y,\psi)$ of width~$k$. }
\longversion{We may assume without loss of generality
that~$(\mathcal{T},(B_t)_{t \in T})$
is a nice tree decomposition. 
We may also assume without loss of generality
that for each~$t \in T$,~$B_t$ contains some~$y \in Y$. }%
%
%
%
%
We construct a CNF formula~$\varphi'$ that is satisfiable
if and only if~$\varphi$ is true.
The idea is to construct a formula that encodes the following
guess-and-check algorithm.
Firstly, the algorithm guesses an assignment~$\gamma$
to the existential variables.
\longversion{Then, the algorithm uses a dynamic programming approach
using the tree decomposition to decide whether
the formula instantiated with~$\gamma$ is valid.
This dynamic programming approach is widely used to solve
problems for instances where some graph representing the structure
of the instance has small treewidth (cf.~\cite{Bodlaender88}).
}%
\satversion{Note that the incidence graph of the formula
instantiated with~$\gamma$ has a small treewidth, because
instantiating with~$\gamma$ removes all existentially quantified variables.
Then, the algorithm employs dynamic programming
to exploit the fact that the incidence graph of the remaining formula
has small treewidth
to decide validity of the remaining formula.
This dynamic programming approach is widely used to solve
problems for instances where some graph representing the structure
of the instance has small treewidth (cf.~\cite{Bodlaender88}).

Next, we show how to encode this guess-and-check
algorithm into a formula~$\varphi'$ that is satisfiable if and only if
the algorithm accepts.
In order to do so, we formally define treewidth and tree decompositions
of graphs.
A tree decomposition of a graph~$G = (V,E)$ is a
pair~$(\mathcal{T},(B_t)_{t \in T})$ where~$\mathcal{T} = (T,F)$ is a rooted tree
and~$(B_t)_{t \in T}$ is a family of subsets of~$V$ such that:
(i) for every~$v \in V$, the set~$B^{-1}(v) = \SB t \in T \SM v \in B_t \SE$
induces a nonempty subtree of~$\mathcal{T}$; and
(ii) for every edge~$\SBs v,w \SEs \in E$, there is a~$t \in T$ such
that~$v,w \in B_t$.
In order to simplify the proof, we will consider the following
normal form of tree decompositions.
We call a tree decomposition~$(\mathcal{T},(B_t)_{t \in T})$
\emph{nice} if every node~$t \in T$ is of one of the following
four types:
(\emph{leaf node})~$t$ has no children and~$\Card{B_t} = 1$;
(\emph{introduce node})~$t$ has one child~$t'$
    and~$B_t = B_{t'} \cup \SBs v \SEs$
    for some vertex~$v \not\in B_{t'}$;
(\emph{forget node})~$t$ has one child~$t'$
    and~$B_t = B_{t'} \backslash \SBs v \SEs$
    for some vertex~$v \in B_{t'}$; or
(\emph{join node})~$t$ has two children~$t_1,t_2$
    and~$B_t = B_{t_1} = B_{t_2}$.
Given any graph~$G$ and a tree decomposition of~$G$ of width~$k$,
a nice tree decomposition of~$G$ of width~$k$ can be computed
in polynomial time~\cite{Bodlaender12,Kloks94}.
}%
\satversion{

Let~$\varphi = \exists X. \forall Y. \psi$ be a quantified Boolean formula
where~$\psi = \delta_1 \vee \dotsm \vee \delta_u$,
and let~$(\mathcal{T},(B_t)_{t \in T})$ be a tree decomposition
of width~$k$
of the incidence graph of~$\varphi$ after deletion
of the existentially quantified variables.
We may assume without loss of generality
that~$(\mathcal{T},(B_t)_{t \in T})$
is a nice tree decomposition. }%
\satversion{We may also assume without loss of generality
that for each~$t \in T$,~$B_t$ contains some~$y \in Y$.}

\longversion{Next, we show how to encode this guess-and-check
algorithm into a formula~$\varphi'$ that is satisfiable if and only if
the algorithm accepts.}
We let~$\Var{\varphi'} = X \cup Z$
where~$Z = \SB z_{t,\alpha,i} \SM
t \in T, \alpha : \Var{t} \rightarrow \SBs 0,1 \SEs, 1 \leq i \leq u \SE$.
Intuitively, the variables~$z_{t,\alpha,i}$ represent whether
at least one assignment extending~$\alpha$
(to the variables occurring in nodes~$t'$ below~$t$)
violates the term~$\delta_i$ of~$\psi$.
We then construct~$\varphi'$ as follows
by using the structure of the tree decomposition.
For all~$t \in T$, all~$\alpha : \Var{t} \rightarrow \SBs 0,1 \SEs$,
all~$1 \leq i \leq u$, and each literal~$l \in \delta_i$ such that~$\Var{l} \in X$,
we introduce the clause\satversion{ \inlineeqref{eq:tw1}}:
\satversion{$(\overline{l} \rightarrow z_{t,\alpha,i})$.}
\longversion{\begin{equation}
\label{eq:tw1}
(\overline{l} \rightarrow z_{t,\alpha,i}).
\end{equation}}%
Then, for all~$t \in T$, all~$\alpha : \Var{t} \rightarrow \SBs 0,1 \SEs$,
and all~$1 \leq i \leq u$ such that for some~$l \in \delta_i$
it holds that~$\Var{l} \in Y$ and~$\alpha(l) = 0$,
we introduce the clause\satversion{ \inlineeqref{eq:tw2}}:
\satversion{$(z_{t,\alpha,i})$.}
\longversion{\begin{equation}
\label{eq:tw2}
(z_{t,\alpha,i}).
\end{equation}}%
Next, let~$t \in T$ be any introduction node with child~$t'$, and
let~$\alpha : \Var{t'} \rightarrow \SBs 0,1 \SEs$ be an arbitrary assignment. 
For any assignment~$\alpha' : \Var{t} \rightarrow \SBs 0,1 \SEs$
that extends~$\alpha$, and
for each~$1 \leq i \leq u$, we introduce the
clause\satversion{ \inlineeqref{eq:tw3}}:
\satversion{$(z_{t',\alpha,i} \rightarrow z_{t,\alpha',i})$.}
\longversion{\begin{equation}
\label{eq:tw3}
(z_{t',\alpha,i} \rightarrow z_{t,\alpha',i}).
\end{equation}}%
Then, let~$t \in T$ be any forget node with child~$t'$, and
let~$\alpha : \Var{t} \rightarrow \SBs 0,1 \SEs$ be an arbitrary assignment.
For any assignment~$\alpha' : \Var{t'} \rightarrow \SBs 0,1 \SEs$
that extends~$\alpha$, and
for each~$1 \leq i \leq u$, we introduce the
clause\satversion{ \inlineeqref{eq:tw4}}:
\satversion{$(z_{t',\alpha',i} \rightarrow z_{t,\alpha,i})$.}
\longversion{\begin{equation}
\label{eq:tw4}
(z_{t',\alpha',i} \rightarrow z_{t,\alpha,i}).
\end{equation}}%
Next, let~$t \in T$ be any join node with children~$t_1,t_2$, and
let~$\alpha : \Var{t} \rightarrow \SBs 0,1 \SEs$ be an arbitrary assignment.
For each~$1 \leq i \leq u$, we introduce the
clauses\satversion{ \inlineeqref{eq:tw5}}:
\satversion{$(z_{t_1,\alpha,i} \rightarrow z_{t,\alpha,i})$ and
$(z_{t_2,\alpha,i} \rightarrow z_{t,\alpha,i})$.}
\longversion{\begin{equation}
\label{eq:tw5}
(z_{t_1,\alpha,i} \rightarrow z_{t,\alpha,i}) \mtext{ and }
(z_{t_2,\alpha,i} \rightarrow z_{t,\alpha,i}).
\end{equation}}%
Finally, for the root node~$t_{\mtext{root}} \in T$
and for each~$\alpha : \Var{t_{\mtext{root}}} \rightarrow \SBs 0,1 \SEs$
we introduce the clause\satversion{ \inlineeqref{eq:tw6}}:
\satversion{$\bigvee_{1 \leq i \leq u} \neg z_{t_{\mtext{root}},\alpha,i}$.}
\longversion{\begin{equation}
\label{eq:tw6}
\bigvee\limits_{1 \leq i \leq u} \neg z_{t_{\mtext{root}},\alpha,i}.
\end{equation}}%
It is straightforward to verify that~$\varphi'$
contains~$O(2^k\Card{T})$ many clauses.
\satversion{We claim that this reduction is correct,
i.e., that~$\varphi$ is true if and only if~$\varphi'$
is satisfiable.
We omit a detailed proof of this.}%
\longversion{\SecondProofEAtwACompleteness}
\end{proof}
}
\newcommand{\SecondProofEAtwAMembership}[0]{
Let~$\Sub{\varphi} = \SBs r_1,\dotsc,r_s \SEs$ be the set of subformulas
of~$\varphi$.
We let~$Y = \SBs y_1,\dotsc,y_s \SEs$ contain one propositional variable~$y_i$
for each~$r_i \in \Sub{\varphi}$.
Then, we define~$\psi' = \exists X. \forall Y. \chi$, where
$\chi = r_{\varphi} \vee \bigvee_{1 \leq i \leq u} \chi_i$,
and:
\[ \chi_i = \begin{dcases*}
    (r_i \wedge r_j) \vee (\neg r_i \wedge \neg r_j) &
      if $r_i = \neg r_j$; \\
    (r_i \wedge \neg x) \vee (\neg r_i \wedge x) &
      if $r_i = x \in X$; \\
    (r_i \wedge \neg r_j) \vee
    (r_i \wedge \neg r_k) \vee
    (\neg r_i \wedge r_j \wedge r_k) &
      if $r_i = r_j \wedge r_k$. \\
\end{dcases*} \]
It is straightforward to verify that~$\psi$ and~$\psi'$ are equivalent.
Moreover,~$\mtext{IG}(Y,\chi)$ is a tree, and thus has treewidth~$1$.
Therefore, it is easy to construct a
tree decomposition~$(\TTT,(B_t)_{t \in T})$
of~$\mtext{IG}(Y,\chi)$ of width~$1$ in polynomial time.
Then~$(\TTT,(B_t)_{t \in T})$ and~$\psi'$ together are an instance
of~$(\EAtwASat{})_1$ such that~$\psi'$ is true if and only
if~$\varphi$ is satisfiable.
}
\newcommand{\SecondProofEAtwACompleteness}[0]{
We now show that~$\varphi$ is true if and only if~$\varphi'$
is satisfiable.

$(\Rightarrow)$
Assume that there exists an assignment~$\beta : X \rightarrow \SBs 0,1 \SEs$
such that~$\forall Y. \psi[\beta]$ is true.
We construct an assignment~$\gamma : Z \rightarrow \SBs 0,1 \SEs$
such that~$\beta \cup \gamma$ satisfies~$\varphi'$.
Let~$C$ be the set of clauses~$(z_{t,\alpha,i})$
such that~$\varphi'$ contains a clause~$(\overline{l} \rightarrow z_{t,\alpha,i})$
for which~$\beta(\overline{l}) = 1$.
Since~$C$ together with the clauses~(\ref{eq:tw2}--\ref{eq:tw5})
forms a definite Horn formula~$\chi$, we can compute its
unique subset-minimal model (by unit-propagation).
Let~$\gamma$ be this subset-minimal model of~$\chi$.
We show that~$\gamma$ also satisfies the clauses~(\ref{eq:tw6}).
Let~$\alpha : \Var{t_{\mtext{root}}} \rightarrow \SBs 0,1 \SEs$ be
an arbitrary assignment.
Since~$\forall Y. \psi[\beta]$ is true, we know that there exists an
assignment~$\alpha' : Y \rightarrow \SBs 0,1 \SEs$ extending~$\alpha$
such that~$\psi[\alpha' \cup \beta]$ is true,
i.e., for some~$1 \leq i \leq u$ the term~$\delta_i$ is satisfied
by~$\alpha' \cup \beta$.
This assignment~$\alpha'$ induces a family~$(\alpha_t)_{t \in T}$
of assignments as follows: for each~$t \in T$, the assignment~$\alpha_t$
is the restriction of~$\alpha'$ to the variables~$\Var{t}$.
It is straightforward to verify that~$\gamma(z_{t,\alpha_t,i}) = 0$
for all~$t \in T$.
Therefore, since~$\alpha = \alpha_{t_{\mtext{root}}}$,
the clause~$\bigvee_{1 \leq i \leq u} (\neg z_{t_{\mtext{root}},\alpha,i})$
is satisfied.
Since~$\alpha$ was arbitrary, we know that~$\gamma$ satisfies
all clauses in~(\ref{eq:tw6}).
Thus,~$\beta \cup \gamma$ satisfies~$\varphi'$.

$(\Leftarrow)$
Assume that there is an assignment~$\beta : X \rightarrow \SBs 0,1 \SEs$
and an assignment~$\gamma : Z \rightarrow \SBs 0,1 \SEs$
such that~$\beta \cup \gamma$ satisfies~$\varphi'$.
We show that~$\forall Y. \psi[\beta]$ is true.
Let~$\alpha' : Y \rightarrow \SBs 0,1 \SEs$ be an arbitrary assignment.
We show that for some~$1 \leq i \leq u$,~$\alpha' \cup \beta$
satisfies the term~$\delta_i$, and thus that~$\psi[\alpha' \cup \beta]$
is true.
The assignment~$\alpha'$ induces a family~$(\alpha_t)_{t \in T}$
of assignments as follows: for each~$t \in T$, the assignment~$\alpha_t$
is the restriction of~$\alpha'$ to the variables~$\Var{t}$.
Since~$\gamma$ satisfies the clauses~(\ref{eq:tw6}),
we know that there exists some~$1 \leq i \leq u$ such
that~$\gamma(z_{t_{\mtext{root}},\alpha_{t_{\mtext{root}}},i}) = 0$.
It is straightforward to verify that~$\gamma(z_{t,\alpha_{t},i}) = 0$
for all~$t \in T$ then; otherwise, the clauses~(\ref{eq:tw3}--\ref{eq:tw5})
would force~$\gamma(z_{t_{\mtext{root}},\alpha_{t_{\mtext{root}}},i})$
to be~$1$.
Then, by the clauses~(\ref{eq:tw1}--\ref{eq:tw2}) we know that~$\alpha'$
must satisfy~$\delta_i$.
Since~$\alpha'$ was arbitrary, we know that~$\forall Y. \psi[\beta]$ is true,
and thus that~$\varphi$ is true.
}
\newcommand{\PreliminariesTreewidth}[0]{
A tree decomposition of a graph~$G = (V,E)$ is a
pair~$(\mathcal{T},(B_t)_{t \in T})$ where~$\mathcal{T} = (T,F)$ is a rooted tree
and~$(B_t)_{t \in T}$ is a family of subsets of~$V$ such that:
\satversion{(i) for every~$v \in V$, the set~$B^{-1}(v) = \SB t \in T \SM v \in B_t \SE$
induces a nonempty subtree of~$\mathcal{T}$; and
(ii) for every edge~$\SBs v,w \SEs \in E$, there is a~$t \in T$ such
that~$v,w \in B_t$.
}%
\longversion{
\begin{itemize}
  \item for every~$v \in V$, the set~$B^{-1}(v) = \SB t \in T \SM v \in B_t \SE$
    is nonempty and connected in~$\mathcal{T}$; and
  \item for every edge~$\SBs v,w \SEs \in E$, there is a~$t \in T$ such
    that~$v,w \in B_t$.
\end{itemize}}%
The \emph{width} of the decomposition~$(\mathcal{T},(B_t)_{t \in T})$ is the
number~$\max \SB \Card{B_t} \SM t \in T \SE - 1$.
The \emph{treewidth} of~$G$ is the minimum of the widths
of all tree decompositions of~$G$.
Let~$G$ be a graph and~$k$ a nonnegative integer.
There is an fpt-algorithm that computes
a tree decomposition of~$G$ of width~$k$ if it exists,
and fails otherwise~\cite{Bodlaender96}.
We call a tree decomposition~$(\mathcal{T},(B_t)_{t \in T})$
\emph{nice} if every node~$t \in T$ is of one of the following
four types:
\satversion{%
(\emph{leaf node})~$t$ has no children and~$\Card{B_t} = 1$;
(\emph{introduce node})~$t$ has one child~$t'$
    and~$B_t = B_{t'} \cup \SBs v \SEs$
    for some vertex~$v \not\in B_{t'}$;
(\emph{forget node})~$t$ has one child~$t'$
    and~$B_t = B_{t'} \backslash \SBs v \SEs$
    for some vertex~$v \in B_{t'}$; or
(\emph{join node})~$t$ has two children~$t_1,t_2$
    and~$B_t = B_{t_1} = B_{t_2}$.
}%
\longversion{\begin{itemize}
  \item \emph{leaf node}:~$t$ has no children and~$\Card{B_t} = 1$;
  \item \emph{introduce node}:~$t$ has one child~$t'$
    and~$B_t = B_{t'} \cup \SBs v \SEs$
    for some vertex~$v \not\in B_{t'}$;
  \item \emph{forget node}:~$t$ has one child~$t'$
    and~$B_t = B_{t'} \backslash \SBs v \SEs$
    for some vertex~$v \in B_{t'}$; or
  \item \emph{join node}:~$t$ has two children~$t_1,t_2$
    and~$B_t = B_{t_1} = B_{t_2}$.
\end{itemize}}%
Given any graph~$G$ and a tree decomposition of~$G$ of width~$k$,
a nice tree decomposition of~$G$ of width~$k$ can be computed
in polynomial time~\cite{Kloks94}.
}
\newcommand{\ProofKstarNmink}[0]{
We show that \EkAWSat{} \fptequiv{} \EnminkAWSat{}.
The same reduction works for both directions.
We describe the \satversion{many-one} fpt-reduction
from~$\EkAWSat{}$ to~$\EnminkAWSat{}$.
Let~$(\varphi,k)$ be an instance of~$\EkAWSat{}$,
with~$\varphi = \exists X. \forall Y. \psi$.
Assume without loss of generality that~$\varphi$ contains only
binary conjunctions and negations.
We construct an instance~$(\varphi',k)$ of~$\EnminkAWSat{}$.
Intuitively, we construct~$\varphi'$ by replacing in~$\varphi$
all literals over variables in~$X$ by their complement.

Formally, we let~$\varphi' = \exists X. \forall Y. \tau(\psi)$,
where~$\tau(\neg \psi_1) = \neg \tau(\psi_1)$,~$\tau(\psi_1 \wedge \psi_2)
= \tau(\psi_1) \wedge \tau(\psi_2)$,~$\tau(y) = y$ for
all~$y \in Y$, and~$\tau(x) = \neg x$ for all~$x \in X$.

It is straightforward to verify that for each assignment~$\alpha$ to
the propositional variables~$X$ of weight~$k$
we have that~$\psi[\alpha] = \psi'[\overline{\alpha}]$,
where~$\overline{\alpha}$ is the assignment of weight~$n-k$
obtained by flipping all assignments of~$\alpha$,
i.e.,~$\overline{\alpha}(x) = \neg\alpha(x)$.
Thus,~$(\varphi,k) \in \EkAWSat{}$ if and only
if~$(\varphi',k) \in \EnminkAWSat{}$.
}
\begin{document}

\krversion{
  \pagenumbering{gobble}
}

\maketitle
\thispagestyle{plain}
\pagestyle{plain}

\krlongversion{
\begin{abstract}
  Today's propositional satisfiability (SAT) solvers are extremely
  powerful and can be used as an efficient back-end for 
  solving NP-complete problems.  However, many fundamental problems in
  knowledge representation and reasoning are located at the second
  level of the Polynomial Hierarchy  or even higher, and hence
  polynomial-time transformations to SAT are not possible, unless the
  hierarchy collapses. Recent research shows that in certain cases one can
  break through these complexity barriers by fixed-parameter tractable (fpt)
  reductions which exploit structural aspects of problem instances in
  terms of problem parameters.

  In this paper we develop a general theoretical framework that
  supports the classification of parameterized problems on whether
  they admit such an fpt-reduction to SAT or not.
  \krversion{
  We instantiate our theory by classifying the complexities of
  several case study problems, with respect to various natural parameters.
  These case studies include the consistency problem for disjunctive
  answer set programming and a robust version of
  constraint satisfaction.
  }\longversion{This framework is based on several new parameterized
  complexity classes.
  As a running example,
  we use the framework to classify the complexity of the consistency
  problem for disjunctive answer set programming, with
  respect to various natural parameters.
  We underpin the robustness of our theory by providing a characterization
  of the new complexity classes
  in terms of weighted QBF satisfiability, alternating Turing
  machines, and first-order model checking.
  In addition, we provide a compendium of parameterized problems
  that are complete for the new complexity classes,
  including problems related to
  Knowledge Representation and Reasoning, Logic,
  and Combinatorics.
  }
\end{abstract}
}

\longversion{\tableofcontents}

\krlongversion{
\section{Introduction}
\label{sec:introduction}

\krversion{\thispagestyle{empty}}
Over the last two decades, propositional satisfiability (SAT) has
become one of the most successful and widely applied techniques for
the solution of NP-complete problems. Today's SAT-solvers are
extremely efficient and robust, instances with hundreds of thousands
of variables and clauses can be solved routinely.
In fact, due to the
success of SAT, NP-complete problems have lost their scariness, as in
many cases one can efficiently encode NP-complete problems to SAT and
solve them by means of a
SAT-solver~\cite{BiereHeuleMaarenWalsh09,GomesKautzSabharwalSelman08,
MalikZhang09,MarquessilvaLynceMalik09,SakallahMarquessilva11}.
However, many important computational problems, most prominently in
knowledge representation and reasoning, are located
above the first level of the Polynomial Hierarchy (PH) and thus
considered ``harder'' than SAT.  Hence we cannot hope for
polynomial-time reductions from these problems to SAT, as such
transformations would cause the (unexpected) collapse of the PH.

Realistic problem instances are not random and often contain some kind
of ``hidden structure.'' Recent research succeeded to
exploit such hidden structure to break the complexity barriers
between levels of the PH, for problems that arise in
disjunctive answer set programming~\cite{FichteSzeider13} and
abductive reasoning~\cite{PfandlerRuemmeleSzeider13}.  The idea is to
exploit problem structure in terms of a problem \emph{parameter}, and
to develop reductions to SAT that can be computed efficiently as long
as the problem parameter is reasonably small.  The theory of
\emph{parameterized complexity}
\cite{DowneyFellows99,DowneyFellows13,FlumGrohe06,Niedermeier06}
provides exactly the
right type of reduction suitable for this purpose, called
\emph{fixed-parameter tractable reductions}, or \emph{fpt-reductions}
for short.  Now, for a suitable choice of the parameter, one can
aim at developing fpt-reductions from the hard problem under consideration to
SAT.

Such positive results 
go significantly beyond
the state-of-the-art of current research in parameterized complexity.
By shifting the scope from
fixed-parameter tractability to fpt-reducibility (to SAT), parameters
can be less restrictive and hence larger classes of inputs can be
processed efficiently. Therefore, the potential for positive tractability
results is greatly enlarged.  In fact, there are some known reductions
that, in retrospect, can be seen as fpt-reductions to SAT. A prominent
example is Bounded Model Checking~\cite{BiereCimattiClarkeZhu99},
which can be seen as an fpt-reduction from the model checking problem
for linear temporal logic (LTL), which is PSPACE-complete, to SAT,
where the parameter is an upper bound on the size of a counterexample.
Bounded Model Checking is widely used for hardware and software
verification at industrial scale~\cite{Biere09}.
\longversion{For a more detailed discussion of this example
we refer to Section~\ref{sec:app-bmc}.}

\paragraph{New Contributions}
The aim of this paper is to establish a general theoretical framework
that supports the classification of hard problems on whether they
admit an fpt-reduction to SAT or not.  The main contribution is the development of a new hardness theory that can be used to
provide evidence that certain problems do not admit  an
fpt-reduction to SAT, similar to 
NP-hardness which provides evidence against polynomial-time
tractability~\cite{GareyJohnson79} and $\W{1}$\hy hardness
which provides evidence against
fixed-parameter tractability~\cite{DowneyFellows99}.  

At the center of our theory are two hierarchies of parameterized
complexity classes: the \stark{} hierarchy and the \kstar{} hierarchy.
We define the complexity classes in terms of weighted variants of
the quantified Boolean satisfiability problem with one
quantifier alternation, which is canonical for the second level of the
PH. For the classes in the \kstar{} hierarchy, the (Hamming) weight of
the assignment to the variables in the first quantifier block is
bounded by the parameter~$k$, the weight of the second quantifier
block is unrestricted (``$*$''). For the classes in the \stark{}
hierarchy it is the other way around, the weight in the second block
restricted by~$k$ and the first block is unrestricted. Both
hierarchies span various degrees of hardness between the
classes~$\para{\NP}$ and~$\para{\co\NP}$ at the bottom 
and {$\para{\SigmaP{2}}$} at the top ($\para{K}$ contains all
parameterized problems that, after fpt-time preprocessing, ultimately belong
to complexity class~$\mtext{K}$~\cite{FlumGrohe03}).
Figure~\ref{fig:classes} illustrates the relationship between the various
parameterized complexity classes under consideration.


To illustrate the usefulness of our theory, we consider as a running
example the fundamental problem of \emph{answer set programming} which asks
whether a disjuncive logic program has a stable model.
\longversion{Answer set 
programming~\cite{BrewkaEiterTruszczynski11,Gelfond07,
MarekTruszczynski99}
is a form of declarative
programming capable of expressing many nonmonotonic reasoning problems
that often occur in the domain of knowledge representation. }%
This problem is
$\SigmaP{2}$\hy complete~\cite{EiterGottlob95}, and
exhibits completeness or hardness for various of our
complexity classes; see Table~\ref{table:ex_results} for an
overview.
\longversion{Moreover, we give alternative characterizations
of the \kstar{} hierarchy in terms of first-order model checking
and alternating Turing machines. }%
In addition we were able to identify many other natural problems
that populate our new complexity classes.\krversion{ We
  refer to a technical report corresponding to this paper
  which contains full proofs of all results,
  contains a compendium of problems,
  and is available on
  arXiv~(\href{http://arxiv.org/abs/1312.1672}{http://arxiv.org/abs/1312.1672}).} \longversion{In order to give a systematic and concise overview
  of these results,
we provide a \emph{compendium} in the appendix, containing completeness
results for our new complexity classes, for a wide range of problems,
including problems related to
Knowledge Representation and Reasoning, Logic,
and Combinatorics.}

\paragraph{Structure of the Paper}
We begin with reviewing basic notions related to (Parameterized)
Complexity Theory and Answer Set Programming in
Section~\ref{sec:prelims}.
Then, in Section~\ref{sec:params-for-asp}, we introduce our running
example by considering
several parameterized variants of the consistency problem of disjunctive
answer set programming. Moreover, in this section, we argue that
existing notions from
parameterized complexity do not suffice to characterize the
complexity of these parameterized problems.
In Section~\ref{sec:define-hierarchies}, we define the two hierarchies
of parameterized complexity classes
that are central to our new hardness theory
(the \kstar{} and \stark{} hierarchies),
on the basis of weighted variants of quantified Boolean satisfiability.
Next, in Section~\ref{sec:eka}, we show that one of these hierarchies
(the \kstar{} hierarchy) collapses into a single complexity class \EkA{}.
Moreover, we show completeness for this class
for one parameterized variant of our running example.
Then, in Section~\ref{sec:eka-characterizations}, we give additional
characterizations of the class \EkA{} in terms of first-order model
checking and alternating Turing machines.
In Section~\ref{sec:eak}, we provide some normalization results
for two levels of the \stark{} hierarchy.
In addition, we prove completeness of another parameterized variant
of our running example for the highest level of the \stark{} hierarchy.
Finally, we relate the new parameterized complexity classes to existing
parameterized complexity classes in
Section~\ref{sec:relating-eka-and-eak},
before concluding in Section~\ref{sec:conclusion}.

Section~\ref{sec:appendix-a} of the appendix contains a compendium
of completeness
results for our new complexity classes, for a wide range of problems,
including problems related to
Knowledge Representation and Reasoning, Logic,
and Combinatorics.
Section~\ref{sec:appendix-b} of the appendix contains proofs that
were omitted from Section~\ref{sec:atm-char} for the sake of readability,
and proofs for results in the compendium that do not appear in the
main text.

\begin{figure}
\begin{center}
\begin{tikzpicture}[node distance=1.5cm]
  \node[] (sigma2) {$\para{\SigmaP{2}}$};
  \node[right of=sigma2, node distance=12cm] (pi2) {$\para{\PiP{2}}$};
  \node[below of=sigma2, node distance=1.5cm] (eakwp) {$\EAkW{P}$};
  \node[below of=eakwp, node distance=1.5cm] (eakw1) {$\EAkW{1}$};
  \node[below of=eakw1, node distance=1.5cm] (paranp) {$\para{\NP}$};
  \node[below of=paranp, node distance=2.1cm, xshift=2.6cm] (w1) {$\W{1}$};
  \node[below of=paranp, node distance=1cm, xshift=.6cm] (wp) {$\W{P}$};
  \node[below of=pi2, node distance=1.5cm] (aekwp) {$\AEkW{P}$};
  \node[below of=aekwp, node distance=1.5cm] (aekw1) {$\AEkW{1}$};
  \node[below of=aekw1, node distance=1.5cm] (paraconp) {$\para{\co\NP}$};
  \node[below of=paraconp, node distance=1cm, xshift=-.6cm] (cowp) {$\co{}\W{P}$};
  \node[below of=paraconp, node distance=2.1cm, xshift=-2.6cm] (cow1) {$\co{}\W{1}$};
  \node[right of=sigma2, node distance=6cm, yshift=-1.5cm] (delta2) {$\para{\DeltaP{2}}$};
  \node[below of=delta2, node distance=1.5cm] (dp) {$\para{\DP}$};
  \node[left of=dp, node distance=3.5 cm, yshift=.75cm] (eka) {$\EkA$};
  \node[right of=dp, node distance=3.5 cm, yshift=.75cm] (ake) {$\AkE$};
  \node[below of=dp, node distance=3.9cm] (fpt) {$\FPT{} = \para{}\PTIME{}$};
  \node[above of=delta2, node distance=2cm] (xp) {$\XP$};
  \node[left of=xp, node distance=3cm, yshift=1cm] (xnp) {$\XNP$};
  \node[right of=xp, node distance=3cm, yshift=1cm] (xconp) {$\XcoNP$};
  \path[draw,->] (paranp) -- (eakw1);
  \path[draw,->,dashed] (eakw1) -- (eakwp);
  \path[draw,->] (eakwp) -- (sigma2);
  \path[draw,->] (paraconp) -- (aekw1);
  \path[draw,->,dashed] (aekw1) -- (aekwp);
  \path[draw,->] (aekwp) -- (pi2);
  \path[draw,->] (paranp) edge[bend right=10] (dp);
  \path[draw,->] (paraconp) edge[bend left=10] (dp);
  \path[draw,->] (dp) -- (delta2);
  \path[draw,->] (delta2) edge[bend right=15] (sigma2);
  \path[draw,->] (delta2) edge[bend left=15] (pi2);
  \path[draw,->] (paranp) edge[out=-10, in=-115] (ake);
  \path[draw,->] (paraconp) edge[out=190, in=-65] (eka);
  \path[draw,->] (eka) edge[out=90, in=0] (sigma2);
  \path[draw,->] (ake) edge[out=90, in=180] (pi2);
  \path[draw,->] (fpt) edge[bend left=5] (w1);
  \path[draw,->] (fpt) edge[bend right=5] (cow1);
  \path[draw,->,dashed] (w1) edge[bend left=10] (wp);
  \path[draw,->] (wp) edge[bend left=10] (paranp);
  \path[draw,->,dashed] (cow1) edge[bend right=10] (cowp);
  \path[draw,->] (cowp) edge[bend right=10] (paraconp);
  \path[draw,->] (wp) edge[out=0, in=-90] (eka);
  \path[draw,->] (cowp) edge[out=180, in=-90] (ake);
  \path[draw,->,dotted] (wp) edge[bend right=30] (aekwp);
  \path[draw,->,dotted] (w1) edge[bend right=10] (aekw1);
  \path[draw,->,dotted] (cowp) edge[bend left=30] (eakwp);
  \path[draw,->,dotted] (cow1) edge[bend left=10] (eakw1);
  \path[draw,->,dotted] (wp) edge[out=0,in=-160] (xp);
  \path[draw,->,dotted] (cowp) edge[out=180,in=-20] (xp);
  \path[draw,->] (xp) edge[bend right=10] (xnp);
  \path[draw,->] (xp) edge[bend left=10] (xconp);
  \path[draw,->] (eakwp) edge[bend right=20] (xnp);
  \path[draw,->] (aekwp) edge[bend left=20] (xconp);
  \path[draw,->] (eka) edge[out=90, in=-120] (xconp);
  \path[draw,->] (ake) edge[out=90, in=-60] (xnp);
\end{tikzpicture}
\end{center}
\caption{The parameterized complexity classes
of the \stark{} and \kstar{} hierarchies
in relation to existing classes.}
\label{fig:classes}
\end{figure}

}

\krlongversion{
\section{Preliminaries}
\label{sec:prelims}

\subsection{Parameterized Complexity Theory}
\label{sec:pct}
We introduce some core notions from parameterized complexity theory.
For an in-depth treatment we refer to other
sources~\cite{DowneyFellows99,DowneyFellows13,
FlumGrohe06,Niedermeier06}.
A \emph{parameterized problem}~$L$ is a subset of~$\Sigma^{*} \times
\mathbb{N}$ for some finite alphabet~$\Sigma$. For an instance~$(I,k)
\in \Sigma^{*} \times \mathbb{N}$, we call~$I$ the \emph{main part}
and~$k$ the \emph{parameter}.
For each positive integer~$k \geq 1$, we define the $k$-th
slice of~$L$ as the unparameterized problem~$L_k =
\SB x \SM (x,k) \in L \SE$.
The following generalization of
polynomial time computability is commonly regarded as the tractability
notion of parameterized complexity theory.  A parameterized problem
$L$ is \emph{fixed-parameter tractable} if there exists a computable
function~$f$ and a constant~$c$ such that there exists an algorithm
that decides whether~$(I,k) \in L$ in time~$O(f(k)\size{I}^c)$,
where~$\size{I}$ denotes the size of~$I$.  Such an algorithm is called an
\emph{fpt-algorithm}, and this amount of time is called
\emph{fpt-time}. \FPT{} is the class of all fixed-parameter tractable
decision problems.
If the parameter is constant, then fpt-algorithms run in polynomial
time where the order of the polynomial is independent of the
parameter.
This provides a good scalability in the parameter in contrast to
running times of the form~$\size{I}^k$,
which are also polynomial for fixed~$k$, but are already
impractical for, say,~$k>3$.

Parameterized complexity also offers a hardness theory,
similar to the theory of NP-hardness, that allows researchers to give
strong theoretical evidence that some parameterized problems are not
fixed-parameter tractable.
This theory is based on the \emph{Weft hierarchy} of complexity classes
$\FPT{} \subseteq \W{1} \subseteq \W{2} \subseteq \dotsm
\subseteq \W{\SAT} \subseteq \W{P}$,
where all inclusions are believed to be strict.
For a hardness theory, a notion of reduction is needed.
Let~$L \subseteq \Sigma^{*} \times \mathbb{N}$ and
$L' \subseteq (\Sigma')^{*} \times \mathbb{N}$ be two parameterized problems.
An \emph{fpt-reduction} (or \emph{fixed parameter tractable reduction})
from~$L$ to~$L'$ is a mapping
$R : \Sigma^{*} \times \mathbb{N} \rightarrow (\Sigma')^{*} \times \mathbb{N}$
from instances of~$L$ to instances of~$L'$ such that
there exist some computable function~$g : \mathbb{N} \rightarrow \mathbb{N}$
such that for all~$(I,k) \in \Sigma^{*} \times \mathbb{N}$:
(i)~$(I,k)$ is a yes-instance of~$L$ if and only if
$(I',k') = R(I,k)$ is a yes-instance of~$L'$,
(ii)~$k' \leq g(k)$,
and (iii)~$R$ is computable in fpt-time.
We write~$L \fptred{} L'$ if there is an fpt-reduction
from~$L$ to~$L'$.
Similarly, we call reductions that satisfy properties~(i) and~(ii)
but that are computable in time~$O(\size{I}^{f(k)})$,
for some fixed computable function~$f$, \emph{xp-reductions}.

The parameterized complexity
classes~$\W{$t$}$,~$t \geq 1$,~$\W{\SAT}$ and~$\W{P}$
are based on the satisfiability problems
of Boolean circuits and formulas.
We consider \emph{Boolean circuits} with a single output gate.
We call input nodes \emph{variables}.
We distinguish between \emph{small} gates, with fan-in~$\leq 2$,
and \emph{large gates}, with fan-in~$> 2$.
The \emph{depth} of a circuit is the length of a longest path from
any variable to the output gate.
The \emph{weft} of a circuit is the largest number of large gates on any
path from a variable to the output gate.
We let~$\Nodes{C}$ denote the set of all nodes of a circuit~$C$.
We say that a circuit~$C$ is in \emph{negation normal form}
if all negation nodes in~$C$ have variables as inputs.
A \emph{Boolean formula} can be considered as a Boolean circuit
where all gates have fan-out~$\leq 1$.
We adopt the usual notions of truth assignments
and satisfiability of a Boolean circuit.
We say that a truth assignment for a Boolean circuit has \emph{weight~$k$}
if it sets exactly~$k$ of the variables of the circuit to true.
We denote the class of Boolean circuits with depth~$u$ and weft~$t$
by~$\Gamma_{t,u}$.
We denote the class of all Boolean circuits by~$\Gamma$,
and the class of all Boolean formulas by~$\Phi$.
For any class~$\CCC$ of Boolean circuits,
we define the following parameterized problem.
\probdef{
  \WSat{\mathcal{C}}

  \emph{Instance:} A Boolean circuit~$C \in \CCC$, and an integer~$k$.

  \emph{Parameter:} $k$.

  \emph{Question:} Does there exist an assignment of weight~$k$ that satisfies~$C$?
}

We denote closure under fpt-reductions by~$\fptclosure{\cdot}$.
The classes \W{t} are defined by letting
$\W{t} = \fptclosure{\SB \WSat{\Gamma_{t,u}} \SM u \geq 1 \SE}$,
for~$t \geq 1$.
The classes \W{SAT} and \W{P} are defined by letting
$\W{SAT} = \fptclosure{\WSat{\Phi}}$
and $\W{P} = \fptclosure{\WSat{\Gamma}}$.

The following parameterized complexity classes are analogues
to classical complexity classes.
Let~$\mtext{K}$ be a
classical complexity class, e.g., \NP{}.  The parameterized complexity
class~$\para{K}$ is then defined as the class of all parameterized
problems~$L \subseteq \Sigma^{*} \times \mathbb{N}$, for some finite
alphabet~$\Sigma$, for which there exist an alphabet~$\Pi$, a
computable function~$f : \mathbb{N} \rightarrow \Pi^{*}$, and a
problem~$P \subseteq \Sigma^{*} \times \Pi^{*}$ such that~$P \in K$
and for all instances~$(x,k) \in \Sigma^{*} \times \mathbb{N}$ of~$L$
we have that~$(x,k) \in L$ if and only if~$(x,f(k)) \in P$.
Intuitively, the class~$\para{K}$ consists of all problems that are
in~$\mtext{K}$ after a precomputation that only involves the parameter.
The class~$\para{\NP}$ can also be defined via
nondeterministic fpt-algorithms~\cite{FlumGrohe03}.

The following parameterized complexity classes are
different analogues to classical complexity classes.
Let~$\mtext{K}$ be a
classical complexity class, e.g., \PTIME{}.
The parameterized complexity class~$\X\mtext{K}^{\mtext{nu}}$ is defined
as the class of those parameterized problems~$Q$ whose slices~$Q_k$
are in~$\mtext{K}$, i.e., for each positive integer~$k$
the classical problem~$Q_k = \SB x \SM (x,k) \in Q \SE$
is in~$\mtext{K}$~\cite{DowneyFellows99}.
For instance, the class~$\XP^{\mtext{nu}}$ consists
of those parameterized problems
whose slices are decidable in polynomial time.
Note that this definition is non-uniform, that is,
for each positive integer~$k$ there might be a
completely different polynomial-time algorithm that witnesses
that~$P_k$ is polynomial-time solvable.
For the purposes of this paper, we will restrict ourselves to uniform
variants~$\X\mtext{K}$ of these classes~$\X\mtext{K}^{\mtext{nu}}$.
Concretely, we define \XP{} to be the class of parameterized
problems~$P$ for which there exists a computable function~$f$
and an algorithm~$A$ that decides whether~$(x,k) \in P$
in time~$\Card{x}^{f(k)}$~\cite{DowneyFellows99,DowneyFellows13,
FlumGrohe03,FlumGrohe06}.
Similarly, we define \XNP{} to be the class of parameterized
problems that are decidable in nondeterministic time~$\Card{x}^{f(k)}$.
Its dual class we denote by \XcoNP{}.

\subsection{The Polynomial Hierarchy}

There are many natural decision problems that are not
contained in the classical complexity classes \PTIME{} and \NP{}.
The \emph{Polynomial Hierarchy (PH)}
\cite{MeyerStockmeyer72,Papadimitriou94,Stockmeyer76,Wrathall76}
contains a hierarchy
of increasing complexity classes~$\SigmaP{i}$,
for all~$i \geq 0$.
We give a characterization of these classes based on the satisfiability
problem of various classes of quantified Boolean formulas.
A \emph{quantified Boolean formula} is a formula of the form
$Q_1X_1 Q_2X_2 \dotsc Q_m X_m \psi$,
where each~$Q_i$ is either~$\forall$ or~$\exists$,
the~$X_i$ are disjoint sets of propositional variables,
and~$\psi$ is a Boolean formula over the variables in~$\bigcup_{i=1}^{m} X_i$.
The quantifier-free part of such formulas is called the \emph{matrix} of the formula.
Truth of such formulas is defined in the usual way.
%
%
Let~$\gamma = \SBs x_1 \mapsto d_1, \dotsc, x_n \mapsto d_n \SEs$
be a function that maps some variables
of a formula~$\varphi$ to other variables or to truth values.
We let~$\varphi[\gamma]$ denote the application of
such a substitution~$\gamma$ to the formula~$\varphi$.
We also write~$\varphi[x_1 \mapsto d_1,\dotsc,x_n \mapsto d_n]$
to denote~$\varphi[\gamma]$.
For each~$i \geq 1$ we define the following decision problem.
\probdef{
  $\QSat{i}$

  \emph{Instance:} A quantified Boolean formula
    $\varphi = \exists X_1 \forall X_2 \exists X_3 \dotsc Q_i X_i \psi$,
    where~$Q_i$ is a universal quantifier if~$i$ is even
    and an existential quantifier if~$i$ is odd.

  \emph{Question:} Is~$\varphi$ true?
}

Input formulas to the problem \QSat{i}
are called \SigmaP{i}-formulas.
For each nonnegative integer~$i \leq 0$,
the complexity class~$\SigmaP{i}$ can be characterized as the
closure of the problem~$\QSat{i}$ under polynomial-time
reductions~\cite{Stockmeyer76,Wrathall76}.
The~\SigmaP{i}-hardness of~$\QSat{i}$ holds
already when the matrix of the input formula
is restricted to~$3\CNF$ for odd~$i$,
and restricted to~$3\DNF$ for even~$i$.
Note that the class~$\SigmaP{0}$ coincides with \PTIME{},
and the class~$\SigmaP{1}$ coincides with \NP{}.
For each~$i \geq 1$, the class~$\PiP{i}$
is defined as~$\co{\SigmaP{i}}$.

\longversion{
The classes \SigmaP{i} and \PiP{i} can also be defined by means of
nondeterministic Turing machines with an oracle.
For any complexity class~$C$, we let~$\NP^{C}$ be the set of
decision problems that is decided in polynomial time
by a nondeterministic Turing machine with an oracle
for a problem that is complete for the class~$C$.
Then, the classes \SigmaP{i} and \PiP{i}, for~$i \geq 0$,
can be equivalently defined as follows:
\[ \SigmaP{0} = \PiP{0} = \PTIME{}, \]
and for each~$i \geq 1$:
\[ \SigmaP{i} = \NP^{\SigmaP{i-1}} \qquad\mtext{ and }\qquad
\PiP{i} = \co\NP^{\SigmaP{i-1}}. \]

}

\subsection{Fixed-parameter Tractable Reductions to SAT}

The satisfiability problem for propositional formulas is denoted
informally by SAT.
Formally, we consider the
(non-parameterized) decision
problems~$\thSAT = \SB \varphi \SM
\varphi$ is a satisfiable propositional formula~$\SE$
and~$\thUNSAT = \SB \varphi \SM
\varphi$ is an unsatisfiable propositional formula~$\SE$.
By a slight abuse of notation, we will sometimes also
use~\thSAT{} to refer to the (trivial) parameterized variant
of the problem where the parameter value~$k=1$ is a fixed
constant for all instances,
i.e., to refer to the language~$\SB (\varphi,1) \SM
\varphi$ is a satisfiable propositional formula~$\SE$.
We use a similar convention for the problem~\thUNSAT{}.
In all cases, it is clear from the context whether the parameterized
or the non-parameterized variant is meant.

Many problems can be encoded into a propositional formula
and subsequently solved by a SAT solver.
Every problem in~$\NP \cup \co\NP$ can be solved with one call to a
SAT solver, and every problem in~$\DP = \SB L_1 \cap L_2 \SM
L_1 \in \NP, L_2 \in \co{\NP} \SE$ can be
solved with two calls to a SAT solver.  The Boolean
Hierarchy~\cite{CaiHemachandra86}
contains all problems that can be solved with
a constant number of calls to a SAT solver.  On the other hand,
(assuming that the PH does not collapse)
there are problems in \DeltaP{2}
that cannot be solved efficiently with a constant number of calls to a
SAT solver.

The problem \thSAT{} is \para{}\NP{}-complete,
and the problem \thUNSAT{} is \para{}\co{}\NP{}-complete.
Hence, by showing that a parameterized problem is in
\para{\NP} or \para{\co\NP} (see Section~\ref{sec:pct}) we establish
that the problem admits a (many-to-one) fpt-reduction to \thSAT{}
or \thUNSAT{}, respectively.
In other words, \para{}\NP{} consists of all parameterized problems
that can be fpt-reduced to \thSAT{},
and \para{}\co{}\NP{} consists of all parameterized problems
that can be fpt-reduced to \thUNSAT{}.
Similarly, we can view \XNP{} as the class of parameterized problems
for which there exists an xp-reduction to \thSAT{}
and \XcoNP{} as the class of parameterized problems for which
there exists an xp-reduction to \thUNSAT{}.

We would like to point out that fpt-reductions to SAT,
with running times of the form~$f(k)n^c$
for some function~$f$ and some constant~$c$, can be
quite efficient for small values of the parameter~$k$.
Therefore, fpt-reductions to SAT offer
possibilities for developing algorithms that use SAT solvers
and that are efficient for small values of the parameter~$k$.
On the other hand, xp-reductions to SAT have a polynomial running
time for fixed values of~$k$, but are already impractical for, say, $k>3$,
even with a running time of~$n^{k}$.
Because of this difference in scalability, we do not consider
membership results in \XNP{} or \XcoNP{} to be tractability results,
whereas we do consider problems in \para{}\NP{}
and \para{}\co{}\NP{} as tractable.

In addition, another tractability notion that we could consider
is the class of parameterized problems
that can be solved by an fpt-algorithm that makes~$f(k)$ many
calls to a SAT solver, for some function~$f$.
This notion opens another possibility to obtain (parameterized)
tractability results for problems beyond NP
(cf.~\cite{EndrissDeHaanSzeider14,EndrissDeHaanSzeider15,DeHaanSzeider14b}).

\subsection{Answer Set Programming}
We will use the logic programming setting of answer set programming
(ASP)
(cf.~\cite{BrewkaEiterTruszczynski11,Gelfond07,MarekTruszczynski99})
as a running example
in the remainder of the paper.
A \emph{disjunctive logic program} (or simply: a \emph{program})~$P$
is a finite set of rules of the form
$r = (a_1 \vee \dotsm \vee a_k \leftarrow$ $b_1,\dotsc,b_m,$
$\aspnot{} c_1, \dotsc, \aspnot{} c_n)$, for~$k,m,n \geq 0$,
where all~$a_i$,~$b_j$ and~$c_l$ are atoms.
A rule is called \emph{disjunctive} if~$k > 1$,
and it is called \emph{normal} if~$k \leq 1$
(note that we only call rules with strictly more than one
disjunct in the head disjunctive).
A rule is called \emph{negation-free} if~$n = 0$.
A program is called normal if all its rules are normal,
and called negation-free if all its rules are negation-free.
A rule is called a \emph{constraint} if~$k = 0$.
A rule is called \emph{dual-normal} if it is either a constraint,
or~$m \leq 1$.
We let~$\Atoms{P}$ denote the set of all atoms occurring in~$P$.
By \emph{literals} we mean atoms~$a$ or their negations~$\aspnot{} a$.
With~$\NF{r}$ we denote the rule~$(a_1 \vee \dotsm \vee a_k \leftarrow
b_1,\dotsc,b_m)$.
The \emph{(GL) reduct} of a program~$P$ with respect to a set~$M$ of atoms,
denoted~$P^M$, is the program obtained from~$P$ by:
(i) removing rules with~$\aspnot{} a$ in the body, for each~$a \in M$, and
(ii) removing literals~$\aspnot{} a$ from all other rules
\cite{GelfondLifschitz91}.
An \emph{answer set}~$A$ of a program~$P$ is a subset-minimal model
of the reduct~$P^A$.
The following decision problem is concerned with the question
of whether a given program has an answer set.
\probdef{
  \ASPcons{}

  \emph{Instance:} A disjunctive logic program~$P$.

  \emph{Question:} Does~$P$ have an answer set?
}

Many implementations of answer set programming already
employ SAT solving techniques, e.g.,
Cmodels \cite{GiunchigliaLierlerMaratea06},
ASSAT \cite{LinZhao04},
and Clasp \cite{GebserKaufmannNeumannSchaub07}.
Work has also been done on translations from ASP
to SAT, both for classes of programs that allow reasoning
within \NP{} or \co\NP{}
\cite{BeneliyahuDechter94,Fages94,
JanhunenNiemelaSeipelSimonsYou06,LinZhao04}
and for classes of programs
for which reasoning is beyond \NP{} and \co\NP{}
\cite{JanhunenNiemelaSeipelSimonsYou06,LeeLifschitz03,
LifschitzRazborov06}.
We hope that our work provides new means for a theoretical study
of these and related approaches to ASP.

\begin{table}[tb]
  \centering
  \begin{tabular}{@{}l@{\quad}l@{}}
    \textbf{Parameter} & \textbf{Complexity} \\
    \toprule
    normality-bd size & \para{\NP}-complete
    \krversion{\\ &}%
    \cite{FichteSzeider13} \\
    \midrule
    $\#$ contingent atoms & \para{\co\NP}-complete (\krversion{Prop}\longversion{Proposition}~\ref{prop:asp-contatoms-completeness}) \\
    \midrule
    $\#$ contingent rules & \EkA{}-complete (\krversion{Thm}\longversion{Theorem}~\ref{thm:asp-contrules-completeness}) \\
    \midrule
    $\#$ disjunctive rules & \EAkW{P}-complete (\krversion{Thm}\longversion{Theorem}~\ref{thm:asp-disjrules}) \\
    \midrule
    max atom occurrence & \para{\SigmaP{2}}-complete (\krversion{Cor}\longversion{Corollary}~\ref{cor:asp-varocc-completeness}) \\
    \bottomrule
  \end{tabular}

  \caption{Complexity results for different parameterizations
    of \ASPcons{}.\krversion{\vspace{-15pt}}}
  \label{table:ex_results}
\end{table}
}

\krlongversion{
\section{Parameterizations for Answer Set Programming}
\label{sec:params-for-asp}

\ASPcons{} is \SigmaP{2}-complete in general,
and can therefore
not be reduced to \SAT{} in polynomial time
(assuming that the PH does not collapse).
With the aim of identifying fpt-reductions from
\ASPcons{} to SAT,
we consider several parameterizations.

Fichte and Szeider~\cite{FichteSzeider13} identified
one parameterization
of \ASPcons{} under which the problem
is contained in \para{\NP}.
This parameterization is based on the notion of backdoors to normality
for disjunctive logic programs.
A set~$X$ of atoms is a \emph{normality-backdoor} for a program~$P$
if deleting the atoms~$x \in X$ and their negations
$\aspnot{} x$ from the rules of~$P$ results in a normal program.
\ASPcons{} is contained in \para{\NP},
when parameterized by the size of a smallest normality-backdoor
of the input program.

Two other parameterizations that we consider are related to atoms
that must be part of any answer set of a program~$P$.
We identify a subset \Comp{P} of \emph{compulsory atoms},
that any answer set must include.
Given a program~$P$, we let \Comp{P} be the smallest set such that:
(i) if~$(w \leftarrow \aspnot{} w)$ is a rule of~$P$, then~$w \in \Comp{P}$;
and (ii) if~$(b \leftarrow a_1,\dotsc,a_n)$ is a rule of~$P$, and~$a_1,\dotsc,a_n \in \Comp{P}$,
then~$b \in \Comp{P}$.
We then let the set \Cont{P} of \emph{contingent atoms}
be those atoms that occur in~$P$ but are not in \Comp{P}.
We call a rule \emph{contingent} if it contains contingent atoms
in the head.
(In fact, we could use any polynomial time computable algorithm~$A$ that computes
for every program~$P$ a set~$\CompA{A}{P}$ of atoms that must
be included in any answer set of~$P$.%
\longversion{ In this paper, we restrict ourselves to the algorithm described above
that computes~$\Comp{P}$.})

\sloppypar
The following are candidates for additional parameters that could result in
fpt-reductions to \SAT{}:
(i) the number of disjunctive rules in the program
(i.e., the number of rules with strictly more than one disjunct in the head);
(ii) the number of contingent atoms in the program;
(iii) the number of contingent rules in the program; and
(iv) the number of rules in the program that are not dual-normal.
We will often denote the parameterized problems based on \ASPcons{}
and these parameters
(i) \ASPcons\DisjRules{},
(ii) \ASPcons\ContAtoms{},
(iii) \ASPcons\ContRules{}, and
(v) \ASPcons\NonDualNormalRules{}, respectively.

The question that we would like to answer
is which (if any) of these parameterizations allows an fpt-reduction to \SAT{}.
Tools from classical complexity theory
seem unfit to distinguish these parameters from each other
and from the parameterization by Fichte and
Szeider~\cite{FichteSzeider13}:
if the parameter values are given as part of the input,
the problem remains \SigmaP{2}-complete in all cases;
if we bound the parameter values by a constant,
then in all cases the complexity of the problem decreases
to the first-level of the PH\krversion{ (a proof of this can be found
in the technical report)}\longversion{ (we will prove
this below)}.
However,
some of the parameterizations allow an fpt-reduction
to \SAT{}, whereas others seemingly do not.

Also the existing tools from parameterized complexity theory are unfit
to distinguish between these different parameterizations of
\ASPcons{}.
Practically all existing parameterized complexity classes
that are routinely used to show that an fpt-reduction is unlikely to exist
(such as the classes of the W-hierarchy)
are located below \para{\NP}.
Therefore, these classes do not allow us to
differentiate between problems that are in \para{\NP}
and problems that are not.

However,
using the parameterized complexity classes developed in this paper
we will be able to make the distinction between parameterizations
that allow an fpt-reduction to \SAT{} and parameterizations
that seem not to allow this.
Furthermore, our theory relates the latter ones in such a way that
an fpt-reduction to \SAT{} for any
of them gives us an fpt-reduction to \SAT{} for all of them.
As can be seen in Table~\ref{table:ex_results},
\ASPcons\ContAtoms{} can be fpt-reduced to \SAT{},
whereas we have evidence that this is not possible
for \ASPcons\DisjRules{} and \ASPcons\ContRules{}.

We will use \ASPcons{} together with the various parameterizations
discussed above as a running example,
which allows us to demonstrate the developed
theoretical tools.
We begin with showing a positive result for
\ASPcons\ContAtoms{}.

\begin{proposition}
\label{prop:asp-contatoms-completeness}
\ASPcons\ContAtoms{}
is \para{\co\NP}-complete.
\end{proposition}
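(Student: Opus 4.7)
For $\para{\co\NP}$-membership, I plan to give a many-one fpt-reduction to $\thUNSAT{}$. The key observation is that every answer set $A$ of $P$ satisfies $\Comp{P} \subseteq A \subseteq \Atoms{P}$, so $A = \Comp{P} \cup S$ for some $S \subseteq \Cont{P}$; this leaves only $2^k$ candidate answer sets when $k = \Card{\Cont{P}}$. For each candidate $A_S = \Comp{P} \cup S$ I would first check in polynomial time whether $A_S$ is a model of the reduct $P^{A_S}$; if so, the remaining question is whether no proper subset of $A_S$ is also a model of $P^{A_S}$, a standard co-$\mtext{NP}$ minimality test that can be encoded as a propositional formula $\psi_S$ satisfiable iff minimality fails (and otherwise we set $\psi_S = \top$, discarding $A_S$). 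Taking variable-disjoint copies and forming $\Psi = \bigwedge_{S \subseteq \Cont{P}} \psi_S$, the formula $\Psi$ is unsatisfiable iff some $A_S$ is an answer set, and its total size is $2^{k} \cdot \mtext{poly}(\size{P})$, which is computable in fpt-time.

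For $\para{\co\NP}$-hardness, I plan to reduce from $\thUNSAT{}$ (which is $\para{\co\NP}$-complete under the trivial parameter) to an instance of $\ASPcons\ContAtoms{}$ in which every atom is compulsory. Given a $3$-CNF $\varphi = C_1 \wedge \dotsm \wedge C_m$ over variables $x_1,\dotsc,x_n$, I introduce atoms $x_i$ and $\bar x_i$ (for $1 \leq i \leq n$) together with a single distinguished atom $w$, and let $P$ consist of: (i) the choice rules $x_i \vee \bar x_i \leftarrow$ for each $i$; (ii) for each clause $C_j = l_{j,1} \vee l_{j,2} \vee l_{j,3}$, the rule $w \leftarrow \bar l_{j,1}, \bar l_{j,2}, \bar l_{j,3}$, where $\bar l$ denotes the atom representing the complement of literal $l$; (iii) the saturation rules $x_i \leftarrow w$ and $\bar x_i \leftarrow w$; and (iv) the rule $w \leftarrow \aspnot w$. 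Rule (iv) seeds $w \in \Comp{P}$, and then the negation-free normal rules of (iii) propagate compulsoriness to every $x_i$ and $\bar x_i$, so $\Cont{P} = \emptyset$. Consequently the only candidate answer set is $A = \Atoms{P}$, and unpacking the reduct $P^A$ shows that proper submodels $B \subsetneq A$ of $P^A$ correspond bijectively (via $\alpha(x_i) = 1 \iff \bar x_i \notin B$) to satisfying assignments of $\varphi$: the saturation rules force $w \notin B$, the choice rules force at least one of $\{x_i, \bar x_i\}$ into $B$, and the rules from (ii) require some $\bar l_{j,k}$ to lie outside $B$ for every clause $C_j$, which in the coding translates precisely to $\alpha$ satisfying $C_j$. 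Hence $P$ has an answer set iff $\varphi$ is unsatisfiable.

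The trickiest step will be arranging, in the hardness direction, that every atom lands in $\Comp{P}$ under the specific derivation-based definition while still using those same atoms inside a saturation structure that admits proper submodels of $P^A$ precisely when $\varphi$ is satisfiable. Rule (iv) is what seeds compulsoriness of $w$, the negation-free normal rules of (iii) are then what both propagate compulsoriness to all atoms \emph{and} force any proper submodel of $P^A$ to omit $w$, and this latter omission is exactly what opens the coding space for the bijection with satisfying assignments of $\varphi$; once this interaction is set up, verifying both directions of the equivalence between answer-set existence of $P$ and unsatisfiability of $\varphi$ is routine.
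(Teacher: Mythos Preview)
Your proposal is correct and follows essentially the same approach as the paper: for membership you enumerate the $2^k$ candidate answer sets and encode ``not an answer set'' as a SAT instance for each (the paper does the same, though it writes a disjunction where a conjunction of variable-disjoint copies is needed---your version is the right one); for hardness you spell out explicitly the Eiter--Gottlob reduction from $\QSat{2}$ specialized to the purely universal case, which is exactly what the paper invokes by citation. One cosmetic point: the correspondence between proper submodels $B$ and satisfying assignments is not literally a bijection (a $B$ containing both $x_i$ and $\bar x_i$ maps to the same $\alpha$ as the one with only $\bar x_i$), but the existential equivalence you need---a proper submodel exists iff $\varphi$ is satisfiable---holds exactly as you argue.
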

\begin{proof}
Hardness for \para{\co\NP} follows from the
reduction of Eiter and Gottlob \cite[Theorem~3]{EiterGottlob95}.
They give a reduction from \QSat{2} to \ASPcons{}.
We can view this reduction as a polynomial-time reduction
from \thUNSAT{} to the slice of \ASPcons\ContAtoms{}
where the parameter value is~$0$.
Namely, considering an instance of \thUNSAT{} as an instance
of \QSat{2} with no existentially quantified variables,
the reduction results in an equivalent instance of \ASPcons{}
that has no contingent atoms.
Therefore, we can conclude that \ASPcons\ContAtoms{}
is \para{}\co{}\NP{}-hard.

We show membership in \para{\co\NP}.
Let~$P$ be a program that contains~$k$ many contingent atoms.
We describe an fpt-reduction to \thSAT{} for the dual problem
whether~$P$ has no answer set.
This is then also an fpt-reduction from \ASPcons\ContAtoms{}
to \thUNSAT{}.
Since each answer set of~$P$ must contain all atoms in~$\Comp{P}$,
there are only~$2^k$ candidate answer sets that we need to consider,
namely~$N \cup \Comp{P}$ for each~$N \subseteq \Cont{P}$.
For each such set~$M_{N} = N \cup \Comp{P}$ it can be checked in
deterministic polynomial time whether~$M_{N}$ is a model of~$P^{M_{N}}$,
and it can be checked by an \NP{}-algorithm whether~$M_{N}$ is not a
minimal model of~$P^{M_{N}}$
(namely, a counterexample consisting of a
model~$M' \subsetneq M_{N}$ of~$P^{M_{N}}$
can be found in nondeterministic polynomial time).
Therefore, by the NP-completeness of \SAT{},
for each~$N \subseteq \Cont{P}$,
there exists a propositional formula~$\varphi_{N}$ that is satisfiable
if and only if~$M_{N}$ is not a minimal model of~$P^{M_{N}}$.
Moreover, we can construct such a formula~$\varphi_{N}$
in polynomial time, for each~$N \subseteq \Cont{P}$.
All together, the statement that for
no~$N \subseteq \Cont{P}$
the set~$N \cup \Comp{P}$ is an answer set holds true
if and only if the disjunction~$\bigvee_{N \subseteq \Cont{P}}\varphi_{N}$
is satisfiable.
\end{proof}

\longversion{
\paragraph{Membership in the first level of the PH for constant parameter values}
We show that bounding the number of disjunctive or contingent
rules to a constant reduces the complexity of \ASPcons{}
to the first level of the polynomial hierarchy.
Firstly, restricting the number of disjunctive rules to a fixed constant
reduces the complexity of the problem to \NP{}.
In order to prove this, we use the following lemma.

\begin{lemma}
\label{lem:asp-disjrules-mu}
Let~$P$ be a negation-free disjunctive logic program,
and let~$M$ be a minimal model of~$P$.
Then there exists a subset~$R \subseteq P$ of
disjunctive rules and a mapping~$\mu : R \rightarrow \Atoms{P}$
such that:
\begin{itemize}
  \item for each~$r \in R$,
    the value~$\mu(r)$ is an atom in the head of~$r$; and
  \item $M = M_{\mu}$, where $M_{\mu}$ is the smallest set such that:
  \begin{itemize}
    \item $\Rng{\mu} \subseteq M_{\mu}$, and
    \item if~$b_1,\dotsc,b_m \in M_{\mu}$, and~$(a \leftarrow b_1,\dotsc,b_m) \in P$,
      then~$a \in M_{\mu}$.
  \end{itemize}
\end{itemize}
\end{lemma}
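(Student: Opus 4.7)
The plan is to construct $R$ and $\mu$ directly from $M$, and then verify the two required properties, the second of which uses minimality of $M$ in an essential way.

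First, I would let $R$ consist of those disjunctive rules $r = (a_1 \vee \dotsm \vee a_k \leftarrow b_1,\dotsc,b_m)$ in $P$ whose body is satisfied by~$M$, i.e., with $\SBs b_1,\dotsc,b_m \SEs \subseteq M$. For each such~$r$, since $M$ is a model of~$P$, at least one head atom~$a_i$ lies in~$M$; pick one such~$a_i$ and set~$\mu(r) = a_i$. This immediately ensures that $\mu(r)$ is an atom in the head of~$r$, and that $\Rng{\mu} \subseteq M$.

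Next I would show the easy inclusion $M_\mu \subseteq M$. Since $\Rng{\mu} \subseteq M$, and $M$ is closed under the normal rules of $P$ (as $M$ is a model of~$P$), a straightforward induction on the inductive definition of $M_\mu$ shows that every atom of $M_\mu$ lies in~$M$.

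The main step is the reverse inclusion $M \subseteq M_\mu$, and here minimality of~$M$ is crucial. The plan is to show that $M_\mu$ is itself a model of~$P$; combined with $M_\mu \subseteq M$ and minimality of~$M$, this forces $M_\mu = M$. For a normal rule $(a \leftarrow b_1,\dotsc,b_m) \in P$ with $\SBs b_1,\dotsc,b_m\SEs \subseteq M_\mu$, the definition of $M_\mu$ directly gives $a \in M_\mu$. For a disjunctive rule $r \in P$ with body contained in $M_\mu$, the inclusion $M_\mu \subseteq M$ implies the body is also contained in~$M$, so by construction $r \in R$; hence $\mu(r) \in \Rng{\mu} \subseteq M_\mu$ witnesses that the head of~$r$ is satisfied in $M_\mu$. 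Thus $M_\mu$ is a model of~$P$.

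The only subtle point—and the one I would be most careful about—is the interaction between the closure operator defining $M_\mu$ (which only uses normal rules) and the disjunctive rules of~$P$: the argument works precisely because the seed $\Rng{\mu}$ already contains a chosen head atom for every potentially applicable disjunctive rule of $P$, so no further ``disjunctive closure'' is needed. Everything else reduces to routine induction and case analysis.
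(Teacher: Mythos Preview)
Your proposal is correct and follows essentially the same approach as the paper: the same construction of $R$ and $\mu$ from the disjunctive rules whose bodies are satisfied by $M$, the same inductive argument that $M_\mu \subseteq M$, and the same verification that $M_\mu$ is a model of $P$, concluding $M_\mu = M$ by minimality. The only cosmetic difference is that the paper phrases this as a proof by contradiction (assuming no suitable $R,\mu$ exist and then constructing one), whereas you present the same construction directly.
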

\begin{proof}
We give an indirect proof.
Assume that~$M$ is a minimal model of~$P$,
but there exist no suitable~$R$ and~$\mu$.
We will derive a contradiction.
Since~$M$ is a model of~$P$,
we know that for each disjunctive rule~$r_i$ either holds
(i) that~$M$ does not satisfy the body, or (ii)
that~$M$ satisfies an atom~$a_i$ in the head.
We construct the set~$R$ and the mapping~$\mu$ as follows.
For each~$r_i$, we let~$r_i \in R$ and~$\mu(r_i) = a_i$ if and only
if~$M$ satisfies an atom~$a_i$ in the head of the disjunctive rule~$r_i$.

Clearly,~$\Rng{\mu} \subseteq M$.
Define~$A_0 = \Rng{\mu}$.
For each~$i \in \mathbb{N}$, we define~$A_{i+1}$ as follows:
\[ A_{i+1} = A_{i} \cup \SB a \SM (a \leftarrow b_1,\dotsc,b_m) \in P,
b_1,\dotsc,b_m \in A_i \SE. \]
We show by induction on~$i$ that~$A_i \subseteq M$ for all~$i \in \mathbb{N}$.
Clearly,~$A_0 \subseteq M$.
Assume that~$A_i \subseteq M$,
and let~$a \in A_{i+1} \backslash A_i$ be an arbitrary atom.
This can only be the case if~$b_1,\dotsc,b_m \in A_i$ and
$(a \leftarrow b_1,\dotsc,b_m) \in P$.
However, since~$M$ is a model of~$P$, also~$a \in M$.
Therefore,~$A_{i+1} \subseteq M$.

Now, define~$M' = \bigcup_{i \in \mathbb{M}}A_i$.
We have that~$M' \subseteq M$.
We show that~$M'$ is a model of~$P$.
Clearly,~$M'$ satisfies all normal rules of~$P$.
Let~$r = (a_1 \vee \dotsm \vee a_n \leftarrow b_1,\dotsc,b_n)$ be a disjunctive rule
of~$P$, and assume that~$M'$ does not satisfy~$r$.
Then it must be the case that~$b_1,\dotsc,b_m \in M'$ and~$a_1,\dotsc,a_n \not\in M'$.
However, since~$M' \subseteq M$, we know that~$M$ satisfies the body of~$r$.
Therefore,~$a_i \in M$, for some~$1 \leq i \leq n$, and
thus~$r \in R$ and~$\mu(r) = a_i$, and so~$a_i \in A_0 \subseteq M'$.
Thus,~$M'$ is a model of~$P$.

If~$M' \subsetneq M$, we have a contradiction with the fact
that~$M$ is a minimal model of~$P$.
Otherwise, if~$M' = M$, then we have a contradiction with the fact
that~$M$ cannot be represented in the way described above by
suitable~$R$ and~$\mu$.
This concludes our proof.
\end{proof}

\begin{proposition}
\label{prop:asp-disjrules-xnp}
Let~$d$ be a fixed positive integer.
The restriction of \ASPcons{} to programs containing at most~$d$
disjunctive rules is in \NP{}.
\end{proposition}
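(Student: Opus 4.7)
The plan is to exploit Lemma~\ref{lem:asp-disjrules-mu} to replace the co-NP minimality check (which is the source of the second level of the PH) by an enumeration that is polynomial-time when the number of disjunctive rules is a fixed constant~$d$. The overall strategy is a guess-and-check algorithm that guesses a candidate answer set nondeterministically and then verifies it in deterministic polynomial time.

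First, I would observe that the GL reduct~$P^M$ only removes rules and strips negative literals from bodies; in particular, the set of disjunctive rules of~$P^M$ is a subset of the disjunctive rules of~$P$, so if~$P$ has at most~$d$ disjunctive rules, so does~$P^M$. Hence $P^M$ is a negation-free program satisfying the hypothesis of Lemma~\ref{lem:asp-disjrules-mu}. The nondeterministic algorithm then proceeds as follows: (i) guess a set~$M \subseteq \Atoms{P}$; (ii) compute~$P^M$ in polynomial time; (iii) verify that~$M$ is a model of~$P^M$ in polynomial time; (iv) verify that~$M$ is a \emph{minimal} model of~$P^M$ using the lemma.

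For step~(iv), let~$D \subseteq P^M$ be the set of disjunctive rules of~$P^M$, with~$\Card{D} \leq d$. By Lemma~\ref{lem:asp-disjrules-mu}, every minimal model of~$P^M$ arises as~$M_\mu$ for some subset~$R \subseteq D$ and some mapping~$\mu : R \rightarrow \Atoms{P^M}$ assigning to each rule an atom in its head. The number of such pairs~$(R,\mu)$ is bounded by~$2^d \cdot \Card{\Atoms{P}}^d$, which is polynomial in the input size for fixed~$d$. For each such~$(R,\mu)$, the set~$M_\mu$ can be computed in polynomial time by iterating the closure condition (this is essentially computing the least model of a negation-free, disjunction-free program). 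The algorithm accepts iff~$M$ is a model of~$P^M$ and for no pair~$(R,\mu)$ does~$M_\mu \subsetneq M$ hold; by the lemma, the latter condition is equivalent to~$M$ being a minimal model of~$P^M$, i.e., to~$M$ being an answer set of~$P$.

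The main (minor) obstacle is justifying correctness of the minimality check: one direction is Lemma~\ref{lem:asp-disjrules-mu} itself (any strictly smaller model of~$P^M$, if it existed, could be refined to a minimal model strictly below~$M$ which would then appear as some~$M_{\mu'}$); the other direction is immediate since each~$M_\mu$ is itself a model of~$P^M$. Everything else is a routine complexity accounting: the outer guess is nondeterministic polynomial, and the verification loop has~$2^{O(d)} \cdot \Card{\Atoms{P}}^{O(d)}$ iterations, each doing polynomial work, giving an overall~\NP{} upper bound when~$d$ is a fixed constant.
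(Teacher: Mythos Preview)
Your overall strategy matches the paper's: guess~$M$, then use Lemma~\ref{lem:asp-disjrules-mu} to reduce the minimality check for~$P^M$ to a polynomial-size enumeration over pairs~$(R,\mu)$. However, your justification of step~(iv) contains a genuine error. You write that ``the other direction is immediate since each~$M_\mu$ is itself a model of~$P^M$,'' but this is false: $M_\mu$ is closed only under the \emph{normal} rules of~$P^M$, and nothing forces it to satisfy the disjunctive rules. Concretely, take~$P^M$ consisting of the single rule~$a \vee b \leftarrow$, and let~$M=\{a\}$, which is a minimal model. With~$R=\emptyset$ and~$\mu=\emptyset$ you get~$M_\mu=\emptyset \subsetneq M$, so your algorithm rejects~$M$ even though it is a valid answer set.

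The fix is straightforward: in step~(iv), reject~$M$ only when some~$M_\mu \subsetneq M$ is \emph{also a model of~$P^M$}. The forward direction of correctness is then exactly your argument via refining to a minimal model below~$M$; the backward direction is now immediate because any witnessing~$M_\mu$ is by construction a strictly smaller model. Checking whether~$M_\mu$ is a model of~$P^M$ is polynomial, so the overall bound is unaffected. With this correction your proof is essentially the paper's (the paper's sketch phrases the check as ``$M=M_\mu$ for some~$\mu$,'' which is equally terse and would also benefit from the same clarification, but the key idea---polynomially many candidates via the lemma---is identical).
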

\begin{proof}
We sketch a guess-and-check algorithm~$A$ that solves the problem.
Let~$P$ be a disjunctive logic program with at most~$d$ disjunctive rules.
The algorithm~$A$ guesses a subset~$M \subseteq \Atoms{P}$.
By Lemma~\ref{lem:asp-disjrules-mu} we know that it is possible,
given~$M$ as input,
to verify in polynomial time whether~$M$ is a minimal model of~$P^M$.
This is the case because there are at most~$O((\Card{\Atoms{P}}+1)^d)$ many combinations of a suitable set~$R$ and a suitable mapping~$\mu$,
and verifying for each such~$\mu$ whether~$M = M_{\mu}$ can be done
in polynomial time.
The algorithm~$A$ accepts if and only if~$M$ is a minimal model of~$P^M$,
and thus it accepts if and only if~$P$ has an answer set.
\end{proof}

\noindent Restricting the number of contingent rules to a fixed constant
reduces the complexity of the problem to \co\NP{}.

\begin{lemma}
\label{lem:asp-contrules-mu}
Let~$P$ be a disjunctive logic program,
and let~$M$ be an answer set of~$P$.
Then there exists a subset~$R \subseteq P$
of contingent rules
and a mapping~$\mu : R \rightarrow \Cont{P}$
such that for each~$r \in R$ it holds that~$\mu(r)$ occurs in the head of~$r$,
and~$M = \Comp{P} \cup \Rng{\mu}$.
\end{lemma}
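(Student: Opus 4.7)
The plan is to reduce the claim to an application of Lemma~\ref{lem:asp-disjrules-mu} to the negation-free disjunctive program $P^M$, after first showing that every compulsory atom must lie in $M$. I would establish $\Comp{P}\subseteq M$ by induction on the inductive definition of $\Comp{P}$: if $(w \leftarrow \aspnot w) \in P$ and $w \notin M$, then this rule survives in $P^M$ with an empty body and forces $w \in M$, a contradiction; and if $(b \leftarrow a_1,\dotsc,a_n) \in P$ is a negation-free rule with every $a_i \in \Comp{P} \subseteq M$ by the inductive hypothesis, then the same rule sits in $P^M$, its body is satisfied by $M$, and thus $b \in M$ because $M$ is a model of $P^M$.

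Next, since $P^M$ is negation-free and $M$ is a minimal model of $P^M$, Lemma~\ref{lem:asp-disjrules-mu} supplies a subset $R_0$ of the disjunctive rules of $P^M$ and a mapping $\mu_0 : R_0 \to \Atoms{P^M}$ with $\mu_0(r_0)$ in the head of $r_0$, such that $M = M_{\mu_0}$ is the least set containing $\Rng{\mu_0}$ and closed under the normal rules of $P^M$. The GL-reduct leaves heads intact and merely drops rules containing $\aspnot a$ for $a \in M$ or deletes $\aspnot a$ literals for $a \notin M$, so every rule of $P^M$ corresponds to a unique rule of $P$ with the same head. I would build $R$ and $\mu$ in two stages. \textbf{Stage 1:} for each $r_0 \in R_0$ with $\mu_0(r_0) \in \Cont{P}$, let $r_0' \in P$ be the corresponding disjunctive rule, include $r_0'$ in $R$, and set $\mu(r_0') = \mu_0(r_0)$; such an $r_0'$ is contingent because its head contains the contingent atom $\mu_0(r_0)$. \textbf{Stage 2:} for every contingent atom $a \in M$ not yet in $\Rng{\mu}$, the derivation witnessing $a \in M_{\mu_0}$ provides a normal rule $(a \leftarrow b_1,\dotsc,b_m) \in P^M$ whose body lies in $M$; the corresponding normal rule of $P$ (possibly containing some $\aspnot c_j$ literals that the reduct erased) has head $\{a\}$, is contingent because $a \in \Cont{P}$, and I add it to $R$ with $\mu(r) = a$.

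Finally, I would verify $M = \Comp{P} \cup \Rng{\mu}$. The inclusion $\supseteq$ follows from $\Comp{P} \subseteq M$ and from $\Rng{\mu} \subseteq M$ (each $\mu(r)$ lies in $M$ by construction). For $\subseteq$, any $a \in M$ is either compulsory (in which case $a \in \Comp{P}$) or contingent; a contingent $a$ lies in $M = M_{\mu_0}$, so either $a \in \Rng{\mu_0}$, in which case Stage~1 ensures $a \in \Rng{\mu}$, or $a$ is introduced via the normal-rule closure, in which case Stage~2 supplies a rule whose $\mu$-image is $a$. The main obstacle is ensuring $\mu$ is well defined on rules that may contain several contingent head atoms: this is avoided because Stage~1 rules inherit the single choice $\mu_0(r_0)$ already fixed by Lemma~\ref{lem:asp-disjrules-mu}, while Stage~2 rules are normal and thus have a unique head atom, so no rule is ever assigned to two distinct atoms.
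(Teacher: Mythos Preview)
Your proof is correct but proceeds along a genuinely different line from the paper's. You reduce the claim to Lemma~\ref{lem:asp-disjrules-mu} applied to the negation-free reduct $P^M$, and then lift the resulting pair $(R_0,\mu_0)$ back to $P$ via a two-stage repair that separately handles contingent atoms picked by $\mu_0$ and contingent atoms derived through the normal-rule closure. The paper instead gives a direct, self-contained construction: it iterates over every contingent rule $r$ of $P$, places $r$ into $R$ with $\mu(r)$ set to some contingent head atom of $r$ lying in $M$ precisely when $\NF{r}\in P^M$, the body of $\NF{r}$ is satisfied by $M$, and such a head atom exists; otherwise $r\notin R$. The equality $M=\Comp{P}\cup\Rng{\mu}$ is then established by a short minimality argument: if some contingent $d\in M$ escaped $\Rng{\mu}$, then $M\setminus\{d\}$ would still model $P^M$, contradicting that $M$ is a minimal model of the reduct. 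Your route is more modular in that it reuses the structural characterisation of minimal models of negation-free programs, whereas the paper's direct argument avoids the Stage~1/Stage~2 bookkeeping and treats all contingent rules uniformly without having to distinguish whether an atom arrives via a disjunctive or a normal rule.
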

\begin{proof}
We show that~$M = \Rng{\mu} \cup \Comp{P}$ for some subset~$R \subseteq P$
of contingent rules and some
mapping~$\mu : R \rightarrow \Cont{P}$ such that for each~$r \in R$
it holds that~$\mu(r)$ occurs in the head of~$r$.
To show that~$M$ can be represented in this fashion by suitable~$R$ and~$\mu$,
assume the contrary, i.e., that~$M$ is an answer set of~$P$
but for no subset~$R \subseteq P$ of contingent rules there exists
a suitable mapping~$\mu$ such that~$M = \Comp{P} \cup \Rng{\mu}$.
Since~$M$ is an answer set of~$P$, we know that for each contingent rule~$r$ of~$P$
it holds that either (i)~$\NF{r} \not\in P^{M}$,
or (ii)~$\NF{r} \in P^{M}$ and~$M$ satisfies the body of~$\NF{r}$,
or (iii)~$\NF{r} \in P^{M}$ and~$M$ does not satisfy the body of~$\NF{r}$.
Recall that~$\NF{r}$ is the rule~$r$ where all negative literals are removed.
We construct the subset~$R \subseteq P$ of contingent rules and the
mapping~$\mu : R \rightarrow \Cont{P}$ as follows.
For each~$r$, if case (i) or (iii) holds, we let~$r \not\in R$.
If case (ii) holds for rule~$r$, then since~$M$ is a model of~$P^{M}$,
we know that there exists some~$d \in M$ such that~$d$ occurs in the head of~$\NF{r}$.
We again distinguish two cases: either (ii.a) there is some~$d \in \Cont{P} \cap M$
such that~$d$ occurs in the head of~$\NF{r}$,
or (ii.b) this is not the case.
In case (ii.a), we let~$r \in R$ and we let~$\mu(r) = d$.
In case (ii.b), we let~$r \not\in R$.
Clearly,~$\Rng{\mu} \subseteq M$, so~$\Rng{\mu} \cup \Comp{P} \subseteq M$.
Now, to show that~$M \subseteq \Rng{\mu} \cup \Comp{P}$, assume the contrary,
i.e., assume that there exists some~$d \in (M \cap \Cont{P})$
such that~$d \not\in \Rng{\mu}$.
Then~$M \backslash \SBs d \SEs$ is a model of~$P^{M}$, and
therefore~$M$ is not a subset-minimal model of~$P^{M}$.
This is a contradiction with our assumption that~$M$ is an answer set of~$P$.
Therefore,~$M = \Rng{\mu} \cup \Comp{P}$,
which contradicts our assumption that no suitable~$R$ and~$\mu$ exist.
\end{proof}

\begin{proposition}
\label{prop:asp-contrules-xconp}
Let~$d$ be a fixed positive integer.
The restriction of \ASPcons{} to programs containing at most~$d$
contingent rules is in \co\NP{}.
\end{proposition}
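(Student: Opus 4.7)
The plan is to exploit Lemma~\ref{lem:asp-contrules-mu} to reduce the problem of deciding whether~$P$ has an answer set to checking only polynomially many candidate sets, and then to establish co-\NP{} membership by showing that the \emph{complement} is in \NP{}. By the lemma, any answer set~$M$ of~$P$ has the form $M_\mu = \Comp{P} \cup \Rng{\mu}$ for some subset~$R \subseteq P$ of contingent rules and some mapping~$\mu : R \rightarrow \Cont{P}$ sending each $r \in R$ to an atom in its head. Since $P$ has at most $d$ contingent rules, there are at most $2^d$ choices for $R$, and for each such $R$ at most $(\Card{\Atoms{P}})^d$ choices for $\mu$. So the total number of candidate sets $M_\mu$ is bounded by $O((\Card{\Atoms{P}})^d)$, which is polynomial in the input for fixed~$d$.

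Given this polynomial enumeration, I would observe that~$P$ has no answer set if and only if for every one of the polynomially many candidates~$\mu$, the set~$M_\mu$ fails to be an answer set of~$P$, i.e., either (a)~$M_\mu$ is not a model of the reduct~$P^{M_\mu}$, which is a polynomial-time check, or (b) some proper subset~$M' \subsetneq M_\mu$ is a model of $P^{M_\mu}$, which can be certified in \NP{} by exhibiting such an~$M'$. A nondeterministic polynomial-time algorithm for the complement therefore guesses, for each candidate~$\mu$, a witness of the corresponding failure (either the polynomial-time refutation that~$M_\mu$ models~$P^{M_\mu}$, or an explicit smaller model~$M'_\mu$) and verifies all of these witnesses simultaneously in polynomial time. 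This places the complement in \NP{}, and hence \ASPcons{} restricted to at most~$d$ contingent rules in \co\NP{}.

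The main obstacle is that the straightforward formulation ``there exists~$\mu$ such that~$M_\mu$ is a minimal model of $P^{M_\mu}$'' is of the form $\exists\,\forall$ and naively only yields \SigmaP{2}. The key to avoiding this is to dualize: the complement is a conjunction of polynomially many \NP{} statements, and a conjunction of polynomially many \NP{} statements is again in \NP{} because all certificates can be guessed and verified at once. Once one sees this switch, the remaining work is only to write down explicitly the polynomial bound on the candidates~$\mu$ from Lemma~\ref{lem:asp-contrules-mu} and to note that checking membership in a model and verifying that a given~$M'$ is a subset-minimal counterexample are both polynomial-time routines.
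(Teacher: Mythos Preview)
Your proposal is correct and follows essentially the same approach as the paper: both use Lemma~\ref{lem:asp-contrules-mu} to bound the number of candidate answer sets by a polynomial in~$\Card{P}$ (for fixed~$d$), and then place the complement in \NP{} by guessing, for each candidate~$M_\mu$, a proper subset~$M'_\mu \subsetneq M_\mu$ that is a model of~$P^{M_\mu}$. Your write-up is in fact slightly more careful than the paper's sketch in explicitly separating case~(a), where~$M_\mu$ is not even a model of~$P^{M_\mu}$ (a polynomial-time check), from case~(b), where a smaller model must be guessed; the paper's sketch tacitly folds this into the nondeterministic guess.
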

\begin{proof}
We sketch a guess-and-check algorithm~$A$ that decides whether programs~$P$
containing at most~$d$ contingent rules have no answer set.
Let~$P$ be an arbitrary program.
Then any answer set can be represented by means of a suitable
subset~$R \subseteq P$ and a suitable mapping~$\mu$,
as described in Lemma~\ref{lem:asp-contrules-mu}.
Let~$M_{R,\mu}$ denote the possible answer set corresponding to~$R$ and~$\mu$.
There are at most~$O((\Card{P}+1)^d)$ many candidate sets~$M_{R,\mu}$.
The algorithm~$A$ guesses a subset~$M'_{R,\mu} \subsetneq M_{R,\mu}$
for each such~$R$ and~$\mu$.
Then, it verifies whether for all~$M'_{R,\mu}$ it holds that~$M'_{R,\mu}$
is a model of~$P^{M_{R,\mu}}$, and
it accepts if and only if this is the case.
Therefore, it accepts if and only if~$P$ has no answer set.
\end{proof}
}

Observe that the algorithms given in the proofs of
Propositions~\ref{prop:asp-disjrules-xnp}
and~\ref{prop:asp-contrules-xconp} are the same for
each positive value~$d$ of the parameter
(only the running times differ for different parameter values).
Therefore, we get the following membership results
in \XNP{} and \XcoNP{}.

\begin{corollary}
\ASPcons\DisjRules{} is in \XNP{}.
\end{corollary}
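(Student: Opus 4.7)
The plan is to observe that the algorithm described in the proof of Proposition~\ref{prop:asp-disjrules-xnp} is actually a single uniform nondeterministic algorithm that works simultaneously for all values of the parameter~$d$, rather than a different algorithm for each slice. To establish membership in~$\XNP{}$, I need to exhibit a nondeterministic algorithm deciding~$\ASPcons\DisjRules{}$ in time~$\Card{P}^{f(d)}$ for some computable function~$f$, where~$d$ is the number of disjunctive rules of the input program~$P$.

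First I would restate the algorithm from Proposition~\ref{prop:asp-disjrules-xnp}: on input~$(P,d)$, guess a subset~$M \subseteq \Atoms{P}$ in nondeterministic polynomial time, then verify deterministically that~$M$ is a minimal model of~$P^M$. The verification uses Lemma~\ref{lem:asp-disjrules-mu} by enumerating all pairs~$(R,\mu)$ with~$R$ a subset of the disjunctive rules of~$P$ and~$\mu : R \to \Atoms{P}$ choosing one head atom per rule, then checking whether~$M = M_\mu$. Since~$P$ has at most~$d$ disjunctive rules, the number of such pairs is bounded by~$O((\Card{\Atoms{P}}+1)^d)$, and for each pair the computation of~$M_\mu$ and comparison with~$M$ takes polynomial time in~$\Card{P}$.

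Putting this together, the overall running time is bounded by~$\Card{P}^{f(d)}$ for a suitable computable function~$f$, and the algorithm is uniform in~$d$ (only the running time depends on~$d$, not the code of the algorithm). This is exactly the definition of~$\XNP{}$ membership.

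The argument is essentially a one-line corollary from the preceding proposition; the only thing to check is the uniformity of the algorithm in~$d$, which is immediate because the enumeration bound~$(\Card{\Atoms{P}}+1)^d$ is computed from the input. I do not anticipate any obstacle beyond clearly pointing out this uniformity, since the hard work (correctness via Lemma~\ref{lem:asp-disjrules-mu}) has already been done.
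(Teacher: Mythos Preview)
Your proposal is correct and matches the paper's own justification essentially verbatim: the paper simply observes that the nondeterministic algorithm from Proposition~\ref{prop:asp-disjrules-xnp} is uniform in the parameter~$d$ (only the running time depends on~$d$), which immediately yields membership in~$\XNP{}$. Your restatement of the algorithm and the bound on the number of~$(R,\mu)$ pairs is exactly what is needed, and there is nothing further to add.
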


\begin{corollary}
\ASPcons\ContRules{} is in \XcoNP{}.
\end{corollary}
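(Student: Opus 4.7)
The plan is to observe that the corollary is an immediate consequence of Proposition~\ref{prop:asp-contrules-xconp} combined with the preceding remark that the algorithm in its proof is uniform in the parameter. Concretely, recall from Section~\ref{sec:pct} that a parameterized problem lies in \XcoNP{} if there is a single co-nondeterministic algorithm that decides, on input~$(P,d)$, whether~$(P,d)$ is a yes-instance in time~$\Card{P}^{f(d)}$ for some computable function~$f$. So all we need to do is check that the algorithm described in the proof of Proposition~\ref{prop:asp-contrules-xconp} meets this uniformity and running-time requirement when~$d$ is treated as part of the input rather than as a constant.

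First, I would re-examine the algorithm~$A$ from Proposition~\ref{prop:asp-contrules-xconp}. Its description does not depend on~$d$ at all: given any program~$P$ with at most~$d$ contingent rules, it enumerates the candidate answer sets~$M_{R,\mu}$ arising from subsets~$R$ of the contingent rules together with mappings~$\mu : R \rightarrow \Cont{P}$ selecting a head atom, guesses, for each such candidate, a proper subset~$M'_{R,\mu} \subsetneq M_{R,\mu}$, and verifies in polynomial time that every~$M'_{R,\mu}$ is a model of~$P^{M_{R,\mu}}$. The correctness argument from Lemma~\ref{lem:asp-contrules-mu} applies uniformly for all~$d$.

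Next, I would bound the running time as a function of~$\Card{P}$ and~$d$. The number of candidate pairs~$(R,\mu)$ is bounded by~$O((\Card{P}+1)^d)$, and for each such candidate the verification step takes polynomial time in~$\Card{P}$. Hence the total co-nondeterministic running time is bounded by~$\Card{P}^{f(d)}$ for some computable function~$f$ (for instance, $f(d) = d + c$ for a suitable constant~$c$ absorbing the polynomial verification overhead). This is exactly the shape required by the definition of \XcoNP{}.

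Since the algorithm is a single procedure that does not depend on~$d$ and meets the $\Card{P}^{f(d)}$ time bound, it witnesses \ASPcons\ContRules{} $\in$ \XcoNP{}. There is no real obstacle here beyond carefully separating the role of~$d$ as a constant (as in Proposition~\ref{prop:asp-contrules-xconp}) from its role as a parameter (as in the definition of \XcoNP{}); the mildly delicate point is simply to note explicitly that the same algorithm serves for all slices, so that the uniformity condition of \XcoNP{} is satisfied rather than just its non-uniform counterpart~$\XcoNP^{\mtext{nu}}$.
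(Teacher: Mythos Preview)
Your proposal is correct and follows essentially the same approach as the paper: the paper simply observes that the algorithm from Proposition~\ref{prop:asp-contrules-xconp} is the same for every value of~$d$ (only the running time depends on~$d$), and concludes membership in \XcoNP{} from this uniformity. Your write-up makes the running-time bound and the uniformity check more explicit, but the argument is identical in substance.
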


Moreover, unless~$\NP{} = \co{}\NP{}$, the class \para{}\SigmaP{2}
is neither contained in \XNP{} as a subset,
nor in \XcoNP{} \cite[Proposition~8]{FlumGrohe03}.
Therefore, it is very unlikely that the problems \ASPcons\DisjRules{}
and \ASPcons\ContRules{} are \para{}\SigmaP{2}-complete.
Therefore, if we want to give evidence that these problems
are not contained in \para{}\NP{} or \para{}\co{}\NP{},
we will need a new set of tools.
The hardness theory that we define in the following section
will be this new toolbox.
}

\krlongversion{
\section{The  Hierarchies \stark{} and \kstar{}}
\label{sec:define-hierarchies}



We are going to define two hierarchies of parameterized complexity classes
that will act as intractability classes in our hardness theory.
All classes will be based on weighted variants
of the satisfiability problem \QSat{2}.
An instance of the  problem \QSat{2} has both an existential quantifier
and a universal quantifier block.
Therefore, there are several ways of restricting the weight of assignments.
Restricting the weight of assignments to the existential quantifier block
will result in the \kstar{} hierarchy, and restricting the weight of
assignments to the universal quantifier block
will result in the \stark{} hierarchy.
The two hierarchies are based on the following two parameterized decision problems.
Let~$\CCC$ be a class of Boolean circuits.
The problem~$\EkAWSat(\CCC)$ provides the foundation for the \kstar{} 
hierarchy.
\probdef{
  $\EkAWSat(\CCC)$
  
  \emph{Instance:} A Boolean circuit~$C \in \CCC$
  over two disjoint sets~$X$ and~$Y$ of variables,
  and an integer~$k$.

  \emph{Parameter:} $k$.

  \emph{Question:} Does there exist a truth assignment~$\alpha$
  to~$X$ with weight~$k$
  such that for all truth assignments~$\beta$ to~$Y$
  the assignment~$\alpha \cup \beta$ satisfies~$C$?
}

\noindent Similarly, the problem~$\EAkWSat(\CCC)$
provides the foundation for the \stark{} hierarchy.
\probdef{
  $\EAkWSat(\CCC)$
  
  \emph{Instance:} A Boolean circuit~$C \in \CCC$
  over two disjoint sets~$X$ and~$Y$ of variables,
  and an integer~$k$.

  \emph{Parameter:} $k$.

  \emph{Question:} Does there exist a truth assignment~$\alpha$
  to~$X$
  such that for all truth assignments~$\beta$ to~$Y$
  with weight~$k$
  the assignment~$\alpha \cup \beta$ satisfies~$C$?
}

For the sake of convenience,
instances to these two problems consisting of a circuit~$C$
over sets~$X$ and~$Y$ of variables and an integer~$k$,
we will denote by~$(\exists X. \forall Y. C, k)$.
\longversion{We now define the following parameterized complexity classes,
that together form the \kstar{} hierarchy:
\begin{eqnarray*}
  \EkAW{t} =& \fptclosure{ \SB \EkAWSat(\Gamma_{t,u}) \SM u \geq 1 \SE }, \\
  \EkAW{SAT} =& \fptclosure{ \EkAWSat(\Phi) }, \mtext{ and} \\
  \EkAW{P} =& \fptclosure{ \EkAWSat(\Gamma) }.
\end{eqnarray*}
Similarly, we define the classes of the \stark{} hierarchy as follows:
\begin{eqnarray*}
  \EAkW{t} =& \fptclosure{ \SB \EAkWSat(\Gamma_{t,u}) \SM u \geq 1 \SE }, \\
  \EAkW{SAT} =& \fptclosure{ \EAkWSat(\Phi) }, \mtext{ and} \\
  \EAkW{P} =& \fptclosure{ \EAkWSat(\Gamma) }.
\end{eqnarray*}
Note that these definitions are entirely analogous to those of the parameterized
complexity classes of the W-hierarchy~\cite{DowneyFellows99}.
}
\krversion{We now define the following parameterized complexity classes,
that together form the \kstar{} hierarchy.
We let~$\EkAW{t} = \fptclosure{ \SB \EkAWSat(\Gamma_{t,u}) \SM u \geq 1 \SE }$,
we let~$\EkAW{SAT} = \fptclosure{ \EkAWSat(\Phi) }$, and
we let~$\EkAW{P} = \fptclosure{ \EkAWSat(\Gamma) }$.

We define the classes of the \stark{} hierarchy similarly.
We let~$\EAkW{t} = \fptclosure{ \SB \EAkWSat(\Gamma_{t,u}) \SM u \geq 1 \SE }$,
we let~$\EAkW{SAT} = \fptclosure{ \EAkWSat(\Phi) }$, and
we let~$\EAkW{P} = \fptclosure{ \EAkWSat(\Gamma) }$.
Note that these definitions are analogous to those of the parameterized
complexity classes of the W-hierarchy~\cite{DowneyFellows99}.
}

\longversion{Dual to the classical complexity class \SigmaP{2} is its co-class
\PiP{2}, whose canonical complete problem is complementary to the
problem \QSat{2}. Similarly, we}\krversion{We} can define dual classes for each of
the parameterized complexity classes in the \kstar{} and \stark{}
hierarchies.  These co-classes are based on problems complementary to
the problems \EkAWSat{} and \EAkWSat{}, i.e., these problems have as
yes-instances exactly the no-instances of \EkAWSat{} and \EAkWSat{},
respectively.  Equivalently, these complementary problems can be
considered as variants of \EkAWSat{} and \EAkWSat{} where the
existential and universal quantifiers are swapped, and are therefore
denoted with \AkEWSat{} and \AEkWSat{}.
We use a similar notation for the dual complexity classes,
e.g., we denote \co{\EAkW{t}} by \AEkW{t}.

\longversion{
\paragraph{More hierarchies}
Similarly to the definition of the complexity classes
of the \kstar{} and \stark{} hierarchies,
the problem \QSat{i} for any~$i$
and any weight restriction on the quantifier blocks
can be used to define hierarchies of parameterized complexity classes.
One example we would like to point out is the hierarchy
of classes \EkAkW{t},
based on a variant of \QSat{2} where both variable blocks
have a restriction on the weight of assignments.
The complexity class~$\EkAkW{SAT}$
has been defined and considered by Gottlob, Scarcello
and Sideri~\cite{GottlobScarcelloSideri02}
under the name $\Sigma_{2}\W{\SAT}$.
}
}

\krlongversion{
\section{The Class \EkA{}}
\label{sec:eka}

In this section, we consider the \kstar{} hierarchy.
It turns out that this hierarchy collapses entirely
into a single parameterized complexity class.
This class we will denote by \EkA{}.
As we will see, the class \EkA{} turns out to be quite robust.
We start this section with showing that
that the \kstar{} hierarchy collapses.
We discuss how this class is related to existing parameterized
complexity classes,
and we show how it can be used to show the intractability
of a variant of the answer set existence problem
whose complexity the existing theory cannot classify properly.

\subsection{Collapse of the \kstar{} Hierarchy}
\label{sec:kstarcollapse}

\begin{theorem}[Collapse of the \kstar{} hierarchy]
\label{thm:kstar-collapse}
$\EkAW{1} = \EkAW{2} = \dotsc = \EkAW{SAT} = \EkAW{P}$.
\end{theorem}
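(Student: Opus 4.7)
The plan is to prove the nontrivial inclusion $\EkAW{P} \subseteq \EkAW{1}$, since the reverse chain $\EkAW{1} \subseteq \EkAW{2} \subseteq \dotsm \subseteq \EkAW{SAT} \subseteq \EkAW{P}$ follows immediately from the syntactic containments $\Gamma_{1,u} \subseteq \Gamma_{2,u} \subseteq \dotsm \subseteq \Phi \subseteq \Gamma$ together with closure under fpt-reductions. The key conceptual point is that, in contrast to the ordinary $\W{t}$-hierarchy, the unrestricted universal quantifier block $\forall^{*}$ is powerful enough to absorb arbitrary polynomial-time gate evaluations via Tseitin-style auxiliary variables, so adding weft to the matrix buys no extra expressive power.

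Given an instance $(\exists X.\forall Y.C,k)$ of $\EkAWSat(\Gamma)$, I would first rewrite $C$ as an equivalent fan-in-$2$ circuit $\widetilde{C}$ over $\neg,\wedge,\vee$, incurring only a polynomial blow-up. Let $g_1,\dotsc,g_m$ enumerate its internal gates. Introduce a fresh variable $z_i$ per gate and set $Y' = Y \cup \SBs z_1,\dotsc,z_m \SEs$. For each $g_i$ with inputs $u_i,v_i$ (variables of $X\cup Y$ or other $z_j$'s), let $\psi_i$ be the constant-size CNF expressing ``$z_i \leftrightarrow g_i(u_i,v_i)$''; then $\neg\psi_i$ is a constant-size DNF. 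Define
\[
C' \;=\; \bigvee_{i=1}^{m}\neg\psi_i \;\vee\; z_{\mathrm{out}},
\]
which, after flattening, is a DNF each of whose terms is a conjunction of a constant number of literals. Since the only large gate on any variable-to-output path is the top $\vee$, the formula $C'$ lies in $\Gamma_{1,u}$ for some absolute constant $u$. The new instance $(\exists X.\forall Y'.C',k)$ has the same parameter~$k$, and the construction is clearly fpt-time.

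For correctness I would argue: for any weight-$k$ assignment $\alpha$ to $X$ and any assignment $\beta$ to $Y$, there is a \emph{unique} assignment $\zeta$ to $Z$ satisfying all $\psi_i$, namely the one recording the gate values of $\widetilde{C}$ on $\alpha\cup\beta$; under this $\zeta$ the variable $z_{\mathrm{out}}$ takes the value $C[\alpha\cup\beta]$. Therefore $C'[\alpha\cup\beta\cup\zeta]$ fails iff every $\psi_i$ holds and $z_{\mathrm{out}}=0$, iff $C[\alpha\cup\beta]=0$. Hence $\forall Y'.C'[\alpha]$ holds iff $\forall Y.C[\alpha]$ holds, which gives the required fpt-reduction.

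The main obstacle is the weft bookkeeping: a naive Tseitin encoding applied to $C$ directly would produce, for every large $\wedge$- or $\vee$-gate of $C$, a large term inside some $\neg\psi_i$, so that paths through such a term traverse \emph{two} large gates (the term's $\wedge$ and the top $\vee$), yielding weft~$2$ rather than weft~$1$. The preliminary reduction of $C$ to fan-in~$2$ is precisely what keeps every $\psi_i$ of constant size and thus every term of $C'$ small, confining all weft to a single top-level disjunction.
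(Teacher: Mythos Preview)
Your proposal is correct and is essentially the paper's own proof: the paper also first normalizes the circuit to bounded fan-in (binary $\wedge$ and $\neg$), introduces a Tseitin variable $z_r$ for every node, places these in the universal block, and outputs the $3$DNF $\bigvee_{r}\chi_r \vee z_o$ where each $\chi_r$ is the constant-size DNF expressing the violation of the gate-consistency constraint at node~$r$. Your observation that the fan-in reduction is what keeps the weft at~$1$ is exactly the point, and your correctness argument matches the paper's.
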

\begin{proof}
Since by definition $\EkAW{1} \subseteq \EkAW{2} \subseteq \dotsc \subseteq \EkAW{P}$,
it suffices to show that $\EkAW{P} \subseteq \EkAW{1}$.
We show this by giving an fpt-reduction from $\EkAWSat(\Gamma)$
to $\EkAWSat(3\DNF)$.
Since $3\DNF \subseteq \Gamma_{1,3}$, this suffices.
We remark that this reduction is based on the standard Tseitin transformation
that transforms arbitrary Boolean formulas into $3\CNF$ by means
of additional variables.

Let~$(\varphi,k)$ be an instance of~$\EkAWSat(\Gamma)$
with~$\varphi = \exists X. \forall Y. C$.
Assume without loss of generality that~$C$ contains
only binary conjunctions and negations.
Let~$o$ denote the output gate of~$C$.
We construct an instance~$(\varphi',k)$ of~$\EkAWSat(3\DNF)$ as follows.
The formula~$\varphi'$ will be over the set of
variables~$X \cup Y \cup Z$,
where~$Z = \SB z_{r} \SM r \in \Nodes{C} \SE$.
For each~$r \in \Nodes{C}$, we define a subformula~$\chi_{r}$.
We distinguish three cases.
If~$r = r_1 \wedge r_2$, then we
let~$\chi_{r} = (z_{r} \wedge \neg z_{r_1}) \vee
(z_{r} \wedge \neg z_{r_2}) \vee
(z_{r_1} \wedge z_{r_2} \wedge \neg z_{r})$.
If~$r = \neg r_1$, then we
let~$\chi_{r} = (z_{r} \wedge z_{r_1}) \vee
(\neg z_{r} \wedge \neg z_{r_1})$.
If~$r = w$, for some~$w \in X \cup Y$,
then we let~$\chi_{r} = (z_{r} \wedge \neg w)
\vee (\neg z_{r} \wedge w)$.
Now we define~$\varphi' = \exists X. \forall Y \cup Z. \psi$,
where~$\psi = \bigvee_{r \in \Nodes{C}} \chi_{r} \vee z_{o}$.
\krversion{It is straightforward to verify that this reduction is correct.
}\longversion{We prove the correctness of this reduction.

$(\Rightarrow)$
Assume that~$(\varphi,k) \in \EkAWSat(\Gamma)$.
This means that there exists an
assignment~$\alpha : X \rightarrow \SBs 0,1 \SEs$
of weight~$k$ such that~$\forall Y. C[\alpha]$ evaluates to true.
We show that~$(\varphi',k) \in \EkAWSat(3\DNF)$,
by showing that~$\forall Y \cup Z. \psi[\alpha]$ evaluates to true.
Let~$\beta : Y \cup Z \rightarrow \SBs 0,1 \SEs$ be an arbitrary assignment
to the variables~$Y \cup Z$,
and let~$\beta'$ be the restriction of~$\beta$ to the variables~$Y$.
We distinguish two cases: either (i) for each~$r \in \Nodes{C}$
it holds that~$\beta(z_{r})$ coincides with the value that gate~$r$ gets
in the circuit~$C$ given assignment~$\alpha \cup \beta'$,
or (ii) this is not the case.
In case (i), by the fact that~$\alpha \cup \beta'$ satisfies~$C$,
we know that~$\beta(z_{o}) = 1$, and therefore~$\alpha \cup \beta$
satisfies~$\psi$.
In case (ii), we know that for some gate~$r \in \Nodes{C}$,
the value of~$\beta(z_{r})$ does not coincide with the value assigned to~$r$
in~$C$ given the assignment~$\alpha \cup \beta'$.
We may assume without loss of generality that for all parent nodes~$r'$ of~$r$
it holds that~$\beta(z_{r'})$ coincides with the value
assigned to~$r'$ by~$\alpha \cup \beta'$.
In this case, there is some term of~$\chi_{r}$ that is satisfied
by~$\alpha \cup \beta$.
From this we can conclude that~$(\varphi',k) \in \EkAWSat(3\DNF)$.

$(\Leftarrow)$
Assume that~$(\varphi',k) \in \EkAWSat(3\DNF)$.
This means that there exists some
assignment~$\alpha : X \rightarrow \SBs 0,1 \SEs$
of weight~$k$ such that~$\forall Y \cup Z. \psi[\alpha]$ evaluates to true.
We now show that~$\forall Y. C[\alpha]$ evaluates to true as well.
Let~$\beta' : Y \rightarrow \SBs 0,1 \SEs$ be an arbitrary assignment to the variables~$Y$.
Define~$\beta'' : Z \rightarrow \SBs 0,1 \SEs$ as follows.
For any~$r \in \Nodes{C}$, we let~$\beta''(r)$ be the value assigned
to the node~$r$ in the circuit~$C$ by the assignment~$\alpha \cup \beta'$.
We then let~$\beta = \beta' \cup \beta''$.
Now, since~$\forall Y \cup Z. \psi[\alpha]$ evaluates to true,
we know that~$\alpha \cup \beta$ satisfies~$\psi$.
By construction of~$\beta$, we know that~$\alpha \cup \beta$ does not satisfy
the term~$\chi_{r}$ for any~$r \in \Nodes{C}$.
Therefore, we know that~$\beta(z_{o}) = 1$.
By construction of~$\beta$, this implies that~$\alpha \cup \beta'$ satisfies~$C$.
Since~$\beta'$ was arbitrary, we can conclude that~$\forall Y. C[\alpha]$
evaluates to true, and therefore that~$(\varphi,k) \in \EkAWSat(\Gamma)$.
}
\end{proof}

\noindent As mentioned above, in order to simplify notation, we will use \EkA{}
to denote the class $\EkAW{1} = \dotsc = \EkAW{P}$.
Also, we will denote $\EkAWSat(\Gamma)$ by \EkAWSat{}.
\krversion{We make some observations about the relation of \EkA{}
to existing parameterized complexity classes.
It is straightforward to see that~$\EkA{} \subseteq \para{\SigmaP{2}}$.
In polynomial time, any formula~$\exists X. \forall Y. \psi$ can be transformed into
a \SigmaP{2}-formula that is true if and only if for some assignment~$\alpha$ of weight~$k$
to the variables~$X$ the formula~$\forall Y. \psi[\alpha]$ is true.
Trivially,~$\para{\co{\NP}} \subseteq \EkA{}$.
To summarize, we obtain the following inclusions:
\krversion{$\para{\co\NP} \subseteq \EkA \subseteq \para{}\SigmaP{2}$, and
$\para{\NP} \subseteq \AkE \subseteq \para{}\PiP{2}$.}
\longversion{\[ \para{\co\NP} \subseteq \EkA \subseteq \para{}\SigmaP{2}, \quad
\para{\NP} \subseteq \AkE \subseteq \para{}\PiP{2}. \]}
This immediately leads to the following result.
\begin{proposition}
If~$\EkA \subseteq \para{\NP}$, then~$\NP = \co\NP$.
\end{proposition}
\noindent A natural question to ask is whether~$\para{\NP} \subseteq \EkA$.
The following result indicates that this is unlikely.
\begin{proposition}
\label{prop:1}
If~$\para{\NP} \subseteq \EkA$,
then~$\NP = \co\NP$.
\end{proposition}
\krversion{\begin{proof}[Proof (sketch).]
Let~$\thSAT{}$ be the language of satisfiable propositional
formulas, and~$\thUNSAT{}$ the language of unsatisfiable
propositional formulas.
The parameterized problem~$P = \SB (\varphi,1) \SM \varphi \in \thSAT \SE$
is in \para{\NP}.
Since the parameter value is constant for all instances of~$P$,
an fpt-reduction from~$P$ to \EkAWSat{} can be transformed
into an polynomial time reduction from \thSAT{}
to \thUNSAT{}.
\end{proof}}
\longversion{\begin{proof}
Assume that~$\para{\NP} \subseteq \EkA$.
Note that the language \thSAT{} of satisfiable propositional formulas
is \NP{}-complete, and the language \thUNSAT{} of unsatisfiable propositional
formulas is \co\NP{}-complete.
The parameterized problem~$P = \SB (\varphi,1) \SM \varphi \in \thSAT \SE$
is in \para{\NP}.
Then also~$P \in \EkA$.
This means that there is an fpt-reduction~$R$ from~$P$ to \EkAWSat{}.
We construct a polynomial-time reduction~$S$ from \thSAT{} to \thUNSAT{}.
Let~$\varphi$ be an instance of \thSAT{}.
The reduction~$R$ maps~$(\varphi,1)$ to an instance~$(\varphi',f(1))$ of \EkAWSat{},
where~$f$ is some computable function and~$\varphi' = \exists X. \forall Y. \psi$,
such that~$(\varphi',f(1)) \in \EkAWSat{}$ if and only if~$\varphi \in \thSAT$.
Note that~$f(1)$ is a constant, since~$f$ is a fixed function.
To emphasize this, we let~$c = f(1)$, and we will use~$c$ to denote~$f(1)$.
By definition, we know that~$(\varphi',c) \in \EkAWSat{}$ if and only if
for some truth assignment~$\alpha$ to the variables~$X$ of weight~$c$,
the formula~$\forall Y. \psi[\alpha]$ is true.
Let~$\mtext{ta}(X,c)$ denote the set of all truth assignments to~$X$
of weight~$c$.
We then get that~$\varphi \in \thSAT$ if and only if
the formula~$\forall Y. \chi$ is true,
\krversion{where~$\chi = \bigvee_{\alpha \in \mtext{ta}(X,c)} 
\psi[\alpha]$.}\longversion{where~$\chi$ is defined as follows:
\[ \chi = \underset{\alpha \in \mtext{ta}(X,c)}{\bigvee} \psi[\alpha] \]}
It is straightforward to verify that the mapping
$\varphi \mapsto \neg \chi$ is a polynomial-time reduction
from \thSAT{} to \thUNSAT{}.
\end{proof}}

\noindent This implies that~$\EkA$ is very likely to be a strict subset
of~$\para{\SigmaP{2}}$.
\begin{corollary}
If~$\EkA = \para{\SigmaP{2}}$, then~$\NP = \co\NP$.
\end{corollary}

\noindent The following result shows
another way in which the class \EkA{} relates
to the existing complexity class \co\NP{}.
Let~$P$ be a parameterized decision problem, and let~$c \geq 1$ be an integer.
Recall that the \emph{$c$-th slice of~$P$}, denoted~$P_{c}$, is the (unparameterized)
decision problem~$\SB x \SM (x,c) \in P \SE$.

\begin{proposition}
Let~$P$ be a parameterized problem complete for \EkA{},
and let~$c \geq 1$ be an integer.
Then~$P_{c}$ is in \co\NP{}.
Moreover,
there exists some integer~$d \geq 1$
such that~$P_{1} \cup \dotsm \cup P_{d}$ is \co\NP-complete.
\end{proposition}
\krversion{A proof of this statement can be found in the
technical report.}\longversion{\begin{proof}
We show \co\NP-membership of~$P_{c}$, by
constructing a polynomial-time reduction~$S$ from~$P_c$ to \thUNSAT{}.
Since~$P \in \EkA{}$, we know that there exists an fpt-reduction~$R$ from~$P$
to~$\EkAWSat(\Phi)$.
Therefore, there exist computable functions~$f$ and~$g$ and a polynomial~$p$
such that for all instances~$(x,k)$ of~$P$,~$R(x,k) = (x',k')$
is computable in time~$f(k) \cdot p(\Card{x})$
and~$k' \leq g(k)$.
We describe the reduction~$S$.
Let~$x$ be an arbitrary instance of~$P_c$.
We know~$R$ maps~$(x,c)$ to~$(\varphi,k')$,
for some~$k' \leq g(c)$,
where~$\varphi = \exists X. \forall Y. \psi$.
Note that~$k'$ is bounded by a constant~$g(c) = d$.
Let~$\mtext{ta}(X,k')$ denote the set of all truth assignment to~$X$
of weight~$k'$.
We then get that~$\varphi$ is equivalent to the
formula~$\forall Y. \chi$,
where~$\chi = \bigvee_{\alpha \in \mtext{ta}(X,k')} \psi[\alpha]$.
Also, the size of~$\chi$ is polynomial in the size of~$\psi$.
We now let~$S(x) = \neg\chi$.
It is straightforward to verify that~$S$ is a correct polynomial-time
reduction from~$P_{c}$ to \thUNSAT{}.

We show that there exists a function~$f$
such that for any positive integer~$s \geq 1$,
there is a polynomial-time reduction from \thUNSAT{}
to~$P_{1} \cup \dotsm \cup P_{f(s)}$.
Then, in particular,~$P_{1} \cup \dotsm \cup P_{f(1)}$
is \co\NP{}-complete.
Let~$s \geq 1$ be an arbitrary integer. We construct the reduction~$S$.
There is a trivial polynomial-time reduction~$S$ from \thUNSAT{}
to~$(\EkAWSat)_{s}$, that maps a Boolean formula~$\varphi$
over a set of variables~$Y$ to the instance~$(\exists \SBs x_1,\dotsc,x_s \SEs.
\forall Y. \neg\chi,s)$ of \EkAWSat{}.
Since~$P$ is \EkA{}-complete,
there exists an fpt-reduction~$R$ from~$\EkAWSat$
to~$P$.
From this, we know that there exists a nondecreasing and unbounded
function~$f$ such that
for each instance~$(x,k)$ of~$\EkAWSat$ it holds that~$k' \leq f(k)$,
where~$R(x,k) = (x',k')$.
This reduction~$R$ is a polynomial-time reduction from~$(\EkAWSat)_{s}$
to~$P_{1} \cup \dotsm \cup P_{f(s)}$.
Composing the polynomial-time reductions~$S$ and~$R$,
we obtain a reduction from \thUNSAT{} to~$P_{1} \cup \dotsm \cup P_{f(s)}$.
\end{proof}}}

\label{sec:kstarasp1}
\subsection{Answer Set Programming and Completeness for the \kstar{} Hierarchy}
Now that we defined this new intractability class \EkA{}
and that we have some basic results about it in place,
we are able to prove the intractability
of a variant of our running example problem.
In fact, we show that one variant of our running example
is complete for the class \EkA{}.

\begin{theorem}
\label{thm:asp-contrules-completeness}
\ASPcons\ContRules{} is \EkA{}-complete.
\end{theorem}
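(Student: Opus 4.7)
The plan is to establish both directions: an fpt-reduction from \ASPcons\ContRules{} to \EkAWSat{} (membership) and an fpt-reduction from \EkAWSat{} to \ASPcons\ContRules{} (hardness).

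For the membership direction, I will rely on the structural characterization in Lemma~\ref{lem:asp-contrules-mu}, stating that every answer set~$M$ of a program~$P$ with~$k$ contingent rules has the form $M = \Comp{P} \cup \Rng{\mu}$, for some partial function~$\mu$ from the contingent rules of~$P$ to atoms in their heads (so $\Card{\Rng{\mu}} \leq k$). Consequently ``$P$ has an answer set'' becomes ``$\exists \mu.\; M_\mu$ is a minimal model of~$P^{M_\mu}$''. I encode the choice of~$\mu$ as a weight-$k$ assignment to fresh variables~$X$ indexed by (contingent rule, head atom) pairs, padding with dummy atoms and appealing to Lemma~\ref{lem:exactly-k} so the weight is exactly~$k$; the minimality check ``no~$M' \subsetneq M_\mu$ is a model of~$P^{M_\mu}$'' becomes a $\forall Y$-quantifier where~$Y$ is the characteristic vector of~$M'$. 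The matrix is a polynomial-size Boolean formula asserting that~$X$ encodes a valid~$\mu$ and that either~$Y$ fails to encode a proper subset of~$M_\mu$ or~$Y$ violates some rule of~$P^{M_\mu}$; each rule's presence in the reduct is a simple condition on~$M_\mu$ expressible in~$X$.

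For the hardness direction I will reduce from~$\EkAWSat(\DNF)$, which equals~\EkA{} by Theorem~\ref{thm:kstar-collapse}. Given an instance $(\exists X. \forall Y. \psi, k)$ with $\psi = t_1 \vee \dotsm \vee t_m$ in DNF, I will build a program~$P$ whose only contingent rules are the~$k$ disjunctive rules $\bigvee_{x \in X} p_{i,x} \leftarrow$, for $i = 1,\dotsc,k$, implementing the weight-$k$ choice on~$X$. All remaining rules are arranged to be non-contingent by anchoring compulsoriness at a single atom~$\mathit{sat}$ via the self-loop $\mathit{sat} \leftarrow \aspnot \mathit{sat}$ (clause~(i) of~\Comp{P}) and propagating to auxiliary atoms~$y,\bar{y}$ for each $y \in Y$ via $y \leftarrow \mathit{sat}$ and $\bar{y} \leftarrow \mathit{sat}$. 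This makes the saturation choice rules $y \vee \bar{y} \leftarrow$, the ``dummy'' rules $\mathit{sat} \leftarrow y,\bar{y}$ (one per~$y$), and the term rules $\mathit{sat} \leftarrow \tilde{l}_1, \tilde{l}_2, \tilde{l}_3$ all non-contingent (where $\tilde{l}$ encodes~$l$ via $y,\bar{y}$ for $Y$-literals and via $p_{i,x}$ or $\aspnot p_{i,x}$ for $X$-literals, with $k$-fold copies of each term rule to range over the possible indices~$i$). Correctness is then the standard Eiter--Gottlob saturation argument adapted to this setup.

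The main technical obstacle is bypassing the~$\Card{Y}$ contingent rules of the form $y \vee \bar{y} \leftarrow$ that the classical saturation reduction incurs. The crucial device is the self-loop $\mathit{sat} \leftarrow \aspnot \mathit{sat}$: it makes~$\mathit{sat}$ (and therefore, via propagation, each~$y,\bar{y}$) syntactically compulsory, yet the rule itself is removed from the reduct~$P^{M_\mu}$, so all these atoms remain freely removable in strict submodels. The dummy rules $\mathit{sat} \leftarrow y,\bar{y}$ then ensure that any $\mathit{sat}$-free submodel $M' \subsetneq M_\mu$ contains exactly one of $y,\bar{y}$ for each~$y$, so that strict submodels faithfully enumerate $Y$-assignments, while the term rules force $\mathit{sat}$ into any submodel whose encoded assignment satisfies some term of~$\psi$. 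Together these ingredients yield: $M_\mu$ is minimal iff $\forall Y.\,\psi(X_\mu, Y)$, closing the reduction.
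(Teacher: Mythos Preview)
Your approach is essentially the paper's: membership via Lemma~\ref{lem:asp-contrules-mu} and an $\exists X.\forall Y$ encoding of ``$M_\mu$ is a minimal model of $P^{M_\mu}$'', and hardness via the Eiter--Gottlob saturation scheme with $k$ disjunctive choice rules, anchoring compulsoriness at a self-looped atom (the paper's~$w$, your~$\mathit{sat}$) and propagating to make all $Y$-atoms and their complements compulsory so that the rules $y \vee \bar y \leftarrow$ are no longer contingent. One point to tighten: without constraints ensuring the $k$ choices hit distinct $X$-variables (the paper adds $\leftarrow x^j_i, x^{j'}_i$ for $j \neq j'$), your program encodes weight~$\leq k$ rather than weight exactly~$k$, so either add those constraints or explicitly reduce from $\EleqkAWSat$; also, for arbitrary DNF your ``$k$-fold copies'' can blow up, so restrict to $3\DNF$ (still $\EkA$-complete) to keep at most $k^3$ copies per term.
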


\noindent This completeness result follows from
the following two propositions.

\begin{proposition}
\label{prop:asp-contrules-hardness}
\ASPcons\ContRules{}
is \EkA-hard.
\end{proposition}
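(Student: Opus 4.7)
To establish \EkA{}-hardness of \ASPcons\ContRules{}, I plan to give a many-one fpt-reduction from $\EkAWSat(\DNF)$, which is \EkA{}-hard by Theorem~\ref{thm:kstar-collapse}. Given $(\varphi,k)$ with $\varphi = \exists X.\forall Y.\psi$ and $\psi = t_1 \vee \dotsm \vee t_m$ a DNF over $X \cup Y$, I will construct a disjunctive logic program $P$ in which only a parameter-bounded ``choice'' block produces contingent rules, while a saturation gadget of Eiter--Gottlob type~\cite{EiterGottlob95} forces every other rule head into $\Comp{P}$.

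The construction uses the following atoms and rules. For each $1 \leq i \leq k$ and each $x_j \in X$ I introduce a fresh selector atom $v_{i,j}$, together with the $k$ disjunctive rules $v_{i,1} \vee \dotsm \vee v_{i,n} \leftarrow$ (for $i = 1,\dotsc,k$) and the constraints $\leftarrow v_{i,j},v_{i',j}$ for all $i < i'$ and all $j$; by subset-minimality these force the $v_{i,j}$ that hold in any answer set to correspond to an injection $\sigma\colon\{1,\dotsc,k\}\to\{1,\dotsc,n\}$, i.e.\ to a weight-$k$ assignment $\alpha_\sigma$ to $X$. To handle $\forall Y$ I add an atom $\mathit{sat}$ with the rule $\mathit{sat} \leftarrow \aspnot \mathit{sat}$, for each $y \in Y$ the rules $y \vee \bar y \leftarrow$, $y \leftarrow \mathit{sat}$ and $\bar y \leftarrow \mathit{sat}$, and for each $j$ the copy atoms $X_j, \bar X_j$ together with $X_j \leftarrow v_{i,j}$ (for every $i$), $\bar X_j \leftarrow \aspnot v_{1,j},\dotsc,\aspnot v_{k,j}$, $X_j \leftarrow \mathit{sat}$ and $\bar X_j \leftarrow \mathit{sat}$. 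Finally, for each term $t_\ell$ of $\psi$ I add the rule $\mathit{sat} \leftarrow \mathrm{body}(t_\ell)$, where positive/negative occurrences of $x_j$ become $X_j$/$\bar X_j$ and those of $y_m$ become $y_m$/$\bar{y}_m$.

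Applying the inductive definition of $\Comp{P}$, the rule $\mathit{sat} \leftarrow \aspnot \mathit{sat}$ yields $\mathit{sat} \in \Comp{P}$, and then each of $X_j \leftarrow \mathit{sat}$, $\bar X_j \leftarrow \mathit{sat}$, $y \leftarrow \mathit{sat}$, $\bar y \leftarrow \mathit{sat}$ puts $X_j, \bar X_j, y, \bar y$ into $\Comp{P}$; only the selectors $v_{i,j}$ remain contingent, so the $k$ disjunctive choice rules are the sole contingent rules of $P$. Since $P$ is built in polynomial time, this is a valid fpt-reduction. For correctness I will argue that every answer set $M$ of $P$ must contain $\mathit{sat}$ (otherwise the reduct contains the fact $\mathit{sat} \leftarrow$, contradicting $\mathit{sat} \notin M$), and hence, by saturation together with the disjunctive choice rules, that $M$ is completely determined by a selection $\sigma$; the existence of a competitor $M' \subsetneq M$ that is a model of $P^M$ then reduces, after reading off the $y/\bar y$ choice of $M'$ as an assignment $\beta$ to $Y$, to the existence of a $\beta$ with $\psi(\alpha_\sigma,\beta)$ false. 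Thus $P$ has an answer set iff some weight-$k$ $\alpha_\sigma$ satisfies $\forall Y.\psi$, iff $\varphi$ is true.

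The main technical obstacle I anticipate is handling negative occurrences of $x$-variables correctly under the saturation gadget: placing $\aspnot X_j$ directly into the bodies of the rules defining $\mathit{sat}$ would be ``washed out'' because $X_j \in M$ removes any such rule from the reduct. The role of the auxiliary atom $\bar X_j$, with the negation instead applied to the raw selectors via $\bar X_j \leftarrow \aspnot v_{1,j},\dotsc,\aspnot v_{k,j}$ and made compulsory only through $\bar X_j \leftarrow \mathit{sat}$, is precisely to make the reduct $P^M$ contain the fact $\bar X_j \leftarrow$ exactly for the indices $j \notin \mathrm{range}(\sigma)$; verifying this behavior of the reduct for the candidate $M$ as well as for every potential competitor $M' \subsetneq M$ will be the heart of the correctness argument.
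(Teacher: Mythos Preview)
Your proposal is correct and follows essentially the same approach as the paper: both give an fpt-reduction from $\EkAWSat$ over DNF matrices by adapting the Eiter--Gottlob saturation reduction, using $k$ disjunctive selector rules (the only contingent rules) to encode a weight-$k$ assignment to $X$, a saturation atom forced into $\Comp{P}$ via $\mathit{sat} \leftarrow \aspnot \mathit{sat}$, and auxiliary atoms for negated $x$-literals defined through default negation on the selectors. Your construction matches the paper's rule-for-rule up to notation; the one rule you omit (the paper additionally has $w \leftarrow y_i, z_i$) is redundant for the correctness argument, since the disjunctive rule $y \vee \bar y \leftarrow$ already guarantees that any competitor $M' \subsetneq M$ with $\mathit{sat} \notin M'$ contains at least one of $y,\bar y$, which is all the term-satisfaction argument needs.
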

\begin{proof}
We give an fpt-reduction from~$\EkAWSat(3\DNF)$.
This reduction is a parameterized version of a reduction of Eiter and
Gottlob~\cite[Theorem~3]{EiterGottlob95}.
Let~$(\varphi,k)$ be an instance of~$\EkAWSat(3\DNF)$,
where~$\varphi = \exists X. \forall Y. \psi$,~$X = \SBs x_1,\dotsc,x_n
\SEs$,~$Y = \SBs y_1,\dotsc,y_m \SEs$,~$\psi = \delta_1 \vee
\dotsm \vee \delta_u$,
and~$\delta_{\ell} = l^{\ell}_1 \wedge l^{\ell}_2 \wedge l^{\ell}_3$
for each~$1 \leq \ell \leq u$.
We construct a disjunctive program~$P$.
We consider the sets~$X$ and~$Y$ of variables as atoms.
In addition,
we introduce fresh atoms~$v_1,\dotsc,v_n$,~$z_1,\dotsc,z_m$,~$w$,
and~$x^j_i$ for all~$1 \leq j \leq k$,~$1 \leq i \leq n$.
We let~$P$ consist of the rules described
as follows:\\
\krversion{\begin{small}
\begin{align}
  & \label{eka-rule1} x^{j}_{1} \vee \dotsm \vee x^{j}_{n} \leftarrow & \mtext{for } 1 \leq j \leq k; \\[-4pt]
  & \leftarrow x^{j}_{i}, x^{j'}_{i} & \mtext{for } 1 \leq i \leq n,
    \label{eka-rule2} 1 \leq j < j' \leq k; \\[-1pt]
  & \label{eka-rule3} y_{i} \vee z_{i} \leftarrow, \quad w \leftarrow y_{i},z_{i} & \mtext{for } 1 \leq i \leq m; \\[-1pt]
  &\label{eka-rule4} y_{i} \leftarrow w, \quad z_{i} \leftarrow w & \mtext{for } 1 \leq i \leq m; \\[-1pt]
  &\label{eka-rule7} x_{i} \leftarrow w, \quad v_{i} \leftarrow w & \mtext{for } 1 \leq i \leq n; \\[-3pt]
  &x_{i} \leftarrow x^{j}_{i} & \mtext{for } 1 \leq i \leq n,
    \label{eka-rule8} 1 \leq j \leq k; \\[-3pt]
  &\label{eka-rule10} v_{i} \leftarrow \aspnot{} x^{1}_{i}, \dotsc, \aspnot{} x^{k}_{i} \hspace{-20pt} & \mtext{for } 1 \leq i \leq n; \\[-3pt]
  &\label{eka-rule11} w \leftarrow \sigma(l^{\ell}_1), \sigma(l^{\ell}_2), \sigma(l^{\ell}_3) \hspace{-20pt} &
    \mtext{for } 1 \leq \ell \leq u; \\
  &\label{eka-rule12} w \leftarrow \aspnot{} w.
\end{align}
\end{small}}%
\renewcommand{\arraystretch}{1.8}
\parbox{\textwidth}{
\centering
\begin{small}
\vspace{10pt}
\begin{tabular}{r @{\ } l r r}
$x^{j}_{1} \vee \dotsm \vee x^{j}_{n}$ & $\leftarrow $ & for~$1 \leq j \leq k$; & \tagarray\label{eka-rule1} \\
 & $\leftarrow x^{j}_{i}, x^{j'}_{i}$ & for~$1 \leq i \leq n,$ \\[-8pt]
 & & and~$1 \leq j < j' \leq k$; & \tagarray\label{eka-rule2} \\
$y_{i} \vee z_{i}$ & $\leftarrow $ & for~$1 \leq i \leq m$; & \tagarray\label{eka-rule3} \\
$y_{i}$ & $\leftarrow w$ & for~$1 \leq i \leq m$; & \tagarray\label{eka-rule4} \\
$z_{i}$ & $\leftarrow w$ & for~$1 \leq i \leq m$; & \tagarray\label{eka-rule5} \\
$w$ & $\leftarrow z_{i}, z_{i}$ & for~$1 \leq i \leq m$; & \tagarray\label{eka-rule6} \\
$x_{i}$ & $\leftarrow w$ & for~$1 \leq i \leq n$; & \tagarray\label{eka-rule7} \\
$x_{i}$ & $\leftarrow x^{j}_{i}$ & for~$1 \leq i \leq n$ & \\[-8pt]
& & and~$1 \leq j \leq k$; & \tagarray\label{eka-rule8} \\
$v_{i}$ & $\leftarrow w$ & for~$1 \leq i \leq n$ & \tagarray\label{eka-rule9} \\
$v_{i}$ & $\leftarrow \aspnot{} x^{1}_{i}, \dotsc, \aspnot{} x^{k}_{i}$ & for~$1 \leq i \leq n$; & \tagarray\label{eka-rule10} \\
$w$ & $\leftarrow \sigma(l^{\ell}_1), \sigma(l^{\ell}_2), \sigma(l^{\ell}_3)$ & for~$1 \leq \ell \leq u$ & \tagarray\label{eka-rule11} \\
$w$ & $\leftarrow \aspnot{} w$. & & \tagarray\label{eka-rule12}
\end{tabular}
\vspace{10pt}
\end{small}%
}

\noindent Here we let~$\sigma(x_i) = x_i$ and~$\sigma(\neg x_i) = v_i$
for each~$1 \leq i \leq n$; and
we let~$\sigma(y_j) = y_j$ and~$\sigma(\neg y_j) = z_j$
for each~$1 \leq j \leq m$.
Intuitively,~$v_i$ corresponds to~$\neg x_i$,
and~$z_j$ corresponds to~$\neg y_j$.
The main difference with the reduction of Eiter and
Gottlob~\cite{EiterGottlob95}
is that we use the rules
in~(\ref{eka-rule1}),~(\ref{eka-rule2}),~(\ref{eka-rule8})~and~(\ref{eka-rule10})
to let the variables~$x_i$ and~$v_i$
represent an assignment of weight~$k$ to the variables in~$X$.
\krversion{The rules in~(\ref{eka-rule7})}\longversion{The
rules in~(\ref{eka-rule7})~and~(\ref{eka-rule9})}
ensure that the atoms~$v_i$ and~$x_i$ are compulsory.
It is straightforward to verify that~$\Comp{P} = \SBs w \SEs \cup
\SB x_i,v_i \SM 1 \leq i \leq n \SE \cup
\SB y_i,z_i \SM 1 \leq i \leq m \SE$.
Notice that~$P$ has exactly~$k$ contingent rules, namely the rules in~(\ref{eka-rule1}).
\krversion{A full proof that~$(\varphi,k) \in \EkAWSat{}$ if and only
if~$P$ has an answer set can be found in the technical report.
}\longversion{
We show that~$(\varphi,k) \in \EkAWSat(3\DNF)$ if and only
if~$P$ has an answer set.

$(\Rightarrow)$ Assume that there exists an assignment~$\alpha : X
\rightarrow \SBs 0,1 \SEs$ of weight~$k$ such that~$\forall
Y. \psi[\alpha]$ is true.  Let~$\SBs x_{i_1},\dotsc,x_{i_k} \SEs$
denote the set~$\SB x_i \SM 1 \leq i \leq n, \alpha(x_i) = 1 \SE$.  We
construct an answer set~$M$ of~$P$.  We let~$M = \SB
x^{\ell}_{i_{\ell}} \SM 1 \leq \ell \leq k \SE \cup \Comp{P}$.  The
reduct~$P^{M}$ consists of
Rules~(\ref{eka-rule1})--(\ref{eka-rule9})~and~(\ref{eka-rule11}),
together with rules~$(v_i \leftarrow)$ for all~$1 \leq i \leq n$ such
that~$\alpha(x_i) = 0$.  We show that~$M$ is a minimal model
of~$P^{M}$.  We proceed indirectly and assume to the contrary that there
exists a model~$M' \subsetneq M$ of~$P^{M}$.  By
Rule~(\ref{eka-rule8}) and the rules~$(v_i \leftarrow)$, we know that
for all~$1 \leq i \leq n$ it holds that~$x_i \in M'$
if~$\alpha(x_i) = 1$, and~$v_i \in M'$ if~$\alpha(x_i) = 0$.
If~$x^{\ell}_{i_\ell} \not\in M'$ for any~$1 \leq \ell \leq k$,
then~$M'$ is not a model of~$P^M$.
Therefore,~$\SBs x^1_{i_1},\dotsc,x^k_{i_k} \SEs \subseteq M'$.
Also, it holds that~$w \not\in M'$.
To show this, assume the contrary, i.e., assume that~$w \in M'$.
Then, by Rules~(\ref{eka-rule4}),~(\ref{eka-rule5}),~(\ref{eka-rule7})~and~(\ref{eka-rule9}),
it follows that~$M' = M$, which contradicts our assumption
that~$M' \subsetneq M$.
By Rules~(\ref{eka-rule3})~and~(\ref{eka-rule6}), we know
that~$\Card{\SBs y_i,z_i \SEs \cap M'} = 1$ for each~$1 \leq i \leq m$.  We
define the assignment~$\beta : Y \rightarrow \SBs 0,1 \SEs$ by
letting~$\beta(y_i) = 1$ if and only if~$y_i \in M'$, for all~$1 \leq i \leq
m$.  Since~$\forall Y. \psi[\alpha]$ is true, we know that~$\alpha
\cup \beta$ satisfies some term~$\delta_{\ell}$ of~$\psi$.  It is
straightforward to verify
that~$\sigma(l^{\ell}_1),\sigma(l^{\ell}_2),\sigma(l^{\ell}_3) \in M'$.
Therefore, by Rule~(\ref{eka-rule11}),~$w \in M'$, which is a
contradiction with our assumption that~$w \not\in M'$.  From this we
can conclude that~$M$ is a minimal model of~$P^{M}$, and thus that~$M$
is an answer set of~$P$.

$(\Leftarrow)$ Assume that~$M$ is an answer set of~$P$.
Clearly,~$\Comp{P} \subseteq M$.  Also, since
Rules~(\ref{eka-rule1})~and~(\ref{eka-rule2}) are
rules of~$P^{M}$,~$M$ must contain
atoms~$x^{1}_{i_1},\dotsc,x^{k}_{i_k}$ for
some~$i_1,\dotsc,i_k$.  We know that~$P^{M}$ contains
\nobreak{Rules~(\ref{eka-rule1})--(\ref{eka-rule9})} and~(\ref{eka-rule11}), as
well as the rules~$(v_i \leftarrow)$ for all~$1 \leq i \leq n$ such that
for no~$1 \leq j \leq k$ it holds that~$x^j_i \in M'$.  We define
the assignment~$\alpha : X \rightarrow \SBs 0,1 \SEs$ by
letting~$\alpha(x_i) = 1$ if and only if~$i \in \SBs i_1,\dotsc,i_k \SEs$.
The assignment~$\alpha$ has weight~$k$.  We show that~$\forall
Y. \psi[\alpha]$ is true.  Let~$\beta : Y \rightarrow \SBs 0,1 \SEs$
be an arbitrary assignment.  Construct the set~$M'
\subsetneq M$ by letting~$M' = (M \cap \SB x^{j}_{i} \SM 1
\leq i \leq n, 1 \leq j \leq k \SE) \cup \SB x_i \SM 1 \leq i \leq n,
\alpha(x_i) = 1 \SE \cup \SB v_i \SM 1 \leq i \leq n, \alpha(v_i) = 0
\SE \cup \SB y_i \SM 1 \leq i \leq m, \beta(y_i) = 1 \SE \cup \SB z_i
\SM 1 \leq i \leq m, \beta(y_i) = 0 \SE$.  Since~$M$ is a minimal
model of~$P^{M}$ and~$M' \subsetneq M$, we know that~$M'$ cannot be a
model of~$P^{M}$.  Clearly,~$M'$ satisfies
Rules~(\ref{eka-rule1})--(\ref{eka-rule9}), and all rules of~$P^{M}$ of
the form~$(v_i \leftarrow)$.  Thus there must be some instantiation of
Rule~(\ref{eka-rule11}) that~$M'$ does not satisfy.  This means that
there exists some~$1 \leq \ell \leq u$ such
that~$\sigma(l^{\ell}_1),\sigma(l^{\ell}_2),\sigma(l^{\ell}_3) \in M'$.  By
construction of~$M'$, this means that~$\alpha \cup \beta$
satisfies~$\delta_{\ell}$, and thus satisfies~$\psi$.  Since~$\beta$ was chosen
arbitrarily, we can conclude that~$\forall Y. \psi[\alpha]$ is true,
and therefore~$(\varphi,k) \in \EkAWSat(3\DNF)$.  }
\end{proof}

\begin{proposition}
\label{prop:asp-contrules-membership}
\ASPcons\ContRules{} is in \EkA{}.
\end{proposition}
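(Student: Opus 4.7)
The plan is to give an fpt-reduction from \ASPcons\ContRules{} to \EkAWSat{}, leveraging Lemma~\ref{lem:asp-contrules-mu}, which characterizes every answer set~$M$ of~$P$ as $M = \Comp{P} \cup \Rng{\mu}$ for some subset $R \subseteq P$ of contingent rules and some mapping $\mu : R \to \Cont{P}$ picking, for each $r \in R$, a head atom of $r$. Since the parameter $k$ bounds the number of contingent rules, the pair $(R,\mu)$ can be encoded as a choice of exactly one option per contingent rule; this is precisely the sort of weight-$k$ witness that \EkAWSat{} quantifies over existentially. The remaining task, namely verifying that the resulting candidate $M$ is indeed an answer set, is a \co\NP{} check (model of the reduct plus minimality), which we fold into the universal block of \EkAWSat{}.

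Concretely, given $P$ with contingent rules $r_1,\dotsc,r_k$ and head sizes $h_1,\dotsc,h_k$, I introduce existential variables $y_i^0, y_i^1, \dotsc, y_i^{h_i}$ for each $1 \le i \le k$, where $y_i^0$ codes ``$r_i \notin R$'' and $y_i^j$ (for $j \ge 1$) codes ``$r_i \in R$ and $\mu(r_i)$ is the $j$-th head atom of $r_i$.'' Standard at-most-one clauses per rule, together with the weight-$k$ restriction, force exactly one $y_i^j$ to be true per $i$. For every atom $a \in \Atoms{P}$ I then define a small propositional formula $\chi_a(X)$ capturing ``$a \in M$'': $\chi_a = \top$ if $a \in \Comp{P}$, and $\chi_a = \bigvee \{ y_i^j : \mu(r_i)=a \text{ encodes }a\}$ if $a \in \Cont{P}$.

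For the universal block I use variables $z_a$ ($a \in \Atoms{P}$) encoding a candidate sub-model $M'$, and I construct $\psi(X,Y)$ as a conjunction of three parts: (i) the at-most-one propriety clauses on the $y_i^j$; (ii) the assertion that $M$ is a model of $P^M$, namely, for every rule $(a_1 \vee \dotsm \vee a_p \leftarrow b_1,\dotsc,b_m, \aspnot c_1,\dotsc,\aspnot c_n) \in P$ the disjunction $\bigvee_j \chi_{c_j} \vee \bigvee_i \neg \chi_{b_i} \vee \bigvee_i \chi_{a_i}$; and (iii) the minimality assertion ``$M' \not\subsetneq M$ or $M'$ is not a model of $P^M$,'' where $M' \not\subsetneq M$ is written as $(\bigvee_a (z_a \wedge \neg\chi_a)) \vee \bigwedge_a (z_a \leftrightarrow \chi_a)$, and $M'$ failing some rule of $P^M$ is written as $\bigvee_r (\bigwedge_j \neg\chi_{c_j} \wedge \bigwedge_i z_{b_i} \wedge \bigwedge_i \neg z_{a_i})$. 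Parts (i) and (ii) do not depend on $Y$, so $\forall Y.\psi(X,Y)$ is true precisely when the candidate $M$ induced by the existential assignment is a model of $P^M$ and no proper subset of $M$ is; by Lemma~\ref{lem:asp-contrules-mu} this equivalent to $P$ having an answer set.

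The construction clearly takes polynomial time in $\size{P}$ and produces an \EkAWSat{} instance with parameter $k$, so the reduction is an fpt-reduction as required. The only subtlety worth attention is correctly handling negative body literals in encoding the reduct $P^M$: the atoms whose presence in $M$ removes a rule from $P^M$ must be threaded through $\chi_{c_j}(X)$ both in the model check and inside the minimality check, which is what ensures the formula~$\psi$ truly refers to the reduct $P^{M_X}$ induced by the guessed $(R,\mu)$ rather than some fixed reduct.
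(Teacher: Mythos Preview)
Your proposal is correct and follows essentially the same approach as the paper: both give an fpt-reduction to \EkAWSat{} by invoking Lemma~\ref{lem:asp-contrules-mu}, encoding the pair~$(R,\mu)$ as a weight-$k$ existential choice (one option per contingent rule, including a ``not in $R$'' option), and pushing the model check and the minimality check into the matrix. Your encoding is in fact slightly leaner than the paper's: where the paper introduces three sets of universal variables~$Y,Z,W$ (with~$Y$ redundantly representing~$M$ and~$W$ serving as XOR helpers, filtered via~$\psi^1_Y$ and~$\psi_W$), you compute ``$a\in M$'' directly as the formula~$\chi_a$ over the existential variables and therefore need only the single set~$\{z_a\}$ for the candidate~$M'$; this is a cosmetic simplification, not a different argument.
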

\begin{proof}
We show membership in \EkA{}
by reducing \ASPcons\-\ContRules{} to \EkAWSat{}.
Let~$P$ be a program,
where~$r_1,\dotsc,r_k$ are the contingent rules of~$P$
and where~$\Atoms{P} = \SBs d_1,\dotsc,d_n \SEs$.
We construct a quantified Boolean
formula~$\varphi = \exists X. \forall Y \cup Z \cup W. \psi$
such that~$(\varphi,k) \in \EkAWSat{}$ if and only if~$P$ has an answer set.

In order to do so, we firstly construct a Boolean
formula~$\psi_{P}(z_1,\dotsc,z_n,z'_1,\dotsc,z'_n)$
(or, for short:~$\psi_{P}$)
over variables~$z_1,\dotsc,z_n,z'_1,\dotsc,z'_n$ such that for
any~$M \subseteq \Atoms{P}$
and any~$M' \subseteq \Atoms{P}$ holds
that~$M$ is a model of~$P^{M'}$ if and only if~$\psi_{P}[\alpha_{M} \cup \alpha_{M'}]$
evaluates to true,
where~$\alpha_{M} : \SBs z_1,\dotsc,z_n \SEs \rightarrow \SBs 0,1 \SEs$
is defined by letting~$\alpha_{M}(z_i) = 1$ if and only if~$d_i \in M$,
and~$\alpha_{M'} : \SBs z'_1,\dotsc,z'_n \SEs \rightarrow \SBs 0,1 \SEs$
is defined by letting~$\alpha_{M'}(z'_i) = 1$ if and only if~$d_i \in M'$,
for all~$1 \leq i \leq n$.
\krversion{
We define $\psi_{P} = \bigwedge_{r \in P}
( \psi^{1}_{r} \vee \psi^{2}_{r} )$,
where $\psi^{1}_{r} = ( z'_{i^3_1} \vee \dotsm \vee z'_{i^3_{c}} )$
and $\psi^{2}_{r} = ( ( z_{i^1_1} \vee \dotsm \vee z_{i^1_{a}} )
\leftarrow ( z_{i^2_1} \wedge \dotsm \wedge z_{i^2_{b}} ) )$
for $r = (d_{i^1_1} \vee \dotsm \vee d_{i^1_{a}} \leftarrow d_{i^2_1},\dotsc,d_{i^2_{b}}, \aspnot{} d_{i^3_1},\dotsc,\aspnot{} d_{i^3_{c}})$.
}\longversion{We define:
\[ \begin{array}{r l}
  \psi_{P} =& \underset{r \in P}{\bigwedge} \left ( \psi^{1}_{r} \vee \psi^{2}_{r} \right ); \\
  \psi^{1}_{r} =& \left ( z'_{i^3_1} \vee \dotsm \vee z'_{i^3_{c}} \right ); \mtext{ and} \\
  \psi^{2}_{r} =& \left ( \left ( z_{i^1_1} \vee \dotsm \vee z_{i^1_{a}} \right ) \leftarrow 
    \left ( z_{i^2_1} \wedge \dotsm \wedge z_{i^2_{b}} \right ) \right ), \\
  & \mtext{where } r = (d_{i^1_1} \vee \dotsm \vee d_{i^1_{a}} \leftarrow d_{i^2_1},\dotsc,d_{i^2_{b}}, \krversion{\\ &}
\aspnot{} d_{i^3_1},\dotsc,\aspnot{} d_{i^3_{c}}).
\end{array} \]}
It is easy to verify that~$\psi_{P}$ satisfies the required property.

We now introduce the set~$X$ of existentially quantified variables of~$\varphi$.
For each contingent rule~$r_i$ of~$P$ we
let~$a^i_1,\dotsc,a^i_{\ell_i}$ denote the atoms that occur in the head of~$r_i$.
For each~$r_i$, we introduce variables~$x^i_0,x^i_1,\dotsc,x^i_{\ell_i}$,
i.e.,~$X = \SB x^i_j \SM 1 \leq i \leq k, 0 \leq j \leq \ell_i \SE$.
Furthermore, for each atom~$d_i$, we add universally quantified
variables~$y_i$,~$z_i$ and~$w_i$, i.e.,~$Y =
\SB y_i \SM 1 \leq i \leq n \SE$,~$Z =
\SB z_i \SM 1 \leq i \leq n \SE$,
and~$W = \SB w_i \SM 1 \leq i \leq n \SE$.

We then construct~$\psi$ as follows:
\[ \begin{array}{r l}
  \psi =& \psi_{X} \wedge \left ( \psi^{1}_{Y} \vee
    \psi_{W} \vee \psi_{\mtext{min}} \right ) \wedge ( \psi^{1}_{Y} \vee \psi^{2}_{Y} ) ; \\
  \psi_{X} =& \underset{1 \leq i \leq k}{\bigwedge} \left (
    \underset{0 \leq j \leq \ell_i}{\bigvee} x^i_j
    \wedge \underset{0 \leq j < j' \leq \ell_i}{\bigwedge} (\neg x^i_j \vee \neg x^i_{j'}) \right ); \\
  \psi^{1}_{Y} =&
    \longversion{\underset{1 \leq i \leq k}{\bigvee}\ \underset{1 \leq j \leq \ell_i}{\bigvee}}
    \krversion{\underset{1 \leq j \leq \ell_i}{\bigvee\limits_{1 \leq i \leq k}}}
    \psi^{i,j}_{y} \vee
    \underset{d_i \in \Cont{P}}{\bigvee} \psi^{d_i}_{y} \vee
    \underset{d_i \in \Comp{P}}{\bigvee} \neg y_i; \\
  \longversion{\psi^{d_m}_{y} =& \begin{dcases*}
    (y_m \wedge \neg x^{i_1}_{j_1} \wedge \dotsm \wedge \neg x^{i_u}_{j_u})
    & if $\SB x^{i}_{j} \SM 1 \leq i \leq k, 1 \leq j \leq \ell_i, a^i_j = d_m \SE =
 \SBs a^{i_1}_{j_1},\dotsc,a^{i_u}_{j_u} \SEs$, \\
    \bot &
    if $\SB x^{i}_{j} \SM 1 \leq i \leq k, 1 \leq j \leq \ell_i, a^i_j = d_m \SE = \emptyset;$ \\
    \end{dcases*} \\}
  \krversion{\psi^{d_m}_{y} =&
    (y_m \wedge \neg x^{i_1}_{j_1} \wedge \dotsm \wedge \neg x^{i_u}_{j_u})
\krversion{\\ &}\longversion{\hfill \qquad}
    \mtext{ if } \SB x^{i}_{j} \SM 1 \leq i \leq k, 1 \leq j \leq \ell_i, a^i_j = d_m \SE =
    \krversion{\\ &} \SBs a^{i_1}_{j_1},\dotsc,a^{i_u}_{j_u} \SEs, \mtext{ and} \\
  \psi^{d_m}_{y} =& 
    \bot
    \krversion{\\ &}\longversion{\hfill}
    \mtext{if } \SB x^{i}_{j} \SM 1 \leq i \leq k, 1 \leq j \leq \ell_i, a^i_j = d_m \SE = \emptyset; \\}
  \psi^{i,j}_{y} =& (x^{i}_{j} \wedge \neg y_m)
    \hfill \mtext{ where $a^i_j = d_m$}; \\
  \psi^{2}_{Y} =& \psi_{P}(y_1,\dotsc,y_n,y_1,\dotsc,y_n); \\
  \psi_{W} =& \underset{1 \leq i \leq n}{\bigvee}
      (w_i \leftrightarrow (y_i \leftrightarrow z_i)); \\
%
%
  \psi_{\mtext{min}} =&
    \psi^{1}_{\mtext{min}} \vee
    \psi^{2}_{\mtext{min}} \vee
    \psi^{3}_{\mtext{min}}; \\
  \psi^{1}_{\mtext{min}} =&
    \underset{1 \leq i \leq n}{\bigvee} \left ( z_i \wedge \neg y_i \right ); \\
  \psi^{2}_{\mtext{min}} =&
    (\neg w_1 \wedge \dotsm \wedge \neg w_m); \mtext{ and} \\
  \psi^{3}_{\mtext{min}} =&
    \neg \psi_{P}(z_1,\dotsc,z_n,y_1,\dotsc,y_n).
\end{array} \]
The idea behind this construction is the following.
The variables in~$X$ represent guessing
at most one atom in the head of each contingent rule to be true.
\krversion{Such a guess represents a possible answer
set~$M \subseteq \Atoms{P}$ (a proof of this can be found
in the technical report).
}\longversion{By Lemma~\ref{lem:asp-contrules-mu},
such a guess represents a possible answer set~$M \subseteq \Atoms{P}$.}
The formula~$\psi_{X}$ ensures that for each~$1 \leq i \leq k$,
exactly one \krversion{$\smash{x^i_j}$}\longversion{$x^i_j$} is set to true.
The formula~$\psi^{1}_{Y}$ filters out every assignment in which
the variables~$Y$ are not set corresponding to~$M$.
The formula~$\psi^{2}_{Y}$ filters out every assignment
corresponding to a candidate~$M \subseteq \Atoms{P}$ such
that~$M \not\models P$.
The formula~$\psi_{W}$ filters out every assignment
such that~$w_i$ is not set to the value~$(y_i \mtext{ \sc xor } z_i)$.
The formula~$\psi^{1}_{\mtext{min}}$ filters out every assignment
where the variables~$Z$ correspond to a set~$M'$
such that~$M' \not\subseteq M$.
The formula~$\psi^{2}_{\mtext{min}}$ filters out every assignment
where the variables~$Z$ correspond to the set~$M$,
by referring to the variables~$w_i$.
The formula~$\psi^{3}_{\mtext{min}}$, finally, ensures that in every
remaining assignment, the variables~$Z$ do not correspond to a
set~$M' \subseteq M$ such that~$M' \models P$.
\krversion{A full proof that~$P$ has an answer set
if and only if~$(\varphi,k) \in \EkAWSat{}$ can be found
in the technical report.
}\longversion{We now formally prove that~$P$ has an answer set
if and only if~$(\varphi,k) \in \EkAWSat{}$.

$(\Rightarrow)$
Assume that~$P$ has an answer set~$M$.
By Lemma~\ref{lem:asp-contrules-mu}, we know that there exist
some subset~$R \subseteq \SBs r_1,\dotsc,r_k \SEs$ and
a mapping~$\mu : R \rightarrow \Cont{P}$
such that for each~$r \in R$,~$\mu(r)$ occurs in the head
of~$r$, and such that~$M = \Rng{\mu} \cup \Comp{P}$.
We construct the mapping~$\alpha_{\mu} : X \rightarrow \SBs 0,1 \SEs$
by letting~$\alpha_{\mu}(x^i_j) = 1$ if and only
if~$r_i \in R$ and~$\mu(r_i) = a^i_j$,
and letting~$\alpha_{\mu}(x^i_0) = 1$ if and only if~$r_i \not\in R$.
Clearly,~$\alpha_{\mu}$ has weight~$k$ and satisfies~$\psi_{X}$.
We show that~$\forall Y. \forall Z. \forall W. \psi$ evaluates to true.
Let~$\beta : Y \cup Z \cup W \rightarrow \SBs 0,1 \SEs$ be an arbitrary assignment.
We let~$M_{Y} = \SB d_i \SM 1 \leq i \leq m, \beta(y_i) = 1 \SE$,
and~$M_{Z} = \SB d_i \SM 1 \leq i \leq m, \beta(z_i) = 1 \SE$.
We distinguish a number of cases:
\begin{enumerate}[(i)]
  \item either~$M_{Y} \neq M$,
  \item or the previous case does not apply and
    for some~$1 \leq i \leq m$,~$\beta(w_i) \neq (\beta(y_i) \mtext{ \sc xor } \beta(z_i))$,
  \item or all previous cases do not apply and~$M_{Z} \not\subseteq M_{Y}$,
  \item or all previous cases do not apply and~$M_{Z} = M_{Y}$,
  \item or all previous cases do not apply and~$M_{Z} \not\models P^{M}$,
  \item or all previous cases do not apply and~$M_{Z} \models P^{M}$.
\end{enumerate}
The following is now straightforward to verify.
In case (i),~$\alpha \cup \beta$ satisfies~$\psi^{1}_{Y}$.
Thus,~$\alpha \cup \beta$ satisfies~$\psi$.
In all further cases, we know that~$\alpha \cup \beta$
satisfies~$\psi^{2}_{Y}$, since~$M_{Y} = M$,
and~$M \models P^{M}$.
In case (ii),~$\alpha \cup \beta$ satisfies~$\psi_{W}$.
In case (iii),~$\alpha \cup \beta$ satisfies~$\psi^{1}_{\mtext{min}}$.
In case (iv),~$\alpha \cup \beta$ satisfies~$\psi^{2}_{\mtext{min}}$.
In case (v),~$\alpha \cup \beta$ satisfies~$\psi^{3}_{\mtext{min}}$.
In case (vi), we get a direct contradiction from the facts
that~$M_{Z} \subsetneq M$, that~$M_{Z} \models P^{M}$, and
that~$M$ is a subset-minimal model of~$P^{M}$.
We can thus conclude, that in any case~$\alpha \cup \beta$ satisfies~$\psi$.
Therefore,~$\forall Y \cup Z \cup W. \psi$ evaluates to true,
and thus~$(\varphi,k) \in \EkAWSat{}$.

$(\Leftarrow)$
Assume that~$(\varphi,k) \in \EkAWSat{}$.
This means that there exists an assignment~$\beta : X \rightarrow \SBs 0,1 \SEs$
of weight~$k$ such that~$\forall Y \cup Z \cup W. \psi[\alpha]$ evaluates to true.
We construct~$M = \Comp{P} \cup \SB d_m \SM 1 \leq i \leq k,
1 \leq j \leq \ell_i, \alpha(x^i_j) = 1, a^i_j = d_m \SE$.
We show that~$M$ is an answer set of~$P$.
Construct an assignment~$\beta_1 : Y \cup Z \cup W \rightarrow \SBs 0,1 \SEs$
as follows.
We let~$\beta_1(y_i) = 1$ if and only if~$d_i \in M$.
For all~$y \in Z \cup W$, the assignment~$\beta_1(y)$ is arbitrary.
We know that~$\alpha \cup \beta_1$ satisfies~$\psi$,
and thus that~$\alpha \cup \beta_1$ satisfies~$(\psi^{1}_{Y} \vee \psi^{2}_{Y})$.
It is straightforward to verify that~$\alpha \cup \beta_1$
does not satisfy~$\psi^{1}_{Y}$.
Therefore~$\alpha \cup \beta_1$ satisfies~$\psi^{2}_{Y}$.
From this, we can conclude that~$M \models P^{M}$.

Now we show that~$M$ is a subset-minimal model of~$P^{M}$.
Let~$M' \subseteq \Atoms{P}$ be an arbitrary set such that~$M' \subsetneq M$.
We show that~$M' \not\models P^{M}$.
We construct an assignment~$\beta_2 : Y \cup Z \cup W
\rightarrow \SBs 0,1 \SEs$
as follows.
For all~$y_i \in Y$, we let~$\beta_2(y_i) = 1$ if and only if~$d_i \in M$.
For all~$z_i \in Z$, let~$\beta_2(z_i) = 1$ if and only if~$d_i \in M'$.
For all~$w_i \in W$, let~$\beta_2(w_i) =
\beta_2(y_i) \mtext{ \sc xor } \beta_2(z_i)$.
It is straightforward to verify that~$\alpha \cup \beta_2$
satisfies $\neg \psi^{1}_{Y}$,~$\neg \psi_{W}$,~$\neg \psi^{1}_{\mtext{min}}$, 
and~$\neg \psi^{2}_{\mtext{min}}$.
Therefore, since~$\alpha \cup \beta_2$ satisfies~$\psi$,
we know that~$\alpha \cup \beta_2$ satisfies~$\psi^{3}_{\mtext{min}}$.
Therefore, we know that~$M' \not\models P^{M}$.
This concludes our proof that~$M$ is a subset-minimal model of~$P^{M}$,
and thus we can conclude that~$M$ is an answer set of~$P$.}
\end{proof}
}

\krlongversion{
\section{Additional Characterizations of \kstar{}}
\label{sec:eka-characterizations}

Finally, we provide a number of
different equivalent characterizations of \EkA{}.
In particular, we characterize \EkA{} in terms
of model checking of first-order logic formulas
and in terms of alternating Turing machines.

\longversion{
\subsection{First-order Model Checking Characterization}
\label{sec:mc}

We begin with giving an equivalent characterization of the class \EkA{}
in terms of model checking of first-order logic formulas.%
\footnote{We would like to thank Hubie Chen for suggesting to use
first-order model checking to obtain
an alternative characterization of the class \EkA{}.}
The perspective of model checking
is also used in parameterized complexity theory
to define the A-hierarchy,
which is central to another hardness theory~\cite{FlumGrohe06}.

We introduce a few notions that we need
for defining the model checking perspective on \EkA{}.
A \emph{(relational) vocabulary}~$\tau$ is a finite set of relation symbols.
Each relation symbol~$R$ has an \emph{arity}~$\arity{R} \geq 1$.
A \emph{structure}~$\AAA$ of vocabulary~$\tau$, or \emph{$\tau$-structure}
(or simply \emph{structure}), consists of a set~$A$ called the \emph{domain}
and an interpretation~$R^{\AAA} \subseteq A^{\arity{R}}$ for each
relation symbol~$R \in \tau$.
We use the usual definition of truth of a first-order logic sentence~$\varphi$ over
the vocubulary~$\tau$ in a~$\tau$-structure~$\AAA$.
We let~$\AAA \models \varphi$ denote that the sentence~$\varphi$
is true in structure~$\AAA$.
If~$\varphi$ is a first-order formula with free variables~$\Free{\varphi}$,
and~$\mu : \Free{\varphi} \rightarrow A$ is an assignment,
we use the notation~$\AAA,\mu \models \varphi$ to denote
that~$\varphi$ is true in structure~$\AAA$ under the assignment~$\mu$.
Now consider the following parameterized problem.
\probdef{
  $\EkAMC{}$
  
  \emph{Instance:} A first-order logic sentence
  $\varphi = \exists x_1,\dotsc,x_k. \forall y_1,\dotsc,y_n. \psi$
  over a vocabulary~$\tau$,
  where~$\psi$ is quantifier-free,
  and a finite~$\tau$-structure~$\AAA$.

  \emph{Parameter:} $k$.

  \emph{Question:} Does~$\AAA \models \varphi$?
}

\noindent We show that this problem is complete for the class \EkA{}.

\begin{theorem}
\label{thm:modelchecking-completeness}
\EkAMC{} is \EkA{}-complete.
\end{theorem}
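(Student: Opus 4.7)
The plan is to establish two directions: (i) membership, by giving an fpt-reduction from $\EkAMC{}$ to $\EkAWSat{}$, and (ii) hardness, by giving an fpt-reduction from $\EkAWSat(3\DNF)$ (which suffices by Theorem~\ref{thm:kstar-collapse}) to $\EkAMC{}$. In both directions, the parameter~$k$ will be carried across unchanged.

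For membership, let $(\AAA,\varphi)$ be an instance with $\varphi = \exists x_1,\dotsc,x_k. \forall y_1,\dotsc,y_n. \psi$ and domain~$A$. I would introduce propositional variables $X = \SB x_{i,a} \SM 1 \leq i \leq k,\ a \in A \SE$ and $Y = \SB y_{j,a} \SM 1 \leq j \leq n,\ a \in A \SE$, with intended meaning that the FO variable $x_i$ (resp.\ $y_j$) takes value~$a$ iff $x_{i,a}$ (resp.\ $y_{j,a}$) is set to true. The target circuit has the form $C = \mtext{properX}(X) \wedge (\mtext{properY}(Y) \rightarrow \widehat{\psi}(X,Y))$, where $\mtext{properX}$ asserts that for each~$i$ at most one $x_{i,a}$ is true (which combined with the exact-weight-$k$ constraint and the $k$ indices forces exactly one per~$i$), $\mtext{properY}$ asserts exactly one $y_{j,a}$ per~$j$, and $\widehat{\psi}$ is obtained from $\psi$ by rewriting each atom $R(t_1,\dotsc,t_r)$ (resp.\ equation $t_1 = t_2$) as an explicit Boolean disjunction over the tuples in $R^{\AAA}$ (resp.\ over the diagonal). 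Since $\mtext{properX}$ only mentions~$X$, it commutes past the universal quantifier, so the $\EkAWSat{}$ semantics yields precisely: there exists a proper choice $\bar{x}$ such that for all proper $\bar{y}$, $\widehat{\psi}$ holds.

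For hardness, let $(\varphi,k)$ with $\varphi = \exists X. \forall Y. C$ and $C = \delta_1 \vee \dotsm \vee \delta_u$ in 3DNF, $Y = \SBs y^Y_1,\dotsc,y^Y_m \SEs$. I would build a structure $\AAA$ with domain $A = X \cup \SBs \mathbf{0}, \mathbf{1} \SEs$ (disjoint), a unary relation $P$ interpreted as~$X$, and constants $\mathbf{0}, \mathbf{1}$ and $c_x$ for each $x \in X$. The FO sentence is $\varphi' = \exists x_1,\dotsc,x_k. \forall y_1,\dotsc,y_m. \psi'$ where $\psi' = \mtext{proper}(\bar{x}) \wedge (\mtext{TV}(\bar{y}) \rightarrow \mtext{Sat}(\bar{x},\bar{y}))$, with $\mtext{proper}(\bar{x}) = \bigwedge_{i=1}^{k} P(x_i) \wedge \bigwedge_{i<j} x_i \neq x_j$, $\mtext{TV}(\bar{y}) = \bigwedge_{j=1}^{m} (y_j = \mathbf{0} \vee y_j = \mathbf{1})$, and $\mtext{Sat}$ translating literals via $x \mapsto \bigvee_i x_i = c_x$, $\neg x \mapsto \bigwedge_i x_i \neq c_x$, $y^Y_j \mapsto y_j = \mathbf{1}$, $\neg y^Y_j \mapsto y_j = \mathbf{0}$. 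Crucially, placing $\mtext{proper}(\bar{x})$ as a conjunct in the matrix (rather than as an implication premise) allows it to commute through the universal quantifier, so $\varphi'$ is true in $\AAA$ iff there exist $k$ distinct elements $x_{i_1},\dotsc,x_{i_k} \in X$ such that for every truth assignment to the $y_j$'s the 3DNF is satisfied, which is exactly the $\EkAWSat(3\DNF)$ condition on the induced weight-$k$ assignment to~$X$.

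The main obstacle in both directions is handling the discrepancy between the weight-$k$ Boolean existential of $\EkAWSat{}$ and the FO $k$-tuple existential of $\EkAMC{}$ in a way that prevents "improper" witnesses from spuriously satisfying the formula. The key idea is that properness of the existentially chosen object must be enforced by a conjunct that sits under the universal quantifier (so it commutes out), while properness of the universally chosen object is enforced by an implication premise; getting these polarities right, together with routinely expanding FO atoms with equality over the explicit interpretations in $\AAA$, suffices to complete both directions in polynomial time with parameter~$k$ preserved.
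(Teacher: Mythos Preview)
Your membership reduction is essentially the paper's: both use one-hot encodings $x_{i,a}$, $y_{j,a}$ and the shape $\mtext{properX} \wedge (\mtext{properY} \rightarrow \widehat{\psi})$; the only cosmetic difference is that you enforce ``at most one $x_{i,a}$ per $i$'' and let the exact-weight-$k$ constraint supply ``at least one'', whereas the paper writes out both halves of uniqueness explicitly.

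Your hardness reduction is correct but genuinely different from the paper's. The paper pushes the 3DNF into the \emph{structure}: its domain contains the terms $\delta_\ell$, binary relations $C_1,C_2,C_3$ record which variable occurs in which slot of which term, and an extra block of universally quantified variables $w_1,\dotsc,w_u$ (pinned down via an ordering relation $O$) is used to address the terms inside the matrix. You instead push the 3DNF into the \emph{formula} via per-variable constants $c_x$, so that $\mtext{Sat}$ is literally the 3DNF under the literal substitutions you describe. Your construction is simpler and avoids the $w$-block entirely, but the vocabulary grows with the instance; the paper's route buys the stronger conclusion (stated immediately after its proof) that hardness already holds for a \emph{fixed} vocabulary containing only unary and binary relation symbols. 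One small caveat: the paper works over purely relational vocabularies, so to fit the framework as defined you should replace each constant $c$ by a singleton unary relation and write, e.g., $R_{c_x}(x_i)$ in place of $x_i = c_x$ and $R_{\mathbf{1}}(y_j)$ in place of $y_j = \mathbf{1}$; this is a routine and harmless change.
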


\noindent This completeness result follows from the following two
propositions.

\begin{proposition}
\label{prop:modelchecking-membership}
\EkAMC{} is in \EkA{}.
\end{proposition}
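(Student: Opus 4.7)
The plan is to give an fpt-reduction from $\EkAMC{}$ to $\EkAWSat{}$. Given an instance $(\varphi, \AAA)$ of $\EkAMC{}$ with $\varphi = \exists x_1 \dotsm \exists x_k. \forall y_1 \dotsm \forall y_n. \psi$ and domain $A$, I will introduce two disjoint sets of propositional variables: $P = \SB p_{i,a} \SM 1 \leq i \leq k,\, a \in A \SE$, which will play the role of the existentially quantified block of the \EkAWSat{} instance under the weight-$k$ restriction, and $Q = \SB q_{j,a} \SM 1 \leq j \leq n,\, a \in A \SE$, which will play the role of the universally quantified block. The intended reading is that $p_{i,a}$ (resp.\ $q_{j,a}$) is true exactly when the first-order variable $x_i$ (resp.\ $y_j$) is interpreted as the domain element $a \in A$.

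I would then construct a Boolean circuit $C$ of the form
\[
C \;=\; \mtext{Valid}_P \;\wedge\; \bigl( \neg \mtext{Valid}_Q \vee \widetilde{\psi} \bigr),
\]
where $\mtext{Valid}_P = \bigwedge_{i,\, a \neq a'} \neg(p_{i,a} \wedge p_{i,a'})$ states that at most one $p_{i,a}$ per index $i$ is true (so that a weight-$k$ assignment satisfying $\mtext{Valid}_P$ corresponds, by a counting argument, to exactly one choice of domain element for each $x_i$, i.e.\ to a function $\mu : \SBs x_1,\dotsc,x_k \SEs \to A$); the formula $\mtext{Valid}_Q$ analogously states that exactly one $q_{j,a}$ per $j$ is true (encoding a function $\nu : \SBs y_1,\dotsc,y_n \SEs \to A$); and $\widetilde{\psi}$ is obtained from $\psi$ by replacing each atom $R(t_1,\dotsc,t_r)$ with $\bigvee_{(a_1,\dotsc,a_r) \in R^{\AAA}} \bigwedge_{\ell=1}^{r} \rho(t_\ell, a_\ell)$, where $\rho(x_i, a) = p_{i,a}$ and $\rho(y_j, a) = q_{j,a}$ (equality atoms and negations are handled analogously). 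The parameter remains $k$, and the circuit size is polynomial in the input size.

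For correctness, any weight-$k$ assignment $\alpha$ satisfying $\mtext{Valid}_P$ uniquely determines some $\mu$, and any $\beta$ satisfying $\mtext{Valid}_Q$ uniquely determines some $\nu$; a direct structural induction on $\psi$ shows that $\widetilde{\psi}[\alpha_\mu \cup \beta_\nu]$ is true iff $\AAA, \mu \cup \nu \models \psi$. If $\AAA \models \varphi$, pick a witnessing $\mu$ and set $\alpha = \alpha_\mu$; for any $\beta$, either $\beta$ violates $\mtext{Valid}_Q$ (then $\neg \mtext{Valid}_Q$ holds) or $\beta = \beta_\nu$ (then $\widetilde{\psi}$ holds by choice of $\mu$). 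Conversely, if some weight-$k$ $\alpha$ makes $C[\alpha,\beta]$ true for every $\beta$, the outer conjunct forces $\alpha = \alpha_\mu$ for some $\mu$, and since $\beta_\nu$ is valid for every $\nu$, we must have $\widetilde{\psi}[\alpha_\mu \cup \beta_\nu]$ true, i.e.\ $\mu$ witnesses $\AAA \models \varphi$.

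The main subtlety is the asymmetric treatment of ``invalid'' assignments: invalid $\alpha$ must not be allowed to trivially witness the circuit, which is why $\mtext{Valid}_P$ appears as a top-level conjunct (making $C$ identically~$0$ on invalid $\alpha$); in contrast, invalid $\beta$ must trivially satisfy the matrix, which is why $\neg \mtext{Valid}_Q$ appears as a disjunct inside (making $C$ identically~$1$ on invalid $\beta$, regardless of $\widetilde{\psi}$). Once this asymmetry is in place, the remainder is a routine encoding of the first-order atoms using the standard ``one Boolean per (variable, domain element) pair'' translation, and the fact that the reduction is an fpt-reduction follows because its running time is polynomial in the input and the parameter is preserved.
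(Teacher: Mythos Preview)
Your proposal is correct and follows essentially the same approach as the paper: both give an fpt-reduction to $\EkAWSat{}$ using one propositional variable per (first-order variable, domain element) pair, with a top-level validity conjunct on the existential block and a validity guard appearing as the antecedent of an implication on the universal block, and with atoms $R(\overline{t})$ unfolded as a disjunction over the tuples of $R^{\AAA}$. The only cosmetic difference is that the paper's $\psi'_{\mtext{unique-}X'}$ enforces \emph{exactly} one $p_{i,a}$ per $i$, whereas your $\mtext{Valid}_P$ enforces only \emph{at most} one and recovers ``exactly one'' from the weight-$k$ constraint via pigeonhole; both are fine.
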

\begin{proof}
We show \EkA{}-membership of \EkAMC{} by giving an fpt-reduction
to \EkAWSat{}.
Let~$(\varphi,\AAA)$ be an instance of \EkAMC{},
where~$\varphi = \exists x_1,\dotsc,x_k. \forall y_1,\dotsc,y_n. \psi$
is a first-order logic sentence over vocabulary~$\tau$,
and~$\AAA$ is a~$\tau$-structure with domain~$A$.
We assume without loss of generality that~$\psi$ contains only
connectives~$\wedge$ and~$\neg$.

We construct an instance~$(\varphi',k)$ of \EkAWSat{},
where~$\varphi$ is of the form~$\exists X'. \forall Y'. \psi'$.
We define:
\[ \begin{array}{r l}
  X' = &
    \SB x'_{i,a} \SM 1 \leq i \leq k, a \in A \SE, \mtext{ and} \\
  Y' = &
    \SB y'_{j,a} \SM 1 \leq j \leq n, a \in A \SE.
\end{array} \]
In order to define~$\psi'$,
we will use the following auxiliary function~$\mu$
on subformulas of~$\psi$:
\[
  \mu(\chi) = \begin{dcases*}
    \mu(\chi_1) \wedge \mu(\chi_2) & if $\chi = \chi_1 \wedge \chi_2$, \\
    \neg \mu(\chi_1) & if $\chi = \neg \chi_1$, \\
    \underset{1 \leq i \leq u}{\bigvee}
      \left (
        \psi_{z_1,a^i_1} \wedge \dotsm \wedge \psi_{z_m,a^i_m}
      \right )
      & if $\chi = R(z_1,\dotsc,z_m)$ and $R^{\mathcal{A}} =
      \SBs (a^1_1,\dotsc,a^1_m),\dotsc,(a^u_1,\dotsc,a^u_m) \SEs$, \\
  \end{dcases*}
\]
where for each~$z \in X \cup Y$ and each~$a \in A$ we define:
\[
  \psi_{z,a} = \begin{dcases*}
    x'_{i,a} & if $z = x_i$, \\
    y'_{j,a} & if $z = y_j$. \\
  \end{dcases*}
\]
Now, we define~$\psi'$ as follows:
\[ \begin{array}{r l}
  \psi' = &
    \psi'_{\mtext{unique-}X'} \wedge
    \left (
      \psi'_{\mtext{unique-}Y'} \rightarrow \mu(\psi)
    \right ), \mtext{ where} \\[5pt]
  \psi'_{\mtext{unique-}X'} = &
    \underset{1 \leq i \leq k}{\bigwedge}
    \left (
      \underset{a \in A}{\bigvee} x'_{i,a} \wedge
      \underset{a \neq a'}{
      \underset{a,a' \in A}{\bigwedge}} (\neg x'_{i,a} \vee \neg x'_{i,a'})
    \right ), \mtext{ and} \\
  \psi'_{\mtext{unique-}Y'} = &
    \underset{1 \leq j \leq n}{\bigwedge}
    \left (
      \underset{a \in A}{\bigvee} y'_{j,a} \wedge
      \underset{a \neq a'}{
      \underset{a,a' \in A}{\bigwedge}} (\neg y'_{j,a} \vee \neg y'_{j,a'})
    \right ). \\
\end{array} \]
We show that~$(\AAA,\varphi) \in \EkAMC{}$ if and only if
$(\varphi',k) \in \EkAWSat{}$.

$(\Rightarrow)$
Assume that there exists an assignment
$\alpha : \SBs x_1,\dotsc,x_k \SEs \rightarrow A$
such that~$\AAA,\alpha \models \forall y_1,\dotsc,y_n. \psi$.
We define the assignment~$\alpha' : X' \rightarrow \SBs 0,1 \SEs$
where~$\alpha'(x'_{i,a}) = 1$ if and only if~$\alpha(x_i) = a$.
Clearly,~$\alpha'$ has weight~$k$.
Also, note that~$\alpha'$ satisfies~$\psi'_{\mtext{unique-}X'}$.
Now, let~$\beta' : Y' \rightarrow \SBs 0,1 \SEs$ be an arbitrary assignment.
We show that~$\alpha' \cup \beta'$ satisfies~$\psi'$.
We distinguish two cases: either (i) for each~$1 \leq j \leq n$,
there is a unique~$a_{j} \in A$ such that~$\beta'(y'_{j,a_{j}}) = 1$,
or (ii) this is not the case.
In case (i),~$\alpha' \cup \beta'$ satisfies~$\psi'_{\mtext{unique-}Y'}$,
so we have to show that~$\alpha' \cup \beta'$ satisfies~$\mu(\psi)$.
Define the assignment~$\beta : \SBs y_1,\dotsc,y_n \SEs \rightarrow A$
by letting~$\beta(y_j) = a_j$.
We know that~$\AAA,\alpha \cup \beta \models \psi$.
It is now straightforward to show by induction on the structure of~$\psi$
that for each subformula~$\chi$ of~$\psi$ holds that
that~$\alpha' \cup \beta'$ satisfies~$\mu(\chi)$
if and only if~$\AAA,\alpha \cup \beta \models \chi$.
We then know in particular that~$\alpha' \cup \beta'$ satisfies~$\mu(\psi)$.
In case (ii), we know that~$\alpha' \cup \beta'$ does not satisfy
$\psi'_{\mtext{unique-}Y'}$, and therefore~$\alpha' \cup \beta'$ satisfies~$\psi'$.
This concludes our proof that~$(\varphi',k) \in \EkAWSat{}$.

$(\Leftarrow)$
Assume that there exists an assignment~$\alpha : X' \rightarrow \SBs 0,1 \SEs$
of weight~$k$ such that~$\forall Y'. \psi'[\alpha]$ is true.
Since~$\psi'_{\mtext{unique-}X'}$ contains only variables in~$X'$,
we know that~$\alpha$ satisfies~$\psi'_{\mtext{unique-}X'}$.
From this, we can conclude that for each~$1 \leq i \leq k$,
there is some unique~$a_i \in A$ such that~$\alpha(x'_{i,a_i}) = 1$.
Now, define the assignment~$\alpha' : \SBs x_1,\dotsc,x_k \SEs \rightarrow A$
by letting~$\alpha'(x_i) = a_i$.

We show that~$\AAA,\alpha' \models \forall y_1,\dotsc,y_n. \psi$.
Let~$\beta' : \SBs y_1,\dotsc,y_n \SEs \rightarrow A$ be an arbitrary assignment.
We define~$\beta : Y' \rightarrow \SBs 0,1 \SEs$ by letting
$\beta(y'_{i,a}) = 1$ if and only if~$\beta(y_i) = a$.
It is straightforward to verify that~$\beta$ satisfies~$\psi'_{\mtext{unique-}Y'}$.
We know that~$\alpha \cup \beta$ satisfies~$\psi'$,
so therefore~$\alpha \cup \beta$ satisfies~$\mu(\psi)$.
It is now straightforward to show by induction on the structure of~$\psi$
that for each subformula~$\chi$ of~$\psi$ holds that
that~$\alpha \cup \beta$ satisfies~$\mu(\chi)$
if and only if~$\AAA,\alpha' \cup \beta' \models \chi$.
We then know in particular that~$\AAA,\alpha' \cup \beta' \models \psi$.
This concludes our proof that~$(\AAA,\varphi) \in \EkAMC{}$.
\end{proof}

\begin{proposition}
\label{prop:modelchecking-hardness}
\EkAMC{} is \EkA{}-hard.
\end{proposition}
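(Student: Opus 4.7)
The plan is to give a many-one fpt-reduction from $\EkAWSat(3\DNF)$ to $\EkAMC{}$; since $\EkAWSat(3\DNF)$ is $\EkA{}$-hard (by Theorem~\ref{thm:kstar-collapse} together with the definition of the \kstar{} hierarchy), this suffices. The central idea is that a weight-$k$ Boolean assignment to $X$ corresponds naturally to a choice of $k$ distinct domain elements (namely, the $k$ variables in $X$ that are set to true), which is exactly what the $k$ existential first-order variables of an $\EkAMC{}$ instance provide.

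Concretely, let $(\varphi,k)$ with $\varphi = \exists X.\forall Y.\psi$ and $\psi = \delta_1 \vee \dotsm \vee \delta_u$ be an instance of $\EkAWSat(3\DNF)$. I will build a structure $\AAA$ whose domain is $A = \SB e_x \SM x \in X \SE \cup \SB e_y \SM y \in Y \SE \cup \SBs 0,1 \SEs$, with unary predicates $\mathsf{isX}$ and $T$ interpreted so that $\mathsf{isX}^{\AAA} = \SB e_x \SM x \in X \SE$ and $T^{\AAA} = \SBs 1 \SEs$, together with one unary predicate $P_x$ (interpreted as $\SBs e_x \SEs$) for each $x \in X$ and one unary predicate $Q_y$ (interpreted as $\SBs e_y \SEs$) for each $y \in Y$. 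The corresponding sentence is
\[
\chi \;=\; \exists x_1,\dotsc,x_k.\; \forall b_1,\dotsc,b_m.\; \chi_{\mtext{dom}} \wedge \bigl(\chi_{\mtext{bool}} \rightarrow \mu(\psi)\bigr),
\]
where $\chi_{\mtext{dom}} = \bigwedge_{i} \mathsf{isX}(x_i) \wedge \bigwedge_{i<i'} x_i \neq x_{i'}$, where $\chi_{\mtext{bool}} = \bigwedge_{j=1}^{m}(T(b_j) \vee \neg T(b_j))$ restricted to force $b_j \in \SBs 0,1\SEs$ (which I would encode by a fresh predicate $\mathsf{isBool}$ and the conjunct $\mathsf{isBool}(b_j)$), and where $\mu$ replaces each positive literal $x \in X$ in $\psi$ by $\bigvee_{r=1}^{k} P_x(x_r)$, each negative literal $\neg x$ by $\bigwedge_{r=1}^{k}\neg P_x(x_r)$, each positive literal $y_j \in Y$ by $T(b_j)$, and each negative literal $\neg y_j$ by $\neg T(b_j)$.

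For the correctness argument, the forward direction takes a satisfying assignment $\alpha : X \to \SBs 0,1\SEs$ of weight $k$ and lets $\SBs x_{i_1},\dotsc,x_{i_k}\SEs$ be the set of true variables; assigning $x_r \mapsto e_{x_{i_r}}$ witnesses the existential part. Given any universal assignment $b_j \mapsto c_j \in A$, either some $c_j \notin \SBs 0,1\SEs$ (so $\chi_{\mtext{bool}}$ fails and the implication is trivially true), or the $c_j$'s define a Boolean assignment $\beta$ to $Y$, and then $\forall Y.\psi[\alpha]$ being true ensures $\mu(\psi)$ holds under the intended interpretation. The backward direction is symmetric: $\chi_{\mtext{dom}}$ forces the $x_r$'s to correspond to $k$ distinct variables in $X$ (yielding a weight-$k$ assignment $\alpha$), and ranging $b_1,\dotsc,b_m$ over all of $\SBs 0,1\SEs^{m}$ covers every Boolean assignment to $Y$.

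The main obstacle is bookkeeping rather than conceptual: one must be careful that the universal first-order quantifier ranges over the \emph{entire} domain $A$ (not just $\SBs 0,1\SEs$), which is why the $\chi_{\mtext{bool}} \rightarrow \dotsc$ implication is needed to harmlessly discard the ``junk'' assignments that send some $b_j$ to an element outside $\SBs 0,1 \SEs$. The parameter of the produced $\EkAMC{}$ instance is $k$, matching the parameter of the input, and the construction is clearly polynomial-time, so the reduction is an fpt-reduction as required.
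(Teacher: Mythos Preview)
Your reduction is correct for the proposition as stated: the vocabulary~$\tau$ is part of the input to $\EkAMC{}$, so letting it grow with the instance is permitted, and your translation of weight-$k$ assignments into $k$ distinct domain elements together with the $\mathsf{isBool}$-guarded implication for the universal part goes through cleanly. (The predicates~$Q_y$ you introduce are never used, and your first attempt $\bigwedge_j(T(b_j)\vee\neg T(b_j))$ is a tautology, but you already note the fix via $\mathsf{isBool}$, so these are cosmetic.)

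The paper takes a different route precisely to avoid the instance-dependent vocabulary. It fixes a single vocabulary $\tau$ with a handful of unary and binary relation symbols (encoding ``is an $X$-variable'', ``is a $Y$-variable'', ``is a term'', ``variable occurs in position~$d$ of term'', and a linear order on terms), and places the terms $\delta_\ell$ into the domain as first-class elements. To refer to individual terms without per-term predicates, it introduces additional universally quantified variables $w_1,\dotsc,w_u$ and uses the linear-order relation to pin each $w_\ell$ to $\delta_\ell$. This is more bookkeeping than your approach, but it buys a strictly stronger conclusion, which the paper states explicitly right after the proof: $\EkAMC{}$ is $\EkA{}$-hard already for a \emph{fixed} vocabulary containing only unary and binary relation symbols. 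Your reduction, with one predicate $P_x$ per variable $x\in X$, does not yield that corollary.
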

\begin{proof}
We show \EkA{}-hardness by giving an fpt-reduction
from~$\EkAWSat(\DNF)$.
Let~$(\varphi,k)$ specify an instance of~$\EkAWSat$,
where~$\varphi = \exists X. \forall Y. \psi$,~$X =
\SBs x_1,\dotsc,x_n \SEs$,~$Y =
\SBs y_1,\dotsc,y_m \SEs$,~$\psi =
\delta_1 \vee \dotsm \vee \delta_u$,
and for each~$1 \leq \ell \leq u$,~$\delta_{\ell} = 
l^{\ell}_1 \vee l^{\ell}_2 \vee l^{\ell}_3$.
We construct an instance~$(\AAA,\varphi')$ of \EkAMC{}.
In order to do so, we first fix the following vocabulary~$\tau$
(which does not depend on the instance~$(\varphi,k)$):
it contains unary relation symbols~$D$,~$X$ and~$Y$,
and binary relation symbols~$C_1$,~$C_2$,~$C_3$ and~$O$.
We construct the domain~$A$ of~$\AAA$ as follows:
\[ A = X \cup Y \cup \SB \delta_{\ell} \SM 1 \leq \ell \leq u \SE \cup \SBs \star \SEs. \]
Then, we define:
\[ \begin{array}{r l}
  D^{\AAA} =& \SB \delta_{\ell} \SM 1 \leq \ell \leq u \SE; \\
  X^{\AAA} =& X; \\
  Y^{\AAA} =& Y \cup \SBs \star \SEs ; \\
  C^{\AAA}_{d} =& \SB (\delta_{\ell},z) \SM 1 \leq \ell \leq u,
    z \in X \cup Y, l^{\ell}_{d} \in \SBs x, \neg x \SEs \SE
    \qquad \mtext{ for $1 \leq d \leq 3$; and} \\
  O^{\AAA} =& \SB (\delta_{\ell},\delta_{\ell'}) \SM 1 \leq \ell < \ell' \leq u \SE. \\
\end{array} \]
Intuitively, the relations~$D$,~$X$ and~$Y$ serve to distinguish the
various subsets of the domain~$A$.
The relations~$C_d$, for~$1 \leq d \leq 3$, encode (part of) the structure
of the matrix~$\psi$ of the formula~$\varphi$.
The relation~$O$ encodes a linear ordering on the terms~$\delta_\ell$.

We now define the formula~$\varphi'$ as follows:
\[ \varphi' = \exists u_1,\dotsc,u_k. \forall v_1,\dotsc,v_m. \forall w_1,\dotsc,w_u. \chi, \]
where we define~$\chi$ to be of the following form:
\[ \chi = \chi^{U}_{\mtext{proper}} \wedge (( \chi^{V}_{\mtext{proper}}
  \wedge \chi^{W}_{\mtext{exact}}) \rightarrow \chi_{\mtext{sat}}). \]
We will define the subformulas of~$\chi$ below.
Intuitively, the assignment of the variables~$u_i$ will correspond to an
assignment~$\alpha : X \rightarrow \SBs 0,1 \SEs$ of weight~$k$
that sets a variable~$x \in X$
to true if and only if some~$u_i$ is assigned to~$x$.
Similarly, any assignment of the variables~$v_i$ will correspond to an
assignment~$\beta : Y \rightarrow \SB 0,1 \SEs$ that sets~$y \in Y$ to
true if and only if
some~$v_i$ is assigned to~$y$.
The variables~$w_{\ell}$ will function to refer to the elements~$\delta_{\ell} \in A$.

The formula~$\chi^{U}_{\mtext{proper}}$ ensures that
the variables~$u_1,\dotsc,u_k$ select exactly~$k$ different elements from~$X$.
We define:
\[ \chi^{U}_{\mtext{proper}} = \bigwedge\limits_{1 \leq i \leq k} X(u_i) \wedge
\bigwedge\limits_{1 \leq i < i' \leq k} (u_i \neq u_{i'}). \]
For the sake of clarity, we use the formula~$\chi^{V}_{\mtext{proper}}$ to check
whether each variable~$v_i$ is assigned to a value in~$Y \cup \SBs \star \SEs$.
We define:
\[ \chi^{V}_{\mtext{proper}} = \bigwedge\limits_{1 \leq i \leq m} Y(v_i). \]
Next, the formula~$\chi^{W}_{\mtext{exact}}$ encodes whether the
variables~$w_1,\dotsc,w_u$ get assigned exactly to the
elements~$\delta_{1},\dotsc,\delta_{u}$ (and also in that order,
i.e.,~$w_{\ell}$ gets assigned~$\delta_{\ell}$ for each~$1 \leq \ell \leq u$).
We let:
\[ \chi^{W}_{\mtext{exact}} = \bigwedge\limits_{1 \leq \ell \leq u} D(w_{\ell}) \wedge
\bigwedge\limits_{1 \leq \ell < \ell' \leq u} O(w_{\ell},w_{\ell'}). \]
Finally, we can turn to the formula~$\chi_{\mtext{sat}}$, which represents
whether the assignments~$\alpha$ and~$\beta$ represented by the assignment
to the variables~$u_i$ and~$v_j$ satisfies~$\psi$.
We define:
\[ \chi_{\mtext{sat}} = \bigvee\limits_{1 \leq \ell \leq u} \chi^{\ell}_{\mtext{sat}}, \]
where we let:
\[ \chi^{\ell}_{\mtext{sat}} = \chi^{\ell,1}_{\mtext{sat}} \wedge
\chi^{\ell,2}_{\mtext{sat}} \wedge \chi^{\ell,3}_{\mtext{sat}}, \]
and for each~$1 \leq d \leq 3$ we let:
\[ \chi^{\ell,d}_{\mtext{sat}} = \begin{dcases*}
  \bigvee\limits_{1 \leq j \leq k} C_{d}(w_{\ell},u_j) & if $l^{\ell}_{d} = x \in X$, \\
  \bigwedge\limits_{1 \leq j \leq k} \neg C_{d}(w_{\ell},u_j) & if $l^{\ell}_{d} = \neg x$ for some $x \in X$, \\
  \bigvee\limits_{1 \leq j \leq m} C_{d}(w_{\ell},v_j) & if $l^{\ell}_{d} = y \in Y$, \\
  \bigwedge\limits_{1 \leq j \leq m} \neg C_{d}(w_{\ell},v_j) & if $l^{\ell}_{d} = \neg y$ for some $y \in X$. \\
\end{dcases*} \]
Intuitively, for each~$1 \leq \ell \leq u$ and each~$1 \leq d \leq 3$,
the formula~$\chi^{\ell,d}_{\mtext{sat}}$ will be satsified by the assignments
to the variables~$u_i$ and~$v_i$ if and only if the corresponding
assignments~$\alpha$ to~$X$ and~$\beta$ to~$Y$ satisfy~$l^{\ell}_{d}$.

It is straightforward to verify that the instance~$(\AAA,\varphi')$
can be constructed in polynomial time.
We show that~$(\varphi,k) \in \EkAWSat(3\DNF)$ if and only
if~$(\AAA,\varphi') \in \EkAMC$.

$(\Rightarrow)$
Assume that there exists an assignment~$\alpha : X \rightarrow \SBs 0,1 \SEs$
of weight~$k$ such that~$\forall Y. \psi[\alpha]$ is true.
We show that~$\AAA \models \varphi'$.
Let~$\SB x \in X \SM \alpha(x) = 1 \SE = \SBs x_{i_1},\dotsc,x_{i_k} \SEs$.
We define the assignment~$\mu : \SBs u_1,\dotsc,u_k \SEs \rightarrow A$
by letting~$\mu(x_j) = x_{i_j}$ for all~$1 \leq j \leq k$.
It is straightforward to verify that~$\AAA,\mu \models \psi^{U}_{\mtext{proper}}$.
Now, let~$\nu : \SBs y_1,\dotsc,y_m,w_1,\dotsc,w_u \SEs \rightarrow A$
be an arbitrary assignment.
We need to show that~$\AAA, \mu \cup \nu \models \chi$,
and we thus need to show that
$\AAA, \mu \cup \nu \models ( \chi^{V}_{\mtext{proper}}
\wedge \chi^{W}_{\mtext{exact}}) \rightarrow \chi_{\mtext{sat}}$.
We distinguish several cases:
either (i)~$\nu(y_i) \not\in Y \cup \SBs \star \SEs$ for some~$1 \leq i \leq m$,
or (ii) the above is not the case and~$\nu(w_\ell) \neq \delta_{\ell}$ for
some~$1 \leq \ell \leq u$,
or (iii) neither of the above is the case.
In case (i), it is straightforward to verify that~$\AAA, \mu \cup \nu \models
\neg \chi^{V}_{\mtext{proper}}$.
In case (ii), it is straightforward to verify that~$\AAA, \mu \cup \nu \models
\neg \chi^{W}_{\mtext{exact}}$.
Consider case (iii).
We construct the assignment~$\beta : Y \rightarrow \SBs 0,1 \SEs$
by letting~$\beta(y) = 1$ if and only if~$\nu(v_i) = y$ for some~$1 \leq i \leq m$.
We know that~$\alpha \cup \beta$ satisfies~$\psi$, and thus in particular
that~$\alpha \cup \beta$ satisfies some term~$\delta_{\ell}$.
It is now straightforward to verify that
$\AAA, \mu \cup \nu \models \chi^{\ell}_{\mtext{sat}}$,
and thus that~$\AAA, \mu \cup \nu \models \chi_{\mtext{sat}}$.
This concludes our proof that~$\AAA \models \varphi'$.

$(\Leftarrow)$
Assume that~$\AAA \models \varphi'$.
We show that~$(\varphi,k) \in \EkAWSat(3\DNF)$.
We know that there exists an assignment
$\mu : \SBs u_1,\dotsc,u_k \SEs \rightarrow A$
such that~$\AAA,\mu \models \forall u_1,\dotsc,u_m. \forall w_1,\dotsc,w_u. \chi$.
Since~$\AAA,\mu \models \chi^{U}_{\mtext{proper}}$,
we know that~$\mu$ assigns the variables~$u_i$ to~$k$ many
different values~$x \in X$.
Define~$\alpha : X \rightarrow \SBs 0,1 \SEs$ by letting~$\alpha(x) = 1$
if and only if~$\mu(u_i) = x$ for some~$1 \leq i \leq k$.
Clearly,~$\alpha$ has weight~$k$.
Now, let~$\beta : Y \rightarrow \SBs 0,1 \SEs$ be an arbitrary assignment.
Construct the assignment~$\nu : \SBs v_1,\dotsc,v_m,w_1,\dotsc,w_u \SEs$
as follows.
For each~$1 \leq i \leq m$, we let~$\nu(v_i) = y_i$ if~$\beta(y_i) = 1$,
and we let~$\nu(v_i) = \star$ otherwise.
Also, for each~$1 \leq \ell \leq u$, we let~$\nu(w_\ell) = \delta_{\ell}$.
It is straightforward to verify that
$\AAA,\mu \cup \nu \models
\chi^{V}_{\mtext{proper}} \wedge \chi^{W}_{\mtext{exact}}$.
Therefore, we know that~$\AAA,\mu \cup \nu \models \chi_{\mtext{sat}}$,
and thus that for some~$1 \leq \ell \leq u$ it holds that
$\AAA,\mu \cup \nu \models \chi^{\ell}_{\mtext{sat}}$.
It is now straightforward to verify that~$\alpha \cup \beta$
satisfies~$\delta_{\ell}$.
Since~$\beta$ was arbitrary, this concludes our proof
that~$(\varphi,k) \in \EkAWSat$.
\end{proof}

\noindent The problem \EkAMC{} takes the relational vocabulary~$\tau$
over which the structure~$\AAA$ and the first-order logic sentence~$\varphi$
are defined as part of the input.
However, the proof of Proposition~\ref{prop:modelchecking-hardness} shows
that the problem \EkAMC{} is \EkA{}-hard already when the vocabulary~$\tau$
is fixed and contains only unary and binary relation symbols.
}
}

\krlongversion{
\longversion{

\subsection{Alternating Turing Machine Characterization}
\label{sec:atm-char}
Next, we give another equivalent characterization of the class \EkA{},
by means of alternating Turing machines
(cf.~\cite{Papadimitriou94}).
We use the same notation as Flum and 
Grohe~\cite[Appendix~A.1]{FlumGrohe06}.

\paragraph{Definitions}

Let~$m \geq 1$ be a positive integer.
An \emph{alternating Turing machine (ATM)} with~$m$ tapes is a
$6$-tuple~$\mathbb{M} = (S_{\exists},S_{\forall},\Sigma,\Delta,s_0,F)$, where:
\begin{itemize}
  \item $S_{\exists}$ and~$S_{\forall}$ are disjoint sets;
  \item $S = S_{\exists} \cup S_{\forall}$ is the finite set of \emph{states};
  \item $\Sigma$ is the alphabet;
  \item $s_0 \in S$ is the \emph{initial state};
  \item $F \subseteq S$ is the set of \emph{accepting states};
  \item $\Delta \subseteq S \times (\Sigma \cup \SBs \$,\Box \SEs)^m \times S \times
    (\Sigma \cup \SBs \$ \SEs)^m \times \SBs \mathbf{L},\mathbf{R},\mathbf{S} \SEs^{m}$
    is the \emph{transition relation}. The elements of~$\Delta$ are the \emph{transitions}.    
  \item $\$,\Box \not\in \Sigma$ are special symbols.
    ``$\$$'' marks the left end of any tape. It cannot be overwritten and only allows
    $\mathbf{R}$-transitions.\footnote{To formally achieve that ``$\$$''
    marks the left end of the tapes, whenever
    $(s,(a_1,\dotsc,a_m),s',(a'_1,\dotsc,a'_m),(d_1,\dotsc,d_m)) \in \Delta$,
    then for all~$1 \leq i \leq m$ we have that~$a_i = \$$
    if and only if~$a'_i = \$$ and
    that~$a_i = \$$ implies~$d_i = \mathbf{R}$.}
    ``$\Box$'' is the \emph{blank symbol}.
\end{itemize}

Intuitively, the tapes of our machine are bounded to the left and unbounded to the right.
The leftmost cell, the~$0$-th cell, of each tape carries a ``$\$$'',
and initially, all other tape cells carry the blank symbol.
The input is written on the first tape, starting with the first cell,
the cell immediately to the right of the ``$\$$''.

A \emph{configuration} is a tuple~$C = (s,x_1,p_1,\dotsc,x_m,p_m)$,
where~$s \in S$,~$x_i \in \Sigma^{*}$, and~$0 \leq p_i \leq \Card{x_i}+1$
for each~$1 \leq i \leq k$.
Intuitively,~$\$x_i\Box\Box\dotsc$ is the sequence of symbols in the
cells of tape~$i$,
and the head of tape~$i$ scans the~$p_i$-th cell.
The \emph{initial configuration} for an input~$x \in \Sigma^{*}$
is~$C_0(x) = (s_0,x,1,\epsilon,1,\dotsc,\epsilon,1)$,
where~$\epsilon$ denotes the empty word.

A \emph{computation step} of~$\mathbb{M}$ is a pair~$(C,C')$ of configurations
such that the transformation from~$C$ to~$C'$ obeys the transition relation.
We omit the formal details.
We write~$C \rightarrow C'$ to denote that~$(C,C')$ is a computation step
of~$\mathbb{M}$.
If~$C \rightarrow C'$, we call~$C'$ a \emph{successor configuration} of~$C$.
A \emph{halting configuration} is a configuration that has no successor configuration.
A halting configuration is \emph{accepting} if its state is in~$F$.
A step~$C \rightarrow C'$ is \emph{nondeterministic} if there is
a configuration~$C'' \neq C'$ such that~$C \rightarrow C''$,
and is \emph{existential} if~$C$ is an existential configuration.
A state~$s \in S$ is called \emph{deterministic} if for
any~$a_1,\dotsc,a_m \in \Sigma \cup \SBs \$,\Box \SEs$,
there is at most one~$(s,(a_1,\dotsc,a_m),s',(a'_1,\dotsc,a'_m),(d_1,\dotsc,d_m)) \in \Delta$.
Similarly, we call a non-halting configuration \emph{deterministic} if its state is deterministic,
and \emph{nondeterministic} otherwise.

A configuration is called \emph{existential} if it is not a halting configuration and its state
is in~$S_{\exists}$, and \emph{universal} if it is not a halting configuration and its state
is in~$S_{\forall}$. Intuitively, in an existential configuration, there must be one
possible run that leads to acceptance, whereas in a universal configuration,
all runs must lead to acceptance. Formally, a \emph{run} of an
ATM~$\mathbb{M}$
is a directed tree where each node is labeled with a configuration
of~$\mathbb{M}$
such that:
\begin{itemize}
  \item The root is labeled with an initial configuration.
  \item If a vertex is labeled with an existential configuration~$C$,
    then the vertex has precisely one child that is labeled with a successor
    configuration of~$C$.
  \item If a vertex is labeled with a universal configuration~$C$,
    then for every successor configuration~$C'$ of~$C$ the vertex has a child
    that is labeled with~$C'$.
\end{itemize}
We often identify nodes of the tree with the configurations with which they are labeled.
The run is \emph{finite} if the tree is finite, and \emph{infinite} otherwise.
The \emph{length} of the run is the height of the tree.
The run is \emph{accepting} if it is finite and every leaf is labeled
with an accepting configuration.
If the root of a run~$\rho$ is labeled with~$C_0(x)$,
then~$\rho$ is a run \emph{with input~$x$}.
Any path from the root of a run~$\rho$ to a leaf of~$\rho$
is called a \emph{computation path}.

The \emph{language (or problem) accepted by~$\mathbb{M}$}
is the set~$Q_{\mathbb{M}}$ of all~$x \in \Sigma^{*}$
such that there is an accepting run of~$\mathbb{M}$
with initial configuration~$C_0(x)$.
\emph{$\mathbb{M}$ runs in time~$t : \mathbb{N} \rightarrow \mathbb{N}$}
if for every~$x \in \Sigma^{*}$ the length of every run
of~$\mathbb{M}$ with input~$x$ is at most~$t(\Card{x})$.

A \emph{step}~$C \rightarrow C'$ is an \emph{alternation}
if either~$C$ is existential and~$C'$ is universal, or vice versa.
A run~$\rho$ of~$\mathbb{M}$ is \emph{$\ell$-alternating}, for
an~$\ell \in \mathbb{N}$,
if on every path in the tree associated with~$\rho$,
there are less than~$\ell$ alternations between existential and universal configurations.
The machine~$\mathbb{M}$ is \emph{$\ell$-alternating}
if every run of~$\mathbb{M}$ is $\ell$-alternating.

We now consider two particular types of ATMs.
An \emph{\EA{}-Turing machine} (or simply {\EA{}-machine})
is a~$2$-alternating ATM~$(S_{\exists},S_{\forall},\Sigma,\Delta,s_0,F)$,
where~$s_0 \in S_{\exists}$.
Let~$\ell,t \geq 1$ be positive integers.
We say that an \EA{}-machine \emph{$\mathbb{M}$ halts (on the empty string)
with existential cost~$\ell$ and universal cost~$t$}
if:
\begin{itemize}
  \item there is an accepting run of~$\mathbb{M}$ with input~$\epsilon$,
  \item each computation path of~$\mathbb{M}$ contains at
    most~$\ell$ existential configurations
    and at most~$t$ universal configurations.
\end{itemize}

Let~$P$ be a parameterized problem.
An \emph{\EkA{}-machine for~$P$} is a \EA{}-machine~$\mathbb{M}$
such that there exists a computable function~$f$
and a polynomial~$p$ such that
\begin{itemize}
  \item $\mathbb{M}$ decides~$P$ in time~$f(k) \cdot p(\Card{x})$; and
  \item and for all instances~$(x,k)$ of~$P$
    and each computation path~$R$ of~$\mathbb{M}$ with input~$(x,k)$,
    at most~$f(k) \cdot \log \Card{x}$ of the existential configurations
    of~$R$ are nondeterministic.
\end{itemize}
We say that a parameterized problem~$P$ \emph{is decided by some
$\EkA$-machine} if there exists a \EkA{}-machine for~$P$.

Let~$m \in \mathbb{N}$ be a positive integer.
We consider the following parameterized decision problems.
\probdef{
  \EkATMhalt{*}.
  
  \emph{Instance:} An~$\exists\forall$-machine~$\mathbb{M}$
    with~$m$ tapes,
    and positive integers~$k,t \geq 1$.

  \emph{Parameter:} $k$.

  \emph{Question:} Does~$\mathbb{M}$ halt
    on the empty string
    with existential cost~$k$ and universal cost~$t$?
}
\probdef{
  \EkATMhalt{m}.
  
  \emph{Instance:} An $\exists\forall$-machine~$\mathbb{M}$
    with~$m$ tapes,
    and positive integers~$k,t \geq 1$.

  \emph{Parameter:} $k$.

  \emph{Question:} Does~$\mathbb{M}$ halt
    on the empty string
    with existential cost~$k$ and universal cost~$t$?
}
\noindent Note that for \EkATMhalt{m}, the number~$m$ of tapes
of the \EA{}-machines in the input is a fixed constant,
whereas for \EkATMhalt{*}, the number of tapes is given as part of the input.

The parameterized complexity class \EkA{}
is characterized by alternating Turing machines in the following way.
These results can be seen as an analogue to the Cook-Levin Theorem
for the complexity class \EkA{}.
\begin{theorem}
\label{thm:atm-char1}
The problem \EkATMhalt{*} is \EkA{}-complete,
and so is the problem \EkATMhalt{m} for each~$m \in \mathbb{N}$.
\end{theorem}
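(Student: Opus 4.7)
The plan is to prove both membership and hardness by fpt-reductions between \EkATMhalt{*} (or \EkATMhalt{m}) and the canonical problem \EkAWSat{}, which is \EkA{}-complete by definition. Since any $m$-tape \EA{}-machine can be regarded as an $m'$-tape machine for $m' \geq m$ (by never using the extra tapes), hardness of \EkATMhalt{m_0} for some small constant $m_0$ implies hardness of \EkATMhalt{m} for every $m \geq m_0$ and of \EkATMhalt{*}; and the membership argument will be uniform in $m$.

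For \textbf{hardness}, I would fpt-reduce \EkAWSat{} to \EkATMhalt{m_0} for a small $m_0$ (e.g., $m_0 = 1$ or $m_0 = 2$). Given~$(\exists X. \forall Y. \psi, k)$ with $X = \SBs x_1,\dotsc,x_n \SEs$ and $Y = \SBs y_1,\dotsc,y_\ell \SEs$, I would build an \EA{}-machine $\mathbb{M}$ whose existential phase uses states $s_0^\exists, s_1^\exists, \dotsc, s_k^\exists$ with, from each $s_{i-1}^\exists$, exactly $n$ existential transitions to $s_i^\exists$ that write the chosen variable index onto the tape. The key observation is that since $\mathbb{M}$ is part of the input, the branching factor at a single existential configuration is allowed to be polynomial in $n$, so one existential configuration per guess suffices: the existential cost is therefore exactly $k$, matching the parameter. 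After $k$ existential steps, $\mathbb{M}$ enters $S_\forall$, universally guesses an assignment to $Y$ in $\ell$ configurations (branching $2$-ways), and then deterministically evaluates $\psi$ on the composite assignment in polynomially many universal steps; set $t$ to be this polynomial bound. Then $\mathbb{M}$ halts with existential cost $k$ and universal cost $t$ iff $(\exists X. \forall Y. \psi, k) \in \EkAWSat{}$, and the parameter is preserved.

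For \textbf{membership}, I would fpt-reduce \EkATMhalt{*} to \EkAWSat{}. Given $(\mathbb{M}, k, t)$, for every existential configuration position $1 \leq i \leq k$ and every transition $j$ available at that position, introduce an existential variable $e_{i,j} \in X'$; introduce universal variables $Y'$ encoding (a) the branch taken at each universal configuration, and (b) tape contents, head positions, and states at each of the $O(k+t)$ time steps. Build $\psi' = \psi'_{\mtext{choice}} \wedge \psi'_{\mtext{sim}}$ where $\psi'_{\mtext{choice}}$ is the conjunction of at-most-one constraints $(\neg e_{i,j} \vee \neg e_{i,j'})$ for $j \neq j'$, and $\psi'_{\mtext{sim}}$ is the standard Cook--Levin-style encoding asserting that the $Y'$-variables describe a valid computation of $\mathbb{M}$ whose existential choices agree with the $e_{i,j}$ and that this computation halts in an accepting configuration. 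Since the \EkAWSat{} semantics fix the weight of the $X'$-assignment to exactly $k$, the at-most-one constraints force exactly one transition per existential step; and any weight-$k$ assignment violating these constraints makes $\psi'$ false for every $Y'$, so does not spuriously satisfy $\exists X'. \forall Y'. \psi'$. The parameter is $k$ and the formula is polynomial in $\Card{\mathbb{M}} + k + t$.

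The \textbf{main obstacle} is getting the bookkeeping right in the hardness direction: we must encode ``guess one of $n$ variable indices'' using only a single existential configuration (not $\log n$), so as not to blow past the existential-cost budget of $k$. This works precisely because the description of $\mathbb{M}$ is part of the input and may contain a transition relation with polynomial out-degree at a single $(s,a)$-pair, a feature that distinguishes \EkA{}-machines from the more usual binary-branching models and is what ultimately lets the Turing-machine characterization match the weight-$k$ quantifier structure of \EkAWSat{}. Once this design is in place, restricting to $m_0 = 1$ tape is routine (all phases are sequential reads/writes on a single tape), yielding completeness of \EkATMhalt{m} for every $m \geq 1$ and of \EkATMhalt{*} simultaneously.
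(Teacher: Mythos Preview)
Your overall plan---direct fpt-reductions between \EkATMhalt{*} and \EkAWSat{}---is a reasonable and somewhat more elementary route than the paper's. The paper establishes membership by reducing \EkATMhalt{*} to the first-order model-checking problem \EkAMC{} (Theorem~\ref{thm:modelchecking-completeness}), and establishes hardness indirectly: it shows that \EleqkAWSat{} is decided by a single-tape \EkA{}-machine, that every problem so decided fpt-reduces to \EkATMhalt{m+1}, and then invokes a standard multitape-to-single-tape simulation to get down to $m=1$. Your direct approach sidesteps both the first-order-logic detour and the full machine characterization of \EkA{}.

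However, your membership reduction as written is broken. You describe $\psi'_{\mtext{sim}}$ as ``asserting that the $Y'$-variables describe a valid computation \ldots\ and that this computation halts in an accepting configuration''---i.e., a conjunction $\mtext{valid}(Y') \wedge \mtext{accept}(Y')$. But your trace variables live in the universal block, so for any fixed $\alpha$ on $X'$ almost every assignment $\beta$ to $Y'$ fails to encode a consistent trace, making $\psi'[\alpha\cup\beta]$ false and hence $\forall Y'.\,\psi'[\alpha]$ false; no $\alpha$ can then witness the existential. The fix is to make $\psi'_{\mtext{sim}}$ the implication $(\mtext{trace consistent with the }e_{i,j}\mtext{ and the universal choices}) \rightarrow (\mtext{final state accepting})$, so that garbage traces are vacuously satisfied; this is precisely the structure the paper uses in its reduction to \EkAMC{}. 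The conjunction form works in the ordinary Cook--Levin reduction to \thSAT{} only because there the trace variables are existentially quantified.

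A smaller gap in your hardness direction: nothing prevents the $k$ existential guesses from repeating an index, so the induced assignment may have weight strictly less than $k$, and your machine could accept even when no weight-exactly-$k$ witness exists for $\varphi$. Either insert a deterministic distinctness check at the start of the universal phase (rejecting on any repeat), or---as the paper in effect does---reduce from \EleqkAWSat{} rather than \EkAWSat{}.
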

\begin{theorem}
\label{thm:atm-char2}
\EkA{} is exactly the class of parameterized decision problems~$P$
that are decided by some \EkA{}-machine.
\end{theorem}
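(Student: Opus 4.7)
The plan is to show both inclusions of the claimed identity. By Theorem~\ref{thm:atm-char1}, \EkATMhalt{*} is \EkA{}-complete under fpt-reductions, so it suffices to verify two things: (i) the class of problems decided by some \EkA{}-machine is closed under fpt-reductions and contains \EkATMhalt{*}, and (ii) every problem decided by some \EkA{}-machine admits an fpt-reduction to \EkATMhalt{*}.

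For (i), closure under fpt-reductions is established by a standard composition: given an fpt-reduction $R$ computable in time $g(k)|x|^c$ and an \EkA{}-machine $\mathbb{M}_Q$ for the target problem $Q$, one builds an \EkA{}-machine for the source $P$ that deterministically computes $R(x,k)=(x',k')$ (where $k' \le h(k)$ and $|x'| \le g(k)|x|^c$) and then runs $\mathbb{M}_Q$ on $(x',k')$; the original nondeterministic-existential bound $f(k')\log|x'|$ translates into an $f'(k)\log|x|$ bound after absorbing the polynomial blow-up into the logarithm. That \EkATMhalt{*} itself is decided by an \EkA{}-machine is witnessed by a universal simulator: on input $(\mathbb{M},k,t)$, it existentially guesses which of the at most $|\Delta| \le |\mathbb{M}|$ outgoing transitions to follow at each of the (at most $k$) existential configurations of $\mathbb{M}$, using a total of $k \log |\mathbb{M}|$ nondeterministic existential bits (well within the $f(k)\log|x|$ budget), and then deterministically explores every universal subcomputation of depth up to $t$.

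For (ii), given an \EkA{}-machine $\mathbb{M}$ deciding $P$ and an input $(x,k)$, one constructs an ATM $\mathbb{M}'$ that first writes $(x,k)$ on its tape deterministically and then simulates $\mathbb{M}$. The main technical point is that $\mathbb{M}$ may have up to $f(k)\log|x|$ nondeterministic existential configurations per path, whereas the target existential cost $k'$ for \EkATMhalt{*} must depend only on $k$. This is resolved by \emph{batching}: blocks of $\lceil\log_2|x|\rceil$ consecutive nondeterministic binary transitions of $\mathbb{M}$ are coalesced into a single existential configuration of $\mathbb{M}'$ with up to $|x|$-way branching, hard-wired into $\mathbb{M}'$'s states and transition relation (whose size is polynomial in $|x|$ and $f(k)$). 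The resulting instance $(\mathbb{M}',k',t')$ satisfies $k' = O(f(k))$ and $t' = f(k)|x|^{O(1)}$, so the map $(x,k) \mapsto (\mathbb{M}',k',t')$ is an fpt-reduction from $P$ to \EkATMhalt{*}.

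The principal obstacle lies in implementing the batching correctly: one has to verify that the coalesced high-branching transitions faithfully realize all accepting paths of $\mathbb{M}$, and that the deterministic bookkeeping needed to identify block boundaries (counting bits, routing the guessed value back to $\mathbb{M}$'s configuration) can be carried out inside $\mathbb{M}'$ without secretly introducing further nondeterministic existential configurations that would spoil the $O(f(k))$ bound on existential cost. Once the batching is in place, combining (i) and (ii) with Theorem~\ref{thm:atm-char1} yields the theorem.
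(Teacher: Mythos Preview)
Your proposal is correct and follows essentially the same architecture as the paper: both directions are mediated by the \EkA{}-completeness of \EkATMhalt{*}, closure of the machine class under fpt-reductions is your item~(i) and the paper's Proposition~\ref{prop:tm-char-4}, and your batching construction in item~(ii) is exactly the paper's Proposition~\ref{prop:tm-char-2} (the paper encodes each block of $\lceil\log|x|\rceil$ transitions as a single alphabet symbol $\sigma_{(T_1,\dotsc,T_u)}$, which is just another way of realizing the $|x|$-way branching you describe).

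The one genuine variation is how you anchor the reverse inclusion. The paper shows (Proposition~\ref{prop:tm-char-3}) that \EleqkAWSat{} is decided by an \EkA{}-machine and then invokes closure; you instead argue directly that \EkATMhalt{*} is decided by an \EkA{}-machine via a universal simulator. Your route is slightly more economical, since it avoids the detour through \EleqkAWSat{}; the paper's route has the side benefit of exhibiting a single-tape \EkA{}-machine for a concrete logic problem. One wording caveat: when you say the simulator ``deterministically explores every universal subcomputation,'' you presumably mean that it \emph{universally} simulates $\mathbb{M}$'s universal phase (branching with $\mathbb{M}$), not that it sequentially enumerates all branches, which would cost time exponential in~$t$.
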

\begin{proof}[Proof of Theorems~\ref{thm:atm-char1}~and~\ref{thm:atm-char2}.]
In order to show these results, concretely,
we will use the following statements.
We show how the results follow from these statements.
Detailed proofs of the statements can be found
in the appendix, in Section~\ref{sec:app-atm-char}.
\begin{enumerate}[(i)]
  \item \EkATMhalt{*} \fptred{} \EkAMC{} (Proposition~\ref{prop:tm-char-1}).
  \item For any parameterized problem~$P$ that is decided by some \EkA{}-machine
    with~$m$ tapes, it holds that~$P \fptred{} \EkATMhalt{m+1}$
    (Proposition~\ref{prop:tm-char-2}).
  \item There is an \EkA{}-machine with a single tape that decides \EleqkAWSat{}
    (Proposition~\ref{prop:tm-char-3}).
  \item Let~$A$ and~$B$ be parameterized problem.
    If~$B$ is decided by some \EkA{}-machine with~$m$ tapes,
    and if~$A \fptred{} B$,
    then~$A$ is decided by some \EkA{}-machine with~$m$ tapes
    (Proposition~\ref{prop:tm-char-4}).
\end{enumerate}
In addition to these statements, we will need one result known
from the literature (Proposition~\ref{prop:multitape-TM-simulation}
and Corollary~\ref{cor:multitape-TM-simulation}).
To see that these statements imply the desired results,
observe the following.

Together, (ii) and (iii) imply that~$\EleqkAWSat{} \fptred{} \EkATMhalt{2}$.
Clearly, for all~$m \geq 2$, $\EkATMhalt{2} \fptred{} \EkATMhalt{m}$.
This gives us \EkA{}-hardness of \EkATMhalt{m}, for all~$m \geq 2$.
\EkA{}-hardness of \EkATMhalt{1} follows from
Corollary~\ref{cor:multitape-TM-simulation}, which implies that there is
an fpt-reduction from \EkATMhalt{2} to \EkATMhalt{1}.
This also implies that \EkATMhalt{*} is \EkA{}-hard.
Then, by (i), and since \EkAMC{} is in \EkA{}
by Theorem~\ref{thm:modelchecking-completeness},
we obtain \EkA{}-completeness of \EkATMhalt{*} 
and \EkATMhalt{m}, for each~$m \geq 1$.

By (i) and (ii), and by transitivity of fpt-reductions,
we have that any parameterized problem~$P$ that is decided by an \EkA{}-machine
is fpt-reducible to \EkAWSat{}, and thus is in \EkA{}.
Conversely, let~$P$ be any parameterized problem in \EkA{}.
Then, by \EkA{}-hardness of \EleqkAWSat{}, we know that~$P \fptred{} \EleqkAWSat{}$.
By (iii) and (iv), we know that~$P$ is decided by some \EkA{}-machine with a single tape.
From this we conclude that
\EkA{} is exactly the class of parameterized problems~$P$
decided by some \EkA{}-machine.
\end{proof}
}
}

\krlongversion{
\section{The \stark{} Hierarchy}
\label{sec:eak}

We now turn our attention to the \stark{} hierarchy.
Unlike in the \kstar{} hierarchy, in the canonical quantified
Boolean satisfiability problems
of the \stark{} hierarchy, we cannot add auxiliary variables
to the second quantifier block whose truth assignment
is not restricted.
Therefore, because of similarity to the W-hierarchy,
we believe that the classes of  the \stark{} hierarchy are distinct.
\krversion{We begin with proving some basic properties.}%
The main results in this section are normalization
results for \EAkW{1} and \EAkW{P}.
In addition, we show a completeness result for the \stark{} hierarchy
for another variant of our running example.

\krversion{
\begin{proposition}
If $\para{\co\NP} \subseteq \EAkW{P}$,
then $\NP = \co\NP$.
\end{proposition}
\begin{proof}[Proof (sketch)]
With an argument similar to the one in the proof of Proposition~\ref{prop:1},
a polynomial-time reduction from $\thUNSAT{}$ to $\thSAT{}$
can be constructed.
An additional technical observation needed for this case is that
\thSAT{} is in \NP{} also when the input is a Boolean circuit.
\end{proof}
}

\subsection{Normalization Results for the \stark{} Hierarchy}

\krversion{Next, w}\longversion{W}e
show that the problem \EAkWSat{} is already \EAkW{1}-hard
when the input circuits are restricted to formulas in $c\mtext{-}\DNF$,
for any constant $c \geq 2$.
In order to make our life easier, we switch our perspective
to the co-problem \AEkWSat{}
when stating and proving the following results.
Note that the proofs of the following results
make heavy use of the original normalization
proof for the class \W{1} by Downey and
Fellows~\cite{DowneyFellows95,DowneyFellows99,DowneyFellows13}.
Therefore, we provide only proof sketches.

\begin{lemma}
\label{lem:aeknorm}
For any $u \geq 1$,
$\AEkWSat(\Gamma_{1,u}) \fptred{} \AEkWSat(s\mtext{-}\CNF)$,
where $s = 2^u+1$.
\end{lemma}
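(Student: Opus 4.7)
The plan is to adapt the classical Downey--Fellows normalization of $\W{1}$ (showing $\WSat{\Gamma_{1,u}} \fptred \WSat{s\text{-CNF}}$) to the setting with an additional universal quantifier block. The key observation is that the outer block $\forall X$ is transparent to the normalization: the Downey--Fellows transformation is purely syntactic on the matrix, and treats all input wires uniformly. Since the weight restriction only applies to the existential block, we may view the $X$-variables as ordinary input literals and apply the standard normalization; its correctness argument carries through unchanged for each fixed assignment $\alpha : X \to \{0,1\}$, and hence for the full $\forall X$ block.

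Concretely, given $(\varphi,k)$ with $\varphi = \forall X. \exists Y. C$ and $C \in \Gamma_{1,u}$, I would proceed as follows. First, push all negations in $C$ to the leaves using De~Morgan's laws (this preserves weft $1$ and increases depth by at most a constant, which we absorb into the bound~$s$). Second, use the standard Downey--Fellows tree-surgery lemmas to rewrite $C$ into a layered normal form in which the unique large gate on every input-to-output path sits at a fixed depth; above it lies a constant-depth prefix of small gates. Third, since below each large gate and within the prefix only small (fan-in $\leq 2$) gates appear, each weft-$0$ depth-$(u-1)$ subcircuit is equivalent to a propositional formula with at most $2^{u-1}$ literal occurrences, and can therefore be converted to a CNF with clauses of size at most $2^{u-1}$.

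The case distinction on the topmost large gate yields the clause-size bound $s = 2^u + 1$: when the top is a large AND, the conjunction of the CNFs of its children is already an $s$-CNF with $s \leq 2^{u-1}$; when the top is a large OR, one uses the Downey--Fellows selector gadget to encode ``at least one weft-$0$ subcircuit is satisfied'' by introducing a small number of auxiliary existential variables $Z$ together with bounded-width clauses, and an adjusted weight bound $k'$. The additional constants in the clause sizes from the selector gadget are subsumed by the $+1$ in $s = 2^u + 1$. The resulting reduction outputs $(\forall X. \exists (Y \cup Z). C',\, k')$, where $C'$ is in $s$-CNF.

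The main obstacle will be the large-OR case, where one must rewrite the top-level disjunction into bounded-width clauses while preserving weight-$k$ satisfiability. The technique is standard from the $\W{1}$-completeness proof of $\WSat{s\text{-CNF}}$, so the real work is checking that the encoding remains correct in the presence of the outer $\forall X$ block. This follows because the Downey--Fellows gadget neither inspects nor modifies the $X$-variables: all newly introduced variables are placed in the existential block $Y\cup Z$, the weight count is adjusted only over these variables, and for every fixed $\alpha : X \to \{0,1\}$ the slice obtained by instantiating the $X$-wires reduces to an ordinary instance of the $\W{1}$-normalization, for which correctness is established. Hence the outer $\forall$ is preserved, completing the fpt-reduction.
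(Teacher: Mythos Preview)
Your proposal is correct and takes essentially the same approach as the paper: adapt the four-step Downey--Fellows normalization (their Lemma~2.1), treating the universal $X$-variables as ordinary input wires so that the transformation is purely syntactic and correctness follows per fixed $\alpha:X\to\{0,1\}$ from the original $\W{1}$ argument. The paper's sketch is even terser than yours---it merely records which of the four steps need adjustment (none for Steps~1--2; Step~3 handles universal variables exactly as existential ones; Step~4 lets universal variables appear in the resulting bounded-width clauses)---but the underlying idea is the same as what you describe.
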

\begin{proof}[Proof (sketch).]
The reduction is completely analogous to the
reduction used in the proof of 
Downey and Fellows~\cite[Lemma~2.1]{DowneyFellows95},
where the presence of universally quantified variables
is handled in four steps.
In Steps~1~and~2, in which only the form of the circuit is modified,
no changes are needed.
In Step~3, universally quantified variables can be handled
exactly as existentially quantified variables.
Step~4 can be performed with only a slight modification,
the only difference being that universally quantified variables
appearing in the input circuit
will also appear in the resulting clauses that verify
whether a given product-of-sums or sum-of-products is satisfied.
It is straightforward to verify that this reduction with the mentioned modifications
works for our purposes.
\end{proof}

\begin{theorem}
\label{thm:stark-2cnf-hardness}
$\AEkWSat(2\CNF)$ is \AEkW{1}-complete.
\end{theorem}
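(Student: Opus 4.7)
The plan is as follows. For membership in $\AEkW{1}$, observe that a $2\CNF$ formula is a large conjunction of small disjunctions of literals, and hence lies in $\Gamma_{1,3}$; the identity map thus furnishes a trivial fpt-reduction from $\AEkWSat(2\CNF)$ to $\AEkWSat(\Gamma_{1,3})$, placing $\AEkWSat(2\CNF)$ in $\AEkW{1}$.

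For the hardness part, I will establish, for every $u \geq 1$, an fpt-reduction from $\AEkWSat(\Gamma_{1,u})$ to $\AEkWSat(2\CNF)$. By Lemma~\ref{lem:aeknorm}, $\AEkWSat(\Gamma_{1,u}) \fptred \AEkWSat(s\CNF)$ with $s = 2^u+1$, so it suffices to give an fpt-reduction from $\AEkWSat(s\CNF)$ to $\AEkWSat(2\CNF)$ for every constant $s \geq 3$. My plan is to adapt the classical Downey--Fellows \W{1}-normalization argument~\cite{DowneyFellows95,DowneyFellows99,DowneyFellows13}, which reduces $\WSat{s\CNF}$ to $\WSat{2\CNF}$ via a chain of fpt-reductions (typically through antimonotone normal forms and $k$-\textsc{Clique}). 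Given an instance $(\varphi,k)$ of $\AEkWSat(s\CNF)$ with $\varphi = \forall X. \exists Y. C$, I would apply each step of this chain to the matrix~$C$, treating the $X$-literals as passive symbols that appear inside clauses but do not contribute to the weight constraint. Each step introduces fresh auxiliary variables on the existential side (to be added to the $Y$-block) and updates the parameter to some $k' = g(k,s)$ depending only on $k$ and~$s$; the final output is a $2\CNF$ matrix $C'$ over $X \cup Y \cup Y'$ together with the parameter~$k'$.

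The hard part will be verifying that each step in the Downey--Fellows chain is \emph{oblivious} to literals that do not enter the weight constraint: concretely, the correctness condition to be established is that for every assignment $\alpha$ to $X$, the weighted satisfiability instance $(C[\alpha],k)$ is a yes-instance if and only if $(C'[\alpha],k')$ is. Because Downey--Fellows' constructions are local (clause-by-clause), and the auxiliary variables they introduce depend only on the clausal structure and on $k$ and~$s$, this obliviousness essentially holds by inspection --- but it must be checked carefully at every step. Once established, the statement holds uniformly over all $\alpha$, which is exactly what the $\forall X. \exists Y$ setting requires, and combining with Lemma~\ref{lem:aeknorm} yields the desired $\AEkW{1}$-hardness.
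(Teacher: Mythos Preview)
Your membership argument is fine and matches the paper's.

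The hardness argument has a genuine gap. Your claim that the Downey--Fellows chain is ``oblivious'' to the universal $X$-literals is false at the final step. The reduction from antimonotone $s$-CNF to $2$-CNF works by \emph{contracting the weight-constrained (existential) literals}: several negative $Y$-literals in a clause are replaced by a single new existential variable, and the parameter is adjusted. But this construction does nothing to the $X$-literals. If an input clause contains two or more universal literals, those literals survive unchanged, and the output clause is not a $2$-clause. So after running Downey--Fellows on the matrix with the $X$-literals treated as passive, you do not obtain a $2\CNF$ formula.

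The paper fixes exactly this by inserting a preprocessing step \emph{before} the Downey--Fellows steps, whose purpose is to reduce every clause to at most one universal literal. This step is not a straightforward adaptation of anything in Downey--Fellows: one introduces a fresh \emph{universal} variable $y'_A$ for every set $A$ of at most $s$ literals over the universal block, intended to stand for the conjunction $\bigwedge A$, and replaces the universal part of each clause by the single literal $y'_B$. The difficulty is that you cannot directly enforce $y'_A \leftrightarrow \bigwedge A$ on the universal side (the universal player can cheat). The paper's trick is to give the existential player an escape hatch: fresh existential variables $x^{\star}_1,\dotsc,x^{\star}_{k-1}$ and $x''_d$ (one per consistency clause $d$) are added so that, whenever the universal assignment is inconsistent with the intended meaning of the $y'_A$, the existential side can spend its weight budget on $x^{\star}$ and a single $x''_d$ witnessing the violated consistency clause, thereby satisfying the formula trivially. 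Only after this contraction do the standard antimonotonization and $2$-CNF steps go through, producing clauses with at most one universal and one existential literal. Your proposal is missing this ingredient entirely.
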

\begin{proof}[Proof (sketch).]
Clearly $\AEkWSat(2\CNF)$ is in $\AEkW{1}$, since
a $2\CNF{}$ formula can be considered as a
constant-depth circuit of weft 1.
To show that $\AEkWSat(2\CNF)$ is $\AEkW{1}$-hard,
we give an fpt-reduction from $\AEkWSat(\Gamma_{1,u})$ to $\AEkWSat(2\CNF)$,
for arbitrary $u \geq 1$.
By Lemma~\ref{lem:aeknorm}, we know that we can reduce $\AEkWSat(\Gamma_{1,u})$
to $\AEkWSat(s\mtext{-}\CNF)$, for $s = 2^u+1$.
We continue the reduction in multiple steps.
In each step, we let $C$ denote the circuit resulting from the previous step,
and we let $Y$ denote the universally quantified
and $X$ the existentially quantified variables of $C$,
and we let $k$ denote the parameter value.
We only briefly describe the last two steps,
since these are completely analogous to constructions in the work of
Downey and Fellows~\cite{DowneyFellows99}.

\paragraph{Step 1: contracting the universally quantified variables.}
This step transforms~$C$ into a \CNF{} formula~$C'$
such that each clause contains at most one variable in~$Y$
such that~$(C,k)$ is a yes-instance if and only if~$(C',k)$ is a yes-instance.
We introduce new universally quantified variables~$Y'$
containing a variable~$y'_{A}$ for each set~$A$ of literals
over~$Y$ of size at least~$1$ and at most~$s$.
Now, it is straightforward to construct a set~$D$ of
polynomially many ternary clauses
over~$Y$ and~$Y'$ such that the following property holds.
An assignment~$\alpha$ to~$Y \cup Y'$ satisfies~$D$
if and only if for each subset~$A = \SBs l_1,\dotsc,l_b \SEs$ of literals over $Y$
it holds that~$\alpha(l_1) = \alpha(l_2) = \dotsc = \alpha(l_b) = 1$
if and only if~$\alpha(y'_{A}) = 1$.
Note that we do not directly add the set~$D$ of clauses to the formula~$C'$.

We introduce~$k-1$ many new existentially quantified
variables~$x^{\star}_1,\dotsc,x^{\star}_{k-1}$.
We add binary clauses to~$C'$ that enforce that
the variables~$x^{\star}_1,\dotsc,x^{\star}_{k-1}$
all get the same truth assignment.
Also, we add binary clauses to~$C'$ that enforce that each~$x \in X$
is set to false if~$x^{\star}_1$ is set to true.

We introduce~$\Card{D}$ many existentially quantified variables,
including a variable~$x''_d$ for each clause~$d \in D$.
For each~$d \in D$, we add the following clauses to~$C'$.
Let~$d = (l_1,l_2,l_3)$, where each~$l_i$ is a literal over~$Y \cup Y'$.
We add the clauses~$(\neg x''_d \vee \neg l_1)$,~$(\neg x''_d \vee \neg l_2)$ and~$(\neg x''_d \vee \neg l_3)$,
enforcing that the clause~$d$ cannot be satisfied
if~$x''_d$ is set to true.

We then modify the clauses of~$C$ as follows.
Let~$c = (l^x_1,\dotsc,l^x_{s_1},l^y_1,\dotsc,l^y_{s_2})$ be a clause of~$C$,
where~$l^x_1,\dotsc,l^x_{s_1}$ are literals over~$X$,
and~$l^y_1,\dotsc,l^y_{s_2}$ are literals over~$Y$.
We replace~$c$ by the clause~$(l^x_1,\dotsc,l^x_{s_1}, x^{\star}_1, y'_{B})$,
where~$B = \SBs l^y_1,\dotsc,l^y_{s_2} \SEs$.
Clauses~$c$ of~$C$ that contain no literals over the variables~$Y$
remain unchanged.

The idea of this reduction is the following.
If~$x^{\star}_1$ is set to true, then exactly one of the
variables~$x''_d$ must be set to true,
which can only result in an satisfying assignment
if the clause~$d \in D$ is not satisfied.
Therefore, if an assignment~$\alpha$ to the variables~$Y \cup Y'$
does not satisfy~$D$, there is a satisfying assignment of weight~$k$ that sets
both~$x^{\star}_1$ and~$x''_d$ to true, for some~$d \in D$
that is not satisfied by~$\alpha$.
Otherwise, we know that the value~$\alpha$ assigns to variables~$y'_{A}$
corresponds to the value~$\alpha$ assigns to~$\bigwedge_{a \in A} a$,
for~$A \subseteq \Lit{Y}$.
Then any satisfying assignments of weight~$k$ for~$C$
is also a satisfying assignments of weight~$k$ for~$C'$.

\paragraph{Step 2: making~$C$ antimonotone in~$X$.}
This step transforms~$C$ into a circuit~$C'$ that has only negative
occurrences of existentially quantified variables,
and transforms~$k$ into~$k'$ depending only on~$k$,
such that~$(C,k)$ is a yes-instance if and only if~$(C',k')$
is a yes-instance.
The reduction is completely analogous to the reduction in the proof of
Downey and Fellows~\cite[Theorem~10.6]{DowneyFellows99}.

\paragraph{Step 3: contracting the existentially quantified variables.}
This step transforms~$C$ into a circuit~$C'$ in \CNF{}
that contains only clauses with two variables in~$X$ and no variables in~$Y$
and clauses with one variable in~$X$ and one variable in~$Y$,
and transforms~$k$ into~$k'$ depending only on~$k$,
such that~$(C,k)$ is a yes-instance if and only if~$(C',k')$
is a yes-instance.
The reduction is completely analogous to the reduction in the proof of
Downey and Fellows~\cite[Theorem~10.7]{DowneyFellows99}.
\end{proof}

\begin{corollary}
\label{cor:stark-2cnf-hardness}
For any fixed integer~$r \geq 2$, the problem
$\EAkWSat(r\mtext{-}\DNF)$ is \EAkW{1}-complete.
\end{corollary}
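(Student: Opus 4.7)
The plan is to derive the corollary from Theorem~\ref{thm:stark-2cnf-hardness} by a duality argument, reducing the general case~$r \geq 2$ to the base case~$r=2$ via a trivial syntactic containment.

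First I would dispose of membership. For any fixed~$r \geq 2$, an $r$-DNF formula is a weft-$1$ circuit of constant depth~$u = u(r)$: the outer disjunction is the only large gate, and each term is a conjunction of at most~$r$ literals. Hence $r\mtext{-}\DNF \subseteq \Gamma_{1,u(r)}$, which places $\EAkWSat(r\mtext{-}\DNF)$ in \EAkW{1} directly by definition of the class. Moreover, $2\mtext{-}\DNF \subseteq r\mtext{-}\DNF$ gives an identity fpt-reduction, so for hardness it suffices to establish \EAkW{1}-hardness of $\EAkWSat(2\mtext{-}\DNF)$.

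To get that base case, I would appeal to the duality between the \stark{} hierarchy and its co-classes together with Theorem~\ref{thm:stark-2cnf-hardness}. Consider the polynomial-time map $\varphi \mapsto \neg\varphi$, pushed in via De Morgan, which sends a 2CNF formula to a 2DNF formula while preserving the quantifier prefix $\forall X\,\exists Y^{w(Y)=k}$ and the parameter~$k$. For any 2CNF formula~$\varphi$ we then have $(\varphi,k) \in \AEkWSat(2\mtext{-}\CNF)$ iff $\forall X\, \exists Y^{w(Y)=k}: \varphi$ holds, iff $\exists X\, \forall Y^{w(Y)=k}: \neg\varphi$ fails, iff $(\neg\varphi,k) \notin \EAkWSat(2\mtext{-}\DNF)$. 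A symmetric reduction goes in the opposite direction. Hence $\AEkWSat(2\mtext{-}\CNF)$ and the complement of $\EAkWSat(2\mtext{-}\DNF)$ are fpt-equivalent. Since $\AEkW{1} = \co{\EAkW{1}}$ by definition, the \AEkW{1}-completeness of $\AEkWSat(2\mtext{-}\CNF)$ furnished by Theorem~\ref{thm:stark-2cnf-hardness} transfers through this equivalence to \EAkW{1}-completeness of $\EAkWSat(2\mtext{-}\DNF)$, and therefore, by the containment observed above, of $\EAkWSat(r\mtext{-}\DNF)$ for every fixed~$r \geq 2$.

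The main — indeed essentially the only — point to verify carefully is that the negation map interacts correctly with the weight restriction. But this is immediate: the weight-$k$ constraint is attached to the universal block~$Y$, and negating the matrix touches neither the quantifier prefix nor the weight bound. So no substantive new combinatorial work is needed beyond what Theorem~\ref{thm:stark-2cnf-hardness} already provides; the corollary really is just a bookkeeping consequence of duality plus the normalization theorem.
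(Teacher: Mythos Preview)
Your proposal is correct and is exactly the argument the paper intends: the corollary is stated immediately after Theorem~\ref{thm:stark-2cnf-hardness} with no separate proof, relying on the same duality (negation turns $2\CNF$ into $2\DNF$ and swaps $\AEkW{1}$ with $\EAkW{1} = \co{\AEkW{1}}$) together with the trivial inclusion $2\mtext{-}\DNF \subseteq r\mtext{-}\DNF \subseteq \Gamma_{1,u}$. One wording nit: you say ``the weight-$k$ constraint is attached to the universal block~$Y$,'' but under negation~$Y$ becomes the existential block; the substantive point---that the weight bound stays on~$Y$ and is untouched by negating the matrix---is fine.
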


Next, we consider a normalization result for \EAkW{P}.
In order to do so, we will need some definitions.
Let~$C$ be a quantified Boolean circuit over
two disjoint sets~$X$ and~$Y$ of variables
that is in negation normal form.
We say that~$C$ is \emph{monotone in the variables~$Y$} 
if the only negation nodes that occur in the circuit~$C'$
have variables in~$X$ as inputs,
i.e., the variables in~$Y$ can appear only positively in the circuit.
Then, the following restriction of \EAkWSat{} is already \EAkW{P}-hard.

\begin{proposition}
\label{prop:eakwp-monotone}
The problem~\EAkWSat{} is \EAkW{P}-hard,
even when restricted to quantified circuits that
are in negation normal form and that
are monotone in the universal variables.
\end{proposition}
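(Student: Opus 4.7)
The plan is to give an fpt-reduction from $\EAkWSat(\Gamma)$ (which is $\EAkW{P}$-hard by definition of the class) to the restricted variant asked for in the statement. Let $(\exists X.\,\forall Y.\,C,\,k)$ be the input instance, with $Y = \SBs y_1, \dotsc, y_n \SEs$. First, apply De~Morgan's laws to push negations down to the literals, so that $C$ becomes an equivalent NNF circuit; this is standard and polynomial-time. After this step, negations appear only on variables in $X \cup Y$, and we still need to eliminate the negative $Y$-literals.

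The key observation is that we only need the resulting circuit to behave correctly on weight-$k$ assignments to $Y$. Under any such $\beta : Y \to \SBs 0,1 \SEs$, exactly $k$ variables in $Y$ are set to $1$. Hence for every $i$, the literal $\neg y_i$ is true under $\beta$ if and only if at least $k$ variables in $\SB y_j \SM j \neq i \SE$ are set to $1$ by $\beta$: if $\beta(y_i) = 0$ then all $k$ true variables lie outside $\SBs y_i \SEs$, and if $\beta(y_i) = 1$ then only $k-1$ do. This reformulates $\neg y_i$ using only positive occurrences of the $Y$-variables.

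Exploiting this, we construct $C'$ from the NNF form of $C$ by replacing every occurrence of $\neg y_i$ with a monotone threshold subcircuit $T_{k,i}$ over $\SB y_j \SM j \neq i \SE$ that computes ``at least $k$ inputs are true''. A standard monotone dynamic-programming construction using the recurrence $T(m,j) = T(m-1,j) \vee (y_m \wedge T(m-1,j-1))$, with the obvious base cases $T(m,0)=\top$ and $T(m,j)=\bot$ for $m<j$, yields such a subcircuit of size $O(nk)$ built only from AND and OR gates. The positive occurrences of $y_i$ in $C$ are left untouched, and negations of $X$-variables are retained (which is permitted). Thus $C'$ is in NNF and monotone in $Y$, the parameter is unchanged at $k$, and the whole construction runs in fpt-time $O(\Card{C} \cdot nk)$.

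Correctness is then immediate: for every $\alpha : X \to \SBs 0,1 \SEs$ and every weight-$k$ assignment $\beta : Y \to \SBs 0,1 \SEs$, each replaced subcircuit $T_{k,i}$ evaluates to $\neg \beta(y_i)$ by the key observation, and therefore $C'(\alpha,\beta) = C(\alpha,\beta)$. Hence $(\exists X.\forall Y.C, k)$ is a yes-instance iff $(\exists X.\forall Y.C', k)$ is a yes-instance of the restricted problem. The only subtle point worth double-checking is that the equivalence between $\neg y_i$ and the threshold gadget relies crucially on the weight constraint being \emph{exactly} $k$; since $\EAkWSat{}$ uses weight-exactly-$k$, this is fine, and in fact it is precisely this aspect of the problem that makes the elimination of $Y$-negations possible.
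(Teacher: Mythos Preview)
Your proof is correct and takes a genuinely different route from the paper's. The paper replaces the universal variable set $Y$ by $k$ disjoint blocks $Y^1,\dotsc,Y^k$ of copies of $Y$; a weight-$k$ assignment over $Y'$ that picks one copy per block (all of distinct originals) encodes a weight-$k$ assignment over $Y$, and the paper then simulates $y_j$ by $\bigvee_i y^i_j$ and $\neg y_j$ by $\bigwedge_i \bigvee_{j'\neq j} y^i_{j'}$, adding extra monotone gadgets ($z_i$ and the threshold subcircuit $B$) to force acceptance on all ``bad'' weight-$k$ assignments over $Y'$. Your construction is more direct: you keep $Y$ and the parameter unchanged and observe that, under a weight-exactly-$k$ assignment, $\neg y_i$ is equivalent to the monotone threshold $\mathrm{Th}_k(Y\setminus\{y_i\})$, so a single monotone DP-threshold gadget per variable suffices and no ``bad assignment'' filtering is needed. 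Both arguments exploit the same underlying phenomenon---that the weight constraint lets one express negative $Y$-literals monotonically---but yours packages it more economically. The paper's version has the minor advantage that its simulation of $y_j$ and $\neg y_j$ uses only depth-$2$ large gates (rather than your depth-$O(n)$ DP chain), which could matter for weft-bounded variants; in the present $\EAkW{P}$ setting, however, this is irrelevant and your argument is the cleaner one.
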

\begin{proof}
We give an fpt-reduction from the problem \EAkWSat{}
to the problem \EAkWSat{} restricted to circuits that are
monotone in the universal variables.
Let~$(C,k)$ be an instance of \EAkWSat{},
where~$C$ is a quantified Boolean circuit over the set~$X$
of existential variables and the set~$Y$ of universal
variables, where~$X = \SBs x_1,\dotsc,x_n \SEs$
and where~$Y = \SBs y_1,\dotsc,y_m \SEs$.
We construct an equivalent instance~$(C',k)$ of \EAkWSat{}
where~$C'$ is a quantified Boolean circuit over the set~$X$
of existential variables and the set~$Y'$ of universal variables,
and where the circuit~$C'$ is monotone in~$Y'$.
We may assume without loss of generality that~$C$ is in
negation normal form.
If this is not the case, we can simply transform~$C$ into an equivalent
circuit that has this property using the De Morgan rule.
The form of the circuit~$C$ is depicted in Figure~\ref{fig:original-circuit}.

This construction bears some resemblance to the construction used
in a proof by Flum and Grohe~\cite[Theorem~3.14]{FlumGrohe06}.
The plan is to replace the variables in~$Y$
by~$k$ copies of them, grouped in sets~$Y^1,\dotsc,Y^k$ of
new variables.
Each assignment of weight~$k$ to the new
variables that sets a copy of a different variable to true
in each set~$Y^i$ corresponds exactly to an assignment of weight~$k$
to the original variables in~$Y$.
Moreover, we will ensure that each assignment of weight~$k$ to the
new variables that does not set a copy of a different variable to true
in each set~$Y^i$ satisfies the newly constructed circuit.
Using these new variables we can then construct internal
nodes~$y_j$ and~$y'_j$ that, for each assignment to the new input nodes~$Y'$, evaluate to the truth value assigned to~$y_j$ and~$\neg y_j$, respectively, by
the corresponding truth assignment to the original input nodes~$Y$.

We will describe this construction in more detail.
The construction is also depicted in Figure~\ref{fig:monotone-circuit}.
We let~$Y' = \SB y^i_j \SM 1 \leq i \leq k, 1 \leq j \leq m \SE$.
We introduce a number of new internal nodes.
For each~$1 \leq j \leq m$, we introduce an internal node~$y_j$,
that is the disjunction
of the input nodes~$y^i_j$, for~$1 \leq i \leq k$.
That is, the internal node~$y_j$ is true if and only
if~$y^i_j$ is true for some~$1 \leq i \leq k$.
Intuitively, this node~$y_j$ corresponds to the input node~$y_j$
in the original circuit~$C$.
Moreover, we introduce an internal node~$y'_{j,i}$
for each~$1 \leq j \leq m$ and each~$1 \leq i \leq k$,
that is the disjunction of~$y^{i}_{j'}$, for each~$1 \leq j' \leq m$
such that~$j \neq j'$.
That is, the node~$y'_{j,i}$ is true if and only
if~$y^i_{j'}$ is true for some~$j'$ that is different from~$j$.
Then, we introduce the node~$y'_j$, for each~$1 \leq j \leq m$,
that is the conjunction of the nodes~$y'_{j,i}$ for~$1 \leq i \leq k$.
That is, the node~$y'_j$ is true if and only
if for each~$1 \leq i \leq k$ there is some~$j' \neq j$
for which the input node~$y^i_j$ is true.
Intuitively, this node~$y'_j$ corresponds to the negated input
node~$\neg y_j$ in the original circuit~$C$.
Also, for each~$1 \leq i \leq k$ and each~$1 \leq j < j' \leq m$,
we add an internal node~$z^{j,j'}_{i}$ that is the conjunction
of the input nodes~$y^i_j$ and~$y^i_{j'}$.
Then, for each~$1 \leq i \leq k$ we add the internal node~$z_i$
that is the conjunction of all nodes~$z_i^{j,j'}$, for~$1 \leq j < j' \leq m$.
Intuitively,~$z_i$ is true if and only if at least two input nodes
in the set~$Y_i$ are set to true.
In addition, we add a subcircuit~$B$ that acts on the nodes~$y'_1,\dotsc,y'_m$,
and that is satisfied if and only if at least~$m-k+1$ of the nodes~$y'_j$
are set to true.
It is straightforward to construct such a circuit~$B$ in polynomial time.
Then, we add the subcircuit~$C$ with input nodes~$x_1,\dotsc,x_n$,
negated input nodes~$\neg x_1,\dotsc,\neg x_n$,
where the input nodes~$y_1,\dotsc,y_m$ are identified with
the internal nodes~$y_1,\dotsc,y_m$ in the newly constructed circuit~$C'$,
and where the negated input nodes~$\neg y_1,\dotsc,\neg y_m$ are
identified with the internal nodes~$y'_1,\dotsc,y'_m$
in the newly constructed circuit~$C'$.
Finally, we let the output node be the disjunction of the
nodes~$z_1,\dotsc,z_k$ and the
output nodes of the subcircuits~$C$ and~$B$.
Since~$C$ is a circuit in negation normal form,
the circuit~$C'$ is monotone in~$Y'$.
We claim that for each assignment~$\alpha : X \rightarrow \SBs 0,1 \SEs$
it holds that the circuit~$C[\alpha]$ is satisfied by all assignments of weight~$k$
if and only if~$C'[\alpha]$ is satisfied by all assignments of weight~$k$.

\begin{figure}[ht!]
\begin{center}
\begin{tikzpicture}[scale=0.5]
  \node[] at (0,0) {$x_1$};
  \node[] at (1,0) {$\dotsc$};
  \node[] at (2,0) {$x_n$};
  \node[] at (3,0) {$\neg x_1$};
  \node[] at (4,0) {$\dotsc$};
  \node[] at (5,0) {$\neg x_n$};
  \node[] at (7,0) {$y_1$};
  \node[] at (8,0) {$\dotsc$};
  \node[] at (9,0) {$y_m$};
  \node[] at (10,0) {$\neg y_1$};
  \node[] at (11,0) {$\dotsc$};
  \node[] at (12,0) {$\neg y_m$};
  \draw[] (-.5,-.5) -- (12.8,-.5) -- (6,-5) -- cycle;
  \node[] at (6,-2.5) {$C$};
\end{tikzpicture}
\end{center}
\caption{Original quantified Boolean circuit~$C$
over the set~$X = \SBs x_1,\dotsc,x_n \SEs$ of existential variables
and the set~$Y = \SBs y_1,\dotsc,y_m \SEs$ of universal variables.}
\label{fig:original-circuit}
\end{figure}

\begin{figure}[ht!]
\begin{center}
\begin{tikzpicture}[scale=0.8]
  \node[] (x1) at (2.8,-3.75) {$x_1$};
  \node[] at (3.4,-3.75) {$\dotsc$};
  \node[] (xn) at (4,-3.75) {$x_n$};
  \node[] (nx1) at (4.75,-3.75) {$\neg x_1$};
  \node[] at (5.5,-3.75) {$\dotsc$};
  \node[] (nxn) at (6.25,-3.75) {$\neg x_n$};
  \node[] (y11) at (6,2) {$y^1_1$};
  \node[] at (7,2) {$\dotsc$};
  \node[] (y1m) at (8,2) {$y^1_m$};
  \node[] (y21) at (9.5,2) {$y^2_1$};
  \node[] at (10.5,2) {$\dotsc$};
  \node[] (y2m) at (11.5,2) {$y^2_m$};
  \node[] at (12.75,2) {$\dotsc$};
  \node[] (yk1) at (14,2) {$y^k_1$};
  \node[] at (15,2) {$\dotsc$};
  \node[] (ykm) at (16,2) {$y^k_m$};
  \node[label=270:{$y_1$},draw,circle] (y1) at (7,-2) {$\vee$};
  \node[] at (8,-2) {$\dotsc$};
  \draw[-] (y11.south) -- (y1.north);
  \draw[-] (y21.south) -- (y1.north);
  \draw[-] (yk1.south) -- (y1.north);
  \node[label=-90:{$y_m$},draw,circle] (ym) at (9,-2) {$\vee$};
  \draw[-] (y1m.south) -- (ym.north);
  \draw[-] (y2m.south) -- (ym.north);
  \draw[-] (ykm.south) -- (ym.north);
  \node[label=180:{$y'_1$},draw,circle] (ny1) at (11.75,-3.5) {$\wedge$};
  \node[label={[xshift=-0pt]-90:{\scriptsize $y'_{1,1}$}},draw,circle] (ny11) at (10.75,-2) {$\vee$};
  \draw[-] (6.5,1.6) -- (ny11.north);
  \draw[-] (y1m.south) -- (ny11.north);
  \node[] at (11.5,-2) {$\dotsc$};
  \node[label={[xshift=4pt]-90:{\scriptsize $y'_{1,k}$}},draw,circle] (ny1k) at (12.25,-2) {$\vee$};
  \draw[-] (14.5,1.6) -- (ny1k.north);
  \draw[-] (ykm.south) -- (ny1k.north);
  \draw[-] (ny11.south) -- (ny1.north);
  \draw[-] (ny1k.south) -- (ny1.north);
  \node[] at (13,-3.5) {$\dotsc$};
  \node[label=0:{$y'_m$},draw,circle] (nym) at (14.25,-3.5) {$\wedge$};
  \node[label={[xshift=-2pt]-90:{\scriptsize $y'_{m,1}$}},draw,circle] (nym1) at (13.75,-2) {$\vee$};
  \draw[-] (y11.south) -- (nym1.north);
  \draw[-] (7.5,1.6) -- (nym1.north);
  \node[] at (14.5,-2) {$\dotsc$};
  \node[label={[xshift=3pt]-90:{\scriptsize $y'_{m,k}$}},draw,circle] (nymk) at (15.25,-2) {$\vee$};
  \draw[-] (yk1.south) -- (nymk.north);
  \draw[-] (15.5,1.6) -- (nymk.north);
  \draw[-] (nym1.south) -- (nym.north);
  \draw[-] (nymk.south) -- (nym.north);
  \draw[-] (7,-3) -- (7,-4);
  \draw[-] (9,-3) -- (9,-4);
  \draw[-] (ny1.south) -- (11.75,-4);
  \draw[-] (nym.south) -- (14.25,-4);
  \node[draw,circle,label={200:{$z_1$}}] (z1) at (18,-4) {$\vee$};
  \node[] at (19,-4) {$\dotsc$};
  \node[draw,circle,label={-20:{$z_k$}}] (zk) at (20,-4) {$\vee$};
  \node[draw,circle,label={[xshift=-3pt]-90:{\scriptsize $z_1^{1,1}$}}] (z111) at (16.5,-2) {$\wedge$};
  \node[] at (17.25,-2) {$\dotsc$};
  \node[draw,circle,label={[xshift=15pt]-90:{\scriptsize $z_1^{m-1,m}$}}] (z1m-1m) at (18,-2) {$\wedge$};
  \node[] at (17.25,-2) {$\dotsc$};
  \node[draw,circle,label={[xshift=-7pt]-90:{\scriptsize $z_k^{1,1}$}}] (zk11) at (20,-2) {$\wedge$};
  \node[] at (20.75,-2) {$\dotsc$};
  \node[draw,circle,label={[xshift=6pt]-90:{\scriptsize $z_k^{m-1,m}$}}] (zkm-1m) at (21.5,-2) {$\wedge$};
  \draw[-] (z111.south) -- (z1.north);
  \draw[-] (z1m-1m.south) -- (z1.north);
  \draw[-] (zk11.south) -- (zk.north);
  \draw[-] (zkm-1m.south) -- (zk.north);
  \draw[-] (y11.south) -- (z111.north);
  \draw[-] (6.5,1.6) -- (z111.north);
  \draw[-] (7.5,1.6) -- (z1m-1m.north);
  \draw[-] (y1m.south) -- (z1m-1m.north);
  \draw[-] (yk1.south) -- (zk11.north);
  \draw[-] (14.5,1.6) -- (zk11.north);
  \draw[-] (15.5,1.6) -- (zkm-1m.north);
  \draw[-] (ykm.south) -- (zkm-1m.north);
  \draw[] (2.55,-4) -- (15.75,-4) -- (8.25,-7) -- cycle;
  \node[] at (8.25,-5.5) {$C$};
  \draw[fill=white] (10,-5) -- (13,-5) -- (11.5,-7) -- cycle;
  \draw[-] (ny1.south) -- (10.25,-5);
  \draw[-] (nym.south) -- (12.75,-5);
  \node[] at (11.5,-5.8) {$B$};
  \node[draw,circle] (output) at (11.5,-8) {$\vee$};
  \draw[-] (11.5,-7) -- (output.north);
  \draw[-] (8.25,-7) -- (output.north);
  \draw[-] (z1.south) -- (output.north);
  \draw[-] (zk.south) -- (output.north);
\end{tikzpicture}
\end{center}
\caption{Equivalent circuit~$C'$
over the set~$X = \SBs x_1,\dotsc,x_n \SEs$
of existential variables and the
set~$Y' = \SBs y^1_1,\dotsc,y^k_m \SEs$ of
universal variables,
that is monotone in~$Y'$.}
\label{fig:monotone-circuit}
\end{figure}

$(\Rightarrow)$
Let~$\alpha : X \rightarrow \SBs 0,1 \SEs$ be an arbitrary truth assignment.
Assume that~$C[\alpha]$ is satisfied by all truth assignments~$\beta : Y
\rightarrow \SBs 0,1 \SEs$ of weight~$k$.
We show that~$C'[\alpha]$ is satisfied by all truth assignments~$\beta' : Y'
\rightarrow \SBs 0,1 \SEs$ of weight~$k$.
Let~$\beta' : Y' \rightarrow \SBs 0,1 \SEs$ be an arbitrary truth assignment
of weight~$k$.
We distinguish several cases:
either (i)~for some~$1 \leq i \leq k$
there are some~$1 \leq j < j' \leq m$ such that $\beta'(y^i_j) = \beta'(y^i_{j'}) = 1$,
or (ii)~for each~$1 \leq i \leq k$ there is exactly one~$\ell_i$ such
that~$\beta'(y^i_{\ell_i}) = 1$ and for some~$1 \leq i < i' \leq k$
it holds that~$\ell_i = \ell_{i'}$,
or (iii)~for each~$1 \leq i \leq k$ there is exactly one~$\ell_i$ such
that~$\beta'(y^i_{\ell_i}) = 1$ and for each~$1 \leq i < i' \leq k$
it holds that~$\ell_i \neq \ell_{i'}$.
In case~(i), we know that the assignment~$\beta'$ sets the node~$z_i^{j,j'}$
to true.
Therefore,~$\beta'$ sets the node~$z_i$ to true,
and thus satisfies the circuit~$C'[\alpha]$.
In case~(ii), we know that~$\beta'$ sets~$y'_j$ to true
for at least~$m-k+1$ many different values of~$j$.
Therefore,~$\beta'$ satisfies the subcircuit~$B$,
and thus satisfies~$C'[\alpha]$.
Finally, in case~(iii), we know that~$\beta'$ sets exactly~$k$ different
internal nodes~$y_j$ to true, and for each~$1 \leq j \leq m$
sets the internal node~$y'_j$ to true if and only if it sets~$y_j$ to false.
Then, since~$C[\alpha]$ is satisfied by all truth assignments of
weight~$k$, we know that~$\beta'$ satisfies the subcircuit~$C$,
and thus satisfies~$C'[\alpha]$.
Since~$\beta'$ was arbitrary, we can conclude
that~$C'[\alpha]$ is satisfied by all truth assignments~$\beta' : Y'
\rightarrow \SBs 0,1 \SEs$ of weight~$k$.

$(\Leftarrow)$
Let~$\alpha : X \rightarrow \SBs 0,1 \SEs$ be an arbitrary truth assignment.
Assume that~$C'[\alpha]$ is satisfied by all truth assignments~$\beta' : Y'
\rightarrow \SBs 0,1 \SEs$ of weight~$k$.
We show that~$C[\alpha]$ is satisfied by all truth assignments~$\beta : Y
\rightarrow \SBs 0,1 \SEs$ of weight~$k$.
Let~$\beta : Y \rightarrow \SBs 0,1 \SEs$ be an arbitrary truth assignment
of weight~$k$.
We now define the truth assignment~$\beta' : Y' \rightarrow \SBs 0,1 \SEs$
as follows.
Let~$\SBs y_{\ell_1},\dotsc,y_{\ell_k} \SEs = \SB y_j \SM 1 \leq j \leq m,
\beta(y_j) = 1 \SE$.
For each~$1 \leq i \leq k$ and each~$1 \leq j \leq m$
we let~$\beta'(y^i_j) = 1$ if and only if~$j = \ell_i$.
Clearly,~$\beta'$ has weight~$k$.
Moreover, the assignment~$\beta'$ sets the nodes~$z_1,\dotsc,z_k$
to false.
Furthermore, it is the case that~$\beta'$ sets the internal node~$y_j$
in~$C'$ to true for exactly those~$1 \leq j \leq m$ for which~$\beta(y_j) = 1$,
and it sets the internal node~$y'_j$ in~$C'$ to true for exactly
those~$1 \leq j \leq m$ for which~$\beta(y_j) = 0$.
Thus,~$\beta'$ sets (the output node of) the subcircuit~$B$ to false.
Therefore, since~$\beta'$ satisfies the circuit~$C'[\alpha]$,
we can conclude that~$\beta'$ satisfies the subcircuit~$C$,
and thus that~$\beta$ satisfies~$C[\alpha]$.
Since~$\beta$ was arbitrary, we can conclude
that~$C[\alpha]$ is satisfied by all truth assignments~$\beta : Y
\rightarrow \SBs 0,1 \SEs$ of weight~$k$.
\end{proof}

\subsection{Answer Set Programming and Completeness for the \stark{} Hierarchy}

\krversion{%
We now turn to another variant of our running example problem.}
\longversion{%
We now turn to two other variants of our running example problem,
that turn out to be complete for \EAkW{P}.
We begin with showing \EAkW{P}-completeness
for \ASPcons\DisjRules{}.}

\krversion{
\begin{theorem}
\label{thm:asp-disjrules-hardness}
\ASPcons\DisjRules{}
is \EAkW{1}-hard.
\end{theorem}
\begin{proof}
We give an fpt-reduction from~$\EAkWSat(3\DNF)$,
which we know to be \EAkW{1}-hard from Corollary~\ref{cor:stark-2cnf-hardness}.
This is, like the reduction in the proof of
Proposition~\ref{prop:asp-contrules-hardness} above, a parameterized
version of a reduction of Eiter and Gottlob~\cite[Theorem~3]{EiterGottlob95}.
Let~$(\varphi,k)$ be an instance of~$\EAkWSat(3\DNF)$,
where~$\varphi = \exists X. \forall Y. \psi$,
$X = \SBs x_1,\dotsc,x_n \SEs$,
$Y = \SBs y_1,\dotsc,y_m \SEs$,
$\psi = \delta_1 \vee \dotsm \vee \delta_u$,
and~$\delta_{\ell} = l^{\ell}_1 \wedge l^{\ell}_2 \wedge l^{\ell}_3$ for each~$1 \leq \ell \leq u$.
By step 2 in the proof of Theorem~\ref{thm:stark-2cnf-hardness},
we may assume without loss of generality that all universally quantified
variables~$y_1,\dotsc,y_m$ occur only positively in~$\psi$.
We construct a disjunctive program~$P$.
We consider the variables~$X$ and~$Y$ as atoms.
In addition,
we introduce fresh atoms~$v_1,\dotsc,v_n$,~$w$, and~$y^j_i$
for all~$1 \leq j \leq k$,~$1 \leq i \leq m$.
We let~$P$ consist of the following rules:
\krversion{\begin{small}
\begin{align}
  &\label{eak-rule1} x_i \leftarrow  \aspnot{} v_i & \mtext{for } 1 \leq i \leq n; \\[-.5pt]
  &\label{eak-rule2} v_i \leftarrow  \aspnot{} x_i & \mtext{for } 1 \leq i \leq n; \\[-3pt]
  &\label{eak-rule3} y^{j}_{1} \vee \dotsm \vee y^{j}_{m} \leftarrow & \mtext{for } 1 \leq j \leq k; \\[-3pt]
  &y_{i} \leftarrow  y^{j}_{i} & \mtext{for } 1 \leq i \leq m,
    \label{eak-rule4} 1 \leq j \leq k; \\[-3pt]
  &y^{j}_{i} \leftarrow  w & \mtext{for } 1 \leq i \leq m,
    \label{eak-rule5} 1 \leq j \leq k; \\[-4pt]
  &w \leftarrow  y^{j}_{i}, y^{j'}_{i} & \mtext{for } 1 \leq i \leq m,
  \label{eak-rule6} 1 \leq j < j' \leq k; \\[-3pt]
  &\label{eak-rule7} w \leftarrow  \sigma(l^{\ell}_1), \sigma(l^{\ell}_2), \sigma(l^{\ell}_3) \hspace{-20pt} &
    \mtext{for } 1 \leq \ell \leq u; \\[0pt]
  &\label{eak-rule8} w \leftarrow \aspnot{} w.
\end{align}
\end{small}}%
\longversion{\begin{small}
\begin{align}
  \label{eak-rule1} x_i \leftarrow &\ \aspnot{} v_i & \mtext{for } 1 \leq i \leq n; \\
  \label{eak-rule2} v_i \leftarrow &\ \aspnot{} x_i & \mtext{for } 1 \leq i \leq n; \\
  \label{eak-rule3} y^{j}_{1} \vee \dotsm \vee y^{j}_{m} \leftarrow & & \mtext{for } 1 \leq j \leq k; \\
  y_{i} \leftarrow &\ y^{j}_{i} & \mtext{for } 1 \leq i \leq m, \krversion{\nonumber \\ && }
    \label{eak-rule4} 1 \leq j \leq k; \\
  y^{j}_{i} \leftarrow &\ w & \mtext{for } 1 \leq i \leq m, \krversion{\nonumber \\ && }
    \label{eak-rule5} 1 \leq j \leq k; \\
  w \leftarrow &\ y^{j}_{i}, y^{j'}_{i} & \mtext{for } 1 \leq i \leq m, \krversion{\nonumber \\ && }
  \label{eak-rule6} 1 \leq j < j' \leq k; \\
  \label{eak-rule7} w \leftarrow &\ \sigma(l^{\ell}_1), \sigma(l^{\ell}_2), \sigma(l^{\ell}_3) &
    \mtext{for } 1 \leq \ell \leq u; \\
  \label{eak-rule8} w \leftarrow &\ \aspnot{} w.
\end{align}
\end{small}}%
Here we let~$\sigma(\neg x_i) = v_i$ for each~$1 \leq i \leq n$;
we let~$\sigma(x_i) = x_i$ for each~$1 \leq i \leq n$; and
we let~$\sigma(y_i) = y_i$ for each~$1 \leq i \leq m$.
Intuitively,~$v_i$ corresponds to~$\neg x_i$.
The main difference with the reduction of Eiter and
Gottlob~\cite{EiterGottlob95}
is that we use the rules in~(\ref{eak-rule3})--(\ref{eak-rule6})
to let the variables~$y_i$
represent an assignment of weight~$k$ to the variables in~$Y$.
Note that~$P$ has~$k$ disjunctive rules,
namely the rules~(\ref{eak-rule3}).
\krversion{A full proof that~$(\varphi,k) \in \EAkWSat{}$ if and only if
$P$ has an answer set can be found in the technical report.
}\longversion{We show that~$(\varphi,k) \in \EAkWSat{}$ if and only if
$P$ has an answer set.

$(\Rightarrow)$
Assume there exists an assignment~$\alpha : X \rightarrow \SBs 0,1 \SEs$
such that for each assignment~$\beta : Y \rightarrow \SBs 0,1 \SEs$ of weight~$k$
it holds that~$\psi[\alpha \cup \beta]$ evaluates to true.
We show that~$M = \SB x_i \SM \alpha(x_i) = 1, 1 \leq i \leq n \SE \cup
\SB y^{j}_{i}, y_{i} \SM 1 \leq i \leq m, 1 \leq j \leq k \SE \cup \SBs w \SEs$
is an answer set of~$P$.
We have that~$P^{M}$ consists of Rules~(\ref{eak-rule3})--(\ref{eak-rule7}) and
the rules~$(x_i \leftarrow)$ for all~$1 \leq i \leq n$ such that~$\alpha(x_i) = 1$,
and the rules~$(v_i \leftarrow)$ for all~$1 \leq i \leq n$ such that~$\alpha(x_i) = 0$.
Clearly,~$M$ is a model of~$P^{M}$.
We show that~$M$ is a minimal model of~$P^{M}$.
Assume~$M' \subsetneq M$ is a minimal model of~$P^{M}$.
If~$M'$ does not coincide with~$M$ on the atoms~$x_i$ and~$v_i$,
then~$M'$ is not a model of~$P^{M}$.
If~$w \in M'$, then by Rules~(\ref{eak-rule4})~and~(\ref{eak-rule5}),~$M' = M$.
Therefore,~$w \not\in M'$.
By Rule~(\ref{eak-rule3}), we have that~$y^{1}_{i_1},y^{2}_{i_2},\dotsc,y^{k}_{i_k} \in M'$,
for some~$1 \leq i_1,\dotsc,i_k \leq m$.
By Rule~(\ref{eak-rule6}), we know that~$i_1,\dotsc,i_k$ are all different,
since otherwise it would have to hold that~$w \in M'$.
By Rule~(\ref{eak-rule4}), it holds that~$y_{i_1},\dotsc,y_{i_k} \in M'$.
By minimality of~$M'$, we know that~$\SB 1 \leq i \leq m \SM i \not\in \SBs i_1,\dotsc,i_k \SEs \SE \cap M' = \emptyset$.
Define the assignment~$\gamma : X \cup Y \rightarrow \SBs 0,1 \SEs$ by
letting~$\gamma(x_i) = 1$ if and only if~$x_i \in M'$
and~$\gamma(y_i) = 1$ if and only if~$y_i \in M'$.
Clearly,~$\gamma$ coincides with~$\alpha$ on~$X$,
and~$\gamma$ assigns exactly~$k$ variables~$y_j$ to true.
Now, since~$(\varphi,k) \in \EAkWSat{}$, we know that~$\gamma$ satisfies~$\psi$.
Therefore, there must be some~$1 \leq \ell \leq u$ such that~$\delta_{\ell}[\gamma]$
evaluates to true.
Now consider the instantiation of Rule~(\ref{eak-rule7}) for~$\ell$.
We know that~$M'$ satisfies the body of this rule.
Therefore~$w \in M'$, which is a contradiction.
From this we can conclude that no model~$M' \subsetneq M$ of~$P^{M}$ exists,
and thus~$M$ is an answer set of~$P$.

$(\Leftarrow)$
Assume~$P$ has an answer set~$M$.
We know that~$w \in M$, since otherwise~$(w \leftarrow)$ would be a rule
of~$P^{M}$, and then~$M$ would not be a model of~$P^{M}$.
Then, by Rules~(\ref{eak-rule4})~and~(\ref{eak-rule5}), also~$y_i,y^{j}_{i} \in M$
for all~$1 \leq i \leq m$ and~$1 \leq j \leq k$.
We show that for each~$1 \leq i \leq n$ it holds that~$\Card{M \cap \SBs x_i,v_i \SEs} = 1$.
Assume that for some~$1 \leq i \leq n$,~$M \cap \SBs x_i,v_i \SEs = \emptyset$.
Then~$(x_i \leftarrow)$ and~$(v_i \leftarrow)$ would be rules of~$P^{M}$,
and then~$M$ would not be a model of~$P^{M}$, which is a contradiction.
Assume instead that~$\SBs x_i,v_i \SEs \subseteq M$.
Then~$P^{M}$ would contain no rules with~$x_i$ and~$v_i$ in the head,
and hence~$M$ would not be a minimal model of~$P^{M}$, which is a contradiction.

We now construct an assigment~$\alpha : X \rightarrow \SBs 0,1 \SEs$
such that for all assignments~$\beta : Y \rightarrow \SBs 0,1 \SEs$ of
weight~$k$ it holds that~$\psi[\alpha \cup \beta]$ evaluates to true.
Define~$\alpha$ by letting~$\alpha(x_i) = 1$ if and only if~$x_i \in
M$.  Now let~$\beta$ be an arbitrary truth assignment to~$Y$ of weight
$k$.  We show that~$\psi[\alpha \cup \beta]$ evaluates to true.  We
proceed indirectly, and assume to the contrary that~$\alpha \cup
\beta$ does not satisfy any term~$\delta_{\ell}$ of~$\psi$.  We
construct a model~$M' \subsetneq M$ of~$P^{M}$.  We
let~$y_{i_1},\dotsc,y_{i_k}$ denote the~$k$ variables~$y_i$ such that
$\beta(y_i) = 1$.  We let~$M' = (M \cap \SB x_i,v_i \SM 1 \leq i \leq
n \SE) \cup \SB y^{\ell}_{i_{\ell}},y_{i_{\ell}} \SM 1 \leq \ell \leq
k \SE$.  For all rules of~$P^{M}$ other than Rule~(\ref{eak-rule7}),
it is clear that~$M'$ satisfies them.  We show that~$M'$ also
satisfies Rule~(\ref{eak-rule7}).  Assume that for some~$1 \leq \ell
\leq u$,~$M'$ satisfies the body of the instantiation of
Rule~(\ref{eak-rule7}) for~$\ell$, but not the head.  By construction
of~$M'$, this would imply that~$\alpha \cup \beta$ satisfies the
term~$\delta_{\ell}$ of~$\psi$, which contradicts our assumption.
Therefore, we can conclude that~$M'$ satisfies all instantiations of
Rule~(\ref{eak-rule7}).  It then holds that~$M' \subsetneq M$ is a
model of~$P^{M}$, which is a contradiction with the fact that~$M$ is
an answer set of~$P$.
From this we can conclude that~$\alpha \cup \beta$ satisfies
some term~$\delta_{\ell}$ of~$\psi$.
Since~$\beta$ was arbitrary, we know this holds for all truth
assignments~$\beta$ to~$Y$ of weight~$k$.
Therefore,~$(\varphi,k) \in \EAkWSat{}$.
}
\end{proof}
}

\longversion{
\begin{theorem}
\label{thm:asp-disjrules}
\ASPcons\DisjRules{} is \EAkW{P}-complete.
\end{theorem}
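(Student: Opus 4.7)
My plan is to prove both \EAkW{P}-hardness (strengthening Theorem~\ref{thm:asp-disjrules-hardness}) and membership in \EAkW{P}.

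For hardness, I would reduce from \EAkWSat{} restricted to NNF circuits monotone in the universal variables, which is \EAkW{P}-hard by Proposition~\ref{prop:eakwp-monotone}. I would retain the skeleton of the program built in Theorem~\ref{thm:asp-disjrules-hardness}---the choice rules between $x_i$ and $v_i$, the $k$ disjunctive rules $y^j_1 \vee \cdots \vee y^j_m \leftarrow $, and $w \leftarrow \aspnot{} w$---so that the disjunctive-rule count stays at $k$. Only rule~(\ref{eak-rule7}) changes: I would replace it by a Tseitin-style ASP encoding of the input circuit $C$, introducing for each internal gate $b$ a fresh atom $a_b$ with normal rules $a_b \leftarrow a_{b_1}, \ldots, a_{b_m}$ for an AND gate and $a_b \leftarrow a_{b_i}$ for each input of an OR gate; leaves are identified with $x_i$, $v_i$, or $y_j$ (no $\neg y_j$ leaves arise by monotonicity); and finally a rule $w \leftarrow a_o$ is added where $a_o$ corresponds to the output of $C$. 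The correctness argument parallels that of Theorem~\ref{thm:asp-disjrules-hardness}, with the forced derivability of $a_o$ in a hypothetically smaller model $M' \subsetneq M$ (whose leaves induce a weight-$k$ assignment to $Y$ that, by assumption, satisfies $C$) replacing the role of a satisfied $3$-DNF term $\delta_{\ell}$ in forcing $w \in M'$.

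For membership, I would give an fpt-reduction from \ASPcons\DisjRules{} to \EAkWSat{}. By Lemma~\ref{lem:asp-disjrules-mu} applied to the reduct $P^M$, $M$ is an answer set of $P$ iff $M \models P^M$ and no partial choice $\mu$ picking one head atom from each rule in some $R \subseteq \{r_1, \ldots, r_k\}$ yields $M_\mu \subsetneq M$ with $M_\mu \models P^M$, where $M_\mu$ is the closure of $\Rng{\mu}$ under the normal rules of $P^M$. I would construct an instance $(\exists X. \forall Y. C, k)$ of \EAkWSat{} where $X = \{x_a : a \in \Atoms{P}\}$ encodes $M$, and $Y$ contains, for each disjunctive rule $r_i$, a block $\{y_{i,a} : a \in \mtext{head}(r_i)\} \cup \{y_{i,\bot}\}$ encoding either the value $\mu(r_i)$ or the choice to exclude $r_i$ from $R$; structural conditions inside $C$ combined with the weight-$k$ requirement jointly force exactly one variable per block. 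The circuit $C$ then accepts malformed $Y$-assignments trivially, and otherwise asserts both $M \models P^M$ and that $\mu$ fails to witness non-minimality of $M$---i.e., $\Rng{\mu} \not\subseteq M$, or $M \subseteq M_\mu$, or some disjunctive rule outside $R$ is violated by $M_\mu$.

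The main technical difficulty is evaluating ``$a \in M_\mu$'' inside $C$ without introducing auxiliary universal variables, because any extra universal variable would contribute to the weight of $Y$ and destroy the parameter bound required for the fpt-reduction. I would exploit that \EAkWSat{} is defined over arbitrary Boolean circuits: the closure can be computed by polynomially many \emph{internal} gates in a layered fashion over $|\Atoms{P}|$ iterations, where the firing of each normal rule of $P$ is conditioned on the $x$-variables so as to reflect the reduct $P^M$. Since internal gates are not input variables, they contribute nothing to the weight of $Y$, which therefore remains exactly $k$.
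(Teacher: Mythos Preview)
Your proposal is correct and follows essentially the same approach as the paper. For hardness, the paper also invokes Proposition~\ref{prop:eakwp-monotone} and replaces the single $3$-DNF rule~(\ref{eak-rule7}) by one atom $z_g$ per internal gate with normal rules $z_g \leftarrow \sigma(g_1),\dots,\sigma(g_u)$ for conjunctions and $z_g \leftarrow \sigma(g_i)$ for disjunctions, plus $w \leftarrow z_o$ and $z_g \leftarrow w$; your construction is identical. For membership, the paper likewise encodes $M$ in the existential block and, for each disjunctive rule $r_i$, a block $z_{i,0}z_{i,1}\dots z_{i,\ell_i}$ with exactly one bit set in the universal block, computes $M_\mu$ by iterated closure inside the circuit, and accepts iff $M\models P^M$ and the chosen $\mu$ does not witness non-minimality. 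Your acceptance condition is in fact more careful than the paper's sketch: the paper only tests ``$M_\mu$ is not a strict subset of $M$'', which by itself is too strong (e.g.\ take $R=\emptyset$, so $M_\mu$ is the normal-rule closure of $\emptyset$, which can be a strict subset of a genuine answer set while failing to satisfy some disjunctive rule); your third disjunct, that some disjunctive rule outside $R$ is violated by $M_\mu$, is precisely the missing clause that makes the equivalence go through via Lemma~\ref{lem:asp-disjrules-mu}.
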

\begin{proof}
In order to show hardness, we give an fpt-reduction from~$\EAkWSat$.
Let~$(C,k)$ be an instance of~$\EAkWSat$,
where~$C$ is a quantified Boolean circuit over existential variables~$X$
and universal variables~$Y$, where~$X =
\SBs x_1,\dotsc,x_n \SEs$, and where~$Y = \SBs y_1,\dotsc,y_m \SEs$.
By Proposition~\ref{prop:eakwp-monotone},
we may assume without loss of generality that~$C$ is in negation
normal form and that it is monotone in~$Y$,
i.e., that the variables~$y_1,\dotsc,y_m$ occur only positively in~$C$.
We construct a disjunctive program~$P$ as follows.
We consider the variables in~$X$ and~$Y$ as atoms.
In addition,
we introduce fresh atoms~$v_1,\dotsc,v_n$,~$w$, and~$y^j_i$
for all~$1 \leq j \leq k$,~$1 \leq i \leq m$.
Also, for each internal node~$g$ of~$C$, we introduce
a fresh atom~$z_g$.
We let~$P$ consist of the rules described
as follows:\\

\renewcommand{\arraystretch}{1.8}
\noindent \parbox{\textwidth}{
\centering
\begin{small}
\vspace{10pt}
\begin{tabular}{r @{\ } l r r}
$x_i$ & $\leftarrow \aspnot{} v_i$ & for~$1 \leq i \leq n$; & \tagarray\label{eak-rule1} \\
$v_i$ & $\leftarrow \aspnot{} x_i$ & for~$1 \leq i \leq n$; & \tagarray\label{eak-rule2} \\
$y^{j}_{1} \vee \dotsm \vee y^{j}_{m}$ & $\leftarrow $ & for~$1 \leq j \leq k$; & \tagarray\label{eak-rule3} \\
$y_{i}$ & $\leftarrow y^{j}_{i}$ & for~$1 \leq i \leq m$; & \tagarray\label{eak-rule4} \\
$y^{j}_{i}$ & $\leftarrow w$ & for~$1 \leq i \leq m$ & \\[-8pt]
&& and~$1 \leq j \leq k$; & \tagarray\label{eak-rule5} \\
$w$ & $\leftarrow y^{j}_{i}, y^{j'}_{i}$ & for~$1 \leq i \leq m,$ & \\[-8pt]
&& and~$1 \leq j < j' \leq k$; & \tagarray\label{eak-rule6} \\
$z_g$ & $\leftarrow w$ & for each internal node~$g$ of~$C$; & \tagarray\label{eak-rule9} \\
$z_g$ & $\leftarrow \sigma(g_1), \dotsc, \sigma(g_u)$ & for each conjunction node~$g$ of~$C$ & \\[-8pt]
&& with inputs~$g_1,\dotsc,g_u$; & \tagarray\label{eak-rule10} \\
$z_g$ & $\leftarrow \sigma(g_i)$ & for each disjunction node~$g$ of~$C$ & \\[-8pt]
&& with inputs~$g_1,\dotsc,g_u$ \\
& & and each~$1 \leq i \leq u$; & \tagarray\label{eak-rule11} \\
$w$ & $\leftarrow z_o$ & where~$o$ is the output node of~$C$; & \tagarray\label{eak-rule7} \\
$w$ & $\leftarrow \aspnot{} w$. & & \tagarray\label{eak-rule8} \\
\end{tabular}
\vspace{10pt}
\end{small}%
}

\noindent Here, we define the following mapping~$\sigma$ from nodes of~$C$
to variables in~$V$.
For each non-negated input node~$x_i \in X$, we let~$\sigma(x_i) = x_i$.
For each negated input node~$\neg x_i$, for~$x_i \in X$,
we let~$\sigma(\neg x_i) = v_i$.
For each input node~$y_j \in Y$, we let~$\sigma(y) = y_j$.
For each internal node~$g$, we let~$\sigma(g) = z_g$.
Intuitively,~$v_i$ corresponds to~$\neg x_i$.
One of the main differences with the reduction of Eiter
and Gottlob~\cite{EiterGottlob95}
is that we use the rules in~(\ref{eak-rule3})--(\ref{eak-rule6})
to let the variables~$y_i$
represent an assignment of weight~$k$ to the variables in~$Y$.
Another main difference is that we encode an arbitrary Boolean circuit,
rather than just a DNF formula,
in the rules~(\ref{eak-rule9})--(\ref{eak-rule7}).
Note that~$P$ has~$k$ disjunctive rules,
namely the rules~(\ref{eak-rule3}).
We show that~$(C,k) \in \EAkWSat{}$ if and only if
$P$ has an answer set.

$(\Rightarrow)$
Assume there exists an assignment~$\alpha : X \rightarrow \SBs 0,1 \SEs$
such that for each assignment~$\beta : Y \rightarrow \SBs 0,1 \SEs$ of weight~$k$
it holds that~$\alpha \cup \beta$ satisfies~$C$.
We show that~$M = \SB x_i \SM \alpha(x_i) = 1, 1 \leq i \leq n \SE \cup
\SB y^{j}_{i}, y_{i} \SM 1 \leq i \leq m, 1 \leq j \leq k \SE \cup
\SB z_g \SM g \mtext{ an internal node of } C \SE \cup \SBs w \SEs$
is an answer set of~$P$.
We have that~$P^{M}$ consists of Rules~(\ref{eak-rule3})--(\ref{eak-rule7}),
the rules~$(x_i \leftarrow)$ for all~$1 \leq i \leq n$ such that~$\alpha(x_i) = 1$,
and the rules~$(v_i \leftarrow)$ for all~$1 \leq i \leq n$ such that~$\alpha(x_i) = 0$.
Clearly,~$M$ is a model of~$P^{M}$.
We show that~$M$ is a minimal model of~$P^{M}$.
Assume~$M' \subsetneq M$ is a minimal model of~$P^{M}$.
If~$M'$ does not coincide with~$M$ on the atoms~$x_i$ and~$v_i$,
then~$M'$ is not a model of~$P^{M}$.
If~$w \in M'$, then by Rules~(\ref{eak-rule4}),~(\ref{eak-rule5})
and~(\ref{eak-rule9}),~$M' = M$.
Therefore,~$w \not\in M'$.
By Rule~(\ref{eak-rule3}), we have
that~$y^{1}_{i_1},y^{2}_{i_2},\dotsc,y^{k}_{i_k} \in M'$,
for some~$1 \leq i_1,\dotsc,i_k \leq m$.
By Rule~(\ref{eak-rule6}), we know that~$i_1,\dotsc,i_k$ are all different,
since otherwise it would have to hold that~$w \in M'$.
By Rule~(\ref{eak-rule4}), it holds that~$y_{i_1},\dotsc,y_{i_k} \in M'$.
By minimality of~$M'$, we know that~$\SB 1 \leq i \leq m \SM i \not\in \SBs i_1,\dotsc,i_k \SEs \SE \cap M' = \emptyset$.
Define the assignment~$\gamma : X \cup Y \rightarrow \SBs 0,1 \SEs$ by
letting~$\gamma(x_i) = 1$ if and only if~$x_i \in M'$
and~$\gamma(y_i) = 1$ if and only if~$y_i \in M'$.
Clearly,~$\gamma$ coincides with~$\alpha$ on~$X$,
and~$\gamma$ assigns exactly~$k$ variables~$y_j$ to true.
Now, since~$(C,k) \in \EAkWSat{}$, we know that~$\gamma$ satisfies~$C$.
Using the Rules~(\ref{eak-rule10}) and~(\ref{eak-rule11}),
we can show by an inductive argument that for each internal node~$g$
of~$C$ that is set to true by~$\gamma$ it must hold
that~$z_g \in M'$.
Since~$\gamma$ satisfies~$C$, it sets the output node~$o$
of~$C$ to true, and thus by Rule~(\ref{eak-rule7}),
we know that~$w \in M'$.
This is a contradiction.
From this we can conclude that no model~$M' \subsetneq M$ of~$P^{M}$ exists,
and thus~$M$ is an answer set of~$P$.

$(\Leftarrow)$
Assume~$P$ has an answer set~$M$.
We know that~$w \in M$, since otherwise~$(w \leftarrow)$ would be a rule
of~$P^{M}$, and then~$M$ would not be a model of~$P^{M}$.
Then, by Rules~(\ref{eak-rule4})~and~(\ref{eak-rule5}), also~$y_i,y^{j}_{i} \in M$
for all~$1 \leq i \leq m$ and~$1 \leq j \leq k$.
We show that for each~$1 \leq i \leq n$ it holds
that~$\Card{M \cap \SBs x_i,v_i \SEs} = 1$.
Assume that for some~$1 \leq i \leq n$,~$M \cap \SBs x_i,v_i \SEs = \emptyset$.
Then~$(x_i \leftarrow)$ and~$(v_i \leftarrow)$ would be rules of~$P^{M}$,
and then~$M$ would not be a model of~$P^{M}$, which is a contradiction.
Assume instead that~$\SBs x_i,v_i \SEs \subseteq M$.
Then~$P^{M}$ would contain no rules with~$x_i$ and~$v_i$ in the head,
and hence~$M$ would not be a minimal model of~$P^{M}$, which is a contradiction.

We now construct an assigment~$\alpha : X \rightarrow \SBs 0,1 \SEs$
such that for all assignments~$\beta : Y \rightarrow \SBs 0,1 \SEs$ of
weight~$k$ it holds that~$\psi[\alpha \cup \beta]$ evaluates to true.
Define~$\alpha$ by letting~$\alpha(x_i) = 1$ if and only if~$x_i \in M$.
Now let~$\beta$ be an arbitrary truth assignment to~$Y$ of weight~$k$.
We show that~$\alpha \cup \beta$ satisfies~$C$.
We proceed indirectly, and assume to the contrary that~$\alpha \cup
\beta$ does not satisfy~$C$.
We construct a model~$M' \subsetneq M$ of~$P^{M}$.
We let~$y_{i_1},\dotsc,y_{i_k}$ denote the~$k$ variables~$y_i$ such that
$\beta(y_i) = 1$.  We let~$M'$ consist
of~$(M \cap \SB x_i,v_i \SM 1 \leq i \leq n \SE)$,
of~$\SB y^{\ell}_{i_{\ell}},y_{i_{\ell}} \SM 1 \leq \ell \leq k \SE$,
and of~$\SB z_g \SM g \mtext{ an internal node of } C
\mtext{ set to true by } \alpha \cup \beta \SE$.
For all rules of~$P^{M}$ other than Rules~(\ref{eak-rule10})--(\ref{eak-rule7}),
it is clear that~$M'$ satisfies them.
It holds that~$M'$ also
satisfies Rules~(\ref{eak-rule10})--(\ref{eak-rule11}),
since~$M'$ contains~$z_g$ for all internal nodes~$g$ of~$C$
that are satisfied by~$\alpha \cup \beta$.
Then, since~$\alpha \cup \beta$ does not satisfy the output node~$o$
of~$C$, we know that~$z_o \not\in M'$,
and thus~$M'$ satisfies Rule~(\ref{eak-rule7}).
It then holds that~$M' \subsetneq M$ is a
model of~$P^{M}$, which is a contradiction with the fact that~$M$ is
an answer set of~$P$.
From this we can conclude that~$\alpha \cup \beta$ satisfies~$C$.
Since~$\beta$ was arbitrary, we know this holds for all truth
assignments~$\beta$ to~$Y$ of weight~$k$.
Therefore,~$(C,k) \in \EAkWSat{}$.

To show membership in \EAkW{P}, we give
an fpt-reduction to \EAkWSat{}.
Let~$P$ be an instance of~\ASPcons\DisjRules{},
where~$P$ is a disjunctive logic program that
contains~$k$ disjunctive rules
and that contains atoms~$\SBs a_1,\dotsc,a_n \SEs$.
Let~$r_1,\dotsc,r_k$ be the disjunctive rules of~$P$,
where the head of rule~$r_i$ is~$a^1_i \vee \dotsm \vee a^{\ell_i}_i$,
for each~$1 \leq i \leq k$.
Moreover, let~$\ell = \sum_{1 \leq i \leq k} \ell_i + k$.
We sketch an algorithm that takes as input two bitstrings:
one string~$\overline{x} = x_1\dotsc x_n$, of length~$n$,
and one string~$\overline{z} = z_1\dotsc z_\ell$ of length~$\ell$
containing exactly~$k$ $1$'s.
Moreover, we consider the string~$\overline{z}$
as the concatenation of the strings~$\overline{z}_1,\dotsc,\overline{z}_k$,
where~$\overline{z}_i = z_{i,0}\dotsc z_{i,\ell_i}$.
Firstly, the algorithm checks~(0)~whether each~$z_i$ contains exactly one~$1$.
If this checks fails, the algorithm accepts the input.
Otherwise, the algorithm constructs the set~$M = \SB a_i \SM x_i = 1 \SE$,
and it checks~(1)~whether~$M$ is a model of the reduct~$P^M$.
Moreover, the algorithm constructs the subset~$R$ of disjunctive rules
that is defined as follows.
For each~$1 \leq i \leq k$, it holds that~$r_i \in R$ if and only if~$z_{i,0} = 0$.
In addition, it constructs the mapping~$\mu : R \rightarrow \Atoms{P}$,
by letting~$\mu(r_i) = a^{j}_{i}$ for the unique~$1 \leq j \leq \ell_i$
for which~$z_{i,j} = 1$.
The algorithm constructs the set~$M_{\mu}$, as defined in
Lemma~\ref{lem:asp-disjrules-mu}.
Then, the algorithm checks~(2)~whether~$M_{\mu}$ is not a strict subset of~$M$.
The algorithm accepts the input if and only if both
checks~(1) and~(2) pass.
We can choose this algorithm so that it runs in polynomial time.

Now, by Lemma~\ref{lem:asp-disjrules-mu},
it is straightforward to verify that~$P \in \ASPcons\DisjRules{}$ if and only
if there exists some strings~$\overline{x}$
such that for all strings~$\overline{z}$ containing exactly~$k$ $1$'s
the algorithm accepts.
Since the algorithm runs in polynomial time, we can construct
in polynomial time a quantified
Boolean circuit~$C$ over the set~$X$ of existential variables
and the set~$Z$ of universal variables,
with the property that
the algorithm accepts for some strings~$\overline{x}$
and for all suitable strings~$\overline{z}$
if and only if
there is a truth assignment~$\alpha : X \rightarrow \SBs 0,1 \SEs$
such that for all truth assignments~$\beta : Z \rightarrow \SBs 0,1 \SEs$ of
weight~$k$ the assignment~$\alpha \cup \beta$ satisfies~$C$.
In other words,~$(C,k) \in \EAkWSat{}$
if and only if~$P \in \ASPcons\DisjRules{}$.
\end{proof}
}

This hardness result holds even for the case where
each atom occurs only a constant number of times in the input program.
In order to show this, we consider the following polynomial-time transformation
on disjunctive logic programs.
Let $P$ be an arbitrary program, and let $x$ be an atom of $P$ that occurs
$\ell \geq 3$ many times.
We introduce new atoms $x_j$ for all $1 \leq j \leq \ell$.
We replace each occurrence of $x$ in $P$ by a unique atom $x_j$.
Then, to $P$, we add the rules $(x_{j+1} \leftarrow x_{j})$, for all $1 \leq j < \ell$,
and the rule $(x_{1} \leftarrow x_{\ell})$.
We call the resulting program $P'$.
It is straightforward to verify that $P$ has an answer set if and only if $P'$
has an answer set.

\begin{proposition}
Let~$\ell \geq 3$.
\ASPcons\DisjRules{} is \EAkW{P}-hard,
even when each atom occurs at most~$\ell$ times.
\end{proposition}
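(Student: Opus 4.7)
The plan is to combine the $\EAkW{P}$-hardness reduction from the previous theorem with the polynomial-time transformation $P \mapsto P'$ described just above the statement, which replaces each atom with many occurrences by several copies linked via a cycle of implications. Since the transformation is polynomial-time and preserves both the parameter (number of disjunctive rules) and the answer-set existence, composing it with the reduction from the previous theorem yields the desired hardness result for the restricted problem.

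Concretely, I would first observe that the cyclic rules $(x_{j+1} \leftarrow x_{j})$ for $1 \leq j < \ell$ and $(x_{1} \leftarrow x_{\ell})$ added in the transformation are all \emph{normal} (non-disjunctive) rules, so the number of disjunctive rules of $P'$ equals the number of disjunctive rules of $P$; hence the parameter value is preserved. Second, I would verify the claim that $P$ has an answer set if and only if $P'$ has one: given an answer set $M$ of $P$, the set $M'$ obtained by replacing $x \in M$ by all copies $x_1,\dots,x_\ell$ (and leaving out all copies of $x \notin M$) is easily checked to be a subset-minimal model of $(P')^{M'}$, using the fact that the cycle of implications forces either all copies or none of them to be in any model of the reduct. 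Conversely, any answer set $M'$ of $P'$ must, by the same cyclic implications, either contain all copies of an atom $x$ or none, so collapsing the copies back to $x$ gives an answer set of $P$.

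Now I would apply the transformation iteratively. Starting from the program $P_0$ produced by the reduction in Theorem~\ref{thm:asp-disjrules}, I repeatedly pick any atom $x$ that still occurs more than $\ell$ times and apply the transformation. Each application reduces the maximum number of occurrences of each atom by a factor (a single atom occurring $m$ times is replaced by $m$ atoms each occurring at most $3$ times, except those involved in the cyclic rules, which occur a bounded number of times). After polynomially many steps, the resulting program $P^\star$ has every atom occurring at most $\ell$ times (in fact $3$ suffices for the atoms introduced by the transformation, and once every atom is replaced the condition holds); the whole procedure runs in polynomial time, and the parameter is unchanged.

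The only slightly delicate point, and the one I would be most careful about, is the verification that a single application of the transformation preserves the answer-set semantics: one must be sure that the strong equivalence induced by the cyclic implications works correctly under the Gelfond--Lifschitz reduct, since the reduct depends on the candidate model $M'$ and new atoms $x_j$ appear in $M'$. This is where most of the actual bookkeeping lies, but it is a routine check once one notes that the added rules are negation-free and thus appear unchanged in every reduct. Composing the reduction from Theorem~\ref{thm:asp-disjrules} with the iterated transformation gives an fpt-reduction from $\EAkWSat{}$ to the restriction of $\ASPcons\DisjRules{}$ in which every atom occurs at most $\ell$ times, proving the claimed hardness.
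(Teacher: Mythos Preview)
Your proposal is correct and follows essentially the same approach as the paper's proof: compose the hardness reduction of Theorem~\ref{thm:asp-disjrules} with repeated applications of the atom-splitting transformation, observing that the added cyclic rules are normal (so the parameter is preserved) and that answer-set existence is preserved. The paper's own proof is a two-sentence version of exactly this argument; you simply spell out the correctness of the transformation and the bookkeeping on atom occurrences in more detail than the paper does.
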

\begin{proof}
The transformation described above can be repeatedly applied to
ensure that each atom occurs at most 3 times.
Since this transformation does not introduce new disjunctive rules,
the result follows by Theorem~\ref{thm:asp-disjrules}.
\end{proof}

\noindent Repeated application of the described transformation
also gives us the following result.

\begin{corollary}
\label{cor:asp-varocc-completeness}
\ASPcons\AtomOcc{} is \para{\SigmaP{2}}-complete.
\end{corollary}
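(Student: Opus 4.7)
The plan is to combine the transformation described just before the corollary with the characterization of \para{\mtext{K}}-hardness via fixed-parameter slices~\cite{FlumGrohe03}. Membership in \para{\SigmaP{2}} is immediate, since already the unparameterized problem \ASPcons{} is in \SigmaP{2}~\cite{EiterGottlob95}; the parameter (maximum number of occurrences of any atom) plays no role for the upper bound.

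For hardness, I would proceed as follows. First I would recall that, by~\cite{FlumGrohe03}, to establish \para{\SigmaP{2}}-hardness it suffices to exhibit a single slice of \ASPcons\AtomOcc{} that is \SigmaP{2}-hard under polynomial-time reductions. I would take this slice to be the one where the parameter value equals~$3$. Then I would start from an arbitrary instance~$P$ of the unrestricted problem \ASPcons{}, which is \SigmaP{2}-complete~\cite{EiterGottlob95}, and repeatedly apply the transformation described right before the corollary: for each atom~$x$ occurring~$\ell \geq 3$ times, introduce fresh copies~$x_1,\dotsc,x_\ell$, replace the occurrences of~$x$ in~$P$ one-by-one by distinct copies~$x_j$, and add the "cycle" rules~$(x_{j+1} \leftarrow x_j)$ for~$1 \leq j < \ell$ together with~$(x_1 \leftarrow x_\ell)$, which in any model force all copies to have the same truth value. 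After this single step the atom~$x$ is eliminated, and each newly introduced atom~$x_j$ occurs at most three times (once as the old occurrence of~$x$, and once or twice in the new cycle rules). I would iterate this until every atom occurs at most three times, obtaining in polynomial time a program~$P'$ with~$P$ having an answer set iff~$P'$ does.

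Since this is a polynomial-time reduction from \ASPcons{} to the slice of \ASPcons\AtomOcc{} with parameter value~$3$, that slice is \SigmaP{2}-hard, so by~\cite{FlumGrohe03} the parameterized problem \ASPcons\AtomOcc{} is \para{\SigmaP{2}}-hard. Together with the trivial upper bound this proves the corollary.

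The main subtlety to check is the correctness of the transformation, i.e.\ that~$P$ and~$P'$ have the same answer set existence status; but this is exactly the property stated (and already used) for the preceding proposition, so no new work is needed. There is no real obstacle here: the transformation was already designed to preserve answer sets while shrinking the maximum atom-occurrence count to a constant, and the only additional ingredient is the standard slice-based characterization of~\para{\SigmaP{2}}-hardness.
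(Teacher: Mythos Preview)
Your proposal is correct and follows essentially the same approach as the paper: repeated application of the atom-splitting transformation yields, in polynomial time, an equivalent program in which every atom occurs at most three times, so the slice at parameter value~$3$ is \SigmaP{2}-hard; together with the trivial \SigmaP{2} upper bound and the Flum--Grohe slice characterization of \para{\SigmaP{2}}-hardness, this gives the result. The paper states this as an immediate consequence of the transformation, and your write-up simply spells out the same argument more explicitly.
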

}

\longversion{
\noindent Finally, we show \EAkW{P}-completeness
for \ASPcons\NonDualNormalRules{}.

\begin{theorem}
\label{thm:asp-nondualnormalrules}
\ASPcons\NonDualNormalRules{} is \EAkW{P}-complete.
\end{theorem}

\begin{proof}
To show hardness,
we give an fpt-reduction from~$\EAkWSat$.
Let~$(C,k)$ be an instance of~$\EAkWSat$,
where~$C = \exists X. \forall Y. \CCC$,~$X =
\SBs x_1,\dotsc,x_n \SEs$, and~$Y = \SBs y_1,\dotsc,y_m \SEs$.
By Proposition~\ref{prop:eakwp-monotone},
we may assume without loss of generality that~$C$ is monotone
in~$Y$,
i.e., that the variables~$y_1,\dotsc,y_m$ occur only positively in~$\CCC$.
We construct a disjunctive program~$P$ as follows.
We consider the variables~$X$ and~$Y$ as atoms.
In addition,
we introduce fresh atoms~$v_1,\dotsc,v_n$,~$w$, and~$y^j_i$
for all~$1 \leq j \leq k$,~$1 \leq i \leq m$.
Also, for each internal node~$g$ of~$\CCC$, we introduce
a fresh atom~$z_g$.
We let~$P$ consist of the following rules:
\begin{small}
\begin{align}
  \label{eak2-rule1} x_i \leftarrow &\ \aspnot{} v_i & \mtext{for } 1 \leq i \leq n; \\
  \label{eak2-rule2} v_i \leftarrow &\ \aspnot{} x_i & \mtext{for } 1 \leq i \leq n; \\
  \label{eak2-rule3} w \leftarrow &\ y^{j}_{1}, \dotsc, y^{j}_{m} & \mtext{for } 1 \leq j \leq k; \\
  y^{j}_{i} \leftarrow &\ y_{i} & \mtext{for } 1 \leq i \leq m, \label{eak2-rule4} 1 \leq j \leq k; \\
  y^{j}_{i} \leftarrow &\ w & \mtext{for } 1 \leq i \leq m, \label{eak2-rule5} 1 \leq j \leq k; \\
  \label{eak2-rule9} z_g \leftarrow &\ w & \mtext{for each internal node~$g$ of~$\CCC$}; \\
  \label{eak2-rule13} z \leftarrow &\ w & \\
  \label{eak2-rule14} y_i \leftarrow &\ w & \mtext{for } 1 \leq i \leq m \\
  y^{j}_{i} \vee y^{j'}_{i} \leftarrow &\ z & \mtext{for } 1 \leq i \leq m, \label{eak2-rule6} 1 \leq j < j' \leq k; \\
  \label{eak2-rule10} \sigma(g_1) \vee \dotsm \vee \sigma(g_u) \leftarrow &\ z_g & \hspace{-30pt}\mtext{for each conjunction node~$g$ of~$\CCC$ with inputs~$g_1,\dotsc,g_u$}; \\
  \sigma(g_i) \leftarrow &\ z_g & \hspace{-30pt}\mtext{for each disjunction node~$g$ of~$\CCC$ with inputs~$g_1,\dotsc,g_u$,} \nonumber \\
  \label{eak2-rule11} && \mtext{and each~$1 \leq i \leq u$}; \\
  \label{eak2-rule7} z_o \leftarrow &\ z & \mtext{where~$o$ is the output node of~$\CCC$}; \\
  \label{eak2-rule12} w \vee z \leftarrow & & \\
  \label{eak2-rule8} w \leftarrow &\ \aspnot{} w.
\end{align}
\end{small}%
Here, we define the following mapping~$\sigma$ from nodes of~$\CCC$
to variables in~$V$.
For each non-negated input node~$x_i \in X$, we let~$\sigma(x_i) = v_i$.
For each negated input node~$\neg x_i$, for~$x_i \in X$,
we let~$\sigma(\neg x_i) = x_i$.
For each input node~$y_j \in Y$, we let~$\sigma(y) = y_j$.
For each internal node~$g$, we let~$\sigma(g) = z_g$.
Intuitively,~$v_i$ corresponds to~$\neg x_i$.
Note, however, that~$\sigma$ negates literals over
variables in~$X$.
Note that~$P$ has~$k$ rules that are not dual-Horn,
namely the rules~(\ref{eak2-rule3}).
We show that~$(\varphi,k) \in \EAkWSat{}$ if and only if
$P$ has an answer set.

$(\Rightarrow)$
Assume there exists an assignment~$\alpha : X \rightarrow \SBs 0,1 \SEs$
such that for each assignment~$\beta : Y \rightarrow \SBs 0,1 \SEs$ of weight~$k$
it holds that~$\alpha \cup \beta$ satisfies~$\CCC$.
We show that~$M = \SB x_i \SM \alpha(x_i) = 1, 1 \leq i \leq n \SE \cup
\SB y^{j}_{i}, y_{i} \SM 1 \leq i \leq m, 1 \leq j \leq k \SE \cup
\SB z_g \SM g \mtext{ an internal node of } \CCC \SE \cup \SBs w,z \SEs$
is an answer set of~$P$.
We have that~$P^{M}$ consists of Rules~(\ref{eak2-rule3})--(\ref{eak2-rule12}),
the rules~$(x_i \leftarrow)$ for all~$1 \leq i \leq n$ such that~$\alpha(x_i) = 1$,
and the rules~$(v_i \leftarrow)$ for all~$1 \leq i \leq n$ such that~$\alpha(x_i) = 0$.
Clearly,~$M$ is a model of~$P^{M}$.
We show that~$M$ is a minimal model of~$P^{M}$.
Assume~$M' \subsetneq M$ is a minimal model of~$P^{M}$.
If~$M'$ does not coincide with~$M$ on the atoms~$x_i$ and~$v_i$,
then~$M'$ is not a model of~$P^{M}$.
If~$w \in M'$, then by Rules~(\ref{eak2-rule5})--(\ref{eak2-rule14}),~$M' = M$.
Therefore,~$w \not\in M'$.
By Rule~(\ref{eak2-rule3}), we have
that~$y^{1}_{i_1},y^{2}_{i_2},\dotsc,y^{k}_{i_k} \not\in M'$,
for some~$1 \leq i_1,\dotsc,i_k \leq m$.
By Rule~(\ref{eak2-rule6}) and~(\ref{eak2-rule12}),
we know that~$i_1,\dotsc,i_k$ are all different,
since otherwise it would have to hold that~$w \in M'$.
By Rule~(\ref{eak2-rule4}), it holds that~$y_{i_1},\dotsc,y_{i_k} \not\in M'$.
Define the assignment~$\gamma : X \cup Y \rightarrow \SBs 0,1 \SEs$ by
letting~$\gamma(x_i) = 1$ if and only if~$x_i \in M'$
and~$\gamma(y_i) = 1$ if and only if~$y_i \in M'$.
Clearly,~$\gamma$ coincides with~$\alpha$ on~$X$,
and~$\gamma$ assigns exactly~$k$ variables~$y_j$ to true.
Now, since~$(C,k) \in \EAkWSat{}$, we know that~$\gamma$ satisfies~$\CCC$.
Using the Rules~(\ref{eak2-rule10}) and~(\ref{eak2-rule11}),
we can show by an inductive argument that for each internal node~$g$
of~$\CCC$ that is forced to true by~$\gamma$ it must hold
that~$z_g \not\in M'$.
Since~$\gamma$ satisfies~$\CCC$, it forces the output node~$o$
of~$\CCC$ to be true, and thus by Rule~(\ref{eak2-rule7}),
we know that~$z \not\in M'$.
Then, by Rule~(\ref{eak2-rule12}), we know that~$w \in M'$,
which is a contradiction.
From this we can conclude that no model~$M' \subsetneq M$ of~$P^{M}$ exists,
and thus~$M$ is an answer set of~$P$.

$(\Leftarrow)$
Assume~$P$ has an answer set~$M$.
We know that~$w \in M$, since otherwise~$(w \leftarrow)$ would be a rule
of~$P^{M}$, and then~$M$ would not be a model of~$P^{M}$.
Then, by Rules~(\ref{eak2-rule5})--(\ref{eak2-rule14}), also~$y_i,y^{j}_{i} \in M$
for all~$1 \leq i \leq m$ and~$1 \leq j \leq k$,~$z_g \in M$
for all internal nodes~$g$ of~$\CCC$, and~$z \in M$.
We show that for each~$1 \leq i \leq n$ it holds
that~$\Card{M \cap \SBs x_i,v_i \SEs} = 1$.
Assume that for some~$1 \leq i \leq n$,~$M \cap \SBs x_i,v_i \SEs = \emptyset$.
Then~$(x_i \leftarrow)$ and~$(v_i \leftarrow)$ would be rules of~$P^{M}$,
and then~$M$ would not be a model of~$P^{M}$, which is a contradiction.
Assume instead that~$\SBs x_i,v_i \SEs \subseteq M$.
Then~$P^{M}$ would contain no rules with~$x_i$ and~$v_i$ in the head,
and hence~$M$ would not be a minimal model of~$P^{M}$, which is a contradiction.

We now construct an assigment~$\alpha : X \rightarrow \SBs 0,1 \SEs$
such that for all assignments~$\beta : Y \rightarrow \SBs 0,1 \SEs$ of
weight~$k$ it holds that~$\psi[\alpha \cup \beta]$ evaluates to true.
Define~$\alpha$ by letting~$\alpha(x_i) = 1$ if and only if~$x_i \in M$.
Now let~$\beta$ be an arbitrary truth assignment to~$Y$ of weight~$k$.
We show that~$\alpha \cup \beta$ satisfies~$\CCC$.
We proceed indirectly, and assume to the contrary that~$\alpha \cup
\beta$ does not satisfy~$\CCC$.
We construct a model~$M' \subsetneq M$ of~$P^{M}$.
We let~$y_{i_1},\dotsc,y_{i_k}$ denote the~$k$ variables~$y_i$ such that
$\beta(y_i) = 1$.  We let~$M'$ consist
of~$(M \cap \SB x_i,v_i \SM 1 \leq i \leq n \SE)$,
of~$\SB y^j_i, y_i \SM 1 \leq i \leq m, 1 \leq j \leq k \SE \backslash
\SB y^{\ell}_{i_{\ell}},y_{i_{\ell}} \SM 1 \leq \ell \leq k \SE$,
of~$\SB z_g \SM g \mtext{ an internal node of } \CCC
\mtext{ not satisfied by } \alpha \cup \beta \SE$,
and of~$\SBs z \SEs$.
For all rules of~$P^{M}$ other than Rules~(\ref{eak2-rule10})--(\ref{eak2-rule7}),
it is clear that~$M'$ satisfies them.
It holds that~$M'$ also
satisfies Rules~(\ref{eak2-rule10})--(\ref{eak2-rule11}),
since~$M'$ does not contain~$z_g$ for all internal nodes~$g$ of~$\CCC$
that are satisfied by~$\alpha \cup \beta$.
Then, since~$\alpha \cup \beta$ does not satisfy the output node~$o$
of~$\CCC$, we know that~$z_o \in M'$,
and thus~$M'$ satisfies Rule~(\ref{eak2-rule7}).
It then holds that~$M' \subsetneq M$ is a
model of~$P^{M}$, which is a contradiction with the fact that~$M$ is
an answer set of~$P$.
From this we can conclude that~$\alpha \cup \beta$ satisfies~$\CCC$.
Since~$\beta$ was arbitrary, we know this holds for all truth
assignments~$\beta$ to~$Y$ of weight~$k$.
Therefore,~$(C,k) \in \EAkWSat{}$.

Membership in \EAkW{P} can be shown similarly to
\EAkW{P}-membership for \ASPcons\DisjRules{},
as in the proof of Theorem~\ref{thm:asp-disjrules}.
We omit a detailed membership proof here.
\end{proof}
}

\longversion{
\section{Relating \kstar{} and \stark{} to Existing Complexity Classes}
\label{sec:relating-eka-and-eak}

In this section, we relate the classes of the \kstar{} and \stark{} hierarchies
to existing parameterized complexity classes.
In particular, we give evidence that these classes differ from
the parameterized complexity classes \para{}\NP{}, \para{}\co{}\NP{},
\para{}\SigmaP{2} and \para{}\PiP{2}.
We will do this by providing separation results that are based on
complexity-theoretic assumptions related to
well-known classical complexity classes
(that is, the classes of the first two levels of the Polynomial Hierarchy).

\subsection{Situating \kstar{}}
We begin with making some observations about the relation of \EkA{}
to existing parameterized complexity classes.
It is straightforward to see that~$\EkA{} \subseteq \para{\SigmaP{2}}$.
In polynomial time, any formula~$\exists X. \forall Y. \psi$ can be transformed into
a \SigmaP{2}-formula that is true if and only if for some assignment~$\alpha$ of weight~$k$
to the variables~$X$ the formula~$\forall Y. \psi[\alpha]$ is true.
Also, trivially,~$\para{\co{\NP}} \subseteq \EkA{}$.
To summarize, we obtain the following inclusions:
\[ \para{\co\NP} \subseteq \EkA \subseteq \para{}\SigmaP{2}
\quad \mtext{ and } \quad
\para{\NP} \subseteq \AkE \subseteq \para{}\PiP{2}. \]
This immediately leads to the following result.
\begin{proposition}
If~$\EkA \subseteq \para{\NP}$, then~$\NP = \co\NP$.
\end{proposition}

It is also not so difficult to see that~$\EkA{} \subseteq \XcoNP{}$.
A straightforward brute-force algorithm to solve \EkAWSat{}
that tries out all~$\binom{n}{k} = O(n^k)$ assignments of weight~$k$
to the existentially quantified variables
(and that uses nondeterminism to handle the assignment
to the universally quantified variables)
witnesses this.

A natural question to ask is whether~$\para{\NP} \subseteq \EkA$.
Since~$\para{}\NP{} \subseteq \XcoNP{}$ implies~$\NP = \co{}\NP{}$
\cite[Proposition~8]{FlumGrohe03}, this is unlikely.
We give a direct proof of this consequence.
\begin{proposition}
\label{prop:1}
If~$\para{\NP} \subseteq \EkA$,
then~$\NP = \co\NP$.
\end{proposition}
\begin{proof}
Assume that~$\para{\NP} \subseteq \EkA$.
Note that the language \thSAT{} of satisfiable propositional formulas
is \NP{}-complete, and the language \thUNSAT{} of unsatisfiable propositional
formulas is \co\NP{}-complete.
The parameterized problem~$P = \SB (\varphi,1) \SM \varphi \in \thSAT \SE$
is in \para{\NP}.
Then also~$P \in \EkA$.
This means that there is an fpt-reduction~$R$ from~$P$ to \EkAWSat{}.
We construct a polynomial-time reduction~$S$ from \thSAT{} to \thUNSAT{}.
Let~$\varphi$ be an instance of \thSAT{}.
The reduction~$R$ maps~$(\varphi,1)$ to an instance~$(\varphi',f(1))$ of \EkAWSat{},
where~$f$ is some computable function and~$\varphi' = \exists X. \forall Y. \psi$,
such that~$(\varphi',f(1)) \in \EkAWSat{}$ if and only if~$\varphi \in \thSAT$.
Note that~$f(1)$ is a constant, since~$f$ is a fixed function.
To emphasize this, we let~$c = f(1)$, and we will use~$c$ to denote~$f(1)$.
By definition, we know that~$(\varphi',c) \in \EkAWSat{}$ if and only if
for some truth assignment~$\alpha$ to the variables~$X$ of weight~$c$,
the formula~$\forall Y. \psi[\alpha]$ is true.
Let~$\mtext{ta}(X,c)$ denote the set of all truth assignments to~$X$
of weight~$c$.
We then get that~$\varphi \in \thSAT$ if and only if
the formula~$\forall Y. \chi$ is true,
where~$\chi$ is defined as follows:
\[ \chi = \underset{\alpha \in \mtext{ta}(X,c)}{\bigvee} \psi[\alpha] \]
It is straightforward to verify that the mapping
$\varphi \mapsto \neg \chi$ is a polynomial-time reduction
from \thSAT{} to \thUNSAT{}.
\end{proof}

This implies that~$\EkA$ is very likely to be a strict subset
of~$\para{\SigmaP{2}}$.
\begin{corollary}
If~$\EkA = \para{\SigmaP{2}}$, then~$\NP = \co\NP$.
\end{corollary}

The following result shows
another way in which the class \EkA{} relates
to the existing complexity class \co\NP{}.
Let~$P$ be a parameterized decision problem, and let~$c \geq 1$ be an integer.
Recall that the \emph{$c$-th slice of~$P$}, denoted~$P_{c}$, is the (unparameterized)
decision problem~$\SB x \SM (x,c) \in P \SE$.

\begin{proposition}
\label{prop:eka-conp}
Let~$P$ be a parameterized problem complete for \EkA{},
and let~$c \geq 1$ be an integer.
Then~$P_{c}$ is in \co\NP{}.
Moreover,
there exists some integer~$d \geq 1$
such that~$P_{1} \cup \dotsm \cup P_{d}$ is \co\NP-complete.
\end{proposition}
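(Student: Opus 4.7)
The plan is to prove the two parts by directly exploiting the two directions of \EkA{}-completeness of $P$ together with the observation that the existential block in an \EkA{}-instance can be brute-forced in polynomial time whenever the weight parameter is bounded by a constant.

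For the first part, I would start from the fact that $P \in \EkA{}$ gives an fpt-reduction $R$ from $P$ to \EkAWSat{}, so there are a computable function $g$ and a polynomial $p$ such that $R$ runs in time $f(c)\cdot p(|x|)$ and maps $(x,c)$ to an instance $(\exists X.\forall Y.\psi,k')$ with $k'\leq g(c)$. Fixing the slice index $c$ turns the running time into a polynomial in $|x|$ and, crucially, bounds $k'$ by a constant $d_c = g(c)$. The yes-instances of the image are precisely those formulas for which some assignment of weight $k'$ to $X$ makes $\forall Y.\psi[\alpha]$ true; since there are only $\binom{|X|}{k'}=O(|X|^{d_c})$ candidate assignments $\alpha$, I can enumerate them in polynomial time and form the disjunction $\chi=\bigvee_{\alpha\in\mtext{ta}(X,k')}\psi[\alpha]$, which has polynomial size. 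Then $(x,c)\in P_c$ iff $\forall Y.\chi$ is valid, i.e.\ iff $\neg\chi\in\thUNSAT$. Composing $R$ with this transformation yields a polynomial-time reduction from $P_c$ to \thUNSAT{}, establishing $P_c\in\co\NP$.

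For the second part, I would use \EkA{}-hardness of $P$ to get an fpt-reduction $S$ from \EkAWSat{} to $P$, with an associated computable function $h$ so that $S$ maps any instance $(\varphi,k)$ of \EkAWSat{} to some $(x',k')$ with $k'\leq h(k)$. Fix $k=1$: then $S$ is a polynomial-time reduction from $(\EkAWSat{})_1$ into $P_1\cup\dotsm\cup P_{d}$ for $d=h(1)$. It therefore suffices to exhibit a polynomial-time reduction from \thUNSAT{} to the first slice of \EkAWSat{}. The obvious choice works: given a Boolean formula $\varphi$ over variables $Y$, introduce a fresh variable $x$ and output the instance $(\exists x.\forall Y.\neg\varphi,1)$ of \EkAWSat{}. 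Since $\neg\varphi$ does not mention $x$, the assignment of weight $1$ to $\{x\}$ is immaterial, so the output is a yes-instance iff $\forall Y.\neg\varphi$ is true, iff $\varphi$ is unsatisfiable. Composing gives a polynomial-time reduction from \thUNSAT{} to $P_1\cup\dotsm\cup P_d$, hence \co\NP{}-hardness; combined with the first part we obtain \co\NP{}-completeness of this union.

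There is no substantial obstacle here: both parts are straightforward consequences of the definitions of fpt-reduction and of slices. The only point requiring a small amount of care is keeping track of how the fpt-reductions act on constant parameter values, which is what produces the bounds $d_c$ and $d$ and, in turn, allows an unparameterized (polynomial-time) treatment of the affected slices.
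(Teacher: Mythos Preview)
Your second part is correct and essentially matches the paper. The first part, however, has a genuine gap: the equivalence you assert, namely $(x,c)\in P_c$ iff $\forall Y.\chi$ is valid with $\chi=\bigvee_{\alpha\in\mtext{ta}(X,k')}\psi[\alpha]$, fails in the backward direction. What you need is $\exists\alpha\,\forall Y.\,\psi[\alpha]$; validity of your $\chi$ only expresses $\forall Y\,\exists\alpha.\,\psi[\alpha]$, which is strictly weaker since the quantifiers are swapped. For a concrete failure, take two weight-$k'$ assignments $\alpha_1,\alpha_2$ with $\psi[\alpha_1]$ satisfied exactly when some $y=0$ and $\psi[\alpha_2]$ exactly when $y=1$: then $\psi[\alpha_1]\vee\psi[\alpha_2]$ is a tautology, yet neither disjunct is, so your reduction would wrongly declare the image a yes-instance.

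The fix, and this is what the paper does, is to use variable-disjoint copies: for each $\alpha$ introduce a fresh copy $Y^{\alpha}$ of $Y$, let $(\psi[\alpha])^{\alpha}$ be $\psi[\alpha]$ with every $y\in Y$ replaced by $y^{\alpha}$, and set $\chi=\bigvee_{\alpha}(\psi[\alpha])^{\alpha}$ over $Y'=\bigcup_{\alpha}Y^{\alpha}$. Now $\forall Y'.\chi$ is valid iff some disjunct is valid, because when no disjunct is valid one can assemble a single falsifying assignment to $Y'$ from the individual falsifiers $\beta_\alpha$; this is exactly the condition $\exists\alpha\,\forall Y.\,\psi[\alpha]$. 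The size of $\chi$ remains polynomial (there are $O(|X|^{d_c})$ disjuncts, each of size $|\psi|$, and $|Y'|$ is polynomial), so with this correction the rest of your argument goes through unchanged.
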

\begin{proof}
We show \co\NP-membership of~$P_{c}$, by
constructing a polynomial-time reduction~$S$ from~$P_c$ to \thUNSAT{}.
Since~$P \in \EkA{}$, we know that there exists an fpt-reduction~$R$ from~$P$
to~$\EkAWSat(\Phi)$.
Therefore, there exist computable functions~$f$ and~$g$ and a polynomial~$p$
such that for all instances~$(x,k)$ of~$P$,~$R(x,k) = (x',k')$
is computable in time~$f(k) \cdot p(\Card{x})$
and~$k' \leq g(k)$.
We describe the reduction~$S$.
Let~$x$ be an arbitrary instance of~$P_c$.
We know~$R$ maps~$(x,c)$ to~$(\varphi,k')$,
for some~$k' \leq g(c)$,
where~$\varphi = \exists X. \forall Y. \psi$.
Note that~$k'$ is bounded by a constant~$g(c) = d$.
Let~$\mtext{ta}(X,k')$ denote the set of all truth assignment to~$X$
of weight~$k'$.
Then, for each~$\alpha \in \mtext{ta}(X,k')$,
we let~$Y^{\alpha}$ be the set containing of a copy~$y^{\alpha}$ of
each variable~$y \in Y$,
and we let~$Y' = \bigcup_{\alpha \in \mtext{ta}(X,k')} Y^{\alpha}$.
We then get that~$\varphi$ is equivalent to the
formula~$\forall Y'. \chi$,
where~$\chi = \bigvee_{\alpha \in \mtext{ta}(X,k')} (\psi[\alpha])^{\alpha}$.
Here,~$(\psi[\alpha])^{\alpha}$ denotes the formula~$\psi[\alpha]$
where each variable~$y \in Y$ is replaced by its copy~$y^{\alpha}$.
Also, the size of~$\chi$ is polynomial in the size of~$\psi$.
We then let~$S(x) = \neg\chi$.
It is straightforward to verify that~$S$ is a correct polynomial-time
reduction from~$P_{c}$ to \thUNSAT{}.

We show that there exists a function~$f$
such that for any positive integer~$s \geq 1$,
there is a polynomial-time reduction from \thUNSAT{}
to~$P_{1} \cup \dotsm \cup P_{f(s)}$.
Then, in particular,~$P_{1} \cup \dotsm \cup P_{f(1)}$
is \co\NP{}-complete.
Let~$s \geq 1$ be an arbitrary integer. We construct the reduction~$S$.
There is a trivial polynomial-time reduction~$S$ from \thUNSAT{}
to~$(\EkAWSat)_{s}$, that maps a Boolean formula~$\varphi$
over a set of variables~$Y$ to the instance~$(\exists \SBs x_1,\dotsc,x_s \SEs.
\forall Y. \neg\chi,s)$ of \EkAWSat{}.
Since~$P$ is \EkA{}-complete,
there exists an fpt-reduction~$R$ from~$\EkAWSat$
to~$P$.
From this, we know that there exists a nondecreasing and unbounded
function~$f$ such that
for each instance~$(x,k)$ of~$\EkAWSat$ it holds that~$k' \leq f(k)$,
where~$R(x,k) = (x',k')$.
This reduction~$R$ is a polynomial-time reduction from~$(\EkAWSat)_{s}$
to~$P_{1} \cup \dotsm \cup P_{f(s)}$.
Composing the polynomial-time reductions~$S$ and~$R$,
we obtain a reduction from \thUNSAT{} to~$P_{1} \cup \dotsm \cup P_{f(s)}$.
\end{proof}

\subsection{Situating the \stark{} Hierarchy}

Next, we continue with relating the classes of the \stark{} hierarchy
to existing parameterized complexity classes.
Similarly to the case of \kstar{},
we obtain the following inclusions:
\[ \para{\co\NP} \subseteq \AEkW{1} \subseteq \dotsm \subseteq
\AEkW{P} \subseteq \para{}\PiP{2} \]
and
\[ \para{\NP} \subseteq \EAkW{1} \subseteq \dotsm \subseteq
\EAkW{P} \subseteq \para{}\SigmaP{2}. \]
This immediately leads to the following result.
\begin{proposition}
If~$\EAkW{1} \subseteq \para{}\co{}\NP{}$, then~$\NP = \co\NP$.
\end{proposition}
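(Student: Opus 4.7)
The plan is to mimic the argument used for Proposition~\ref{prop:1}, exploiting the fact that an fpt-reduction whose input parameter is a fixed constant degenerates into a polynomial-time reduction. The key observation is that $\para{\NP}$ sits at the bottom of the \stark{} hierarchy, so $\para{\NP} \subseteq \EAkW{1}$ by definition. Combining this with the hypothesis $\EAkW{1} \subseteq \para{\co\NP}$ immediately yields $\para{\NP} \subseteq \para{\co\NP}$, and the goal reduces to showing that this inclusion alone already collapses $\NP$ and $\co\NP$.

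To extract such a collapse, I would consider the trivial parameterization of \thSAT{} in which the parameter is frozen to a constant: let $P = \SB (\varphi,1) \SM \varphi \in \thSAT \SE$. Then $P \in \para{\NP}$, so by hypothesis $P \in \para{\co\NP}$. Unfolding the definition of $\para{\co\NP}$ from Section~\ref{sec:pct}, there exist a computable function $f : \mathbb{N} \rightarrow \Pi^{*}$ and a classical problem $P' \in \co\NP$ such that for every $\varphi$ we have $(\varphi,1) \in P$ if and only if $(\varphi, f(1)) \in P'$. Since $f(1)$ is a fixed string of constant length, the slice $P'' = \SB \varphi \SM (\varphi, f(1)) \in P' \SE$ is itself in $\co\NP$ (one simply hard-codes the string $f(1)$ into a $\co\NP$-machine for $P'$).

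The equivalence $\varphi \in \thSAT \Leftrightarrow \varphi \in P''$ therefore witnesses $\thSAT \in \co\NP$, which gives $\NP \subseteq \co\NP$ and hence $\NP = \co\NP$. The argument is essentially the same as in Proposition~\ref{prop:1}, with the roles of $\EkA$ and $\para{\co\NP}$ swapped for $\EAkW{1}$ and $\para{\co\NP}$; the only ingredient specific to the \stark{} hierarchy is the trivial inclusion $\para{\NP} \subseteq \EAkW{1}$, which is a direct consequence of how \EAkWSat{} is defined (one can always take $k=0$ or pad with trivially-true universal assignments). I do not anticipate any real obstacle here, as all the technical work is already encapsulated in the definition of $\para{\co\NP}$ and the observation that fpt-reductions with constant parameter are polynomial-time reductions.
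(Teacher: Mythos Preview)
Your argument is correct and matches the paper's approach: the paper states this proposition as an immediate consequence of the inclusion $\para{\NP} \subseteq \EAkW{1}$, which together with the hypothesis gives $\para{\NP} \subseteq \para{\co\NP}$ and hence $\NP = \co\NP$. You have simply spelled out in more detail (via the definition of $\para{\co\NP}$ applied to the constant-parameter \thSAT{}) what the paper leaves implicit.
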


It is also not so difficult to see that~$\EAkW{P} \subseteq \XNP{}$.
A straightforward brute-force algorithm to solve \EAkWSat{}
that tries out all~$\binom{n}{k} = O(n^k)$ assignments of weight~$k$
to the universally quantified variables
(and that uses nondeterminism to handle the assignment
to the existentially quantified variables)
witnesses this.

A natural question to ask is whether~\para{\co\NP} is contained
in any of the classes \EAkW{t}.
Since $\para{}\co{}\NP{} \subseteq \XNP{}$ implies~$\NP = \co{}\NP{}$
\cite[Proposition~8]{FlumGrohe03}, this is unlikely.
We show how to prove this consequence directly.

\begin{proposition}
If $\para{\co\NP} \subseteq \EAkW{P}$,
then $\NP = \co\NP$.
\end{proposition}
\begin{proof}[Proof (sketch)]
With an argument similar to the one in the proof of Proposition~\ref{prop:1},
a polynomial-time reduction from $\thUNSAT{}$ to $\thSAT{}$
can be constructed.
An additional technical observation needed for this case is that
\thSAT{} is in \NP{} also when the input is a Boolean circuit.
\end{proof}

Then in particular for \EAkW{1},
we get the following result.

\begin{corollary}
If $\para{\co\NP} \subseteq \EAkW{1}$,
then $\NP = \co\NP$.
\end{corollary}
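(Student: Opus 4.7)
The plan is to observe that this corollary is an essentially immediate consequence of the preceding proposition, via the inclusion $\EAkW{1} \subseteq \EAkW{P}$ that holds by definition of the \stark{} hierarchy. Indeed, if $\para{\co\NP} \subseteq \EAkW{1}$, then $\para{\co\NP} \subseteq \EAkW{P}$, and the previous proposition yields $\NP = \co\NP$. Thus a one-line proof suffices; what follows is an outline of how a self-contained direct proof would run, mirroring the strategy indicated in Proposition~\ref{prop:1}.

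First I would fix a \co\NP-complete language such as \thUNSAT{} and form the trivially parameterized version $P = \SB (\varphi,1) \SM \varphi \in \thUNSAT \SE$, which lies in $\para{\co\NP}$. Under the hypothesis, $P$ admits an fpt-reduction $R$ to $\EAkWSat(\Gamma)$ (or, invoking Theorem~\ref{thm:stark-2cnf-hardness}, to $\EAkWSat(2\CNF)$). Since every instance of $P$ carries the constant parameter $1$, the image parameter is bounded by a single constant $c = f(1)$, and the reduction is in fact computable in polynomial time. Thus $R$ maps $\varphi \in \thUNSAT$ to an instance $(\exists X. \forall Y. \psi,\, c)$ of $\EAkWSat$ with a fixed weight bound $c$ on the universal side.

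The key step is then to flatten the bounded-weight universal quantifier. Because $c$ is constant, there are only $\binom{\Card{Y}}{c} = O(\Card{Y}^c)$ many weight-$c$ assignments $\beta$ to $Y$, so the statement ``for all weight-$c$ assignments $\beta$ to $Y$, $\psi[\alpha \cup \beta]$ holds'' is equivalent to the conjunction $\bigwedge_{\beta} \psi[\alpha \cup \beta]$, which has size polynomial in $\Card{\psi}$. The full condition ``there is an $\alpha$ such that $\ldots$'' is therefore an ordinary satisfiability question about a polynomial-size circuit, i.e., an instance of (circuit-)\thSAT{}, which lies in $\NP$. Composing, we obtain a polynomial-time many-one reduction from \thUNSAT{} to \thSAT{}, so $\co\NP \subseteq \NP$ and hence $\NP = \co\NP$.

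The proof contains no real obstacle; the only point requiring care is that the image of $R$ is a Boolean circuit rather than a formula, but since \thSAT{} for circuits is still in $\NP$, the collapse argument goes through unchanged. I would therefore present the corollary as a direct consequence of the previous proposition and omit the standalone argument.
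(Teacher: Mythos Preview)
Your proposal is correct and matches the paper's approach: the paper also presents this corollary as an immediate specialization of the preceding proposition via the inclusion $\EAkW{1} \subseteq \EAkW{P}$, and your direct outline faithfully mirrors the argument sketched there (and in Proposition~\ref{prop:1}), including the observation that circuit-\thSAT{} is still in \NP{}.
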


Moreover, this gives us the following separation.
\begin{corollary}
If~$\EAkW{P} = \para{}\SigmaP{2}$, then~$\NP = \co{}\NP{}$.
\end{corollary}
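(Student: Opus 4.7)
The plan is to derive this as an immediate consequence of the previous proposition (the one stating that $\para{\co\NP} \subseteq \EAkW{P}$ implies $\NP = \co\NP$). The key observation is that $\co\NP = \PiP{1} \subseteq \SigmaP{2}$ at the classical level, and this inclusion lifts straightforwardly to the parameterized world, giving $\para{\co\NP} \subseteq \para{\SigmaP{2}}$.

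First I would note the chain of inclusions: assuming $\EAkW{P} = \para{\SigmaP{2}}$, we obtain
\[ \para{\co\NP} \subseteq \para{\SigmaP{2}} = \EAkW{P}. \]
The first inclusion follows directly from the definition of $\para{K}$: any problem that becomes a $\co\NP$ problem after an fpt-time precomputation on the parameter is in particular a $\SigmaP{2}$ problem after the same precomputation (since $\co\NP \subseteq \SigmaP{2}$).

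Then I would simply invoke the previous proposition to conclude that $\NP = \co\NP$. There is no genuine obstacle here — the content of the corollary is really just packaging the previous proposition together with the trivial inclusion of classical complexity classes, and no new reduction or construction is needed.
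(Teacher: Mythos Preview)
Your proposal is correct and matches the paper's approach exactly: the paper presents this corollary immediately after the proposition that $\para{\co\NP} \subseteq \EAkW{P}$ implies $\NP = \co\NP$, stating only ``Moreover, this gives us the following separation,'' so the intended argument is precisely the one you give---combine the trivial inclusion $\para{\co\NP} \subseteq \para{\SigmaP{2}}$ with the hypothesis and invoke the preceding proposition.
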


Similarly to the class \EkA{},
the classes \AEkW{t} relate to the existing complexity
class \co\NP{} in the following way.

\begin{proposition}
\label{prop:aek-conp}
Let~$P$ be a parameterized problem that is contained in \AEkW{P}
and that is hard for \AEkW{1},
and let~$c \geq 1$ be an integer.
Then~$P_{c}$ is in \co\NP{}.
Moreover,
there exists some integer~$d \geq 1$
such that~$P_{1} \cup \dotsm \cup P_{d}$ is \co\NP{}-complete.
\end{proposition}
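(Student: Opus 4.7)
The plan mirrors the proof of Proposition~\ref{prop:eka-conp}, with the roles of \EkA{} and \AEkW{1}/\AEkW{P} exchanged: I would establish the two statements independently, obtaining \co\NP{}-membership of each slice $P_c$ by exploiting membership of $P$ in \AEkW{P}, and \co\NP{}-hardness of a bounded union of slices by exploiting \AEkW{1}-hardness of $P$.

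For the membership part, since $P \in \AEkW{P}$ and $\AEkWSat$ (for general circuits) is \AEkW{P}-complete, there is an fpt-reduction $R$ from $P$ to $\AEkWSat$ with some computable parameter bound $g$. Given an instance $x$ of $P_c$, applying $R$ to $(x,c)$ yields an instance $(\varphi,k')$ of $\AEkWSat$ with $k' \leq g(c) =: d$ a constant, where $\varphi = \forall X. \exists Y. C$ and the existential block is restricted to assignments of weight $k'$. Since $k' \leq d$, there are at most $\binom{\Card{Y}}{d} = O(\Card{Y}^d)$ weight-$k'$ assignments $\beta_1,\dotsc,\beta_m$ to $Y$, and these can be enumerated in polynomial time. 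Then $\varphi$ is equivalent to $\forall X. \bigvee_{i=1}^{m} C[\beta_i]$, whose negation holds iff $\bigwedge_{i=1}^{m} \neg C[\beta_i]$ is satisfiable (over the shared variables $X$). This conjunction has polynomial size, so mapping $x$ to $\bigwedge_{i=1}^{m} \neg C[\beta_i]$ is a polynomial-time reduction from $P_c$ to \thUNSAT{}, placing $P_c$ in \co\NP{}.

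For the hardness part, I would first give a trivial polynomial-time reduction from \thUNSAT{} to a fixed slice of $\AEkWSat(\DNF)$: given a CNF formula $\chi$ over variables $Z$, introduce a fresh variable $x$ and map $\chi$ to the instance $(\forall Z. \exists \SBs x \SEs. \neg\chi, 1)$; since $\neg\chi$ does not mention $x$, this instance is a yes-instance iff $\chi$ is unsatisfiable. As $\neg\chi$ is a DNF, this is indeed a reduction to the $1$-slice of $\AEkWSat(\DNF)$, which lies in $\AEkW{1}$. Since $P$ is \AEkW{1}-hard, there is an fpt-reduction $R'$ from $\AEkWSat(\DNF)$ to $P$ with some parameter bound $f$. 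Composing the two reductions yields a polynomial-time reduction from \thUNSAT{} to $P_1 \cup \dotsm \cup P_{f(1)}$. Combined with the membership part, this shows that $P_1 \cup \dotsm \cup P_{d}$ is \co\NP{}-complete for $d = f(1)$.

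The core conceptual point is that the fpt-reduction $R$ collapses to a polynomial-time reduction when restricted to a single slice of $P$, producing outputs in a bounded union of slices on the target side; this transfers the classical complexity bounds of the canonical complete problems for the \stark{} hierarchy back to $P$. The only technical subtlety is ensuring that the bound $g(c)$ on the resulting parameter $k'$ is a genuine constant on a fixed slice, which is immediate from the definition of fpt-reduction together with the fact that $c$ is fixed.
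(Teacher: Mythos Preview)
Your approach mirrors the paper's proof almost exactly: both parts use the same ideas (expanding the bounded-weight quantifier block into a polynomial-size disjunction for membership, and composing a trivial reduction from \thUNSAT{} with the fpt-reduction guaranteed by hardness). There is, however, one small technical gap in your hardness argument.

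You assert that $\AEkWSat(\DNF)$ lies in $\AEkW{1}$. This is not justified by the definitions: a general DNF formula is a disjunction of conjunctions where both the outer disjunction and the inner conjunctions may be large, so it is a weft-$2$ circuit, placing $\AEkWSat(\DNF)$ in $\AEkW{2}$ rather than $\AEkW{1}$. Since the paper does not establish that these classes coincide (and, by analogy with the Weft hierarchy, they presumably do not), you cannot invoke $\AEkW{1}$-hardness of $P$ to obtain a reduction from $\AEkWSat(\DNF)$ to $P$. The fix is immediate: restrict the \thUNSAT{} input to $3\CNF$ (which is still \co\NP{}-complete), so that $\neg\chi$ is a $3\DNF$ formula and hence a weft-$1$ circuit; then your reduction lands in $(\AEkWSat(3\DNF))_1$, which is in $\AEkW{1}$, and the rest of your argument goes through. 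This is precisely what the paper does.
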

\begin{proof}
The proof of this proposition is similar to the proof
of Proposition~\ref{prop:eka-conp}.
We show \co\NP-membership of~$P_{c}$, by
constructing a polynomial-time reduction~$S$ from~$P_c$ to \thUNSAT{}.
Since~$P \in \AEkW{P}$, we know that there exists an fpt-reduction~$R$
from~$P$ to~$\AEkWSat(\Gamma)$.
Therefore, there exist computable functions~$f$ and~$g$ and a polynomial~$p$
such that for all instances~$(x,k)$ of~$P$,~$R(x,k) = (x',k')$
is computable in time~$f(k) \cdot p(\Card{x})$
and~$k' \leq g(k)$.
We describe the reduction~$S$.
Let~$x$ be an arbitrary instance of~$P_c$.
We know~$R$ maps~$(x,c)$ to~$(\varphi,k')$,
for some~$k' \leq g(c)$,
where~$\varphi = \exists X. \forall Y. C$.
Note that~$k'$ is bounded by a constant~$g(c) = d$.
Let~$\mtext{ta}(X,k')$ denote the set of all truth assignment to~$X$
of weight~$k'$.
We then get that~$\varphi$ is equivalent to the
quantified circuit~$\forall Y. C'$,
where~$C' \equiv \bigvee_{\alpha \in \mtext{ta}(X,k')} C[\alpha]$.
Also, the size of~$C'$ is polynomial in the size of~$C$.
It is straightforward to construct a propositional formula~$\psi$
that is valid if and only if~$C'$ is valid.
We then let~$S(x) = \neg\psi$.
Then,~$S$ is a correct polynomial-time
reduction from~$P_{c}$ to \thUNSAT{}.

We show that there exists a function~$f$
such that for any positive integer~$s \geq 1$,
there is a polynomial-time reduction from \thUNSAT{}
to~$P_{1} \cup \dotsm \cup P_{f(s)}$.
Then, in particular,~$P_{1} \cup \dotsm \cup P_{f(1)}$
is \co\NP{}-complete.
Let~$s \geq 1$ be an arbitrary integer. We construct the reduction~$S$.
There is a trivial polynomial-time reduction~$S$ from \thUNSAT{}
to~$(\AEkWSat(3\DNF))_{s}$, that maps a Boolean formula~$\chi$
in~$3\CNF$ over a set of variables~$Y$ to the
instance~$(\forall Y. \exists \SBs x_1,\dotsc,x_s \SEs. \neg\chi,s)$
of $\AEkWSat(3\DNF)$.
Since~$P$ is \AEkW{1}-hard,
there exists an fpt-reduction~$R$ from~$\AEkWSat(3\DNF)$
to~$P$.
From this, we know that there exists a nondecreasing and unbounded
function~$f$ such that
for each instance~$(x,k)$ of~$\AEkWSat(3\DNF)$ it holds that~$k' \leq f(k)$,
where~$R(x,k) = (x',k')$.
This reduction~$R$ is a polynomial-time reduction
from~$(\AEkWSat(3\DNF))_{s}$
to~$P_{1} \cup \dotsm \cup P_{f(s)}$.
Composing the polynomial-time reductions~$S$ and~$R$,
we obtain a reduction from \thUNSAT{} to~$P_{1} \cup \dotsm \cup P_{f(s)}$.
\end{proof}

\subsection{Nontrivial Obstacles to Showing Equality}
Above, we have seen that \EkA{} is different from \para{}\NP{},
under some common complexity-theoretic assumptions.
However, it might in principle be the case that \EkA{}
coincides with \para{}\co{}\NP{}.
In this section, we obtain a result that shows that this has
highly nontrivial consequences.
Moreover, this result can be used to show 
that a similarly nontrivial obstacle would need to be overcome
to show that \EAkW{t} coincides with \para{}\NP{}, for~$t \geq 2$.
Concretely, we will show that showing any of the above equalities
would involve establishing that there exists a subexponential-time
reduction from \QSat{i} to either \SAT{} or \UNSAT{}.
In order to show the existence of such a reduction
one would need to a novel, profound understanding
of the problems \QSat{i}, \SAT{} and \UNSAT{}.

The nonexistence of such a subexponential-time reduction
is not commonly used as a complexity-theoretic assumption.
However, using this conjecture that no such subexponential-time
reductions exist as an assumption, one could consider the
results in this section as additional evidence that
the class \EkA{} is different from \para{}\co{}\NP{}
and that the class \EAkW{2} is different from \para{}\NP{}.

To obtain the result, we will use the following auxiliary
problem $\EkAkWSat(\CCC)$, which is a variant of \QSat{2}
where the assignment to both quantifier blocks is weighted.
Here~$\CCC$ is an arbitrary class of Boolean circuits.
\probdef{
  $\EkAkWSat(\CCC)$
  
  \emph{Instance:} A Boolean circuit~$C \in \CCC$
  over two disjoint sets~$X$ and~$Y$ of variables,
  and an integer~$k$.

  \emph{Parameter:} $k$.

  \emph{Question:} Does there exist a truth assignment~$\alpha$
  to~$X$ of weight~$k$
  such that for all truth assignments~$\beta$ to~$Y$
  with weight~$k$
  the assignment~$\alpha \cup \beta$ satisfies~$C$?
}

Moreover, we consider the following effective notion of ``little oh,''
as used by Flum and Grohe~\cite[Definition~3.22]{FlumGrohe06}.
Let~$f,g$ be computable functions with the positive integers
as domain and range. We say that~$f$ is~$o(g)$ if there
is a computable function~$h$ such that for all~$\ell \geq 1$
and~$n \geq h(\ell)$, we have that~$f(n) \leq g(n)/\ell$.
Equivalently,~$f$ is~$o(g)$ if and only if there exists a
positive integer~$n_0$ and a nondecreasing and unbounded
computable function~$\iota$ such that for all~$n \geq n_0$
we have that~$f(n) \leq g(n)/\iota(n)$
\cite[Lemma~3.23]{FlumGrohe06}.

Now, we prove the following technical result,
whose proof is inspired by results relating the parameterized
complexity classes of the Weft hierarchy
to intractability notions from the field of subexponential-time complexity
\cite{ChenChorFellowsHuangJuedesKanjXia05,ChenHuangKanjXia06,ChenKanj12}.
\begin{theorem}
\label{thm:separation-tech-result}
If there exists an~$f(k) n^{o(k)} m^{O(1)}$ time reduction
from $\EkAkWSat(\Gamma_{2,4})$ to
\UNSAT{},
where~$k$ denotes the parameter value,~$n$
denotes the number of variables
and~$m$ denotes the instance size,
then there exists a subexponential-time reduction
from \QSat{2} to \UNSAT{},
i.e., a reduction that runs in time~$2^{o(n)}m^{O(1)}$,
where~$n$ denotes the number of variables
and~$m$ denotes the instance size.
\end{theorem}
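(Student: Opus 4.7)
The plan is to derive a subexponential-time reduction from \QSat{2} to UNSAT by composing two reductions: a new (subexponential-time) reduction from \QSat{2} to $\EkAkWSat(\Gamma_{2,4})$ that I construct, followed by the hypothesized $\EkAkWSat(\Gamma_{2,4})$-to-UNSAT reduction. The parameter $k$ of the intermediate instance will be a slowly-growing function $k(n)$ of the original instance size, chosen so that the $f(k) \cdot N^{o(k)} \cdot M^{O(1)}$ running time of the hypothesized reduction becomes $2^{o(n)} m^{O(1)}$ overall.

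For the first reduction I would use a standard variable-grouping technique. Given a \QSat{2} instance $\varphi = \exists X.\,\forall Y.\,\psi$ with $|X|+|Y| \leq n$ and $\psi$ in a canonical form ensuring \SigmaP{2}-hardness, I partition both $X$ and $Y$ into $k$ groups of size $O(n/k)$ and introduce $2^{O(n/k)}$ \emph{selector variables} per group, one per possible truth assignment to the variables in that group. A weight-$k$ assignment to the $X$-selectors that activates exactly one selector per group is in bijection with a truth assignment to $X$, and analogously for $Y$. The circuit $C$ must encode that the $X$-selectors form a valid assignment and that, for every weight-$k$ witness on the universal block, either the $Y$-selectors do not form a valid assignment or the selector-encoded assignment satisfies $\psi$. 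To keep $C$ inside $\Gamma_{2,4}$, I express validity for $X$-selectors purely as a big conjunction of $2$-clauses $\neg x_{i,\tau} \vee \neg x_{i,\tau'}$ (which suffices because ``at most one per group'' combined with total weight~$k$ forces exactly one per group), and distribute the ``$Y$-invalid'' disjunction $\bigvee_{j,\tau \neq \tau'}(y_{j,\tau} \wedge y_{j,\tau'})$ into each clause of the translated matrix, so that this outer large OR merges with the large OR produced by translating the clause's literals.

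For the running-time analysis, the constructed instance has $N = O(k \cdot 2^{n/k})$ selector variables and total size $M = 2^{O(n/k)} \cdot \mathrm{poly}(n)$, and is built in $2^{O(n/k)}$ time. Applying the hypothesized reduction produces an UNSAT instance in time $f(k) \cdot N^{o(k)} \cdot M^{O(1)}$. By the effective characterization of ``little-oh'' (Flum--Grohe Lemma~3.23), there is a computable, nondecreasing, unbounded function $\iota$ with $o(k) \leq k/\iota(k)$ for $k \geq k_0$, so
\[
N^{o(k)} \;\leq\; (k \cdot 2^{n/k})^{k/\iota(k)} \;=\; 2^{O((k \log k)/\iota(k))} \cdot 2^{n/\iota(k)}.
\]
I then pick $k = k(n)$ to be a computable unbounded function growing slowly enough that $f(k)$, $2^{(k \log k)/\iota(k)}$, $2^{n/\iota(k)}$, and $2^{O(n/k)}$ are each $2^{o(n)}$; such a choice exists by a standard diagonalization against the unbounded $\iota$ (for instance, $k(n)$ is the largest integer satisfying $f(k) + k \log k \leq \log n$ and $\iota(k) \leq \log n$). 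The composed reduction then runs in $2^{o(n)} m^{O(1)}$ time.

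The principal obstacle is ensuring that $C$ truly lies in $\Gamma_{2,4}$: a naive combination of validity constraints with the translated matrix quickly produces a circuit of weft $\geq 3$, because literal translations introduce large ORs over selectors while the alternation inherent in \QSat{2} introduces another layer of large gates. Careful engineering is required---in particular, exploiting that ``exactly one per group'' simplifies to a big conjunction of $2$-clauses for weight-$k$ assignments, and distributing the ``$Y$-invalid'' disjunction into each clause---to keep every root-to-leaf path of $C$ below two large gates. A secondary, more routine obstacle is the precise calibration of $k(n)$, which is handled by the standard Flum--Grohe argument for turning $o(k)$ gains into subexponential bounds.
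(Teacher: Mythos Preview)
Your plan mirrors the paper's exactly: group the variables into $k$ blocks, introduce $2^{O(n/k)}$ selector variables per block, build an $\EkAkWSat(\Gamma_{2,4})$ instance, compose with the hypothesised reduction, and calibrate $k=k(n)$ so the total cost is $2^{o(n)}m^{O(1)}$. The paper's parameter choice ($r=\lfloor n/f^{-1}(n)\rfloor$, $k'=\lceil n/r\rceil\approx f^{-1}(n)$) differs in detail from your diagonalisation, but both work.

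There is one real slip in your weft argument. The \SigmaP{2}-hard normal form for $\QSat{2}=\exists X.\forall Y.\psi$ has matrix in $3$\DNF{}, not CNF: with a CNF matrix, $\forall Y.\psi[\alpha]$ is decidable in polynomial time (each clause must contain either a true $X$-literal under~$\alpha$ or complementary $Y$-literals), so the whole problem collapses to \NP{} and you are no longer reducing from a \SigmaP{2}-hard source. Your construction---``distribute the $Y$-invalid disjunction into each clause'' so that it ``merges with the large OR produced by translating the clause's literals''---is tailored to CNF and does not transfer to DNF, where each term translates to a \emph{small AND of big ORs}, not a single big OR that can absorb the $Y$-invalid disjunction. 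The paper resolves this by keeping the top gate small (binary): $\psi'=\psi_{Y_1}\wedge(\psi_{Y_2}\to\psi'')$, where $\psi_{Y_1}$ (existential-selector validity) is a big AND of $2$-clauses and the right branch is a single big OR whose children are either $2$-terms (from $\neg\psi_{Y_2}$) or small-AND-of-big-ORs (the translated DNF terms). Every root-to-leaf path then meets at most two large gates, giving weft~$2$ and depth~$4$. With this adjustment the rest of your argument goes through essentially unchanged.
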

\begin{proof}
Assume that there exists a reduction~$R$
from $\EkAkWSat(\Gamma_{2,4})$ to
\UNSAT{}
that, for large enough values of~$n$,
runs in time~$f(k) n^{k/\lambda(k)} m^{O(1)}$,
for some computable function~$f$
and some nondecreasing and unbounded computable function~$\lambda$.

We now construct a reduction from
\QSat{2} to \UNSAT{}
that, for sufficiently large values of~$n$,
runs in time~$2^{o(n)} m^{O(1)}$,
where~$n$ is the number of variables,
and~$m$ is the instance size.
Let~$(\varphi, k)$ be an instance
of \QSat{2},
where~$\varphi = \exists X_1. \forall X_2. \psi$
and where~$\psi$ is in~$3\DNF$.
We may assume without loss of generality
that~$\Card{X_1} = \Card{X_2} = n$;
we can add dummy variables to~$X_1$ and~$X_2$
to ensure this.
We denote the size of~$\psi$ by~$m$.
Let~$X = X_1 \cup X_2$.
We may assume without loss of generality that~$f(k) \geq 2^k$
and that~$f$ is nondecreasing and unbounded.
Define~$f^{-1}$ by letting~$f^{-1}(h) = \max \SB q \SM f(q) \leq h \SE$.
Since the function~$f$ is nondecreasing and unbounded,
the function~$f^{-1}$ is also nondecreasing and unbounded,
and satisfies~$f(f^{-1}(h)) \leq h$.
Since~$f(k) \geq 2^k$, we know that~$f^{-1}(h) \leq \log h$.
We then choose the integers~$r,k'$ as follows.
We let~$r = \lfloor n / f^{-1}(n) \rfloor$,
and let~$k' = \lceil n/r \rceil$.

We firstly construct an instance~$(\varphi',k')$ of $\EkAkWSat(\Gamma_{2,4})$
that is a yes-instance if and only if~$\varphi$ is a yes-instance
of \QSat{2}.
We will describe~$\varphi'$ as a quantified Boolean formula
whose matrix corresponds to a circuit of depth~$4$ and weft~$2$.
In order to do so, for each~$1 \leq i \leq 2$,
we split~$X_i$ into~$k'$ disjoint sets~$X_{i,1},\dotsc,X_{i,k'}$.
We do this in such a way that each set~$X_{i,j}$
has at most~$n/k'$ elements, i.e.,~$\Card{X_{i,j}} \leq r$
for all~$1 \leq i \leq 2$ and all~$1 \leq j \leq k'$.
Now, for each truth assignment~$\alpha : X_{i,j} \rightarrow \SBs 0,1 \SEs$
we introduce a new variable~$y^{\alpha}_{i,j}$.
Formally, we define a set of variables~$Y_{i,j}$
for each~$1 \leq i \leq 2$ and each~$1 \leq j \leq k'$:
\[ Y_{i,j} = \SB y^{\alpha}_{i,j} \SM (\alpha : X_{i,j} \rightarrow \SBs 0,1 \SEs) \SE. \]
We have that~$\Card{Y_{i,j}} \leq 2^r$, for each~$1 \leq i \leq 2$
and each~$1 \leq j \leq k'$.
We let~$Y_i = \bigcup_{1 \leq j \leq k'} Y_{i,j}$,
and we let~$Y = Y_1 \cup Y_2$.

We continue the construction of the formula~$\varphi'$.
For each~$1 \leq i \leq 2$, we define the formula~$\psi_{Y_i}$
as follows:
\[ \psi_{Y_i} =
  \underset{1 \leq j \leq k'}{\bigwedge}\ 
  \underset{\alpha \neq \alpha'}{
  \underset{\alpha,\alpha' : X_{i,j} \rightarrow \SBs 0,1 \SEs}{\bigwedge}}
  \left ( \neg y^{\alpha}_{i,j} \vee \neg y^{\alpha'}_{i,j} \right ).
\]
Then we define the auxiliary function~$\sigma : X \rightarrow 2^Y$,
that maps variables in~$X$ to sets of variables in~$Y$.
For each~$x \in X_{i,j}$, we let:
\[
  \sigma(x) =
    \SB y^{\alpha}_{i,j} \SM (\alpha : X_{i,j} \rightarrow \SBs 0,1 \SEs), \alpha(x) = 1 \SE.
\]
Intuitively,~$\sigma(x)$ corresponds to those variables~$y^{\alpha}_{i,j}$
where~$\alpha$ is an assignment that satisfies~$x$.

Now, we construct a formula~$\psi''$,
by transforming the formula~$\psi$ in the following way.
We replace each occurrence of a variable~$x \in X_{i,j}$
in~$\psi$
by the formula~$\chi_{x}$, that is defined as follows:
\[
  \chi_{x} =
    \underset{y^{\alpha}_{i,j} \in \sigma(x)}{\bigvee}
      y^{\alpha}_{i,j}.
\]

We can now define the quantified Boolean formula~$\varphi'$.
We let~$\varphi' = \exists Y_1. \forall Y_2. \psi'$,
where~$\psi'$ is defined as follows:
\[ \psi' = \psi_{Y_1} \wedge (\psi_{Y_2} \rightarrow \psi''). \]
The formula~$\psi'$ can be seen as a circuit of depth~$4$ and weft~$2$.
In the remainder, we will refer to~$\psi'$ as a circuit.

We verify that~$\varphi \in \QSat{2}$
if and only if~$(\varphi',k') \in \EkAkWSat(\Gamma_{2,4})$.

$(\Rightarrow)$
Assume that~$\varphi \in \QSat{2}$, i.e., that there exists
a truth assignment~$\beta_1 : X_1 \rightarrow \SBs 0,1 \SEs$
such that for all truth assignments~$\beta_2 : X_2 \rightarrow \SBs 0,1 \SEs$
it holds that~$\psi[\beta_1 \cup \beta_2]$ is true.
We show that~$(\varphi',k') \in \EkAkWSat{}$.
We define the truth assignment~$\gamma_1 : Y_1 \rightarrow \SBs 0,1 \SEs$
by letting~$\gamma_1(y^{\alpha}_{1,j}) = 1$ if and only
if~$\beta_1$ coincides with~$\alpha$ on the variables~$X_{1,j}$,
for each~$1 \leq j \leq k'$ and each~$\alpha : X_{1,j} \rightarrow \SBs 0,1 \SEs$.
Clearly,~$\gamma_1$ has weight~$k$.
Moreover,~$\gamma_1$ satisfies~$\psi_{Y_1}$.
We show that for each truth assignment~$\gamma_2 : Y_2
\rightarrow \SBs 0,1 \SEs$ of weight~$k$ it holds
that~$\psi'[\gamma_1 \cup \gamma_2]$ is true.
Let~$\gamma_2$ be an arbitrary truth assignment of weight~$k$.
We distinguish two cases: either~(i)~$\gamma_2$ does not
satisfy~$\psi_{Y_2}$, or~(ii)~$\gamma_2$ does satisfy~$\psi_{Y_2}$.
In case~(i), clearly,~$\psi'[\gamma_1 \cup \gamma_2]$ is true.
In case~(ii), we know that for each~$1 \leq j \leq k'$,
there is exactly one~$\alpha_j : X_{2,j} \rightarrow \SBs 0,1 \SEs$
such that~$\gamma_2(y^{\alpha}_{2,j}) = 1$.
Now let~$\beta_2 : X_2 \rightarrow \SBs 0,1 \SEs$ be the assignment
that coincides with~$\alpha_j$ on the variables~$Y_{2,j}$,
for each~$1 \leq j \leq k'$.
We know that~$\psi[\beta_1 \cup \beta_2]$ is true.
Then, by definition of~$\psi''$, it follows that~$\psi''[\gamma_1 \cup \gamma_2]$
is true as well.
Since~$\gamma_2$ was arbitrary, we can conclude
that~$(\varphi',k') \in \EkAkWSat{}$.

$(\Leftarrow)$
Conversely, assume that~$(\varphi',k') \in \EkAkWSat{}$,
i.e., that there exists a truth assignment~$\gamma_1 : Y_1 \rightarrow
\SBs 0,1 \SEs$ of weight~$k'$ such that for all truth
assignments~$\gamma_2 : Y_2 \rightarrow \SBs 0,1 \SEs$ of weight~$k'$
it holds that~$\psi'[\gamma_1 \cup \gamma_2]$ is true.
We show that~$\varphi \in \QSat{2}$.
Since~$\psi_{Y_1}$ contains only variables in~$Y_1$,
we know that~$\gamma_1$ satisfies~$\psi_{Y_1}$,
i.e., that for each~$1 \leq j \leq k'$ there is a unique~$\alpha_j : X_{1,j}
\rightarrow \SBs 0,1 \SEs$ such that~$\gamma_1(y^{\alpha_j}_{1,j}) = 1$.
We define the truth assignment~$\beta_1 : X_1 \rightarrow \SBs 0,1 \SEs$
to be the unique truth assignment that coincides with~$\alpha_j$
for each~$1 \leq j \leq k'$.
We show that for all truth assignments~$\beta_2 : X_2 \rightarrow \SBs 0,1 \SEs$
it holds that~$\psi[\beta_1 \cup \beta_2]$ is true.
Let~$\beta_2$ be an arbitrary truth assignment.
We construct the truth assignment~$\gamma_2 : Y_2 \rightarrow \SBs 0,1 \SEs$
by letting~$\gamma_2(y^{\alpha}_{2,j}) = 1$ if and only
if~$\beta_2$ coincides with~$\alpha$ on the variables in~$Y_{2,j}$,
for each~$1 \leq j \leq k'$ and each~$\alpha : X_{2,j} \rightarrow \SBs 0,1 \SEs$.
Clearly,~$\gamma_2$ has weight~$k$.
Moreover,~$\gamma_2$ satisfies~$\psi_{Y_2}$.
Therefore, since we know that~$\psi'[\gamma_1 \cup \gamma_2]$ is true,
we know that~$\psi''[\gamma_1 \cup \gamma_2]$ is true.
Then, by definition of~$\psi''$,
it follows that~$\psi[\beta_1 \cup \beta_2]$
is true as well.
Since~$\beta_2$ was arbitrary, we can conclude
that~$\varphi \in \QSat{2}$.

We observe some properties of the quantified Boolean
formula~$\varphi' = \exists Y_1. \forall Y_2. \psi'$.
Each~$Y_i$, for~$1 \leq i \leq 2$, contains at most~$n' = k' 2^r$ variables.
Furthermore, the circuit~$\psi'$ has
size~$m' \leq O(k' 2^{2r} + 2^{r}m) \leq O(k'2^{2r}m)$.
Finally, it is straightforward to verify that the circuit~$\psi'$ can be constructed
in time~$O((m')^2)$.

Since~$(\varphi',k')$ is an instance of $\EkAkWSat(\Gamma_{2,4})$,
we can apply the reduction~$R$
to obtain an equivalent instance~$(\varphi'',k'')$ of \UNSAT{}.
This reduction runs in time~$f(k') (n')^{k'/\lambda(k')} (m')^{O(1)}$.
By first constructing~$(\varphi',k')$ from~$\varphi$,
and then constructing~$\varphi''$ from~$(\varphi',k')$,
we get a reduction~$R'$ from \QSat{2}
to \UNSAT{},
that runs in time~$f(k') (n')^{k'/\lambda(k')} (m')^{O(1)} + O((m')^2)$.
We analyze the running time of this reduction~$R'$
in terms of the values~$n$ and~$m$.

Since~$k' = \lceil n/r \rceil \leq f^{-1}(n) \leq \log n$,
we have~$f(k') \leq f(f^{-1}(n)) \leq n$.
Moreover,
\[ k' = \lceil n/r \rceil \geq n/r \geq n/(n/f^{-1}(n)) = f^{-1}(n). \]
Therefore if we set~$\lambda'(n) = \lambda(f^{-1}(n))$,
then~$\lambda(k') \geq \lambda'(n)$.
Since both~$\lambda$ and~$f^{-1}$ are nondecreasing and unbounded,
$\lambda'(n)$ is a nondecreasing and unbounded function of~$n$.
We have,
\[ \begin{array}{r l}
  (n')^{k' / \lambda(k')} = &
    (k' 2^r)^{k' / \lambda(k')} \leq
    (k')^{k'} 2^{k' r / \lambda(k')} \leq
    (k')^{k'} 2^{k' n / (\lambda(k') f^{-1}(n))} \leq
    (k')^{k'} 2^{n / \lambda(k')} \\
  \leq &
    (k')^{k'} 2^{n / \lambda'(n)} =
    2^{o(n)}.
\end{array} \]
Finally, consider the factor~$m'$.
Since~$f^{-1}$ is nondecreasing and unbounded,
\[
  m' \leq
  O(k'2^{2r}m) =
  O(2\log n 2^{2n / f^{-1}(n)}m) =
  2^{o(n)} m.
\]
Therefore, both terms~$(m')^{O(1)}$ and~$O((m')^2)$
in the running time of~$R'$
are bounded by~$2^{o(n)}m^{O(1)}$.
Combining all these, we conclude that the running
time~$f(k') (n')^{k'/\lambda(k')} (m')^{O(1)} + O((m')^2)$
of~$R'$ is bounded by~$2^{o(n)}m^{O(1)}$.
Therefore,~$R'$ is a subexponential-time reduction
from \QSat{2} to \UNSAT{}.
This completes our proof.
\end{proof}

In addition, we get the following technical result by a
completely analogous argument.
\begin{corollary}
\label{cor:separation-tech-result}
If there exists an~$f(k) n^{o(k)} m^{O(1)}$ time reduction
from $\EkAkWSat(\Gamma_{2,4})$ to \SAT{},
where~$k$ denotes the parameter value,~$n$
denotes the number of variables
and~$m$ denotes the instance size,
then there exists a subexponential-time reduction
from \QSat{2} to \SAT{},
i.e., a reduction that runs in time~$2^{o(n)}m^{O(1)}$,
where~$n$ denotes the number of variables
and~$m$ denotes the instance size.
\end{corollary}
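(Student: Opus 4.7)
The plan is to mimic the proof of Theorem~\ref{thm:separation-tech-result} verbatim, with the only change being that the hypothesized fpt-reduction now targets \SAT{} instead of \UNSAT{}. Assume we have a reduction $R$ from $\EkAkWSat(\Gamma_{2,4})$ to \SAT{} running in time $f(k) n^{k/\lambda(k)} m^{O(1)}$ for some computable $f$ and some nondecreasing, unbounded computable $\lambda$. Given an instance $(\varphi,k)$ of \QSat{2}, with $\varphi = \exists X_1. \forall X_2. \psi$ and $\Card{X_1} = \Card{X_2} = n$, I would apply exactly the same construction as in Theorem~\ref{thm:separation-tech-result}: choose $r = \lfloor n/f^{-1}(n)\rfloor$ and $k' = \lceil n/r \rceil$, split each $X_i$ into $k'$ blocks $X_{i,j}$ of size at most $r$, introduce variables $y^{\alpha}_{i,j}$ encoding partial assignments $\alpha : X_{i,j} \to \SBs 0,1 \SEs$, and build the formula $\varphi' = \exists Y_1. \forall Y_2. \psi'$ with $\psi' = \psi_{Y_1} \wedge (\psi_{Y_2} \rightarrow \psi'')$, where $\psi''$ replaces each occurrence of $x \in X_{i,j}$ in $\psi$ by the disjunction $\chi_x$ over those $y^{\alpha}_{i,j}$ with $\alpha(x) = 1$.

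The correctness argument---showing $\varphi \in \QSat{2}$ if and only if $(\varphi', k') \in \EkAkWSat(\Gamma_{2,4})$---transfers without any modification, since neither direction depends on whether the eventual target is \SAT{} or \UNSAT{}. Composing this construction with the hypothesized $R$ yields a mapping $\varphi \mapsto R(\varphi', k')$ such that $\varphi \in \QSat{2}$ iff $(\varphi', k') \in \EkAkWSat(\Gamma_{2,4})$ iff $R(\varphi',k') \in \SAT{}$. Thus the composition is a (many-one) reduction from \QSat{2} to \SAT{}.

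The running-time analysis is identical to that in Theorem~\ref{thm:separation-tech-result}: $\varphi'$ has $n' \leq k' 2^r$ variables per quantifier block and size $m' \leq O(k' 2^{2r} m)$, and is constructible in time $O((m')^2)$. The bounds $f(k') \leq n$, $(n')^{k'/\lambda(k')} \leq 2^{o(n)}$, and $m' \leq 2^{o(n)} m$ established there apply unchanged, so the total time $f(k')(n')^{k'/\lambda(k')}(m')^{O(1)} + O((m')^2)$ is bounded by $2^{o(n)} m^{O(1)}$, yielding the desired subexponential-time reduction.

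There is essentially no new obstacle: the entire nontrivial content---the gap-amplification trick selecting $r$ and $k'$ so that the $n^{k/\lambda(k)}$ dependence on $k'$ becomes a $2^{o(n)}$ dependence on $n$, the correctness of the block-assignment encoding, and the bookkeeping for the effective $o(\cdot)$ notation---is reused verbatim. One might alternatively try to derive the corollary from Theorem~\ref{thm:separation-tech-result} via a negation trick, but negating the output of $R$ only produces a reduction to \UNSAT{}, not to \SAT{}, so it gives a subexponential reduction from \QSat{2} to \UNSAT{} rather than to \SAT{}. Replaying the proof is the direct route.
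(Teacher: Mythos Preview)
Your proposal is correct and matches the paper's approach exactly: the paper simply states that this corollary follows ``by a completely analogous argument'' to Theorem~\ref{thm:separation-tech-result}, which is precisely what you do by replaying the proof with \SAT{} in place of \UNSAT{}. Your closing remark explaining why a naive negation trick does not directly transfer the theorem to the corollary is a useful sanity check, though the paper does not comment on this.
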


In order to use this technical result to show that
it is nontrivial
to prove that~$\EkA{} = \para{}\co{}\NP{}$,
or to prove that~$\EAkW{2} = \para{}\NP{}$,
we show that~$\EkAkWSat(\Gamma_{2,4}) \in \EkA{}$
and~$\EkAkWSat(\Gamma_{2,4}) \in \EAkW{2}$.

\begin{proposition}
\label{prop:ekak2-in-eka}
$\EkAkWSat(\Gamma_{2,4})$ is in \EkA{}.
\end{proposition}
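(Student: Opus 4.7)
My plan is to give a straightforward fpt-reduction from $\EkAkWSat(\Gamma_{2,4})$ to $\EkAWSat(\Gamma)$. By Theorem~\ref{thm:kstar-collapse} (the collapse of the \kstar{} hierarchy), such a reduction suffices to establish membership in \EkA{}, since $\EkA = \EkAW{P} = \fptclosure{\EkAWSat(\Gamma)}$.

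The key observation I would exploit is that the only difference between the two problems lies in the weight restriction on the universal block $Y$: in $\EkAkWSat$ the universal assignment must have weight exactly $k$, while in $\EkAWSat$ it is unconstrained. Because \EkA{} admits arbitrary Boolean circuits as matrices, I would simply internalize this restriction into the circuit. Concretely, given an instance $(\exists X. \forall Y. C, k)$ with $C \in \Gamma_{2,4}$, I would first construct in polynomial time a Boolean circuit $D_k(Y)$ (using standard counting or threshold-gate techniques) whose output is true on $\beta : Y \to \SBs 0,1 \SEs$ if and only if $\beta$ has weight exactly $k$. Then I would output the instance $(\exists X. \forall Y. C', k)$ of $\EkAWSat(\Gamma)$, where $C' = D_k(Y) \rightarrow C$ (equivalently, $\neg D_k(Y) \vee C$), while keeping the parameter $k$ unchanged.

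For correctness I would verify, for each $\alpha : X \to \SBs 0,1 \SEs$ of weight $k$, that $\forall Y. C'[\alpha]$ holds iff every $\beta$ of weight exactly $k$ satisfies $C[\alpha \cup \beta]$: assignments $\beta$ of weight $\neq k$ trivially satisfy $C'$ since $D_k$ evaluates to false on them, while assignments of weight exactly $k$ satisfy $C'$ iff they satisfy $C$. The construction runs in polynomial time and preserves the parameter, hence is an fpt-reduction, which would yield $\EkAkWSat(\Gamma_{2,4}) \in \EkA$. I do not expect any serious obstacle here; the real content is the collapse of the \kstar{} hierarchy, which provides the circuit-level expressiveness needed to absorb the universal weight constraint into the matrix. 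No analogous move would be available if the target problem were restricted to, say, $\EkAWSat(r\mtext{-}\DNF)$.
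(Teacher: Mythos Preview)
Your proposal is correct and essentially identical to the paper's own proof: the paper also constructs a circuit $B$ over $Y$ that checks ``weight exactly $k$'' and replaces the matrix by $B \rightarrow C$, yielding an fpt-reduction to $\EkAWSat$ with the parameter unchanged. The only differences are cosmetic (naming the auxiliary circuit $B$ versus $D_k$), and your added remarks about the role of the \kstar{} collapse are accurate.
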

\begin{proof}[Proof (sketch)]
We show that $\EkAkWSat(\Gamma_{2,4}) \in \EkA{}$
by giving an fpt-reduction from $\EkAkWSat(\Gamma_{2,4})$
to \EkAWSat{}.
Let~$(\varphi,k)$ be an instance of $\EkAkWSat(\Gamma_{2,4})$,
where~$\varphi = \exists X. \forall Y. C$
and where~$C \in \Gamma_{2,4}$.
It is straightforward to construct a circuit~$B$ with input nodes~$Y$
that is satisfied if and only if exactly~$k$ variables in~$Y$
are set to true.
We then construct the instance~$(\varphi',k)$ of \EkAWSat{}
by letting~$\varphi' = \exists X. \forall Y. C'$,
where~$C'$ is a circuit that is equivalent to the
expression~$B \rightarrow C$.
This is an fpt-reduction from $\EkAkWSat(\Gamma_{2,4})$
to \EkAWSat{}.
\end{proof}

\begin{proposition}
\label{prop:ekak2-in-eak2}
$\EkAkWSat(\Gamma_{2,4})$ is in \EAkW{2}.
\end{proposition}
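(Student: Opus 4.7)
The plan is to give an fpt-reduction from $\EkAkWSat(\Gamma_{2,4})$ to $\EAkWSat(\Gamma_{2,u})$ for some fixed constant $u \geq 4$; since $\EAkW{2}$ is the fpt-closure of $\bigcup_{u \geq 1} \EAkWSat(\Gamma_{2,u})$, this will suffice.

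Given an input $(\exists X. \forall Y. C, k)$ with $X = \SBs x_1,\dotsc,x_n \SEs$ and $C \in \Gamma_{2,4}$, I would encode the weight-$k$ restriction on the existential side by means of selector variables. Specifically, I would introduce fresh existential variables $\tilde{X} = \SB \tilde{x}_{i,j} \SM 1 \leq i \leq n,\ 1 \leq j \leq k \SE$, with the intended meaning that $\tilde{x}_{i,j}$ is true exactly when $x_i$ is the $j$-th selected variable. The output instance would be $(\exists \tilde{X}. \forall Y. C', k)$, where $C'$ is the conjunction of three groups of formulas. First, \emph{properness} constraints on $\tilde{X}$: $\bigwedge_{j=1}^{k} \bigvee_{i=1}^{n} \tilde{x}_{i,j}$ (at least one selection per slot), $\bigwedge_{j,\, i < i'} (\neg \tilde{x}_{i,j} \vee \neg \tilde{x}_{i',j})$ (at most one per slot), and $\bigwedge_{i,\, j < j'} (\neg \tilde{x}_{i,j} \vee \neg \tilde{x}_{i,j'})$ (no variable selected twice). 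Second, a modified circuit $\widetilde{C}$, obtained from $C$ in negation normal form by replacing every positive literal $x_i$ with $\bigvee_{j=1}^{k} \tilde{x}_{i,j}$ and every negative literal $\neg x_i$ with $\bigwedge_{j=1}^{k} \neg \tilde{x}_{i,j}$.

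Correctness would be immediate: assignments to $\tilde{X}$ that satisfy the properness constraints are in bijection with weight-$k$ assignments $\alpha$ to $X$, where $\alpha(x_i) = 1$ iff some $\tilde{x}_{i,j}$ is true, and under this bijection $\widetilde{C}$ computes the same Boolean function as $C$. Since the universal block $Y$ retains its weight-$k$ restriction, the constructed $\EAkWSat$ instance is equivalent to the original $\EkAkWSat$ instance. The size of $C'$ is $f(k) \cdot n^{O(1)}$ and the parameter is preserved, so the construction is indeed an fpt-reduction.

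The main obstacle will be bounding the weft of $C'$ by $2$. The properness constraints have weft $2$ (the ``at least one per slot'' conjunction-of-disjunctions being the weft-$2$ contributor; the uniqueness conditions have weft $1$), and combining all these conjuncts via binary ANDs leaves the weft at $2$. However, substituting a literal of $x_i$ inside $C$ inserts a new large gate directly above the corresponding leaf, which can push the weft beyond $2$ whenever that new gate sits under an opposite-type large gate higher up in $C$. To circumvent this, I plan to first normalize $C$, using weft-preserving rearrangements of its $\Gamma_{2,4}$-structure (and possibly adding a constant number of depth layers, which is permitted by $\EAkW{2}$), into a canonical form in which every positive occurrence of $x_i$ feeds directly into a large OR gate and every negative occurrence of $x_i$ feeds directly into a large AND gate. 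Under such a form, the substitutions can be absorbed into the enclosing large gates by associativity, so that the resulting circuit $C'$ lies in $\Gamma_{2,u}$ for a suitable constant $u$. This normalization step will be the most delicate part of the argument.
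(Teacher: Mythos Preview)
Your approach works harder than necessary, and the hard part—the normalization you flag as ``the most delicate part''—is where the gap lies. The key observation you are missing is that in $\EAkWSat$ the existential block carries \emph{no} weight restriction. Hence there is no need to eliminate the variables $X$ or to simulate weight-$k$ assignments via selector variables substituted into $C$: one can simply keep $X$ in the existential block and enforce the cardinality constraint ``exactly $k$ of the $x_i$ are true'' as an additional conjunct. This is exactly what the paper does: it builds a circuit $B$ over $X$ expressing $|\alpha|=k$, Tseitin-transforms it into a 3CNF formula $B'$ over $X\cup X'$ (with fresh existential auxiliaries $X'$), and outputs $(\exists\, X\cup X'.\ \forall Y.\ (B'\wedge C),\,k)$. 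Since $B'$ is 3CNF (weft~$1$, constant depth) and $C\in\Gamma_{2,4}$, the conjunction $B'\wedge C$ lies in some $\Gamma_{2,u}$ with $u$ a fixed small constant; crucially, $C$ itself is left untouched.

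Your substitution-based route, by contrast, runs into a genuine weft obstacle that you correctly identify: replacing each leaf $x_i$ by a $k$-ary OR (and $\neg x_i$ by a $k$-ary AND) inserts a large gate on every root-to-leaf path through that leaf, pushing the weft of a general $\Gamma_{2,4}$ circuit to~$3$. Your proposed fix—normalizing $C$ so that every positive $X$-literal already sits directly under a large OR and every negative one under a large AND—is not a standard structural fact about $\Gamma_{2,4}$ circuits, and you give no argument for it; there is no evident reason such a rearrangement can always be achieved without increasing weft or size. The alternative of realizing the $k$-ary gates as trees of binary gates keeps the weft at~$2$ but makes the depth $4+\Theta(\log k)$, which depends on the parameter and hence does not land in any single $\Gamma_{2,u}$. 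The normalization is therefore load-bearing in your plan, and as stated it is unsupported; the paper's approach sidesteps the whole issue.
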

\begin{proof}[Proof (sketch)]
We show that $\EkAkWSat(\Gamma_{2,4}) \in \EAkW{2}$
by giving an fpt-reduction from $\EkAkWSat(\Gamma_{2,4})$
to $\EAkWSat(\Gamma_{2,4})$.
Let~$(\varphi,k)$ be an instance of $\EkAkWSat(\Gamma_{2,4})$,
where~$\varphi = \exists X. \forall Y. C$
and where~$C \in \Gamma_{2,4}$.
It is straightforward to construct a circuit~$B$ with input nodes~$X$
that is satisfied if and only if exactly~$k$ variables in~$X$
are set to true.
Then, by means of the standard Tseitin transformation~\cite{Tseitin},
we can transform~$B$ into a 3\CNF{} formula over the sets~$X$
and~$X'$ of variables, where~$X'$ is a set of fresh variables,
such that for each truth assignment~$\alpha : X \rightarrow \SBs 0,1 \SEs$,
the formula~$B'[\alpha]$ is satisfiable if and only if~$B[\alpha]$ is true.
Moreover,~$B'$ can be constructed in time polynomial in the size of~$B$.
We then construct the instance~$(\varphi',k)$ of \EkAWSat{}
by letting~$\varphi' = \exists X \cup X'. \forall Y. C'$,
where~$C'$ is a circuit that is equivalent to the
expression~$B' \wedge C$.
Also, we can choose~$C'$ in such a way that~$C' \in \Gamma_{2,4}$.
Then this is an fpt-reduction from $\EkAkWSat(\Gamma_{2,4})$
to \EkAWSat{}.
\end{proof}

Since the classes \EkA{} and \EAkW{2} are closed under fpt-reductions,
the results of Theorem~\ref{thm:separation-tech-result},
Corollary~\ref{cor:separation-tech-result},
and Propositions~\ref{prop:ekak2-in-eka}~and~\ref{prop:ekak2-in-eak2}
give us the following results.

\begin{corollary}
If $\EkA{} \subseteq \para{}\co{}\NP{}$,
then there exists a subexponential-time reduction from
\QSat{2} to \UNSAT{}.
\end{corollary}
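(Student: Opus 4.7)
The plan is to chain together the technical tools that have just been established. Assume that $\EkA{} \subseteq \para{\co\NP}$. First I would invoke Proposition~\ref{prop:ekak2-in-eka}, which gives us $\EkAkWSat(\Gamma_{2,4}) \in \EkA{}$. Combined with the assumption, this places $\EkAkWSat(\Gamma_{2,4})$ in $\para{\co\NP}$, and since $\para{\co\NP}$ consists exactly of the parameterized problems admitting a many-one fpt-reduction to \UNSAT{}, we obtain an fpt-reduction $R$ from $\EkAkWSat(\Gamma_{2,4})$ to \UNSAT{}.

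Next I would observe that such an fpt-reduction runs in time of the form $f(k) \cdot (n+m)^{c}$ for some computable $f$ and constant $c$, which is clearly dominated by $f(k) \cdot n^{o(k)} \cdot m^{O(1)}$ (the factor $n^{o(k)}$ is trivially satisfied by a constant-exponent polynomial in $n$, since for large enough $k$ the quantity $k/\lambda(k)$ exceeds any fixed $c$ for a suitable computable unbounded $\lambda$). Hence the hypothesis of Theorem~\ref{thm:separation-tech-result} is met, and applying it yields a subexponential-time reduction from \QSat{2} to \UNSAT{}, as required.

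The main obstacle is really only a bookkeeping issue, namely verifying carefully that a polynomial dependence $n^{c}$ fits inside the bound $n^{o(k)}$ in the effective sense used in the paper (i.e., with respect to the notion of ``little oh'' defined just before Theorem~\ref{thm:separation-tech-result} via a computable unbounded function $\lambda$). This amounts to exhibiting a concrete $\lambda$ such that $c \leq k/\lambda(k)$ for all sufficiently large $k$; taking for example $\lambda(k) = \lfloor \log k \rfloor$ (or any computable unbounded function growing more slowly than $k/c$) suffices. Once this is settled, the corollary follows immediately by composition with Theorem~\ref{thm:separation-tech-result}.
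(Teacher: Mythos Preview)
Your proposal is correct and follows exactly the route the paper takes: combine Proposition~\ref{prop:ekak2-in-eka} with the assumption to place $\EkAkWSat(\Gamma_{2,4})$ in $\para{\co\NP}$, extract the resulting fpt-reduction to \thUNSAT{}, and feed it into Theorem~\ref{thm:separation-tech-result}. Your verification that an fpt time bound fits the $f(k)\,n^{o(k)}\,m^{O(1)}$ hypothesis is fine, though it can be shortened by noting that $n \leq m$, so the fpt bound $f(k)\,m^{O(1)}$ already satisfies the hypothesis with the $n^{o(k)}$ factor taken to be $n^{0}=1$.
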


\begin{corollary}
If $\EAkW{2} \subseteq \para{}\NP{}$,
then there exists a subexponential-time reduction from
\QSat{2} to \SAT{}.
\end{corollary}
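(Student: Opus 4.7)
The plan is to obtain the corollary by combining Proposition~\ref{prop:ekak2-in-eak2} (which places $\EkAkWSat(\Gamma_{2,4})$ inside \EAkW{2}) with Corollary~\ref{cor:separation-tech-result} (which converts a sufficiently fast reduction from $\EkAkWSat(\Gamma_{2,4})$ to \SAT{} into a subexponential-time reduction from \QSat{2} to \SAT{}). The corollary we want is essentially an instantiation of this template, so the proof will consist mostly of checking that the hypothesis $\EAkW{2} \subseteq \para{}\NP{}$ supplies a reduction of the required shape.

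Concretely, I would proceed as follows. First, assume $\EAkW{2} \subseteq \para{}\NP{}$. By Proposition~\ref{prop:ekak2-in-eak2}, $\EkAkWSat(\Gamma_{2,4})$ lies in \EAkW{2}, and since \EAkW{2} is closed under fpt-reductions, the assumption places $\EkAkWSat(\Gamma_{2,4})$ in \para{}\NP{}. Recalling that \para{}\NP{} is exactly the class of parameterized problems that admit a many-to-one fpt-reduction to \thSAT{}, we get a reduction $R$ from $\EkAkWSat(\Gamma_{2,4})$ to \SAT{} that runs in time $f(k)\cdot m^{O(1)}$ for some computable function $f$ and some polynomial in the instance size $m$.

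Next, I would observe that any running time of the form $f(k)\cdot m^{O(1)}$ automatically fits into the bound $f(k)\cdot n^{o(k)}\cdot m^{O(1)}$ required by Corollary~\ref{cor:separation-tech-result}: the extra $n^{o(k)}$ factor is at least $1$, so an fpt-reduction is a fortiori a reduction of the slower shape. Plugging $R$ into the statement of Corollary~\ref{cor:separation-tech-result} then yields a subexponential-time (that is, $2^{o(n)}\cdot m^{O(1)}$) reduction from \QSat{2} to \SAT{}, which is the desired conclusion.

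I do not expect a genuine obstacle here: all the heavy lifting has already been done in Theorem~\ref{thm:separation-tech-result}, its analogue Corollary~\ref{cor:separation-tech-result}, and the membership result Proposition~\ref{prop:ekak2-in-eak2}. The only mildly delicate point is the verification that an fpt-reduction $f(k)\cdot m^{O(1)}$ is absorbed by the $f(k)\cdot n^{o(k)}\cdot m^{O(1)}$ template; once that is written out explicitly (with the reminder that, in the effective little-oh notation used in the paper, $n^{o(k)}$ covers $n^{k/\lambda(k)}$ for any nondecreasing unbounded computable $\lambda$, in particular the trivial lower bound $1$), the proof is essentially a one-line composition of two earlier results.
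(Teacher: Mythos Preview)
Your proposal is correct and follows exactly the paper's approach: the paper states the corollary as an immediate consequence of Proposition~\ref{prop:ekak2-in-eak2} together with Corollary~\ref{cor:separation-tech-result}, using closure of $\EAkW{2}$ under fpt-reductions. Your added remark that an fpt-time bound $f(k)\cdot m^{O(1)}$ trivially satisfies the $f(k)\cdot n^{o(k)}\cdot m^{O(1)}$ hypothesis is the only detail the paper leaves implicit, and your justification is fine.
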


We conjecture that when proving the
inclusion~$\EAkW{1} \subseteq \para{}\NP{}$
one faces similar obstacles
in terms of notions from classical complexity theory.
However, it is not clear whether the proof techniques used for the case
of \EAkW{2} can be used to show this, or whether entirely different
proof techniques would be required.

}

\krlongversion{
\section{Conclusion}
\label{sec:conclusion}

We developed a general theoretical framework that
supports the classification of parameterized problems on whether
they admit an fpt-reduction to SAT or not.
Our theory is based on two new hierarchies of complexity classes,
the \kstar{} and \stark{} hierarchies.
We illustrated the use of this theoretical toolbox by
means of a case study, in which we studied
the complexity of the consistency problem for disjunctive
answer set programming with respect to various
natural parameters.
\longversion{%
The newly introduced parameterized complexity classes are
populated by many natural problems from various domains.
A systematic and concise overview
of (completeness) results for a wide range of problems
can be found as a compendium in the appendix.
Additionally, we provided alternative characterizations of the
complexity classes in the \kstar{} hierarchy based on first-order model checking
and alternating Turing machines.
Hence we have quite a clear understanding of the \kstar{} hierarchy.
The \stark{} hierarchy seems to be more complicated and we
have focused only on two of its levels.
A more comprehensive study of the \stark{} hierarchy including
alternative characterizations of its classes
using first-order model checking and machine models
are left for future research.
}%
\krversion{There are many more problems
that occur in knowledge representation and reasoning
where our theory can be used.
In the technical report corresponding to this paper, we use our theory to
analyze various additional problems, including
the problem of minimizing DNF formulas and
the problem of minimizing implicant cores.
Additionally, we illustrate the robustness of our theory
by showing a number of complete problems for the newly
introduced classes from various domains,
as well as providing alternative characterizations of the
complexity classes based on first-order model checking
and alternating Turing machines.
There are many more problems that can be analyzed
within our framework.
}

We focused our attention on the range between the first
and the second level of the PH, since many natural problems
lie there~\cite{SchaeferUmans02}.
In general, any fpt-reduction from a problem whose complexity
is higher in the PH to a lower level in the PH would be interesting.
With this more general aim in mind, it would be helpful
to have tools to gather evidence that an fpt-reduction across
some complexity border in the PH is not possible.
We hope that this paper provides a starting point for further developments.

}

\satversion{
\begin{abstract}
  Today's SAT solvers have an enormous importance and impact
  in many practical settings.
  They are used as efficient back-end to solve many NP-complete problems.
  However, many computational problems are located at the second level
  of the Polynomial Hierarchy or even higher, and hence
  polynomial-time transformations to SAT are not possible, unless the
  hierarchy collapses.
  In certain cases one can
  break through these complexity barriers by fixed-parameter tractable (fpt)
  reductions which exploit structural aspects of problem instances in
  terms of problem parameters.
  Recent research established a general theoretical framework that
  supports the classification of parameterized problems on whether
  they admit such an fpt-reduction to SAT or not.
  We use this framework to analyze some problems that are related
  to Boolean satisfiability.
  We consider several natural parameterizations of these problems,
  and we identify for which of these an fpt-reduction to SAT is possible.
  The problems that we look at are related to
  minimizing an implicant of a DNF formula,
  minimizing a DNF formula,
  and satisfiability of quantified Boolean formulas.
\end{abstract}
}

\satversion{
\section{Introduction}

Modern SAT solvers have an enormous importance and impact
in many practical settings that require solutions
to NP-complete problems.
In fact, due to the
success of SAT, NP-complete problems have lost their scariness, as in
many cases one can efficiently encode NP-complete problems to SAT and
solve them by means of a SAT solver
\cite{BiereHeuleMaarenWalsh09,GomesKautzSabharwalSelman08,MalikZhang09,SakallahMarquessilva11}.
However, many important computational problems are located
above the first level of the Polynomial Hierarchy (PH) and thus
considered significantly ``harder'' than SAT.  Hence we cannot hope for
polynomial-time reductions from these problems to SAT, as such
transformations would cause the (unexpected) collapse of the PH.

Realistic problem instances are not random and often contain some kind
of structure. Recent research succeeded to
exploit such structure to break the complexity barriers
between levels of the PH \cite{FichteSzeider13,PfandlerRuemmeleSzeider13}.
The idea is to
exploit problem structure in terms of a problem \emph{parameter}, and
to develop reductions to SAT that can be computed efficiently as long
as the problem parameter is reasonably small.  The theory of
\emph{parameterized complexity}
\cite{DowneyFellows99,FlumGrohe06,Niedermeier06,DowneyFellows13}
provides exactly the
right type of reduction suitable for this purpose, called
\emph{fixed-parameter tractable reductions}, or \emph{fpt-reductions}
for short.  Now, for a suitable choice of the parameter, one can
aim at developing fpt-reductions from the hard problem under consideration to
SAT.

Such positive results 
go significantly beyond
the state-of-the-art of current research in parameterized complexity.
By shifting the scope from
fixed-parameter tractability to fpt-reducibility (to SAT), parameters
can be less restrictive and hence larger classes of inputs can be
processed efficiently. Therefore, the potential for positive tractability
results is greatly enlarged.  In fact, there are some known reductions
that, in retrospect, can be seen as fpt-reductions to SAT. A prominent
example is Bounded Model Checking \cite{Biere09,BiereCimattiClarkeZhu99},
which can be seen as an fpt-reduction from the model checking problem
for linear temporal logic (LTL), which is PSPACE-complete, to SAT,
where the parameter is an upper bound on the size of a counterexample.

Recently, extending the work of Flum and Grohe \cite{FlumGrohe03},
we initiated the development of a general theoretical framework
to support the classification of hard problems on whether they
admit an fpt-reduction to SAT or not \cite{DeHaanSzeider14}.
This framework provides a hardness theory that can be used to
provide evidence that certain problems do not admit  an
fpt-reduction to SAT, similar to 
NP-hardness which provides evidence against polynomial-time tractability
\cite{GareyJohnson79} and $\W{1}$\hy hardness which provides  evidence against
fixed-parameter tractability \cite{DowneyFellows99}.
For an overview of the
parameterized complexity classes in this framework
and the relation between them,
see Figure~\ref{fig:classes}.

\paragraph{New Contributions}
We use this new framework to analyze
problems related to Boolean satisfiability.
We focus on problems that are located at the second level of the
PH, i.e., problems complete for \SigmaP{2}. 
This initiates a structured investigation of fpt-reducibility to SAT
of problems related to Boolean satisfiability that are ``beyond NP.''
Concretely, we look at the following problems,
consider several parameterizations of these problems,
and identify for which of these parameterized problems an fpt-reduction to SAT is possible
and for which this is not possible:
\begin{itemize}
  \item minimizing an implicant of a formula in disjunctive normal form (DNF)
    (parameterizations: the size of the minimized implicant,
    and the difference in size between the original and the
    minimized implicant);
  \item minimizing a DNF formula
    (parameterizations: the size of the minimized formula,
    and the difference in size between the original and the
    minimized formula); and
  \item the satisfiability problem of quantified Boolean formulas (QBFs)
    (parameterizations: the treewidth of the incidence graph
    of the formula restricted to several subsets of variables).
\end{itemize}
In particular, we show that minimizing an implicant of a DNF formula
does not become significantly easier when the minimized
implicant is small
(Proposition~\ref{prop:ssic-completeness}), nor when the difference in size
between the original and the minimized implicant is small
(Proposition~\ref{prop:lsic}).
The problem of reducing a DNF formula in size
(while preserving logical equivalence)
also does not become significantly easier when the difference in size is small
(Proposition~\ref{prop:largemindnf-completeness}).
However, the problem of reducing a DNF formula to an equivalent
DNF formula that is small can be done with a small number of
SAT calls (Theorem~\ref{thm:smallmindnf-membership2}).
Moreover, we show that deciding satisfiability of a quantified Boolean
formula with one quantifier alternation can be reduced to a single SAT instance
if the variables in the second quantifier block interact with each other
in a tree-like fashion (Theorem~\ref{thm:qsat2-twA-fpt}), whereas a similar restriction
on the variables in the first quantifier block does not make the problem
any easier (Proposition~\ref{prop:qsat2-twE-fpt}).
\squeezing{See Table~\ref{table:results} for an overview of the main complexity results
pertaining to fpt-reducibility to SAT.}

\paragraph{Related work}
Many of the decision problems analyzed in this paper have been studied before
in a classical complexity setting
\cite{GoldsmithHagenMundhenk08,Stockmeyer76,Umans00}.
The logic minimization problems that we consider in this paper
have been studied since the 1950s (cf.~\cite{Umans00}).
The problem of minimizing an implicant of a DNF formula
plays a central role in the analysis of logic minimization
problems \cite{Umans00}.
Variants of the minimization problems that we consider,
where a subset-minimal solution is sought,
are often solved by calling SAT solvers as subroutines.
One example of such work is related to identifying minimal unsatisfiable subsets
(MUSes) of a CNF formula
\cite{BelovLynceMarquesSilva12,GregoireMazurePiette08,MarquesSilvaJanotaBelov13}.
Recent work on MUS extraction indicates that reducing the number of
SAT calls made in these algorithms is beneficial
for the practical performance of these algorithms
\cite{MarquesSilvaJanotaBelov13}.
Decision procedures using SAT solvers as a subroutine
have also been used to solve problems that lie at the second level
of the PH, e.g., problems related to abstract argumentation
\cite{DvorakJarvisaloWallnerWoltran14}.
}

\satversion{
\section{Preliminaries}

\paragraph{Propositional Logic and Quantified Boolean Formulas}

A \emph{literal} is a propositional variable~$x$ or a negated variable~$\neg x$.
The \emph{complement}~$\overline{x}$ of a positive literal~$x$ is~$\neg x$,
and the complement~$\overline{\neg x}$ of a negative literal~$\neg x$ is~$x$.
For literals~$l \in \SBs x, \neg x \SEs$, we let~$\Var{l} = x$ denote
the variable occurring in~$l$.
A \emph{clause} is a finite set of literals,
not containing a complementary pair~$x$,~$\neg x$,
and is interpreted as the disjunction of these literals.
A \emph{term} is a finite set of literals,
not containing a complementary pair~$x$,~$\neg x$,
and is interpreted as the conjunction of these literals.
We let~$\top$ denote the empty clause.
A formula in \emph{conjunctive normal form (CNF)}
is a finite set of clauses, interpreted as the conjunction
of these clauses.
A formula in \emph{disjunctive normal form (DNF)}
is a finite set of terms, interpreted as the disjunction
of these terms.
We say that a DNF formula~$\varphi$ is a \emph{term-wise subformula}
of another DNF formula~$\varphi'$ if for all terms~$t \in \varphi$
there exists a term~$t' \in \varphi'$ such that~$t \subseteq t'$.
We define the \emph{size}~$\size{\varphi}$ of a
DNF formula~$\varphi$ to be~$\sum_{t \in \varphi} \Card{t}$;
the number of terms of~$\varphi$ is denoted by~$\Card{\varphi}$.
For a DNF formula~$\varphi$, the set~$\Var{\varphi}$ denotes
the set of all variables~$x$ such that some term of~$\varphi$
contains~$x$ or~$\neg x$.
We use the standard notion of \emph{(truth)
assignments}~$\alpha : \Var{\varphi} \rightarrow \SBs 0,1 \SEs$
for Boolean formulas and \emph{truth} of a formula
under such an assignment.
We denote the problem of deciding whether a propositional
formula~$\varphi$ is satisfiable by \thSAT{},
and the problem of deciding whether~$\varphi$ is
not satisfiable by \thUNSAT{}.
Let~$\varphi$ be a DNF formula.
We say that a set~$C$ of literals is an \emph{implicant of~$\varphi$}
if all assignments that satisfy~$\bigwedge_{l \in C} l$
also satisfy $\varphi$.
Let~$\gamma = \SBs x_1 \mapsto d_1, \dotsc, x_n \mapsto d_n \SEs$
be a function that maps some variables
of a formula~$\varphi$ to other variables or to truth values.
We let~$\varphi[\gamma]$ denote the application of
such a substitution~$\gamma$ to the formula~$\varphi$.
We also write~$\varphi[x_1 \mapsto d_1,\dotsc,x_n \mapsto d_n]$
to denote~$\varphi[\gamma]$.

\sloppypar
A \emph{(prenex) quantified Boolean formula (QBF)} is a formula of the form
$Q_1X_1 Q_2X_2 \dotsc Q_m X_m \psi$,
where each~$Q_i$ is either~$\forall$ or~$\exists$,
the~$X_i$ are disjoint sets of propositional variables,
and~$\psi$ is a Boolean formula over the variables in~$\bigcup_{i=1}^{m} X_i$.
We call $\psi$ the \emph{matrix} of the formula.
Truth of such formulas is defined in the usual way.
We say that a QBF is \emph{in QDNF} if the matrix is in DNF.
For the remainder of this paper, we will restrict our attention
to QDNF formulas.
Consider the following decision problem.
\probdef{
  $\EASat{}$
  
  \emph{Instance:} A QDNF~$\varphi = \exists X. \forall Y. \psi$,
    where~$\psi$ is quantifier-free.

  \emph{Question:} Is $\varphi$ satisfiable?
}
The complexity class consisting of all problems that are
polynomial-time reducible to~\EASat{} is denoted by~\SigmaP{2},
and its co-class is denoted by~\PiP{2}.
These classes form the second level of the PH
\cite{Papadimitriou94}.

\paragraph{Parameterized Complexity}
We introduce some core notions from parameterized complexity theory.
For an in-depth treatment we refer to other sources
\cite{DowneyFellows99,FlumGrohe06,Niedermeier06,DowneyFellows13}.
A \emph{parameterized problem}~$L$ is a subset of~$\Sigma^{*} \times
\mathbb{N}$ for some finite alphabet~$\Sigma$.  For an instance~$(I,k)
\in \Sigma^{*} \times \mathbb{N}$, we call~$I$ the \emph{main part}
and~$k$ the \emph{parameter}.  The following generalization of
polynomial time computability is commonly regarded as the tractability
notion of parameterized complexity theory.  A parameterized problem
$L$ is \emph{fixed-parameter tractable} if there exists a computable
function~$f$ and a constant~$c$ such that there exists an algorithm
that decides whether~$(I,k) \in L$ in time~$O(f(k)\size{I}^c)$,
where~$\size{I}$ denotes the size of~$I$.  Such an algorithm is called an
\emph{fpt-algorithm}, and this amount of time is called
\emph{fpt-time}. \FPT{} is the class of all fixed-parameter tractable
decision problems.
If the parameter is constant, then fpt-algorithms run in polynomial
time where the order of the polynomial is independent of the
parameter.
This provides a good scalability in the parameter in contrast to
running times of the form $\size{I}^k$,
which are also polynomial for fixed~$k$, but are already
impractical for, say, $k>3$.
By XP we denote the class of all problems~$L$ for which it can be decided
whether~$(I,k) \in L$ in time~$O(\size{I}^{f(k)})$,
for some fixed computable function $f$.

Parameterized complexity also generalizes the notion of polynomial-time
reductions.
Let~$L \subseteq \Sigma^{*} \times \mathbb{N}$ and
$L' \subseteq (\Sigma')^{*} \times \mathbb{N}$ be two parameterized problems.
A \emph{(many-one) fpt-reduction} from~$L$ to~$L'$ is a mapping
$R : \Sigma^{*} \times \mathbb{N} \rightarrow (\Sigma')^{*} \times \mathbb{N}$
from instances of~$L$ to instances of~$L'$ such that
there exist some computable function~$g : \mathbb{N} \rightarrow \mathbb{N}$
such that for all~$(I,k) \in \Sigma^{*} \times \mathbb{N}$:
(i)~$(I,k)$ is a yes-instance of~$L$ if and only if
$(I',k') = R(I,k)$ is a yes-instance of~$L'$,
(ii)~$k' \leq g(k)$,
and (iii)~$R$ is computable in fpt-time.
Similarly, we call reductions that satisfy properties~(i) and~(ii)
but that are computable in time~$O(\size{I}^{f(k)})$,
for some fixed computable function~$f$, \emph{xp-reductions}.

Let~$C$ be a
classical complexity class, e.g., \NP{}.  The parameterized complexity
class $\para{C}$ is then defined as the class of all parameterized
problems~$L \subseteq \Sigma^{*} \times \mathbb{N}$, for some finite
alphabet~$\Sigma$, for which there exist an alphabet~$\Pi$, a
computable function~$f : \mathbb{N} \rightarrow \Pi^{*}$, and a
problem~$P \subseteq \Sigma^{*} \times \Pi^{*}$ such that~$P \in C$
and for all instances~$(x,k) \in \Sigma^{*} \times \mathbb{N}$ of~$L$
we have that~$(x,k) \in L$ if and only if~$(x,f(k)) \in P$.
Intuitively, the class $\para{C}$ consists of all problems that are in
$C$ after a precomputation that only involves the parameter
\cite{FlumGrohe03}.

\section{Fpt-Reductions to SAT}
Problems in~$\NP$ and~$\co\NP$ can
be encoded into SAT in such a way that
the time required to produce the encoding
and consequently also the size of the resulting SAT instance are
polynomial in the input (the encoding is a polynomial-time many-one
reduction). Typically, the SAT encodings of problems proposed for
practical use are of this kind (cf.~\cite{Prestwich09}).
For problems that
are ``beyond NP,'' say for problems on the second level of the PH, such
polynomial SAT encodings do not exist, unless the PH
collapses. However, for such problems, there still could exist SAT
encodings which can be produced in fpt-time in terms of some parameter
associated with the problem. In fact, such fpt-time SAT encodings have
been obtained for various problems on the second level of the PH
\cite{FichteSzeider13,DeHaanSzeider14,PfandlerRuemmeleSzeider13}. The
classes \para{\NP} and \para{\co\NP} contain exactly those
parameterized problems that admit such a many-one fpt-reduction to
\thSAT{} and \thUNSAT{}, respectively. Thus, with fpt-time encodings,
one can go significantly beyond what is possible by conventional
polynomial-time SAT encodings.

Consider the following example.
The problem of deciding satisfiability of a QBF does not
allow a polynomial-time SAT encoding.
However, if the number of universal variables is small,
one can use known methods in QBF solving
to get an fpt-time encoding into SAT.
\probdef{
  \FewAQBFSat{}
  
  \emph{Instance:} A QBF~$\varphi$. 

  \emph{Parameter:} The number of universally quantified variables of~$\varphi$.

  \emph{Question:} Is~$\varphi$ true?
}
The idea behind this encoding is to repeatedly use
universal quantifier expansion~\cite{AyariBasin02,Biere04}.
Eliminating~$k$ many universally quantified variables
in this manner leads to an existentially quantified formula
that is at most a factor of~$2^k$ larger than the original formula.

Fpt-time encodings to SAT also have their limits. Clearly,
\para{\SigmaP{2}}-hard and \para{\PiP{2}}-hard parameterized
problems do not admit fpt-time encodings to SAT, even when the
parameter is a constant, unless the PH collapses. There are problems
that apparently do not admit fpt-time encodings to SAT, but are neither
\para{\SigmaP{2}}-hard nor \para{\PiP{2}}-hard.  In recent work
\cite{DeHaanSzeider14} we have introduced several complexity
classes for such intermediate problems, including the following.
The parameterized complexity class~\EkA{} consists of all
parameterized problems that can be many-one fpt-reduced to the
following variant of quantified Boolean satisfiability that is based
on truth assignments of restricted weight.
\probdef{ \EkAWSat{}

  \emph{Instance:} A quantified Boolean
  formula~$\varphi = \exists X. \forall Y. \psi$,
  and an integer~$k$.

  \emph{Parameter:} $k$.

  \emph{Question:} Does there exist a truth assignment~$\alpha$
  to~$X$ with weight~$k$
  such that for all truth assignments~$\beta$ to~$Y$
  the assignment~$\alpha \cup \beta$ satisfies~$\psi$?
}
For each problem in \EkA{} there exists an xp-reduction to \thUNSAT{}.
However, there is evidence that problems that are hard for \EkA{}
do not allow a many-one fpt-reduction to \thSAT{} \cite{DeHaanSzeider14}.
Many natural parameterized problems from various domains are
complete for the class \EkA{}, and for none of them an fpt-reduction
to \thSAT{} or \thUNSAT{} has been found.
If there exists an fpt-reduction to \thSAT{} for any \EkA{}-complete problem
then this is the case for all \EkA{}-complete problems.
The dual complexity class of \EkA{} is denoted by \AkE{},
and has similar (yet dual) properties.
Note that the notion of reducibility underlying hardness for
all parameterized complexity classes mentioned above
is that of many-one fpt-reductions.
For a more detailed discussion on the complexity classes
\EkA{} and \AkE{},
we refer to previous work \cite{DeHaanSzeider14}.

One can also enhance the power of polynomial-time SAT encodings by
considering polynomial-time algorithms that can query a SAT solver
multiple times.
Such an approach has been shown to be quite effective
in practice (see, e.g.,~\cite{BelovLynceMarquesSilva12,DvorakJarvisaloWallnerWoltran14,MarquesSilvaJanotaBelov13}) and extends the scope of
SAT solvers to problems in the class~\DeltaP{2}, but not to problems that are
\SigmaP{2}-hard or \PiP{2}-hard. Also here, switching from
polynomial-time to fpt-time provides a significant increase in
power. The class \para{\DeltaP{2}} contains all parameterized problems
that can be solved by an fpt-algorithm that can query a SAT solver
multiple times (i.e., by an fpt-time Turing reduction to SAT).

\begin{figure}[H]
\begin{center}
\vspace{-20pt}
\begin{tikzpicture}[xscale=1.6, yscale=0.65]
  \node[] (sigma2) at (0,0) {$\para{\SigmaP{2}}$};
  \node[] (pi2) at (3,0) {$\para{\PiP{2}}$};
  \node[] (paranp) at (0,-4.4) {$\para{\NP}$};
  \node[] (paraconp) at (3,-4.4) {$\para{\co\NP}$};
  \node[] (delta2) at (1.5,-1) {$\para{\DeltaP{2}}$};
  \node[] (fptnpk) at (1.5,-2.2)
    {\center $\mtext{FPT}^{\mtext{NP}[f(k)]}$};
  \node[] (dp) at (1.5,-3.4) {$\para{\DP}$};
  \node[] (eka) at (-0.8,-1.8) {$\EkA$};
  \node[] (ake) at (3.8,-1.8) {$\AkE$};
  \node[] (w1) at (0,-5.6) {$\W{1}$};
  \node[] (cow1) at (3,-5.6) {$\co{\W{1}}$};
  \node[] (fpt) at (1.5,-6.2) {$\FPT = \para{\PTIME}$};
  \path[draw,->] (paranp) edge[bend left=60] (sigma2);
  \path[draw,->] (paraconp) edge[bend right=60] (pi2);
  \path[draw,->] (paranp) edge[bend right=25] (dp);
  \path[draw,->] (paraconp) edge[bend left=25] (dp);
  \path[draw,->] (dp) -- (fptnpk);
  \path[draw,->] (fptnpk) -- (delta2);
  \path[draw,->] (delta2) edge[bend right] (sigma2);
  \path[draw,->] (delta2) edge[bend left] (pi2);
  \path[draw,->] (paranp) edge[out=5, in=-100] (ake);
  \path[draw,->] (paraconp) edge[out=175, in=-80] (eka);
  \path[draw,->] (eka) edge[bend left=20] (sigma2);
  \path[draw,->] (ake) edge[bend right=20] (pi2);
  \path[draw,->,dashed, dash pattern=on .5mm] (w1) -- (paranp);
  \path[draw,->,dashed, dash pattern=on .5mm] (cow1) -- (paraconp);
  \path[draw,->] (fpt) edge[bend left=15] (w1);
  \path[draw,->] (fpt) edge[bend right=15] (cow1);
  
\end{tikzpicture}
\end{center}
\vspace{-15pt}
\caption{Parameterized complexity classes
up to the second level of the polynomial
hierarchy.
Arrows indicate inclusion relations.
(We omit the full definition of some of the
parameterized complexity classes depicted in the figure.
For a detailed definition of these, we refer to other
sources
\cite{Papadimitriou94,DowneyFellows99,FlumGrohe06,DowneyFellows13}.)}
\vspace{-5pt}
\label{fig:classes}
\end{figure}

An overview
of all relevant parameterized complexity classes
can be found in Figure~\ref{fig:classes}.
Locating problems in the complexity landscape as laid out in
this figure
can provide a guideline for practitioners, to indicate what algorithmic
approaches are possible and where complexity theoretic obstacles lie.

There are two fundamental aspects that are relevant for an algorithm
that makes queries to a SAT solver: (i)~the running time of the
algorithm (which does not take into account the time needed by the SAT solver to
answer the queries) and (ii)~the number of SAT calls. Results from
classical \emph{bounded query complexity} \cite{Hartmanis90,Wagner90}
suggests that if the running
time is polynomial, then increasing the numbers of SAT calls increases
the computing power. Several such separation results are known
\cite{ChangKadin96,Krentel88}.
From a practical point of view, the number of SAT calls
may seem to be relatively insignificant, assuming that the queries are
easy for the solver, and the solver can reuse information
from previous calls
\cite{BenedettiBernardini04,Hooker93,WhittemoreKimSakallah01}.
For a theoretical
worst-case model, however, one must assume that all queries involve hard SAT
instances, and that no information from previous calls can be
reused. Therefore, in a theoretical analysis,
it makes sense to study the number of SAT
calls made by fpt-time algorithms. In the parameterized setting, it is
natural to bound the number of SAT calls by a function of the
parameter. This yields the class $\mtext{FPT}^{\mtext{NP}[f(k)]}$,
which lies between \para{\DP} (two calls)
and \para{\DeltaP{2}} (unrestricted number of calls).

}

\squeezing{
\satversion{
\begin{table}[h!]
  \scriptsize
  \centering
  \vspace{-10pt}
  \begin{tabular}{@{}p{5.2cm}@{\qquad}p{2.8cm}@{\qquad}p{2.8cm}@{}}
 \toprule
    Problem & Complexity & Fpt-reducible to SAT? \\
 \midrule
    \SSIC{} & \EkA{}-c (Prop~\ref{prop:ssic-hardness}~and~\ref{prop:ssic-membership})
      & no \\
    \LSIC{} & \EkA{}-c (Prop~\ref{prop:lsic}) & no \\
    \LargeMinDNF{} & \EkA{}-c (Prop~\ref{prop:largemindnf-hardness}~and~\ref{prop:largemindnf-membership}) & no \\
    \SmallMinDNF{} & in~$\mtext{FPT}^{\mtext{FNP}[k]}$ (Thm~\ref{thm:smallmindnf-membership2}) & yes \\
    \EAtwESat{} & \para{\SigmaP{2}}-c (Prop~\ref{prop:qsat2-twE-fpt}) & no \\
    \EAtwASat{} & \para{\NP}-c (Thm~\ref{thm:qsat2-twA-fpt}) & yes \\
    \FewAQBFSat{} & \para{\NP}-c (Prop~\ref{prop:qbf-fewa}) & yes \\
 \bottomrule
  \end{tabular}
  \vspace{5pt}
  \caption{Overview of main complexity results (related to fpt-reducibility to SAT).}
  \label{table:results}
  \vspace{-30pt}
\end{table}
}
}

\satversion{
\section{Minimization Problems for DNF Formulas}

We consider several problems related to minimizing
implicants of DNF formulas and minimizing DNF formulas.
We consider several parameterizations,
and we show that some of these allow an fpt-reduction to SAT,
whereas others apparently do not.

The following decision problem,
that is concerned with reducing a given
implicant of a DNF formula in size,
is \SigmaP{2}-complete~\cite{Umans00}.
\probdef{
  \SIC{}
  
  \emph{Instance:} A DNF formula~$\varphi$,
  an implicant~$C$ of~$\varphi$ of size~$n$, and an integer~$m$.

  \emph{Question:} Does there exist an implicant~$C' \subseteq C$
  of~$\varphi$ of size~$m$?
}
We consider two parameterizations of this problem:
(1)~\SSIC{}, where the parameter~$k = m$ is
the size of the minimized implicant,
and (2)~\LSIC{}, where the parameter~$k = n-m$ is
the difference in size between the original implicant
and the minimized implicant.
We show that neither of these restrictions is enough
to admit an fpt-reduction to SAT.
All results can straightforwardly be extended to the
variant of the problem where implicants of size at most~$m$ are accepted.

%
%
%
%
%
%

Next, consider the following decision problem,
that is concerned with deciding whether a
given DNF formula~$\varphi$ is logically equivalent to
a DNF formula~$\varphi'$ of size~$m$,
and that is \SigmaP{2}-complete~\cite{Umans00}.
\probdef{
  \MinDNF{}
  
  \emph{Instance:} A \DNF{} formula~$\varphi$
  of size~$n$,
  and an integer~$m$.

  \emph{Question:} Does there exist a term-wise subformula~$\varphi'$ 
  of~$\varphi$ of size~$m$ such
  that~$\varphi \equiv \varphi'$?
}
We consider the following two parameterizations
of this problem:
(1)~\LargeMinDNF{}, where the parameter~$k = n-m$ is the difference
in size between the original formula~$\varphi$ and the
minimized formula~$\varphi'$,
and (2)~\SmallMinDNF{}, where the parameter~$k = m$ is the
size of the minimized formula~$\varphi'$.
We show that the former parameterization is not enough
to allow an fpt-reduction to SAT,
but that for the latter parameterization, the problem can be
solved with an fpt-algorithm that uses at
most~$\lceil \log_2 k \rceil+1$ many SAT calls.
Moreover, this algorithm works even for the case
where equivalent DNF formulas that are not term-wise subformulas
of~$\varphi$ are also accepted.

%
%
%
%
%
%

We will now set out to prove the complexity results
mentioned in the discussion above.
%
%
%
In order to prove \EkA{}-hardness of \SSIC{},
we need the following technical lemma
(we omit its straightforward proof).

\begin{lemma}
\label{lem:exactly-k}
Let $(\varphi,k)$ be an instance of \EkAWSat{}.
In polynomial time, we can construct an equivalent instance $(\varphi',k)$
of \EkAWSat{} with $\varphi' = \exists X. \forall Y. \psi$, such that
for every assignment $\alpha : X \rightarrow \SBs 0,1 \SEs$ that has weight
$m \neq k$, it holds that $\forall Y. \psi[\alpha]$ is true.
\end{lemma}

\begin{proposition}
\label{prop:ssic-completeness}
\SSIC{} is \EkA{}-complete.
\end{proposition}
\ProofSSICCompleteness{}

%


\begin{proposition}
\label{prop:lsic}
\LSIC{} is \EkA{}\hy{}complete.
\end{proposition}
\begin{proof}[sketch]
As an auxiliary problem, we consider the parameterized problem
\EnminkAWSat{}, which is a variant of \EkAWSat{}.
Given an input consisting of a QDNF~$\varphi = \exists X. \forall Y.
\psi$ with~$\Card{X} = n$ and an integer~$k$,
the problem is to decide whether there exists an assignment~$\alpha$
to~$X$ with weight~$n-k$
such that~$\forall Y. \psi[\alpha]$ is true.
The parameter for this problem is~$k$.
We claim that this problem has the following properties.
We omit a proof of these claims.

\smallskip \noindent \textit{Claim 1.} \EnminkAWSat{} is \EkA{}-complete.

\smallskip \noindent \textit{Claim 2.} Let $(\varphi,k)$ be an instance of \EnminkAWSat{}.
In polynomial time, we can construct an equivalent instance $(\varphi',k)$
of \EnminkAWSat{} with $\varphi' = \exists X. \forall Y. \psi$, such that
for any assignment $\alpha : X \rightarrow \SBs 0,1 \SEs$ that has weight
$m \neq (\Card{X} - k)$, it holds that $\forall Y. \psi[\alpha]$ is true.

\smallskip\noindent Using these claims, both membership and hardness for \EkA{} follow 
straightforwardly using arguments similar to the \EkA{}-completeness proof
of \SSIC{}.
The fpt-reductions in the proof of Proposition~\ref{prop:ssic-completeness}
show that \LSIC{} fpt-reduces to and from \EnminkAWSat{}.
\end{proof}
%
\smallskip\noindent We can now turn to proving complexity results
for the problems of minimizing DNF formulas.

\begin{proposition}
\label{prop:largemindnf-completeness}
\LargeMinDNF{} is \EkA{}-complete.
\end{proposition}
\begin{proof}[sketch]%
To show \EkA{}-hardness, we use
the reduction from the literature
that is used to show \SigmaP{2}-hardness
of the unparameterized version of \LargeMinDNF{}.
The polynomial-time reduction
from the unparameterized version of \LSIC{}
to the unparameterized version of \LargeMinDNF{}
given by Umans~\cite[Theorem~2.2]{Umans00}
is an fpt-reduction from \LSIC{}
to \LargeMinDNF{}.

To show membership in \EkA{}, one can give an fpt-reduction
to \EkAWSat{}.
We describe the main idea behind this reduction,
and we omit a detailed proof.
Given an instance~$(\varphi,k)$ of \LargeMinDNF{}
we construct an instance~$(\varphi',k)$ of \EkAWSat{}
where the assignment to the existentially quantified variables of~$\varphi'$
represents the~$k$ many literal occurrences that are to be removed,
and where universally quantified part of~$\varphi'$ is used
to verify the equivalence of~$\varphi$ and the formula
obtained from~$\varphi$ by removing the~$k$ literals
chosen by the assignment to the existential variables.
\end{proof}

\CompleteSmallMinDNF{}
}

\satversion{
\section{QBF Satisfiability and Treewidth}

\newcommand{\itw}{\mtext{incid.tw}}
\newcommand{\Eitw}{\mtext{$\exists$-incid.tw}}
\newcommand{\Aitw}{\mtext{$\forall$-incid.tw}}

The graph parameter \emph{treewidth} measures in a certain sense the
tree-likeness of a graph (for a definition of treewidth, see, e.g.,
\cite{Bodlaender98,Bodlaender12}). Many hard problems are
fixed-parameter tractable when parameterized by the treewidth of a
graph associated with the
input~\cite{Bodlaender12,GottlobPichlerWei10}.  By associating the
following graph with a QDNF formulas one can apply the parameter
treewidth also to QDNF formulas (for QCNF formulas the graph can be
defined analogously, taking clauses instead of terms).

The \emph{incidence graph} of a QDNF formula $\varphi$ is the
bipartite graph where one side of the partition consists of the
variables and the other side consists of the terms; a variable and a term
are adjacent if the variable appears positively or negatively in the
term.  The \emph{incidence treewidth} of $\varphi$, in symbols
$\itw(\varphi)$, is the treewidth of the incidence graph of $\varphi$.
It is well known that checking the truth of a QDNF formula whose
number of quantifier alternations is bounded by a constant is
fixed-parameter tractable when parameterized by $\itw$ (this can be
easily shown using Courcelle's Theorem \cite{GottlobPichlerWei10}).

Bounding the treewidth of the entire incidence graph is very
restrictive.  In this section we investigate whether bounding the
treewidth of certain subgraphs of the incidence graph is sufficient to
reduce the complexity. To this aim we define the \emph{existential
  incidence treewidth} of a QDNF formula~$\varphi$, in symbols
$\Eitw(\varphi)$, as the treewidth of the incidence graph of~$\varphi$
after deletion of all universal variables.  The \emph{universal
  incidence treewidth}, in symbols $\Aitw(\varphi)$, is the treewidth
of the incidence graph of $\varphi$ after deletion of all existential
variables.

The existential and universal treewidth can be small for formulas
whose incidence treewidth is arbitrarily large.  Take for instance a
QDNF formula $\varphi$ whose incidence graph is an $n\times n$ square
grid, as in Figure~\ref{fig:grid}.
In this example, $\Eitw(\varphi)=\Aitw(\varphi)=2$
(since after the deletion of the universal or the existential variables the
incidence graph becomes a collection of trivial path-like graphs), but
\mbox{$\itw(\varphi)=n$}~\cite{Bodlaender98}. Hence, a tractability
result in terms of existential or universal incidence treewidth would
apply to a significantly larger class of instances than a tractability
result in terms of incidence treewidth.
\begin{figure}
\centering
  \vspace{-10pt}
  \begin{tikzpicture}[scale=.5]
    \newcommand{\edgesep}[0]{-0mm}
    \tikzstyle{mybox}=[draw, rectangle, text width=3pt, text height=3pt]
    \tikzstyle{mybullet}=[draw, circle, text width=3pt, text height=3pt, fill=black]
    \tikzstyle{mycirc}=[draw, circle, text width=3pt, text height=3pt]

    \node[inner sep=\edgesep, mybullet] (0-0) at (0,0) {};
    \node[inner sep=\edgesep, mybox] (1-0) at (1,0) {};
    \node[inner sep=\edgesep, mycirc] (2-0) at (2,0) {};
    \node[inner sep=\edgesep, mybox] (3-0) at (3,0) {};
    \node[inner sep=\edgesep, mybullet] (4-0) at (4,0) {};
    \node[inner sep=\edgesep, mybox] (5-0) at (5,0) {};
    \node[inner sep=\edgesep, mybox] (0-1) at (0,1) {};
    \node[inner sep=\edgesep, mybullet] (1-1) at (1,1) {};
    \node[inner sep=\edgesep, mybox] (2-1) at (2,1) {};
    \node[inner sep=\edgesep, mycirc] (3-1) at (3,1) {};
    \node[inner sep=\edgesep, mybox] (4-1) at (4,1) {};
    \node[inner sep=\edgesep, mybullet] (5-1) at (5,1) {};
    \node[inner sep=\edgesep, mycirc] (0-2) at (0,2) {};
    \node[inner sep=\edgesep, mybox] (1-2) at (1,2) {};
    \node[inner sep=\edgesep, mybullet] (2-2) at (2,2) {};
    \node[inner sep=\edgesep, mybox] (3-2) at (3,2) {};
    \node[inner sep=\edgesep, mycirc] (4-2) at (4,2) {};
    \node[inner sep=\edgesep, mybox] (5-2) at (5,2) {};
    \node[inner sep=\edgesep, mybox] (0-3) at (0,3) {};
    \node[inner sep=\edgesep, mycirc] (1-3) at (1,3) {};
    \node[inner sep=\edgesep, mybox] (2-3) at (2,3) {};
    \node[inner sep=\edgesep, mybullet] (3-3) at (3,3) {};
    \node[inner sep=\edgesep, mybox] (4-3) at (4,3) {};
    \node[inner sep=\edgesep, mycirc] (5-3) at (5,3) {};
    \node[inner sep=\edgesep, mybullet] (0-4) at (0,4) {};
    \node[inner sep=\edgesep, mybox] (1-4) at (1,4) {};
    \node[inner sep=\edgesep, mycirc] (2-4) at (2,4) {};
    \node[inner sep=\edgesep, mybox] (3-4) at (3,4) {};
    \node[inner sep=\edgesep, mybullet] (4-4) at (4,4) {};
    \node[inner sep=\edgesep, mybox] (5-4) at (5,4) {};
    \node[inner sep=\edgesep, mybox] (0-5) at (0,5) {};
    \node[inner sep=\edgesep, mybullet] (1-5) at (1,5) {};
    \node[inner sep=\edgesep, mybox] (2-5) at (2,5) {};
    \node[inner sep=\edgesep, mycirc] (3-5) at (3,5) {};
    \node[inner sep=\edgesep, mybox] (4-5) at (4,5) {};
    \node[inner sep=\edgesep, mybullet] (5-5) at (5,5) {};
    \node at (2.5,-1) {(a)};
    \begin{scope}[on background layer]
    \path[draw] (0-0) -- (0-1);
    \path[draw] (0-1) -- (0-2);
    \path[draw] (0-2) -- (0-3);
    \path[draw] (0-3) -- (0-4);
    \path[draw] (0-4) -- (0-5);
    \path[draw] (1-0) -- (1-1);
    \path[draw] (1-1) -- (1-2);
    \path[draw] (1-2) -- (1-3);
    \path[draw] (1-3) -- (1-4);
    \path[draw] (1-4) -- (1-5);
    \path[draw] (2-0) -- (2-1);
    \path[draw] (2-1) -- (2-2);
    \path[draw] (2-2) -- (2-3);
    \path[draw] (2-3) -- (2-4);
    \path[draw] (2-4) -- (2-5);
    \path[draw] (3-0) -- (3-1);
    \path[draw] (3-1) -- (3-2);
    \path[draw] (3-2) -- (3-3);
    \path[draw] (3-3) -- (3-4);
    \path[draw] (3-4) -- (3-5);
    \path[draw] (4-0) -- (4-1);
    \path[draw] (4-1) -- (4-2);
    \path[draw] (4-2) -- (4-3);
    \path[draw] (4-3) -- (4-4);
    \path[draw] (4-4) -- (4-5);
    \path[draw] (5-0) -- (5-1);
    \path[draw] (5-1) -- (5-2);
    \path[draw] (5-2) -- (5-3);
    \path[draw] (5-3) -- (5-4);
    \path[draw] (5-4) -- (5-5);
    \path[draw] (0-0) -- (1-0);
    \path[draw] (1-0) -- (2-0);
    \path[draw] (2-0) -- (3-0);
    \path[draw] (3-0) -- (4-0);
    \path[draw] (4-0) -- (5-0);
    \path[draw] (0-1) -- (1-1);
    \path[draw] (1-1) -- (2-1);
    \path[draw] (2-1) -- (3-1);
    \path[draw] (3-1) -- (4-1);
    \path[draw] (4-1) -- (5-1);
    \path[draw] (0-2) -- (1-2);
    \path[draw] (1-2) -- (2-2);
    \path[draw] (2-2) -- (3-2);
    \path[draw] (3-2) -- (4-2);
    \path[draw] (4-2) -- (5-2);
    \path[draw] (0-3) -- (1-3);
    \path[draw] (1-3) -- (2-3);
    \path[draw] (2-3) -- (3-3);
    \path[draw] (3-3) -- (4-3);
    \path[draw] (4-3) -- (5-3);
    \path[draw] (0-4) -- (1-4);
    \path[draw] (1-4) -- (2-4);
    \path[draw] (2-4) -- (3-4);
    \path[draw] (3-4) -- (4-4);
    \path[draw] (4-4) -- (5-4);
    \path[draw] (0-5) -- (1-5);
    \path[draw] (1-5) -- (2-5);
    \path[draw] (2-5) -- (3-5);
    \path[draw] (3-5) -- (4-5);
    \path[draw] (4-5) -- (5-5);
    \end{scope}
    
    \begin{scope}[xshift=8cm]
    \node[inner sep=\edgesep, mybox] (1-0) at (1,0) {};
    \node[inner sep=\edgesep, mycirc] (2-0) at (2,0) {};
    \node[inner sep=\edgesep, mybox] (3-0) at (3,0) {};
    \node[inner sep=\edgesep, mybox] (5-0) at (5,0) {};
    \node[inner sep=\edgesep, mybox] (0-1) at (0,1) {};
    \node[inner sep=\edgesep, mybox] (2-1) at (2,1) {};
    \node[inner sep=\edgesep, mycirc] (3-1) at (3,1) {};
    \node[inner sep=\edgesep, mybox] (4-1) at (4,1) {};
    \node[inner sep=\edgesep, mycirc] (0-2) at (0,2) {};
    \node[inner sep=\edgesep, mybox] (1-2) at (1,2) {};
    \node[inner sep=\edgesep, mybox] (3-2) at (3,2) {};
    \node[inner sep=\edgesep, mycirc] (4-2) at (4,2) {};
    \node[inner sep=\edgesep, mybox] (5-2) at (5,2) {};
    \node[inner sep=\edgesep, mybox] (0-3) at (0,3) {};
    \node[inner sep=\edgesep, mycirc] (1-3) at (1,3) {};
    \node[inner sep=\edgesep, mybox] (2-3) at (2,3) {};
    \node[inner sep=\edgesep, mybox] (4-3) at (4,3) {};
    \node[inner sep=\edgesep, mycirc] (5-3) at (5,3) {};
    \node[inner sep=\edgesep, mybox] (1-4) at (1,4) {};
    \node[inner sep=\edgesep, mycirc] (2-4) at (2,4) {};
    \node[inner sep=\edgesep, mybox] (3-4) at (3,4) {};
    \node[inner sep=\edgesep, mybox] (5-4) at (5,4) {};
    \node[inner sep=\edgesep, mybox] (0-5) at (0,5) {};
    \node[inner sep=\edgesep, mybox] (2-5) at (2,5) {};
    \node[inner sep=\edgesep, mycirc] (3-5) at (3,5) {};
    \node[inner sep=\edgesep, mybox] (4-5) at (4,5) {};
    \node at (2.5,-1) {(b)};
    \path[draw] (0-1) -- (0-2);
    \path[draw] (0-2) -- (0-3);
    \path[draw] (1-2) -- (1-3);
    \path[draw] (1-3) -- (1-4);
    \path[draw] (2-0) -- (2-1);
    \path[draw] (2-3) -- (2-4);
    \path[draw] (2-4) -- (2-5);
    \path[draw] (3-0) -- (3-1);
    \path[draw] (3-1) -- (3-2);
    \path[draw] (3-4) -- (3-5);
    \path[draw] (4-1) -- (4-2);
    \path[draw] (4-2) -- (4-3);
    \path[draw] (5-2) -- (5-3);
    \path[draw] (5-3) -- (5-4);
    \path[draw] (1-0) -- (2-0);
    \path[draw] (2-0) -- (3-0);
    \path[draw] (2-1) -- (3-1);
    \path[draw] (3-1) -- (4-1);
    \path[draw] (0-2) -- (1-2);
    \path[draw] (3-2) -- (4-2);
    \path[draw] (4-2) -- (5-2);
    \path[draw] (0-3) -- (1-3);
    \path[draw] (1-3) -- (2-3);
    \path[draw] (4-3) -- (5-3);
    \path[draw] (1-4) -- (2-4);
    \path[draw] (2-4) -- (3-4);
    \path[draw] (2-5) -- (3-5);
    \path[draw] (3-5) -- (4-5);
    \end{scope}
    
    \begin{scope}[xshift=16cm]
    \node[inner sep=\edgesep, mybullet] (0-0) at (0,0) {};
    \node[inner sep=\edgesep, mybox] (1-0) at (1,0) {};
    \node[inner sep=\edgesep, mybox] (3-0) at (3,0) {};
    \node[inner sep=\edgesep, mybullet] (4-0) at (4,0) {};
    \node[inner sep=\edgesep, mybox] (5-0) at (5,0) {};
    \node[inner sep=\edgesep, mybox] (0-1) at (0,1) {};
    \node[inner sep=\edgesep, mybullet] (1-1) at (1,1) {};
    \node[inner sep=\edgesep, mybox] (2-1) at (2,1) {};
    \node[inner sep=\edgesep, mybox] (4-1) at (4,1) {};
    \node[inner sep=\edgesep, mybullet] (5-1) at (5,1) {};
    \node[inner sep=\edgesep, mybox] (1-2) at (1,2) {};
    \node[inner sep=\edgesep, mybullet] (2-2) at (2,2) {};
    \node[inner sep=\edgesep, mybox] (3-2) at (3,2) {};
    \node[inner sep=\edgesep, mybox] (5-2) at (5,2) {};
    \node[inner sep=\edgesep, mybox] (0-3) at (0,3) {};
    \node[inner sep=\edgesep, mybox] (2-3) at (2,3) {};
    \node[inner sep=\edgesep, mybullet] (3-3) at (3,3) {};
    \node[inner sep=\edgesep, mybox] (4-3) at (4,3) {};
    \node[inner sep=\edgesep, mybullet] (0-4) at (0,4) {};
    \node[inner sep=\edgesep, mybox] (1-4) at (1,4) {};
    \node[inner sep=\edgesep, mybox] (3-4) at (3,4) {};
    \node[inner sep=\edgesep, mybullet] (4-4) at (4,4) {};
    \node[inner sep=\edgesep, mybox] (5-4) at (5,4) {};
    \node[inner sep=\edgesep, mybox] (0-5) at (0,5) {};
    \node[inner sep=\edgesep, mybullet] (1-5) at (1,5) {};
    \node[inner sep=\edgesep, mybox] (2-5) at (2,5) {};
    \node[inner sep=\edgesep, mybox] (4-5) at (4,5) {};
    \node[inner sep=\edgesep, mybullet] (5-5) at (5,5) {};
    \node at (2.5,-1) {(c)};
    \path[draw] (0-0) -- (0-1);
    \path[draw] (0-3) -- (0-4);
    \path[draw] (0-4) -- (0-5);
    \path[draw] (1-0) -- (1-1);
    \path[draw] (1-1) -- (1-2);
    \path[draw] (1-4) -- (1-5);
    \path[draw] (2-1) -- (2-2);
    \path[draw] (2-2) -- (2-3);
    \path[draw] (3-2) -- (3-3);
    \path[draw] (3-3) -- (3-4);
    \path[draw] (4-0) -- (4-1);
    \path[draw] (4-3) -- (4-4);
    \path[draw] (4-4) -- (4-5);
    \path[draw] (5-0) -- (5-1);
    \path[draw] (5-1) -- (5-2);
    \path[draw] (5-4) -- (5-5);
    \path[draw] (0-0) -- (1-0);
    \path[draw] (3-0) -- (4-0);
    \path[draw] (4-0) -- (5-0);
    \path[draw] (0-1) -- (1-1);
    \path[draw] (1-1) -- (2-1);
    \path[draw] (4-1) -- (5-1);
    \path[draw] (1-2) -- (2-2);
    \path[draw] (2-2) -- (3-2);
    \path[draw] (2-3) -- (3-3);
    \path[draw] (3-3) -- (4-3);
    \path[draw] (0-4) -- (1-4);
    \path[draw] (3-4) -- (4-4);
    \path[draw] (4-4) -- (5-4);
    \path[draw] (0-5) -- (1-5);
    \path[draw] (1-5) -- (2-5);
    \path[draw] (4-5) -- (5-5);
    \end{scope}
    
  \end{tikzpicture}

\caption{Incidence graph of a QDNF formula (a). Universal variables are
  drawn with black round shapes, existential variables with white
  round shapes, and terms are drawn with square shapes.
  Both deleting the universal variables (b) and the existential variables (c)
  significantly decreases the treewidth of the incidence graph.}
 \label{fig:grid}
 \vspace{-15pt}
\end{figure}
In the following we pinpoint the exact complexity of checking the
satisfiability of $\exists\forall$-QDNF formulas parameterized by
$\Eitw$ and $\Aitw$.
We let \EAtwASat{} denote the problem \EASat{} parameterized by \Aitw{},
and similarly we let \EAtwESat{} denote the problem \EASat{}
parameterized by \Eitw{}.
We show that
parameterizing by $\Eitw$
does not decrease the complexity and leaves the problem on the second
level of the~PH,
but for \EAtwASat{} we get an fpt-reduction to SAT.

\begin{proposition}
\label{prop:qsat2-twE-fpt}
$\EAtwESat{}$ is \para{\SigmaP{2}}-complete.
\end{proposition}
\ProofEAtwECompleteness{}

\begin{theorem}
\label{thm:qsat2-twA-fpt}
$\EAtwASat{}$ is \para{\NP}-complete.
\end{theorem}
\ProofEAtwACompleteness{}

\smallskip\noindent
Instead of incidence graphs one can also use \emph{primal graphs} to
model the structure of QBF formulas (see,
e.g.,~\cite{AtseriasOliva13,PanVardi06}). One can define corresponding parameters
primal treewidth, universal primal treewidth, and existential primal
treewidth.  The proof of Proposition~\ref{prop:qsat2-twE-fpt} shows
that $\EASat$ is
\para{\SigmaP{2}}-hard when parameterized by existential primal
treewidth. The parameter incidence treewidth is more general than
primal treewidth in the sense that small primal treewidth implies
small incidence treewidth~\cite{KolaitisVardi00,SamerSzeider10a}, but
the converse does not hold in general. Hence,
Theorem~\ref{thm:qsat2-twA-fpt} also holds for the parameter universal
primal treewidth.

%
%
%
%
%
%
}

\satversion{
\section{Conclusion}

%

We studied the fpt-reducibility to SAT for several problems beyond NP
under natural parameters.  Our positive results show that in some
cases it is possible to utilize structure in terms of parameters to
break through the barriers between classical complexity classes.
Parameters that admit an fpt-reduction to SAT can be less restrictive
than parameters that provide fixed-parameter tractability of the
problem itself, hence our approach extends the scope of
fixed-parameter tractability.  Additionally, we show that fpt-time
algorithms that can query a SAT solver (i.e., fpt-time Turing
reductions to SAT) exist for some problems that cannot be solved in
polynomial-time with the help of queries to a SAT solver (unless the
PH collapses), hence our approach also extends the scope of algorithms
using SAT queries.  Our negative results point out the limits of the
approach, showing that some problems do, most likely, not admit
fpt-reductions to SAT under certain natural parameters.
}

\satversion{
  \vfill
  \pagebreak\clearpage
  \bibliographystyle{splncs03}
  \DeclareRobustCommand{\DE}[3]{#3}

}

\krlongversion{
\longversion{
\pagebreak
\appendix
\section{Compendium}
\label{sec:appendix-a}

In this compendium, we give an overview of a number of parameterized complexity results
based on classical decision problems that lie at higher levels of the polynomial hierarchy.
This illustrates the applicability of the hardness theory developed in this paper.
We group the results by the type of problems.
To provide a systematic and concise overview, the compendium also
contains the results established in the main text of the paper,
and proofs for new results can be found in Appendix B below.

\subsection{Logic Problems}

We start with the quantified circuit satisfiability
problems on which the \kstar{} and \stark{} hierarchies are based.
We present only a two canonical forms of the problems in the \kstar{} hierarchy.
For problems in the \stark{} hierarchy, we let~$\CCC$ range over classes of Boolean circuits.\\

\boxedprob{
  $\EkAWSat{}$
  
  \emph{Instance:} A quantified Boolean
  formula~$\phi = \exists X. \forall Y. \psi$,
  and an integer~$k$

  \emph{Parameter:} $k$.

  \emph{Question:} Does there exist an assignment~$\alpha$
  to~$X$ with weight~$k$,
  such that~$\forall Y. \psi[\alpha]$ evaluates to true?
}{\textbf{Complexity:} \EkA-complete (Theorem~\ref{thm:kstar-collapse}).}

\boxedprob{
  $\EkAWSat(3\DNF)$
  
  \emph{Instance:} A quantified Boolean
  formula~$\phi = \exists X. \forall Y. \psi$ with~$\psi \in 3\DNF$,
  and an integer~$k$

  \emph{Parameter:} $k$.

  \emph{Question:} Does there exist an assignment~$\alpha$
  to~$X$ with weight~$k$,
  such that~$\forall Y. \psi[\alpha]$ evaluates to true?
}{\textbf{Complexity:} \EkA-complete (Theorem~\ref{thm:kstar-collapse}).}

\boxedprob{
  $\EAkWSat(\CCC)$
  
  \emph{Instance:} A Boolean circuit~$C \in \CCC$ over two disjoint
  sets~$X$ and~$Y$ of variables,
  and an integer~$k$

  \emph{Parameter:} $k$.

  \emph{Question:} Does there exist an assignment~$\alpha$ to~$X$,
  such that for all assignments~$\beta$ to~$Y$ of weight~$k$
  the assignment~$\alpha \cup \beta$ satisfies~$C$?
}{\textbf{Complexity:}\\
\EAkW{t}-complete when restricted to circuits of weft~$t$, for any~$t \geq 1$ (by definition);\\
\EAkW{SAT}-complete if~$\CCC = \Phi$ (by definition);\\
\EAkW{P}-complete if~$\CCC = \Gamma$ (by definition).}

\subsubsection{Weighted Quantified Boolean Satisfiability for \kstar{}}

\hspace{0pt}

\boxedprob{
  $\EleqkAWSat{}$
  
  \emph{Instance:} A quantified Boolean
  formula~$\phi = \exists X. \forall Y. \psi$,
  and an integer~$k$

  \emph{Parameter:} $k$.

  \emph{Question:} Does there exist an assignment~$\alpha$
  to~$X$ with weight at most~$k$,
  such that~$\forall Y. \psi[\alpha]$ evaluates to true?
}{\textbf{Complexity:} \EkA-complete
(Propositions~\ref{prop:kstar-leqk-hardness}~and~\ref{prop:kstar-leqk-membership}).}

\boxedprob{
  \EnminkAWSat{}
  
  \emph{Instance:} A quantified Boolean
  formula~$\phi = \exists X. \forall Y. \psi$,
  and an integer~$k$

  \emph{Parameter:} $k$.

  \emph{Question:} Does there exist an assignment~$\alpha$
  to~$X$ with weight~$\Card{X}-k$,
  such that~$\forall Y. \psi[\alpha]$ evaluates to true?
}{\textbf{Complexity:} \EkA-complete (Proposition~\ref{prop:kstar-nmink}).}

\boxedprob{
  $\EgeqkAWSat{}$
  
  \emph{Instance:} A quantified Boolean
  formula~$\phi = \exists X. \forall Y. \psi$,
  and an integer~$k$

  \emph{Parameter:} $k$.

  \emph{Question:} Does there exist an assignment~$\alpha$
  to~$X$ with weight at least~$k$,
  such that~$\forall Y. \psi[\alpha]$ evaluates to true?
}{\textbf{Complexity:} \para{}\SigmaP{2}-complete
(Propositions~\ref{prop:kstar-geqk-hardness}~and~\ref{prop:kstar-geqk-membership}).}

\subsubsection{Weighted Quantified Boolean Satisfiability in the \stark{} Hierarchy}

\hspace{0pt}

\boxedprob{
  $\EAkWSat(2\DNF)$
  
  \emph{Instance:} A quantified Boolean
  formula~$\varphi = \exists X. \forall Y. \psi$ with~$\psi \in 2\DNF$,
  and an integer~$k$

  \emph{Parameter:} $k$.

  \emph{Question:} Does there exist an assignment~$\alpha$ to~$X$,
  such that for all assignments~$\beta$ to~$Y$ of weight~$k$
  the assignment~$\alpha \cup \beta$ satisfies~$\psi$?
}{\textbf{Complexity:} \EAkW{1}-complete
(Theorem~\ref{thm:stark-2cnf-hardness} and Corollary~\ref{cor:stark-2cnf-hardness}).}

\boxedprob{
  $\EAkWSat(\forall\mtext{-monotone})$
  
  \emph{Instance:} A Boolean circuit~$C \in \CCC$ over two disjoint
  sets~$X$ and~$Y$ of variables that is in negation normal form
  and that is monotone in~$Y$,
  and an integer~$k$

  \emph{Parameter:} $k$.

  \emph{Question:} Does there exist an assignment~$\alpha$ to~$X$,
  such that for all assignments~$\beta$ to~$Y$ of weight~$k$
  the assignment~$\alpha \cup \beta$ satisfies~$C$?
}{\textbf{Complexity:} \EAkW{P}-complete
(Proposition~\ref{prop:eakwp-monotone}).}

\noindent Let~$C$ be a quantified Boolean circuit over
two disjoint sets~$X$ and~$Y$ of variables
that is in negation normal form.
We say that~$C$ is \emph{anti-monotone in the variables~$Y$} 
if the only nodes having variables in~$Y$ as input
are negation nodes.
i.e., the variables in~$Y$ can appear only negatively in the circuit.\\

\boxedprob{
  $\EAkWSat(\forall\mtext{-anti-monotone})$
  
  \emph{Instance:} A Boolean circuit~$C \in \CCC$ over two disjoint
  sets~$X$ and~$Y$ of variables that is in negation normal
  form and that is anti-monotone in~$Y$,
  and an integer~$k$

  \emph{Parameter:} $k$.

  \emph{Question:} Does there exist an assignment~$\alpha$ to~$X$,
  such that for all assignments~$\beta$ to~$Y$ of weight~$k$
  the assignment~$\alpha \cup \beta$ satisfies~$C$?
}{\textbf{Complexity:} \EAkW{P}-complete
(Proposition~\ref{prop:eakwp-antimonotone}).}

\subsubsection{Quantified Boolean Satisfiability with Bounded Treewidth}


\PreliminariesTreewidth{}

Let~$\psi = \delta_1 \vee \dotsm \vee \delta_u$ be a \DNF{} formula.
For any subset~$Z \subseteq \Var{\psi}$ of variables, we define
\emph{the incidence graph~$\mtext{IG}(Z,\psi)$
of~$\psi$ with respect to~$Z$}
to be the graph~$\mtext{IG}(Z,\psi) = (V,E)$,
where~$V = Z \cup \SBs \delta_1,\dotsc,\delta_u \SEs$
and~$E = \SB \SBs \delta_j, z \SEs \SM 1 \leq j \leq u, z \in Z,
\mtext{$z$ occurs in the clause~$\delta_j$} \SE$.

One way of capturing structure in an instance~$\varphi =
\exists X. \forall Y. \psi$ of \QSat{2}
is to take the treewidth of the incidence graph (the \emph{incidence
treewidth}) of~$\psi$ (with respect to all variables)
as a parameter.
However, as this is very restrictive, one could alternatively consider
taking as parameter the incidence treewidth with respect to
certain subsets of the variables.
In order to do so, we define the \emph{existential incidence treewidth}
of~$\varphi$
as the treewidth of the incidence graph of~$\psi$
with respect to the set~$X$ of existential variables. 
The \emph{universal incidence treewidth}
is the treewidth of the incidence graph of~$\psi$
with respect to the set~$Y$ of universal variables.

The existential and universal treewidth can be small for formulas
whose incidence treewidth is arbitrarily large.  Take for instance an
instance of~\QSat{2}
whose incidence graph is an $n\times n$ square
grid, as in Figure~\ref{fig:grid}.
In this example, both the existential and the universal incidence
treewidth are~$2$
(since after the deletion of the universal or the existential variables the
incidence graph becomes a collection of trivial path-like graphs), but
the incidence treewidth is~$n$~\cite{Bodlaender98}.
Hence, a tractability
result in terms of existential or universal incidence treewidth would
apply to a significantly larger class of instances than a tractability
result in terms of the incidence treewidth.
\begin{figure}[h!]
\centering
  \begin{tikzpicture}[scale=.5]
    \newcommand{\edgesep}[0]{-0mm}
    \tikzstyle{mybox}=[draw, rectangle, text width=3pt, text height=3pt]
    \tikzstyle{mybullet}=[draw, circle, text width=3pt, text height=3pt, fill=black]
    \tikzstyle{mycirc}=[draw, circle, text width=3pt, text height=3pt, fill=black!20]

    \node[inner sep=\edgesep, mybullet] (0-0) at (0,0) {};
    \node[inner sep=\edgesep, mybox] (1-0) at (1,0) {};
    \node[inner sep=\edgesep, mycirc] (2-0) at (2,0) {};
    \node[inner sep=\edgesep, mybox] (3-0) at (3,0) {};
    \node[inner sep=\edgesep, mybullet] (4-0) at (4,0) {};
    \node[inner sep=\edgesep, mybox] (5-0) at (5,0) {};
    \node[inner sep=\edgesep, mybox] (0-1) at (0,1) {};
    \node[inner sep=\edgesep, mybullet] (1-1) at (1,1) {};
    \node[inner sep=\edgesep, mybox] (2-1) at (2,1) {};
    \node[inner sep=\edgesep, mycirc] (3-1) at (3,1) {};
    \node[inner sep=\edgesep, mybox] (4-1) at (4,1) {};
    \node[inner sep=\edgesep, mybullet] (5-1) at (5,1) {};
    \node[inner sep=\edgesep, mycirc] (0-2) at (0,2) {};
    \node[inner sep=\edgesep, mybox] (1-2) at (1,2) {};
    \node[inner sep=\edgesep, mybullet] (2-2) at (2,2) {};
    \node[inner sep=\edgesep, mybox] (3-2) at (3,2) {};
    \node[inner sep=\edgesep, mycirc] (4-2) at (4,2) {};
    \node[inner sep=\edgesep, mybox] (5-2) at (5,2) {};
    \node[inner sep=\edgesep, mybox] (0-3) at (0,3) {};
    \node[inner sep=\edgesep, mycirc] (1-3) at (1,3) {};
    \node[inner sep=\edgesep, mybox] (2-3) at (2,3) {};
    \node[inner sep=\edgesep, mybullet] (3-3) at (3,3) {};
    \node[inner sep=\edgesep, mybox] (4-3) at (4,3) {};
    \node[inner sep=\edgesep, mycirc] (5-3) at (5,3) {};
    \node[inner sep=\edgesep, mybullet] (0-4) at (0,4) {};
    \node[inner sep=\edgesep, mybox] (1-4) at (1,4) {};
    \node[inner sep=\edgesep, mycirc] (2-4) at (2,4) {};
    \node[inner sep=\edgesep, mybox] (3-4) at (3,4) {};
    \node[inner sep=\edgesep, mybullet] (4-4) at (4,4) {};
    \node[inner sep=\edgesep, mybox] (5-4) at (5,4) {};
    \node[inner sep=\edgesep, mybox] (0-5) at (0,5) {};
    \node[inner sep=\edgesep, mybullet] (1-5) at (1,5) {};
    \node[inner sep=\edgesep, mybox] (2-5) at (2,5) {};
    \node[inner sep=\edgesep, mycirc] (3-5) at (3,5) {};
    \node[inner sep=\edgesep, mybox] (4-5) at (4,5) {};
    \node[inner sep=\edgesep, mybullet] (5-5) at (5,5) {};
    \node at (2.5,-1) {(a)};
    \begin{scope}[on background layer]
    \path[draw] (0-0) -- (0-1);
    \path[draw] (0-1) -- (0-2);
    \path[draw] (0-2) -- (0-3);
    \path[draw] (0-3) -- (0-4);
    \path[draw] (0-4) -- (0-5);
    \path[draw] (1-0) -- (1-1);
    \path[draw] (1-1) -- (1-2);
    \path[draw] (1-2) -- (1-3);
    \path[draw] (1-3) -- (1-4);
    \path[draw] (1-4) -- (1-5);
    \path[draw] (2-0) -- (2-1);
    \path[draw] (2-1) -- (2-2);
    \path[draw] (2-2) -- (2-3);
    \path[draw] (2-3) -- (2-4);
    \path[draw] (2-4) -- (2-5);
    \path[draw] (3-0) -- (3-1);
    \path[draw] (3-1) -- (3-2);
    \path[draw] (3-2) -- (3-3);
    \path[draw] (3-3) -- (3-4);
    \path[draw] (3-4) -- (3-5);
    \path[draw] (4-0) -- (4-1);
    \path[draw] (4-1) -- (4-2);
    \path[draw] (4-2) -- (4-3);
    \path[draw] (4-3) -- (4-4);
    \path[draw] (4-4) -- (4-5);
    \path[draw] (5-0) -- (5-1);
    \path[draw] (5-1) -- (5-2);
    \path[draw] (5-2) -- (5-3);
    \path[draw] (5-3) -- (5-4);
    \path[draw] (5-4) -- (5-5);
    \path[draw] (0-0) -- (1-0);
    \path[draw] (1-0) -- (2-0);
    \path[draw] (2-0) -- (3-0);
    \path[draw] (3-0) -- (4-0);
    \path[draw] (4-0) -- (5-0);
    \path[draw] (0-1) -- (1-1);
    \path[draw] (1-1) -- (2-1);
    \path[draw] (2-1) -- (3-1);
    \path[draw] (3-1) -- (4-1);
    \path[draw] (4-1) -- (5-1);
    \path[draw] (0-2) -- (1-2);
    \path[draw] (1-2) -- (2-2);
    \path[draw] (2-2) -- (3-2);
    \path[draw] (3-2) -- (4-2);
    \path[draw] (4-2) -- (5-2);
    \path[draw] (0-3) -- (1-3);
    \path[draw] (1-3) -- (2-3);
    \path[draw] (2-3) -- (3-3);
    \path[draw] (3-3) -- (4-3);
    \path[draw] (4-3) -- (5-3);
    \path[draw] (0-4) -- (1-4);
    \path[draw] (1-4) -- (2-4);
    \path[draw] (2-4) -- (3-4);
    \path[draw] (3-4) -- (4-4);
    \path[draw] (4-4) -- (5-4);
    \path[draw] (0-5) -- (1-5);
    \path[draw] (1-5) -- (2-5);
    \path[draw] (2-5) -- (3-5);
    \path[draw] (3-5) -- (4-5);
    \path[draw] (4-5) -- (5-5);
    \end{scope}
    
    \begin{scope}[xshift=8cm]
    \node[inner sep=\edgesep, mybox] (1-0) at (1,0) {};
    \node[inner sep=\edgesep, mycirc] (2-0) at (2,0) {};
    \node[inner sep=\edgesep, mybox] (3-0) at (3,0) {};
    \node[inner sep=\edgesep, mybox] (5-0) at (5,0) {};
    \node[inner sep=\edgesep, mybox] (0-1) at (0,1) {};
    \node[inner sep=\edgesep, mybox] (2-1) at (2,1) {};
    \node[inner sep=\edgesep, mycirc] (3-1) at (3,1) {};
    \node[inner sep=\edgesep, mybox] (4-1) at (4,1) {};
    \node[inner sep=\edgesep, mycirc] (0-2) at (0,2) {};
    \node[inner sep=\edgesep, mybox] (1-2) at (1,2) {};
    \node[inner sep=\edgesep, mybox] (3-2) at (3,2) {};
    \node[inner sep=\edgesep, mycirc] (4-2) at (4,2) {};
    \node[inner sep=\edgesep, mybox] (5-2) at (5,2) {};
    \node[inner sep=\edgesep, mybox] (0-3) at (0,3) {};
    \node[inner sep=\edgesep, mycirc] (1-3) at (1,3) {};
    \node[inner sep=\edgesep, mybox] (2-3) at (2,3) {};
    \node[inner sep=\edgesep, mybox] (4-3) at (4,3) {};
    \node[inner sep=\edgesep, mycirc] (5-3) at (5,3) {};
    \node[inner sep=\edgesep, mybox] (1-4) at (1,4) {};
    \node[inner sep=\edgesep, mycirc] (2-4) at (2,4) {};
    \node[inner sep=\edgesep, mybox] (3-4) at (3,4) {};
    \node[inner sep=\edgesep, mybox] (5-4) at (5,4) {};
    \node[inner sep=\edgesep, mybox] (0-5) at (0,5) {};
    \node[inner sep=\edgesep, mybox] (2-5) at (2,5) {};
    \node[inner sep=\edgesep, mycirc] (3-5) at (3,5) {};
    \node[inner sep=\edgesep, mybox] (4-5) at (4,5) {};
    \node at (2.5,-1) {(b)};
    \path[draw] (0-1) -- (0-2);
    \path[draw] (0-2) -- (0-3);
    \path[draw] (1-2) -- (1-3);
    \path[draw] (1-3) -- (1-4);
    \path[draw] (2-0) -- (2-1);
    \path[draw] (2-3) -- (2-4);
    \path[draw] (2-4) -- (2-5);
    \path[draw] (3-0) -- (3-1);
    \path[draw] (3-1) -- (3-2);
    \path[draw] (3-4) -- (3-5);
    \path[draw] (4-1) -- (4-2);
    \path[draw] (4-2) -- (4-3);
    \path[draw] (5-2) -- (5-3);
    \path[draw] (5-3) -- (5-4);
    \path[draw] (1-0) -- (2-0);
    \path[draw] (2-0) -- (3-0);
    \path[draw] (2-1) -- (3-1);
    \path[draw] (3-1) -- (4-1);
    \path[draw] (0-2) -- (1-2);
    \path[draw] (3-2) -- (4-2);
    \path[draw] (4-2) -- (5-2);
    \path[draw] (0-3) -- (1-3);
    \path[draw] (1-3) -- (2-3);
    \path[draw] (4-3) -- (5-3);
    \path[draw] (1-4) -- (2-4);
    \path[draw] (2-4) -- (3-4);
    \path[draw] (2-5) -- (3-5);
    \path[draw] (3-5) -- (4-5);
    \end{scope}
    
    \begin{scope}[xshift=16cm]
    \node[inner sep=\edgesep, mybullet] (0-0) at (0,0) {};
    \node[inner sep=\edgesep, mybox] (1-0) at (1,0) {};
    \node[inner sep=\edgesep, mybox] (3-0) at (3,0) {};
    \node[inner sep=\edgesep, mybullet] (4-0) at (4,0) {};
    \node[inner sep=\edgesep, mybox] (5-0) at (5,0) {};
    \node[inner sep=\edgesep, mybox] (0-1) at (0,1) {};
    \node[inner sep=\edgesep, mybullet] (1-1) at (1,1) {};
    \node[inner sep=\edgesep, mybox] (2-1) at (2,1) {};
    \node[inner sep=\edgesep, mybox] (4-1) at (4,1) {};
    \node[inner sep=\edgesep, mybullet] (5-1) at (5,1) {};
    \node[inner sep=\edgesep, mybox] (1-2) at (1,2) {};
    \node[inner sep=\edgesep, mybullet] (2-2) at (2,2) {};
    \node[inner sep=\edgesep, mybox] (3-2) at (3,2) {};
    \node[inner sep=\edgesep, mybox] (5-2) at (5,2) {};
    \node[inner sep=\edgesep, mybox] (0-3) at (0,3) {};
    \node[inner sep=\edgesep, mybox] (2-3) at (2,3) {};
    \node[inner sep=\edgesep, mybullet] (3-3) at (3,3) {};
    \node[inner sep=\edgesep, mybox] (4-3) at (4,3) {};
    \node[inner sep=\edgesep, mybullet] (0-4) at (0,4) {};
    \node[inner sep=\edgesep, mybox] (1-4) at (1,4) {};
    \node[inner sep=\edgesep, mybox] (3-4) at (3,4) {};
    \node[inner sep=\edgesep, mybullet] (4-4) at (4,4) {};
    \node[inner sep=\edgesep, mybox] (5-4) at (5,4) {};
    \node[inner sep=\edgesep, mybox] (0-5) at (0,5) {};
    \node[inner sep=\edgesep, mybullet] (1-5) at (1,5) {};
    \node[inner sep=\edgesep, mybox] (2-5) at (2,5) {};
    \node[inner sep=\edgesep, mybox] (4-5) at (4,5) {};
    \node[inner sep=\edgesep, mybullet] (5-5) at (5,5) {};
    \node at (2.5,-1) {(c)};
    \path[draw] (0-0) -- (0-1);
    \path[draw] (0-3) -- (0-4);
    \path[draw] (0-4) -- (0-5);
    \path[draw] (1-0) -- (1-1);
    \path[draw] (1-1) -- (1-2);
    \path[draw] (1-4) -- (1-5);
    \path[draw] (2-1) -- (2-2);
    \path[draw] (2-2) -- (2-3);
    \path[draw] (3-2) -- (3-3);
    \path[draw] (3-3) -- (3-4);
    \path[draw] (4-0) -- (4-1);
    \path[draw] (4-3) -- (4-4);
    \path[draw] (4-4) -- (4-5);
    \path[draw] (5-0) -- (5-1);
    \path[draw] (5-1) -- (5-2);
    \path[draw] (5-4) -- (5-5);
    \path[draw] (0-0) -- (1-0);
    \path[draw] (3-0) -- (4-0);
    \path[draw] (4-0) -- (5-0);
    \path[draw] (0-1) -- (1-1);
    \path[draw] (1-1) -- (2-1);
    \path[draw] (4-1) -- (5-1);
    \path[draw] (1-2) -- (2-2);
    \path[draw] (2-2) -- (3-2);
    \path[draw] (2-3) -- (3-3);
    \path[draw] (3-3) -- (4-3);
    \path[draw] (0-4) -- (1-4);
    \path[draw] (3-4) -- (4-4);
    \path[draw] (4-4) -- (5-4);
    \path[draw] (0-5) -- (1-5);
    \path[draw] (1-5) -- (2-5);
    \path[draw] (4-5) -- (5-5);
    \end{scope}
    
  \end{tikzpicture}

\caption{Incidence graph of an instance of \QSat{2} (a).
  Universal variables are
  drawn with black round shapes, existential variables with gray
  round shapes, and terms are drawn with square shapes.
  Both restricting to the the existential variables (b)
  and to the universal variables (c)
  significantly decreases the treewidth of the incidence graph.}
 \label{fig:grid}
\end{figure}

The following parameterized decision problems
are variants of \QSat{2},
where the treewidth of the incidence graph
with respect to certain subsets of variables is bounded.\\

\boxedprob{
  $\EAtwSat{}$
  
  \emph{Instance:} A quantified Boolean
  formula~$\varphi = \exists X. \forall Y. \psi$,
  with~$\psi$ in \DNF{},
  and an tree decomposition~$(\mathcal{T},(B_t)_{t \in T})$
  of~$\mtext{IG}(X \cup Y, \psi)$
  of width~$k$.

  \emph{Parameter:} $k$.

  \emph{Question:} Is~$\varphi$ satisfiable?
}{\textbf{Complexity:} fixed-parameter tractable
\cite{Chen04,FederKolaitis06}.}

\boxedprob{
  $\EAtwESat{}$
  
  \emph{Instance:} A quantified Boolean
  formula~$\varphi = \exists X. \forall Y. \psi$,
  with~$\psi$ in \DNF{},
  and an tree decomposition~$(\mathcal{T},(B_t)_{t \in T})$
  of~$\mtext{IG}(X, \psi)$
  of width~$k$.

  \emph{Parameter:} $k$.

  \emph{Question:} Is~$\varphi$ satisfiable?
}{\textbf{Complexity:} \para{\SigmaP{2}}-complete
(Proposition~\ref{prop:qsat2-twE-fpt}).}

\boxedprob{
  $\EAtwASat{}$
  
  \emph{Instance:} A quantified Boolean
  formula~$\varphi = \exists X. \forall Y. \psi$,
  with~$\psi$ in \DNF{},
  and an tree decomposition~$(\mathcal{T},(B_t)_{t \in T})$
  of~$\mtext{IG}(Y, \psi)$
  of width~$k$.

  \emph{Parameter:} $k$.

  \emph{Question:} Is~$\varphi$ satisfiable?
}{\textbf{Complexity:} \para{\NP}-complete
(Proposition~\ref{prop:qsat2-twA-fpt}).}

\subsubsection{DNF Minimization}

Let~$\varphi$ be a propositional formula in \DNF{}.
We say that a set~$C$ of literals is an \emph{implicant of~$\varphi$}
if all assignments that satisfy~$\bigwedge_{l \in C} l$
also satisfy~$\varphi$.
Moreover, we say that a DNF
formula~$\varphi'$ is a \emph{term-wise subformula}
of~$\varphi'$ if for all terms~$t' \in \varphi'$
there exists a term~$t \in \varphi$ such that~$t' \subseteq t$.
The following parameterized problems
are natural parameterizations of problems
shown to be \SigmaP{2}-complete by Umans~\cite{Umans00}.\\

\boxedprob{
  \SSIC{}
  
  \emph{Instance:} A DNF formula~$\varphi$,
  an implicant~$C$ of~$\varphi$, and an integer~$k$.

  \emph{Parameter:} $k$.

  \emph{Question:} Does there exists an implicant~$C' \subseteq C$
  of~$\varphi$ of size~$k$?
}{\textbf{Complexity:} \EkA{}-complete
(Propositions~\ref{prop:smallshortestimplicant-hardness}~and~\ref{prop:smallshortestimplicant-membership}).}

\boxedprob{
  \LSIC{}
  
  \emph{Instance:} A DNF formula~$\varphi$,
  an implicant~$C$ of~$\varphi$ of size~$n$, and an integer~$k$.

  \emph{Parameter:} $k$.

  \emph{Question:} Does there exists an implicant~$C' \subseteq C$
  of~$\varphi$ of size~$n-k$?
}{\textbf{Complexity:} \EkA{}-complete
(Propositions~\ref{prop:largeshortestimplicant-hardness}~and~\ref{prop:largeshortestimplicant-membership}).}

\boxedprob{
  \LargeMinDNF{}
  
  \emph{Instance:} A \DNF{} formula~$\varphi$
  of size~$n$,
  and an integer~$k$.

  \emph{Parameter:} $k$.

  \emph{Question:} Does there exist a term-wise subformula~$\varphi'$
  of~$\varphi$ of size~$n-k$ such
  that~$\varphi \equiv \varphi'$?
}{\textbf{Complexity:} \EkA{}-complete
(Propositions~\ref{prop:largemindnf-hardness}~and~\ref{prop:largemindnf-membership}).}

\boxedprob{
  \SmallMinDNF{}
  
  \emph{Instance:} A \DNF{} formula~$\varphi$
  of size~$n$,
  and an integer~$k$.

  \emph{Parameter:} $k$.

  \emph{Question:} Does there exist an \DNF{}
  formula~$\varphi'$ of size~$k$,
  such that~$\varphi \equiv \varphi'$?
}{\textbf{Complexity:} \para{\co\NP}-hard
(Proposition~\ref{prop:smallmindnf-hardness}),
solvable in fpt-time using~$k+1$ many SAT calls
(Proposition~\ref{prop:smallmindnf-membership2}),
and in \EkA{} (Proposition~\ref{prop:smallmindnf-membership1}).}

\subsection{Knowledge Representation and Reasoning Problems}

\subsubsection{Disjunctive Answer Set Programming}

The following problems from the setting of disjunctive answer set programming
are based on the notions of disjunctive logic programs
and answer sets for such programs.
For definitions of these and other notions
that are used in the problem statements below,
we refer to Section~\ref{sec:introduction}.\\

\boxedprob{
  \ASPcons\ContAtoms{}
  
  \emph{Instance:} A disjunctive logic program~$P$.

  \emph{Parameter:} The number of contingent atoms of~$P$.

  \emph{Question:} Does~$P$ have an answer set?
}{\textbf{Complexity:} \para{\co\NP}-complete
(Proposition~\ref{prop:asp-contatoms-completeness}).}

\boxedprob{
  \ASPcons\ContRules{}
  
  \emph{Instance:} A disjunctive logic program~$P$.

  \emph{Parameter:} The number of contingent rules of~$P$.

  \emph{Question:} Does~$P$ have an answer set?
}{\textbf{Complexity:} \EkA{}-complete
(Theorem~\ref{thm:asp-contrules-completeness}).}

\boxedprob{
  \ASPcons\DisjRules{}
  
  \emph{Instance:} A disjunctive logic program~$P$.

  \emph{Parameter:} The number of disjunctive rules of~$P$.

  \emph{Question:} Does~$P$ have an answer set?
}{\textbf{Complexity:} \EAkW{P}-complete (Theorem~\ref{thm:asp-disjrules}).}

\boxedprob{
  \ASPcons\NonDualNormalRules{}
  
  \emph{Instance:} A disjunctive logic program~$P$.

  \emph{Parameter:} The number of non-dual-normal rules of~$P$.

  \emph{Question:} Does~$P$ have an answer set?
}{\textbf{Complexity:} \EAkW{P}-complete (Theorem~\ref{thm:asp-nondualnormalrules}).}

\boxedprob{
  \ASPcons\SNbd{}
  
  \emph{Instance:} A disjunctive logic program~$P$.

  \emph{Parameter:} The size of the smallest normality-backdoor for~$P$.

  \emph{Question:} Does~$P$ have an answer set?
}{\textbf{Complexity:} \para{\NP}-complete
\cite{FichteSzeider13}.}

\boxedprob{
  \ASPcons\AtomOcc{}
  
  \emph{Instance:} A disjunctive logic program~$P$.

  \emph{Parameter:} The maximum number of times that any atom occurs in~$P$.

  \emph{Question:} Does~$P$ have an answer set?
}{\textbf{Complexity:} \para{\SigmaP{2}}-complete
(Corollary~\ref{cor:asp-varocc-completeness}).}

\subsubsection{Robust Constraint Satisfaction}

The following reasoning problem originates in the domain
of knowledge representation.
We consider the class of robust constraint satisfaction problems,
introduced recently by
Gottlob~\cite{Gottlob12} and
Abramsky, Gottlob and Kolaitis~\cite{AbramskyGottlobKolaitis13}.
These problems are concerned with the question of whether every partial assignment
of a particular size can be extended to a full solution,
in the setting of constraint satisfaction problems.

A \emph{CSP instance}~$N$
is a triple~$(X,D,C)$, where~$X$ is a finite set of \emph{variables},
the \emph{domain}~$D$ is a finite set of \emph{values}, and~$C$ is a finite
set of \emph{constraints}. Each constraint~$c \in C$
is a pair~$(S,R)$, where~$S = \Var{c}$, the \emph{constraint scope},
is a finite sequence of distinct variables from~$X$,
and~$R$, the \emph{constraint relation}, is a relation over~$D$
whose arity matches the length of~$S$,
i.e.,~$R \subseteq D^r$ where~$r$ is the length of~$S$.

Let~$N = (X,D,C)$ be a CSP instance. A \emph{partial instantiation}
of~$N$ is a mapping~$\alpha : X' \rightarrow D$
defined on some subset~$X' \subseteq X$.
We say that~$\alpha$ \emph{satisfies} a
constraint~$c = ((x_1,\dotsc,x_r),R) \in C$
if~$\Var{c} \subseteq X'$ and~$(\alpha(x_1),\dotsc,\alpha(x_r)) \in R$.
If~$\alpha$ satisfies all constraints of~$N$ then it is a \emph{solution} of~$N$.
We say that~$\alpha$ \emph{violates} a
constraint~$c = ((x_1,\dotsc,x_r),R) \in C$
if there is no extension~$\beta$ of~$\alpha$ defined
on~$X' \cup \Var{c}$ such that~$(\beta(x_1),\dotsc,\beta(x_r)) \in R$.

Let~$k$ be a nonnegative integer.
We say that a CSP instance~$N = (X,D,C)$ is \emph{$k$-robustly satisfiable}
if for each instantiation~$\alpha : X' \rightarrow D$
defined on some subset~$X' \subseteq X$ of~$k$ many variables
(i.e.,~$\Card{X'} = k$) that does not violate any constraint in~$C$,
it holds that~$\alpha$ can be extended to
a solution for the CSP instance~$(X,D,C)$.
The following parameterized problem is \AkE{}-complete.

\boxedprob{
  \RobustCSPSat{}
  
  \emph{Instance:} A CSP instance~$(X,D,C)$,
  and an integer~$k$.

  \emph{Parameter:} $k$.

  \emph{Question:} Is~$(X,D,C)$~$k$-robustly satisfiable?
}{\textbf{Complexity:} \AkE-complete
(Theorems~\ref{thm:robustcsp-membership}~and~\ref{thm:robustcsp-hardness}).}

\subsection{Graph Problems}

\subsubsection{Clique Extensions}

Let~$G = (V,E)$ be a graph.
A clique~$C \subseteq V$ of~$G$ is a subset of vertices
that induces a complete subgraph of~$G$,
i.e.~$\SBs v,v' \SEs \in E$ for all~$v,v' \in C$ such that~$v \neq v'$.
The \W{1}-complete problem of determining whether a graph has
a clique of size~$k$
is an important problem in the W-hierarchy,
and is used in many \W{1}-hardness proofs.\\

\boxedprob{
  \SmallCliqueExt{}
  
  \emph{Instance:} A graph~$G = (V,E)$,
  a subset~$V' \subseteq V$,
  and an integer~$k$.

  \emph{Parameter:} $k$.

  \emph{Question:} Is it the case that for each
  clique~$C \subseteq V'$, there is some~$k$-clique~$D$ of~$G$
  such that~$C \cup D$ is a~$(\Card{C}+k)$-clique?
}{\textbf{Complexity:} \AEkW{1}-complete
(Propositions~\ref{prop:smallcliqueext-hardness}~and~\ref{prop:smallcliqueext-membership}).}

\subsubsection{Graph Coloring Extensions}

The following problem
related to extending colorings to the leaves of a graph
to a coloring on the entire graph,
is \PiP{2}-complete in the most general setting
\cite{AjtaiFaginStockmeyer00}.

Let~$G = (V,E)$ be a graph.
We will denote those vertices~$v$ that have
degree~$1$ by \emph{leaves}.
We call a (partial) function~$c : V \rightarrow \SBs 1,2,3 \SEs$
a \emph{$3$-coloring (of~$G$)}.
Moreover, we say that a $3$-coloring~$c$ is \emph{proper}
if~$c$ assigns a color to every vertex~$v \in V$,
and if for each edge~$e = \SBs v_1,v_2 \SEs \in E$ holds
that~$c(v_1) \neq c(v_2)$.
The problem of deciding, given a graph~$G = (V,E)$ with~$n$
many leaves and an integer~$m$,
whether any $3$-coloring that assigns a color to exactly~$m$
leaves of~$G$ (and to no other vertices)
can be extended to a proper $3$-coloring of~$G$,
is \PiP{2}-complete~\cite{AjtaiFaginStockmeyer00}.\\

\boxedprob{
  \ThreeColExt{}\Degree{}
  
  \emph{Instance:} a graph~$G = (V,E)$
    with~$n$ many leaves, and an integer~$m$.

  \emph{Parameter:} the degree of~$G$.

  \emph{Question:} can any $3$-coloring that assigns
    a color to exactly~$m$ leaves of~$G$ (and to no other vertices)
    be extended to a proper $3$-coloring of~$G$?
}{\textbf{Complexity:} \para{}\PiP{2}-complete
(Proposition~\ref{prop:3ce-degree}).}

\boxedprob{
  \ThreeColExt{}\NumLeaves{}
  
  \emph{Instance:} a graph~$G = (V,E)$
    with~$n$ many leaves, and an integer~$m$.

  \emph{Parameter:} $n$.

  \emph{Question:} can any $3$-coloring that assigns
    a color to exactly~$m$ leaves of~$G$ (and to no other vertices)
    be extended to a proper $3$-coloring of~$G$?
}{\textbf{Complexity:} \para{}\NP{}-complete
(Proposition~\ref{prop:3ce-numleaves}).}

\boxedprob{
  \ThreeColExt{}\NumColLeaves{}
  
  \emph{Instance:} a graph~$G = (V,E)$
    with~$n$ many leaves, and an integer~$m$.

  \emph{Parameter:} $m$.

  \emph{Question:} can any $3$-coloring that assigns
    a color to exactly~$m$ leaves of~$G$ (and to no other vertices)
    be extended to a proper $3$-coloring of~$G$?
}{\textbf{Complexity:} \AkE{}-complete
(Proposition~\ref{prop:3ce-numcolleaves}).}

\boxedprob{
  \ThreeColExt{}\NumUncolLeaves{}
  
  \emph{Instance:} a graph~$G = (V,E)$
    with~$n$ many leaves, and an integer~$m$.

  \emph{Parameter:} $n-m$.

  \emph{Question:} can any $3$-coloring that assigns
    a color to exactly~$m$ leaves of~$G$ (and to no other vertices)
    be extended to a proper $3$-coloring of~$G$?
}{\textbf{Complexity:} \para{}\PiP{2}-complete
(Proposition~\ref{prop:3ce-numuncolleaves}).}

\subsection{Other Problems}

\subsubsection{First-order Model Checking}
\label{sec:app-fomc}

First-order model checking is at the basis of a well-known hardness
theory in parameterized complexity theory~\cite{FlumGrohe06}.
The following problem, also based on first-order model checking,
offers another characterization of the parameterized complexity class \EkA{}.
For a formal definition of first-order model checking we refer to
Section~\ref{sec:mc}.

\boxedprob{
  \EkAMC{}
  
  \emph{Instance:} A first-order logic
  sentence~$\varphi = \exists x_1,\dotsc,x_k. \forall y_1,\dotsc,y_n. \psi$
  over a vocabulary~$\tau$,
  where~$\psi$ is quantifier-free,
  and a finite $\tau$-structure~$\AAA$.

  \emph{Parameter:} $k$.

  \emph{Question:} Is it the case that~$\AAA \models \varphi$?
}{\textbf{Complexity:} \EkA-complete
(Theorem~\ref{thm:modelchecking-completeness}).}

\subsubsection{Bounded Model Checking}
\label{sec:app-bmc}

The following problem is concerned with the problem
of verifying whether a linear temporal logic formula
is satisfied on all paths in a Kripke structure.
This problem is of importance in the area of software and hardware
verification~\cite{Biere09}.
\emph{Linear temporal logic} (LTL) is a modal temporal logic
where one can encode properties related to the future of paths.
\emph{LTL formulas} are defined recursively as follows:
propositional variables and their negations are in LTL;
then, if~$\varphi_1,\varphi_2 \in \mtext{LTL}$,
then so are~$\varphi_1 \vee \varphi_2$,~$\mtext{F}\varphi_1$
(Future),~$\mtext{X}\varphi_1$ (neXt),~$\varphi_1\mtext{U}\varphi_2$
($\varphi_1$~Until~$\varphi_2$).
(Further temporal operators that are considered in the literature
can be defined in terms of the operators~$\mtext{X}$ and~$\mtext{U}$.)

The semantics of LTL is defined along paths of Kripke structures.
A \emph{Kripke structure} is a tuple~$K = (S,I,T,L)$ such that
(i)~$S$ is a set of states, where states are defined by valuations
to a set~$V$ of propositional variables,
(ii)~$I \subseteq S$ is a nonempty set of initial states,
(iii)~$T \subseteq S \times S$ is the transition relation
and (iv)~$L : S \rightarrow 2^V$ is the labeling function.
The initial states~$I$ and the transition relation~$T$ are given as
functions in terms of~$S$.
A path~$\pi$ of~$K$ is an infinite sequence~$(s_0,s_1,s_2,\dotsc)$ of states,
where~$s_i \in S$ and~$T(s_i,s_{i+1})$ for all~$i \in \mathbb{N}$.
A path is initialized if~$s_0 \in I$.
We let~$\pi(i) = s_i$ denote the $i$-th state of~$\pi$.
A suffix of a path is defined as~$\pi^i = (s_i,s_{i+1},\dotsc)$.
We give the standard semantics of LTL formulas,
defined recursively over the formula structure.
We closely follow the definitions as given by Biere~\cite{Biere09}.
In what cases an LTL formula~$\varphi$ holds along a path~$\pi^{i}$,
written~$\pi^{i} \models \varphi$, is specified by the following conditions:
\[ \begin{array}{l c l l c l}
  \pi^{i} \models v \in V &\mtext{iff}& v \in L(\pi(i)), &
  \pi^{i} \models \neg v &\mtext{iff}& v \not\in L(\pi(i)), \\
  \pi^{i} \models \varphi_1 \vee \varphi_2 &\mtext{iff}&
    \pi^{i} \models \varphi_1 \mtext{ or } \pi^{i} \models \varphi_2, &
  \pi^{i} \models \mtext{X} \varphi &\mtext{iff}& \pi^{i+1} \models \varphi, \\
  \pi^{i} \models \mtext{F} \varphi &\mtext{iff}&
    \mtext{for some $j \in \mathbb{N}$, } \pi^{i+j} \models \varphi,\qquad\qquad &
  \pi^{i} \models \varphi_1 \mtext{U} \varphi_2 &\mtext{iff}& 
    \mtext{for some $j \in \mathbb{N}$, } \pi^{i+j} \models \varphi_2 \mtext{ and } \\
  &&&&&\pi^{\ell} \models \varphi_1 \mtext{ for all $i \leq \ell < i+j$.} \\
\end{array} \]
Then, an LTL formula~$\varphi$ holds in a Kripke structure~$K$
if and only if~$\pi \models \varphi$ for all initialized paths~$\pi$ of~$K$.
Related to the model checking problem is the question whether
a witness exists: a formula~$\varphi$ has a witness in~$K$
if there is an initialized path~$\pi$ of~$K$ with~$\pi \models \varphi$.

The idea of \emph{bounded model checking} is to consider only
those paths that can be represented by a prefix of length at most~$k$,
and prefixes of length~$k$.
Observe that some infinite paths can be represented by a finite prefix
with a ``loop'': an infinite path is a~$(k,l)$-lasso
if~$\pi(k+1+j) = \pi(l+j)$, for all~$j \in \mathbb{N}$.
In fact, the search for witnesses can be restricted to lassos
if~$K$ is finite.
This leads to the following bounded semantics.
In what cases an LTL formula~$\varphi$ holds
along a suffix~$\pi^i$ of a $(k,l)$-lasso~$\pi$ in the bounded semantics,
written~$\pi^i \models_k \varphi$,
is specified by the following conditions:
\[ \begin{array}{l c l}
  \pi^{i} \models_k \mtext{X} \varphi &\mtext{iff}& 
    \begin{dcases*}
      \pi^{i+1} \models_k \varphi & if $i < k$, \\
      \pi^{l} \models_k \varphi & if $i = k$, \\
    \end{dcases*} \\
  \pi^{i} \models_k \mtext{F} \varphi &\mtext{iff}&
    \mtext{for some $j \in \SBs \min(i,l), \dotsc, k \SEs$, } \pi^{j} \models_k \varphi, \\
  \pi^{i} \models_k \varphi_1 \mtext{U} \varphi_2 &\mtext{iff}& 
    \mtext{for some $j \in \SBs \min(i,l), \dotsc, k \SEs$, }
    \pi^{j} \models_k \varphi_2, \\
    &&\mtext{and } \begin{dcases*}
      \pi^{\ell} \models_k \varphi_1 \mtext{ for all $i \leq \ell < k$ and all $l \leq \ell < j$}
        & if $j < i$, \\
      \pi^{\ell} \models_k \varphi_1 \mtext{ for all $l \leq \ell < j$}
        & if $j \geq i$. \\
    \end{dcases*} \\
\end{array} \]
In the case where~$\pi$ is not a $(k,l)$-lasso for any~$l$,
the bounded semantics only gives an approximation.
In what cases an LTL formula~$\varphi$ holds
along a suffix~$\pi^i$ of a path~$\pi$ that is not a $(k,l)$-lasso
for any~$l$, written~$\pi^i \models_k \varphi$,
is specified by the following conditions:
\[ \begin{array}{l c l}
  \pi^{i} \models_k \mtext{X} \varphi &\mtext{iff}& 
    \pi^{i+1} \models_k \varphi \mtext{ and } i < k, \\
  \pi^{i} \models_k \mtext{F} \varphi &\mtext{iff}&
    \mtext{for some $j \in \SBs i, \dotsc, k \SEs$, } \pi^{j} \models_k \varphi, \\
  \pi^{i} \models_k \varphi_1 \mtext{U} \varphi_2 &\mtext{iff}& 
    \mtext{for some $j \in \SBs i, \dotsc, k \SEs$, }
    \pi^{j} \models_k \varphi_2, \\
    &&\mtext{and } \pi^{\ell} \models_k \varphi_1 \mtext{ for all $i \leq \ell < j$.} \\
\end{array} \]
Note that~$\pi \models_k \varphi$ implies~$\pi \models \varphi$
for all paths~$\pi$.
However, it might be the case that~$\pi \models \varphi$
but not~$\pi \models_k \varphi$.

For a detailed definition and discussion of Kripke structures
and the syntax and semantics of LTL we refer to other sources
\cite{BaierKatoen08,ClarkeGrumbergPeled99}.
For a detailed definition of the bounded semantics for LTL formulas,
we refer to the bounded model checking literature
\cite{Biere09,BiereCimattiClarkeZhu99}.
The following problem is central to bounded model checking.
The unparameterized problem based on this problem
is \PSPACE{}-complete
(cf.~\cite{BaierKatoen08,ClarkeKroeningOuaknineStrichman04}).\\

\boxedprob{
  \BMCwitness{}
  
  \emph{Instance:} An LTL formula~$\varphi$,
  a Kripke structure~$K$,
  and an integer~$k \geq 1$.

  \emph{Parameter:} $k$.

  \emph{Question:} Is there some path~$\pi$ of~$K$ such that~$\pi \models_{k} \varphi$?
}{\textbf{Complexity:} in \para{\NP}~\cite{BiereCimattiClarkeZhu99}.}

\subsubsection{Quantified Fagin Definability}

The W-hierarchy can also be defined by means of
Fagin-definable parameterized problems~\cite{FlumGrohe06},
which are based on Fagin's characterization of \NP{}.
We provide an additional characterization of the class \AkE{}
by means of some parameterized problems
that are quantified analogues of Fagin-defined problems.

Let~$\tau$ be an arbitrary vocabulary,
and let~$\tau' \subseteq \tau$ be a subvocabulary of~$\tau$.
We say that a~$\tau$-structure~$\mathcal{A}$ \emph{extends}
a~$\tau'$-structure~$\mathcal{B}$ if
(i)~$\mathcal{A}$ and~$\mathcal{B}$ have the same domain, and
(ii)~$\mathcal{A}$ and~$\mathcal{B}$ coincide on the interpretation
of all relational symbols in~$\tau'$,
i.e.~$R^{\mathcal{A}} = R^{\mathcal{B}}$ for all~$R \in \tau'$.
We say that~\emph{$\mathcal{A}$ extends~$\mathcal{B}$
with weight~$k$} if~$\sum_{R \in \tau \backslash \tau'}
\Card{R^{\mathcal{A}}} = k$.
Let~$\varphi$ be a first-order formula over~$\tau$
with a free relation variable~$X$ of arity~$s$.

We let~$\Pi_2$ denote the class of all first-order formulas
of the form~$\forall y_1,\dotsc,y_n. \exists x_1,\dotsc,x_m. \psi$,
where~$\psi$ is quantifier-free.
Let~$\varphi(X)$ be a first-order formula over~$\tau$,
with a free relation variable~$X$ with arity~$s$.\\

\boxedprob{
  $\AkEFDform{(\tau,\tau')}{\varphi}$
  
  \emph{Instance:} A $\tau'$-structure~$\mathcal{B}$, and an integer~$k$.

  \emph{Parameter:} $k$.

  \emph{Question:} Is it the case that for
  each~$\tau$-structure~$\mathcal{A}$ extending~$\mathcal{B}$
  with weight~$k$,
  there exists some relation~$S \subseteq A^s$
  such that~$\mathcal{A} \models \varphi(S)$?
}{\textbf{Complexity:}\\
in \AkE{} for each~$\varphi(X)$,~$\tau'$ and~$\tau$
(Proposition~\ref{prop:ake-fd-membership});\\
\AkE{}-hard for some~$\varphi(X) \in \Pi_2$,~$\tau'$ and~$\tau$
(Proposition~\ref{prop:ake-fd-hardness}).}

\boxedprob{
  $\AEkFDform{(\tau,\tau')}{\varphi}$
  
  \emph{Instance:} A $\tau'$-structure~$\mathcal{B}$, and an integer~$k$.

  \emph{Parameter:} $k$.

  \emph{Question:} Is it the case that for
  each~$\tau$-structure~$\mathcal{A}$ extending~$\mathcal{B}$,
  there exists some relation~$S \subseteq A^s$
  with~$\Card{S} = k$
  such that~$\mathcal{A} \models \varphi(S)$?
}{\textbf{Complexity:}
\AEkW{1}-hard for some~$\varphi(X) \in \Pi_2$
(Proposition~\ref{prop:aek-fd-hardness})}

\paragraph{Characterization of \AkE{} by means of quantified Fagin definability}
\noindent We let~$T$ denote the set of all relational vocabularies,
and for any~$\tau \in T$ we let~$\mtext{FO}^{X}_{\tau}$
denote the set of all first-order formulas over the vocabulary~$\tau$
with a free relation variable~$X$.
We then get the following characterization of \AkE{}:
\[ \AkE{} = 
\fptclosure{ \SB \AkEFDform{(\tau',\tau)}{\varphi} \SM \tau \in T,
\tau' \subseteq \tau, \varphi \in \mtext{FO}^{X}_{\tau} \SE }. \]

\pagebreak{}
\section{Additional Proofs}
\label{sec:appendix-b}

\setcounter{theorem}{0}
\renewcommand{\thetheorem}{\Alph{section}.\arabic{theorem}}
\renewcommand{\theHtheorem}{\Alph{section}.\arabic{theorem}}

\subsection{Proofs for Section~\ref{sec:atm-char}}
\label{sec:app-atm-char}
We give detailed proofs of
the Propositions~\ref{prop:tm-char-1}--\ref{prop:tm-char-4}
that were used in the proof of
Theorems~\ref{thm:atm-char1}~and~\ref{thm:atm-char2}.
We first prove the following technical lemma.

\begin{lemma}
\label{lem:eka-tmhalt-exact}
Let~$\mathbb{M}$ be an \EA{}-machine with~$m$ tapes
and let~$k,t \in \mathbb{N}$.
We can construct an \EA{}-machine~$\mathbb{M}'$ with~$m$ tapes
(in time polynomial in~$\Card{(\mathbb{M},k,t)}$) such that
the following are equivalent:
\begin{itemize}
  \item there is an accepting run~$\rho$ of~$\mathbb{M}'$ with input~$\epsilon$
    and each computation path in~$\rho$ contains \emph{exactly}~$k$ existential configurations
    and \emph{exactly}~$t$ universal configurations
  \item $\mathbb{M}$ halts on~$\epsilon$ with existential cost~$k$
    and universal cost~$t$.
\end{itemize}
\end{lemma}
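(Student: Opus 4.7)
The plan is to construct $\mathbb{M}'$ by augmenting $\mathbb{M}$ with finite-state counters tracking the number of existential and universal configurations visited along the current computation path, and by inserting deterministic padding configurations to bring these counts up to exactly $k$ and $t$ before every accepting halt.

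First I would expand the state space of $\mathbb{M}$ by a pair $(e, u) \in \{0, \ldots, k+1\} \times \{0, \ldots, t+1\}$ encoding the counter values, and redefine the transitions so that each step into an existential state increments $e$ and each step into a universal state increments $u$. If either counter would exceed its target, $\mathbb{M}'$ is routed to a non-accepting halting sink, ensuring that no path with more than $k$ existential or $t$ universal configurations can occur in an accepting run.

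Second I would replace each accepting halt of $\mathbb{M}$ by a padding phase. Using a one-step look-ahead on $\mathbb{M}$'s transition relation, $\mathbb{M}'$ detects when $\mathbb{M}$ is about to make its single existential-to-universal alternation and inserts $(k - e)$ deterministic existential configurations immediately before this alternation. When $\mathbb{M}$ would enter an accepting halting configuration from a universal state, $\mathbb{M}'$ instead inserts $(t - u)$ deterministic universal configurations (each with a unique outgoing transition) before finally halting in an accepting state. In the corner case where $\mathbb{M}$'s accepting run contains a path that never alternates but $t > 0$, the padding phase itself introduces the one permitted alternation: first pad existentially to reach count $k$, then transition to universal and pad universally to reach count $t$. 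Since every padding configuration has a single outgoing transition, no additional nondeterminism is introduced and the machine remains $2$-alternating.

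Correctness then follows by a direct bijection between accepting runs: any accepting run of $\mathbb{M}$ with existential cost at most $k$ and universal cost at most $t$ lifts to an accepting run of $\mathbb{M}'$ in which every path is padded to exactly $k$ existential and $t$ universal configurations, and conversely, stripping the padding from an accepting run of $\mathbb{M}'$ yields one of $\mathbb{M}$ with the required bounds. The main obstacle is ensuring that the padding is inserted in exactly the right places so that each computation path of $\mathbb{M}'$ still contains at most one alternation (at $\mathbb{M}$'s natural alternation point or, in the pure-existential corner case, at the boundary between the two padding phases). A secondary point to verify is that the state-space blow-up by a factor of $(k+1)(t+1)$ is polynomial in $\Card{(\mathbb{M}, k, t)}$, which is immediate under the standard assumption that $k$ and $t$ are given in unary (as is the case in all applications of this lemma in the characterization of \EkA{}).
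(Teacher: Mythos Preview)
Your proposal is correct and in fact more careful than the paper's own argument on a point that matters. The paper attaches a single step-counter $i\in\{1,\ldots,k+t\}$ to every state, adds a ``stay'' transition $(s_i,\overline{a},s_{i+1},\overline{a},\mathbf{S}^m)$ uniformly for \emph{every} state~$s$, sets $F'=\{f_{k+t}:f\in F\}$, and then argues path-by-path that each accepting computation path of~$\mathbb{M}$ lifts to a padded path of~$\mathbb{M}'$. Your version differs in that you insert universal padding only at accepting halting configurations, whereas the paper's stay transitions are present at non-halting universal configurations as well. This distinction is not innocuous: at a non-halting universal configuration $C^i$, the run of~$\mathbb{M}'$ is obliged to follow the stay branch in addition to the real successors, and that branch eventually terminates at the leaf $C^{k+t}$, which is accepting only if the state of~$C$ already lies in~$F$. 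Your placement of universal padding solely at accepting leaves of~$\mathbb{M}$ avoids this issue; the two-counter bookkeeping, the look-ahead at the alternation point, the pure-existential corner case, and the polynomial blow-up under unary $k,t$ are all routine.
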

\begin{proof}
Let~$\mathbb{M} = (S_{\exists},S_{\forall},\Sigma,\Delta,s_0,F)$ be an \EA{}-machine
with~$m$ tapes.
Now construct~$\mathbb{M} = (S'_{\exists},S'_{\forall},\Sigma,\Delta',s_0,F')$
as follows:
\[ \begin{array}{r l}
  S'_{\exists} = &
    \SB s_i \SM s \in S_{\exists}, 1 \leq i \leq k+t \SE, \\
  S'_{\forall} = &
    \SB s_i \SM s \in S_{\forall}, 1 \leq i \leq k+t \SE, \\  
  \Delta' = &
    \SB (s_i,\overline{a},s'_{i+1},\overline{a}',\overline{d})
    \SM (s,\overline{a},s',\overline{a}',\overline{d}) \in \Delta, 1 \leq i \leq k+t-1 \SE\ \cup \\
    & \SB (s_i,\overline{a},s_{i+1},\overline{a},\mathbf{S}^{m})
    \SM s \in S_{\exists} \cup S_{\forall}, \overline{a} \in \Sigma^{m} \SE, \mtext{ and} \\
  F' = &
    \SB f_{k+t} \SM f \in F \SE.
\end{array} \]
To see that~$\mathbb{M}'$ satisfies the required properties,
it suffices to see that for each (accepting) computation
path~$C_1 \rightarrow \dotsc \rightarrow C_{k'+t'}$
of~$\mathbb{M}$ with input~$\epsilon$
that contains existential configurations~$C_1,\dotsc,C_{k'}$
and universal configurations~$C_{k'+1},\dotsc,C_{k'+t'}$
for~$1 \leq k' \leq k$ and~$1 \leq t' \leq t$,
it holds that
\[ C^1_1 \rightarrow \dotsc \rightarrow C^{k'}_{k'} \rightarrow
  C^{k'+1}_{k'} \rightarrow \dotsc \rightarrow C^{k}_{k'} \rightarrow
  C^{k+1}_{k'+1} \rightarrow \dotsc \rightarrow C^{k+t'}_{k'+t'} \rightarrow
  C^{k+t'+1}_{k'+t'} \rightarrow \dotsc \rightarrow C^{k+t}_{k'+t'} \]
is an (accepting) computation path of~$\mathbb{M}'$ with input~$\epsilon$,
where for each~$1 \leq i \leq k+t$ and each~$1 \leq j \leq k'+t'$
we let~$C^i_j$ be the configuration~$(s_i,x_1,p_1,\dotsc,x_m,p_m)$,
where~$C_j = (s,x_1,p_1,\dotsc,x_m,p_m)$.
\end{proof}

\begin{proposition}
\label{prop:tm-char-1}
\EkATMhalt{*} \fptred{} \EkAMC{}
\end{proposition}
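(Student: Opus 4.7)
The plan is to construct, given an instance $(\mathbb{M}, k, t)$ of $\EkATMhalt{*}$, a finite relational structure $\AAA$ and a first-order sentence $\varphi = \exists x_1,\dotsc,x_k.\forall y_1,\dotsc,y_n.\psi$ such that $\AAA \models \varphi$ iff $\mathbb{M}$ halts on the empty string with existential cost $k$ and universal cost $t$. Since the number of leading existential FO variables matches $k$, the parameter is preserved. I would first apply Lemma~\ref{lem:eka-tmhalt-exact} to replace $\mathbb{M}$ by an equivalent machine $\mathbb{M}'$ in which every computation path consists of exactly $k$ existential configurations followed by exactly $t$ universal configurations, so the total number of computation steps along any branch is exactly $k+t$.

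Next I would fix the domain of $\AAA$ to contain disjoint copies of: the state set $S$, the tape alphabet $\Sigma \cup \SBs \$,\Box \SEs$, the transition set $\Delta$, the directions $\SBs \mathbf{L},\mathbf{R},\mathbf{S} \SEs$, the tape indices $\SBs 1,\dotsc,m \SEs$, the positions $\SBs 0,\dotsc,k+t+1 \SEs$, and the time steps $\SBs 0,\dotsc,k+t \SEs$. Into $\tau$ I would put relations decoding each $d \in \Delta$ into its from-state, read tuple, to-state, write tuple and direction tuple; unary relations marking existential/universal/accepting states, the initial state $s_0$, the symbols $\$$ and $\Box$; and successor relations on both positions and time steps. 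The $k$ existential FO variables $x_1,\dotsc,x_k$ will denote the transitions chosen at the $k$ existential configurations.

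The universal FO variables $y_1,\dotsc,y_n$ will consist of: $t$ variables denoting the transitions taken at the universal configurations, and for every triple (tape $i$, position $p$, time $\tau$) a variable $\sigma_{i,p,\tau}$ quantifying over tape symbols, plus, for every (tape $i$, time $\tau$), a variable $h_{i,\tau}$ for the head position and a variable $q_\tau$ for the state. The matrix $\psi$ has the shape $\chi_{\mtext{consistent}} \rightarrow \chi_{\mtext{accept}}$, where $\chi_{\mtext{consistent}}$ asserts that the initial configuration at time $0$ is blank with heads at position $1$ and state $s_0$, and that for every $\tau < k+t$ the transition $z_{\tau+1}$ (identified with $x_{\tau+1}$ if $\tau<k$, else with the corresponding universal-transition variable) is applicable at time $\tau$ and yields the state, head positions, and tape contents at time $\tau+1$ (including the condition that cells not under a head are unchanged); and $\chi_{\mtext{accept}}$ asserts that $q_{k+t}$ is accepting. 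Because universally chosen ``traces'' that violate consistency trivially satisfy the implication, the only traces that matter are the ones actually realized by the computation of $\mathbb{M}'$ under the existential choices $x_1,\dotsc,x_k$.

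The main obstacle will be making $\chi_{\mtext{consistent}}$ a \emph{quantifier-free} formula of polynomial size in $\Card{(\mathbb{M},k,t)}$. The key observation is that the step-by-step consistency check decomposes into a conjunction indexed by $(i,p,\tau)$, each conjunct being a Boolean combination of literals over the transition relation applied to $z_{\tau+1}$, $q_\tau$, $h_{i,\tau}$, $\sigma_{i,h_{i,\tau},\tau}$, $q_{\tau+1}$, $h_{i,\tau+1}$ and $\sigma_{i,p,\tau+1}$, together with the successor relations; crucially, because each $z_{\tau+1}$ is a single first-order term (one of the $x_j$ or one of the universal transition variables) the whole matrix fits in $O((k+t)^2 \cdot m \cdot |\Delta|)$ symbols and introduces no further quantifiers. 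Correctness is then direct in both directions: an accepting run of $\mathbb{M}'$ yields witnesses $x_1,\dotsc,x_k$ from the existential prefix of that run, under which every consistent universal continuation is an accepting branch; conversely, witnesses to $\AAA \models \varphi$ can be read off as a strategy for $\mathbb{M}'$'s existential player that beats every universal response, giving an accepting run with the required costs.
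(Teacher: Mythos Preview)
Your overall architecture is right, and the use of Lemma~\ref{lem:eka-tmhalt-exact} together with universally quantified ``trace'' variables for the tape contents is natural. But the shape $\chi_{\mtext{consistent}} \rightarrow \chi_{\mtext{accept}}$ for the matrix is wrong, and this is not a cosmetic issue.

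The problem is that your $\chi_{\mtext{consistent}}$ includes, for every $\tau<k$, the condition that the \emph{existentially chosen} transition $x_{\tau+1}$ is applicable at time~$\tau$. If the existential player picks some $x_1\in\Delta$ whose from-state is not $s_0$ (or whose read-tuple is not $\Box^m$), then for every assignment to the universal variables the antecedent $\chi_{\mtext{consistent}}$ fails at step~$0$, so the implication holds vacuously. Hence $\AAA\models\varphi$ regardless of whether $\mathbb{M}'$ actually accepts. Concretely, take a machine with a single existential step from $s_0$ into a universal state $s_1$ that has both an accepting and a rejecting successor: the machine does \emph{not} halt acceptingly with costs $(1,1)$, yet your formula is satisfied by choosing $x_1$ to be any transition whose from-state is $s_1$.

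The fix is that the validity of the existential prefix must appear as a \emph{conjunct} of the matrix, not inside the antecedent, so the existential player is forced to pick applicable transitions; only the validity of the \emph{universal} continuation belongs in the antecedent. But to assert ``$x_1,\dotsc,x_k$ form a legal partial run from $C_0(\epsilon)$'' as a conjunct, you need a quantifier-free formula in the $x_i$ alone (you cannot rely on the universally quantified trace variables, since for the ``wrong'' trace the conjunct would become false). This is exactly where the paper's construction does real work: it defines, by induction on $i\le k{+}1$, quantifier-free formulas $\psi_{\mtext{symbol},i}$ and $\psi_{\mtext{position},i}$ (depending only on $x_1,\dotsc,x_{i-1}$ and some ``constant'' variables) that compute the tape contents and head positions after the first $i{-}1$ transitions. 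Their size depends only on $k$, so they fit in fpt-time, and they let one write $\psi_{\exists\text{-states}}\wedge\psi_{\exists\text{-tapes}}$ as a conjunct while keeping only $(\psi_{\forall\text{-states}}\wedge\psi_{\forall\text{-tapes}})\rightarrow\psi_{\mtext{accept}}$ as the inner implication. Your proposal is missing this ingredient; once you add it (or an equivalent device that separates the two roles of consistency), the rest of your outline goes through.
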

\begin{proof}
Let~$(\mathbb{M},k,t)$ be an instance of \EkATMhalt{*},
where~$\mathbb{M} = (S_{\exists},S_{\forall},\Sigma,\Delta,s_0,F)$
is an \EA{}-machine with~$m$ tapes,
and~$k$ and~$t$ are positive integers.
We constuct in fpt-time an instance~$(\AAA,\varphi)$ of \EkAMC{},
such that~$(\mathbb{M},k,t) \in \EkATMhalt{*}$
if and only if~$(\AAA,\varphi) \in \EkAMC{}$.
By Lemma~\ref{lem:eka-tmhalt-exact}, 
it suffices to construct~$(\AAA,\varphi)$ in such a way
that~$(\AAA,\varphi) \in \EkAWSat{}$ if and only if there exists an accepting
run~$\rho$
of~$\mathbb{M}$ with input~$\epsilon$ such that each computation path
of~$\rho$
contains exactly~$k$ existential configurations and exactly~$t$ universal configurations.

We construct~$\AAA$ to be a~$\tau$-structure
with domain~$A$.
We will define the vocabulary~$\tau$ below.
The domain~$A$ of~$\AAA$ is defined as follows:
\[ A = S \cup \Sigma \cup \SBs \$,\Box \SEs \cup
  \SBs \mathbf{L},\mathbf{R},\mathbf{S} \SEs \cup
  \SBs 0,\dotsc,\max \SBs m,k+t-1 \SEs \SEs \cup T, \]
where~$T$ is the set of
tuples~$(a_1,\dotsc,a_m) \in (\Sigma \cup \SBs \$,\Box \SEs)^m$
and of
tuples~$(d_1,\dotsc,d_m) \in \SBs \mathbf{L},\mathbf{R},\mathbf{S} \SEs^{m}$
occurring in transitions of~$\Delta$.
Observe that~$\Card{A} = O(k + t + \Card{\mathbb{M}})$.

We now describe the relation symbols in~$\tau$
and their interpretation in~$\AAA$.
The vocabulary~$\tau$ contains the 5-ary relation symbol~$D$
(intended as ``transition relation''),
and the ternary relation symbol~$P$
(intended as ``projection relation''),
with the following interpretations:
\[ \begin{array}{r l}
  D^{\AAA} = &
    \Delta, \mtext{ and} \\
  P^{\AAA} = &
    \SB (j,\overline{b},b_j) \SM 1 \leq j \leq m, \overline{b} \in T,
      \overline{b} = (b_1,\dotsc,b_m) \SE.
\end{array} \]
Moreover,~$\tau$ contains the unary relation symbols
$R_{\mtext{tape}}$,
$R_{\mtext{cell}}$,
$R_{\mtext{blank}}$,
$R_{\mtext{end}}$,
$R_{\mtext{symbol}}$,
$R_{\mtext{init}}$,
$R_{\mtext{acc}}$,
$R_{\mtext{left}}$,
$R_{\mtext{right}}$,
$R_{\mtext{stay}}$,
$R_{\exists}$,
$R_{\forall}$,
$R_{i}$ for each~$1 \leq i \leq k+t-1$, and
$R_{a}$ for each~$a \in \Sigma$,
which are interpreted in~$\AAA$ as follows:
\[ \begin{array}{c}
  R^{\AAA}_{\mtext{tape}} = 
    \SBs 1,\dotsc,m \SEs, 
  R^{\AAA}_{\mtext{cell}} = 
    \SBs 1,\dotsc,k+t \SEs, 
  R^{\AAA}_{\mtext{blank}} = 
    \SBs \Box \SEs, 
  R^{\AAA}_{\mtext{end}} = 
    \SBs \$ \SEs, 
  R^{\AAA}_{\mtext{symbol}} =
    \Sigma, 
  R^{\AAA}_{\mtext{init}} = 
    \SBs s_0 \SEs, \\[5pt]
  R^{\AAA}_{\mtext{acc}} = 
    F,
  R^{\AAA}_{\mtext{left}} = 
    \SBs \textbf{L} \SEs, 
  R^{\AAA}_{\mtext{right}} = 
    \SBs \textbf{R} \SEs, 
  R^{\AAA}_{\mtext{stay}} = 
    \SBs \textbf{S} \SEs,
  R^{\AAA}_{\exists} = 
    S_{\exists},
  R^{\AAA}_{\forall} = 
    S_{\forall}, \\[5pt]
  R^{\AAA}_{i} =
    \SBs i \SEs \mtext{ for each } 1 \leq i \leq k+t-1,
  \mtext{and } R^{\AAA}_{a} = \SBs a \SEs \mtext{ for each } a \in \Sigma.
\end{array} \]
Next, we define a formula that is intended to provide
a fixed interpretation of some variables
that we can use to refer to the elements of the singleton relations of~$\AAA$:
\[ \begin{array}{r l}
  \psi_{\mtext{constants}} = &
    R_{\mtext{blank}}(z_{\Box}) \wedge
    R_{\mtext{end}}(z_{\$}) \wedge
    R_{\mtext{blank}}(z_{\Box}) \wedge
    R_{\mtext{left}}(z_{\mtext{left}}) \wedge
    R_{\mtext{right}}(z_{\mtext{right}})\ \wedge \\
    & R_{\mtext{stay}}(z_{\mtext{stay}}) \wedge
    \underset{0 \leq i \leq k+t-1}{\bigwedge} R_{i}(z_{i}) \wedge
    \underset{a \in \Sigma}{\bigwedge} R_{a}(z_{a}).
\end{array} \]

The formula~$\varphi$ that we will construct aims to express that there
exist~$k$ transitions (from existential states),
such that for any sequence of~$t-1$ many transitions (from universal states),
the entire sequence of transitions results in an accepting state.
It will contain variables~$s_i,t_i,s'_i,t'_i,d_i$, for~$1 \leq i \leq k+t-1$.

The formula~$\varphi$ will also contain variables~$p_{i,j}$ and~$q_{i,j,\ell}$,
for each~$k+1 \leq i \leq k+t$, each~$1 \leq j \leq m$
and each~$1 \leq \ell \leq k+t$.
The variables~$p_{i,j}$ will encode the position of the tape head
for tape~$j$ at the $i$-th configuration in the computation path,
and the variables~$q_{i,j,\ell}$ will encode the symbol that is at cell~$\ell$
of tape~$j$ at the $i$-th configuration in the computation path.

The position of the tape heads and the contents of the tapes for
configurations~$1$ to~$k$ in the computation path,
will not be encoded by means of variables, but by means
of the formulas~$\psi_{\mtext{symbol},i}$ and~$\psi_{\mtext{position},i}$,
which we define below.

In addition the formula~$\varphi$ will contain
variables~$z_{\Box},z_{\$},z_{\mtext{init}},z_{\mtext{left}},z_{\mtext{right}},z_{\mtext{stay}},
z_{1},\dotsc,z_{k+t},z_{a_1},\dotsc,z_{a_{\Card{\Sigma}}}$,
where~$\Sigma = \SBs a_1,\dotsc,a_{\Card{\Sigma}} \SEs$,
that we will use to refer to
elements of the singleton relations of~$\AAA$.
We define~$\varphi$ as follows:
\[ \begin{array}{r l}
  \varphi = &
    \exists s_1,t_1,s'_1,t'_1,d_1,\dotsc,s_k,t_k,s'_k,t'_k,d_k. \\
    & \forall z_{\Box},z_{\$},z_{\mtext{init}},z_{\mtext{left}},z_{\mtext{right}},z_{\mtext{stay}},
z_{1},\dotsc,z_{k+t}, z_{a_1},\dotsc,z_{a_{\Card{\Sigma}}}. \\
    & \forall s_{k+1},t_{k+1},s'_{k+1},t'_{k+1},d_{k+1},\dotsc,
    s_{k+t-1},t_{k+t-1},s'_{k+t-1},t'_{k+t-1},d_{k+t-1}. \\
    & \forall p_{k+1,1},\dotsc,p_{k+t,m}.q_{k+1,1,1},\dotsc,q_{k+t,m,k+t}. \psi, \\[5pt]
  \psi = &
    \psi_{\mtext{constants}} \rightarrow
    \left ( \psi_{\exists\mtext{-states}} \wedge
    \psi_{\exists\mtext{-tapes}} \wedge
    \left ( (\psi_{\forall\mtext{-states}} \wedge
    \psi_{\forall\mtext{-tapes}}) \rightarrow
    \psi_{\mtext{accept}} \right ) \right ), \\[5pt]
  \psi_{\exists\mtext{-states}} = &
    (s_1 = z_{\mtext{init}}) \wedge
    \underset{1 \leq i \leq k}{\bigwedge} D(s_i,t_i,s'_i,t'_i,d_i) \wedge
    \underset{1 \leq i \leq k-1}{\bigwedge} \left ( (s_{i+1} = s'_{i}) \wedge R_{\exists}(s_{i+1}) \right ), \\[15pt]
  \psi_{\forall\mtext{-states}} = &
    \underset{k+1 \leq i \leq k+t-1}{\bigwedge} D(s_i,t_i,s'_i,t'_i,d_i) \wedge
    \underset{k \leq i \leq k+t-2}{\bigwedge} \left ( (s_{i+1} = s'_{i}) \wedge R_{\forall}(s_{i+1}) \right ),
    \mtext{ and} \\[15pt]
  \psi_{\mtext{accept}} = &
    R_{\mtext{acc}}(s'_{k+t-1}),
\end{array} \]
where we define the formulas~$\psi_{\exists\mtext{-tapes}}$
and~$\psi_{\forall\mtext{-tapes}}$ below.
In order to do so, for each~$1 \leq i \leq k+1$
we define the quantifier-free formulas
\[
  \psi_{\mtext{symbol},i}(w,p,a,\overline{v}_i)
    \qquad\mtext{ and }\qquad
  \psi_{\mtext{position},i}(w,p,\overline{v}_i),
\]
with~$\overline{v}_i = s_1,t_1,s'_1,t'_1,d_1,\dotsc,
s_{i-1},t_{i-1},s'_{i-1},t'_{i-1},d_{i-1}$.
Intuitively:
\begin{itemize}
  \item $\psi_{\mtext{symbol},i}(w,p,a,\overline{v}_i)$ represents whether,
    starting with empty tapes, whenever the sequence of transitions in~$\overline{v}_i$
    has been carried out, then the $p$-th cell of the $w$-th tape contains the
    symbol~$a$; and
  \item $\psi_{\mtext{position},i}(w,p,\overline{v}_i)$ represents whether,
    starting with empty tapes, whenever the sequence of transitions in~$\overline{v}_i$
    has been carried out, then the head of the $w$-th tape is at position~$p$.
\end{itemize}
We define~$\psi_{\mtext{symbol},i}(w,p,a,\overline{v}_i)$ and
$\psi_{\mtext{position},i}(w,p,\overline{v}_i)$ simultaneously
by induction on~$i$ as follows:
\[ \begin{array}{r l}
  \psi_{\mtext{symbol},1}(w,p,a) = &
    R_{\mtext{tape}}(w) \wedge
    R_{\mtext{cell}}(p) \wedge
    (p = z_0 \rightarrow a = z_{\$}) \wedge
    (p \neq z_0 \rightarrow a = z_{\Box}), \\[5pt]
  \psi_{\mtext{position},1}(w,p) = &
    R_{\mtext{tape}}(w) \wedge
    (p = z_1), \\[5pt]
  \psi_{\mtext{symbol},i+1}(w,p,a,\overline{v}_{i+1}) = &
    R_{\mtext{tape}}(w) \wedge
    R_{\mtext{cell}}(p)\ \wedge \\
    & \left ( \left (
      \psi_{\mtext{position}}(w,p,\overline{v}_i) \wedge P(w,t'_i,x)
    \right ) \vee
    \left (
      \neg \psi_{\mtext{position}}(w,p,\overline{v}_i) \wedge
      \psi_{\mtext{symbol},i}(w,p,a,\overline{v}_{i})
    \right ) \right ), \\[5pt]
  \psi_{\mtext{position},i+1}(w,p,\overline{v}_{i+1}) = &
    R_{\mtext{tape}}(w) \wedge
    \psi_{\mtext{left},i+1}(w,p,\overline{v}_{i+1}) \wedge
    \psi_{\mtext{right},i+1}(w,p,\overline{v}_{i+1}) \wedge
    \psi_{\mtext{stay},i+1}(w,p,\overline{v}_{i+1}), \\[5pt]
  \psi_{\mtext{left},i+1}(w,p,\overline{v}_{i+1}) = &
    P(w,d_i,z_{\mtext{left}}) \wedge
    \underset{1 \leq j \leq i+1}{\bigvee}
    (\psi_{\mtext{position},i}(w,z_j,\overline{v}_i) \wedge (p = z_{j-1})), \\[5pt]
  \psi_{\mtext{right},i+1}(w,p,\overline{v}_{i+1}) = &
    P(w,d_i,z_{\mtext{right}}) \wedge
    \underset{1 \leq j \leq i+1}{\bigvee}
    (\psi_{\mtext{position},i}(w,z_j,\overline{v}_i) \wedge (p = z_{j+1})), \mtext{ and} \\[5pt]
  \psi_{\mtext{stay},i+1}(w,p,\overline{v}_{i+1}) = &
    P(w,d_i,z_{\mtext{stay}}) \wedge
    \underset{1 \leq j \leq i+1}{\bigvee}
    (\psi_{\mtext{position},i}(w,z_j,\overline{v}_i) \wedge (p = z_{j})). \\
\end{array} \]
Note that for each~$1 \leq i \leq k$,
the size of the formulas~$\psi_{\mtext{symbol},i}(w,p,a,\overline{v}_i)$
and~$\psi_{\mtext{position},i}(w,p,\overline{v}_i)$ only depends on~$k$.
We can now define~$\psi_{\exists\mtext{-tapes}}$:
\[
  \psi_{\exists\mtext{-tapes}} = \forall w. \forall p. \forall a.
    \underset{1 \leq i \leq k}{\bigwedge}
    \left (
      (\psi_{\mtext{position},i}(w,p,\overline{v}_i) \wedge
      \psi_{\mtext{symbol},i}(w,p,a,\overline{v}_i)) \rightarrow
      P(w,t_i,a)
    \right ).
\]
Intuitively, the formulas~$\psi_{\exists\mtext{-states}}$
and~$\psi_{\exists\mtext{-tapes}}$
together represent
whether the transitions specified
by~$s_i,t_i,s'_i,\allowbreak{}t'_i,d_i$, for~$1 \leq i \leq k$,
together constitute a valid (partial) computation path.

Next, we define the formula~$\psi_{\forall\mtext{-tapes}}$:
\[ \begin{array}{r l}
  \psi_{\forall\mtext{-tapes}} = &
    \psi_{\forall\mtext{-tapes-}1} \wedge
    \psi_{\forall\mtext{-tapes-}2} \wedge
    \psi_{\forall\mtext{-tapes-}3} \wedge
    \psi_{\forall\mtext{-tapes-}4} \wedge
    \psi_{\forall\mtext{-tapes-}5}, \\[5pt]
  \psi_{\forall\mtext{-tapes-}1} = &
    \underset{k+1 \leq i \leq k+t}{\bigwedge}\ 
    \underset{1 \leq j \leq m}{\bigwedge}\ 
    \left (
      R_{\mtext{cell}}(p_{i,j}) \wedge
      \bigwedge\limits_{1 \leq \ell \leq k+t}
      R_{\mtext{symbol}}(q_{i,j,\ell})
    \right ), \\[15pt]
  \psi_{\forall\mtext{-tapes-}2} = &
    \underset{1 \leq j \leq m}{\bigwedge}
    \left (
    \begin{array}{l}
      \mathop{\bigwedge\limits_{1 \leq \ell \leq k+1}}\limits_{a \in \Sigma}
      ((q_{k+1,j,\ell} = z_{a}) \leftrightarrow
      \psi_{\mtext{symbol},k+1}(z_j,z_{k+1},z_a,\overline{v}_{k+1}))\ \wedge \\
      \hfill \underset{k+2 \leq \ell \leq k+t}{\bigwedge}
      (q_{k+1,j,\ell} = z_{\Box}) \\
    \end{array}
    \right ), \\[25pt]
  \psi_{\forall\mtext{-tapes-}3} = &
    \underset{1 \leq j \leq m}{\bigwedge}\ 
    \underset{1 \leq i \leq k+1}{\bigwedge}\
    \left (
      (p_{k+1,j} = z_{i}) \leftrightarrow
      \psi_{\mtext{position},k+1}(z_{j},z_{i},\overline{v}_{k+1})
    \right ), \\[10pt]
  \psi_{\forall\mtext{-tapes-}4} = &
    \bigwedge\limits_{1 \leq j \leq m}\ 
    \underset{k+1 \leq i \leq k+t-1}{\bigwedge}\ 
    \underset{1 \leq \ell \leq k+t}{\bigwedge}
    \left (
    \begin{array}{l}
      (P(z_j,d_i,z_{\mtext{left}}) \wedge (p_{i,j} = z_{\ell})) \rightarrow (p_{i+1,j} = z_{\ell-1})\ \wedge \\
      (P(z_j,d_i,z_{\mtext{right}}) \wedge (p_{i,j} = z_{\ell})) \rightarrow (p_{i+1,j} = z_{\ell+1})\ \wedge \\
      (P(z_j,d_i,z_{\mtext{stay}}) \wedge (p_{i,j} = z_{\ell})) \rightarrow (p_{i+1,j} = z_{\ell})\ \wedge \\
    \end{array}
    \right ), \mtext{ and} \\[20pt]
  \psi_{\forall\mtext{-tapes-}5} = &
    \underset{1 \leq j \leq m}{\bigwedge}\ 
    \underset{k+1 \leq i \leq k+t-1}{\bigwedge}\ 
    \underset{1 \leq \ell \leq k+t}{\bigwedge}\ 
    \underset{a \in \Sigma}{\bigwedge}
    \left (
    \begin{array}{l}
      ((p_{i,j} \neq z_{\ell}) \rightarrow (q_{i+1,j,\ell} = q_{i,j,\ell}))\ \wedge \\
      ((p_{i,j} = z_{\ell}) \wedge P(z_{j},t_i,z_{a}) \rightarrow (q_{i,j,\ell} = z_{a}))\ \wedge \\
      ((p_{i,j} = z_{\ell}) \wedge P(z_{j},t'_i,z_{a}) \rightarrow (q_{i+1,j,\ell} = z_{a})) \\
    \end{array}
    \right ). \\
\end{array} \]
Intuitively, the formulas~$\psi_{\forall\mtext{-states}}$
and~$\psi_{\forall\mtext{-tapes}}$
together represent
whether the transitions specified by~$s_i,t_i,s'_i,\allowbreak{}t'_i,d_i$,
for~$k+1 \leq i \leq k+t-1$,
together constitute a valid (partial) computation path,
extending the computation path represented by the
transitions~$s_i,t_i,s'_i,t'_i,d_i$, for~$1 \leq i \leq k$.

It is straightforward to verify that~$\varphi$ is (logically equivalent to a formula)
of the right form, containing~$k' = 5k$ existentially quantified variables.
Also, it is now straightforward to verify that~$(\mathbb{M},k,t) \in \EkATMhalt{*}$
if and only if~$(\AAA,\varphi) \in \EkAMC{}$.
\end{proof}

\begin{proposition}
\label{prop:tm-char-2}
For any parameterized problem~$P$ that is decided by some \EkA{}-machine
with~$m$ tapes,
it holds that~$P \fptred{} \EkATMhalt{m+1}$.
\end{proposition}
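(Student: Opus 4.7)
The plan is to describe, given an \EkA{}-machine $\mathbb{M}$ with $m$ tapes deciding $P$, an fpt-reduction mapping each instance $(x,k)$ of $P$ to an instance $(\mathbb{M}',k',t)$ of \EkATMhalt{m+1}. The key observation is that, by definition, an \EkA{}-machine has at most $f(k)\log\Card{x}$ \emph{nondeterministic} existential configurations on any computation path; all remaining existential configurations are deterministic and can equivalently be modeled by universal configurations with a single successor. Hence we may try to ``pre-commit'' to all $f(k)\log\Card{x}$ nondeterministic choices in a single initial existential step of $\mathbb{M}'$, and then carry out the entire remainder of the simulation inside the universal phase of the \EA{}-machine $\mathbb{M}'$.

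Concretely, I would give $\mathbb{M}'$ the $m$ tapes of $\mathbb{M}$ plus one extra ``guess tape''. From a single initial existential state $s_0^*$ put $\Card{x}^{cf(k)}$ many outgoing transitions (for a suitable constant $c$ depending on $\mathbb{M}$), each going to a distinct \emph{universal} state $s_y$ indexed by a bit-string $y\in\SBs 0,1 \SEs^{cf(k)\log\Card{x}}$ that encodes a choice of transition at every nondeterministic existential step of $\mathbb{M}$. From $s_y$ the machine then enters a purely universal phase in which it (i)~writes $y$ on the guess tape, (ii)~writes $(x,k)$ on tape~$1$, and (iii)~simulates $\mathbb{M}$ on $(x,k)$; during (iii), every nondeterministic existential configuration of $\mathbb{M}$ is resolved by consuming the next block of bits from the guess tape, universal branching of $\mathbb{M}$ is copied directly as universal branching of $\mathbb{M}'$, and deterministic steps of $\mathbb{M}$ become universal-with-unique-successor steps of $\mathbb{M}'$. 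Setting $k'=1$ (the only existential configuration on any path of $\mathbb{M}'$ is the root $s_0^*$) and $t$ equal to an explicit polynomial in $\Card{x}$ that upper-bounds the total length of phases (i)--(iii), we obtain an instance of \EkATMhalt{m+1} whose parameter $k'=1$ is trivially bounded by a function of $k$. Since the state set, transition relation and auxiliary data needed to hard-code $y$ and $(x,k)$ only grow as $f(k)\cdot\Card{x}^{O(1)}$, the description of $\mathbb{M}'$ can be produced in fpt-time.

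Correctness then reduces to observing that accepting runs of $\mathbb{M}$ on $(x,k)$ are in one-to-one correspondence with choices of $y$ for which the ensuing universal subtree of $\mathbb{M}'$ accepts, which is precisely the acceptance condition of $\mathbb{M}'$ with the prescribed cost bounds. The main obstacle I expect is the bookkeeping in the construction, in particular verifying that (a) for each of the $\Card{x}^{cf(k)}$ guess states the corresponding bit-string $y$ really appears on the guess tape in the prescribed order, (b) the simulation phase genuinely never re-enters an existential state of $\mathbb{M}'$ after $s_0^*$, and (c) the deterministic existential configurations of $\mathbb{M}$ can be faithfully implemented as universal-with-unique-successor configurations of $\mathbb{M}'$ without silently reintroducing existential branching. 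None of these is conceptually difficult, but they are where the formal construction collects most of its details.
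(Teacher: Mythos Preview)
Your high-level plan is sound --- pre-commit to all nondeterministic existential choices, write the input on tape~1, then simulate $\mathbb{M}$ entirely in the universal phase --- and this is exactly what the paper does. But your implementation of the pre-commitment step does not give an fpt-reduction. You propose a single existential state $s_0^*$ with one outgoing transition per bit-string $y\in\{0,1\}^{cf(k)\log\Card{x}}$, each leading to a distinct state $s_y$ that then writes $y$ to the guess tape. There are $\Card{x}^{cf(k)}$ such strings, so you need $\Card{x}^{cf(k)}$ states $s_y$ (and at least that many transitions out of $s_0^*$). The description of $\mathbb{M}'$ therefore has size $\Card{x}^{\Theta(f(k))}$, which is xp-sized, not fpt-sized; your claim that ``the state set, transition relation and auxiliary data \dots\ only grow as $f(k)\cdot\Card{x}^{O(1)}$'' is simply false. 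Encoding $y$ as a single alphabet symbol instead of a state does not help: the alphabet would then have $\Card{x}^{cf(k)}$ symbols and the machine description would still be xp-sized.

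The paper's fix is to \emph{not} collapse the guess into one step. Instead it enlarges the alphabet by symbols $\sigma_{(T_1,\dots,T_u)}$, one for each sequence of at most $\lceil\log\Card{x}\rceil$ ``possible transitions'' of $\mathbb{M}$; since $\mathbb{M}$ is fixed, the number of such sequences is polynomial in $\Card{x}$. The existential phase of $\mathbb{M}'$ then consists of $f(k)$ nondeterministic steps, each writing one such symbol to tape~$m+1$. This encodes exactly the $f(k)\cdot\lceil\log\Card{x}\rceil$ bits of guessing while keeping the alphabet, state set, and transition relation polynomial in $\Card{x}$. The resulting parameter is $k'=O(f(k))$ rather than $k'=1$, but that is fine for an fpt-reduction. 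The remainder of your plan (writing the input, simulating deterministic existential steps as universal steps with a unique successor, resolving nondeterministic existential steps by reading off the next guessed transition, copying universal branching) matches the paper and goes through once the guessing phase is spread over $f(k)$ steps in this way.
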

\begin{proof}
Let~$P$ be a parameterized problem,
and let~$\mathbb{M} = (S_{\exists},S_{\forall},\Sigma,\Delta,s_0,F)$
be an \EkA{}-machine with~$m$ tapes that decides it,
i.e., there exists some computable function~$f$
and some polynomial~$p$
such that for any instance~$(x,k)$ of~$P$
we have that any computation path of~$\mathbb{M}$
with input~$(x,k)$
has length at most~$f(k) \cdot p(\Card{x})$ and
contains at most~$f(k) \cdot \log\Card{x}$ nondeterministic existential configurations.
We show how to construct in fpt-time for each instance~$(x,k)$ of~$P$
an \EA{}-machine~$\mathbb{M}^{(x,k)}$ with~$m+1$ tapes,
and positive integers~$k',t \in \mathbb{N}$ such
that~$\mathbb{M}^{(x,k)}$ accepts the empty string with existential cost~$k'$
and universal cost~$t$ if and only if~$\mathbb{M}$ accepts~$(x,k)$.

The idea of this construction is the following.
We add to~$\Sigma$ a fresh symbol~$\sigma_{(C_1,\dotsc,C_u)}$
for each sequence of possible ``transitions''~$T_1,\dotsc,T_u$ of~$\mathbb{M}$,
where~$u \leq \lceil \log\Card{x} \rceil$.
The machine~$\mathbb{M}^{(x,k)}$ starts with nondeterministically
writing down~$f(k)$ symbols~$\sigma_{(T_1,\dotsc,T_{\lceil \log\Card{x} \rceil})}$
to tape~$m+1$ (stage 1).
This can be done using~$k'$ nondeterministic existential steps.
Then, using universal steps, it writes down the input~$(x,k)$ to its first tape (stage 2).
It continues with simulating the existential steps in the execution of~$\mathbb{M}$
with input~$(x,k)$ (stage 3):
each deterministic existential step can simply be performed
by a deterministic universal step, and each nondeterministic existential
step can be simulated by ``reading off'' the next configuration from
the symbols on tape~$m+1$, and transitioning into this configuration
(if this step is allowed by~$\Delta$).
Finally, the machine~$\mathbb{M}^{(x,k)}$ simply performs the universal
steps in the execution of~$\mathbb{M}$ with input~$(x,k)$ (stage 4).

Let~$(x,k)$ be an arbitrary instance of~$P$.
We construct~$\mathbb{M}^{(x,k)} = (S'_{\exists},S'_{\forall},\Sigma',\Delta',s'_0,F')$.
We split the construction of~$\mathbb{M}^{(x,k)}$ into several steps
that correspond to the various stages in the execution of~$\mathbb{M}^{(x,k)}$
described above.
We begin with defining~$\Sigma'$:
\[ \Sigma' = \Sigma \cup
  \SB \sigma_{(T_1,\dotsc,T_u)}
  \SM 0 \leq u \leq \lceil \log\Card{x} \rceil,\ 1 \leq i \leq u,
  T_u = (s,\overline{a},\overline{d}), s \in S,
  \overline{a} \in \Sigma^{m},
  \overline{d} \in \SBs \mathbf{L},\mathbf{R},\mathbf{S} \SEs^{m} \SE. \]
Observe that for each~$s \in S$ and each~$\overline{a} \in \Sigma^{m}$,
each~$T_u = (s',\overline{a}',\overline{d})$ specifies a
tuple~$(s,\overline{a},s',\overline{a}',\overline{d})$
that may or may not be contained in~$\Delta$, i.e.,
a ``possible transition.''
Note that also~$\sigma_{()} \in \Sigma'$, where~$()$ denotes the
empty sequence.
Moreover, it is straightforward to verify
that~$\Card{\Sigma'} = \Card{\Sigma} + O(\Card{x})$.

We now construct the formal machinery that executes the first stage
of the execution of~$\mathbb{M}^{(x,k)}$.
We let:
\[ \begin{array}{r l}
  S_{1,\exists} = &
    \SBs s_{1,\mtext{guess}}, s_{1,\mtext{done}} \SEs, \mtext{ and} \\[5pt]
  \Delta'_1 = &
    \SB (s_{1,\mtext{guess}},\overline{a},s_{1,\mtext{guess}},\overline{a}',\overline{d})
    \SM \overline{a} = \Box^{m+1},
    \overline{a}' = \Box^{m}\sigma_{(T_1,\dotsc,T_{\lceil \log\Card{x} \rceil})},\\[5pt]
    & \phantom{\SB} 1 \leq i \leq \lceil \log\Card{x} \rceil, T_i \in S \times \Sigma^{m} \times
    \SBs \mathbf{L},\mathbf{R},\mathbf{S} \SEs^{m},
    \overline{d} = \textbf{S}^{m}\textbf{R} \SE\ \cup \\[5pt]
  & \SB (s_{1,\mtext{guess}},\overline{a},s_{1,\mtext{done}},\overline{a},\overline{d})
  \SM \overline{a} = \Box^{m+1}, \overline{d} = \textbf{S}^{m}\textbf{L} \SE\ \cup \\[5pt]
  & \SB (s_{1,\mtext{done}},\overline{a},s_{1,\mtext{done}},\overline{a},\overline{d})
  \SM \overline{a} \in \SBs\Box\SEs^{m} \times \Sigma',
  \overline{d} = \textbf{S}^{m}\textbf{L} \SE\ \cup \\[5pt]
  & \SB (s_{1,\mtext{done}},\overline{a},s_{2,0},\overline{a},\overline{d})
  \SM \overline{a} = \Box^{m}\$, \overline{d} = \textbf{S}^{m}\textbf{R} \SE,
\end{array} \]
where we will define~$s_{2,0} \in S'_{\forall}$ below
($s_{2,0}$ will be the first state of the second stage of~$\mathbb{M}^{(x,k)}$).
Furthermore, we let:
\[ s'_0 = s_{1,\mtext{guess}}. \]
The intuition behind the above construction is that state~$s_{1,\mtext{guess}}$
can be used as many times as necessary to write the
symbol~$\sigma_{(T_1,\dotsc,T_{\lceil \log\Card{x} \rceil})}$ to
the~$(m+1)$-th tape,
for some sequence~$T_1,\dotsc,T_{\lceil \log\Card{x} \rceil}$ of ``possible transitions.''
Then, the state~$s_{1,\mtext{done}}$ moves the tape head
of tape~$m+1$ back to the first position,
in order to continue with the second stage
of the execution of~$\mathbb{M}^{(x,k)}$.

We continue with the definition of those parts of~$\mathbb{M}^{(x,k)}$ that
perform the second stage of the execution of~$\mathbb{M}^{(x,k)}$,
i.e., writing down the input~$(x,k)$ to the first tape.
Let the sequence~$(\sigma_1,\dotsc,\sigma_n) \in \Sigma^n$
denote the representation of~$(x,k)$ using the alphabet~$\Sigma$.
We define:
\[ \begin{array}{r l}
  S_{2,\forall} = &
    \SB s_{2,i} \SM 1 \leq i \leq n \SE \cup
    \SBs s_{2,n+1} = s_{2,\mtext{done}} \SEs, \mtext{ and} \\[5pt]
  \Delta'_{2} = &
    \SB (s_{2,i},\overline{a},s_{2,i+1},\overline{a}',\overline{d})
    \SM 1 \leq i \leq n, \overline{a} \in \Box^{m}\sigma, \sigma \in \Sigma',
    \overline{a} = \sigma_i\Box^{m-1}\sigma, \overline{d} = \textbf{R}(\textbf{S})^{m} \SE\ \cup \\[5pt]
  & \SB (s_{2,\mtext{done}},\overline{a},s_{2,\mtext{done}},\overline{a},\overline{d})
  \SM \overline{a} \in \Sigma \times \SBs\Box\SEs^{m-1} \times \Sigma',
  \overline{d} = \textbf{L}\textbf{S}^{m} \SE\ \cup \\[5pt]
  & \SB (s_{2,\mtext{done}},\overline{a},s_{3,0},\overline{a},\overline{d})
  \SM \overline{a} = \SBs \$ \SEs \times \SBs \Box \SEs^{m-1} \times \Sigma',
  \overline{d} = \textbf{R}\textbf{S}^{m} \SE,
\end{array} \]
where we will define~$s_{3,0} \in S'_{\forall}$ below
($s_{3,0}$ will be the first state of the second stage of~$\mathbb{M}^{(x,k)}$).
Intuitively, each state~$s_{2,i}$ writes the~$i$-th symbol of the representation
of~$(x,k)$
(that is, symbol~$\sigma_i$) to the first tape,
and state~$s_{2,n+1} = s_{2,\mtext{done}}$ moves the tape head of the first tape
back to the first position.
Note that the states in~$S_{2,\forall}$ are deterministic.

Next, we continue with the definition of those parts of~$\mathbb{M}^{(x,k)}$ that
perform the third stage of the execution of~$\mathbb{M}^{(x,k)}$,
i.e., simulating the existential steps in the execution of~$\mathbb{M}$ with
input~$(x,k)$.
We define:
\[ \begin{array}{r l}
  S_{3,\forall} = &
    S_{\exists}, \mtext{ and} \\
  \Delta'_3 = &
    \SB \Delta'_{3,s} \SM s \in S_{\exists} \SE,
\end{array} \]
where for each~$s \in S_{\exists}$ we define the
set~$\Delta'_{3,s} \subseteq \Delta'_3$
as follows:
\[
  \Delta'_{3,s} = \SB \Delta'_{3,s,\overline{a}} \SM \overline{a} \in \Sigma^{m} \SE,
\]
and where for each~$s \in S_{\exists}$ and each~$\overline{a} \in \Sigma^{m}$ we define:
\[ \begin{array}{r l}
  \Delta_{(s,\overline{a})} = &
    \SB (s',\overline{a}',\overline{d}) \SM
      (s,\overline{a},s',\overline{a}',\overline{d}) \in \Delta \SE, \\[5pt]
  \Delta'_{3,s,\overline{a}} = &
  \begin{dcases*}
    \SB (s,\overline{a}\sigma',s',\overline{a}'\sigma',\overline{d}\textbf{S})
    \SM \sigma' \in \Sigma' \SE
      & if $\Delta_{(s,\overline{a})} = \SBs (s',\overline{a}',\overline{d}) \SEs$, \\
    \!\!\!\begin{array}{l}
      \SB (s,\overline{a}\sigma_{(T_1,\dotsc,T_u)},s',
        \overline{a}'\sigma_{(T_2,\dotsc,T_u)},\overline{d}\mathbf{S}) \SM \\
      \phantom{\SB} 1 \leq u \leq \lceil \log\Card{x} \rceil, T_1 = (s',\overline{a}',\overline{d}),
      (s,\overline{a},s',\overline{a}',\overline{d}) \in \Delta \SE\ \cup \\
      \SBs (s,\overline{a}\sigma_{()},s,\overline{a}\Box,\mathbf{S}^m\mathbf{R}) \SEs \\
    \end{array}
      & if $\Card{\Delta_{(s,\overline{a})}} > 1$, \\
    \emptyset
      & otherwise. \\
  \end{dcases*}
\end{array} \]
Observe that there exist transitions from states in~$S_{3,\forall}$ to
states in~$S_{\forall}$; this will be unproblematic, since we will have
that~$S_{\forall} \subseteq S'_{\forall}$ (see below).
Intuitively, each state in~$S_{\exists}$ that is deterministic in~$\mathbb{M}$
simply performs its behavior from~$\mathbb{M}$ on the first~$m$ tapes,
and ignores tape~$m+1$.
Each state in~$S_{\exists}$ that would lead to nondeterministic behavior
in~$\mathbb{M}$,
performs the transition~$T_1$ that is written as first ``possible transition'' in the currently
read symbol~$\sigma_{(T_1,\dotsc,T_u)}$ on tape~$m+1$ (if this transition is allowed
by~$\Delta$), and removes~$T_1$ from tape~$m+1$
(by replacing~$\sigma_{(T_1,\dotsc,T_u)}$ by~$\sigma_{(T_2,\dotsc,T_u)}$).
Note that the states in~$S_{3,\forall}$ are deterministic.

We continue with formally defining the part of~$\mathbb{M}^{(x,k)}$ that
performs stage 4, i.e., performing the (possibly nondeterministic) universal
steps in the execution of~$\mathbb{M}$ with input~$(x,k)$.
We define:
\[ \begin{array}{r l}
  S_{4,\forall} = &
    S_{\forall}, \mtext{ and} \\
  \Delta'_{4} = &
    \SB (s,\overline{a}\delta',s',\overline{a}'\delta',\overline{d}\mathbf{S})
    \SM s \in S_{\forall}, \overline{a} \in \Sigma^{m},
    (s,\overline{a},s',\overline{a}',\overline{d}) \in \Delta \SE.
\end{array} \]
Intuitively, each state in~$S_{\forall}$ simply performs its behavior
from~$\mathbb{M}$
on the first~$m$ tapes, and ignores tape~$m+1$.
Note that the states in~$S_{4,\forall}$ may be nondeterministic.

We conclude our definition
of~$\mathbb{M}^{(x,k)} = (S'_{\exists},S'_{\forall},\Sigma',\Delta',s'_0,F')$:
\[ \begin{array}{r l}
  S'_{\exists} = &
    S_{1,\exists}, \\[5pt]
  S'_{\forall} = &
    S_{2,\forall} \cup S_{3,\forall} \cup S_{4,\forall}, \\[5pt]
  \Delta' = &
    \Delta'_1 \cup \Delta'_2 \cup \Delta'_3 \cup \Delta'_4, \\[5pt]
  s'_0 = &
    s_{1,\mtext{guess}} \mtext{ (as mentioned above), and} \\[5pt]
  F' = &
    F.
\end{array} \]
Finally, we define~$k'$ and~$t$:
\[
  k' = 2f(k)+2
    \qquad\mtext{ and }\qquad
  t = 2\Card{(x,k)} + f(k) \cdot (p(\Card{x}) + 1) + 2.
\]
Intuitively,~$\mathbb{M}'$ needs~$k' = 2f(k)+2$ existential steps
to write down~$f(k)$ symbols~$\sigma_{(T_1,\dotsc,T_{\lceil \log\Card{x} \rceil})}$
and return the tape head of tape~$m+1$ to the first position.
It needs~$2\Card{(x,k)}+2$ steps to write the input~$(x,k)$ to the first tape
and return the tape head of tape~$1$ to the first position.
It needs~$\leq f(k) \cdot p(\Card{x}) + f(k)$ steps to simulate
the existential steps in the execution of~$\mathbb{M}$ with input~$(x,k)$,
and to perform the universal steps in the execution of~$\mathbb{M}$
with input~$(x,k)$.

This concludes our construction of the instance~$(\mathbb{M}^{(x,k)},k',t)$
of \EkATMhalt{m+1}.
It is straightforward to verify that~$(x,k) \in P$ if and only
if~$(\mathbb{M}^{(x,k)},k',t) \in \EkATMhalt{m+1}$,
by showing that~$\mathbb{M}$ accepts $(x,k)$ if and only
if~$(\mathbb{M}^{(x,k)},k',t) \in \EkATMhalt{m+1}$.
\end{proof}

\begin{proposition}
\label{prop:tm-char-3}
There is an \EkA{}-machine with a single tape that decides \EleqkAWSat{}.
\end{proposition}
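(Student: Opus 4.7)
The plan is to construct a single-tape $\exists\forall$-machine $\mathbb{M}$ that, on input $(\varphi, k)$ with $\varphi = \exists X. \forall Y. \psi$ and $|X| = n$, first uses existential configurations to guess an assignment $\alpha$ of weight at most $k$ to $X$, then uses universal configurations to branch over all assignments $\beta$ to $Y$, and finally deterministically evaluates $\psi[\alpha \cup \beta]$. The only subtlety is to arrange the guessing so that all nondeterministic existential configurations together number at most $f(k) \cdot \log n$, with the rest of the existential-phase work carried out by \emph{deterministic} existential configurations (which do not count against this budget).

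Concretely, I would have $\mathbb{M}$ perform $k$ guessing rounds, one for each ``slot'' of the weight-$\leq k$ assignment. In each round, $\mathbb{M}$ enters a short gadget of nondeterministic existential states: first a single branch to decide whether the slot is used or skipped, then $\lceil \log(n+1) \rceil$ binary-branching states to write an index $i \in \{1,\dotsc,n\}$ onto the tape, and one further branch to record a value $b \in \{0,1\}$. Each round contributes $O(\log n)$ nondeterministic existential configurations, so across all $k$ rounds the total is bounded by $k \cdot O(\log n) = f(k) \log n$ for a suitable computable $f$. Between these guessing gadgets, $\mathbb{M}$ uses \emph{deterministic} existential states to reposition the head on the single tape, verify that indices do not repeat (rejecting otherwise), and append the guessed pair $(i,b)$ to a designated region; these steps contribute only to the polynomial-time bound, not to the nondeterminism budget.

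After the last guess, a deterministic existential state transitions into the universal phase. There, $\mathbb{M}$ branches universally over the $|Y| \leq |x|$ variables in $Y$, one universal configuration per variable, each time appending the chosen bit to another designated tape region. From a deterministic universal state, $\mathbb{M}$ then evaluates $\psi[\alpha \cup \beta]$ in polynomial time using standard single-tape simulation techniques (scanning $\psi$, looking up each variable's value in the recorded $\alpha$ and $\beta$ regions); it accepts iff $\psi$ evaluates to true.

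The main obstacle is the single-tape restriction: every lookup during the evaluation of $\psi$, and every head repositioning between the guessing rounds, costs polynomially many steps, and we must carefully argue that all these steps can be made deterministic and hence absorbed into the $p(|x|)$ factor. Once that bookkeeping is in place, each computation path has length $O(f(k) \cdot p(|x|))$ and contains at most $f(k) \log n$ nondeterministic existential configurations, so $\mathbb{M}$ is an \EkA{}-machine; and by construction its accepting runs are in bijection with pairs $(\alpha,\beta)$ that witness $(\varphi,k) \in \EleqkAWSat{}$, so $\mathbb{M}$ decides \EleqkAWSat{}.
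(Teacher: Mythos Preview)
Your proposal is correct and follows essentially the same approach as the paper: guess $k$ indices of $X$-variables in binary using $O(k\log n)$ nondeterministic existential steps, then universally branch over all assignments to $Y$, then deterministically evaluate. The paper's version is slightly leaner---it omits your redundant value-bit $b$ and the repeat check, simply setting the indexed variables to~$1$ and the rest to~$0$, and it instantiates values directly into a copy of $\psi$ rather than via a lookup region---but the core idea and the complexity accounting are the same.
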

\begin{proof}
We describe an \EkA{}-machine~$\mathbb{M}$ with~$1$ tape for \EleqkAWSat{},
that accepts the language \EleqkAWSat{}.
We will not spell out the
machine~$\mathbb{M} = (S_{\exists},S_{\forall},\Sigma,\Delta,s_0,F)$ in full detail,
but describe~$\mathbb{M}$ in such detail that the working of~$\mathbb{M}$ is clear
and writing down the complete formal description of~$\mathbb{M}$
can be done straightforwardly.

We assume that instances~$(\varphi,k)$ are encoded as
strings~$\sigma_1\sigma_2\dotsc\sigma_n$
over an alphabet~$\Sigma' \subseteq \Sigma$.
We denote the representation of an instance~$(\varphi,k)$ using the
alphabet~$\Sigma'$
by~$\Repr{\varphi,k}$.
Also, for any Boolean formula~$\psi(Z)$ over variables~$Z$
and any (partial) assignment~$\gamma : Z \rightarrow \SBs 0,1 \SEs$,
we let~$\Repr{\psi,\gamma}$ denote the representation
(using alphabet~$\Sigma$)
of the formula~$\psi$, where each variable~$z \in \Dom{\gamma}$
is replaced by the constant value~$\gamma(z)$.

Let~$(\varphi,k)$ be an instance of \EleqkAWSat{},
where~$\varphi = \exists X. \forall Y. \psi$,~$X = \SBs x_1,\dotsc,x_n \SEs$,
and~$Y = \SBs y_1,\dotsc,y_m \SEs$.
In the initial configuration of~$\mathbb{M}$, the tape contains the
word~$\Repr{\varphi,k}$.
We construct~$\mathbb{M}$ in such a way that it proceeds in seven stages.
Intuitively, in stage 1,~$\mathbb{M}$ adds~$\Box\Repr{\psi,\emptyset}$ to the right of the
tape contents.
We will refer to this word~$\Repr{\psi,\emptyset}$ as the representation of~$\psi$.
In stage 2, it appends the word~$(\Box1\dotsc 1)$, containing~$\lceil \log n \rceil = u$ many~$1$s,~$k$ times to the right of the tape contents.
Next, in stage 3,~$\mathbb{M}$ (nondeterministically) overwrites each such
word~$(\Box1\dotsc 1)$ by~$(\Box b_1,\dotsc,b_u)$,
for some bits~$b_1,\dotsc,b_u \in \SBs 0,1 \SEs$.
Then, in stage 4, it repeatedly reads some word~$(\Box b_1,\dotsc,b_u)$
written at the rightmost part of the tape,
and in the representation of~$\psi$, written as ``second word'' on the tape,
instantiates variable~$x_i$ to the value~$1$,
where~$b_1\dotsc b_u$ is the binary representation of~$i$.
After stage 4, at most~$k$ variables~$x_i$ are instantiated to~$1$.
Then, in stage 5,~$\mathbb{M}$ instantiates the remaining variables~$x_i$
in the representation of~$\psi$ to the value~$0$.
These first five stages are all implemented using states in~$S_{\exists}$.
The remaining two stages are implemented using states in~$S_{\forall}$.
In stage 6,~$\mathbb{M}$ nondeterministically instantiates each variable~$y_j$
in the representation of~$\psi$ to some truth value~$0$ or~$1$.
Finally, in stage 7, the machine verifies whether the fully instantiated
formula~$\psi$ evaluates to true or not, and accepts if and only if
the formula~$\psi$ evaluates to true.

We now give a more detailed description of the seven stages of~$\mathbb{M}$,
by describing what each stage does to the tape contents,
and by giving bounds on the number of steps that each stage needs.
In the initial configuration, the tape contents~$w_0$ are as follows
(we omit trailing blank symbols):
\[ w_0 = \$\Repr{\varphi,k}. \]
In stage 1,~$\mathbb{M}$ transforms the tape contents~$w_0$
to the following contents~$w_1$:
\[ w_1 = \$\Repr{\varphi,k}\Box\Repr{\psi,\emptyset}, \]
where~$\emptyset$ denotes the empty assignment
to the variables~$X \cup Y$.
This addition to the tape contents can be done by means
of~$O(\Card{\Repr{\varphi,k}})$ many deterministic existential steps.

Next, in stage 2,~$\mathbb{M}$ adds to the tape contents~$k$ words
of the form~$(\Box1\dotsc 1)$, each containing~$\lceil \log n \rceil$ many~$1$s,
resulting in the tape contents~$w_2$ after stage 2:
\[ w_2 = \$\Repr{\varphi,k}\Box\Repr{\psi,\emptyset}
  \underbrace{\Box\overbrace{1\dotsc 1}^{\lceil \log n \rceil}}_{\mtext{word } 1}
  \underbrace{\Box\overbrace{1\dotsc 1}^{\lceil \log n \rceil}}_{\mtext{word } 2}
  \Box\dotsc
  \underbrace{\Box\overbrace{1\dotsc 1}^{\lceil \log n \rceil}}_{\mtext{word } k}.
\]
This addition to the tape contents can be done by means
of~$O(k \cdot \Card{\Repr{\varphi,k}}^2)$ many deterministic existential steps.

Then, in stage 3,~$\mathbb{M}$ proceeds nondeterministically.
It replaces each word of the form~$(\Box1\dotsc 1)$ that were written to the tape
in stage 2 by a word of the form~$(\Box b_1,\dotsc,b_u)$,
for some bits~$b_1,\dotsc,b_u \in \SBs 0,1 \SEs \subseteq \Sigma$.
Here we let~$u = \lceil \log n \rceil$.
Resultingly, the tape contents~$w_3$ after stage 3 are:
\[ w_3 = \$\Repr{\varphi,k}\Box\Repr{\psi,\emptyset}
  \Box b^1_1 \dotsc b^1_u
  \Box b^2_1 \dotsc b^2_u
  \Box\dotsc
  \Box b^k_1 \dotsc b^k_u,
\]
where for each~$1 \leq i \leq k$ and each~$1 \leq j \leq u$,~$b^i_j \in
\SBs 0,1 \SEs$.
This transformation of the tape contents can be done
by means of~$O(k \cdot \lceil \log n \rceil)$ many
nondeterministic existential steps.

In stage 4,~$\mathbb{M}$ repeatedly performs the following transformation
of the tape contents, until all words~$\Box b^i_1 \dotsc b^i_u$ are removed.
The tape contents~$w'_3$ before each such transformation are as follows:
\[ w'_3 = \$\Repr{\varphi,k}\Box\Repr{\psi,\alpha}
  \Box b^1_1 \dotsc b^1_u
  \Box b^2_1 \dotsc b^2_u
  \Box\dotsc
  \Box b^{\ell}_1 \dotsc b^{\ell}_u,
\]
for some partial assignment~$\alpha : X \rightarrow \SBs 0,1 \SEs$,
and some~$1 \leq \ell \leq k$.
Each such transformation functions in such a way that the tape
contents~$w''_3$ afterwards are:
\[ w''_3 = \$\Repr{\varphi,k}\Box\Repr{\psi,\alpha'}
  \Box b^1_1 \dotsc b^1_u
  \Box b^2_1 \dotsc b^2_u
  \Box\dotsc
  \Box b^{\ell-1}_1 \dotsc b^{\ell-1}_u,
\]
where the bit string~$b^{\ell}_1\dotsc b^{\ell}_u$ is the binary representation
of the integer~$i \leq 2^u$, and where the assignment~$\alpha'$ is defined
for all~$1 \leq j \leq n$, by:
\[
  \alpha'(x_j) = \begin{dcases*}
    \alpha(x_j) & if $x_j \in \Dom{\alpha}$, \\
    1 & if $x_j \not\in \Dom{\alpha}$ and $j = i$, \\
    \mtext{undefined} & otherwise. \\
  \end{dcases*}
\]
Each such transformation can be implemented by means
of~$O(\Card{\Repr{\psi,\alpha}}^2 \cdot k \lceil \log n \rceil)$
many deterministic existential steps.
After all the~$k$ many transformation of stage 4 are performed,
the tape contents~$w_4$ are thus as follows:
\[ w_4 = \$\Repr{\varphi,k}\Box\Repr{\psi,\alpha_{\mtext{pos}}}, \]
where~$\alpha_{\mtext{pos}} : X \rightarrow \SBs 0,1 \SEs$ is the partial assignment
such that~$\Dom{\alpha_{\mtext{pos}}} = \SBs i_1,\dotsc,i_k \SEs$,
and~$\alpha_{\mtext{pos}}(x_{i_j}) = 1$ for each~$1 \leq j \leq k$,
where for each~$1 \leq j \leq k$, the integer~$i_j$
is such that~$b^j_1\dotsc b^j_u$ is the binary representation of~$i_j$.
The operations in stage 4 can be implemented by means
of~$O(\Card{\Repr{\psi,\emptyset}}^2 \cdot k^2 \lceil \log n \rceil)$
many deterministic existential steps.

Next, in stage 5, the machine~$\mathbb{M}$ transforms the tape contents
by modifying the word~$\Repr{\psi,\alpha_{\mtext{pos}}}$, resulting in~$w_5$:
\[ w_5 = \$\Repr{\varphi,k}\Box\Repr{\psi,\alpha}, \]
where the complete assignment~$\alpha' : X \rightarrow \SBs 0,1 \SEs$
is defined as follows:
\[ \alpha(x) = \begin{dcases*}
  \alpha_{\mtext{pos}}(x) & if~$x \in \Dom{\alpha_{\mtext{pos}}}$, \\
  0 & otherwise. \\
\end{dcases*} \]
This can be done using~$O(\Card{\Repr{\psi,\alpha_{\mtext{pos}}}})$ nondeterministic
existential steps.
Note that the assignment~$\alpha$ has weight at most~$k$.

Now, in stage 6, the machine~$\mathbb{M}$ alternates to universal steps.
It nondeterministically transforms the tape contents
using~$O(\Card{\Repr{\psi,\alpha}})$ nondeterministic universal steps,
resulting in the tape contents~$w_6$:
\[ w_6 = \$\Repr{\varphi,k}\Box\Repr{\psi,\alpha \cup \beta}, \]
for some complete assignment~$\beta : Y \rightarrow \SBs 0,1 \SEs$.

Finally, in stage 7,~$\mathbb{M}$ checks whether the
assignment~$\alpha \cup \beta$
satisfies the formula~$\psi$.
This check can be done by means of~$O(\Card{\Repr{\psi,\alpha \cup \beta}})$ many
deterministic universal steps.
The machine~$\mathbb{M}$ accepts if and only if~$\alpha \cup \beta$
satisfies~$\psi$.

It is straightforward to verify that there exists a computable function~$f$ and
a polynomial~$p$ such that each computation path of~$\mathbb{M}$
with input~$(\varphi,k)$ has length at most~$f(k) \cdot p(\Card{\varphi})$
and contains at most~$f(k) \cdot \log \Card{\varphi}$ many
nondeterministic existential configurations.
Also, it is straightforward to verify that~$\mathbb{M}$ accepts an
input~$(\varphi,k)$
if and only if~$(\varphi,k) \in \EleqkAWSat{}$.
This concludes our proof that the \EkA{}-machine~$\mathbb{M}$
decides \EleqkAWSat{}.
\end{proof}

\begin{proposition}
\label{prop:tm-char-4}
Let~$A$ and~$B$ be parameterized problem,
and let~$m \in \mathbb{N}$ be a positive integer.
If~$B$ is decided by some \EkA{}-machine with~$m$ tapes,
and if~$A \fptred{} B$,
then~$A$ is decided by some \EkA{}-machine with~$m$ tapes.
\end{proposition}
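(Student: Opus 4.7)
The plan is to construct an \EkA{}-machine $\mathbb{M}_A$ with $m$ tapes deciding~$A$ by composing the given fpt-reduction with the given \EkA{}-machine for~$B$. Let $R$ denote the fpt-reduction, with $R(x,k) = (x',k')$, $k' \leq g(k)$, computable in time $f_R(k) \cdot p_R(|x|)$, and let the \EkA{}-machine $\mathbb{M}_B$ for~$B$ run in time $f_B(k) \cdot p_B(|x|)$ with at most $f_B(k) \cdot \log|x|$ nondeterministic existential configurations on any computation path.

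The machine $\mathbb{M}_A$ will operate in two phases. In phase~$1$, starting from $(x,k)$ on tape~$1$, it stays in existential states and uses only \emph{deterministic} transitions to compute $R(x,k)$, leaving tape~$1$ holding an encoding of $(x',k')$ with tapes $2,\dotsc,m$ blank and all heads at the leftmost cell. A standard Turing machine implementation of~$R$ can be baked into the finite control and tapes for any $m \geq 1$ (a single-tape simulation suffices when $m = 1$, and is immediate when $m \geq 2$), taking time $f_R(k) \cdot p'_R(|x|)$ for some polynomial~$p'_R$ and contributing zero nondeterministic existential configurations. In phase~$2$, $\mathbb{M}_A$ enters the initial state of~$\mathbb{M}_B$ and faithfully simulates~$\mathbb{M}_B$ on input $(x',k')$, inheriting its existential/universal partition and accepting states.

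To verify that $\mathbb{M}_A$ is an \EkA{}-machine, I would check the two required bounds. The total running time is at most $f_R(k) \cdot p'_R(|x|) + f_B(g(k)) \cdot p_B(f_R(k) \cdot p'_R(|x|))$, which collapses to $f_A(k) \cdot p_A(|x|)$ for a suitable computable~$f_A$ and polynomial~$p_A$ since $p_B$ is a polynomial. The per-path nondeterministic existential count comes entirely from phase~$2$ and is at most $f_B(g(k)) \cdot \log|x'| \leq f_B(g(k)) \cdot (\log f_R(k) + c \log|x|)$ for some constant~$c$ with $p'_R(|x|) \leq |x|^c$ for large enough~$|x|$. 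The one subtle point, which I would expect to be the main obstacle, is absorbing this into a bound of the form $f'_A(k) \log|x|$: the inequality holds for $|x|$ above a fixed threshold, say $|x| \geq 2$, while the finitely many smaller instances (for each fixed~$k$) can be hardwired into the finite control of~$\mathbb{M}_A$, which in phase~$1$ reads the short input and branches deterministically to a precomputed accept/reject state. With this minor adjustment both \EkA{}-machine conditions hold, and $\mathbb{M}_A$ decides~$A$ using exactly $m$ tapes.
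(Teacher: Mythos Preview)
Your approach—first run the reduction deterministically in existential states, then simulate $\mathbb{M}_B$—is exactly the paper's, though the paper's proof is three sentences declaring the composition ``straightforward to verify'' and supplies none of the detail you do.

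There is one genuine slip in your handling of the boundary case $|x| < 2$. You propose to hardwire the answers for these short instances into the finite control, but there are infinitely many such instances (one for each value of the parameter $k$), so they cannot all be stored in a fixed finite transition table; ``precomputed accept/reject'' is not available as a mechanism here. The repair is simple and in the spirit of your argument: when $|x|$ is below the threshold, $|x'|$ is bounded by a computable function of $k$ alone, and hence so is the \emph{entire} running time of $\mathbb{M}_B$ on $(x',k')$. So in this case $\mathbb{M}_A$ can, still in deterministic existential states, brute-force all existential \emph{and} universal choices of $\mathbb{M}_B$ in fpt-time and move directly to an accepting or rejecting state, using zero nondeterministic steps. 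With this adjustment your verification of both $\EkA$-machine bounds goes through. The paper, for what it is worth, does not address this edge case at all.
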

\begin{proof}
Let~$R$ be the fpt-reduction from~$A$ to~$B$,
and let~$M$ be an algorithm that decides~$B$
and that can be implemented by an \EkA{}-machine with~$m$ tapes.
Clearly, the composition of~$R$ and~$M$ is an algorithm
that decides~$A$.
It is straightforward to verify that the composition
of~$R$ and~$M$ can be implemented by an \EkA{}-machine with~$m$ tapes.
\end{proof}

\noindent The following statement is known from the literature.
\begin{proposition}[cf.~{\cite[Thm~8.9~and~Thm~8.10]{HopcroftMotwaniUllman01}}]
\label{prop:multitape-TM-simulation}
Let~$m \geq 1$ be a (fixed) positive integer.
For each ATM~$\mathbb{M}$ with~$m$ tapes,
there exists an ATM~$\mathbb{M}'$ with~$1$ tape such that:
\begin{itemize}
  \item $\mathbb{M}$ and~$\mathbb{M}'$ are equivalent, i.e.,
    they accept the same language;
  \item $\mathbb{M}'$ simulates~$n$ many steps of~$\mathbb{M}$
    using~$O(n^2)$ many steps; and
  \item $\mathbb{M}'$ simulates existential steps of~$\mathbb{M}$
    using existential steps, and
    simulates universal steps of~$\mathbb{M}$ using universal steps.
\end{itemize}
\end{proposition}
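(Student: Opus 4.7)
The plan is to use the standard encoding of $m$ tapes on a single tape, where each cell of $\mathbb{M}'$ stores a vector of $2m$ symbols: one component per tape recording the contents of that cell, plus $m$ Boolean markers indicating head positions. Formally, I would take the tape alphabet of $\mathbb{M}'$ to be (essentially) $\Sigma' = (\Sigma \cup \SBs \Box \SEs)^{m} \times \SBs 0,1 \SEs^{m}$, with the leftmost cell still carrying the special symbol~$\$$, and a short preprocessing phase that rewrites the initial single-tape input into this encoded form using only deterministic existential steps.

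The core of the construction is the \emph{simulation cycle}, i.e., the block of $\mathbb{M}'$-steps that simulates one step of $\mathbb{M}$. First $\mathbb{M}'$ sweeps right across the used portion of the tape, using its finite state to accumulate the tuple $(a_1,\dotsc,a_m)$ of symbols under each of the $m$ heads, detected via the head-marker bits. Upon reaching the rightmost used cell, $\mathbb{M}'$ knows in its state both the current state of $\mathbb{M}$ and the symbols scanned, and hence knows exactly the set of applicable transitions from~$\Delta$. If more than one is applicable, $\mathbb{M}'$ takes a single nondeterministic branching step (of the appropriate quantifier type, see below) to pick one transition; otherwise it proceeds deterministically. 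Then $\mathbb{M}'$ sweeps back to the left, and at each cell that carries a head marker of some tape~$j$ it updates the corresponding~$a^j$ component, moves the head-marker bit one cell left, right, or leaves it in place according to~$d_j$, and remembers the new state of $\mathbb{M}$ in its finite control. Since after $n$ steps of~$\mathbb{M}$ the used portion of the tape has length at most~$n+O(1)$, one cycle costs $O(n)$ steps of~$\mathbb{M}'$, yielding the claimed $O(n^2)$ overall bound. A small auxiliary subroutine handles the case where a head wants to move past the current right end of the used portion by appending a fresh blank-encoded cell.

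To preserve the alternation structure, I would partition $S'_\exists$ and $S'_\forall$ according to the quantifier type of the $\mathbb{M}$-configuration being simulated. Concretely, each state of $\mathbb{M}'$ carries as a component the current state $s$ of~$\mathbb{M}$, and I place this $\mathbb{M}'$-state in $S'_\exists$ exactly when $s \in S_\exists$. All the scanning and updating states within one cycle are deterministic, and deterministic states may be freely placed in either $S'_\exists$ or $S'_\forall$ without introducing additional alternations; the unique branching step within the cycle is placed in the matching class. Thus the alternation pattern of any run of~$\mathbb{M}'$ mirrors exactly that of the corresponding run of~$\mathbb{M}$, and in particular existential (respectively universal) $\mathbb{M}$-steps are simulated by blocks of existential (respectively universal) $\mathbb{M}'$-steps. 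Equivalence of the accepted languages follows because the simulation cycle is faithful in both directions: every $\mathbb{M}$-run lifts to an $\mathbb{M}'$-run of the same acceptance status, and every $\mathbb{M}'$-run projects to a valid $\mathbb{M}$-run by reading off the $\mathbb{M}$-states recorded in the finite control at the start of each cycle.

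The main obstacle I expect is not the quadratic time bound, which is standard, but the careful bookkeeping required to ensure the alternation pattern is preserved exactly. In particular, one must avoid introducing stray nondeterministic branching inside the deterministic scanning phases (since that could, e.g., turn an existential block into a spurious alternation), and one must ensure that at the one step per cycle where nondeterminism does occur the quantifier type matches that of the simulated $\mathbb{M}$-state. Making the state partition explicit in the state-component indexing, as sketched above, resolves this cleanly, and once that is done the remaining verification of the three bulleted properties is a routine induction on the length of runs.
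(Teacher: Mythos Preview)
Your proposal is correct and follows the standard textbook construction. Note, however, that the paper does not actually provide a proof of this proposition: it is stated as ``known from the literature'' with a citation to Hopcroft, Motwani, and Ullman, and no proof is given. Your sketch is precisely the classical multitape-to-single-tape simulation found in that reference, augmented with the (routine) observation that the quantifier type can be tracked in the finite control so that alternation structure is preserved; this is exactly what the citation is meant to cover.
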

\begin{corollary}
\label{cor:multitape-TM-simulation}
\EkATMhalt{2} \fptred{} \EkATMhalt{1}.
\end{corollary}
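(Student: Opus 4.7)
The plan is to invoke Proposition~\ref{prop:multitape-TM-simulation} and carefully track how the existential and universal costs behave under the standard multi-tape to single-tape simulation. Given an instance $(\mathbb{M},k,t)$ of $\EkATMhalt{2}$, with $\mathbb{M}$ a 2-tape \EA{}-machine, I would apply the proposition to obtain a 1-tape ATM $\mathbb{M}'$ that simulates $n$ steps of $\mathbb{M}$ in $O(n^2)$ steps, using existential (resp.\ universal) steps to simulate existential (resp.\ universal) steps of $\mathbb{M}$. The first observation is that this step-type preservation implies $\mathbb{M}'$ is also an \EA{}-machine: the alternation structure of $\mathbb{M}'$ mirrors that of $\mathbb{M}$, so each computation path has at most one existential-to-universal alternation.

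The key step is bounding the new existential and universal costs. Since each transition of $\mathbb{M}$ moves each tape head by at most one cell, the non-blank portion of each tape after $n$ steps has length at most $n + O(1)$. During simulation of the (at most) $k$ existential configurations of $\mathbb{M}$, only $O(k)$ steps of $\mathbb{M}$ have been performed, so the simulated tape content on $\mathbb{M}'$ has length $O(k)$, and each simulated step of $\mathbb{M}$ costs $\mathbb{M}'$ at most $O(k)$ steps. Thus the new existential cost is bounded by $k' = O(k^2)$, which depends only on $k$. Then the remaining $t$ universal steps of $\mathbb{M}$ occur with total tape length $O(k+t)$, so the new universal cost is $t' = O((k+t)^2)$, which is computable from $k$ and $t$ in polynomial time.

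Putting this together, the reduction maps $(\mathbb{M},k,t)$ to $(\mathbb{M}', k', t')$, where $\mathbb{M}'$ is produced by the construction of Proposition~\ref{prop:multitape-TM-simulation} (computable in polynomial time in $|\mathbb{M}|$ since the number of tapes $m = 2$ is fixed), $k' = O(k^2)$, and $t' = O((k+t)^2)$. Equivalence $(\mathbb{M},k,t) \in \EkATMhalt{2} \iff (\mathbb{M}',k',t') \in \EkATMhalt{1}$ follows from the faithful simulation together with the cost bounds derived above. The main obstacle, and the step requiring the most care, is verifying that the existential cost blow-up really depends only on $k$ and not on $t$: this rests on the fact that at the point where existential simulation ends, the tape of $\mathbb{M}$ has not yet been touched by any universal step, so the tape length on $\mathbb{M}'$ during the existential phase is governed by $k$ alone. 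All other ingredients are essentially bookkeeping on top of the classical multi-tape simulation.
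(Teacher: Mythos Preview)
Your overall plan---invoke Proposition~\ref{prop:multitape-TM-simulation} and track how the quadratic simulation overhead splits between the existential and universal phases---is exactly the intended route, and the paper itself gives no more detail than the bare statement of the corollary. Your key observation that the existential phase of $\mathbb{M}'$ takes only $O(k^2)$ steps (because at that point the simulated tapes have length $O(k)$) is correct and is precisely what makes the new parameter $k'$ depend only on $k$.

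There is, however, a genuine gap in the backward direction of your equivalence. The proposition gives only an \emph{upper} bound on the simulation overhead: simulating $n$ steps costs $O(n^2)$ steps. So from the existence of an accepting run $\rho'$ of $\mathbb{M}'$ with at most $k' = O(k^2)$ existential configurations, you can only conclude that the corresponding run $\rho$ of $\mathbb{M}$ has at most $k'$ existential configurations---not at most $k$. Nothing prevents $\mathbb{M}$ from having an accepting run whose existential phase is longer than $k$ but whose simulation still fits within $k'$ steps of $\mathbb{M}'$; the ``faithful simulation'' gives you language equivalence, not a lower bound on the per-step blow-up.

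The fix is standard and cheap: before applying the single-tape simulation, first modify $\mathbb{M}$ by adding a counter in the finite control that rejects as soon as more than $k$ existential steps or more than $t$ universal steps have been taken (this is essentially what Lemma~\ref{lem:eka-tmhalt-exact} does). The modified 2-tape machine is equivalent to the original on the question asked, and now \emph{every} run obeys the bounds $(k,t)$. Consequently every run of its single-tape simulation obeys the bounds $(k',t')$, and the biconditional reduces to pure language equivalence, which is exactly what Proposition~\ref{prop:multitape-TM-simulation} guarantees. With this extra preprocessing step inserted before the simulation, your argument goes through.
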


\subsection{Proofs for the Compendium}

\begin{proposition}
\label{prop:kstar-leqk-hardness}
\EkAWSat{} \fptred{} \EleqkAWSat{}.
\end{proposition}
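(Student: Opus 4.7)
The plan is to reduce $\EkAWSat{}$ to $\EleqkAWSat{}$ by inserting a threshold conjunct that rules out any witness of weight strictly less than~$k$, so that the ``at most~$k$'' condition of $\EleqkAWSat{}$ is effectively strengthened to ``exactly~$k$''. Given an instance $(\varphi,k)$ of $\EkAWSat{}$ with $\varphi = \exists X.\, \forall Y.\, \psi$, I would construct $(\varphi',k)$ where
\[
  \varphi' \;=\; \exists X.\, \forall Y.\, \bigl(\chi_{\geq k}(X) \wedge \psi\bigr),
\]
with $\chi_{\geq k}(X)$ a Boolean expression over $X$ that evaluates to true exactly when at least $k$ of the variables in $X$ are set to true. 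The parameter value is left unchanged.

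Correctness would follow from a direct argument in both directions. A weight-$k$ witness $\alpha$ of the original instance satisfies $\Card{\alpha} \leq k$ and $\chi_{\geq k}(X)[\alpha]$, so it also witnesses membership of $(\varphi',k)$ in $\EleqkAWSat{}$. Conversely, any weight-$\leq k$ witness $\alpha$ of $(\varphi',k)$ must in fact have weight exactly~$k$: since $\chi_{\geq k}(X)$ mentions no variable of~$Y$, the truth of $\forall Y.\, (\chi_{\geq k}(X) \wedge \psi)[\alpha]$ forces $\chi_{\geq k}(X)[\alpha]$ and hence $\Card{\alpha} \geq k$, which combined with $\Card{\alpha} \leq k$ yields $\Card{\alpha} = k$; the other conjunct then gives $\forall Y.\, \psi[\alpha]$, as required.

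The main technical point is to realize $\chi_{\geq k}(X)$ as a polynomial-size Boolean expression so that the reduction is genuinely polynomial-time computable. Since the $\kstar{}$ hierarchy collapses (Theorem~\ref{thm:kstar-collapse}), the inner parts of instances of both $\EkAWSat{}$ and $\EleqkAWSat{}$ may without loss of generality be taken to be general Boolean circuits, and the threshold function ``at least $k$ out of $n$'' is computable by polynomial-size circuits via standard sequential counter or sorting network constructions. If one insists on pure Boolean formulas, the same counter can be emulated by placing auxiliary counter variables in the universal block~$Y$ and replacing $\chi_{\geq k}(X)$ by an implication of the form ``consistent counter state $\Rightarrow$ counter value is at least~$k$'', which is vacuous except on the unique consistent counter assignment and therefore still captures the intended threshold condition.
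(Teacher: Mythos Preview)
Your argument is correct: conjoining a threshold gate $\chi_{\geq k}(X)$ into the matrix forces any weight-$\leq k$ witness to have weight exactly~$k$, and the two directions of the equivalence go through as you describe. The implementation via polynomial-size threshold circuits is legitimate once one invokes the collapse of the $\kstar$ hierarchy (so that circuit inputs are admissible), and your fallback of pushing a sequential counter into the universal block also works.

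The paper takes a different and somewhat more elementary route. Instead of a threshold gadget, it adds a fresh block of \emph{existentially} quantified selector variables $C = \{c^i_j : 1 \leq i \leq k,\ 1 \leq j \leq n\}$, thought of as a $k \times n$ matrix, together with simple clauses forcing exactly one true entry per column, at most one per row, and $c^i_j \rightarrow x_j$. The new parameter is $k' = 2k$. Any assignment to $X \cup C$ satisfying these constraints must set exactly $k$ of the $c^i_j$ in distinct rows and hence exactly $k$ of the~$x_j$; conversely a weight-$k$ assignment to $X$ extends canonically to weight $2k$ on $X \cup C$. The trade-off: the paper's construction stays entirely within plain polynomial-size CNF-style formulas and needs no appeal to threshold circuits or to Theorem~\ref{thm:kstar-collapse}, at the cost of doubling the parameter; your construction keeps the parameter at~$k$ and is conceptually more direct, but leans on the collapse result (or on the universal-counter trick) to keep the instance size polynomial.
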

\begin{proof}
Let~$(\varphi,k)$ be an instance of \EkAWSat{},
where~$\varphi = \exists X. \forall Y. \psi$,
and~$X = \SBs x_1,\dotsc,x_n \SEs$.
We construct an instance~$(\varphi',k')$ of \EleqkAWSat{}.
Let~$C = \SB c^i_j \SM 1 \leq i \leq k, 1 \leq j \leq n \SE$
be a set of fresh propositional variables.
Intuitively, we can think of the variables~$c^i_j$ as being placed
in a matrix with~$k$ columns and~$n$ rows: variable~$c^i_j$
is positioned in the $i$-th column and in the $j$-th row.
We ensure that in each column, exactly one variable is set to true
(see~$\psi_{\mtext{col}}$ below),
and that in each row, at most one variable is set to true
(see~$\psi_{\mtext{row}}$ below).
This way, any satisfying assignment must set
exactly~$k$ variables in the matrix to true,
in different rows.
Next, we ensure that if any variable in the $j$-th row is set to true,
that~$x_j$ is set to true
(see~$\psi_{\mtext{corr}}$ below).
This way, we know that exactly~$k$ many variables~$x_j$ must be set
to true in any satisfying assignment.

Formally, we define:
\[ \begin{array}{r l}
  \varphi' = & \exists X \cup C. \forall Y. \psi'; \\
  k' = & 2k; \\
  \psi' = &
    \psi_{\mtext{col}} \wedge \psi_{\mtext{row}} \wedge \psi_{\mtext{corr}} \wedge \psi; \\
  \psi_{\mtext{col}} = &
    \bigwedge\limits_{1 \leq j \leq n}\ 
    \left (
      \bigvee\limits_{1 \leq i \leq k} c^i_j \wedge
      \bigwedge\limits_{1 \leq i < i' \leq k} (\neg c^i_j \vee \neg c^{i'}_j)
    \right ) ; \\
  \psi_{\mtext{row}} = &
    \bigwedge\limits_{1 \leq i \leq k}\ 
    \bigwedge\limits_{1 \leq j < j' \leq n}
    (\neg c^i_j \vee \neg c^i_{j'}); \mtext{ and} \\
  \psi_{\mtext{corr}} = &
    \bigwedge\limits_{1 \leq i \leq k}\ 
    \bigwedge\limits_{1 \leq j \leq n}
    c^i_j \rightarrow x_j. \\
\end{array} \]

Any assignment
$\alpha : X \cup C \rightarrow \SBs 0,1 \SEs$ that satisfies
$\psi_{\mtext{col}} \wedge \psi_{\mtext{row}}$ must set the variables
$c^1_{j_1},\dotsc,c^k_{j_k}$ to true, for some
$1 \leq j_1 < \dotsm < j_k \leq n$.
Furthermore, if~$\alpha$ satisfies~$\psi_{\mtext{corr}}$,
it must also set~$x_{j_1},\dotsc,x_{j_k}$ to true.

It is now easy to show that~$(\varphi,k) \in \EkAWSat{}$ if and only
if~$(\varphi',k') \in \EleqkAWSat{}$.
Let~$\alpha : X \rightarrow \SBs 0,1 \SEs$ be an assignment of weight~$k$
such that~$\forall Y. \psi[\alpha]$ is true,
where~$\SB x_i \SM 1 \leq i \leq n, \alpha(x_i) = 1 \SE
= \SBs x_{j_1},\dotsc,x_{j_k} \SEs$.
Then consider the assignment~$\gamma : C \rightarrow \SBs 0,1 \SEs$
where~$\gamma(c^i_j) = 1$ if and only if~$j = j_i$.
Then the assignment~$\alpha \cup \gamma$ has weight~$k'$,
and has the property that~$\forall Y. \psi'[\alpha \cup \gamma]$ is true.

Conversely, let~$\gamma : X \cup C \rightarrow \SBs 0,1 \SEs$ be an assignment
of weigth~$k'$ such that~$\forall Y. \psi'[\gamma]$ is true.
Then the restriction~$\alpha$ of~$\gamma$ to the variables~$X$
has weight~$k$, and has the property that~$\forall Y. \psi[\alpha]$ is true.
\end{proof}

\begin{proposition}
\label{prop:kstar-leqk-membership}
\EleqkAWSat{} \fptred{} \EkAWSat{}.
\end{proposition}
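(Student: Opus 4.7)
The plan is to give a simple padding reduction. Given an instance $(\varphi,k)$ of \EleqkAWSat{} with $\varphi = \exists X. \forall Y. \psi$, I would introduce $k$ fresh propositional variables $Z = \SBs z_1, \dotsc, z_k \SEs$ that do not occur in~$\psi$, and construct the instance~$(\varphi',k)$ of \EkAWSat{} by letting $\varphi' = \exists (X \cup Z). \forall Y. \psi$ (with the same matrix $\psi$, and the same parameter value $k$). Clearly this construction is computable in polynomial time, and the parameter value is preserved, so provided it is correct it is an fpt-reduction.

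For correctness I would argue both directions. For the forward direction, suppose $\alpha : X \rightarrow \SBs 0,1 \SEs$ has weight~$m \leq k$ and satisfies $\forall Y. \psi[\alpha]$. Then I would extend $\alpha$ to an assignment $\alpha' : X \cup Z \rightarrow \SBs 0,1 \SEs$ by setting exactly $k-m$ of the variables in~$Z$ to true; the resulting assignment has weight exactly~$k$, and since the variables in~$Z$ do not appear in~$\psi$, we have $\psi[\alpha'] = \psi[\alpha]$, so $\forall Y. \psi[\alpha']$ is true. For the converse, if $\gamma : X \cup Z \rightarrow \SBs 0,1 \SEs$ has weight exactly~$k$ and makes $\forall Y. \psi[\gamma]$ true, then the restriction $\alpha$ of $\gamma$ to $X$ has weight at most~$k$, and again because $Z \cap \Var{\psi} = \emptyset$, we have $\forall Y. \psi[\alpha]$ is true.

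There is no real obstacle here: the construction is essentially a one-line padding, and the correctness is immediate once one observes that the fresh variables do not appear in~$\psi$. The only thing to double-check is the formal definition of ``weight'' in the two problems, but this is transparent in both directions.
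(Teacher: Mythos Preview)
Your proposal is correct and essentially identical to the paper's proof: the paper also introduces $k$ fresh existential variables (called $X' = \SBs x'_1,\dotsc,x'_k \SEs$ there) that do not occur in~$\psi$, sets $\varphi' = \exists X \cup X'. \forall Y. \psi$, and argues both directions exactly as you do by padding with $k-m$ of the new variables in one direction and restricting in the other.
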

\begin{proof}
Let~$(\varphi,k)$ be an instance of \EleqkAWSat{},
with~$\varphi = \exists X. \forall Y. \psi$.
We construct an instance~$(\varphi',k)$ of \EkAWSat{}.
Let~$X' = \SBs x'_1,\dotsc,x'_k \SEs$.
Now define~$\varphi' = \exists X \cup X'. \forall Y. \psi$.
We show that~$(\varphi,k) \in \EleqkAWSat{}$ if and only
if~$(\varphi',k) \in \EkAWSat{}$.

$(\Rightarrow)$
Assume that~$(\varphi,k) \in \EleqkAWSat{}$.
This means that there exists an
assignment~$\alpha : X \rightarrow \SBs 0,1 \SEs$
of weight~$\ell \leq k$ such that~$\forall Y. \psi[\alpha]$ evaluates to true.
Define the assignment~$\alpha' : X' \rightarrow \SBs 0,1 \SEs$ as follows.
We let~$\alpha'(x'_i) = 1$ if and only if~$1 \leq i \leq k-\ell$.
Then the assignment~$\alpha \cup \alpha'$ has weight~$k$,
and~$\forall Y. \psi[\alpha \cup \alpha']$ evaluates to true.
Therefore,~$(\varphi',k) \in \EkAWSat{}$.

$(\Leftarrow)$
Assume that~$(\varphi',k) \in \EkAWSat{}$.
This means that there exists an
assignment~$\alpha : X \cup X' \rightarrow \SBs 0,1 \SEs$
of weight~$k$ such that~$\forall Y. \psi[\alpha]$ evaluates to true.
Now let~$\alpha'$ be the restriction of~$\alpha$ to the set~$X$ of variables.
Clearly,~$\alpha'$ has weight at most~$k$.
Also, since~$\psi$ contains no variables in~$X'$, we know
that~$\forall Y. \psi[\alpha']$
evaluates to true.
Therefore,~$(\varphi,k) \in \EleqkAWSat{}$.
\end{proof}

\begin{proposition}
\label{prop:kstar-geqk-hardness}
\EgeqkAWSat{} is \para{}\SigmaP{2}-hard.
\end{proposition}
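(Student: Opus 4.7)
The plan is to invoke the standard technique (as used, e.g., in Flum and Grohe~\cite{FlumGrohe03}, and already applied in the proof of Proposition~\ref{prop:qsat2-twE-fpt} above) which says that in order to establish $\para{\SigmaP{2}}$-hardness of a parameterized problem, it suffices to exhibit a polynomial-time reduction from $\QSat{2}$ to a single fixed slice of the problem. I would reduce to the slice with parameter value~$k = 1$.

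First I would recall the setup: given an arbitrary instance $\varphi = \exists X. \forall Y. \psi$ of $\QSat{2}$, I want to build an equivalent instance $(\varphi', 1)$ of $\EgeqkAWSat{}$ such that $\varphi$ is true iff $(\varphi',1) \in \EgeqkAWSat{}$. The construction is essentially free: introduce one fresh propositional variable~$x^\star$ and let
\[
  \varphi' \;=\; \exists (X \cup \{x^\star\}).\; \forall Y.\; (x^\star \wedge \psi).
\]
Clearly $\varphi'$ and $k'=1$ can be computed from $\varphi$ in polynomial time.

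For correctness, I would argue both directions. If $\varphi$ is true, witnessed by some $\alpha : X \rightarrow \{0,1\}$ such that $\forall Y.\,\psi[\alpha]$ holds, then the extension $\alpha' = \alpha \cup \{x^\star \mapsto 1\}$ has weight at least~$1$ and makes $\forall Y.\,(x^\star \wedge \psi)[\alpha']$ true, so $(\varphi',1) \in \EgeqkAWSat{}$. Conversely, suppose some assignment $\alpha'$ to $X \cup \{x^\star\}$ of weight at least~$1$ satisfies $\forall Y.\,(x^\star \wedge \psi)[\alpha']$. Because the matrix contains $x^\star$ as a conjunct, we must have $\alpha'(x^\star) = 1$; then the restriction $\alpha = \alpha'|_X$ witnesses $\forall Y.\,\psi[\alpha]$, so $\varphi$ is true.

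I do not expect any real obstacle here. The only subtlety is that ``weight at least~$k$'' is a much weaker constraint than ``weight exactly~$k$'' or ``weight at most~$k$,'' which is precisely why a single fresh variable $x^\star$ suffices to force any witnessing assignment into the useful range; consequently the slice at $k=1$ already inherits all the hardness of $\QSat{2}$, yielding $\SigmaP{2}$-hardness of the slice and hence $\para{\SigmaP{2}}$-hardness of $\EgeqkAWSat{}$.
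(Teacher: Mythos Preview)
Your proof is correct. Both you and the paper follow the same high-level strategy of reducing $\QSat{2}$ to the slice $k=1$, but the concrete constructions differ. The paper adds a fresh dummy variable $y_0$ to the \emph{universal} block, setting $\varphi' = \exists X.\,\forall(Y\cup\{y_0\}).\,\psi$ with the matrix unchanged; its correctness argument then phrases the weight-$\geq 1$ condition as a constraint on the universal assignment $\beta$ rather than on the existential assignment $\alpha$. Read against the stated definition of $\EgeqkAWSat{}$ (where the weight bound is on $X$), that reduction does not handle the corner case in which the only witnessing $\alpha$ for the original $\QSat{2}$ instance is the all-zeros assignment. Your construction---adding $x^\star$ to the existential block and conjoining $x^\star$ to the matrix---forces weight onto exactly the quantifier block the definition constrains, so this corner case never arises; in that sense your argument is the cleaner and more robust of the two.
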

\begin{proof}
To show $\para{\SigmaP{2}}$-hardness, it suffices to show that the
problem is already \SigmaP{2}-hard when the parameter value
is restricted to~$1$~\cite{FlumGrohe03}.
We give a polynomial-time reduction from \QSat{2}
to the slice of \EgeqkAWSat{} where~$k = 1$.
Let~$\varphi = \exists X. \forall Y. \psi$ be an instance of \QSat{2}.
We let~$\varphi' = \exists X. \forall Y'. \psi$,
where~$Y' = Y \cup \SBs y_0 \SEs$ for a fresh variable~$y_0$.
Moreover, we let~$k=1$.
We claim that~$\varphi \in \QSat{2}$ if and only
if~$(\varphi',k) \in \EgeqkAWSat{}$.

$(\Rightarrow)$
Assume that~$\varphi \in \QSat{2}$, i.e., that there exists a truth
assignment~$\alpha : X \rightarrow \SBs 0,1 \SEs$ such
that~$\forall Y. \psi[\alpha]$ is true.
We show that for all truth assignments~$\beta : Y' \rightarrow \SBs 0,1 \SEs$
of weight at least~$1$ it holds that~$\psi[\alpha \cup \beta]$ is true.
Let~$\beta : Y' \rightarrow \SBs 0,1 \SEs$ be an arbitrary such truth assignment.
Clearly, since~$\psi$ contains only variables in~$Y$
and since~$\forall Y. \psi[\alpha]$ is true, we know that~$\psi[\alpha \cup \beta]$
is true.
Thus, we can conclude that~$(\varphi,k) \in \EgeqkAWSat{}$.

$(\Leftarrow)$
Conversely, assume that~$(\varphi,k) \in \EgeqkAWSat{}$,
i.e., that there exists a truth assignment~$\alpha : X \rightarrow \SBs 0,1 \SEs$
such that for all truth assignments~$\beta : Y' \rightarrow \SBs 0,1 \SEs$
of weight at least~$1$ it holds that~$\psi[\alpha \cup \beta]$ is true.
We show that~$\forall Y. \psi[\alpha]$ is true.
Let~$\beta : Y \rightarrow \SBs 0,1 \SEs$ be an arbitrary truth assignment.
Construct the truth assignment~$\beta' : Y' \rightarrow \SBs 0,1 \SEs$
as follows.
On the variables in~$Y$,~$\beta$ and~$\beta'$ coincide,
and~$\beta'(y') = 1$.
Clearly,~$\beta'$ has weight at least~$1$, and thus~$\psi[\alpha \cup \beta']$
is true.
Since~$\psi$ does not contain the variable~$y_0$, we then know
that~$\psi[\alpha \cup \beta]$ is true as well.
Then, since~$\beta$ was arbitrary, we can conclude that~$\varphi \in \QSat{2}$.
\end{proof}

\begin{proposition}
\label{prop:kstar-geqk-membership}
\EgeqkAWSat{} is in \para{}\SigmaP{2}.
\end{proposition}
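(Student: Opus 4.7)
The plan is to show membership directly in $\SigmaP{2}$ itself (as an unparameterized problem), which immediately yields membership in $\para{\SigmaP{2}}$ via the trivial precomputation $f(k) = k$.

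First I would describe the natural $\SigmaP{2}$ guess-and-check algorithm. Given an instance $(\varphi, k)$ with $\varphi = \exists X. \forall Y. \psi$, the existential part of the $\SigmaP{2}$ machine nondeterministically guesses a truth assignment $\alpha : X \rightarrow \SBs 0,1 \SEs$ and then verifies in deterministic polynomial time that the weight of $\alpha$ is at least $k$ (by simply counting the variables set to true, which takes time polynomial in $\Card{X}$ and hence in the input size). Conditional on this check passing, the universal part of the $\SigmaP{2}$ machine universally quantifies over assignments $\beta : Y \rightarrow \SBs 0,1 \SEs$ and evaluates $\psi[\alpha \cup \beta]$ in polynomial time. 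The machine accepts if and only if there exists such an $\alpha$ of weight at least $k$ such that $\psi[\alpha \cup \beta]$ holds for every $\beta$, which is precisely the defining condition for $(\varphi, k)$ to be a yes-instance of \EgeqkAWSat{}.

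Next I would invoke the definition of $\para{\SigmaP{2}}$: a parameterized problem $L$ is in $\para{\SigmaP{2}}$ whenever there exist a computable function $f$ and a problem $P \in \SigmaP{2}$ such that $(x,k) \in L$ iff $(x, f(k)) \in P$. Since the classical unparameterized decision problem $P = \SB (\varphi, k) \SM (\varphi, k) \in \EgeqkAWSat{} \SE$ is in $\SigmaP{2}$ by the algorithm above (treating $k$ as part of the input rather than a parameter), one can take $f$ to be the identity (encoded suitably as a string) and $P$ to be this unparameterized version.

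There is no substantial obstacle here; the weight constraint ``$\geq k$'' is a polynomial-time verifiable condition that fits naturally inside the existential quantifier block of a $\SigmaP{2}$ computation, in contrast to the ``$= k$'' constraint that motivates the more delicate machinery of the \kstar{} hierarchy. The only thing to note is that this proof does not give any stronger membership (e.g. in \EkA{}), and in combination with Proposition~\ref{prop:kstar-geqk-hardness} it pins the complexity of \EgeqkAWSat{} exactly at $\para{\SigmaP{2}}$-complete.
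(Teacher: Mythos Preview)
Your proof is correct but takes a different route from the paper's. The paper shows membership by constructing an explicit fpt-time reduction to $\QSat{2}$: given $(\varphi,k)$ with $\varphi = \exists X.\forall Y.\psi$, it introduces a set $Z$ of fresh auxiliary variables and builds a formula $\chi$ over these variables that is satisfiable precisely when at least $k$ of the existential variables are set to true; it then outputs the quantified formula $\exists X.\forall Y\cup Z.\ (\chi \rightarrow \psi)$ and argues both directions of correctness. Your approach instead observes that the \emph{unparameterized} decision problem is already in $\SigmaP{2}$ via the obvious guess-and-check (guess $\alpha$, count its weight, universally verify $\psi$), and then invokes the definition of $\para{\SigmaP{2}}$ with the identity precomputation. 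This is more elementary and avoids the auxiliary-variable construction entirely; the paper's explicit encoding is not needed just to establish membership, though it does have the mild advantage of producing a concrete $\QSat{2}$ instance and being uniform with the style of the other membership proofs in the paper.
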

\begin{proof}
To show $\para{\SigmaP{2}}$-membership, we show that the
problem is in \SigmaP{2}-hard
after a precomputation on the parameter~\cite{FlumGrohe03},
i.e., we give a reduction from \EgeqkAWSat{} to \QSat{2}
that is allowed to run in fpt-time.
Let~$(\varphi,k)$ be an instance of \EgeqkAWSat{},
where~$\varphi = \exists X. \forall Y. \psi$.
We construct an instance~$\varphi'$ of \QSat{2} as follows.

We let~$Z$ be a set of fresh variables.
Then, let~$\chi$ be a propositional formula on the variables~$Y \cup Z$
that is unsatisfiable if and only if less than~$k$ variables in~$Y$
are set to true.
This formula~$\chi$ is straightforward to construct, and we omit
the details of the construction here.
Then, we let~$\varphi' = \exists X. \forall Y \cup Z. \psi'$,
where~$\psi' = \chi \rightarrow \psi$.
We claim that~$(\varphi,k) \in \EgeqkAWSat{}$ if and only
if~$\varphi' \in \QSat{2}$.

$(\Rightarrow)$
Assume that~$(\varphi,k) \in \EgeqkAWSat{}$, i.e.,
that there exists a truth assignment~$\alpha : X \rightarrow \SBs 0,1 \SEs$
such that for all truth assignments~$\beta : Y \rightarrow \SBs 0,1 \SEs$
of weight at least~$k$ it holds that~$\psi[\alpha \cup \beta]$ is true.
We show that~$\varphi' \in \QSat{2}$.
We show that~$\forall Y \cup Z. \psi'[\alpha]$ is true.
Let~$\gamma : Y \cup Z \rightarrow \SBs 0,1 \SEs$ be an arbitrary truth
assignment.
We distinguish two cases: either~(i)~$\gamma$ satisfies~$\chi$
or~(ii)~this is not the case.
In case~(i), we know that~$\chi$ is satisfied, and thus that~$\gamma$
sets at least~$k$ variables in~$Y$ to true.
Therefore, we know that~$\psi[\alpha \cup \gamma]$ is true,
and thus that~$\alpha \cup \gamma$ satisfies~$\psi'$.
In case~(ii),~$\alpha \cup \gamma$ satisfies~$\psi'$ because
it does not satisfy~$\chi$.
Then, since~$\gamma$ was arbitrary, we know that~$\varphi' \in \QSat{2}$.

$(\Leftarrow)$
Conversely, assume that~$\varphi' \in \QSat{2}$, i.e., that there
exists an truth assignment~$\alpha : X \rightarrow \SBs 0,1 \SEs$
such that for all truth assignments~$\gamma : Y \cup Z \rightarrow \SBs 0,1 \SEs$
it holds that~$\psi'[\alpha \cup \gamma]$ is true.
We show that~$(\varphi,k) \in \EgeqkAWSat{}$.
In particular, we show that for all truth
assignments~$\beta : Y \rightarrow \SBs 0,1 \SEs$ of weight at least~$k$
it holds that~$\psi[\alpha \cup \beta]$ is true.
Let~$\beta : Y \rightarrow \SBs 0,1 \SEs$ be a truth assignment
of weight at least~$k$.
Since, by construction,~$\chi$ is satisfiable if and only if at least~$k$
variables in~$Y$ are set to true, we know that we can extend the
assignment~$\beta$ to a truth assignment~$\gamma : Y \cup Z \rightarrow
\SBs 0,1 \SEs$ that satisfies~$\chi$.
Then, since~$\psi'[\alpha \cup \gamma]$ is true,
and since~$\gamma$ satisfies~$\chi$, we know that~$\alpha \cup \gamma$
satisfies~$\psi$.
Moreover, since~$\psi$ contains only variables in~$X \cup Y$,
and since~$\gamma$ coincides with~$\beta$ on the variables in~$Y$,
we know that~$\psi[\alpha \cup \beta]$ is true.
Since~$\beta$ was arbitrary, we can conclude
that~$(\varphi,k) \in \EgeqkAWSat{}$.
\end{proof}

\begin{proposition}
\label{prop:kstar-nmink}
\EnminkAWSat{} is \EkA{}-complete.
\end{proposition}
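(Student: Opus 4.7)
The plan is to establish \EkA{}-completeness of \EnminkAWSat{} by exhibiting fpt-reductions in both directions between \EnminkAWSat{} and the \EkA{}-complete problem \EkAWSat{}. In fact, a single transformation will serve as its own inverse and yield fpt-equivalence, so it suffices to define one reduction and verify it works both ways.

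The key idea is that if we flip the polarity of every occurrence of every existentially quantified variable in the matrix of the formula, then an assignment $\alpha$ to $X$ of weight $w$ induces the same truth value on the transformed matrix as the complementary assignment $\overline{\alpha}$ (which flips every bit) does on the original matrix, and $\overline{\alpha}$ has weight $|X| - w$. Concretely, given an instance $(\varphi, k)$ of \EkAWSat{} with $\varphi = \exists X. \forall Y. \psi$, assume $\psi$ uses only conjunction and negation and define $\tau$ recursively by $\tau(\neg \chi) = \neg \tau(\chi)$, $\tau(\chi_1 \wedge \chi_2) = \tau(\chi_1) \wedge \tau(\chi_2)$, $\tau(y) = y$ for $y \in Y$, and $\tau(x) = \neg x$ for $x \in X$. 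The output is the instance $(\varphi', k)$ of \EnminkAWSat{} with $\varphi' = \exists X. \forall Y. \tau(\psi)$. Clearly this is computable in polynomial time, and the parameter is preserved exactly.

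For correctness, fix any $\alpha : X \rightarrow \SBs 0,1 \SEs$ and its bitwise complement $\overline{\alpha}$; then by a straightforward induction on the structure of $\psi$, one shows $\psi[\alpha] \equiv \tau(\psi)[\overline{\alpha}]$ as formulas over $Y$, hence $\forall Y. \psi[\alpha]$ is true iff $\forall Y. \tau(\psi)[\overline{\alpha}]$ is true. Since $\alpha$ has weight $k$ iff $\overline{\alpha}$ has weight $|X| - k$, the map $\alpha \mapsto \overline{\alpha}$ is a bijection between the witnesses for $(\varphi, k)$ in \EkAWSat{} and those for $(\varphi', k)$ in \EnminkAWSat{}. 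The very same construction, starting from an instance of \EnminkAWSat{} and viewing the output as an instance of \EkAWSat{}, gives the reverse reduction (since $\tau$ is involutive up to logical equivalence).

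I do not expect a serious obstacle: the main subtlety is simply to state the weight correspondence carefully and to assume a normal form for $\psi$ so that the recursive definition of $\tau$ is well-defined. Combining the two directions with the \EkA{}-completeness of \EkAWSat{} (Theorem~\ref{thm:kstar-collapse}) immediately yields both \EkA{}-membership and \EkA{}-hardness of \EnminkAWSat{}.
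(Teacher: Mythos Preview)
Your proposal is correct and follows essentially the same approach as the paper's proof: both define the same recursive transformation~$\tau$ that flips the polarity of the $X$-variables in the matrix (after assuming a $\{\wedge,\neg\}$ normal form), observe that $\psi[\alpha] \equiv \tau(\psi)[\overline{\alpha}]$ so weight-$k$ witnesses correspond to weight-$(|X|-k)$ witnesses, and note that the same construction serves as the reduction in both directions.
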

\begin{proof}
\ProofKstarNmink{}
\end{proof}

\begin{proposition}
The problem~\EAkWSat{} is \EAkW{P}-hard,
even when restricted to quantified circuits that
are in negation normal form and that
are anti-monotone in the universal variables.
\label{prop:eakwp-antimonotone}
\end{proposition}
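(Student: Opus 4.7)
The plan is to mirror the reduction of Proposition~\ref{prop:eakwp-monotone} but with the polarities of the auxiliary universal variables inverted, so that all new universal variables appear only as negated leaves in the resulting circuit. Concretely, I would give an fpt-reduction from general $\EAkWSat{}$ (or from the monotone restriction already shown hard in Proposition~\ref{prop:eakwp-monotone}) to its restriction to quantified NNF-circuits that are anti-monotone in the universal variables.

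Given an instance $(C,k)$ with $C$ in NNF over existential variables $X$ and universal variables $Y=\{y_1,\dots,y_m\}$, I would introduce a fresh universal variable $u^i_j$ for each $1\le i\le k$, $1\le j\le m$, so that $|Y''|=mk$. The intended combinatorial correspondence is the same as in the monotone case: weight-$k$ assignments to $Y''$ that put exactly one $1$ in each of the $k$ rows, with pairwise distinct column indices $j_1,\dots,j_k$, are the ``good'' assignments, and they encode $Y$-assignments of weight $k$; all remaining weight-$k$ assignments will be made to trivially satisfy the new circuit. The key difference from Proposition~\ref{prop:eakwp-monotone} is that the internal nodes representing $y_j$ and $\neg y_j$ of the original circuit, as well as the trivializing row-collision gadgets $z_i$ and the column-collision gadget $B$, must all be rewritten using only the negated leaves $\neg u^i_j$, so that $C''$ is anti-monotone in $Y''$. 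After fixing these internal gadgets, the final circuit is the disjunction of (i)~a copy of $C$ whose literals $y_j$ and $\neg y_j$ are replaced by the anti-monotone internal nodes, (ii)~the row-collision gadgets, and (iii)~the column-collision gadget, in direct analogy with the monotone construction.

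The correctness argument is then a case analysis on weight-$k$ assignments $\beta''$ to $Y''$ completely parallel to that of Proposition~\ref{prop:eakwp-monotone}: the ``bad'' cases (two $1$s in a row, or two rows selecting the same column) are absorbed by the trivializing gadgets, while ``good'' assignments faithfully evaluate $C[\alpha]$ on the encoded weight-$k$ $Y$-assignment; this yields the required equivalence between $(C,k)\in\EAkWSat{}$ and the constructed anti-monotone instance being a yes-instance.

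The main obstacle I expect is the design of the anti-monotone internal gadgets while keeping the parameter of the new instance bounded by a computable function of $k$. A naive polarity flip (e.g.\ replacing each $u^i_j$ by $\neg u^i_j$ in the monotone construction of Proposition~\ref{prop:eakwp-monotone}) inverts the roles of $0$ and $1$ and turns weight-$k$ into weight-$(mk-k)$, which depends on $m$ and is not bounded by any function of $k$. The right way around this is to choose a genuinely different combinatorial coding of the ``selection,'' using expressions such as $\bigvee_i\neg u^i_j$ and $\bigwedge_i\neg u^i_j$ (which are anti-monotone in the $u^i_j$'s), arranged together with the row- and column-collision gadgets so that the total number of $1$s in a ``good'' assignment is exactly $k$ and the parameter in the reduced instance can be taken to be $k$ itself (or a simple computable function of $k$). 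Once this encoding is set up correctly, the rest of the proof is a routine adaptation of the monotone argument.
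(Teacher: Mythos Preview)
Your plan is viable in principle, but the crux---the actual anti-monotone gadgets that simulate the positive leaves $y_j$ and the ``bad-assignment'' trivializers while keeping the parameter equal to $k$---is exactly what you leave unspecified. Gesturing at $\bigvee_i \neg u^i_j$ and $\bigwedge_i \neg u^i_j$ is not enough: $\bigwedge_i \neg u^i_j$ gives you an anti-monotone surrogate for $\neg y_j$, but you still have to build an anti-monotone surrogate for the \emph{positive} leaf $y_j$, and you have to build anti-monotone trivializers that fire on every weight-$k$ assignment that is not one-per-row with distinct columns. (For the record, both are possible: $\bigvee_i \bigwedge_{j'\neq j} \neg u^i_{j'}$ equals ``column $j$ is selected'' on good assignments, and $\bigvee_i \bigwedge_j \neg u^i_j$ detects an all-zero row, which under weight exactly $k$ over $k$ rows is equivalent to ``not one-per-row''; the column-collision detector can be rebuilt as ``at least $m-k+1$ of the anti-monotone nodes $\bigwedge_i \neg u^i_j$ are true''.) Until those gadgets are written down and their behaviour on bad assignments checked, the proof is not complete.

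The paper takes a different and shorter route that avoids this gadget-design problem altogether: it passes to the co-problem. Showing that $\EAkWSat$ is $\EAkW{P}$-hard for NNF circuits anti-monotone in the universal variables is equivalent, by complementation, to showing that $\AEkWSat$ is $\AEkW{P}$-hard for NNF circuits \emph{monotone in the existential variables}. In that dual setting the construction of Proposition~\ref{prop:eakwp-monotone} adapts directly---indeed it simplifies, since the weight-restricted block is now existential, so one no longer needs the row-collision nodes $z_i$ at all, and the subcircuit $B$ is repurposed (now fed by the nodes $y_1,\dots,y_m$ and checking that at least $k$ of them are true). Your direct approach would also work once the gadgets are supplied, but the dualization buys you a cleaner argument with strictly less new machinery.
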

\begin{proof}[Proof (sketch)]
We show the equivalent result that \AEkWSat{}
is \AEkW{P}-hard, even when restricted to quantified circuits
that are in negation normal form and that are
monotone in the existential variables.
The proof of this statement is very similar
to the proof of Proposition~\ref{prop:eakwp-monotone}.
We describe the modifications that need to be made to the construction.
In a similar fashion, given a quantified circuit~$C$,
we construct a quantified circuit~$C'$ that is monotone
in the existential variables,
by introducing~$k$ sets~$Y_1,\dotsc,Y_k$ of copies of the input nodes,
and simulating the original input nodes~$y_j$ and their negations~$\neg y_j$
by internal nodes~$y_j$ and~$y'_j$ in the constructed circuit.
In the modified construction, we can dispose of the nodes~$z^{j,j'}_{i}$
and the nodes~$z_i$.
Moreover, we modify the subcircuit~$B$ to check whether at
least~$k$ of its inputs are set to true,
and we let its input nodes be the nodes~$y_1,\dotsc,y_m$
in the constructed circuit.
The resulting circuit~$C'$ is in negation normal form,
is monotone in the existential variables,
and constitutes an equivalent instance of \AEkWSat{}.
\end{proof}

\begin{proposition}
\label{prop:qsat2-twE-fpt}
$\EAtwESat{}$ is \para{\SigmaP{2}}-complete.
Furthermore, \para{\SigmaP{2}}-hardness holds even for the case
where the input formula is in \DNF{}.
\end{proposition}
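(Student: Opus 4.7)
The plan is to show membership in $\para{\SigmaP{2}}$ trivially (the unparameterized problem is $\SigmaP{2}$ by ignoring the tree decomposition), and to establish $\para{\SigmaP{2}}$-hardness by means of the standard criterion from parameterized complexity: it suffices to exhibit a polynomial-time reduction from a $\SigmaP{2}$-complete problem to the $k=1$ slice (or some fixed constant slice) of $\EAtwESat{}$ \cite{FlumGrohe03}. Since the parameter is only on the existential side of the incidence graph, we have a lot of freedom: we just need to ensure that the existentially quantified variables of the constructed formula interact with the rest of the formula in a tree-like (in fact, essentially trivial) pattern.

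The source problem will be $\EASat{}$ restricted to DNF matrices, which is $\SigmaP{2}$-complete. Given $\varphi = \exists X. \forall Y. \psi(X,Y)$ with $\psi$ in DNF, the idea is to swap roles: introduce a fresh set $Z = \SB z_x \SM x \in X \SE$ of variables and make $Z$ the (small-weight, new) existential block, while pushing $X$ itself into the universal block. The existentials $z_x$ serve as ``guesses'' for the intended values of $x$, and a synchronization gadget forces $x = z_x$ on the universal side. Concretely, construct $\varphi' = \exists Z. \forall X. \forall Y. \chi$, where
\[
\chi \;=\; \bigvee_{x \in X}\bigl[(x \wedge \neg z_x) \vee (\neg x \wedge z_x)\bigr] \;\vee\; \psi(X,Y).
\]
The matrix $\chi$ is still in DNF, and correctness is a direct check: if the guess $z_x$ disagrees with the quantified value $x$ somewhere, then the first disjunct is satisfied for free, so the interesting case is precisely when $X$ copies $Z$, reducing $\varphi'$ to $\exists Z. \forall Y. \psi(Z,Y)$, which is the original problem.

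The main technical observation is about the existential incidence graph $\mathrm{IG}(Z,\chi)$: each existential variable $z_x$ occurs only in the two synchronization terms $(x \wedge \neg z_x)$ and $(\neg x \wedge z_x)$, and in no other terms. Hence $\mathrm{IG}(Z,\chi)$ is a disjoint union of paths of length two (each $z_x$ adjacent to exactly two term-nodes, which in turn touch no other existential variable), so its treewidth is $1$. A width-$1$ tree decomposition is immediate to write down in polynomial time, so the reduction outputs an $\EAtwESat{}$-instance with parameter value $1$, giving hardness of the $k=1$ slice and hence $\para{\SigmaP{2}}$-hardness. Since $\chi$ is already DNF, the stronger ``DNF matrix'' version of the claim follows at no extra cost.

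The only real obstacle is choosing the gadget so that (i)~the matrix stays in DNF and (ii)~each existential variable $z_x$ touches only a bounded number of terms that share no other existential variable — any attempt to encode equality $x \leftrightarrow z_x$ more globally (e.g., by funneling all $z_x$ through a shared check) would create high treewidth on the $Z$-side. The per-variable, term-disjoint synchronization above avoids this, and shows that bounding treewidth on the existential side alone carries no algorithmic power whatsoever.
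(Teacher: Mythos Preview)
Your proposal is correct and essentially identical to the paper's own proof: both argue membership trivially, invoke the Flum--Grohe criterion by reducing $\EASat{}$ (with DNF matrix) to the $k=1$ slice, introduce the same fresh set $Z=\SB z_x \SM x\in X\SE$ with the same per-variable XOR synchronization terms, push $X$ into the universal block, and observe that $\mtext{IG}(Z,\chi)$ is a disjoint union of length-2 paths and hence has treewidth~1.
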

\ProofEAtwECompleteness{}

\begin{proposition}
\label{prop:qsat2-twA-fpt}
$\EAtwASat{}$ is \para{\NP}-complete.
Furthermore, \para{\NP}-hardness holds even for the case
where the input formula is in \DNF{}.
\end{proposition}
\ProofEAtwACompleteness{}

\begin{lemma}
\label{lem:exactly-k}
Let~$(\varphi,k)$ be an instance of \EkAWSat{}.
In polynomial time, we can construct an instance~$(\varphi',k)$
of \EkAWSat{} with~$\varphi' = \exists X. \forall Y. \psi$, such that:
\begin{itemize}
  \item $(\varphi',k) \in \EkAWSat{}$ if and only if~$(\varphi,k) \in \EkAWSat{}$; and
  \item for any assignment~$\alpha : X \rightarrow \SBs 0,1 \SEs$ that has
    weight~$m \neq k$, it holds that~$\forall Y. \psi[\alpha]$ is true.
\end{itemize}
\end{lemma}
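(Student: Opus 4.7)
The plan is to append to $\psi$ two DNF formulas over $X$ only that become satisfied precisely when the weight of the $X$-assignment differs from $k$. Concretely, I set $\psi' := \psi \lor \chi_{>k} \lor \chi_{<k}$, where
$$\chi_{>k} := \bigvee_{S \subseteq X,\ |S|=k+1}\ \bigwedge_{x \in S} x$$
asserts that strictly more than $k$ variables of $X$ are true, and
$$\chi_{<k} := \bigvee_{T \subseteq X,\ |T|=n-k+1}\ \bigwedge_{x \in T} \neg x$$
asserts that strictly fewer than $k$ are true (with $n = |X|$). Since both auxiliary formulas are in DNF and mention only $X$-variables, $\psi'$ remains in DNF (and, more generally, is a circuit of essentially the same class as $\psi$) and introduces no new interaction with $Y$.

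For correctness, I would do a short case analysis on $\mathrm{wt}(\alpha)$ for an arbitrary $\alpha : X \to \{0,1\}$. When $\mathrm{wt}(\alpha) > k$, any $(k+1)$-subset of $X$-variables set to true by $\alpha$ witnesses that $\chi_{>k}[\alpha]$ evaluates to true under $\alpha$ alone, so $\psi'[\alpha]$ is a tautology in $Y$; the case $\mathrm{wt}(\alpha) < k$ is symmetric via $\chi_{<k}$. When $\mathrm{wt}(\alpha) = k$, neither added disjunct admits any satisfying $S$ or $T$, so both collapse to $\bot$ and $\psi'[\alpha] \equiv \psi[\alpha]$. Hence $\forall Y.\,\psi'[\alpha]$ is automatically true for every $\alpha$ with $\mathrm{wt}(\alpha) \neq k$, and coincides with $\forall Y.\,\psi[\alpha]$ on weight-$k$ assignments.

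Equivalence of $(\varphi,k)$ and $(\varphi',k)$ as instances of \EkAWSat{} is then immediate: only weight-$k$ assignments can serve as witnesses on either side, and on those the two matrices agree. The size of $\psi'$ is $|\psi| + \binom{n}{k+1} + \binom{n}{k-1}$, which is polynomial in the input size, so the construction runs in polynomial time as claimed. There is no deep obstacle here; the only point that needs care is that the added disjuncts must be free of $Y$-variables, so that their truth is determined solely by the weight of the $X$-assignment and the trivialization behaves uniformly over all $Y$-valuations. (This is also why a more economical Tseitin-style encoding using fresh universal variables would not suffice, as universal quantification over such variables fails to produce the intended semantic effect.)
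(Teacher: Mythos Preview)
Your construction is semantically correct, but the running-time claim fails. You assert that $\binom{n}{k+1} + \binom{n}{k-1}$ is polynomial in the input size; it is not. The integer $k$ is part of the input of an \EkAWSat{} instance and may take any value up to $n=|X|$, so for $k \approx n/2$ the term $\binom{n}{k+1}$ is of order $2^n/\sqrt{n}$. Your $\chi_{>k}$ and $\chi_{<k}$ therefore have exponential size in general, and the reduction is not polynomial-time (indeed not even fpt-time, since $\binom{n}{k+1}$ is $\Theta(n^{k+1}/(k+1)!)$, which is XP but not FPT). The lemma, as stated, requires a genuinely polynomial-time construction.

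Your final parenthetical remark dismisses the use of fresh universal variables, but that is exactly how the paper achieves polynomial size. The paper introduces $O(kn)$ fresh variables $Z$, places them in the universal block, and sets
\[
\psi' \;=\; \bigl(\chi_{\mathrm{wt}=k}(X,Z)\bigr) \rightarrow \psi,
\]
where $\chi_{\mathrm{wt}=k}$ is a polynomial-size formula (over $X\cup Z$) that is satisfiable in $Z$ precisely when $\alpha$ has weight $k$. The point you missed is that putting the weight constraint in the \emph{antecedent} of an implication inverts the quantifier semantics in just the right way: if no $Z$-assignment satisfies the antecedent (weight $\neq k$), the implication is vacuously true for all $Y\cup Z$; if some $Z$-assignment does satisfy it (weight $=k$), then for that $Z$ one must have $\psi[\alpha\cup\beta]$ true for every $\beta$, recovering $\forall Y.\,\psi[\alpha]$. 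This is the standard trick for simulating $\exists Z.\,\chi(Z) \rightarrow \psi$ by $\forall Z.\,(\chi(Z)\rightarrow\psi)$ when $\psi$ is $Z$-free, and it is what makes the polynomial bound go through.
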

\begin{proof}
Let~$(\varphi,k)$ be an instance of \EkAWSat{},
where~$\varphi = \exists X. \forall Y. \psi$,
and where~$X = \SBs x_1,\dotsc,x_n \SEs$.
We construct an instance~$(\varphi',k)$,
with~$\varphi' = \exists X. \forall Y \cup Z. \psi'$.
We define:
\[ Z = \SB z^i_j \SM 1 \leq i \leq k, 1 \leq j \leq m \SE. \]
Intuitively, one can think of the variables~$z^i_j$ as being positioned
in a matrix with~$n$ rows and~$k$ columns:
the variable~$z^i_j$ is placed in the $j$-th row and the $i$-th column.
We will use this matrix to verify whether exactly~$k$ variables in~$X$
are set to true.

We define~$\psi'$ as follows:
\[ \begin{array}{r l}
  \psi' = &
    (\psi^{X,Z}_{\mtext{corr}} \wedge
     \psi^{Z}_{\mtext{row}} \wedge
     \psi^{Z}_{\mtext{col}}) \rightarrow \psi; \\
  \psi^{X,Z}_{\mtext{corr}} = &
    \bigwedge\limits_{1 \leq j \leq n}\ 
    \left (
      \bigwedge\limits_{1 \leq i \leq k}
      (z^i_j \rightarrow x_j) \wedge
      (x_j \rightarrow
      \bigvee\limits_{1 \leq i \leq k} z^i_j )
    \right ); \\
  \psi^{Z}_{\mtext{row}} = &
    \bigwedge\limits_{1 \leq j \leq n}\ 
    \bigwedge\limits_{1 \leq i < i' \leq k}
    (\neg z^i_j \vee \neg z^{i'}_j); \mtext{ and} \\
  \psi^{Z}_{\mtext{col}} = &
    \bigwedge\limits_{1 \leq i \leq k}
    \left (
      \bigwedge\limits_{1 \leq j < j' \leq n}
      (\neg z^i_j \vee \neg z^i_{j'}) \wedge
      \bigvee\limits_{1 \leq j \leq n} z^i_j
    \right ). \\
\end{array} \]
Intuitively, the formula~$\psi^{X,Z}_{\mtext{corr}}$
ensures that exactly those~$x_j$ are set to true
for which there exists some~$z^i_j$ that is set to true.
The formula~$\psi^{Z}_{\mtext{row}}$ ensures that
in each row of the matrix filled with variables~$z^i_j$,
there is at most one variable set to true.
The formula~$\psi^{Z}_{\mtext{col}}$ ensures
that in each column there is exactly one variable set to true.

We show that~$(\varphi,k) \in \EkAWSat{}$ if and only
if~$(\varphi',k) \in \EkAWSat{}$.
Assume that there is some assignment~$\alpha : X \rightarrow \SBs 0,1 \SEs$
of weight~$k$ such that~$\forall Y. \psi[\alpha]$ is true.
Then~$\forall Y \cup Z. \psi'[\alpha]$ is true too,
since for any assignment~$\beta : Y \rightarrow \SBs 0,1
\SEs$,~$\alpha \cup \beta$ already satisfies the consequent of the
implication in~$\psi'$.
Conversely, assume that there is some
assignment~$\alpha : X \rightarrow \SBs 0,1 \SEs$
of weight~$k$ such that~$\forall Y \cup Z. \psi'[\alpha]$ is true.
We show that~$\forall Y. \psi[\alpha]$ is true as well.
Let~$\beta : Y \rightarrow \SBs 0,1 \SEs$ be any assignment.
Now, let~$\SB x \in X \SM \alpha(x) = 1 \SE = \SBs x_{j_1},\dotsc,x_{j_k} \SEs$.
Define~$\gamma : Z \rightarrow \SBs 0,1 \SEs$ by
letting~$\gamma(z^i_j) = 1$ if and only if~$j = j_i$.
We know that~$\psi'[\alpha \cup \beta \cup \gamma]$ is true.
Then, it is easy to verify
that~$(\psi^{X,Z}_{\mtext{corr}} \wedge \psi^{Z}_{\mtext{row}} \wedge
\psi^{Z}_{\mtext{col}})[\beta \cup \gamma]$ is true.
Therefore,~$\psi[\alpha \cup \beta \cup \gamma]$ is true.
Since~$\Var{\psi} \cap \Dom{\gamma} = \emptyset$, we know
that~$\psi[\alpha \cup \beta]$ is true.

Also, we show that for any assignment~$\alpha : X \rightarrow \SBs 0,1 \SEs$
of weight~$m \neq k$ it holds that~$\forall Y. \psi[\alpha]$ is true.
Let~$\alpha$ be any such assignment.
For any assignment~$\gamma : Z \rightarrow \SBs 0,1 \SEs$ it holds
that~$(\psi^{X,Z}_{\mtext{corr}} \wedge \psi^{Z}_{\mtext{row}} \wedge
\psi^{Z}_{\mtext{col}})[\alpha \cup \gamma]$ is false.
Therefore,~$\forall Y \cup Z. \psi[\alpha]$ is true.
\end{proof}

\begin{lemma}
\label{lem:exactly-nmink}
Let~$(\varphi,k)$ be an instance of \EnminkAWSat{}.
In polynomial time, we can construct an instance~$(\varphi',k)$
of \EnminkAWSat{} with~$\varphi' = \exists X. \forall Y. \psi$, such that:
\begin{itemize}
  \item $(\varphi',k) \in \EnminkAWSat{}$ if and only
    if~$(\varphi,k) \in \EnminkAWSat{}$; and
  \item for any assignment~$\alpha : X \rightarrow \SBs 0,1 \SEs$ that has
    weight~$m \neq (\Card{X} - k)$, it holds that~$\forall Y. \psi[\alpha]$ is true.
\end{itemize}
\end{lemma}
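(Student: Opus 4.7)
The plan is to mirror the construction in the proof of Lemma~\ref{lem:exactly-k}, swapping the roles of true and false assignments to $X$. Given $(\varphi,k)$ with $\varphi = \exists X. \forall Y. \psi$ and $X = \SBs x_1,\dotsc,x_n \SEs$, I would introduce fresh universally quantified variables $Z = \SB z^i_j \SM 1 \leq i \leq k, 1 \leq j \leq n \SE$, picturing them as an $n \times k$ matrix, and take $\varphi' = \exists X. \forall Y \cup Z. \psi'$ where $\psi' = (\psi^{X,Z}_{\mtext{corr}} \wedge \psi^{Z}_{\mtext{row}} \wedge \psi^{Z}_{\mtext{col}}) \to \psi$. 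The row and column gadgets $\psi^{Z}_{\mtext{row}}$ and $\psi^{Z}_{\mtext{col}}$ would be exactly as in Lemma~\ref{lem:exactly-k}: at most one true variable in each row of the matrix, and exactly one true variable in each column. The only modification is in the correlation gadget, which now enforces ``column $i$ selects a false-valued $x_j$'' rather than ``column $i$ selects a true-valued $x_j$''; concretely, I would replace the implications $z^i_j \to x_j$ and $x_j \to \bigvee_i z^i_j$ of the original by $z^i_j \to \neg x_j$ and $\neg x_j \to \bigvee_i z^i_j$.

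Next I would verify the two required properties. For the trivial-satisfaction property: if $\alpha : X \to \SBs 0,1 \SEs$ has weight $m \neq n - k$, then the set of $x_j$ assigned to false has size $\neq k$, so no $\gamma : Z \to \SBs 0,1 \SEs$ can simultaneously satisfy the column constraint (exactly $k$ distinct columns, each pointing to a distinct row indexed by a false-valued $x_j$) and the correlation formula; hence the antecedent is unsatisfiable under $\alpha$, and so $\forall Y \cup Z. \psi'[\alpha]$ is true. For the equivalence: given an $\alpha$ of weight $n - k$ witnessing $(\varphi,k) \in \EnminkAWSat{}$, I would enumerate the $k$ false-valued $x_{j_1},\dotsc,x_{j_k}$ and set $\gamma(z^i_j) = 1$ iff $j = j_i$; this $\gamma$ satisfies all three conjuncts of the antecedent, so $\psi$ must hold for any $\beta$ on $Y$, giving $(\varphi',k) \in \EnminkAWSat{}$. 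Conversely, any weight-$(n-k)$ $\alpha$ with $\forall Y \cup Z. \psi'[\alpha]$ true admits such a matching $\gamma$, forcing $\psi[\alpha \cup \beta]$ for every $\beta$ and so yielding $(\varphi,k) \in \EnminkAWSat{}$.

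A shorter alternative I would mention is to use the ``literal flip'' reduction from the proof of Proposition~\ref{prop:kstar-nmink} as a black box: the mapping $\tau$ that complements every literal over $X$ establishes $\EkAWSat{} \fptequiv{} \EnminkAWSat{}$ and interchanges assignments of weight $m$ with those of weight $n - m$. So one can apply $\tau$ to $(\varphi,k)$, then apply Lemma~\ref{lem:exactly-k}, then apply $\tau$ again; the wrong-weight triviality property is preserved because an assignment of weight $m \neq n - k$ flips to weight $n - m \neq k$, to which the intermediate lemma applies.

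I do not expect a genuine obstacle here: the argument is a direct symmetric analogue of Lemma~\ref{lem:exactly-k}, and the only subtle point is to make sure the revised correlation gadget correctly characterises ``$x_j$ is false iff some column selects row $j$'' rather than the positive version.
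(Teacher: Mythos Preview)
Your proposal is correct, and your ``shorter alternative'' is exactly the paper's proof: the paper simply states that the result follows directly from the proofs of Proposition~\ref{prop:kstar-nmink} and Lemma~\ref{lem:exactly-k}, i.e., conjugate by the literal-flip~$\tau$. Your direct construction is in fact precisely what that composition produces (applying~$\tau$ to the correlation gadget of Lemma~\ref{lem:exactly-k} turns $z^i_j \to x_j$ into $z^i_j \to \neg x_j$ and $x_j \to \bigvee_i z^i_j$ into $\neg x_j \to \bigvee_i z^i_j$), so the two routes are not merely analogous but identical at the level of the resulting formula.
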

\begin{proof}[Proof (sketch).]
The result follows directly from the proofs of Proposition~\ref{prop:kstar-nmink}
and Lemma~\ref{lem:exactly-k}.
\end{proof}

\begin{proposition}
\label{prop:smallshortestimplicant-hardness}
\SSIC{} is \EkA{}-hard.
\end{proposition}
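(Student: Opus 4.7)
The plan is to give a many-one fpt-reduction from $\EkAWSat(\DNF)$ to \SSIC{}, exploiting the structural similarity between the two problems: in both cases one chooses $k$ objects out of $n$ candidates and then has to certify that this choice witnesses a \co\NP{}-style property. Concretely, given an instance $(\varphi,k)$ of $\EkAWSat(\DNF)$ with $\varphi = \exists X. \forall Y.\psi$, I would first invoke Lemma~\ref{lem:exactly-k} to assume without loss of generality that every assignment $\alpha : X \to \{0,1\}$ of weight $m \neq k$ already makes $\forall Y.\psi[\alpha]$ true, and I would also assume $|X| > k$ (otherwise the instance is trivially a no-instance).

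The output instance of \SSIC{} will be $(\varphi',C,k)$ with $\varphi' = \psi$ (viewed as a DNF over $X \cup Y$) and $C = \{x : x \in X\}$, i.e.\ the conjunction of all positive $X$-literals. The first step is to verify that $C$ is genuinely an implicant of $\varphi'$, which is needed for $(\varphi',C,k)$ to be a well-formed input. This is immediate from the normalization: any assignment satisfying $\bigwedge_{x \in X} x$ gives the $X$-variables the all-ones assignment, which has weight $|X| > k$, so by Lemma~\ref{lem:exactly-k} the inner formula $\forall Y.\psi[\alpha]$ is true and thus $\psi$ is satisfied.

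For correctness I would argue both directions as follows. In the forward direction, given a weight-$k$ assignment $\alpha$ with $\forall Y.\psi[\alpha]$ true, define $C' = \{x \in X : \alpha(x) = 1\}$, which has exactly $k$ literals. To verify $C'$ is an implicant of $\varphi'$, pick an arbitrary assignment $\gamma$ satisfying $\bigwedge C'$ and split on whether $\gamma|_X$ has weight exactly $k$ (in which case $\gamma|_X = \alpha$ and satisfaction of $\psi$ follows from $\forall Y.\psi[\alpha]$) or weight $\neq k$ (in which case Lemma~\ref{lem:exactly-k} again gives satisfaction of $\psi$). The backward direction is the dual: given an implicant $C' \subseteq C$ of size $k$, let $\alpha(x) = 1$ iff $x \in C'$, so $\alpha$ has weight $k$; for any $\beta : Y \to \{0,1\}$, the combined assignment $\alpha \cup \beta$ trivially satisfies $\bigwedge C'$ and hence $\psi$, establishing $\forall Y.\psi[\alpha]$.

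The reduction clearly runs in polynomial time and the parameter $k$ is preserved, so it is a valid fpt-reduction. The main conceptual hurdle is the role of Lemma~\ref{lem:exactly-k}: without the normalization, the implicant $C'$ of length $k$ forces only $k$ specific $X$-variables to be true but leaves the remaining $X$-variables free, so the correspondence between implicants and weight-$k$ assignments would not be tight. The ``exactly $k$'' normalization collapses all the unwanted extensions into trivially-satisfying cases, which is precisely what makes the two combinatorial choices line up. Everything else in the argument is bookkeeping.
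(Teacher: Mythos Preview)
Your proposal is correct and matches the paper's proof essentially line for line: the same fpt-reduction from $\EkAWSat(\DNF)$, the same normalization via Lemma~\ref{lem:exactly-k} together with the assumption $|X|>k$, the same output $(\psi,\bigwedge_{x\in X} x,k)$, and the same two-direction correctness argument with the case split on whether $\gamma|_X$ has weight exactly $k$. Your closing remark about why Lemma~\ref{lem:exactly-k} is indispensable is also exactly the paper's rationale.
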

\ProofSSICHardness{}

\begin{proposition}
\label{prop:smallshortestimplicant-membership}
\SSIC{} is in \EkA{}.
\end{proposition}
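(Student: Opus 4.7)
The plan is to show membership in \EkA{} by constructing a many-one fpt-reduction from \SSIC{} to \EkAWSat{}. The underlying correspondence I will exploit is the one already highlighted in the sketch of \SSIC{}-hardness: a subset $C' \subseteq C$ of size~$k$ corresponds naturally to a weight-$k$ assignment on a set of fresh existential variables, one per literal of~$C$.

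Concretely, given an instance $(\varphi, C, k)$ of \SSIC{} with $C = \{c_1, \dotsc, c_n\}$, I will construct $(\varphi', k)$ where $\varphi' = \exists X. \forall Y. \psi$, with $X = \{x_1, \dotsc, x_n\}$, $Y = \Var{\varphi}$, and
\[
\psi \;=\; \psi^{X,Y}_{\mtext{corr}} \rightarrow \varphi,
\qquad
\psi^{X,Y}_{\mtext{corr}} \;=\; \bigwedge_{1 \leq i \leq n} (x_i \rightarrow c_i).
\]
The intuition is that the $x_i$'s select which literals of~$C$ are to form the candidate implicant~$C'$; the correspondence formula~$\psi^{X,Y}_{\mtext{corr}}$ then forces the $Y$-assignment to satisfy every selected literal whenever we wish to demand that $\varphi$ hold. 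The parameter remains~$k$, and the construction is obviously polynomial in the input size, so this is an fpt-reduction.

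For correctness I will argue both directions. From left to right, given an implicant $C' \subseteq C$ of~$\varphi$ with $|C'| = k$, I define $\alpha(x_i) = 1$ iff $c_i \in C'$; then $\alpha$ has weight~$k$, and for any $\beta : Y \to \{0,1\}$ I consider two cases: either $\alpha \cup \beta$ falsifies $\psi^{X,Y}_{\mtext{corr}}$ (so $\psi$ is satisfied vacuously), or it satisfies it, in which case $\beta$ satisfies $\bigwedge C'$, so by the implicant property $\beta$ satisfies~$\varphi$ and hence~$\psi$. Conversely, given a weight-$k$ assignment $\alpha$ witnessing $(\varphi', k) \in \EkAWSat{}$, I set $C' = \{c_i : \alpha(x_i) = 1\}$, which has size~$k$, and verify that any $\beta$ satisfying $\bigwedge C'$ makes $\alpha \cup \beta$ satisfy $\psi^{X,Y}_{\mtext{corr}}$, hence satisfy $\varphi$, so $\beta$ satisfies $\varphi$ (since $\Var{\varphi} \cap \Dom{\alpha} = \emptyset$).

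There is no serious obstacle here; the construction is essentially dual to the hardness reduction in Proposition~\ref{prop:smallshortestimplicant-hardness}, with existential variables now indexing the literals of~$C$ rather than the variables of~$X$ in the source QBF. The only point worth mild care is verifying that the case split on whether $\beta$ satisfies $\psi^{X,Y}_{\mtext{corr}}$ correctly captures the implicant condition in both directions, which is immediate from the definition of~$\psi^{X,Y}_{\mtext{corr}}$ as the conjunction of implications $x_i \rightarrow c_i$.
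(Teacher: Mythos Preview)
Your proposal is correct and essentially identical to the paper's own proof: the same reduction (fresh existential variables $x_1,\dotsc,x_n$ indexing the literals of $C$, with $\psi = \psi^{X,Y}_{\mtext{corr}} \rightarrow \varphi$ and $\psi^{X,Y}_{\mtext{corr}} = \bigwedge_i (x_i \rightarrow c_i)$) and the same two-direction correctness argument via the case split on whether $\psi^{X,Y}_{\mtext{corr}}$ is satisfied.
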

\ProofSSICMembership{}

\begin{proposition}
\label{prop:largeshortestimplicant-hardness}
\LSIC{} is \EkA{}-hard.
\end{proposition}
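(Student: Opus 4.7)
The plan is to adapt the reduction used to establish \EkA{}-hardness of \SSIC{} (Proposition~\ref{prop:smallshortestimplicant-hardness}), starting this time from the weight-$(n-k)$ variant \EnminkAWSat{} rather than from \EkAWSat{}. Since \EnminkAWSat{} is \EkA{}-complete by Proposition~\ref{prop:kstar-nmink}, it suffices to give an fpt-reduction from \EnminkAWSat{} to \LSIC{}.

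First, given an instance $(\varphi,k)$ of \EnminkAWSat{} with $\varphi = \exists X. \forall Y. \psi$ and $\psi$ in \DNF{}, I would invoke Lemma~\ref{lem:exactly-nmink} to assume without loss of generality that every assignment $\alpha : X \to \SBs 0,1 \SEs$ of weight $m \neq \Card{X}-k$ already makes $\forall Y. \psi[\alpha]$ true. I would also assume $\Card{X} > k$, else the instance is trivially a no-instance. Then I would apply the same construction as in the proof of Proposition~\ref{prop:smallshortestimplicant-hardness}: set $\Var{\varphi'} = X \cup Y$, $\varphi' = \psi$, and let $C = \bigwedge_{x \in X} x$ be the all-ones implicant. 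The crucial change is the parameter bookkeeping: because an \LSIC{}-instance has parameter $k' = \Card{C} - m$ where $m$ is the required size of the sub-implicant, I would set the target sub-implicant size to $n-k$ (so $k' = k$), matching the weight of the assignment we are searching for on the \EnminkAWSat{} side. A sub-implicant $C'\subseteq C$ with $\Card{C'}=n-k$ corresponds exactly to choosing a weight-$(n-k)$ assignment to $X$ that sets the variables in $C'$ to true, which is the kind of assignment that \EnminkAWSat{} asks for.

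The correctness argument follows exactly the pattern of Proposition~\ref{prop:smallshortestimplicant-hardness}: given a weight-$(n-k)$ witness $\alpha$ with $\forall Y. \psi[\alpha]$ true, take $C' = \SB x \in X \SM \alpha(x)=1 \SE$ and verify it is an implicant of $\varphi'$ by splitting on whether an arbitrary extension $\gamma$ satisfies exactly $n-k$ of the variables in $X$ (use $\forall Y. \psi[\alpha]$) or not (use the Lemma~\ref{lem:exactly-nmink} normalization). Conversely, given $C' \subseteq C$ with $\Card{C'}=n-k$ an implicant, read off a weight-$(n-k)$ assignment and argue $\forall Y. \psi[\alpha]$ is true by extending $\alpha$ with any $\beta : Y \to \SBs 0,1 \SEs$ and using the implicant property on $\alpha \cup \beta$.

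There is no real obstacle beyond re-doing the parameter accounting; the construction and correctness proof are essentially the same as for \SSIC{}. The only subtle point is ensuring that the normalization supplied by Lemma~\ref{lem:exactly-nmink} is what allows us to conclude that assignments of ``wrong'' weight are harmless in the correctness direction from $\LSIC{}$ back to \EnminkAWSat{}, just as Lemma~\ref{lem:exactly-k} does in the \SSIC{} proof.
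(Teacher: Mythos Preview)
Your proposal is correct and matches the paper's own proof essentially line for line: reduce from \EnminkAWSat{} (which is \EkA{}-complete by Proposition~\ref{prop:kstar-nmink}), reuse the construction from Proposition~\ref{prop:smallshortestimplicant-hardness} verbatim, and swap in Lemma~\ref{lem:exactly-nmink} for Lemma~\ref{lem:exactly-k} to handle the normalization. The only cosmetic difference is that the paper states the proof as a one-line sketch, whereas you have spelled out the parameter bookkeeping and the two directions of correctness explicitly.
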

\begin{proof}[Proof (sketch).]
The fpt-reduction given in the proof of
Proposition~\ref{prop:smallshortestimplicant-hardness}
is also an fpt-reduction from from \EnminkAWSat{} to
\LSIC{}.
In order to argue that this
is a correct reduction from \EnminkAWSat{} to \LSIC{},
Lemma~\ref{lem:exactly-nmink} is needed instead of Lemma~\ref{lem:exactly-k}.
Verifying that this is indeed a correct fpt-reduction is straightforward.
\end{proof}

\begin{proposition}
\label{prop:largeshortestimplicant-membership}
\LSIC{} is in \EkA{}.
\end{proposition}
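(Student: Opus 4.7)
The plan is to mirror the membership argument for \SSIC{} (see the proof of Proposition~\ref{prop:smallshortestimplicant-membership}), but to land in the weight-complement version of our canonical problem. Concretely, I would give a many-one fpt-reduction from \LSIC{} to \EnminkAWSat{}, and then invoke Proposition~\ref{prop:kstar-nmink}, which tells us that \EnminkAWSat{} is \EkA{}-complete (hence, in particular, in \EkA{}). Since \EkA{} is closed under fpt-reductions, this would put \LSIC{} in \EkA{}.

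The reduction itself can be taken verbatim from the \SSIC{}-to-\EkAWSat{} construction. Given an instance~$(\varphi, C, k)$ of \LSIC{}, where~$C = \SBs c_1,\dotsc,c_n \SEs$, I would construct the QBF~$\varphi' = \exists X. \forall Y. \psi$ by setting~$X = \SBs x_1,\dotsc,x_n \SEs$,~$Y = \Var{\varphi}$,~$\psi^{X,Y}_{\mtext{corr}} = \bigwedge_{1 \leq i \leq n}(x_i \rightarrow c_i)$, and~$\psi = \psi^{X,Y}_{\mtext{corr}} \rightarrow \varphi$. The output is the instance~$(\varphi',k)$ of \EnminkAWSat{}. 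The key observation is that the correspondence~$\alpha \leftrightarrow C'$ given by~$\alpha(x_i) = 1 \Leftrightarrow c_i \in C'$ is a bijection between assignments of weight~$n-k$ to~$X$ and subsets~$C' \subseteq C$ of size~$n-k$; under this bijection, $C'$ is an implicant of~$\varphi$ exactly when~$\forall Y. \psi[\alpha]$ is true. This equivalence is established in essentially the same way as in the proof of Proposition~\ref{prop:smallshortestimplicant-membership} (the \SSIC{} case): if~$\beta$ is any truth assignment extending~$\alpha$ that satisfies~$\psi^{X,Y}_{\mtext{corr}}$, then~$\beta$ satisfies all literals of~$C'$ and therefore satisfies~$\varphi$; conversely, given an implicant~$C'$ of size~$n-k$, the associated~$\alpha$ combined with an arbitrary~$\beta$ either falsifies~$\psi^{X,Y}_{\mtext{corr}}$ (and thus trivially satisfies the implication) or witnesses a satisfying extension of~$C'$.

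There is essentially no obstacle: the construction is structurally identical to the \SSIC{} membership reduction, with the only substantive change being that the relevant weight is now~$n-k$ rather than~$k$. The parameter value remains~$k$ in both the source and target problem, so the reduction is trivially an fpt-reduction (in fact, computable in polynomial time). The only ingredient beyond the \SSIC{} case is the already established \EkA{}-membership of \EnminkAWSat{} from Proposition~\ref{prop:kstar-nmink}, which absorbs the weight-complementation step and spares us from having to redo that part directly.
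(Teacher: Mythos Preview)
Your proposal is correct and matches the paper's own proof essentially verbatim: the paper likewise observes that the reduction from Proposition~\ref{prop:smallshortestimplicant-membership} serves, unchanged, as an fpt-reduction from \LSIC{} to \EnminkAWSat{}, and then appeals to the \EkA{}-completeness of \EnminkAWSat{} (Proposition~\ref{prop:kstar-nmink}).
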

\begin{proof}[Proof (sketch).]
The fpt-reduction given in the proof of
Proposition~\ref{prop:smallshortestimplicant-membership}
is also an fpt-reduction from \LSIC{}
to \EnminkAWSat{}.
Verifying that this is indeed a correct fpt-reduction is straightforward.
\end{proof}

\begin{proposition}
\label{prop:largemindnf-hardness}
\LargeMinDNF{} is \EkA{}-hard.
\end{proposition}
\begin{proof}
\sloppypar
The polynomial-time reduction
from the unparameterized version of \LSIC{}
to the unparameterized version of \LargeMinDNF{}
given by Umans~\cite[Theorem~2.2]{Umans00}
is an fpt-reduction from \LSIC{}
to \LargeMinDNF{}.
This is straightforward to verify.
\end{proof}

\begin{proposition}
\label{prop:largemindnf-membership}
\LargeMinDNF{} is in \EkA{}.
\end{proposition}
\begin{proof}
\ProofLargeMinDNFMembership{}
\end{proof}

\begin{proposition}
\label{prop:largemindnf-hardness}
\LargeMinDNF{} is \EkA{}-hard.
\end{proposition}
\begin{proof}
\sloppypar
The polynomial-time reduction
from the unparameterized version of \LSIC{}
to the unparameterized version of \LargeMinDNF{}
given by Umans~\cite[Theorem~2.2]{Umans00}
is an fpt-reduction from \LSIC{}
to \LargeMinDNF{}.
This is straightforward to verify.
\end{proof}

\begin{proposition}
\label{prop:largemindnf-membership}
\LargeMinDNF{} is in \EkA{}.
\end{proposition}
\begin{proof}
\ProofLargeMinDNFMembership{}
\end{proof}
\CompleteSmallMinDNF{}

\begin{theorem}
\label{thm:robustcsp-membership}
\RobustCSPSat{} is in \AkE{}.
\end{theorem}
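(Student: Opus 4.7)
The plan is to give an fpt-reduction from \RobustCSPSat{} to \AkEWSat{}; since \AkEWSat{} is \AkE{}-complete, this yields membership. Given a CSP instance $N = (X,D,C)$ together with parameter $k$, I will construct the quantified Boolean formula $\varphi = \forall U. \exists V. \psi$ over the variable sets $U = \SB u_{x,d} \SM x \in X, d \in D \SE$ (universal) and $V = \SB v_{x,d} \SM x \in X, d \in D \SE$ (existential), keeping the same parameter $k$. A weight-$k$ truth assignment to $U$ is intended to encode a candidate partial instantiation on $k$ variables, while an assignment to $V$ encodes a full instantiation of $N$.

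The matrix $\psi$ will take the form of an implication $\psi_{\mathrm{ante}}(u) \rightarrow \psi_{\mathrm{cons}}(u,v)$. The antecedent expresses (i) that no variable $x \in X$ has two distinct values selected, so that combined with the weight-$k$ constraint the assignment $\alpha_u$ encodes a proper partial instantiation on exactly $k$ variables, and (ii) that $\alpha_u$ does not violate any constraint, written as $\bigwedge_{c \in C} \bigvee_{(d_1,\dotsc,d_r) \in R_c} \bigwedge_{j=1}^{r} \bigwedge_{d' \in D,\, d' \neq d_j} \neg u_{x_j,d'}$ for $c = ((x_1,\dotsc,x_r),R_c)$. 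The consequent expresses that $v$ encodes a full solution extending $\alpha_u$: exactly one $v_{x,d}$ is true per variable $x$, each $u_{x,d}$ implies $v_{x,d}$, and every constraint is satisfied by the corresponding tuple over $v$. Each subformula admits a routine polynomial-size propositional encoding, so $\varphi$ can be built in polynomial time.

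For correctness, I would argue by case analysis on an arbitrary weight-$k$ assignment $\alpha$ to $U$. If $\alpha$ selects two distinct values for some variable, or if the induced partial instantiation violates some constraint, then $\psi_{\mathrm{ante}}$ is false and $\psi$ is satisfied vacuously by any $v$. Otherwise $\alpha$ encodes a proper, non-violating partial instantiation on exactly $k$ variables, and in that case $\psi$ is satisfiable for some $v$ precisely when this partial instantiation extends to a full solution of $N$. Hence $(\varphi,k)$ is a yes-instance of \AkEWSat{} exactly when every weight-$k$ proper partial instantiation that avoids violations extends to a full solution, which is the definition of $k$-robust satisfiability.

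I do not foresee a serious obstacle; the reduction preserves the parameter $k$, runs in polynomial time, and the correctness argument is a direct unfolding of definitions. The main subtle point is that improper weight-$k$ assignments to $U$ must not inadvertently constrain the check, which is handled by placing both the validity and the non-violation conditions inside the antecedent so that their failure makes the implication vacuous — entirely analogous in spirit to the gadget used in Lemma~\ref{lem:exactly-k}.
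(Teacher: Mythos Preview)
Your proposal is correct and follows essentially the same approach as the paper's proof: both give an fpt-reduction to \AkEWSat{} using universal variables $u_{x,d}$ (the paper's $z^i_j$) to encode the weight-$k$ partial instantiation and existential variables $v_{x,d}$ (the paper's $y^i_j$) to encode an extending solution, with the matrix structured as an implication whose antecedent checks properness and non-violation and whose consequent checks that $v$ encodes a full solution extending the partial instantiation. The subformulas you describe match the paper's $\psi^{Z}_{\mtext{proper}}$, $\neg\psi^{Z}_{\mtext{violate}}$, $\psi^{Y,Z}_{\mtext{corr}}$, $\psi^{Y}_{\mtext{proper}}$, and $\psi^{Y}_{c}$ essentially verbatim.
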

\begin{proof}
We give an fpt-reduction from \RobustCSPSat{} to \AkEWSat{}.
Let~$(X,D,C,k)$ be an instance of \RobustCSPSat{},
where~$(X,D,C)$ is a CSP
instance,~$X = \SBs x_1,\dotsc,x_n \SEs$,~$D =
\SBs d_1,\dotsc,d_m \SEs$,
and~$k$ is an integer.
We construct an instance~$(\varphi,k)$ of \AkEWSat{}.
For the formula~$\varphi$, we use propositional
variables~$Z = \SB z^i_j \SM 1 \leq i \leq n, 1 \leq j \leq m \SE$
and~$Y = \SB y^i_j \SM 1 \leq i \leq n, 1 \leq j \leq m \SE$.
Intuitively, the variables~$z^i_j$ will represent an arbitrary
assignment~$\alpha$ that assigns values to~$k$ variables in~$X$.
Any variable~$z^i_j$ represents that
variable~$x_i$ gets assigned value~$d_j$.
The variables~$y^i_j$ will represent the solution~$\beta$ that extends
the arbitrary assignment~$\alpha$.
Similarly, any variable~$y^i_j$ represents that
variable~$x_i$ gets assigned value~$d_j$. %
\longversion{

}%
We then let~$\varphi = \forall Z. \exists Y. \psi$
with~$\psi = (\psi^{Z}_{\mtext{proper}} \wedge \neg \psi^{Z}_{\mtext{violate}})
\rightarrow (\psi^{Y,Z}_{\mtext{corr}} \wedge \psi^{Y}_{\mtext{proper}}
\wedge \bigwedge_{c \in C} \psi^{Y}_{c})$.
We will describe the subformulas of~$\varphi$ below,
as well as the intuition behind them. %
\longversion{

}%
We start with the formula~$\psi^{Z}_{\mtext{proper}}$.
This formula represents whether for each variable~$x_i$
at most one value is chosen for the assignment~$\alpha$.
\krversion{We let~$\psi^{Z}_{\mtext{proper}} = \bigwedge_{1 \leq i \leq n}
\bigwedge_{1 \leq j < j' \leq m} (\neg z^i_j \vee \neg z^i_{j'})$.}%
\longversion{We let: \[ \psi^{Z}_{\mtext{proper}} = \bigwedge_{1 \leq i \leq n}
\bigwedge_{1 \leq j < j' \leq m} (\neg z^i_j \vee \neg z^i_{j'}). \]} %
\longversion{

}%
\longversion{\noindent}%
Next, we consider the formula~$\psi^{Z}_{\mtext{violate}}$.
This subformula encodes whether the assignment~$\alpha$
violates some constraint~$c \in C$.
\krversion{We let $\psi^{Z}_{\mtext{violate}} =
\bigvee_{c = (S,R) \in C} \bigwedge_{\overline{d} \in R}
\bigvee_{z \in \Psi^{\overline{d},c}}$,}%
\longversion{We let: \[\psi^{Z}_{\mtext{violate}} =
\bigvee_{c = (S,R) \in C}\ \bigwedge_{\overline{d} \in R}\ 
\bigvee_{z \in \Psi^{\overline{d},c}} z, \]} %
where we define the set~$\Psi^{\overline{d},c} \subseteq Z$
as follows.
Let~$c = ((x_{i_1},\dotsc,x_{i_r}),R) \in C$
and~$\overline{d} = (d_{j_1},\dotsc,d_{j_r}) \in R$.
\krversion{Then we let $\Psi^{\overline{d},c} =
\SB z^{i_\ell}_j \SM 1 \leq \ell \leq r, j \neq j_{\ell} \SE$.}%
\longversion{Then we let: \[\Psi^{\overline{d},c} =
\SB z^{i_\ell}_j \SM 1 \leq \ell \leq r, j \neq j_{\ell} \SE. \]} %
Intuitively, the set~$\Psi^{\overline{d},c}$
contains the variables~$z^i_j$ that represent
those variable assignments in~$\alpha$ that prevent
that~$\beta$ satisfies~$c$ by assigning~$\Var{c}$ to~$\overline{d}$. %
\longversion{

}%
Then, the formula~$\psi^{Y}_{\mtext{proper}}$ ensures
that for each variable~$x_i$ exactly one value~$d_j$
is chosen in~$\beta$.
We define:
\krversion{$\psi^{Y}_{\mtext{proper}} = \bigwedge_{1 \leq i \leq n}
[ \bigvee_{1 \leq j \leq m} y^i_j \wedge
\bigwedge_{1 \leq j < j' \leq m} (\neg y^i_j \vee \neg y^i_{j'}) ]$.}%
\longversion{\[ \psi^{Y}_{\mtext{proper}} = \bigwedge_{1 \leq i \leq n}
[ \bigvee_{1 \leq j \leq m} y^i_j \wedge
\bigwedge_{1 \leq j < j' \leq m} (\neg y^i_j \vee \neg y^i_{j'}) ]. \]} %
\longversion{

}%
\longversion{\noindent}%
Next, the formula~$\psi^{Y,Z}_{\mtext{corr}}$ ensures
that~$\beta$ is indeed an extension of~$\alpha$.
\krversion{We define: $\psi^{Y,Z}_{\mtext{corr}} = \bigwedge_{1 \leq i \leq n}
\bigwedge_{1 \leq j \leq m} (z^i_j \rightarrow y^i_j)$.}%
\longversion{We define: \[ \psi^{Y,Z}_{\mtext{corr}} = \bigwedge_{1 \leq i \leq n}\ 
\bigwedge_{1 \leq j \leq m} (z^i_j \rightarrow y^i_j). \]} %
\longversion{

}%
\longversion{\noindent}%
Finally, for each~$c \in C$, the formula~$\psi^{Y}_{c}$ represents
whether~$\beta$ satisfies~$c$.
Let~$c = ((x_{i_1},\dotsc,x_{i_r}),R) \in C$.
\krversion{We define $\psi^{Y}_{c} = \bigvee_{(d_{j_1},\dotsc,d_{j_r}) \in R}
\bigwedge_{1 \leq \ell \leq r} y^{i_\ell}_{j_\ell}$.}%
\longversion{We define: \[ \psi^{Y}_{c} = \bigvee_{(d_{j_1},\dotsc,d_{j_r}) \in R}\ 
\bigwedge_{1 \leq \ell \leq r} y^{i_\ell}_{j_\ell}. \]} %
\longversion{

}%
\krversion{A full proof that~$(X,D,C,k) \in \RobustCSPSat{}$ if and only
if~$(\varphi,k) \in \AkEWSat{}$ can be found in the technical report.}
\longversion{
\noindent We now argue that~$(X,D,C,k) \in \RobustCSPSat{}$ if and only if
$(\varphi,k) \in \AkEWSat{}$.

$(\Rightarrow)$
Assume that~$(X,D,C)$ is~$k$-robustly satisfiable.
We show that~$(\varphi,k) \in \AkEWSat{}$.
Let~$\alpha : Z \rightarrow \SBs 0,1 \SEs$ be an arbitrary assignment of
weight~$k$.
If~$\alpha(z^i_j) = \alpha(z^i_{j'}) = 1$ for some~$1 \leq i \leq n$ and
some~$1 \leq j < j' \leq m$, then~$\alpha$ (and any extension of it)
satisfies~$\neg \psi^{Z}_{\mtext{proper}}$.
Therefore,~$\exists Y. \psi[\alpha]$ is true.
We can thus restrict our attention to the case where for
each~$1 \leq i \leq n$, there is at most one~$1 \leq j_i \leq m$
such that~$\alpha(z^i_{j_i}) = 1$.
Define the subset~$X' \subseteq X$ and the
instantiation~$\mu : X' \rightarrow D$
by letting~$x_i \in X'$ and~$\mu(x_i) = d_j$ if and only
if~$\alpha(z^i_j) = 1$.
Clearly,~$\mu$ assigns values to~$k$ different variables.

We distinguish two cases: either (i)~$\mu$ violates some constraint~$c \in C$,
or (ii)~$\mu$ violates no constraint in~$C$.
In case (i), it is straightforward to verify that~$\alpha$
satisfies~$\psi^{Z}_{\mtext{violate}}$.
Therefore,~$\exists Y. \psi[\alpha]$ is true.
Next, consider case~(ii).
Since~$(X,D,C)$ is~$k$-robustly satisfiable,
we know that there exists some complete instantiation~$\nu : X \rightarrow D$
that extends~$\mu$, and that satisfies all constraints~$c \in C$.
Now, define the assignment~$\beta : Y \rightarrow \SBs 0,1 \SEs$
by letting~$\beta(y^i_j) = 1$ if and only if~$\nu(x_i) = d_j$.
It is straightforward to verify
that~$\alpha \cup \beta$ satisfies~$\psi^{Y,Z}_{\mtext{corr}}$,
and that~$\beta$ satisfies~$\psi^{Y}_{\mtext{proper}}$
and~$\psi^{Y}_{c}$ for all~$c \in C$.
Therefore,~$\alpha \cup \beta$ satisfies~$\psi$.
This concludes our proof that~$(\varphi,k) \in \AkEWSat{}$.

$(\Leftarrow)$
Assume that~$(\varphi,k) \in \AkEWSat{}$.
We show that~$(X,D,C)$ is~$k$-robustly satisfiable.
Let~$X' \subseteq X$ be an arbitrary subset of size~$k$,
and let~$\mu : X' \rightarrow D$ be an arbitrary instantiation
that does not violate any constraint~$c \in C$.
Define the assignment~$\alpha : Z \rightarrow \SBs 0,1 \SEs$
by letting~$\alpha(z^i_j) = 1$ if and only if~$x_i \in X'$
and~$\mu(x_i) = d_j$.
Clearly, the assignment~$\alpha$ has weight~$k$.
Therefore, there must exist an assignment~$\beta : Y \rightarrow \SBs 0,1 \SEs$
such that~$\alpha \cup \beta$ satisfies~$\psi$.
It is straightforward to verify that~$\alpha$
satisfies~$\psi^{Z}_{\mtext{proper}} \wedge \neg \psi^{Z}_{\mtext{violate}}$.
Therefore,~$\alpha \cup \beta$ must
satisfy~$\psi^{Y,Z}_{\mtext{corr}} \wedge \psi^{Y}_{\mtext{proper}}$
and~$\psi^{Y}_{c}$ for all~$c \in C$.
Since~$\alpha \cup \beta$ satisfies~$\psi^{Y}_{\mtext{proper}}$,
we know that for each~$1 \leq i \leq n$, there is a unique~$1 \leq j_i \leq m$
such that~$\beta(y^i_{j_i}) = 1$.
Define the complete instantiation~$\nu : X \rightarrow D$
by letting~$\nu(x_i) = d_{j_i}$.
It is straightforward to verify that~$\nu$ extends~$\mu$,
since~$\alpha \cup \beta$ satisfies~$\psi^{Y,Z}_{\mtext{corr}}$.
Also, since~$\alpha \cup \beta$ satisfies~$\psi^{Y}_{c}$ for all~$c \in C$,
it follows that~$\nu$ satisfies each~$c \in C$.
Therefore,~$\nu$ is a solution of the CSP instance~$(X,D,C)$.
This concludes our proof that~$(X,D,C)$ is $k$-robustly satisfiable.
}
\end{proof}

\noindent In order to prove \AkE{}-hardness,
we need the following technical lemma.

\begin{lemma}
\label{lem:exactly-k-false}
\krversion{Let $(\varphi,k)$ be an instance of \EkAWSat{}
with $\varphi = \exists X. \forall Y. \psi$.}%
\longversion{Let $(\varphi,k)$ be an instance of \EkAWSat{}.}
\krversion{In polynomial time, we can construct an equivalent
instance $(\varphi',k)$ of \EkAWSat{} with $\varphi' = \exists X. \forall Y'. \psi'$,
such that}\longversion{In polynomial time,
we can construct an instance $(\varphi',k)$
of \EkAWSat{} with $\varphi' = \exists X. \forall Y. \psi$, such that:}
\krversion{for any assignment $\alpha : X \rightarrow \SBs 0,1 \SEs$
that has weight $m \neq k$, it holds that $\forall Y'. \psi'[\alpha]$ is true.
}\longversion{\begin{itemize}
  \item $(\varphi',k) \in \EkAWSat{}$ if and only if
    $(\varphi,k) \in \EkAWSat{}$; and
  \item for any assignment $\alpha : X \rightarrow \SBs 0,1 \SEs$ that has weight
    $m \neq k$, it holds that $\forall Y. \psi[\alpha]$ is true.
\end{itemize}}
\end{lemma}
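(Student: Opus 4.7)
The plan is to prove this lemma by essentially repeating the matrix-gadget construction used for Lemma~\ref{lem:exactly-k}, with a small renaming of variables to clarify that the auxiliary check is placed inside the universal block $Y'$. I would set $Y' = Y \cup Z$ where $Z = \{\,z^i_j : 1 \le i \le k,\ 1 \le j \le n\,\}$ is a fresh set of universal variables, think of $Z$ as a $k\times n$ matrix whose intended interpretation is ``row $i$ picks out the $i$-th true variable of $\alpha$,'' and define
\[
\psi' \;=\; \bigl(\psi^{X,Z}_{\mathrm{corr}} \wedge \psi^{Z}_{\mathrm{row}} \wedge \psi^{Z}_{\mathrm{col}}\bigr) \;\rightarrow\; \psi,
\]
where $\psi^{X,Z}_{\mathrm{corr}}$ forces $z^i_j \rightarrow x_j$ and $x_j \rightarrow \bigvee_i z^i_j$, $\psi^{Z}_{\mathrm{row}}$ says that no row contains two true entries, and $\psi^{Z}_{\mathrm{col}}$ says that each column contains exactly one true entry. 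This is the construction from the proof of Lemma~\ref{lem:exactly-k}; the only change is cosmetic (we absorb the new $Z$ into the universal block $Y'$ from the outset).

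First I would verify the intended behaviour for off-weight assignments. If $\alpha : X \to \{0,1\}$ has weight $m \neq k$, then no assignment $\gamma$ to $Z$ can satisfy the three auxiliary subformulas simultaneously: $\psi^{Z}_{\mathrm{col}}$ forces exactly $k$ true matrix entries (one per column) and $\psi^{Z}_{\mathrm{row}}$ forces them in distinct rows, but $\psi^{X,Z}_{\mathrm{corr}}$ then forces those $k$ rows to correspond exactly to the support of $\alpha$, which has size $m \neq k$. Hence the antecedent of the implication is false under $\alpha \cup \gamma$ for every $\gamma$, so $\psi'[\alpha\cup\beta]$ evaluates to true for every extension $\beta$ to $Y'$; in other words $\forall Y'.\,\psi'[\alpha]$ is true, which is exactly the desired triviality property.

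Second, I would argue equivalence of $(\varphi,k)$ and $(\varphi',k)$ as instances of \EkAWSat{}. If $\alpha$ has weight $k$ and witnesses $\varphi$, pick the canonical $\gamma$ that sets, for each column $i$, the unique row corresponding to the $i$-th true variable of $\alpha$; then for every extension $\beta|_Y$ to $Y$ the antecedent is satisfied iff the $Z$-part equals $\gamma$, in which case the consequent is $\psi[\alpha\cup\beta|_Y]$, already true. Conversely, if $\alpha$ witnesses $\varphi'$ with weight $k$, reading off the canonical $\gamma$ as above reduces $\forall Y'.\,\psi'[\alpha]$ to $\forall Y.\,\psi[\alpha]$.

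The main obstacle is really only notational: making sure the matrix gadget lives inside the universal block (so the lemma's conclusion is about a property of $\forall Y'.\,\psi'[\alpha]$ and not about an additional existential witness). Because the gadget is a pure Boolean implication whose antecedent is over $X \cup Z$ only, this causes no genuine difficulty, and the entire construction is computable in polynomial time since $|Z| = kn$ and each of $\psi^{X,Z}_{\mathrm{corr}},\psi^{Z}_{\mathrm{row}},\psi^{Z}_{\mathrm{col}}$ has size $O(kn^2)$. The proof thus essentially reduces to citing Lemma~\ref{lem:exactly-k} with the minor relabelling noted above.
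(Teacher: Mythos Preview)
Your proposal is correct and matches the paper's own proof, which literally repeats the matrix-gadget construction of Lemma~\ref{lem:exactly-k} with $\varphi' = \exists X.\,\forall Y\cup Z.\,\psi'$ and $\psi' = (\psi^{X,Z}_{\mathrm{corr}}\wedge\psi^{Z}_{\mathrm{row}}\wedge\psi^{Z}_{\mathrm{col}})\rightarrow\psi$. One cosmetic slip: your informal description of rows versus columns is inconsistent (you first call it a $k\times n$ matrix with rows indexed by $i$, but then say ``$\psi^{Z}_{\mathrm{col}}$ forces exactly $k$ true matrix entries, one per column''), so when writing it up just fix one convention and stick to it.
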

\krversion{\begin{proof}[Proof (sketch).]
We introduce a set~$Z$ of additional universally quantified variables,
and use these to verify whether $k$ many existentially quantified variables
are set to true.
We do so by constructing a propositional formula~$\chi$
containing variables in~$X$ and~$Z$
that is satisfiable if and only if exactly~$k$ many variables in~$X$
are set to true.
We let $\chi' = (\exists Z. \chi) \rightarrow \psi$.
We then know that $\forall Y. \chi'[\alpha]$ is true for all
truth assignments~$\alpha : X \rightarrow \SBs 0,1 \SEs$ of weight
$m \neq k$.
Moreover, the formula~$\forall Y. \chi'$
is equivalent to the formula~$\forall Y \cup Z. (\neg \chi) \vee \psi$.
\end{proof}}%
\longversion{\begin{proof}
Let $(\varphi,k)$ be an instance of \EkAWSat{},
where $\varphi = \exists X. \forall Y. \psi$,
and where $X = \SBs x_1,\dotsc,x_n \SEs$.
We construct an instance $(\varphi',k)$,
with $\varphi' = \exists X. \forall Y \cup Z. \psi'$.
\krversion{We define
$Z = \SB z^i_j \SM 1 \leq i \leq k, 1 \leq j \leq n \SE$.}\longversion{We define:
\[ Z = \SB z^i_j \SM 1 \leq i \leq k, 1 \leq j \leq n \SE. \]}
Intuitively, one can think of the variables $z^i_j$ as being positioned
in a matrix with $n$ rows and $k$ columns:
the variable $z^i_j$ is placed in the $j$-th row and the $i$-th column.
We will use this matrix to verify whether exactly $k$ variables in $X$
are set to true.

\krversion{
We let $\psi' = (\psi^{X,Z}_{\mtext{corr}} \wedge
\psi^{Z}_{\mtext{row}} \wedge
\psi^{Z}_{\mtext{col}}) \rightarrow \psi$.
Here we let
$\psi^{X,Z}_{\mtext{corr}} = 
    \bigwedge_{1 \leq j \leq n}\ 
    [
      \bigwedge_{1 \leq i \leq k}
      (z^i_j \rightarrow x_j) \wedge
      (x_j \rightarrow
      \bigvee_{1 \leq i \leq k} z^i_j )
    ]$,
we let $\psi^{Z}_{\mtext{row}} = 
    \bigwedge_{1 \leq j \leq n}\ 
    \bigwedge_{1 \leq i < i' \leq k}
    (\neg z^i_j \vee \neg z^{i'}_j)$,
and we let $\psi^{Z}_{\mtext{col}} = 
    \bigwedge_{1 \leq i \leq k}
    [
      \bigwedge_{1 \leq j < j' \leq n}
      (\neg z^i_j \vee \neg z^i_{j'}) \wedge
      \bigvee_{1 \leq j \leq n} z^i_j
    ]$.
}\longversion{We define $\psi'$ as follows:
\[ \begin{array}{r l}
  \psi' = &
    (\psi^{X,Z}_{\mtext{corr}} \wedge
    \psi^{Z}_{\mtext{row}} \wedge
    \psi^{Z}_{\mtext{col}}) \rightarrow \psi; \\
  \psi^{X,Z}_{\mtext{corr}} = &
    \bigwedge\limits_{1 \leq j \leq n}\ 
    \left (
      \bigwedge\limits_{1 \leq i \leq k}
      (z^i_j \rightarrow x_j) \wedge
      (x_j \rightarrow
      \bigvee\limits_{1 \leq i \leq k} z^i_j )
    \right ); \\
  \psi^{Z}_{\mtext{row}} = &
    \bigwedge\limits_{1 \leq j \leq n}\ 
    \bigwedge\limits_{1 \leq i < i' \leq k}
    (\neg z^i_j \vee \neg z^{i'}_j); \mtext{ and} \\
  \psi^{Z}_{\mtext{col}} = &
    \bigwedge\limits_{1 \leq i \leq k}
    \left (
      \bigwedge\limits_{1 \leq j < j' \leq n}
      (\neg z^i_j \vee \neg z^i_{j'}) \wedge
      \bigvee\limits_{1 \leq j \leq n} z^i_j
    \right ). \\
\end{array} \]}

Intuitively, the formula $\psi^{X,Z}_{\mtext{corr}}$
ensures that exactly those $x_j$ are set to true
for which there exists some $z^i_j$ that is set to true.
The formula $\psi^{Z}_{\mtext{row}}$ ensures that
in each row of the matrix filled with variables $z^i_j$,
there is at most one variable set to true.
The formula $\psi^{Z}_{\mtext{col}}$ ensures
that in each column there is exactly one variable set to true.

\krversion{A full proof that $(\varphi',k')$ satisfies the required
properties can be found in the technical report.}
\longversion{
We show that $(\varphi,k) \in \EkAWSat{}$ if and only if
$(\varphi',k') \in \EkAWSat{}$.

$(\Rightarrow)$
Let~$\alpha : X \rightarrow \SBs 0,1 \SEs$ be an assignment of weight~$k$
such that~$\forall Y. \psi[\alpha]$ is true.
We show that~$\forall Y \cup Z. \psi'[\alpha]$ is true.
Let~$\beta : Y \cup Z \rightarrow \SBs 0,1 \SEs$ be an arbitrary truth assignment.
We distinguish two cases: either (i) for each~$1 \leq i \leq k$
there is a unique~$1 \leq j_i \leq n$ such that~$\beta(z^i_{j_i}) = 1$
and~$\alpha(x_{j_i}) = 1$, or (ii) this is not the case.
In case (i), it is straightforward to verify that~$\alpha \cup \beta$
satisfies~$(\psi^{X,Z}_{\mtext{corr}} \wedge
\psi^{Z}_{\mtext{row}} \wedge
\psi^{Z}_{\mtext{col}})$.
Then, since~$\forall Y. \psi[\alpha]$ is true,
we know that~$\alpha \cup \beta$ satisfies~$\psi$,
and thus that~$\alpha \cup \beta$ satisfies~$\psi'$.
In case (ii), we know that~$\alpha \cup \beta$ does not
satisfy~$(\psi^{X,Z}_{\mtext{corr}} \wedge
\psi^{Z}_{\mtext{row}} \wedge
\psi^{Z}_{\mtext{col}})$,
and thus that~$\alpha \cup \beta$ satisfies~$\psi'$.
Therefore,~$\forall Y \cup Z. \psi'[\alpha]$ is true.

$(\Leftarrow)$
Assume that there exists an assignment~$\alpha : X \rightarrow \SBs 0,1 \SEs$
of weight~$k$ such that~$\forall Y \cup Z. \psi'[\alpha]$ is true.
We show that~$\forall Y. \psi[\alpha]$ is true.
Let~$\beta : Y \rightarrow \SBs 0,1 \SEs$ be an arbitrary truth assignment.
Let~$\SB x_j \SM 1 \leq j \leq n, \alpha(x_j) = 1 \SE = \SBs x_{j_1},\dotsc,x_{j_k} \SEs$.
We construct the assignment~$\gamma : Z \rightarrow \SBs 0,1 \SEs$
by letting~$\gamma(z^i_j) = 1$ if and only if~$j = j_i$.
We know that~$\alpha \cup \beta \cup \gamma$ satisfies~$\psi'$.
Clearly,~$\alpha \cup \gamma$ satisfies~$(\psi^{X,Z}_{\mtext{corr}} \wedge
\psi^{Z}_{\mtext{row}} \wedge
\psi^{Z}_{\mtext{col}})$.
Hence,~$\alpha \cup \beta$ must satisfy~$\psi$.
Since~$\beta$ was arbitrary, we know that~$\forall Y. \psi[\alpha]$ is true.

Next, it is straightforward to verify that for any
assignment~$\alpha : X \rightarrow \SBs 0,1 \SEs$
that has weight~$m \neq k'$, it holds that
for no assignment~$\gamma : Z \rightarrow \SBs 0,1 \SEs$ 
it is the case that~$\alpha \cup \gamma$
satisfies~$(\psi^{X,Z}_{\mtext{corr}} \wedge
\psi^{Z}_{\mtext{row}} \wedge
\psi^{Z}_{\mtext{col}})$, and thus
that~$\forall Y. \psi[\alpha]$ is true.
}
\end{proof}}

\begin{theorem}
\label{thm:robustcsp-hardness}
\RobustCSPSat{} is \AkE{}-hard,
even when the domain size~$\Card{D}$ is restricted to~$2$.
\end{theorem}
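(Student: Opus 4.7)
The plan is to give a many-one fpt-reduction from \AkEWSat{} (the \AkE{}-complete co-problem of \EkAWSat{}) to the restriction of \RobustCSPSat{} with $|D|=2$, inverting the structure of the reduction in the proof of Theorem~\ref{thm:robustcsp-membership}. Given an instance $(\varphi,k)$ with $\varphi = \forall X. \exists Y. \psi$, where we may assume $\psi$ to be in $3$-CNF, the goal is to build a Boolean CSP $(V,D,C)$ and parameter $k$ such that the CSP is $k$-robustly satisfiable if and only if $(\varphi,k) \in \AkEWSat{}$.

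The CSP variables $V$ contain: (a)~one Boolean variable $v_i$ for each $x_i \in X$; (b)~for each $y_j \in Y$ a block of $k+1$ auxiliary variables $s_j^0,\dots,s_j^k$ interpreted as $\tilde y_j := s_j^0 \oplus \cdots \oplus s_j^k$; and (c)~$O(nk)$ further Boolean variables implementing a sequential-counter encoding of the cardinality constraint $\sum_i v_i = k$ by ternary constraints. Each clause of $\psi$ is turned into a single CSP constraint whose scope is the union of the $v_i$'s it contains and the $s_j^\ell$-components for each $y_j$ appearing, and whose relation enumerates the tuples satisfying the clause under the XOR-interpretation of $\tilde y_j$. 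As preprocessing, every ``pure-$X$'' clause $c$ of $\psi$ is rewritten as $c \lor z_c$ for a fresh $Y$-variable $z_c$ together with the unary clause $z_c=0$; this is semantics-preserving and ensures no resulting CSP constraint has scope consisting solely of $v_i$'s. Finally, $X$ is padded with dummy variables (not appearing in $\psi$) so that $|X|\geq 2k$.

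Two design ideas drive correctness. The $k+1$-XOR encoding of $Y$ ensures that the adversary, who commits at most $k$ variables, can never pin down any $\tilde y_j$, preserving full existential freedom over $Y$. Dually, the cardinality encoding forces the $X$-restriction of every full solution to have weight exactly $k$, so once $\gamma$ commits the $k$ 1-values of a target weight-$k$ $\alpha_X$, any full CSP extension must coincide with $\alpha_X$ on the remaining $v_i$'s. In the forward direction one takes any non-violating $k$-partial $\gamma$, completes its $v_i$-commits to a weight-$k$ $X$-assignment (possible by padding), uses the QBF to obtain a $Y$-witness $\beta$, sets the free $s_j^\ell$'s so that $\tilde y_j = \beta(y_j)$, and fills the counter variables consistently. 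In the reverse direction, for any weight-$k$ $\alpha_X$ one chooses $\gamma$ to commit its $k$ 1-values: the pure-$X$ preprocessing keeps $\gamma$ non-violating, $k$-robustness yields a full extension $\sigma$, the cardinality encoding forces $\sigma|_X=\alpha_X$, and the $Y$-witness is read off as $\beta(y_j) = \tilde y_j$ under $\sigma$.

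The main technical obstacle lies in the cardinality encoding: one must choose it so that any $k$-partial commitment of counter and $v_i$-variables is locally non-violating on each ternary constraint exactly when it can be extended to a valid weight-$k$ $v_i$-assignment, and so that the total size is $f(k)\cdot\mathrm{poly}(n)$. The standard sequential-counter encoding has exactly this propagation property and is compact enough for the overall construction to run in fpt-time. Verifying the careful interplay between the cardinality encoding, the pure-$X$ preprocessing, and the $k+1$-XOR blocks for $Y$-variables is the technical heart of the proof.
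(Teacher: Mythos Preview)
Your approach has a genuine gap in the cardinality part of the construction. The crucial property you need is that \emph{any} partial instantiation of at most~$k$ variables (possibly touching counter variables) that is locally non-violating on every single constraint can be extended to a global CSP solution. You assert that the standard sequential-counter encoding of ``$\sum_i v_i = k$'' has this property, but it does not. Concretely, with counter variables~$c_{i,j}$ meaning ``at least~$j$ of~$v_1,\dots,v_i$ are~$1$'' and ternary linking constraints between consecutive~$i$, the adversary can set~$c_{1,1}=1$ and~$c_{\ell,1}=0$ for some~$\ell\gg 1$ (only two picks). Each of these is locally extendable on the single ternary constraint in which it appears (set~$v_1=1$ for the first; set~$c_{\ell-1,1}=0,\ v_\ell=0$ for the second), so the instantiation is non-violating in the sense of the problem definition. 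Yet globally it is inconsistent: $c_{1,1}=1$ propagates forward to force~$c_{\ell,1}=1$. Hence the forward direction of your reduction fails---you cannot ``fill the counter variables consistently'' as claimed. More generally, any bounded-arity encoding of the cardinality constraint via auxiliary variables will expose such long-range inconsistencies that look locally harmless; obtaining the $k$-local-to-global property you need is precisely what makes this route hard.

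The paper sidesteps this entirely: it introduces \emph{no} auxiliary counter variables. Instead it preprocesses the QBF (Lemma~\ref{lem:exactly-k-false}) so that~$\exists Y.\psi[\alpha]$ is \emph{false} whenever~$\alpha$ has weight~$\neq k$; the weight restriction is thus baked into~$\psi$ itself, and the only CSP variables are the~$x_i$'s together with~$2k{+}1$ copies of each~$y_j$ (using a majority rather than XOR encoding, though your XOR idea for the~$Y$-blocks is perfectly fine and serves the same purpose). In the backward direction the paper then argues that any full CSP solution~$\nu$ extending the adversary's picks must set exactly the~$k$ chosen~$x_i$'s to~$1$, because otherwise the clause constraints derived from~$\psi$ are already unsatisfiable. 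This is the key idea you are missing: encode the weight constraint \emph{inside the formula} rather than as additional CSP constraints with their own attackable variables.
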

\begin{proof}
We give an fpt-reduction from~$\AkEWSat(3\CNF)$ to \RobustCSPSat{}.
Let~$(\varphi,k)$ be an instance of~$\AkEWSat(3\CNF)$,
with~$\varphi = \forall X. \exists Y. \psi$,
and~$\psi = c_1 \wedge \dotsm \wedge c_u$.
By Lemma~\ref{lem:exactly-k-false}, we may assume without loss of generality
that for any assignment~$\alpha : X \rightarrow \SBs 0,1 \SEs$
of weight~$m \neq k$, we have that~$\exists Y. \psi[\alpha]$ is false.
We construct an instance~$(Z,D,C,k)$ of \RobustCSPSat{} as follows.
We define the set~$Z$ of variables
by~$Z = X \cup Y'$, where~$Y' = \SB y^i \SM y \in Y, 1 \leq i \leq 2k+1 \SE$,
and we let~$D = \SBs 0,1 \SEs$.
We will define the set~$C$ of constraints below,
by representing them as a set of clauses whose length
is bounded by~$f(k)$, for some fixed function~$f$.

The intuition behind the construction of~$C$ is the following.
We replace each variable~$y \in Y$, by~$2k+1$ copies~$y^i$ of it.
Assigning a variable~$y \in Y$ to a value~$b \in \SBs 0,1 \SEs$ will
then correspond to assigning a majority of variables~$y^i$ to~$b$,
i.e., assigning at least~$k+1$ variables~$y^i$ to~$b$.
In order to encode this transformation in the constraints of~$C$,
intuitively, we will replace each occurrence of a variable~$y$
\krversion{by the conjuction
$\psi_{y} = \bigwedge_{1 \leq i_1 < \dotsm < i_{k+1} \leq 2k+1}
(y^{i_1} \vee \dotsm \vee y^{i_{k+1}})$,
}\longversion{by the conjunction:
\[ \psi_{y} = \bigwedge\limits_{1 \leq i_1 < \dotsm < i_{k+1} \leq 2k+1}
  (y^{i_1} \vee \dotsm \vee y^{i_{k+1}}), \]}
and replace each occurrence of a literal~$\neg y$
by a similar conjunction.
We will then multiply the resulting formula out into CNF.
Note that whenever a majority of variables~$y^i$ is set to~$b \in \SBs 0,1 \SEs$,
then the formula~$\psi_{y}$ will also evaluate to~$b$.

In the construction of~$C$,
we will directly encode the CNF formula that is a result
of the transformation described above.
For each literal~$l = y \in Y$, let~$l^i$ denote~$y^i$,
and for each literal~$l = \neg y$ with~$y \in Y$, let~$l^i$ denote~$\neg y^i$.
For each literal~$l$ over the variables~$X \cup Y$,
we define a set~$\sigma(l)$ of clauses:
\longversion{\[ \sigma(l) = \begin{dcases*}
  \SB (l^{i_1} \vee \dotsm \vee l^{i_{k+1}}) \SM 1 \leq i_1 < \dotsm < i_{k+1} \leq 2k+1 \SE
  & if~$l$ is a literal over~$Y$; \\
  \SBs l \SEs
  & if~$l$ is a literal over~$X$. \\
\end{dcases*} \]
}\krversion{
$\sigma(l) = (l^{i_1} \vee \dotsm \vee l^{i_{k+1}})
\SM 1 \leq i_1 < \dotsm < i_{k+1} \leq 2k+1 \SE$
if $l$ is a literal over $Y$,
and $\sigma(l) = l$ if $l$ is a literal over $X$.
}

Note that for each literal~$l$, it holds that~$\Card{\sigma(l)} \leq g(k) = \binom{2k+1}{k+1}$.
Next, for each clause~$c_i = l^i_1 \vee l^i_2 \vee l^i_3$ of~$\psi$,
we introduce to~$C$ a set $\sigma(c_i)$ of clauses:
\longversion{\[ \sigma(c_i) = \SB d_1 \vee d_2 \vee d_3 \SM
  d_1 \in \sigma(l^i_1), d_2 \in \sigma(l^i_2), d_3 \in \sigma(l^i_3) \SE. \]}
\krversion{$\sigma(c_i) = \SB d_1 \vee d_2 \vee d_3 \SM
  d_1 \in \sigma(l^i_1), d_2 \in \sigma(l^i_2), d_3 \in \sigma(l^i_3) \SE$.}
Note that~$\Card{\sigma(c_i)} \leq g(k)^3$.
Formally, we let~$C$ be the set of constraints
corresponding to the set~$\bigcup_{1 \leq i \leq u} \sigma(c_i)$ of clauses.
Since each such clause is of length at most~$3(k+1)$,
representing a clause by means of a constraint can be done
by specifying~$\leq 2^{3(k+1)}-1$ tuples,
i.e., all tuples satisfying the clause.
Therefore, the instance~$(Z,D,C,k)$ can be constructed
in fpt-time.
\krversion{A full proof that~$(\varphi,k) \in \AkEWSat(3\CNF)$ if and only
if~$(Z,D,C,k) \in \RobustCSPSat{}$ can be found in the
technical report.}\longversion{
We now argue that~$(\varphi,k) \in \AkEWSat(3\CNF)$ if and only
if~$(Z,D,C,k) \in \RobustCSPSat{}$.

$(\Rightarrow)$
Assume that~$(\varphi,k) \in \AkEWSat(3\CNF)$.
We show that~$(Z,D,C)$ is $k$-robustly satisfiable.
Let~$\mu : Z \rightarrow D$ be an arbitrary partial assignment
with~$\Card{\Dom{\mu}} = k$
that does not violate any constraint in~$C$.
We know that~$\Card{\Dom{\mu} \cap X} \leq k$,
and in particular that~$\Card{\SB x \in \Dom{\mu} \cap X \SM \mu(x) = 1 \SE} = m \leq k$.
Now define the assignment~$\alpha : X \rightarrow \SBs 0,1 \SEs$ as follows.
For any~$x \in X$, if~$x \in \Dom{\mu}$, then
let~$\alpha(x) = \mu(x)$.
Also, for~$k-m$ many variables~$x' \in X \backslash \Dom{\mu}$,
we let~$\alpha(x') = 1$.
For all other variables~$x' \in X \backslash \Dom{\mu}$,
we let~$\alpha(x') = 0$.
Then~$\alpha$ has weight~$k$.
Therefore, there must exist an assignment~$\beta : Y \rightarrow \SBs 0,1 \SEs$
such that~$\psi[\alpha \cup \beta]$ is true.
Now, define the assignment~$\nu : Z \rightarrow D$ extending~$\mu$
as follows.
For each~$z \in \Dom{\mu}$, we let~$\nu(z) = \mu(z)$.
For each~$x \in X \backslash \Dom{\mu}$, we let~$\nu(x) = \alpha(x)$.
For each~$y^i \in Y' \backslash \Dom{\mu}$, we let~$\nu(y^i) = \beta(y)$.
It is straightforward to verify that for
each~$y \in Y$,~$\beta(y) = b \in \SBs 0,1 \SEs$ if and only
if~$\nu$ sets at least~$k+1$ variables~$y^i$ to~$b$.
Using this fact, and the fact that~$\alpha \cup \beta$ satisfies
each clause~$c_i$ of~$\psi$,
it is straightforward to verify that~$\nu$ satisfies each constraint of~$C$,
and therefore that~$\nu$ is a solution of the CSP instance~$(Z,D,C)$.
This concludes our proof that~$(Z,D,C,k) \in \RobustCSPSat{}$.

$(\Leftarrow)$
Now, assume that~$(Z,D,C)$ is~$k$-robustly satisfiable.
We show that~$(\varphi,k) \in \AkEWSat(3\CNF)$.
Let~$\alpha : X \rightarrow \SBs 0,1 \SEs$ be an arbitrary assignment
of weight~$k$.
Now define the partial assignment~$\mu : Z \rightarrow D$
by letting~$x \in \Dom{\mu}$ and~$\mu(x) = \alpha(x)$ if and only
if~$\alpha(x) = 1$, for all~$x \in X$.
Clearly,~$\Card{\Dom{\mu}} = k$, and~$\mu$ does not violate
any constraints of~$C$.
Therefore, we know that there exists an extension~$\nu : Z \rightarrow D$
of~$\mu$ that is a solution for the CSP instance~$(Z,D,C)$.
For all~$x \in X \backslash \Dom{\mu}$ it holds that~$\nu(x) = 0$.
If this weren't the case, this would violate our assumption that for all
assignments~$\alpha' : X \rightarrow \SBs 0,1 \SEs$ with weight~$m \neq k$
we have that~$\exists Y. \psi[\alpha']$ is false.
Therefore, we know that~$\nu$ coincides with~$\alpha$ on the variables~$X$.
Now, we define the assignment~$\beta : Y \rightarrow \SBs 0,1 \SEs$
by letting~$\beta(y) = 1$ if and only if for at least~$k+1$ different~$y^i$
it holds that~$\nu(y^i) = 1$.

We then verify that~$\psi[\alpha \cup \beta]$ is true.
Let~$c_i$ be a clause of~$\psi$.
We know that~$\nu$ satisfies all clauses in~$\sigma(c_i)$.
Assume that~$\alpha \cup \beta$ does not
satisfy~$c_i = l^i_1 \vee l^i_2 \vee l^i_3$.
Then,~$\alpha \cup \beta$ sets all literals~$l^i_1$,~$l^i_2$ and~$l^i_3$ to false.
Therefore, for any literal~$l^i_j$ over~$Y$,~$\nu$ sets at least~$k+1$ copies
of~$l^i_j$ to false.
From this, we can conclude that there is some clause in~$\sigma(c_i)$
that is set to false by~$\nu$, which is a contradiction.
Therefore, we know that~$\psi[\alpha \cup \beta]$ is true.
This concludes our proof that~$(\varphi,k) \in \AkEWSat(3\CNF)$.}
\end{proof}

\begin{proposition}
\label{prop:smallcliqueext-hardness}
\SmallCliqueExt{} is \AEkW{1}-hard.
\end{proposition}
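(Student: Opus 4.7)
My plan is to give an fpt-reduction from $\AEkWSat(2\CNF)$, which is \AEkW{1}-complete by Theorem~\ref{thm:stark-2cnf-hardness}. The idea is to encode assignments to the universally quantified $X$ as cliques in a ``choice gadget'' $V'$, and weight-$k$ assignments to the existentially quantified $Y$ as transversal $k$-cliques across $k$ parallel slots in $V \setminus V'$. A ``safety-net'' gadget $D^*$ will be added so that non-maximal cliques $C \subseteq V'$ can always be trivially extended; only maximal cliques (corresponding to total $\alpha$) then contribute to the condition, yielding the desired equivalence with $\forall X \exists Y^{=k}. \psi$.

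First I normalize $\varphi = \forall X \exists Y^{=k} \psi$ so that $\psi$ is in fully antimonotone 2-CNF, i.e.\ every clause has the form $(\neg z \vee \neg z')$ for distinct $z, z' \in X \cup Y$. This is achieved by adapting the normalization steps in the proof of Theorem~\ref{thm:stark-2cnf-hardness}: flipping $X$-variables is free (it does not affect the weight on $Y$), while eliminating positive $Y$-literals requires Tseitin-like auxiliary variables combined with weight-preserving padding gadgets. Any $X$-only clause or unit clause on $X$ makes $\varphi$ false for some $\alpha$ (output a trivial no-instance); a unit clause $(\neg y_l)$ is removed by dropping $y_l$. Then construct $G = (V,E)$ with $V' = \{a_i, b_i : i \in [n]\}$ the cocktail-party graph (non-edges exactly $\{a_i, b_i\}$), with $k$ slots $V_j = \{w_l^j : l \in [m]\}$ (each an independent set), and with $D^* = \{d_i^s : i \in [n], s \in [k]\}$ forming a clique. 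The edges are: $\{w_l^j, w_{l'}^{j'}\} \in E$ for $j \neq j'$ iff $l \neq l'$ and $(\neg y_l \vee \neg y_{l'}) \notin \psi$; $\{a_i, w_l^j\} \in E$ iff $(\neg x_i \vee \neg y_l) \notin \psi$, while $\{b_i, w_l^j\} \in E$ always; $\{v, d_i^s\} \in E$ for $v \in V'$ iff $v \notin \{a_i, b_i\}$; and no edges between the $V_j$'s and $D^*$. Set $k' = k$.

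For correctness I distinguish two cases for a clique $C \subseteq V'$. If $C$ is non-maximal, some gadget $\{a_{i^*}, b_{i^*}\}$ is disjoint from $C$, and $D = \{d_{i^*}^1, \dotsc, d_{i^*}^k\}$ is a $k$-clique adjacent to every vertex of $C$, satisfying the extension condition trivially. If $C$ is maximal (size $n$, corresponding to a total assignment $\alpha$ via $a_i \in C \iff \alpha(x_i) = 1$), then both the ``other'' vertex of each gadget and every vertex of $D^*$ is non-adjacent to some $c \in C$, forcing $D \subseteq V_1 \cup \dotsb \cup V_k$; the slot structure then forces $D$ to be a transversal $\{w_{l_1}^1, \dotsc, w_{l_k}^k\}$ with pairwise distinct $l_j$, corresponding to a weight-$k$ assignment $\beta$, and the edge conditions translate directly into $\alpha \cup \beta \models \psi$ (since $\psi$ is antimonotone). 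Thus the constructed instance is a yes-instance iff $\varphi$ is true, and the construction is clearly fpt in $|\psi| + k$. The hardest step is the normalization to fully antimonotone form: eliminating positive $Y$-literals without perturbing the weight parameter $k$ is delicate, since flipping a $Y$-variable changes its weight contribution, so it requires a careful Tseitin-style encoding combined with padding dummies designed to keep the weight invariant under the equivalence.
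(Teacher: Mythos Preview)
Your reduction has the right shape, but there is a genuine gap in the normalization step, and the rest of the construction is more elaborate than necessary.

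\textbf{The gap.} You assert that $\psi$ can be brought into \emph{fully} antimonotone 2-CNF (every clause $(\neg z\vee\neg z')$), claiming that ``flipping $X$-variables is free'' handles the universal block. But flipping a variable does not help when that variable occurs with \emph{both} polarities: after the standard antimonotone-in-existential normalization (Step~2 of Theorem~\ref{thm:stark-2cnf-hardness}) you can perfectly well have both $(x_i\vee\neg y_l)$ and $(\neg x_i\vee\neg y_{l'})$ among the mixed clauses, and no single flip of $x_i$ makes both negative. Splitting $x_i$ into two fresh universal variables $x_i,\bar x_i$ does not work either: without an enforced $\bar x_i=\neg x_i$ the ``bad'' assignment $x_i=\bar x_i=1$ simultaneously imposes the constraints of $x_i=0$ and $x_i=1$, and there is no reason a single weight-$k$ $\beta$ should meet both. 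Since your edge rule ``$\{b_i,w_l^j\}\in E$ always'' ignores clauses of the form $(x_i\vee\neg y_l)$, the construction as written is incorrect on instances containing such clauses.

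\textbf{The easy fix.} You do not need full antimonotonicity at all. Keep only the standard antimonotone-in-$Y$ normalization and simply set
\[
\{b_i,w_l^j\}\in E \iff (x_i\vee\neg y_l)\notin\psi,
\]
mirroring the rule for $a_i$. Then both polarities of universal literals are encoded directly in the graph, exactly as the paper does via its two vertices $v_y,v_{\neg y}$ per universal variable.

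\textbf{Comparison with the paper.} The paper's construction is considerably leaner on every front. It uses a single vertex $v_x$ per existential variable (your $k$ slots are an equivalent but redundant encoding of weight-$k$ choices), it handles both polarities of universal literals via the edge rules for $v_y$ and $v_{\neg y}$ (so only antimonotonicity in the existential block is needed, which \emph{is} Step~2 of Theorem~\ref{thm:stark-2cnf-hardness}), and it dispenses with your safety-net gadget $D^*$ entirely: any non-maximal clique $C\subseteq V'$ lies inside some maximal clique $C'$, and the $D$ that extends $C'$ also extends $C$ (since $C\cap D\subseteq C'\cap D=\emptyset$). Your $D^*$ is harmless but unnecessary.
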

\begin{proof}
We give an fpt-reduction from $\AEkWSat(2\CNF)$
to \SmallCliqueExt{}.
Let~$(\varphi,k)$ be an instance of $\AEkWSat(2\CNF)$,
where~$\varphi = \forall Y. \exists X. \psi$.
By step 2 in the proof of Theorem~\ref{thm:stark-2cnf-hardness},
we may assume without loss of generality that~$\psi$ is
antimonotone in~$X$,
i.e., all literals of~$\psi$
that contain variables in~$X$ are negative.

We construct an instance~$(G,V',k)$ of $\SmallCliqueExt{}$ as follows.
We define:
\begin{align*}
  G =&\ (V,E); \\
  V' =&\ \SB v_{y}, v_{\neg y} \SM y \in Y \SE; \\
  V =&\ V' \cup \SB v_{x} \SM x \in X \SE; \\
  E =&\ E_{Y} \cup E_{XY}; \\
  E_{Y} =&\ 
    \SB \SBs v_{y},v' \SEs \SM y \in Y, v' \in V', v' \neq v_{\neg y} \SE\ \cup \\
    &\ \SB \SBs v_{\neg y},v' \SEs \SM y \in Y, v' \in V', v' \neq v_{y} \SE; \mtext{ and} \\
  E_{XY} =&\ 
    \SB \SBs v_{x},v_{x'} \SEs \SM x \in X, x' \in X, \SBs \neg x, \neg x' \SEs \not\in \psi \SE\ \cup \\
    &\ \SB \SBs v_{x},v_{y} \SEs \SM x \in X, y \in Y, \SBs \neg x, \neg y \SEs \not\in \psi \SE\ \cup \\
    &\ \SB \SBs v_{x},v_{\neg y} \SEs \SM x \in X, y \in Y, \SBs \neg x, y \SEs \not\in \psi \SE.
\end{align*}
We claim that~$(\varphi,k) \in \AEkWSat(2\CNF)$ if and only
if~$(G,V',k) \in \SmallCliqueExt{}$.

$(\Rightarrow)$
Assume that~$(\varphi,k) \in \AEkWSat(2\CNF)$.
Let~$C \subseteq V'$ be an arbitrary clique of~$G$.
It suffices to consider maximal cliques~$C$,
i.e., assume there is no clique~$C'$
such that~$C \subsetneq C' \subseteq V'$.
If a maximal clique~$C$ can be extended with~$k$ elements in~$V$
to another clique,
then clearly this holds for all its subsets as well.

We show that for all~$y \in Y$, either~$v_{y} \in C$
or~$v_{\neg y} \in C$.
Assume the contrary, i.e., assume that for some~$y \in Y$
it holds that~$v_{y} \not\in C$ and~$v_{\neg y} \not\in C$.
Then~$C \cup \SBs v_{y} \SEs \supsetneq C$ is a clique.
This contradicts our assumption that~$C$ is maximal.
Now define the assignment~$\alpha_{C} : Y \rightarrow \SBs 0,1 \SEs$
by letting~$\alpha_{C}(y) = 1$ if and only if~$v_{y} \in C$.
We then know that there exists an assignment~$\beta$ to the variables~$X$
of weight~$k$ such that~$\alpha_{C} \cup \beta$ satisfies~$\psi$.
Consider the set~$D_{\beta} = \SB v_{x} \in V \SM x \in X, \beta(x) =
1 \SE \subseteq V$.
Since~$\beta$ has weight~$k$, we know that~$\Card{D_{\beta}} = k$.
Also,~$D_{\beta} \cap C = \emptyset$,
therefore~$\Card{C \cup D_{\beta}} = \Card{C} + k$.
By the construction of~$E$,
and by the fact that~$\alpha_C \cup \beta$ satisfies~$\psi$,
it follows that~$C \cup D_{\beta}$ is a clique.
To see this, assume the contrary, i.e.,
assume that there exist~$v,v' \in C \cup D_{\beta}$
such that~$\SBs v,v' \SEs \not\in E$.
We assume that~$v = v_{x}$ and~$v' = v_{x'}$ for~$x,x' \in X$.
The other cases are analogous.
Then~$\SBs \neg x, \neg x' \SEs \in \psi$.
Since~$v_{x},v_{x'} \in D_{\beta}$, we know that~$\beta(x) = \beta(x') = 1$.
Then~$\alpha_C \cup \beta$ does not satisfy~$\psi$,
which is a contradiction.
From this we can conclude that~$C \cup D_{\beta}$ is a~$(\Card{C}+k)$-clique
of~$G$, and thus that~$(G,V',k) \in \SmallCliqueExt{}$.

$(\Leftarrow)$
Assume~$(G,V',k) \in \SmallCliqueExt{}$.
We show that for all assignments~$\alpha : Y \rightarrow \SBs 0,1 \SEs$
there exists an assignment~$\beta : X \rightarrow \SBs 0,1 \SEs$ of weight~$k$
such that~$\psi[\alpha \cup \beta]$ evaluates to true.
Let~$\alpha : Y \rightarrow \SBs 0,1 \SEs$ be an arbitrary assignment.
Consider~$C_{\alpha} = \SB v_{y} \in V \SM y \in Y, \alpha(y) = 1 \SE \cup
\SB v_{\neg y} \SM y \in Y, \alpha(y) = 0 \SE \subseteq V'$.
By construction of~$E_{Y} \subseteq E$, it follows
that~$C_{\alpha}$ is a clique of~$G$.
We then know that there exists some set~$D \subseteq V$
of size~$k$ such that~$C_{\alpha} \cup D$ is a clique.
We show that~$D \subseteq V \backslash V'$.
To show the contrary, assume that this is not the case,
i.e., assume that there exists some~$v \in D \cap V'$.
By construction of~$C_{\alpha}$, we know that for each~$y \in Y$,
either~$v_y \in C$ or~$v_{\neg y} \in C_{\alpha}$.
We also know that~$v \in \SBs v_{y'}, v_{\neg y'} \SEs$
for some~$y' \in Y$.
Assume that~$v = v_{y'}$; the other case is analogous.
If~$v_{y'} \in C_{\alpha} \cap D$, then~$C_{\alpha} \cup D$ cannot be a clique
of size~$\Card{C_{\alpha}} + k = \Card{C_{\alpha}} + \Card{D}$.
Therefore,~$v_{y'} \not\in C_{\alpha}$, and thus~$v_{\neg y'} \in C_{\alpha}$.
We then know that~$\SBs v_{y'}, v_{\neg y'} \SEs \subseteq C_{\alpha} \cup D$.
However,~$\SBs v_{y'},v_{\neg y'} \SEs \not\in E$,
and therefore~$C_{\alpha} \cup D$ is not a clique.
This is a contradiction, and thus~$D \subseteq V \backslash V'$.

We define~$\beta_{D} : X \rightarrow \SBs 0,1 \SEs$ as follows.
We let~$\beta_{D}(x) = 1$ if and only if~$v_{x} \in D$.
Clearly,~$\beta_D$ is of weight~$k$.
We show that~$\psi[\alpha \cup \beta_{D}]$ evaluates to true.
Consider an arbitrary clause~$c$ of~$\psi$.
Assume that~$c = \SBs \neg x, y \SEs$ for some~$x \in X$
and some~$y \in Y$; the other cases are analogous.
To show the contrary, assume that~$\alpha \cup \beta_{D}$
does not satisfy~$c$, i.e.,~$(\alpha \cup \beta_{D})(x) = 1$
and~$(\alpha \cup \beta_{D})(y) = 0$.
Then~$v_x \in D$ and~$v_{\neg y} \in C$.
However~$\SBs v_x, v_{\neg y} \SEs \not\in E$, and
thus~$C_{\alpha} \cup D$ is not a clique.
This is a contradiction, and therefore we can conclude
that~$\alpha \cup \beta_{D}$
satisfies~$c$.
Since~$c$ was arbitrary, we know that~$\psi[\alpha \cup \beta_{D}]$
evaluates to true.
Thus,~$(\varphi,k) \in \AEkWSat(2\CNF)$.
\end{proof}

\begin{proposition}
\label{prop:smallcliqueext-membership}
\SmallCliqueExt{} is in \AEkW{1}.
\end{proposition}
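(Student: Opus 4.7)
The plan is to give a many-one fpt-reduction from \SmallCliqueExt{} to \AEkWSat{} whose output formula has weft~$1$, thereby placing the problem in \AEkW{1}. Let $(G,V',k)$ with $G=(V,E)$ be the input. I will introduce universally quantified variables $X=\SB x_v \SM v \in V' \SE$, intended to encode an arbitrary subset $C \subseteq V'$ via $\alpha(x_v)=1 \iff v \in C$, and existentially quantified variables $Y=\SB y_v \SM v \in V \SE$, whose weight-$k$ assignments $\beta$ encode subsets $D \subseteq V$ of size exactly $k$.

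The desired formula $\chi$ must express: ``if $C$ is a clique in $G[V']$, then $D$ is disjoint from $C$ and $C \cup D$ is a clique in $G$.'' I define
\[
\chi \;=\; \neg \chi_{\mtext{clique}}(X) \;\vee\; \bigl(\chi_{\mtext{disj}}(X,Y) \wedge \chi_{\mtext{ext}}(X,Y)\bigr),
\]
where $\chi_{\mtext{clique}}(X) = \bigwedge_{\SBs u,v \SEs \subseteq V',\, \SBs u,v \SEs \notin E} (\neg x_u \vee \neg x_v)$ is a 2-CNF forcing $C$ to be a clique; $\chi_{\mtext{disj}}(X,Y) = \bigwedge_{v \in V'}(\neg x_v \vee \neg y_v)$ is a 2-CNF enforcing $C \cap D = \emptyset$; and $\chi_{\mtext{ext}}(X,Y)$ expresses that for every non-edge $\SBs u,v \SEs$ with $u \neq v$, the pair is not contained in $C \cup D$. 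Each such non-edge contributes at most four $2$-clauses (obtained by expanding $\neg((x_u \vee y_u) \wedge (x_v \vee y_v))$, with $x_\cdot$ dropped for vertices outside $V'$), so $\chi_{\mtext{ext}}$ is also 2-CNF. In negation normal form $\neg \chi_{\mtext{clique}}$ becomes a 2-DNF, and a top-level binary OR combines the two sides, so every path from a variable to the output passes through exactly one large gate. Hence $\chi$ belongs to $\Gamma_{1,u}$ for a fixed constant $u$, and the reduction is clearly fpt-time with parameter preserved.

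For correctness, in the forward direction let $\alpha$ be an arbitrary assignment to $X$. If $\alpha$ does not encode a clique in $V'$, it falsifies some clause of $\chi_{\mtext{clique}}$, so any $\beta$ of weight $k$ already satisfies $\chi$; otherwise, letting $C=\SB v \SM \alpha(x_v)=1 \SE$, the yes-assumption supplies a $k$-clique $D$ with $\Card{C \cup D} = \Card{C}+k$, which forces $C \cap D = \emptyset$, and the corresponding weight-$k$ assignment $\beta$ witnesses $\alpha \cup \beta \models \chi_{\mtext{disj}} \wedge \chi_{\mtext{ext}}$. Conversely, given any clique $C \subseteq V'$, the associated $\alpha$ satisfies $\chi_{\mtext{clique}}$, so by the assumed yes-instance property there must exist $\beta$ of weight $k$ with $\alpha \cup \beta \models \chi_{\mtext{disj}} \wedge \chi_{\mtext{ext}}$; the set $D = \SB v \SM \beta(y_v)=1 \SE$ is then disjoint from $C$ and $C \cup D$ is a clique, hence a $(\Card{C}+k)$-clique.

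There is no serious obstacle here; the only care needed is to keep the weft at $1$ when encoding the implication ``$C$ is a clique $\Rightarrow \dots$''. Writing this implication as a disjunction and exploiting the fact that both sides are separately 2-CNF (modulo one De Morgan step) keeps exactly one large gate on every variable-to-output path, which is what the target class \AEkW{1} demands.
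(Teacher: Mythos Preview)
Your proof is correct and follows essentially the same approach as the paper: both give an fpt-reduction to $\AEkWSat$ on a weft-$1$ circuit of the shape ``$C$ is not a clique'' $\vee$ (``$C \cap D = \emptyset$'' $\wedge$ ``$C \cup D$ is a clique''), with universal variables indexing $V'$ and weight-$k$ existential variables indexing $V$. The only differences are cosmetic (you swap the names $X$ and $Y$ relative to the paper, and you describe the expansion of the non-edge constraints into four $2$-clauses explicitly where the paper packages the same thing via sets $\Theta(v)$).
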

\begin{proof}
We give an fpt-reduction from \SmallCliqueExt{}
to $\AEkWSat(\Gamma_{1,3})$.
Let~$(G,V',k)$ be an instance \SmallCliqueExt{},
with~$G = (V,E)$.
We construct an equivalent instance~$(C,k)$ of $\AEkWSat(\Gamma_{1,3})$.
We will define a circuit~$C$,
that has universally quantified variables~$Y$ and
existentially quantified variables~$X$.
We present the circuit~$C$ as a propositional formula.
We define:
\begin{align*}
  Y =&\ \SB y_{v'} \SM v' \in V' \SE; \\
  X =&\ \SB x_{v} \SM v \in V \SE; \\
  C =&\ C_1 \vee (C_2 \wedge C_3); \\
  C_1 =&\ \underset{\SBs v_1,v_2 \SEs \in (V' \times V') \backslash E}{\bigvee}
      \left ( y_{v_1} \wedge y_{v_2} \right ); \\
  C_2 =&\ 
      \underset{v' \in V'}{\bigwedge} \left ( \neg y_{v'} \vee \neg x_{v'} \right ); \\
  C_3 =&\ 
      \underset{e \in (V \times V) \backslash E}{\bigwedge} \chi_{e}; \\
  \chi_e =&\ 
    \underset{z_1 \in \Theta(v_1)}{\underset{z_2 \in \Theta(v_2)}{\bigwedge}}
    \left ( \neg z_1 \vee \neg z_2 \right ),
    & \mtext{for all $\SBs v_1,v_2 \SEs = e \in (V \times V) \backslash E$; and} \\
  \Theta(v) =&\ 
    \begin{dcases*}
      \SBs x_{v}, y_{v} \SEs & if $v \in V'$, \\
      \SBs x_{v} \SEs & otherwise, \\
    \end{dcases*} & \mtext{for all $v \in V$.}
\end{align*}
We show that~$(G,V',k) \in \SmallCliqueExt{}$ if and only
if~$(C,k) \in \AEkWSat(\Gamma_{1,3})$.

$(\Rightarrow)$
Assume that~$(G,V',k) \in \SmallCliqueExt{}$.
This means that for each clique~$F \subseteq (V' \times V')$
there exists some set~$D \subseteq V$ of vertices
such that~$F \cup D$ is a~$(\Card{F} + k)$-clique.
We show that for each assignment~$\alpha : Y \rightarrow \SBs 0,1 \SEs$
there exists an assignment~$\beta : X \rightarrow \SBs 0,1 \SEs$ of weight~$k$
such that~$C[\alpha \cup \beta]$ evaluates to true.
Let~$\alpha : Y \rightarrow \SBs 0,1 \SEs$ be an arbitrary assignment.
We define~$F_{\alpha} = \SB v \in V' \SM \alpha(y_{v}) = 1 \SE$.
We distinguish two cases: either (i)~$F_{\alpha}$ is not a clique in~$G$,
or (ii)~$F_{\alpha}$ is a clique in~$G$.
In case (i), we know that there exist~$v_1,v_2 \in F_{\alpha}$ such
that~$\SBs v_1,v_2 \SEs \not\in E$.
Then~$\alpha$ satisfies~$(y_{v_1} \wedge y_{v_2})$,
and therefore~$\alpha$ satisfies~$C_1$ and~$C$ as well.
Thus for every assignment~$\beta : X \rightarrow \SBs 0,1 \SEs$ of
weight~$k$,~$\alpha \cup \beta$ satisfies~$C$.

Consider case (ii).
Let~$m = \Card{F_{\alpha}}$.
We know that there exists a subset~$D \subseteq V$ of vertices
such that~$F_{\alpha} \cup D$ is an~$(m+k)$-clique.
Define the assignment~$\beta : X \rightarrow \SBs 0,1 \SEs$
by letting~$\beta(x_v) = 1$ if and only if~$v \in D$.
Clearly,~$\beta$ has weight~$k$.
We show that~$\alpha \cup \beta$ satisfies~$C$.
We know there is no~$v \in V$, such that~$v \in F_{\alpha} \cap D$,
since otherwise~$F_{\alpha} \cup D$ could not be a clique of size~$m+k$.
Therefore,~$\alpha \cup \beta$ satisfies~$C_2$.
Since~$F_{\alpha} \cup D$ is a clique, we know that~$\alpha \cup \beta$
satisfies~$C_3$ as well.
Thus~$\alpha \cup \beta$ satisfies~$C$.
This concludes our proof that~$(C,k) \in \AEkWSat(\Gamma_{1,3})$.

$(\Leftarrow)$
Assume that~$(C,k) \in \AEkWSat(\Gamma_{1,3})$.
This means that for each assignment~$\alpha : Y \rightarrow \SBs 0,1 \SEs$
there exists an assignment~$\beta : X \rightarrow \SBs 0,1 \SEs$ of weight~$k$
such that~$C[\alpha \cup \beta]$ evaluates to true.
Let~$F \subseteq (V' \times V')$ be an arbitrary clique,
and let~$m = \Card{F}$.
We show that there exists a set~$D \subseteq V$ of vertices
such that~$F \cup D$ is an $(m+k)$-clique.
Let~$\alpha_{F} : Y \rightarrow \SBs 0,1 \SEs$ be the assignment defined
by letting~$\alpha_{F}(y_v) = 1$ if and only if~$v \in F$.
We know that there must exist an
assignment~$\beta : X \rightarrow \SBs 0,1 \SEs$
of weight~$k$ such that~$C[\alpha_{F} \cup \beta]$ evaluates to true.
Since~$F$ is a clique, it is straightforward to verify that~$\alpha_{F} \cup \beta$
does not satisfy~$C_1$.
Therefore,~$\alpha_{F} \cup \beta$ must satisfy~$C_2$ and~$C_3$.
Consider~$D_{\beta} = \SB v \in V \SM \beta(x_v) = 1 \SE$.
Since~$\beta$ has weight~$k$,~$\Card{D} = k$.
Because~$\alpha_{F} \cup \beta$ satisfies~$C_2$,
we know that~$F \cap D_{\beta} = \emptyset$,
and thus that~$\Card{F \cup D_{\beta}} = m+k$.
Because~$\alpha_{F} \cup \beta$ satisfies~$C_3$,
we know that~$F \cup D_{\beta}$ is a clique in~$G$.
Since~$F$ was arbitrary, we can conclude
that~$(G,V',k) \in \SmallCliqueExt{}$.
\end{proof}

\begin{proposition}
\label{prop:3ce-degree}
\ThreeColExt{}\Degree{} is \para{}\PiP{2}-complete.
\end{proposition}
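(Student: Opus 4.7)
The plan is to establish membership and hardness separately. Membership in $\para{}\PiP{2}$ is immediate: the classical problem \ThreeColExt{} is in $\PiP{2}$~\cite{AjtaiFaginStockmeyer00}, and $\para{\PiP{2}}$ contains every parameterized problem whose underlying classical problem lies in $\PiP{2}$, regardless of the parameter chosen. So we may safely ignore the degree parameter for the upper bound.

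For $\para{\PiP{2}}$-hardness, I will follow the standard route used elsewhere in this paper (see, e.g., the proof of Proposition~\ref{prop:qsat2-twE-fpt}): by~\cite{FlumGrohe03}, it suffices to exhibit a polynomial-time reduction from some $\PiP{2}$-hard problem to a single slice of \ThreeColExt\Degree{}, i.e.\ to instances whose degree is bounded by a fixed constant $c$. My plan is to start from the $\PiP{2}$-hardness reduction of Ajtai, Fagin and Stockmeyer~\cite{AjtaiFaginStockmeyer00} from \QSat{2} to \ThreeColExt{}, and then post-process the output graph $G$ by a local degree-reduction transformation producing a graph $G'$ of bounded degree such that: (i) there is a natural bijection between leaves of $G$ and leaves of $G'$, and this bijection preserves the integer $m$; and (ii) a $3$-coloring of $m$ leaves of $G$ extends to a proper $3$-coloring of $G$ if and only if the corresponding coloring of the leaves of $G'$ extends to a proper $3$-coloring of $G'$.

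For the degree-reduction step, I will use the well-known ``equality gadget'' trick for $3$-colourings: each internal vertex $v$ of $G$ of high degree $d$ is replaced by a path (or balanced binary tree) of $d$ copies $v_1,\dots,v_d$, where every consecutive pair $(v_i,v_{i+1})$ is connected by a small fixed gadget (a $6$-cycle suffices) that forces them to receive the same colour in any proper $3$-colouring, and each $v_i$ is adjacent to exactly one of the original neighbours of $v$. All internal vertices introduced have degree $2$ or $3$, so the resulting graph has degree at most a small constant $c$, and properness of the $3$-colouring behaves exactly as in $G$. Importantly, this transformation is applied only to non-leaf vertices of $G$; since leaves already have degree $1$, they are copied unchanged into $G'$, giving property~(i).

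The main technical subtlety, and in my view the only real obstacle, is to make sure that the degree-reduction gadgets do not create any new vertices of degree $1$ (which would spuriously add leaves and alter the semantics of the instance), and that they exactly preserve the $3$-colourability structure. I expect this to be routine but needs to be spelled out carefully: one must verify that the $6$-cycle gadget (or whichever equality gadget is chosen) admits, for any fixed colour $\alpha\in\{1,2,3\}$ assigned to both endpoints, a proper $3$-colouring of the internal gadget vertices, and admits no proper $3$-colouring when the two endpoints receive different colours. Once this is established, correctness of the whole reduction follows by induction on the vertices replaced, and composition with the Ajtai--Fagin--Stockmeyer reduction yields $\PiP{2}$-hardness of the slice of \ThreeColExt\Degree{} with parameter value $c$, concluding $\para{\PiP{2}}$-hardness.
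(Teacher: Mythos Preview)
Your membership argument matches the paper's. For hardness, however, you are working much harder than necessary: the paper simply observes that the reduction of Ajtai, Fagin and Stockmeyer~\cite{AjtaiFaginStockmeyer00} \emph{already} yields graphs of degree at most~$4$, so $\PiP{2}$-hardness holds for a fixed slice of the parameter and $\para{\PiP{2}}$-hardness follows immediately via~\cite{FlumGrohe03}. No post-processing degree-reduction step is needed.

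Your alternative route---compose the AFS reduction with a local degree-reduction gadget---would in principle also work, but your concrete choice of gadget is wrong. A $6$-cycle does \emph{not} force equal colours on a designated pair of vertices: for any two vertices $u,v$ of a $6$-cycle and any colours $c(u),c(v)\in\{1,2,3\}$ (equal or not), the remaining four vertices can be properly $3$-coloured, since each of the two $u$--$v$ paths has length at least~$2$. A correct equality gadget exists (for example, a triangle $p,q,r$ with both $u$ and $v$ adjacent to $p$ and $q$ forces $c(u)=c(v)=c(r)$ and introduces no degree-$1$ vertices), and with such a gadget your argument would go through. But given that the bounded-degree hardness is already in the literature you cite, the simpler route is just to invoke it.
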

\begin{proof}
It is known that
\ThreeColExt{} is already \PiP{2}-hard when restricted
to graphs of degree~$4$~\cite{AjtaiFaginStockmeyer00}.
From this, it follows immediately that \ThreeColExt{} is
\para{}\PiP{2}-hard~\cite{FlumGrohe03}.
Membership in \para{}\PiP{2} follows directly from
the \PiP{2}-membership of \ThreeColExt{}.
\end{proof}

\begin{proposition}
\label{prop:3ce-numleaves}
\ThreeColExt{}\NumLeaves{} is \para{}\NP{}-complete.
\end{proposition}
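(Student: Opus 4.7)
The plan is to show membership in $\para{\NP}$ by an fpt-reduction to \thSAT{}, and hardness by observing that the zero-th slice is already \NP{}-hard.

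For membership, the key observation is that the number of leaves of $G$, which equals the parameter $n$, bounds the number of candidate partial leaf-colorings. There are at most $\binom{n}{m} \cdot 3^m \leq 6^n$ many $3$-colorings of exactly $m$ leaves. Given an instance $(G,m)$, I would first enumerate all such partial colorings $c_1,\dotsc,c_N$ with $N \leq 6^n$ in fpt-time. For each $c_i$, the question whether $c_i$ extends to a proper $3$-coloring of $G$ is an $\NP{}$-question, so it can be encoded in polynomial time as a propositional formula $\varphi_i$ that is satisfiable iff $c_i$ extends. The original instance is a yes-instance iff every $\varphi_i$ is satisfiable, which, using variable-disjoint copies $\varphi'_i$, is equivalent to satisfiability of the single formula $\bigwedge_{1 \leq i \leq N} \varphi'_i$. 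The size of this combined formula is $O(6^n \cdot \mtext{poly}(\Card{G}))$, so this yields an fpt-reduction to \thSAT{}, witnessing membership in $\para{\NP}$.

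For hardness, I would use the characterization that a parameterized problem is $\para{\NP}$-hard whenever some slice is $\NP{}$-hard~\cite{FlumGrohe03}. Consider the slice where the parameter value is $n = 0$, i.e., graphs without any leaves (all vertices have degree $\neq 1$, in particular degree $\geq 2$). For such a graph $G$ together with $m = 0$, the question becomes whether the (unique) empty partial leaf-coloring can be extended to a proper $3$-coloring, which is exactly the $3$-colorability question for $G$. Since $3$-colorability is \NP{}-complete on graphs without leaves---this can be obtained by a trivial modification of any standard reduction, e.g., by subdividing pendant edges or by observing that $3$-colorability is already \NP{}-hard on graphs of minimum degree at least $2$---the zero-th slice of \ThreeColExt{}\NumLeaves{} is \NP{}-hard, and hence the problem is $\para{\NP}$-hard.

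The main obstacle is essentially notational: justifying that the $6^n$-fold SAT conjunction counts as an fpt-reduction, and verifying that the existing \NP{}-hardness reductions for $3$-colorability can be chosen to produce leaf-free graphs so that the zero-slice argument is clean. Neither of these requires deep new ideas; both are standard bookkeeping.
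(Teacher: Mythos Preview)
Your proposal is correct and follows essentially the same approach as the paper: enumerate all $f(n)$ many partial leaf-colorings, encode extendability of each as a SAT instance, and take the conjunction of variable-disjoint copies for membership; and use \NP{}-hardness of $3$-colorability on leaf-free graphs (the slice $n=0$) for hardness. The only cosmetic difference is your bound of $6^n$ versus the paper's claimed $3^k$ on the number of partial colorings; in fact your bound is safer, since $\binom{k}{m}\cdot 3^m$ can exceed $3^k$ (e.g., $k=4$, $m=3$), though both are fpt bounds and the argument goes through either way with the correct bound $4^k$.
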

\begin{proof}
To show membership in \para{}\NP{},
we give an fpt-reduction to \thSAT{}.
Let~$(G,m)$ be an instance of \ThreeColExt{}\NumLeaves{},
where~$k$ denotes the number of leaves of~$G$.
We construct a propositional formula that is satisfiable
if and only if~$(G,m) \in \ThreeColExt{}\NumLeaves{}$.
Let~$V'$ denote the set of leaves of~$G$,
and let~$C$ be the set of all $3$-colorings
that assigns~$m$ many vertices in~$V'$ to any color.
We know that~$\Card{C} \leq 3^k$.
For each~$c \in C$, we know that the problem of deciding
whether~$c$ can be extended to a proper $3$-coloring of~$G$
is in \NP{}.
Therefore, for each~$c$, we can construct a propositional
formula~$\varphi_c$ that is satisfiable if and only if~$c$ can be extended
to a proper $3$-coloring of~$G$.
We may assume without loss of generality that for any distinct~$c,c' \in C$
it holds that~$\varphi_c$ and~$\varphi_{c'}$ are variable-disjoint.
We then let~$\varphi = \bigwedge\nolimits_{c \in C} \varphi_c$.
Clearly,~$\varphi$ is satisfiable
if and only~$(G,m) \in \ThreeColExt{}$.
Moreover,~$\varphi$ is of size~$O^*(3^k)$.

\sloppypar
Hardness for \para{}\NP{} follows directly from the \NP{}-hardness
of deciding whether a given graph has a proper $3$-coloring,
which corresponds to the restriction of \ThreeColExt{}
to instances with~$k = 0$, i.e., to graphs that have no leaves.
\end{proof}

\begin{proposition}
\label{prop:3ce-numcolleaves}
\ThreeColExt{}\NumColLeaves{} is \AkE{}-complete.
\end{proposition}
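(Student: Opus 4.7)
The plan is to prove both membership and hardness via fpt-reductions to and from appropriate variants of weighted quantified satisfiability for the \stark{} direction, following the same overall pattern used in this compendium for other $\forall^k\exists^*$-style problems.

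For membership in \AkE{}, I would give an fpt-reduction to \AkEWSat{}. Given $(G,m)$ with $G = (V,E)$ and leaf set $L \subseteq V$, introduce universal variables $y_{v,c}$ for $v \in L$ and $c \in \{1,2,3\}$ (intended to encode which $m$ leaves receive which colors) and existential variables $x_{v,c}$ for $v \in V$ and $c \in \{1,2,3\}$ (intended to encode the extension). The matrix will take the form $(\psi_{\mtext{valid}} \rightarrow \psi_{\mtext{extend}})$, where $\psi_{\mtext{valid}}$ says that the $y$-assignment selects exactly one color for each of exactly $m$ distinct leaves, and $\psi_{\mtext{extend}}$ says that the $x$-assignment assigns exactly one color per vertex, agrees with the $y$-assignment on the colored leaves, and is proper on all edges. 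Universal assignments of weight $m$ that fail $\psi_{\mtext{valid}}$ trivially satisfy the implication, matching the form required by \AkEWSat{} with parameter $m$ after a normalization analogous to Lemma~\ref{lem:exactly-k-false}.

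For hardness, I would give an fpt-reduction from $\AkEWSat(3\DNF)$. Given $(\varphi,k)$ with $\varphi = \forall X. \exists Y. \psi$ and $X = \{x_1,\dotsc,x_n\}$, I would build a graph $G_\varphi$ having exactly $n$ designated leaves $\ell_1,\dotsc,\ell_n$ (one per $X$-variable) and set $m = k$. The design goal is that coloring exactly $k$ leaves corresponds to a weight-$k$ assignment $\alpha$ to $X$ (with $\alpha(x_i) = 1$ iff $\ell_i$ is the $i$-th colored leaf), and that such a partial coloring extends to a proper 3-coloring of $G_\varphi$ iff $\exists Y. \psi[\alpha]$ holds. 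To implement this, I would attach each $\ell_i$ to a small color-neutralizing gadget that transmits only the single bit ``$\ell_i$ is colored / $\ell_i$ is uncolored'' into the body of $G_\varphi$, absorbing the three-fold color choice at $\ell_i$; the body of $G_\varphi$ is then obtained via the standard NP-hardness reduction from SAT to 3-coloring applied to $\exists Y.\psi$, with the transmitted bits playing the role of the $X$-inputs.

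The main obstacle will be designing the color-neutralizing leaf gadgets so that they (i) genuinely absorb the three color choices on $\ell_i$ without creating spurious obstructions to the extension, and (ii) reliably propagate the binary colored/uncolored signal into the SAT-to-3-coloring gadget in a way that is compatible with the three available colors. I expect that adaptations of the gadgets from the Ajtai--Fagin--Stockmeyer $\PiP{2}$-hardness proof for 3-coloring extension, combined with the usual variable-gadget construction for SAT-to-3-coloring, will suffice. Once the gadgets are in place, correctness follows by checking in both directions that $\varphi$ is true iff every weight-$k$ partial coloring of the $n$ leaves of $G_\varphi$ extends to a proper 3-coloring, which is immediate from the encoding.
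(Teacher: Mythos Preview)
Your membership argument via a direct reduction to \AkEWSat{} is fine and essentially interchangeable with the paper's, which instead reduces to the first-order model checking problem \AkEMC{}; both targets are \AkE{}-complete, so either works.

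For hardness, however, you are taking a longer road than necessary, and there is one genuine slip. First the slip: you reduce from $\AkEWSat(3\DNF)$, but for a 3DNF matrix the inner problem $\exists Y.\psi[\alpha]$ is in~P, so this variant is not \AkE{}-hard (and, consistently, your plan to ``apply the standard NP-hardness reduction from SAT to 3-coloring'' to $\exists Y.\psi$ only makes sense for CNF). The correct source is $\AkEWSat(3\CNF)$, the dual of $\EkAWSat(3\DNF)$.

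More substantively, the paper does not build a new reduction at all. It simply observes that the Ajtai--Fagin--Stockmeyer reduction used to establish $\PiP{2}$-hardness of the unparameterized problem already is an fpt-reduction from \AkEWSat{} to \ThreeColExt{}\NumColLeaves{}, leaving the parameter unchanged. You even cite this reduction as a source of gadgets to adapt, but no adaptation is needed: the reduction can be invoked as a black box. Your alternative encoding---where the \emph{set} of colored leaves (not their colors) carries the weight-$k$ assignment, and ``color-neutralizing gadgets'' absorb the threefold choice---is a plausible second route, but designing those gadgets is exactly the nontrivial work that citing Ajtai--Fagin--Stockmeyer avoids.
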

\begin{proof}
To show membership,
we give an fpt-reduction from \ThreeColExt{}\NumColLeaves{}
to \AkEMC{}.
Let~$(G,m)$ be an instance of \ThreeColExt{}\NumColLeaves{},
where~$V'$ denotes the set of leaves of~$G$,
and where~$k = m$ is the number of edges that can be pre-colored.
Moreover, let~$V' = \SBs v_1,\dotsc,v_n \SEs$
and let~$V = V' \cup \SBs v_{n+1},\dotsc,v_u \SEs$.
We construct an instance~$(\mathcal{A},\varphi)$
of \AkEMC{}.
We define the domain~$A = \SB a_{v,i} \SM v \in V', 1 \leq i \leq 3 \SE \cup
\SBs 1,2,3 \SEs$.
Next, we define~$C^{\mathcal{A}} = \SBs 1,2,3 \SEs$,
$S^{\mathcal{A}} =
\SB (a_{v,i},a_{v,i'}) \SM v \in V', 1 \leq i,i' \leq 3 \SE$,
and~$F^{\mathcal{A}} = \SB (j,j') \SM 1 \leq j,j' \leq 3, j \neq j' \SE$.
Then, we can define the formula~$\varphi$,
by letting~$\varphi =
  \forall x_1,\dotsc,x_k.
  \exists y_1,\dotsc,y_u.
(\psi_{1} \rightarrow (\psi_{2} \wedge
\psi_{3} \wedge \psi_{4}))$,
where~$\psi_{1} =
\bigwedge\nolimits_{1 \leq j < j' \leq k} \neg S(x_i,x_{i'})$,
and~$\psi_{2} =
\bigwedge\nolimits_{1 \leq j \leq u} C(y_j)$,
and~$\psi_{3} =
  \bigwedge\nolimits_{v_j \in V', 1 \leq i \leq 3}
  ((\bigvee\nolimits_{1 \leq \ell \leq k} (x_{\ell} = a_{v_j,i}))
  \rightarrow (y_j = i))$,
and~$\psi_{4} =
  \bigwedge\nolimits_{\SBs v_j,v_{j'} \SEs \in E}
  F(y_j,y_{j'})$.
It is straightforward to verify that~$(G,m) \in \ThreeColExt{}$
if and only if~$\mathcal{A} \models \varphi$.

Intuitively, the assignments to the variables~$x_i$ correspond
to the pre-colorings of the vertices in~$V'$.
This is done by means of elements~$a_{v,i}$, which represent
the coloring of vertex~$v$ with color~$i$.
The subformula~$\psi_{1}$ is used to disregard any assignments
where variables~$x_i$ are not assigned to the intended elements.
Moreover, the assignments to the variables~$y_i$ correspond
to a proper $3$-coloring extending the pre-coloring.
The subformula~$\psi_{2}$ ensures that the variables~$y_i$
are assigned to a color in~$\SBs 1,2,3 \SEs$,
the subformula~$\psi_{3}$ ensures that
the coloring encoded by the assignment to the variables~$y_i$
extends the pre-coloring encoded by the assignment to the
variables~$x_i$, and the subformula~$\psi_{4}$ ensures
that the coloring is proper.

To show hardness,
it suffices to observe that the polynomial-time reduction
from \co{}\QSat{2} to \ThreeColExt{} to show \PiP{2}-hardness
\cite[Appendix]{AjtaiFaginStockmeyer00}
can be seen as an fpt-reduction from \AkEWSat{}
to \ThreeColExt{}\NumColLeaves{}
(which leaves the parameter unchanged).
\end{proof}

\begin{proposition}
\label{prop:3ce-numuncolleaves}
\ThreeColExt{}\NumUncolLeaves{} is \para{}\PiP{2}-complete.
\end{proposition}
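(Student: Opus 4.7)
The plan is the following. \emph{Membership} in \para{\PiP{2}} is immediate: the unparameterized \ThreeColExt{} already lies in \PiP{2}, so the parameterized version is in \para{\PiP{2}} with a trivial precomputation on the parameter.

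For \para{\PiP{2}}-\emph{hardness}, I will invoke the characterization of~\cite{FlumGrohe03} that a parameterized problem is \para{\PiP{2}}-hard whenever its slice at some fixed value of the parameter is classically \PiP{2}-hard. The plan is to fix the parameter value to $0$. At this slice we have $n - m = 0$, that is, every leaf of $G$ must be pre-colored, and the question becomes whether every $3$-coloring of all leaves of $G$ extends to a proper $3$-coloring of $G$.

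To show \PiP{2}-hardness of this slice, I appeal to the classical reduction of Ajtai, Fagin and Stockmeyer~\cite{AjtaiFaginStockmeyer00} from \co{}\QSat{2} to \ThreeColExt{}, the same reduction used for Proposition~\ref{prop:3ce-degree}. In that construction each leaf of the produced graph corresponds to a universally quantified variable of the input QBF and is pre-colored so that its color encodes a truth value of the variable. Consequently every leaf is pre-colored, yielding $m = n$ and parameter value $0$.

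The main obstacle will be to confirm that the AFSt construction really leaves no uncolored leaves. Should some uncolored leaves remain, a simple padding step repairs the reduction: for each remaining uncolored leaf $v$, attach a fresh triangle by adding three new vertices $w_1, w_2, w_3$ with all edges among them together with an additional edge $\{v, w_1\}$. The resulting graph has no new leaves (each $w_i$ has degree $2$ or $3$) and $v$ itself becomes a non-leaf of degree at least $2$. The triangle imposes no additional constraint on $v$'s color: for any color assigned to $v$, there is a permutation of the three colors on $w_1, w_2, w_3$ (with $w_1$ receiving a color distinct from $v$'s) that is a proper local coloring. Hence the padded graph is a yes-instance of \ThreeColExt{} iff the original is, and all its leaves are pre-colored, so the slice at parameter value $0$ inherits \PiP{2}-hardness from the full problem.
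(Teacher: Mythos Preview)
Your proposal is correct and follows essentially the same route as the paper: membership is immediate from the classical $\PiP{2}$-membership, and hardness comes from the Flum--Grohe slice criterion applied at parameter value $0$, invoking the Ajtai--Fagin--Stockmeyer reduction, which already yields instances with $n=m$. The paper's proof is just the two-line version of this; your additional triangle-padding fallback is not needed (AFSt already gives $n=m$ directly), but it does no harm.
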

\begin{proof}
We know that \ThreeColExt{} is already \PiP{2}-hard when restricted
to instances where~$n = m$ (and thus
where~$k = 0$)~\cite{AjtaiFaginStockmeyer00}.
From this, it follows immediately that \ThreeColExt{} is
\para{}\PiP{2}-hard~\cite{FlumGrohe03}.
Membership in \para{}\PiP{2} follows directly from
the \PiP{2}-membership of \ThreeColExt{}.
\end{proof}

\begin{proposition}
\label{prop:ake-fd-membership}
Let~$\tau$ be an arbitrary relational vocabulary,
and let~$\tau' \subseteq \tau$ be a subvocabulary of~$\tau$.
Let~$\varphi(X)$ be a first-order formula over~$\tau$
with a free relation variable~$X$ of arity~$s$.
The problem $\AkEFDform{(\tau,\tau')}{\varphi}$ is in \AkE{}.
\end{proposition}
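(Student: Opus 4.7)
The plan is to show that the complement of $\AkEFDform{(\tau,\tau')}{\varphi}$ lies in \EkA{} by giving an fpt-reduction to \EkAWSat{}. The complement problem asks, given $(\mathcal{B},k)$, whether there exists a $\tau$-structure $\mathcal{A}$ extending $\mathcal{B}$ with weight $k$ such that \emph{for no} relation $S \subseteq A^s$ it holds that $\mathcal{A} \models \varphi(S)$. This has exactly the shape we want: an existential weight-$k$ choice (the extension) followed by an unrestricted universal condition (over all $S$).

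First I would introduce propositional variables for the existential block. For each relation symbol $R \in \tau \setminus \tau'$ of arity $r$, and for each tuple $\bar{a} \in B^r$ (where $B$ is the domain of $\mathcal{B}$), introduce a variable $z_{R,\bar{a}}$ encoding whether $\bar{a} \in R^{\mathcal{A}}$. There are polynomially many such variables, and any weight-$k$ assignment corresponds bijectively to a weight-$k$ extension $\mathcal{A}$ of $\mathcal{B}$. For the universal block, introduce variables $y_{\bar{a}}$ for each $\bar{a} \in B^s$, encoding membership in $S$; every assignment to these variables corresponds to exactly one choice of $S \subseteq A^s$.

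The main technical step is then to construct in polynomial time a propositional formula $\hat{\varphi}$ over the $z$- and $y$-variables that is true under an assignment iff the corresponding $\mathcal{A}$ and $S$ satisfy $\varphi(S)$. Because $\varphi(X)$ is a \emph{fixed} first-order formula (not part of the input), it has a constant number of first-order quantifiers, and I would obtain $\hat{\varphi}$ by the standard unfolding: replace each $\exists x.\,\chi$ by $\bigvee_{b \in B} \chi[x\mapsto b]$ and each $\forall x.\,\chi$ by $\bigwedge_{b \in B} \chi[x\mapsto b]$; then replace each atomic subformula $R(b_1,\dotsc,b_r)$ with $R \in \tau \setminus \tau'$ by $z_{R,(b_1,\dotsc,b_r)}$, each $X(b_1,\dotsc,b_s)$ by $y_{(b_1,\dotsc,b_s)}$, and each $R'(b_1,\dotsc,b_r)$ with $R' \in \tau'$ by its Boolean truth value in $\mathcal{B}$. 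Since the number of quantifier unfoldings is constant in $|\varphi|$, the resulting $\hat{\varphi}$ has size polynomial in $|B|$.

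The instance $(\exists Z.\,\forall Y.\,\neg \hat{\varphi}, k)$ of \EkAWSat{} is then equivalent to the complement of $(\mathcal{B},k)$, and the whole reduction runs in fpt-time (in fact polynomial time, with the parameter unchanged), giving an fpt-reduction of the complement into $\EkAWSat$ and hence into \EkA{}. The main subtlety to watch is the size bound on $\hat{\varphi}$: it is polynomial only because $\varphi$ is fixed, so an unfolding of depth equal to the quantifier rank of $\varphi$ produces a constant-factor blowup in the exponent; this is the only place in the argument where fpt-ness (as opposed to polynomial-time reducibility) is not obvious, and it is where I expect most of the care will be required in writing a detailed proof.
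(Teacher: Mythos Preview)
Your proposal is correct and takes essentially the same approach as the paper: the paper reduces directly to \AkEWSat{} (rather than reducing the complement to \EkAWSat{}, which is equivalent since $\AkE = \co{\EkA}$), introducing exactly the same propositional variables for the extension and for $S$, and building the matrix by the same inductive unfolding of the fixed formula $\varphi$ over the finite domain. The paper first normalizes $\varphi$ to use only $\exists$, $\neg$, $\wedge$ before unfolding, but this is a cosmetic difference from your handling of both quantifiers directly.
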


\begin{proof}
We show membership in \AkE{} by giving an fpt-reduction to \AkEWSat{}.
Let~$(\mathcal{B},k)$ be an instance of~$\AkEFDform{(\tau,\tau')}{\varphi}$,
where~$\mathcal{B}$ is a $\tau'$-structure with domain~$B$.
We construct an instance~$(\psi,k)$ of~$\AkEWSat{}$ as follows.
We consider the set~$W = \SB w_{R,\overline{b}} \SM R \in \tau \backslash \tau',
\overline{b} \in B^{\arity{R}} \SE$ of variables that will represent
the interpretation of the relation symbols~$R \in \tau \backslash \tau'$.
In addition, we consider the set~$Y = \SB y_{\overline{b}} \SM \overline{b} \in B^s \SE$
of variables that will represent the choice of the relation~$S \subseteq B^{s}$.
We will let:
\[ \psi = \forall W. \exists Y. \psi^{\varphi}, \]
where~$\psi^{\varphi}$ is defined below by induction
on the structure of~$\varphi$.
Concretely, for each formula~$\chi$ such that~$\chi = \chi'[\beta]$
for some subformula~$\chi'$ of~$\varphi$ and some
assignment~$\beta : \Free{\chi'} \rightarrow B$,
we will define the formula~$\psi^{\chi}$.
We assume without loss of generality that~$\varphi$
contains only existential quantifiers and only the connectives~$\neg$
and~$\wedge$.
We distinguish several cases.
Consider the case where~$\chi = R(b_1,\dotsc,b_n)$
for some~$b_1,\dotsc,b_n \in B$,
and where~$R \in \tau'$.
We define:
\[ \psi^{\chi} = \begin{dcases*}
  \top & if~$(b_1,\dotsc,b_n) \in R^{\mathcal{B}}$, \\
  \bot & otherwise. \\
\end{dcases*} \]
Next, consider the case where~$\chi = R(b_1,\dotsc,b_n)$
for some~$b_1,\dotsc,b_n \in B$,
and where~$R \in \tau \backslash \tau'$.
We let~$\overline{b} = (b_1,\dotsc,b_n)$.
We then define:
\[ \psi^{\chi} = w_{R,\overline{b}} . \]
Next, consider the case where~$\chi = X(b_1,\dotsc,b_s)$
for some~$b_1,\dotsc,b_s \in B$.
We let~$\overline{b} = (b_1,\dotsc,b_s)$, and
we define:
\[ \psi^{\chi} = y_{\overline{b}}. \]
Next, consider the case where~$\chi = \exists z. \chi'$.
We define:
\[ \psi^{\chi} = \bigvee\limits_{b \in B} \psi^{\chi'[z \mapsto b]}. \]
Finally,
in the case where~$\chi = \chi_1 \wedge \chi_2$, we
define~$\psi^{\chi} = \psi^{\chi_1} \wedge \psi^{\chi_2}$,
and in the case where~$\chi = \neg \chi_1$,
we let~$\psi^{\chi} = \neg \psi^{\chi_1}$.
We claim that~$(\psi,k) \in \AkEWSat{}$ if and only
if~$(\mathcal{B},k) \in \AkEFDform{(\tau,\tau')}{\varphi}$.

$(\Rightarrow)$
Assume that for all assignments~$\alpha : W \rightarrow \SBs 0,1 \SEs$
of weight~$k$ the
formula~$\exists Y. \psi^{\varphi}[\alpha]$
is true.
Now let~$\mathcal{A}$ be an arbitrary $\tau$-structure extending~$\mathcal{B}$
with weight~$k$. We show that there exists some~$S \subseteq B^s$
such that~$\mathcal{A} \models \varphi(S)$.
We define the assignment~$\alpha_{\mathcal{A}} : W \rightarrow \SBs 0,1 \SEs$
as follows.
For each~$R \in \tau \backslash \tau'$ and each~$\overline{b} \in B^{\arity{R}}$,
we let~$\alpha_{\mathcal{A}}(w_{R,\overline{b}}) = 1$ if and only
if~$\overline{b} \in R^{\mathcal{A}}$.
Since~$\mathcal{A}$ extends~$\mathcal{B}$ with weight~$k$,
we know that~$\alpha_{\mathcal{A}}$ has weight~$k$.
Therefore, we know that there must exist some
assignment~$\beta : Y \rightarrow \SBs 0,1 \SEs$ such
that~$\psi^{\varphi}[\alpha_{\mathcal{A}} \cup \beta]$ is true.
We define the relation~$S_{\beta} \subseteq B^s$ as follows.
For each~$\overline{b} \in B^s$, we let~$\overline{b} \in S_{\beta}$
if and only if~$\beta(y_{\overline{b}}) = 1$.
It is now straightforward to verify by induction
on the structure of~$\varphi$ that~$\mathcal{A} \models \varphi(S_{\beta})$
if and only if~$\psi^{\varphi}[\alpha_{\mathcal{A}} \cup \beta]$ is true.
This concludes our proof
that~$(\mathcal{B},k) \in \AkEFDform{(\tau,\tau')}{\varphi}$.

$(\Leftarrow)$
Conversely, assume that for every $\tau$-structure~$\mathcal{A}$
that extends~$\mathcal{B}$ with weight~$k$
there exists some relation~$S \subseteq B^s$ such
that~$\mathcal{A} \models \varphi(S)$.
Now let~$\alpha : W \rightarrow \SBs 0,1 \SEs$ be an arbitrary
assignment of weight~$k$.
We define the $\tau$-structure~$\mathcal{A}_{\alpha}$ as follows.
For each~$R \in \tau \backslash \tau'$ we let~$R^{\mathcal{A}_{\alpha}}
= \SB \overline{b} \in B^{\arity{R}} \SM \alpha(w_{R,\overline{b}}) = 1 \SE$.
Since~$\alpha$ has weight~$k$, we know that~$\mathcal{A}_{\alpha}$
extends~$\mathcal{B}$ with weight~$k$.
Therefore, we know that there exists some relation~$S \subseteq B^s$
such that~$\mathcal{A}_{\alpha} \models \varphi(S)$.
We now define the assignment~$\beta_{S} : Y \rightarrow \SBs 0,1 \SEs$
as follows.
For each~$\overline{b} \in B^s$ we let~$\beta_{S}(y_{\overline{b}}) = 1$
if and only if~$\overline{b} \in S$.
It is now straightforward to verify by induction
on the structure of~$\varphi$ that~$\mathcal{A}_{\alpha} \models \varphi(S)$
if and only if~$\psi^{\varphi}[\alpha \cup \beta_{S}]$ is true.
This concludes our proof
that~$(\psi,k) \in \AkEWSat{}$.
\end{proof}

\begin{proposition}
\label{prop:ake-fd-hardness}
The problem $\AkEFDform{(\tau,\tau')}{\varphi}$ is \AkE{}-hard,
for some relational vocabulary~$\tau$,
some subvocabulary~$\tau' \subseteq \tau$ of~$\tau$,
and some first-order formula~$\varphi(Z) \in \Pi_2$ over~$\tau$
with a free relation variable~$Z$ of arity~$s$.
\end{proposition}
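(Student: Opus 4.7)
The plan is to exhibit a specific $\tau$, $\tau'$, and $\varphi(Z) \in \Pi_2$ and give an fpt-reduction from $\AkEWSat(3\CNF)$ (the dual of the \EkA{}-hard problem $\EkAWSat(3\DNF)$, established in Theorem~\ref{thm:kstar-collapse}) to $\AkEFDform{(\tau,\tau')}{\varphi}$. The construction will parallel the \EkA{}-hardness proof of $\EkAMC{}$ (Proposition~\ref{prop:modelchecking-hardness}), but arranged so that it is the \emph{extension} of the structure (i.e.\ the interpretation of $\tau \setminus \tau'$) that plays the role of the weighted universal block, and the free relation variable $Z$ that plays the role of the unrestricted existential block. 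Concretely, I would fix $\tau' = \{V_X, V_Y, C, P, N\}$ (two unary predicates marking variables of $X$ and of $Y$, one unary predicate marking clauses, and two binary predicates recording positive and negative incidence between clauses and variables) and $\tau = \tau' \cup \{U\}$ with $U$ unary, and take $s = 1$.

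Given an instance $(\chi, k)$ of $\AkEWSat(3\CNF)$ with $\chi = \forall X. \exists Y. \psi$, I would build $\mathcal{B}$ with domain $X \cup Y \cup \mtext{Clauses}(\psi)$, interpreting the predicates in $\tau'$ in the natural way (so that $P(c,v)$ and $N(c,v)$ describe the literals of $\psi$). Intuitively, the $k$ elements of $U^{\mathcal{A}}$ will encode a weight-$k$ assignment $\alpha : X \to \{0,1\}$, and the relation $S \subseteq A$ will encode an assignment $\beta : Y \to \{0,1\}$. The formula I would use is
\[
\varphi(Z) \;=\; \forall c.\, \exists v.\, \bigl[\neg C(c) \;\vee\; \bigl(U(v) \wedge \neg V_X(v)\bigr) \;\vee\; \mtext{LitSat}(c,v,Z)\bigr],
\]
where $\mtext{LitSat}(c,v,Z)$ is the quantifier-free formula
\[
\bigl(P(c,v) \wedge ((V_X(v) \wedge U(v)) \vee (V_Y(v) \wedge Z(v)))\bigr) \;\vee\; \bigl(N(c,v) \wedge ((V_X(v) \wedge \neg U(v)) \vee (V_Y(v) \wedge \neg Z(v)))\bigr).
\]
This $\varphi(Z)$ is manifestly in $\Pi_2$, and $\tau, \tau', \varphi$ do not depend on the input.

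For correctness I would show $(\chi, k) \in \AkEWSat(3\CNF)$ iff $(\mathcal{B}, k) \in \AkEFDform{(\tau,\tau')}{\varphi}$. In the easy direction, any extension $\mathcal{A}$ with $|U^{\mathcal{A}}| = k$ and $U^{\mathcal{A}} \subseteq V_X^{\mathcal{B}}$ gives a weight-$k$ assignment $\alpha_U$ to $X$; a witness $\beta$ for $\exists Y.\psi[\alpha_U]$ translates into $S = \beta^{-1}(1) \cap Y$ that makes $\varphi(S)$ hold, because the $\mtext{LitSat}$ disjunct captures literal satisfaction exactly. Conversely, a choice of $S$ satisfying $\varphi$ restricts to an assignment $\beta$ on $Y$ satisfying every clause of $\psi[\alpha_U]$.

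The main obstacle is handling extensions $\mathcal{A}$ where $U^{\mathcal{A}}$ contains elements outside $V_X^{\mathcal{B}}$ (so $\alpha_U$ has effective $X$-weight strictly less than $k$): the original $\AkEWSat$ condition says nothing about such assignments, yet the Fagin-definable problem requires the $\varphi$-condition for every weight-$k$ extension. This is precisely why I included the disjunct $U(v) \wedge \neg V_X(v)$: any element in $U \setminus V_X$ acts as a universal witness $v$ that forces $\chi(c,v,Z)$ to hold for every $c$ and every $Z$, so $\varphi$ is trivially satisfiable in exactly those problematic extensions. The only remaining subtlety is that $U^{\mathcal{A}}$ might be a weight-$k$ subset of $V_X^{\mathcal{B}}$ of size less than intended if $|X| < k$, but this is easily avoided by padding $X$ with fresh dummy variables that do not occur in $\psi$.
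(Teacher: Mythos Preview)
Your reduction is correct and follows the same overall strategy as the paper: encode the CNF structure into a fixed vocabulary, let the extension (a single fresh unary symbol) carry the weight-$k$ universal assignment, and let the free relation variable $Z$ carry the unrestricted existential assignment, with a $\Pi_2$ formula expressing ``every clause has a satisfied literal.'' The difference lies in how the two proofs deal with extensions that place some of their $k$ elements outside the set of universally quantified variables. The paper sidesteps this entirely by reducing from the at-most-$k$ variant $\AleqkEWSat(\CNF)$ (whose \AkE{}-hardness is established separately via Propositions~\ref{prop:kstar-leqk-hardness}--\ref{prop:kstar-leqk-membership}): any weight-$k$ extension then yields an assignment of weight $\leq k$ to the intended variables, and that is exactly what the source problem quantifies over. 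You instead reduce from the exact-$k$ problem $\AkEWSat(3\CNF)$ and neutralise degenerate extensions by building an escape disjunct $U(v)\wedge\neg V_X(v)$ into $\varphi$, making the formula vacuously true whenever $U$ leaks outside $V_X$. Your route is slightly more self-contained (it does not rely on the auxiliary $\leq k$ completeness result), while the paper's keeps the defining formula a bit cleaner. Incidentally, your padding remark is harmless but unnecessary: when $|X|<k$ both sides of the equivalence hold vacuously, since every weight-$k$ extension must then place an element outside $V_X$ and your escape clause kicks in.
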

\begin{proof}
We let~$\tau'$ contain the unary relation symbols~$G$ and~$C$,
and the binary relation symbols~$C^{F}_{-}$,~$C^{F}_{+}$,~$C^{G}_{-}$
and~$C^{G}_{+}$.
We let~$\tau \supseteq \tau'$ in addition contain the unary relation symbol~$F$.
Furthermore, we let~$\varphi(Z) \in \Pi_2$ be the first-order formula
defined as follows:
\begin{eqnarray*}
  \varphi = & \forall v. (Zv \rightarrow Gv) \wedge
    \forall v. ( Cv \rightarrow (\exists w. \psi(v,w,Z))  ); \mtext{ and} \\
  \psi(v,w,Z) = &
    \left ( \neg Fw \wedge C^{F}_{-}(v,w) \right ) \vee
    \left ( Fw \wedge C^{F}_{+}(v,w) \right ) \vee
    \left ( \neg Zw \wedge C^{G}_{-}(v,w) \right ) \vee
    \left ( Zw \wedge C^{G}_{+}(v,w) \right ).
\end{eqnarray*}
Note that~$\varphi(Z)$ is not a formula in~$\Pi_2$, but can easily be
transformed to
an equivalent formula in~$\Pi_2$.

We show that the problem $\AkEFDform{(\tau,\tau')}{\varphi}$ is \AkE{}-hard,
by giving an fpt-reduction from $\AleqkEWSat(\CNF)$.
Let~$(\varphi',k)$ be an instance of $\AleqkEWSat(\CNF)$,
where~$\varphi' = \forall Y. \exists X. \psi'$,~$Y = \SBs y_1,\dotsc,y_n
\SEs$,~$X = \SBs x_1,\dotsc,x_m \SEs$,
and~$\psi' = c_1 \wedge \dotsm \wedge c_u$.
We construct an instance~$(\mathcal{A},k)$
of $\AkEFDform{(\tau,\tau')}{\varphi}$.
We define the structure~$\mathcal{A}$ with domain~$A$ as follows:
\[ \begin{array}{r l}
  A = & X \cup Y \cup \SBs c_1,\dotsc,c_u \SEs; \\
  G^{\mathcal{A}} = & X; \\
  C^{\mathcal{A}} = & \SBs c_1,\dotsc,c_u \SEs; \\
  (C^{F}_{-})^{\mathcal{A}} = & \SB (c_i,y) \SM 1 \leq i \leq u, y \in Y, \neg y \in c_i \SE; \\
  (C^{F}_{+})^{\mathcal{A}} = & \SB (c_i,y) \SM 1 \leq i \leq u, y \in Y, y \in c_i \SE; \\
  (C^{G}_{-})^{\mathcal{A}} = & \SB (c_i,x) \SM 1 \leq i \leq u, x \in X, \neg x \in c_i \SE; \mtext{ and} \\
  (C^{G}_{+})^{\mathcal{A}} = & \SB (c_i,x) \SM 1 \leq i \leq u, x \in X, x \in c_i \SE.
\end{array} \]
We show that~$(\varphi',k) \in \AleqkEWSat(\CNF)$ if and only
if~$(\mathcal{A},k) \in \AkEFDform{(\tau,\tau')}{\varphi}$.

$(\Rightarrow)$
Assume that~$(\varphi',k) \in \AleqkEWSat(\CNF)$.
This means that for any assignment~$\alpha : Y \rightarrow \SBs 0,1 \SEs$
of weight at most~$k$, the formula~$\exists X. \psi'[\alpha]$ evaluates to true.
We show that~$(\mathcal{A},k) \in \AkEFDform{(\tau,\tau')}{\varphi}$.
Let~$\mathcal{A}'$ be an arbitrary $\tau$-structure
extending~$\mathcal{A}$ with weight~$k$.
Then we know that~$\Card{F^{\mathcal{A}'} \cap Y} \leq k$.
Consider the assignment~$\alpha_{\mathcal{A}'} : Y \rightarrow \SBs 0,1 \SEs$,
where~$\alpha_{\mathcal{A}'}(y) = 1$ if and only if~$y \in F^{\mathcal{A}'} \cap Y$.
Clearly,~$\alpha_{\mathcal{A}'}$ has weight at most~$k$.
Therefore,~$\exists X. \psi'[\alpha_{\mathcal{A}'}]$ evaluates to true.
This means that there exists an assignment~$\beta : X \rightarrow \SBs 0,1 \SEs$
such that~$\psi'[\alpha_{\mathcal{A}'} \cup \beta]$ evaluates to true.
Consider the relation~$W = \SB x \in X \SM \beta(x) = 1 \SE \subseteq A$.

We show that~$\mathcal{A}' \models \varphi(W)$.
Clearly,~$\mathcal{A}' \models \forall v. (Wv \rightarrow Gv)$.
To see that~$\mathcal{A}' \models \forall v.
(Cv \rightarrow (\exists w. \psi(v,w,W)))$,
consider the assignment~$\SBs v \mapsto c_i \SEs$ for an
arbitrary~$1 \leq i \leq u$.
We show that~$\mathcal{A}' \models \exists w. \psi(w,c_i,W)$.
We know that~$\psi'[\alpha_{\mathcal{A}'} \cup \beta]$ evaluates to true,
so therefore there must exist a literal~$l \in c_i$ such
that~$\alpha_{\mathcal{A}'} \cup \beta$ satisfies~$l$.
Assume~$l = x \in X$.
Then~$x \in W$, and therefore~$\mathcal{A}' \models (Wx \wedge C^{G}_{+}(c_i,x))$.
Assume~$l = y \in Y$.
Then~$\alpha_{\mathcal{A}'}(y) = 1$, and thus~$y \in F^{\mathcal{A}'}$.
Therefore,~$\mathcal{A}' \models (Fy \wedge C^{F}_{+}(c_i,y))$
The cases where~$l = \neg z$ for some~$z \in X \cup Y$ are analogous.
This concludes our proof that~$\mathcal{A}' \models \varphi(W)$,
and so we can conclude
that~$(\mathcal{A},k) \in \AkEFDform{(\tau,\tau')}{\varphi}$.

$(\Leftarrow)$
Assume that~$(\mathcal{A},k) \in \AkEFDform{(\tau,\tau')}{\varphi}$.
This means that for each $\tau$-structure~$\mathcal{A}'$ that
extends~$\mathcal{A}$
with weight~$k$, there exists a relation~$W \subseteq A$
such that~$\mathcal{A}' \models \varphi(W)$.
Let~$\alpha : Y \rightarrow \SBs 0,1 \SEs$ be an arbitrary assignment of
weight~$\ell \leq k$.
We show that~$\exists X. \psi'[\alpha]$ evaluates to true.
Define the $\tau$-structure~$\mathcal{A}_{\alpha}$
by letting~$F^{\mathcal{A}_{\alpha}} = \SB y \in Y \SM \alpha(y) = 1 \SE
\cup \SBs y_1,\dotsc,y_{k-\ell} \SEs$.
Assuming without loss of generality that~$\Card{Y} \geq \Card{X}$,
it follows that~$\Card{F^{\mathcal{A}_{\alpha}}} = k$.
Thus,~$\mathcal{A}_{\alpha}$ extends~$\mathcal{A}$ with weight~$k$.
We therefore know that there exists a relation~$W \subseteq A$
such that~$\mathcal{A}_{\alpha} \models \varphi(W)$.
Now define~$\beta_{W} : X \rightarrow \SBs 0,1 \SEs$
by letting~$\beta_{W}(x) = 1$ if and only if~$x \in W$.

We show that~$\psi'[\alpha \cup \beta_{W}]$ evaluates to true.
Let~$c_i$ be an arbitrary clause of~$\psi'$.
Since~$\mathcal{A}_{\alpha} \models \forall v.
(Cv \rightarrow (\exists w. \psi(v,w,W)))$,
we know that there must be some~$a \in A$ such
that~$\mathcal{A}_{\alpha} \models \psi(c_i,a,W)$.
It is straightforward to verify that~$a \in X \cup Y$.

We distinguish four cases.
Consider the case where~$a \in X$
and~$a \in (C^{G}_{+})^{\mathcal{A}_{\alpha}}$.
Then~$\mathcal{A}_{\alpha} \models Wa$, and thus~$a \in W$.
Also, since~$a \in (C^{G}_{+})^{\mathcal{A}_{\alpha}}$,
$a$ is a positive literal in the clause~$c_i$.
Therefore,~$\beta_{W}(a) = 1$, and thus~$\alpha \cup \beta_{W}$ satisfies~$c_i$.
Consider the case where~$a \in Y$
and~$a \in (C^{F}_{+})^{\mathcal{A}_{\alpha}}$.
Then~$\mathcal{A}_{\alpha} \models Fa$, and thus~$\alpha(y) = 1$.
Also, since~$a \in (C^{F}_{+})^{\mathcal{A}_{\alpha}}$,~$a$
is a positive literal in the clause~$c_i$.
Therefore,~$\alpha(a) = 1$, and thus~$\alpha \cup \beta_{W}$ satisfies~$c_i$.
The cases where~$a \in (C^{G}_{-})^{\mathcal{A}_{\alpha}}$
and where~$a \in (C^{F}_{-})^{\mathcal{A}_{\alpha}}$ are analogous.

Since~$c_i$ was an arbitrary clause of~$\psi'$,
we know that~$\alpha \cup \beta_{W}$ satisfies~$\psi'$.
This concludes our proof that~$(\varphi',k) \in \AleqkEWSat(\CNF)$.
\end{proof}

\begin{proposition}
\label{prop:aek-fd-hardness}
The problem $\AEkFDform{(\tau,\tau')}{\varphi}$ is \AEkW{1}-hard,
for some relational vocabulary~$\tau$,
some subvocabulary~$\tau' \subseteq \tau$ of~$\tau$,
and some first-order formula~$\varphi(Z) \in \Pi_2$ over~$\tau$
with a free relation variable~$Z$ of arity~$s$.
\end{proposition}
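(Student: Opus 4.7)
The plan is to reuse the same relational vocabulary $\tau \supseteq \tau'$ and the same $\Pi_2$-formula $\varphi(Z)$ constructed in the proof of Proposition~\ref{prop:ake-fd-hardness}, namely
\[ \varphi(Z) = \forall v.\,(Zv \rightarrow Gv)\ \wedge\ \forall v.\,(Cv \rightarrow \exists w.\,\psi(v,w,Z)), \]
and to give an fpt-reduction from $\AEkWSat(2\CNF)$, which is \AEkW{1}-hard by Corollary~\ref{cor:stark-2cnf-hardness}. The key conceptual observation is that in Proposition~\ref{prop:ake-fd-hardness} the interpretation of $F$ (fixed by the extension, of size $k$) played the role of the weight-bounded universal block of $\AleqkEWSat$, while $Z$ played the role of the unrestricted existential block. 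For $\AEkFDform{(\tau,\tau')}{\varphi}$ these roles are swapped: the extension is now unrestricted (matching a universal block with no weight constraint) and $Z$ must have size exactly $k$ (matching a weight-$k$ existential block), which is precisely the shape of $\AEkWSat$.

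Given an instance $(\varphi',k)$ of $\AEkWSat(2\CNF)$, with $\varphi' = \forall X.\,\exists Y.\,\psi'$ and $\psi' = c_1 \wedge \dotsm \wedge c_u$, I would construct a $\tau'$-structure $\mathcal{B}$ with domain $A = X \cup Y \cup \SBs c_1,\dotsc,c_u \SEs$, setting $G^{\mathcal{B}} = Y$ (so that the conjunct $\forall v.\,(Zv \rightarrow Gv)$ forces any admissible $S$ to be a subset of $Y$), $C^{\mathcal{B}} = \SBs c_1,\dotsc,c_u \SEs$, and defining $(C^{F}_{-})^{\mathcal{B}}, (C^{F}_{+})^{\mathcal{B}}$ as pairs $(c_i,x)$ for negative resp.\ positive occurrences of $X$-variables in the clauses, and analogously $(C^{G}_{-})^{\mathcal{B}}, (C^{G}_{+})^{\mathcal{B}}$ for $Y$-variables. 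Thus $F$ (chosen by the extension) encodes a truth assignment to $X$, while $Z$ (of size $k$) encodes a weight-$k$ truth assignment to $Y$.

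For correctness, in the forward direction let $\mathcal{A}$ be an arbitrary extension of $\mathcal{B}$; define $\alpha \colon X \rightarrow \SBs 0,1 \SEs$ by $\alpha(x) = 1$ iff $x \in F^{\mathcal{A}}$, obtain a weight-$k$ assignment $\beta$ to $Y$ such that $\psi'[\alpha \cup \beta]$ holds, and take $S = \SB y \in Y \SM \beta(y) = 1 \SE$; then $\Card{S} = k$, $S \subseteq G^{\mathcal{A}}$, and a clause-by-clause inspection of $\psi(v,w,Z)$ shows $\mathcal{A} \models \varphi(S)$. Conversely, for an arbitrary $\alpha \colon X \rightarrow \SBs 0,1 \SEs$ let $\mathcal{A}_{\alpha}$ be the extension with $F^{\mathcal{A}_{\alpha}} = \SB x \in X \SM \alpha(x) = 1 \SE$; then some $S$ with $\Card{S} = k$ satisfies $\mathcal{A}_{\alpha} \models \varphi(S)$, the conjunct $\forall v.\,(Zv \rightarrow Gv)$ guarantees $S \subseteq Y$, and defining $\beta(y) = 1$ iff $y \in S$ gives a weight-$k$ witness for $\exists Y.\,\psi'[\alpha]$ by unfolding $\forall v.\,(Cv \rightarrow \exists w.\,\psi(v,w,S))$ for each clause $c_i$.

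The main subtlety I expect is that an arbitrary extension $\mathcal{A}$ may populate $F^{\mathcal{A}}$ with elements of $Y \cup \SBs c_1,\dotsc,c_u \SEs$ as well, not just $X$; this is harmless because the only atoms of $\varphi(Z)$ involving $F$ are $Fw$ in the body of $\psi(v,w,Z)$, gated by $C^{F}_{\pm}(v,w)$, and $(C^{F}_{\pm})^{\mathcal{B}}$ contains only pairs $(c_i,x)$ with $x \in X$, so spurious $F$-members cannot influence the truth value of the formula. Notice also that no padding argument is needed here, since $\AEkWSat$ already uses weight exactly $k$, matching the exact $\Card{S}=k$ constraint in $\AEkFDform$.
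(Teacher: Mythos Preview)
Your proposal is correct and essentially coincides with the paper's proof: the paper also reduces from $\AEkWSat(2\CNF)$ using the same vocabulary and formula $\varphi(Z)$ from Proposition~\ref{prop:ake-fd-hardness}, literally reusing that construction verbatim (with its naming $\forall Y.\exists X.\psi'$, so that $F$ encodes the unrestricted universal block and $Z$ the weight-$k$ existential block). Your write-up merely swaps the names of $X$ and $Y$ relative to the paper, which is why you relabel $G^{\mathcal{B}}=Y$ and swap the roles of $C^{F}_{\pm}$ and $C^{G}_{\pm}$; after this renaming the two reductions and correctness arguments are identical, and your handling of spurious $F$-members matches the paper's use of $F^{\mathcal{A}'}\cap Y$ in defining $\alpha_{\mathcal{A}'}$.
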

\begin{proof}
We show \AEkW{1}-hardness by giving an fpt-reduction
from $\AEkWSat(2\CNF)$.
Let~$(\varphi',k)$ be an instance of $\AEkWSat(2\CNF)$,
where~$\varphi' = \forall Y. \exists X. \psi'$,~$Y =
\SBs y_1,\dotsc,y_n \SEs$,~$X = \SBs x_1,\dotsc,x_m \SEs$,
and~$\psi' = c_1 \wedge \dotsm \wedge c_u$.
We use the same construction as the one used in the
fpt-reduction from $\AleqkEWSat(\CNF)$ to $\AEkFDform{(\tau,\tau')}{\varphi}$,
in the proof of Proposition~\ref{prop:ake-fd-hardness}
to construct an instance~$(\mathcal{A},k)$ of $\AEkFDform{(\tau,\tau')}{\varphi}$. 
This is possible since instance of $\AEkWSat(2\CNF)$ are
also instances of $\AleqkEWSat(\CNF)$.
We show that this reduction is also a correct fpt-reduction
from $\AEkWSat(2\CNF)$ to $\AEkFDform{(\tau,\tau')}{\varphi}$.

$(\Rightarrow)$
Assume that~$(\varphi',k) \in \AEkWSat(2\CNF)$.
This means that for any assignment~$\alpha : X \rightarrow \SBs 0,1 \SEs$,
there exists an assignment~$\beta : Y \rightarrow \SBs 0,1 \SEs$ of weight~$k$
such that~$\psi'[\alpha \cup \beta]$ evaluates to true.
We show that~$(\mathcal{A}',k) \in \AEkFDform{(\tau,\tau')}{\varphi}$.
Let~$\mathcal{A}'$ be an arbitrary $\tau$-structure extending~$\mathcal{A}$.
Consider the assignment~$\alpha_{\mathcal{A}'} : Y \rightarrow \SBs 0,1 \SEs$,
where~$\alpha_{\mathcal{A}'}(y) = 1$ if and only if~$y \in F^{\mathcal{A}'} \cap Y$.
There must be an assignment~$\beta : X \rightarrow \SBs 0,1 \SEs$
such that~$\psi'[\alpha_{\mathcal{A}'} \cup \beta]$ evaluates to true.
Consider the relation~$W = \SB x \in X \SM \beta(x) = 1 \SE \subseteq A$.
Clearly,~$\Card{W} = k$.

We show that~$\mathcal{A}' \models \varphi(W)$.
Clearly~$\mathcal{A}' \models \forall v.(Wv \rightarrow Gv)$.
To see that~$\mathcal{A}' \models \forall v. (Cv \rightarrow (\exists w. \psi(v,w,W)))$,
consider the assignment~$\SBs v \mapsto c_i \SEs$ for an arbitrary~$1 \leq i \leq u$.
We show that~$\mathcal{A}' \models \exists w. \psi(w,c_i,W)$.
We know that~$\psi'[\alpha_{\mathcal{A}'} \cup \beta]$ evaluates to true,
so therefore there must exist a literal~$l \in c_i$ such
that~$\alpha_{\mathcal{A}'} \cup \beta$ satisfies~$l$.
Let~$z$ be the variable of the literal~$l$.
Verifying that~$\mathcal{A}' \models \psi(z,c_i,W)$
is analogous to the proof of Proposition~\ref{prop:ake-fd-hardness}.
This completes our proof that~$\mathcal{A}' \models \varphi(W)$,
and we can thus conclude that~$(\mathcal{A}',k) \in \AEkFDform{(\tau,\tau')}{\varphi}$.

$(\Leftarrow)$
Assume that~$(\mathcal{A},k) \in \AEkFDform{(\tau,\tau')}{\varphi}$.
This means that for each $\tau$-structure~$\mathcal{A}'$ that
extends~$\mathcal{A}$,
there exists a relation~$W \subseteq A$ with~$\Card{W} = k$
such that~$\mathcal{A}' \models \varphi(W)$.
Let~$\alpha : Y \rightarrow \SBs 0,1 \SEs$ be an arbitrary assignment.
We show that there exists an assignment~$\beta : X \rightarrow \SBs 0,1 \SEs$
of weight~$k$ such that~$\psi'[\alpha \cup \beta]$ evaluates to true.
Define the $\tau$-structure~$\mathcal{A}_{\alpha}$
by letting~$F^{\mathcal{A}_{\alpha}} = \SB y \in Y \SM \alpha(y) = 1 \SE$.
Clearly,~$\mathcal{A}_{\alpha}$ extends~$\mathcal{A}$.
We therefore know that there exists a relation~$W \subseteq A$
with~$\Card{W} = k$
such that~$\mathcal{A}_{\alpha} \models \varphi(W)$.
Since~$\mathcal{A}_{\alpha} \models \forall v. (Wv \rightarrow Gv)$,
we know that~$W \subseteq X$.
Now define~$\beta_{W} : X \rightarrow \SBs 0,1 \SEs$
by letting~$\beta_{W}(x) = 1$ if and only if~$x \in W$.
We know that~$\beta_{W}$ has weight~$k$.

We show that~$\psi'[\alpha \cup \beta_{W}]$ evaluates to true.
Let~$c_i$ be an arbitrary clause of~$\psi'$.
Since~$\mathcal{A}_{\alpha} \models \forall v. (Cv \rightarrow (\exists w. \psi(v,w,W)))$,
we know that there must be some~$a \in A$ such
that~$\mathcal{A}_{\alpha} \models \psi(c_i,a,W)$.
It is straightforward to verify that~$a \in X \cup Y$.
Verifying that~$\mathcal{A}_{\alpha} \models \psi(c_i,a,W)$
implies that~$\alpha \cup \beta_{W}$ satisfies~$c_i$
is analogous to the proof of Proposition~\ref{prop:ake-fd-hardness}.
This completes our proof that~$\psi'[\alpha \cup \beta_{W}]$ evaluates to true,
and we can thus conclude that~$(\varphi',k) \in \AEkWSat(2\CNF)$.
\end{proof}

}
}

\krversion{
  \bibliographystyle{aaai}
}
\longversion{
  \pagebreak{}
  \bibliographystyle{plain}
}
\krlongversion{
  \DeclareRobustCommand{\DE}[3]{#3}
  \krversion{
    \begin{small}
      \bibliography{literature}
    \end{small}
  }\longversion{
  

  }
}

\end{document}